\title{Lattice quantum codes and \\ exotic topological phases of matter}
\author{Jeongwan Haah}
\date{May 15, 2013}
\definecolor{darkblue}{rgb}{0.,0.,0.4}
\definecolor{darkred}{rgb}{0.5,0.,0.}
\newtheorem{theorem}{Theorem}[chapter]
\newtheorem{lem}{Lemma}[section]
\newtheorem{cor}[lem]{Corollary}
\newtheorem{prop}[lem]{Proposition}
\theoremstyle{definition} \newtheorem{defn}{Definition}[chapter]
\theoremstyle{definition} \newtheorem{rem}{Remark}[chapter]
\theoremstyle{definition} \newtheorem{example}{Example}[chapter]
\newcommand{\bra}[1]{\left\langle #1 \right|}
\newcommand{\ket}[1]{\left| {#1} \right\rangle}
\newcommand{\trace}{\mathop{\mathrm{Tr}}\nolimits}
\newcommand{\tr}{\mathop{\mathrm{tr}}\nolimits}
\newcommand{\im}{\mathop{\mathrm{im}}\nolimits}
\newcommand{\coker}{\mathop{\mathrm{coker}}\nolimits}
\newcommand{\ann}{\mathop{\mathrm{ann}}\nolimits}
\newcommand{\rank}{\mathop{\mathrm{rank}}\nolimits}
\newcommand{\codim}{\mathop{\mathrm{codim}}\nolimits}
\newcommand{\rad}{\mathop{\mathrm{rad}}\nolimits}
\newcommand{\Tor}{\mathop{\mathrm{Tor}}\nolimits}
\newcommand{\lspan}{\mathop{\mathrm{span}}\nolimits}
\newcommand{\lt}{\mathop{\mathrm{lt}}\nolimits} 
\newcommand{\FF}{\mathbb{F}}
\newcommand{\ZZ}{\mathbb{Z}}
\newcommand{\QQ}{\mathbb{Q}}
\newcommand{\RR}{\mathbb{R}}
\newcommand{\calZ}{\mathcal{Z}}
\newcommand{\mm}{\mathfrak{m}}
\newcommand{\pp}{\mathfrak{p}}
\newcommand{\bb}{\mathfrak{b}}
\newcommand{\id}{\mathrm{id}}
\newcommand{\dd}{\mathrm{d}} 
\newcommand{\half}{\frac{1}{2}}
\newcommand{\ltqo}{{L_{tqo}}}
\newcommand{\drawgenerator}[8]{%
\xymatrix@!0{%
& #8 \ar@{-}[ld]\ar@{.}[dd] \ar@{-}[rr] & & #7 \ar@{-}[ld]  \\%
#1 \ar@{-}[rr] \ar@{-}[dd] &  & #2 \ar@{-}[dd] &            \\%
& #6 \ar@{.}[ld] &  & #5 \ar@{-}[uu] \ar@{.}[ll]       \\%
#3 \ar@{-}[rr] &  & #4 \ar@{-}[ru]                       %
}%
}
\begin{document}

\maketitle

\begin{acknowledgments}
I would like to thank my advisor John Preskill for his inspiring comments and advice, 
and forgivingness and encouragement during my graduate study.
The thesis would not have been possible without reverence to his standard of clear understanding.
I should add that my enjoyable last four years are deeply rooted in him.

A special thanks to Sergey Bravyi for the fruitful collaboration and numerous insights.
It is not too much to say that the algebraic formulation of this thesis
grew out from his note on the degeneracy computation.
The interaction with him gave me an opportunity to do a research internship
at IBM Watson Research Center, which was a valuable experience.
I thank Charles Bennett, John Smolin, and Graeme Smith.

L. Yong-uck Chung, who had been my roommate for years, helped me to study algebra.
He was always willing to explain concepts and related theorems however easy or hard they were.
Informal discussions with him were absolutely important for me.
I also thank Eric Rains and Tom Graber for giving insights and answers for mathematical questions.

I appreciate the inspiring discussion with Guifre Vidal and Daniel Gottesman while I was visiting Perimeter Institute,
which led my attention to the entanglement renormalization group flow.

I also thank Alexei Kitaev for his encouragement and numerous useful comments.
I was honored to have discussions with him.

I thank my friend I. Han-young Kim, who shared moments of inspiration.

Thanks to the members of IQI and more broadly IQIM for creating an exciting environment,
including Ann Harvey,
Michael Beverland,
Ersen Bilgin,
Peter Brooks,
Bill Fefferman,
Alex Kubica,
Shaun Maguire,
Prabha Mandayam,
Sujeet Shukla,
Gorjan Alagic,
Salman Beigi,
Sergio Boixo,
Darrick Chang,
Glen Evenbly,
Steve Flammia,
Lukasz Fidkowski,
Alexey Gorshkov,
Zhengcheng Gu,
Liang Jiang,
Stephen Jordan,
Robert K\"onig,
Yi-Kai Liu,
Nate Lindner,
Spiros Michalakis,
Fernando Pastawski,
Norbert Schuch,
Stephanie Wehner,
and
Beni Yoshida.

I am grateful for comments from Soonwon Choi and Alex Kubica on drafts of these thesis chapters.
\end{acknowledgments}

\begin{abstract}
This thesis addresses whether it is possible to build a robust memory device for quantum information. Many schemes for fault-tolerant quantum information processing have been developed so far, one of which, called topological quantum computation, makes use of degrees of freedom that are inherently insensitive to local errors. However, this scheme is not so reliable against thermal errors. Other fault-tolerant schemes achieve better reliability through active error correction, but incur a substantial overhead cost. Thus, it is of practical importance and theoretical interest to design and assess fault-tolerant schemes that work well at finite temperature without active error correction.

In this thesis, a three-dimensional gapped lattice spin model is found which demonstrates for the first time that a reliable quantum memory at finite temperature is possible, at least to some extent. When quantum information is encoded into a highly entangled ground state of this model and subjected to thermal errors, the errors remain easily correctable for a long time without any active intervention, because a macroscopic energy barrier keeps the errors well localized. As a result, stored quantum information can be retrieved faithfully for a memory time which grows exponentially with the square of the inverse temperature. In contrast, for previously known types of topological quantum storage in three or fewer spatial dimensions the memory time scales exponentially with the inverse temperature, rather than its square.

This spin model exhibits a previously unexpected topological quantum order, in which ground states are locally indistinguishable, pointlike excitations are immobile, and the immobility is not affected by small perturbations of the Hamiltonian. The degeneracy of the ground state, though also insensitive to perturbations, is a complicated number-theoretic function of the system size, and the system bifurcates into multiple noninteracting copies of itself under real-space renormalization group transformations. The degeneracy, the excitations, and the renormalization group flow can be analyzed using a framework that exploits the spin model's symmetry and some associated free resolutions of modules over polynomial algebras.
\end{abstract}

\tableofcontents

\mainmatter

\chapter{Introduction}

The idea of quantum computer dates back at least to Feynman~\cite{Feynman1982Simulating}, who speculated a possibility for exploiting the computational power that Nature allows. The idea raises a deep question. The computation is a manipulation of symbols and numbers according to our logical system. If we can simulate the time evolution of Nature in a controllable and mechanical way, then it is unavoidable to conclude that the present Nature is really computing the future, and that the way she does is essentially the same as our arithmetic. If we cannot simulate what she does, then it means there is a fundamental difference between the time evolution and its artificial simulation by our logic and numbers. Either conclusion must have profound consequences.

An important problem in proving the possibility of a quantum computer is how to suppress decoherence. Shor discovered a scheme in which as long as elementary operations have a low enough error rate one can perform an arbitrarily long computation~\cite{Shor1996Fault-tolerant}. He showed there is a positive constant $\delta$ such that any ideal computation can be simulated by faulty elementary operations if they are close to ideal ones up to precision $\delta$; the scheme effectively reduces the error rate. Thus, the problem of decoherence is solved at least theoretically.

Kitaev proposed yet different scheme, called topological quantum computation, in which elementary operations are physically protected and hence are ideal for all practical purposes~\cite{Kitaev2003Fault-tolerant}. He pointed out that there is a naturally protected subspace in topologically ordered systems in two spatial dimensions. The subspace is accessible by braiding excitations, so-called anyons. As long as the anyons are geometrically well separated at any time step of the computation, the subspace remains decoherence-free. A classical analog is easy to understand. In a magnetic storage medium a bit is encoded into one of two polarizations of little ferromagnets. Each little ferromagnet consists of billions or more electrons whose spins are aligned together. An electron spin may be flipped by some error, but it is energetically unfavorable. Many errors require high energy, and it is very unlikely that the average magnetization would change the sign. The main idea of the encoding quantum information in the topologically ordered system is similar. The information is carried not by local degrees of freedom, but by collective degrees of freedom, where local errors are suppressed by natural means.

A requirement for a system to be useful for the topological quantum computation is that the system must have eigenstates of the same energy that are locally indistinguishable. The states only look different when one has a full description of them; the local reduced density matrices are identical. This property has no classical analog. Systems whose ground states are locally indistinguishable are already found. The fractional quantum Hall systems with a filling fraction $\nu = p/q$ have $q$ degenerate ground states that are locally indistinguishable~\cite{Haldane1985MagneticTranslation,WenNiu1990GroundState}. The local indistinguishability accompanied by a finite energy gap above the degenerate ground states, has a significant consequence that the degeneracy does not split in the thermodynamic limit, even under general local perturbations~\cite{WenNiu1990GroundState,Kitaev2003Fault-tolerant,BravyiHastingsMichalakis2010stability}.
This intrinsic stability underlies the idea to build a quantum computer on topologically ordered systems~\cite{
DennisKitaevLandahlEtAl2002Topological,
JiangBrennenGorshkovEtAl2008Anyonic, 
SarmaFreedmanNayak2005Topologically}.

However, a closer analysis reveals that the topologically ordered system is vulnerable to thermal fluctuations~\cite{AlickiFannesHorodecki2009thermalization}. The collective degrees of freedom are well protected as long as the anyons are far separated, but the thermal fluctuations cause the anyons to propagate randomly throughout the system. The random motions are not a priori suppressed by, for example, energetics. Anyonic systems do not function as protected media as the magnetic media do for classical information storage. That any topologically ordered system has a naturally protected subspace is not entirely true.

We need to separate the problem for further concrete discussions. A computer is loosely divided into two parts: reliable storage and fault-tolerant processing of information. The division is not too fundamental since the storage may require some sort of ancillary information processing, and vice versa. It is a convenient division for the sake of analysis. The storage problem for quantum information concerns a possibility of a quantum analog of classical hard disk drive, which we call a \emph{self-correcting quantum memory}. One asks if there is a system where a subspace is maintained coherently as collective phenomena~\cite{DennisKitaevLandahlEtAl2002Topological,Bacon2006Operator}. The processing problem concerns a wise choice of an elementary operation set and its implementation, and is thus contingent upon the storage scheme. In this thesis we focus on the storage problem.

The result of the present thesis can be summarized as follows. We find a spin model on a three-dimensional cubic lattice, whose excitations are immobile and point-like. Under any perturbation the excitations do not acquire a kinetic term --- any hopping amplitude  vanishes exactly. Moreover, the ground-state subspace is degenerate, exactly in the thermodynamic limit, and no local order parameter can be defined. These properties do not fit into intuitive pictures people had about the topological order. The reason why it is unconventional will be explained below. Given the model, we devise a scheme to use it as quantum memory and compute the storage time. We prove a rigorous lower bound on the storage time that grows with system size up to an optimal value $T_{mem} = e^{\Omega(\beta^2)}$ where $\beta$ is an inverse temperature of a heat reservoir. It should contrast with a conjectured storage time $e^{O(\beta)}$ of any two-dimensional topologically ordered system.

We briefly review how the concept of topological order has emerged, and discuss our results.

\section{Topological order}

The quantum Hall effects are phenomena in which the transversal conductance becomes a locally constant function (plateau) for ranges of perpendicular magnetic field strength. In the integer quantum Hall effect, as the magnetic field is increased, the Hall conductance develops plateaus at quantized values of $n e^2/h$ in the vicinity of field $B = \rho_0 hc/ne$ where $n$ is a small positive integer and $\rho_0$ is the electron number density~\cite{KlitzingDordaPepper1980IQHE}. The quantization is very accurate and universal. The measured conductances from various experiments all agree with one another within a relative error less than a part in a million~\cite{PrangeGirvin1990HallEffect}. The agreement is so remarkable because different experiments do not fine-tune every aspect of experimental setups up to precision $10^{-6}$. Due to its simple reproducibility and high accuracy, the integer quantum Hall effect is now used to define an international standard of resistance~\cite{SI2008}.

The exact quantization can be explained by Laughlin's argument~\cite{Laughlin1981QHE}, refined by Halperin~\cite{Halperin1982QHE}. They consider an adiabatic insertion of magnetic flux near the boundary of the Hall sample. They conclude that the Hall conductance is quantized because extended (delocalized) electronic states near the sample edge whose energies are far from Fermi level are only responsible for the conductance. Thouless {\em et al.}~\cite{TKNN1982} showed that those extended states actually define a topological vector bundle whose invariant, now called TKNN invariant or Chern number, is directly related to the quantized Hall conductance.

The awe of the quantum Hall effects does not end there. Soon after the discovery of the integer quantum Hall effects, another kind of quantization was measured --- the fractional quantum Hall effects~\cite{TsuiStoermerGossard1982FQHE,Laughlin1983FQHE}. The transversal conductance displays many plateaus at fractional multiples of $e^2/h$, not only at integral multiples of $e^2/h$.
An interesting feature is the structure of the ground-state subspace. There are $q$-fold degenerate ground states for a fractional quantum Hall system of filling fraction $\nu = p/q$ defined on a torus, where $p$ and $q$ are co-prime integers~\cite{NiuThoulessWu1985Topological,Haldane1985MagneticTranslation}. By taking a detour through an effective theory, the degeneracy is argued to be a function of topology~\cite{WenNiu1990GroundState}. More specifically, if a fractional quantum Hall Hamiltonian at $\nu = p/q$ is defined on a Riemann surface of genus $g$, the degeneracy is $q^g$.

It is much more interesting that quasi-particles of the fractional quantum Hall system are thought to be {\em anyons}~\cite{Wilczek1982Anyon}. They obey neither bosonic nor fermionic statistics under exchange. Rather, the wave function of two-anyon state may be transformed by a unitary operator if one anyon is transported around the other. It has been conjectured that the nonabelian case, where the unitary is not just a phase factor, could be realized in a fractional quantum Hall system at certain filling fractions $\nu = p/q$~\cite{WillettEtAl1987EvenFQHE,MooreRead1991Nonabelions}. Unfortunately, the nonabelian statistics has not been verified experimentally.

Meanwhile, the concept of topological order had emerged. It was introduced as an abstract notion to describe the fractional quantum Hall systems and spin liquid states~\cite{WenWilczekZee1989Chiral}. A gapped system is generally said to be topologically ordered if the ground-state subspace is degenerate but no symmetry is spontaneously broken, and the degeneracy is robust under any perturbation in the thermodynamic limit~\cite{Wen1991SpinLiquid,ReadSachdev1991LargeN}. Also, the degeneracy as a function of physical space topology and the anyonic quasi-particle statistics are taken as defining characteristics of topological order~\cite{WenNiu1990GroundState,Kitaev2003Fault-tolerant,MooreRead1991Nonabelions}.

Another important yet different characteristic is topological entanglement entropy~\cite{KitaevPreskill2006Topological, LevinWen2006Detecting}. A ground-state wave function of a gapped Hamiltonian is believed to obey an area law. Namely, the von Neumann entropy $S(\rho)$ of the reduced density matrix $\rho$ for a disk region is bounded from above by a constant times the area of the boundary. In our two-dimensional situation, the area is the perimeter of the boundary, so $S(\rho) \le \alpha L$. The topological entanglement entropy is a negative constant correction to the area law; $S(\rho) = \alpha L -\gamma$. Remarkably, the $\gamma$ is insensitive to microscopic details and constant under deformation of the Hamiltonian as long as the deformation does not close the energy gap. This quantity is quite different in nature compared with other characteristics of topological order, since it is computed from a single wave function whereas the others are defined for Hamiltonians.

The topological order can be better understood by studying lattice gauge theories~\cite{Kogut1979LGT} and the toric code model~\cite{Kitaev2003Fault-tolerant}. One of the purposes to introduce lattice gauge theories is to contrast how our new model, called {\em cubic code}, is different from models in a conventional picture. The simplest possible lattice gauge theory is the Ising gauge theory due to Wegner~\cite{Wegner1971IsingGauge}. Ising gauge theory can be defined in any dimensions, but let us focus on the two-dimensional square lattice first. Later we will see a direct relation to the toric code model.

Consider the two-dimensional square lattice with a Ising variable (spin) $Z = \pm 1$ at each edge. For each vertex $v$, let $A(v)$ be an operator that flips four spins around $v$; $A(v) : \pm 1 \mapsto \mp 1$. $A(v)$ is called a {\em gauge transformation}. We can express it as an operator.
\[
 A(v) = X(N,v) X(W,v) X(S,v) X(E,v)
\]
where $X(N,v)$ means the matrix $\begin{pmatrix} 0 & 1 \\ 1 & 0 \end{pmatrix} \begin{matrix} \ket{+1} \\ \ket{-1} \end{matrix}$ 
acting on the north edge of the vertex $v$, and similarly for others.
We identify all states that related by the gauge transformations.
We look for a ``local'' Hamiltonian that is invariant under the action of $A(v)$.
Here, the locality may be ambiguous since we have identified states that differ by the gauge transformations
and formed a Hilbert space that is not a tensor product of local constituents.
However, the locality is still a proper notion with respect to the square lattice.

The gauge invariance allows restricted possibilities of terms in the Hamiltonian.
It is not hard to see that a term in the Hamiltonian must be a product of $Z$'s along a closed loop.
The simplest closed loop is a single plaquette. Thus, the simplest Hamiltonian would be given by the sum over all plaquettes $p$
\begin{equation}
 H = -J \sum_p B(p) = - J \sum_p Z(b,p) Z(r,p) Z(t,p) Z(l,p)
\label{eq:2D-Ising-gauge}
\end{equation}
where $J > 0$ and $B(p)$ is the product of four Ising variables on the bottom, right, top, and left of the plaquette $p$. It is important that the ground-state subspace does not spontaneously break any gauge symmetry. It is a general statement (Elitzur's theorem~\cite{Elitzur1975}) whose proof is not difficult~\cite{Kogut1979LGT,ChesiLossBravyiEtAl2010Thermodynamic}.

If $H$ is defined on a torus with periodic boundary conditions, there are four ground states that are not equivalent under the gauge transformations. This can be understood by visualizing configurations by loops. The configuration $C_0$ with all spins taking $+1$ values is a ground state of $H$. Any equivalent state under the gauge transformations is obtained by applying $A(v)$ at vertices. A single $A(v)$  flips four spins. Imagine connecting those flipped spins by straight lines --- they will form a loop. Applying more gauge transformations $A$ at various vertices, one adds more loops or deforms the loops. What is always true is that the loops are homologically trivial. Indeed, $A$ can be interpreted as a boundary operator in the cellular homology with $\ZZ/2$ coefficient acting on the dual 2-cells. What if we start with a configuration $C_1$ with all spins $+1$ except those $-1$ along a homologically nontrivial loop of the torus? Since $A$ does not alter the homology class of the configuration, we conclude that $C_0$ and $C_1$ cannot be equivalent under the action of $A$. Since there are 4 distinct homology classes of the torus including the trivial one $C_0$, the degeneracy is therefore 4. We emphasize that the homology classes are not locally distinguishable.

One can generalize the model so that it is defined on an arbitrary surface with a triangulation. The gauge transformations will be defined for each vertex, and the Hamiltonian will be a sum of all terms $B$, the product of Ising variables along the perimeter of elementary triangles. The homology description will be valid as well. The degeneracy is a function of homology of the surface.

Returning to the square lattice, we ask how an excited state looks like. It is described by unhappy terms $B(p) = -1$ in $H$. If we flip a spin from a ground state, then the two adjacent plaquette terms will become unhappy. If we flip two spins, say, one on the left and another on the right of a plaquette $p_0$, $B(p_0)$ will remain happy but those on the left plaquette and on the right will not. Generally, if we flip spins from a ground state such that flipped spins form a string on the dual lattice, then only two plaquette terms positioned at the end of the string will be unhappy. Those unsatisfied terms can be isolated at constant energy cost, and appear as the end points of the strings. If the string is extended to the infinity in one direction, there will be a single excitation. This is a topological excitation, in the sense that it cannot be created by a local operator, unlike a pair of nearby excitations. Note that the string that creates excitation has no gauge-invariant meaning. Applying $A(v)$ in the middle of an extended string will deform the string. The homological argument above shows that if two strings with the same end points differ by a homologically trivial cycle, the two strings describe exactly the same excited states.

The properties of the Ising gauge theory we have reviewed here
satisfy several criteria for topological order.
Its degeneracy is a function of topology, and no symmetry is spontaneously broken.
The four-dimensional ground space is stable under gauge-invariant perturbations.
Can we obtain a similar model without the gauge symmetry?
A prescription is to promote gauge transformations to be dynamical~\cite{Kitaev2003Fault-tolerant},
and take the Hilbert space as the tensor product of individual spins.
In other words, one adds gauge fixing terms to the Hamiltonian.
\begin{equation}
 H' = - J \sum_p B(p) - g \sum_v A(v)
\label{eq:H'}
\end{equation}
Since the Hamiltonian $H$ is constructed to be invariant under the gauge transformations, the new quantum Hamiltonian $H'$ is exactly solvable. The ground state is an equal-weight superposition of all equivalent spin configurations under the old gauge transformations. 
Our analysis on the absence of local order parameters and the degeneracy 
using the cellular homology is still valid. 
As the homology classes of spin configuration in the Ising gauge theory were not locally distinguished, 
the quantum ground states of $H'$ are not locally distinguished.
It follows that the degeneracy is not lifted under any local perturbations~\cite{BravyiHastingsMichalakis2010stability}.
Indeed, it is easily checked that the first order degenerate perturbation theory 
is vacuous because any local operator sandwiched between two ground states $\ket{a}$ and $\ket{b}$ 
is proportional to $\langle a | b \rangle$.

The gauge symmetry is not strictly imposed any more, but a particular gauge choice is energetically preferred.
Accordingly, there is one more type of excitations given by unsatisfied $A(v)$ terms, known as $e$-particles.
A violated $B(p)$-term is known as $m$-particle. 
Note that there is a duality between $A$ and $B$ terms. 
$A$ consists of Pauli $X = \sigma^x$ acting on a plaquette on the dual lattice, 
and $B$ consists of Pauli $Z = \sigma^z$ acting on a plaquette on the primary lattice. 
A pair of $m$-particles can be created from a ground state 
by applying ``spin flip'' operators $X$ along a string on the dual lattice.
A pair of $e$-particles can be created from a ground state by applying ``phase flip'' operators $Z$ along a string on the primary lattice. The $e$- and $m$-particles display mutual anyonic statistics. When one makes a complete circle around the other, the wave function acquires a phase factor $-1$. The topological entanglement entropy is nonzero~\cite{KitaevPreskill2006Topological, LevinWen2006Detecting}. We obtained an exactly solvable simple model with topological order, the {\em toric code}.

\section{String operators}

The string operators in the toric code deserve much attention. They are topological objects in that only the homological classes they represent matter. The fact that the strings are extended objects makes it clear how two particles at a distance may interact by braiding. The nontrivial braiding, in turn, implies a nontrivial ground-state subspace. The argument is basically the same as the proof of the degeneracy of the fractional quantum Hall system using magnetic translations~\cite{Haldane1985MagneticTranslation}.

The string picture seems to be correct for all two-dimensional topologically ordered systems. The robust degeneracy implies that there is no local observable that can distinguish different ground states; otherwise, perturbing the system with that local observable will lift the degeneracy. A minimal ``global'' operator whose support is not local would be a string operator stretched across the system. Another conceivable argument is as follows. Consider a region $R$ as large as possible on which no observable can resolve the ground-state subspace. If we assume translation invariance so we can unambiguously speak of thermodynamic limit, then a local operator is any operator with a bounded support. Hence, on a torus geometry, $R$ can be taken as the whole system minus two narrow strips, ensuring $R$ to be a contractible region. Since any operator in $R$ cannot resolve the ground-state subspace, some operator in the strips must be able to resolve it, which suggests the existence of string operators. As we will see in Chapter~\ref{chap:additive-codes}, the argument here can be made rigorous for a class of models. Also, there is an attempt to understand every possible model in two dimensions with the string picture~\cite{LevinWen2003Fermions,LevinWen2005String-net}.

What will happen if we go to three spatial dimensions? Consider a three-dimensional Ising gauge theory. The gauge transformation $A(v)$ is defined for each vertex $v$, and the Hamiltonian will be the sum of all plaquette terms $B(p)$. For the 3D simple cubic lattice, a single spin-flip at an edge will violate four plaquette terms attached to that edge. Many spin flips that form a surface on the dual lattice will violate plaquette terms along its boundary. In general, excited states are caused by \emph{surface} operators and described by loops of unhappy plaquette terms, analogous to domain walls of the 2D Ising model.%
\footnote{An exact mapping actually relates 3D Ising gauge theory with 3D Ising model~\cite{Wegner1971IsingGauge}.}
Now we add gauge fixing terms $A(v)$ into the Hamiltonian to obtain a quantum model, called three-dimensional toric code. As before, there is a new type of excitation, an unsatisfied $A(v)$ term. Let us call $A(v)$ a \emph{star} term. Unlike the plaquette excitations that form loops, the star excitation is attached to a \emph{string} operator along a path in the primary lattice. The star excitations can therefore be isolated.

There are two types of operators acting on the degenerate ground-state subspace. A different ground state is obtained by flipping all spins on a plane that wraps around the system. When a finite system with periodic boundary conditions are considered, the locations of the flipped spins form a nontrivial homological 2-cycle of 3-torus. The other type of operator on the ground-state subspace is the string operator. If we apply the string operator along a nontrivial homological 1-cycle of the 3-torus, then the star excitation does not appear since there is no end points, and a ground state is mapped to a different one. There are three closed surface operators and three closed string operators, as the homology group of 1-cycles and 2-cycles are generated by three elements, respectively.

The closed surface operator $\bar X$ and the closed string operator $\bar Z$ generate a nontrivial algebra, because, when a pair of $\bar X$ and $\bar Z$ share a common support, the intersection is just a single spin on which the pair anti-commute. Of course, $\bar X$ and $\bar Z$ are topological objects and there are many equivalent representatives. One can easily check that in for any equivalent representatives the algebraic structure does not change.

The (closed) string and surface operators seem to be universal for any topologically ordered models. If the ground-state subspace is robust under any perturbations, the algebra generated by the operators acting on the ground-state subspace must also be robust. Note that operators of extended overlapping support may not define a robust algebra. 
For example, $\bigotimes_{i=1}^n \sigma^x$ and $\bigotimes_{i=1}^n \sigma^z $ commute if $n$ is even, but anti-commute if $n$ is odd.
In order to make a well-defined algebra, it is the most straightforward for the operators of extended support to intersect at a point-like (zero-dimensional) support. In other words, an $n$-dimensional operator would have a nontrivial commutation relation with a $(D-n)$-dimensional operator, where $D$ is the total spatial dimension. If there were a three-dimensional model with all surface-like operators acting on the ground-state subspace, they must intersect along a one-dimensional line. The length of the one-dimensional line is not topological, and it would be hard for them to form a robust algebra. 
Therefore, in three dimensions, it seems necessary for the string operators to exist.
In four spatial dimensions it is possible to construct a model that has a robust ground-state subspace such that all operators on it have two-dimensional support~\cite{DennisKitaevLandahlEtAl2002Topological}.

We remark that the existence of string operators implies that point-like excitations at the ends of the string are {\em mobile} generically. Although the $e$- and $m$-particles of the two-dimensional toric code, being eigenstates of $H'$, are stationary, the excitations will acquire a kinetic or hopping term under generic perturbations. This is the reason why the topological order at finite temperature is said to be not stable. At nonzero temperature there must be a nonzero density of the mobile excitations. Their spatial fluctuation induced by the thermal interaction is strong enough to disorder the ground state completely. Topological entanglement entropy calculation at nonzero temperatures supports this intuition~\cite{CastelnovoChamon2007Entanglement}. In terms of a measure how difficult it is to generate a state, the Gibbs state of some topologically ordered model with string operators is not too different from a trivial product state~\cite{Hastings2011warmTQO}. Moreover, the relaxation time towards the Gibbs state of the two-dimensional toric code is only a constant independent of the system size~\cite{AlickiFannesHorodecki2009thermalization}.

\section{Quantum codes and a new model}

In our discussion so far, the topological order is characterized by a collection of very compelling properties of Hamiltonians or ground-state subspaces. However, it is not too clear which one is more fundamental. Even, it is not clear whether all those characteristics should appear all together. The quantum Hall effects~\cite{PrangeGirvin1990HallEffect} and topological insulators~\cite{HasanKane2010TIReview} should serve as reference systems. It appears that the local indistinguishability of ground-state subspace is the most mathematically tractable definition of the topological order. We adhere to this definition. To obtain enough intuition, we would like to study toy models arising from quantum error correcting codes.

Shor's fault-tolerant scheme~\cite{Shor1996Fault-tolerant} is based on the discovery of quantum error correcting codes. He demonstrated that there exists a subspace in a many-qubit Hilbert space such that local errors can be detected and corrected with a high probability without disturbing encoded (logical) quantum state. Measurements inevitably disturb the system, but a trick is that the logical state is encoded in the entanglement of the many qubits. It is essential that different encoded states look exactly the same to the local errors; they are locally indistinguishable.

It did not take too long for people to realize that the Shor's error correcting code can be generalized and studied in an analogous way that one studies classical error correcting codes~\cite{CalderbankShor1996Good, Steane1996Multiple, Gottesman1996Saturating, CalderbankRainsShorEtAl1997Quantum}. The connection is due to the observation that the encoded state can be described as the eigenstate of pairwise commuting tensor products of Pauli spin-$\half$ matrices; the state is stabilized by commuting operators. The commutativity is important because, otherwise, we cannot speak of common eigenstate. The encoded state is required to be highly entangled, and hence avoids in general an efficient classical description. However, the stabilizer operator language tells us that some class of highly entangled states has a simple description, which is enough to ensure the local indistinguishability. We explain this in detail in Chapter~\ref{chap:additive-codes}.

The toric code Hamiltonian $H'$ in Eq.~\eqref{eq:H'} can be viewed as a quantum code. The terms $A(v)$ and $B(p)$ consist of Pauli matrices and commute with each other. If the coupling constants $J$ and $g$ are both positive, then a ground state is a common eigenstate of eigenvalue $+1$. (Having $+1$ eigenvalue ensures $H'$ is not frustrated.) 
In general, if a code is defined by a local commuting Pauli operators, we can write a corresponding Hamiltonian given by the negative sum of all the commuting Pauli operators. If the code has a good error correcting capability, the corresponding Hamiltonian must have a locally indistinguishable ground-state subspace. 

As we have seen earlier, the toric code Hamiltonian captures essential characteristics of the topological order. It is one of good motivations to study Hamiltonians with commuting Pauli operators in order to understand the topological order. One might be uneasy with the fact that the Hamiltonian $H'$ is a four-body interaction. This ``unrealistic'' interaction may be effective at low energies, if, for example, a realistic Hamiltonian is highly frustrated from which the effective $H'$ is derived~\cite{ReadSachdev1991LargeN, Kitaev2006Anyons}. We will be not concerned about the order of the interaction as long as it is local.

One of our guiding problems is the possibility of self-correcting quantum memory. We wish to have a system whose ground-state subspace is locally indistinguishable so that local errors do not corrupt an encoded state. In addition, we demand that there be a physical mechanism such that local errors do not accumulate. The string operators of the previous section are against our goal. Since they make excitations mobile, the physical mechanism to prohibit the error accumulation cannot be achieved. Hence, in two dimensions it seems impossible to have a self-correcting quantum memory. Indeed, one can rigorously prove the existence of the strings operators for a class of models in two dimensions, starting from the local indistinguishability assumption~\cite{BravyiTerhal2009no-go, HaahPreskill2012tradeoff, LandoncardinalPoulin2013nogo}. We noted in the previous section that there is a four-dimensional topologically ordered model lacking the string operators~\cite{DennisKitaevLandahlEtAl2002Topological}. This model indeed functions as a self-correcting quantum memory below a critical nonzero temperature~\cite{AlickiHorodeckiHorodeckiEtAl2010thermal}.

A more realistic three-dimensional case is hence interesting. Based on the dimensional duality of the operators acting on the degenerate ground-state subspace, a no-go theorem seemed plausible~\cite{Yoshida2011feasibility}. In this thesis we present a concrete counterexample, cubic code, to this intuition.
\[
\drawgenerator{ZI}{ZZ}{IZ}{ZI}{IZ}{II}{ZI}{IZ} 
\quad
\drawgenerator{XI}{II}{IX}{XI}{IX}{XX}{XI}{IX}
\]
The cubic code is a quantum error correcting code on the simple cubic lattice defined by local Pauli operators acting on elementary cubes. There are two qubits (spin-$\half$) at each site. The two-letter symbol such as $XI$ means a tensor product $\sigma^x \otimes I$, and $ZZ$ means $\sigma^z \otimes \sigma^z$, etc. Each diagram represents a tensor product of eight Pauli matrices. The Hamiltonian is the negative sum of two diagrams over all elementary cubes. As required, the model has a degenerate ground-state subspace that is locally indistinguishable. Most importantly we can prove that there does not exist any string operator. However, it admits isolated point-like topological excitations. They do not appear as the end points of the string, but as the vertices of \emph{fractal} operators. The fractal operators is supported on a self-similar structure. The absence of string operators implies that the point-like topological excitations are immobile; any hopping amplitude vanishes exactly. It is also interesting that the degeneracy depends sensitively on the number-theoretic property of the lattice size, illustrated in Figure~\ref{fig:k-numerics} on page~\pageref{fig:k-numerics}. The cubic code is a conceptually new phase of matter.

It is necessary to explain more how the vanishing hopping amplitude and the isolated point-like topological excitation are simultaneously possible. By a topological excitation, or a {\em charge} for short, we mean a localized excitation that cannot be created from a ground state by any finitely supported operator. A kink in the 1D Ising model is the simplest example. The isolation of a charge in the cubic code is possible under a self-similar construction as follows. In Figure~\ref{fig:pyramid}, the first figure shows how excitations look like when an $X$ error occurs on a ground state. The second figure show the excitations given four $X$ errors. Note that when two charges are at the same position, they cancel with each other. The configurations are similar by ratio 2, so one can construct even larger configuration where there are only four excitations that are far separated. In the limit of this process there is an isolated charge. The reason why the hopping amplitude vanishes can be heuristically understood by the construction. It is impossible to have a configuration where only one charge is moved away while the other three are held fixed. Such a configuration is forbidden in the spectrum of the Hamiltonian. The transition amplitude to a non-existing state from any state must be vacuous. We later prove it rigorously. A classical model where this happens was discovered by Newman and Moore~\cite{NewmanMoore1999Glassy}. Our model is inherently quantum with a locally indistinguishable ground-state subspace and has a gapped energy spectrum that is robust under generic perturbations~\cite{BravyiHastingsMichalakis2010stability}.

\begin{figure}[t]
\centering
\begin{tabular}{cc}
 \includegraphics[width=2cm]{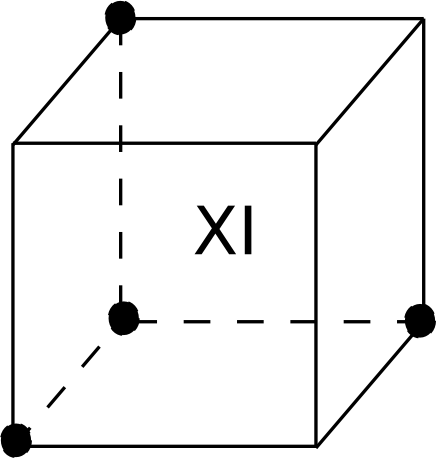} &
 \includegraphics[width=3cm]{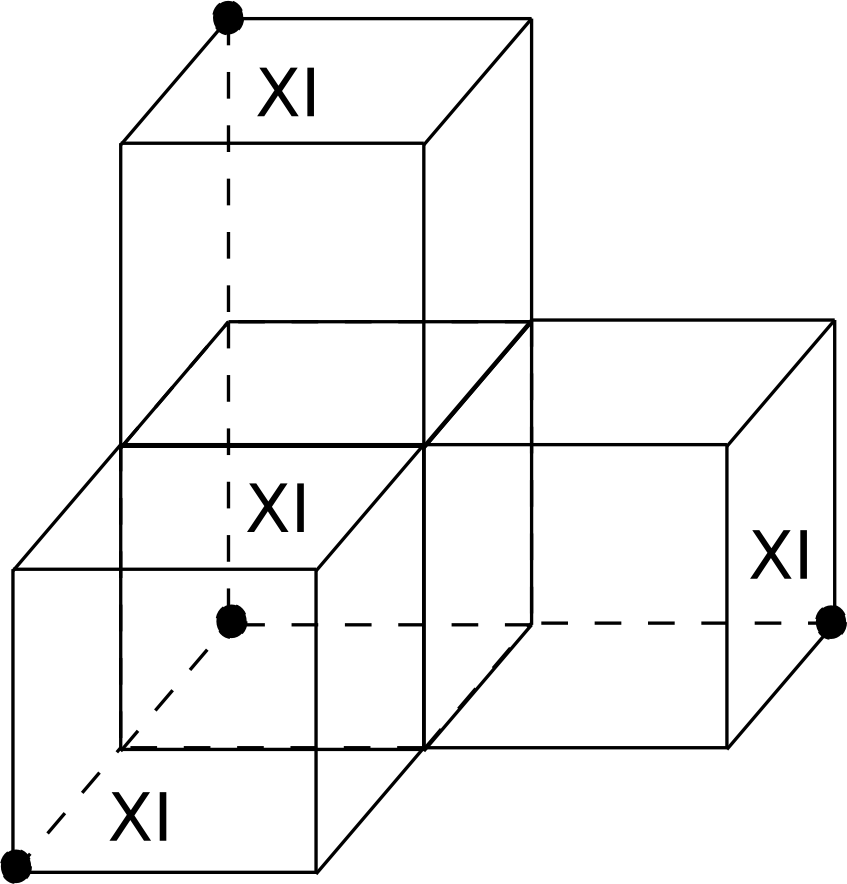}
\end{tabular}
\caption{Isolating a topological excitation. The cubes are in the dual lattice; each vertex represents an elementary cube in the primary lattice.}
\label{fig:pyramid}
\end{figure}

We show that the cubic code model can be used as a quantum memory at nonzero temperatures in the following sense. We develop a well-defined read-out procedure (decoder) with which a ground state maintains coherence for time proportional to $L^{c \beta}$ where $L$ is a linear system size and $\beta$ is an inverse temperature. This statement is valid only if $L \le e^{c' \beta}$. Roughly speaking, this is because of a large entropic contribution from the point-like excitations. For small system size, the entropic contribution is small. At an optimal system size, the memory time is $e^{c c' \beta^2}$. A natural question is then whether we could remove all such point-like excitations. We answer this question negatively: There must be an isolated point-like excitation in any topologically ordered three-dimensional quantum code model if it is translationally invariant. The proof is based on a formalism using translation group algebras and free resolutions of finitely generated modules.

\section{Summary of chapters}

In Chapter~\ref{chap:additive-codes}, we explain a general structure of a class of quantum error correcting codes, called additive or stabilizer codes. It is emphasized that the additive code is described by a binary vector space. A simple counting of vector space dimensions yields {\em cleaning lemma}. It states that the number of independent logical operators on complementary regions add up to the total number of independent logical operators of the code. Combining the geometric locality of two-dimensional Hamiltonian with the cleaning lemma, we prove a trade-off theorem about the support of the logical operators. In particular, this implies that only string operators act on the ground-state subspace in two-dimensional quantum codes. 
The content is published in \cite{HaahPreskill2012tradeoff}.

In Chapter~\ref{chap:alg-theory}, we develop a formulation of translationally invariant quantum codes. We observe that they have a succinct description by a matrix $\sigma$ over the translation group algebra which is commutative. The local indistinguishability is interpreted as a vanishing homology of a complex defined by $\sigma$. Local unitary transformations and coarse-graining are described by simple matrix operations. The ground-state degeneracy is approximated as the number of points on an algebraic variety defined by $\sigma$ over finite fields. Fractal operators and point-like excitations are defined, and the set of all point-like excitations modulo locally created ones is identified with a specific module.
The content is published in \cite{Haah2012PauliModule}.

In Chapter~\ref{chap:lowD-codes}, we use the formalism of Chapter~\ref{chap:alg-theory} to derive consequences of physical dimensionality. The translation invariance is imposed. One-dimensional codes are classified completely. Up to local unitary transformations, any code decomposes into finitely many 1D Ising models. In two dimensions, we prove that there are finitely many types of topological excitations and they are all appear as end points of string operators.
In three dimensions, we prove that there must exist a point-like topological excitation. If the ground-state degeneracy is constant independent of system size, we show that the point-like topological excitations are attached to string operators. Examples are presented for which our formalism is useful.
The content is published in \cite{Haah2012PauliModule}.

In Chapter~\ref{chap:cubic-code}, we explain how the cubic code is found. It is a result of an exhaustive but systematic search. We prove important properties of the cubic code such as the topological order, ground-state degeneracy, and the absence of string operators. The cubic code's thermal partition function is computed to show that the free energy density is smooth in the thermodynamic limit as a function of temperature. An entanglement renormalization group flow is presented. We find that the cubic code (A) bifurcates into itself (A) and another model B. The new model B shares all important properties of A, but is different from A. Under a further real-space renormalization, B bifurcates into two copies of itself. 
A part of the content in this chapter is published in \cite{Haah2011Local}. Our presentation in this thesis is more succinct than that in the published paper, due to algebraic methods of Chapter~\ref{chap:alg-theory} developed more recently.

In Chapter~\ref{chap:consq-no-strings}, we prove theorems implied by the absence of string operators. The theorems quantifies the energy landscape of the Hamiltonian on the ``land'' of energy eigenstates. Two states are considered to be ``close'' in the land if one state can be transformed to the other by a single spin operator; if it is necessary to apply many spin operators, the two states are considered to be far apart in the land. On this land of states, imagine hills whose height is given by the energy of the state. The landscape of this land is analogous to a potential energy barrier in a quantum mechanical tunneling problem. We define the {\em energy barrier} to be the height of the lowest hill on any path in the land connecting two states. We prove that between two ground states there exists an energy barrier larger than the logarithm of the system size $L$, and the distance in the land is at least $L^\gamma$ with $\gamma > 1$. A closely related statement reads that between a ground state and a state where a topological excitation is isolated from others by distance $R$, there exists an energy barrier $\ge \log R$, and the distance in the land is $\ge R^\gamma$ with $\gamma >1$.
The content is published in \cite{BravyiHaah2011Energy}.

In Chapter~\ref{chap:RG-decoder}, we design a decoding algorithm. Any physical memory cannot be pristine after contact with an error source. In order to retrieve correct information for the next step of processing, one needs to map the affected state to the ground state. For a ferromagnet this decoding process amounts to measuring average magnetization. For quantum codes, the first step is to detect errors. This is done by measuring the terms in the commuting Hamiltonian. The next step is to guess probable types and locations of errors, which we specifically call ``decoding algorithm.'' Once error locations and types are identified, the final step is to undo the errors. A good decoder should map corrupt states to its original pristine state with high probability. Our algorithm uses a hierarchy of subroutines which borrows ideas from renormalization group. It is widely applicable and is efficient with running time $O(V\log V)$ where $V$ is the volume of the system. Moreover, we prove that there is a positive critical error probability, called a threshold, below which the decoding algorithm succeeds asymptotically perfectly in the limit of large system size. Our decoder is the first decoder that admits an efficient implementation and a rigorous threshold theorem.
The content is published in \cite{BravyiHaah2011Memory}.

In Chapter~\ref{chap:q-mem}, we directly assess the performance of the cubic code as a robust quantum memory at nonzero temperature. One ingredient is to show that the decoding algorithm of Chapter~\ref{chap:RG-decoder} corrects errors of low energy barriers. Here, the energy barrier of an error is defined in the same way as above --- the height of lowest hill in the energy landscape on any path connecting a ground state and the state affected by the error. Another ingredient is an analysis of Markovian master equation exploiting the fact that there exists a good decoder $\Phi_{ec}$, a trace preserving completely positive quantum operation. An analytic bound on the trace distance between the initial state $\rho(0)$ and the error-corrected time-evolved state $\Phi_{ec}(\rho(t))$ is given as
\[
\| \rho(0) - \Phi_{ec}(\rho(t)) \| \le t \frac{(1+e^{-\beta})^V}{V^{\beta}}
\]
where $V$ is the system volume and $\beta$ is the inverse temperature, neglecting all unimportant constant coefficients. If $V \le e^\beta$, the bound says the fidelity of error-corrected $\Phi_{ec}(\rho(t))$ remains close to $\rho(0)$ until $t \sim V^\beta$. We complement the bound with a numerical simulation, which suggests that our bound is optimal up to constant coefficients.
The content is published in \cite{BravyiHaah2011Memory}.

\chapter{Additive quantum codes}
\label{chap:additive-codes}

The theory of quantum error correcting code
is an important cornerstone for fault-tolerant quantum computers~\cite{Shor1996Fault-tolerant,Preskill1998Fault-tolerant}.
One encodes states one wish to compute about into logical many-qubit states
in such a way that errors that are likely to happen will be correctable.
Thus, at the logical level an effective error rate is much smaller than the physical error rate.
Computation can be carried on this logical level,
and the overhead of error correction can be controlled, not to overwhelm the promised computation.
After the discovery of the very first quantum error correcting code by Shor~\cite{Shor1995nine},
people have quickly realized analogy between classical error correcting codes and a class of quantum error correcting codes%
~\cite{CalderbankRainsShorEtAl1997Quantum,Gottesman1996Saturating},
which are now known as additive or stabilizer codes.
Arguably, it is the most studied class of codes, and is the main object of the present thesis.
The connection between the classical codes and quantum additive codes is provided
by the parameterization of basis operators (Pauli matrices) by binary numbers.
In this chapter we review and exploit this correspondence.

The chapter is organized as follows.
Section~\ref{sec:pauli-group-symplectic-vector-space} provides a convenient and important viewpoint
under which the multiplicative group of all Pauli matrices are described by a vector space over the binary field.
Section~\ref{sec:additive-stabilizer-codes} builds on this viewpoint and explains how to choose a subspace
of a many qubit Hilbert space, which we hope to have a capability to correct errors.
Section~\ref{sec:cleaning-lemma} is devoted to derive an equation, called a cleaning lemma,
that relates the numbers of logical operators 
that can be supported on complementary regions and the total number of logical qubits.
It is remarked that the cleaning lemma implies 
that any error occurring within a region, where no nontrivial logical operator can be supported,
can actually be corrected by a physical operation.

The last section~\ref{sec:subsystem-tradeoff}
presents an application of the cleaning lemma 
which gives a constraint on the geometric shape of logical operators of local codes on lattices.
Most importantly, it is proved that in two-dimensional lattice codes with geometrically local generators with large code distance,
all logical operators have representatives supported on narrow strips.
It had been known by Bravyi and Terhal~\cite{BravyiTerhal2009no-go} 
that there exists a nontrivial logical operator supported on a strip.
Our conclusion extends this result by finding the geometric shape of \emph{all} logical operators.
If the minimal weight logical operator has weight proportional to the linear dimension of a two-dimensional system,
then all logical operator can be found on narrow strips~\cite{HaahPreskill2012tradeoff}.

Note that there are many interesting quantum codes that are not necessarily additive~\cite{
RainsHardinShorSloane1997Nonadditive,
Kitaev2003Fault-tolerant,
LevinWen2005String-net},
but they are out of the scope of this thesis.

\section{Pauli group as a symplectic vector space}
\label{sec:pauli-group-symplectic-vector-space}

The Pauli matrices
\[
 \sigma^x = \begin{pmatrix} 0 & 1 \\ 1 & 0 \end{pmatrix}, \quad
 \sigma^y = \begin{pmatrix} 0 & -i \\ i & 0 \end{pmatrix}, \quad
 \sigma^z = \begin{pmatrix} 1 & 0 \\ 0 & -1 \end{pmatrix}
\]
satisfy
\[
 \sigma_a \sigma_b = i \varepsilon_{abc} \sigma_c, \quad
 \{ \sigma_a , \sigma_b \} = 2 \delta_{ab}.
\]
Thus, the Pauli matrices together with scalars $\pm 1, \pm i$
form a group under multiplication.
Given a system of qubits,
the set of all possible tensor products of the Pauli matrices form a group,
where the group operation is the multiplication of operators.
If the system is infinite, physically meaningful operators
are those of finite support, i.e.,
acting on all but finitely many qubits by the identity.
We shall only consider this Pauli group of finite support,
and call it simply the {\bf Pauli group}.
An element of the Pauli group is called a {\bf Pauli operator}.
When finitely many qubits are considered,
a Pauli operator is of course an \emph{arbitrary} tensor product of Pauli matrices.

Since any two elements of the Pauli group either commute or anti-commute,
ignoring the phase factor altogether, one obtains an {\em abelian} group.
Moreover, since any element $O$ of the Pauli group satisfies $O^2 = \pm I$,
an action of $\mathbb{Z}/2\mathbb{Z}$ on Pauli group modulo phase factors
$P / \{\pm 1, \pm i\}$ is well-defined,
by the rule $ n \cdot O = O^n$ where $n \in \mathbb{Z}/2\mathbb{Z}$.
For $\FF_2 = \mathbb{Z}/2\mathbb{Z}$ being a field,
$P / \{\pm 1, \pm i\}$ becomes a vector space over $\FF_2$.
The group of single-qubit Pauli operators up to phase factors
is identified with the two-dimensional $\FF_2$-vector space.
If $\Lambda$ is the index set of all qubits in the system,
the whole Pauli group up to phase factors is the direct sum $\bigoplus_{i \in \Lambda} V_i$,
which we call {\bf Pauli space},
where $V_i$ is the vector space of the Pauli operators for the qubit at $i$.
Explicitly, $I = (00), \sigma^x = (10), \sigma^z = (01), \sigma^y = (11)$.
A multi-qubit Pauli operator
is written as a finite product of the single-qubit Pauli operators,
and hence is written as a binary string in which all but finitely many entries are zero.
A pair of entries of the binary string describes a single-qubit component in the tensor product expression.
The multiplication of two Pauli operators corresponds to entry-wise addition of the two binary strings modulo 2.

The commutation relation may seem at first lost,
but one can recover it by introducing a symplectic form~\cite{CalderbankRainsShorEtAl1997Quantum}.
Let
\[
 \lambda_1 = \begin{pmatrix} 0 & 1 \\ -1 & 0 \end{pmatrix}
\]
be a symplectic form on the vector space $(\FF_2)^2$ of single-qubit Pauli operators.%
\footnote{The minus sign is not necessary for qubits, but is for qudits of prime dimensions.}
One can easily check that the commutation relation of two Pauli matrices $O_1, O_2$
is precisely the value of this symplectic form
evaluated on the pair of vectors representing $O_1$ and $O_2$.
Two multi-qubit Pauli operator (anti-)commutes, if and only if
there are (odd) even number of pairs of the anticommuting single-qubit Pauli operators
in their tensor product expression.
Therefore, the two Pauli operator \mbox{(anti-)}commutes
precisely when the value of the direct sum of symplectic form $\bigoplus_{q \in \Lambda} \lambda_1$ is \mbox{(non-)}zero. 
$\Lambda$ could be infinite but the form is well-defined since any vector representing a Pauli operator is of finite support.
We shall call the value of the symplectic form the {\bf commutation value}.

\begin{rem}
For systems of qudits, a group corresponding to the Pauli group in the qubit case
is the so-called generalized Pauli group%
~\cite{Knill1996Non-binary,
Knill1996Group,
Rains1999Nonbinary,
Gottesman1999qudit}.
It is the set of all tensor products of powers of $d \times d$ matrices
\[
 X_d = 
\begin{pmatrix}
 0 & 0 & \cdots & 0 & 1     \\
 1 & 0 & \cdots & 0 & 0     \\
 0 & 1 & \cdots & 0 & 0     \\
   &   & \ddots &   & \vdots\\
   &   & \cdots & 1 & 0 
\end{pmatrix} 
\quad \text{and} \quad
Z_d = 
\begin{pmatrix}
 1 &        &         &        & \\
   & \omega &         &        & \\
   &        &\omega^2 &        & \\
   &        &         & \ddots & \\
   &        &         &        & \omega^{d-1}
\end{pmatrix} ~\left( \omega = e^{2\pi i/d} \right),
\]
which satisfy
\[
 X_d Z_d = \omega^{-1} Z_d X_d .
\]
Hence, any generalized Pauli operator on a single qudit is a product $X_d^n Z_d^m$.
The abelianized generalized Pauli group is identified with $P = (\ZZ / d \ZZ)^2$, a module over $\ZZ/d\ZZ$.
If $d$ is a prime number, $\ZZ/ d \ZZ \cong \FF_d$ is a field and the $P$ is a vector space over $\FF_d$.
The commutation relation can be recovered by the symplectic form $\lambda_1$.
The generalization to a system of qudits is straightforward.

Most statements in this thesis are true or easily extended for qudits with prime dimensions.
An exception is Levin-Wen fermion model of Example~\ref{eg:Levin-Wen-fermion-model}.
\end{rem}

Since the Pauli group can be effectively described 
by a vector space equipped with a symplectic form,
it is worth studying symplectic vector spaces in general.
Any vector space in this section is with respect to some fixed field $\FF$.

Let $V$ be a (finite dimensional) vector space. A bilinear form $\lambda : V \times V \to \FF$ 
is {\bf symplectic} or {\bf alternating} if
\[
 \lambda( v, v ) = 0
\]
for any $v \in V$. If follows that
\[
 \lambda(v,w) = -\lambda(w,v)
\]
since $\lambda(v+w,v+w)= \lambda(v,v) + \lambda(v,w) + \lambda(w,v) + \lambda(w,w) = 0$.
Two vectors $v,w$ are said to be {\bf orthogonal} if $\lambda(v,w) = 0$.
If any two vectors are orthogonal to each other, the symplectic space is said to be {\bf null}.
If for any vector $v$ there exists $w$ such that $\lambda(v,w) \neq 0$,
the symplectic space is said to be {\bf hyperbolic} and $\lambda$ {\bf non-degenerate}~\cite{Lang}.

Given any basis of a finite dimensional symplectic space $V$,
one can find a {\bf canonical} basis $\{ v_1, w_1, v_2, w_2, \ldots, v_n, w_n, u_1, \ldots, u_{n'} \}$ such that
\begin{align*}
 \lambda(v_i, w_j ) = \begin{cases} 1 & \text{if } i=j , \\ 0 & \text{otherwise,} \end{cases}
  \quad \text{and} \quad \lambda( u_i, t ) = 0 \text{ for any } t \in V .
\end{align*}
The canonical basis depends on the order of the basis one starts with, and is \emph{not} unique.
Under the canonical basis the symplectic form has a matrix representation
\[
 \lambda = 
\begin{pmatrix}
 0  & 1 &   &   &   & \\
 -1 & 0 &   &   &   & \\
    &   & 0 & 1 &   & \\
    &   & -1& 0 &   & \\
    &   &   &   & 0 & \\
    &   &   &   &   & \ddots
\end{pmatrix}
\]
whose rank is $2n$.
A Gram-Schmidt process for a usual Hermitian inner product space yields a constructive proof of this claim.
The process is inductive:
\begin{enumerate}
 \item If the given basis is $\{b_i\}$ of $V$, set $v_1 := b_1$.
 \item Choose any basis element $b_j$ such that $\lambda(v_1, b_j) \neq 0$.
        (If one cannot find such $b_j$, start over with a different choice of $v_1$.
        If one cannot eventually find an appropriate $v_1$, then declare $u_i = b_i$ for all $i$; the space is null.)
 \item Set $w_1 := b_j$ and normalize $v_1$ in order to have $\lambda(v_1,b_j) = 1$.
      Reorder the index of $b_j$, so $w_1 = b_2$.
      Now suppose, we have a canonical basis $\{ v_i, w_i \}_{i=1}^{m}$ for a hyperbolic subspace $W_m$ of $V$.
 \item Replace $b_{2m+j}$ ($j\ge 1$) with
 \begin{align*}
  b_{2m+j}' := b_{2m+j} + \sum_{i=1}^m \left( \lambda( b_{2m+1}, v_i ) w_i - \lambda( b_{2m+1}, w_i ) v_i \right) .
 \end{align*}
       One sees that the $b_{2m+j}'$ are orthogonal to $W_m$ and still linearly independent.
 \item Iterate 1-4 with $\lspan _\FF \{ b'_{2m+j} | j \ge 1 \}$.
\end{enumerate}
From the algorithm, we have a structure theorem for finite dimensional symplectic vector spaces.
\begin{prop}
Let $V$ be a finite dimensional vector space equipped with a symplectic form over any field.
Then,
$V$ is a direct sum of a hyperbolic subspace and a null subspace.
In particular, if the symplectic form is non-degenerate,
then $V$ must be even dimensional.
\label{prop:structure-symplectic-space}
\end{prop}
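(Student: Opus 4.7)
The plan is to prove the proposition directly from the Gram-Schmidt-type algorithm already described just above the statement, treating it as a constructive procedure and verifying that what it produces has the claimed structural properties. So first I would take any basis $\{b_i\}$ of $V$ and run the algorithm, inducting on $\dim V$: at each stage either the current spanning set contains a vector $v_1$ with $\lambda(v_1, b_j) \neq 0$ for some $b_j$, in which case rescaling $b_j$ gives a hyperbolic pair $(v_1, w_1)$, or no such pair exists and the remaining span is automatically null. Termination is immediate since each iteration strictly reduces the dimension of the ``unprocessed'' subspace.

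Next I would verify the two technical points the algorithm glosses over. First, the replacement $b'_{2m+j} := b_{2m+j} + \sum_i (\lambda(b_{2m+j}, v_i) w_i - \lambda(b_{2m+j}, w_i) v_i)$ produces vectors orthogonal to the already-constructed hyperbolic subspace $W_m = \lspan\{v_1,w_1,\ldots,v_m,w_m\}$: one just pairs $b'_{2m+j}$ against each $v_k$ and each $w_k$ using bilinearity and the canonical relations $\lambda(v_i,w_j)=\delta_{ij}$, $\lambda(v_i,v_j)=\lambda(w_i,w_j)=0$. Second, the substitution is a triangular (and hence invertible) change of basis, so linear independence is preserved. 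This gives $V = W_m \oplus W_m^\perp$ as symplectic subspaces, and iterating produces $V = H \oplus U$ with $H$ hyperbolic and $U$ null.

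For the non-degenerate case, I would argue by contradiction: if $U \neq 0$, pick any nonzero $u \in U$. By construction $\lambda(u, x) = 0$ for every $x \in H$ (since $U \perp H$) and also for every $x \in U$ (since $U$ is null), so $\lambda(u, \cdot) = 0$ on all of $V$, contradicting non-degeneracy. Hence $U = 0$, $V = H$, and $\dim V = 2n$ is even.

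The main obstacle I anticipate is purely bookkeeping: verifying that the projection step in the algorithm is simultaneously (i) symplectic-orthogonal to $W_m$ and (ii) an invertible change of basis, while also confirming that the statement ``start over with a different choice of $v_1$'' in the algorithm does not cause the induction to stall. The cleanest way around this is to phrase the inductive step intrinsically rather than in terms of a specific basis: decompose $V = W_m \oplus W_m^\perp$ as soon as $W_m$ is non-degenerate (which is automatic for a hyperbolic subspace), and then reapply the algorithm to $W_m^\perp$. This sidesteps the need to track explicit replacements and makes the induction transparent.
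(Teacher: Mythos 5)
Your proof is correct and takes essentially the same approach as the paper, which likewise obtains the proposition as a corollary of the Gram--Schmidt algorithm stated immediately above it (the paper simply asserts ``From the algorithm, we have a structure theorem'' without writing out the verifications you supply). Your suggestion to recast the induction intrinsically as $V = W_m \oplus W_m^\perp$ rather than tracking explicit basis replacements is a clean way to handle the ``start over with a different choice of $v_1$'' clause that the paper's phrasing leaves vague.
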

\noindent
It is important that \emph{the abelianized Pauli group is non-degenerate symplectic} over $\FF_2$,
since any Pauli operator anticommutes with some Pauli operator.

As noted earlier, there are many canonical bases. The linear transformations that connect different bases
are called {\bf symplectic transformations}, i.e., $T$ is a symplectic transformation if
\[
 T^T \lambda_q T = \lambda_q = \begin{pmatrix} 0 & \id_q \\ -\id_q & 0 \end{pmatrix} .
\]
The symplectic transformation decomposes into a composition of three elementary transformations%
~\cite{CalderbankRainsShorEtAl1997Quantum,KitaevShenVyalyi2002CQC},
as any general linear transformation decomposes into a composition of row operations and scalar multiplications
by Gauss elimination.
For notational clarity,
define $E_{i,j}(a)\ (i \neq j)$ to be the row-addition elementary $2q \times 2q$ matrix
\[
 \left[ E_{i,j}(a) \right]_{\mu \nu} = \delta_{\mu \nu} + \delta_{\mu i} \delta_{\nu j} a
\]
where $\delta_{\mu \nu}$ is the Kronecker delta and $a \in \FF_2$ is a scalar.
The following are {\bf elementary symplectic transformations}:
\begin{itemize}
 \item (Hadamard) $E_{i,i+q}(-1) E_{i+q,i}(1) E_{i,i+q}(-1)$ where $1 \le i \le q$,
 \item (controlled-Phase) $E_{i+q,i}(a)$ and $1 \le i \le q$,
 \item (controlled-NOT) $E_{i,j}(a) E_{j+q,i+q}(-a)$ where $1 \le i \ne j \le q$.
 \item (controlled-NOT-Hadamard) $E_{i+q,j}(a) E_{j+q,i}(a)$ where $ 1 \le i \ne j \le q$.
\end{itemize}
The fourth one is a combination of the first and the third.

\begin{prop}
\cite{CalderbankRainsShorEtAl1997Quantum}
The elementary symplectic transformations generate the group of all symplectic transformations.
\end{prop}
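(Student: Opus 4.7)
The plan is to perform a symplectic Gauss elimination on a given symplectic matrix $T$: I reduce it to the identity by left multiplication by elementary symplectic transformations. Since each elementary symplectic transformation has an inverse also expressible in the list (over $\FF_2$, each is its own inverse, using $-1 = 1$), this expresses $T$ itself as a product of the stated generators. I proceed by induction on $q$, the case $q = 1$ being a short direct check by enumeration.

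The key lemma is that any nonzero vector $v \in \FF_2^{2q}$ can be sent to $e_1$ by a product of elementary symplectic transformations. I would prove it in four substeps mirroring ordinary Gaussian elimination but respecting the symplectic block structure: (i) if the ``$X$-part'' $(v_1, \ldots, v_q)$ vanishes, apply a Hadamard at some position $i$ with $v_{i+q} = 1$ to bring a nonzero entry into the $X$-part; (ii) apply CNOTs of the form $E_{i,j}(1) E_{j+q, i+q}(1)$ to aggregate the $X$-part into $(1, 0, \ldots, 0)$; (iii) apply controlled-Phase $E_{1+q, 1}(1)$ to clear the $(q+1)$-st coordinate of the result; (iv) apply controlled-NOT-Hadamards $E_{1+q, j}(1) E_{j+q, 1}(1)$ for $j > 1$ to clear the remaining $Z$-coordinates. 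The net effect maps $v$ to $e_1$.

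Applying this lemma to the first column of $T$ makes that column $e_1$. The symplectic identity $T^T \lambda_q T = \lambda_q$ then pins $(T e_j)_{q+1} = \lambda(e_1, T e_j) = \lambda(e_1, e_j) = \delta_{j, q+1}$, so only the $(q+1)$-st column carries a $1$ in slot $q+1$. A parallel reduction on that column, using only elementary transformations that stabilize $e_1$ (Hadamards at positions $\neq 1$, controlled-Phases $E_{j+q, j}(a)$ with $j \neq 1$, CNOTs of the form $E_{1, j}(a) E_{j+q, 1+q}(a)$ or with both indices $\neq 1$, and controlled-NOT-Hadamards with both indices $\neq 1$), brings it to $e_{q+1}$. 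After both reductions, the remaining $2q - 2$ columns lie in the symplectic complement $\{v : v_1 = v_{q+1} = 0\}$, on which they form a symplectic matrix in dimension $2(q-1)$; the inductive hypothesis finishes the proof.

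The main obstacle is the second reduction: proving an analog of the key lemma in which an arbitrary vector $w$ with $w_{q+1} = 1$ must be reduced to $e_{q+1}$ using only operations that fix $e_1$. The restricted subset of elementary symplectic transformations must be shown to be rich enough, and one must track how they interact --- for instance, the $e_1$-fixing CNOT $E_{1, i}(1) E_{i+q, 1+q}(1)$ couples a change in $w_1$ to a simultaneous change in $w_{i+q}$, so the substeps must be carefully sequenced to avoid creating residual entries that cannot subsequently be cleared without disturbing $e_1$. A case analysis mirroring the four substeps of the key lemma, adapted to the $e_1$-stabilizer, completes this verification.
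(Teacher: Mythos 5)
Your approach is correct and genuinely different from the paper's proof. The paper performs a block reduction: CNOT row operations put the upper $q\times 2q$ block into the form $(\id_q,*)$, CNOT-Hadamards together with the constraint $T^\dagger\lambda_qT=\lambda_q$ clear the lower-left block, the same constraint forces the lower-right block to be $\id_q$, and then the upper-right block is cleared. You instead perform a symplectic analog of column pivot reduction: send the first column to $e_1$, then (using only operations that fix $e_1$) send the $(q+1)$-st column to $e_{q+1}$, and recurse on the symplectic complement of dimension $2(q-1)$. Your decomposition yields a cleaner inductive structure and isolates a reusable lemma (transitivity of the elementary group on nonzero vectors), which the paper's monolithic elimination never makes explicit.

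Be careful, however, at the exact point you flag as the ``main obstacle.'' Your remark that a ``case analysis mirroring the four substeps of the key lemma'' completes the second reduction is not accurate as stated: the constraint of fixing $e_1$ removes precisely the tools the original four substeps used on coordinate $1$ --- Hadamard at position $1$ sends $e_1\mapsto e_{q+1}$, and controlled-Phase $E_{1+q,1}(1)$ sends $e_1\mapsto e_1+e_{q+1}$, so neither is available. In the worst case $w=e_1+e_{q+1}$, which does occur, no single $e_1$-fixing elementary operation shrinks the support of $w$; the only operation that acts nontrivially is the CNOT $E_{1,j}(1)E_{j+q,1+q}(1)$ (for $j\ne 1$), which writes a $1$ into $w_{j+q}$ while leaving $w_1$ fixed because $w_j=0$, so it grows the support. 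A working sequence for $q\ge 2$ is: apply that CNOT, then Hadamard at $j$, then the CNOT again (which now clears $w_1$), then controlled-Phase at $j$, Hadamard at $j$, and the CNOT one last time, finally reaching $e_{q+1}$. More systematically: first use only complement-type operations (all indices $\ne 1$) to send the component of $w$ in $\lspan_{\FF_2}\{e_2,\ldots,e_q,e_{q+2},\ldots,e_{2q}\}$ to $e_2$ or $0$ by the key lemma in dimension $2(q-1)$, then the short sequence above cleans up the residual. The base case $q=1$ is a direct check that Hadamard and Phase generate the symplectic group in dimension $2$, isomorphic to $S_3$. So your outline is sound, but the adaptation to the $e_1$-stabilizer is not a literal transposition of the original four substeps and needs to be carried out explicitly.
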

\label{prop:elem-sym-transformation-generates-whole-BlockCodeCase}
\begin{proof}
It suffices to prove that an arbitrary symplectic transformation $T$ is
a finite composition of elementary ones. $T$ is a $2q \times 2q$ matrix over $\FF_2$.
Let us write $T \cong T'$ if two matrices are transformed by elementary symplectic transformations.
Since any row operation in the upper half block of $T$
can be compensated by an appropriate row operation in the lower half block
by controlled-NOT (CNOT).
One can then transform the upper half block into the reduced row echelon form.
Since $T$ has rank $2q$, the upper half block must have rank $q$.
Therefore,
\[
 T \cong \begin{pmatrix} \id & * \\ M & * \end{pmatrix}
\]
where $M$ and $*$ are all $q \times q$.
Using CNOT-Hadamard, one can eliminate the first column of $M$.
Then, since $T^T \lambda_q T = \lambda_q$, the first row of $M$ must be zero.
Inductively, one can completely eliminate all entries of $M$.
\[
 T \cong \begin{pmatrix} \id & L \\ 0 & N \end{pmatrix}
\]
The equation $T^T \lambda_q T = \lambda_q$ now implies that $N = \id_q$.
$L$ can be made zero by a similar transformations as $M$ was made zero.
Thus, we have
\[
 T \cong \begin{pmatrix} \id & 0 \\ 0 & \id \end{pmatrix}.
\]
An arbitrary symplectic transformation $T$ is equivalent to the trivial transformation by elementary symplectic transformations.
\end{proof}

As one can easily see, the elementary symplectic transformations are induced by the following unitary operators.
\[
 \mathrm{CNOT} =
 \begin{pmatrix}
  1 & 0 & 0 & 0 \\
  0 & 1 & 0 & 0 \\
  0 & 0 & 0 & 1 \\
  0 & 0 & 1 & 0
 \end{pmatrix}
 \begin{matrix}
  \ket{00} \\ \ket{01} \\ \ket{10} \\ \ket{11}
 \end{matrix} ,
\quad
 \mathrm{Hadamard} = 
 \frac{1}{\sqrt 2} \begin{pmatrix} 1 & 1 \\ 1 & -1 \end{pmatrix}
 \begin{matrix} \ket 0 \\ \ket 1 \end{matrix} ,
\quad 
 \mathrm{Phase} = 
 \begin{pmatrix} 1 & 0 \\ 0 & \sqrt{-1} \end{pmatrix}
 \begin{matrix} \ket 0 \\ \ket 1 \end{matrix} 
\]

\section{Additive/stabilizer codes}
\label{sec:additive-stabilizer-codes}

An {\bf additive}~\cite{CalderbankRainsShorSloane1998GF4} 
or {\bf stabilizer}~\cite{Gottesman1996Saturating} {\bf code}
is a quantum code defined by the common eigenspace, called {\bf code space},
of \emph{commuting Pauli operators}, called {\bf stabilizers}, of eigenvalue $+1$
acting on many physical qubits.
It is required that the group of stabilizers should not contain $-I$
in order for the code space not to be zero.
The number of physical qubits $q$ is called {\bf code length}.
From the relation of the group of Pauli matrices and the binary linear space,
each additive or stabilizer code corresponds to a unique null-symplectic subspace of 
the abelianized Pauli group of $q$ qubits.
(The requirement that the stabilizer group should not contain $-I$ is however an independent condition.)
If the null-symplectic subspace is spanned by columns of a matrix $\sigma$ over $\FF_2$,
then the nullity is expressed by a matrix equation
\begin{equation}
 \sigma^T \lambda_q \sigma = 0
\label{eq:nullity}
\end{equation}
where
\[
 \lambda_q = \bigoplus_{i=1}^q \lambda_1 = \begin{pmatrix} 0 & \id \\ -\id & 0 \end{pmatrix}
\]
By fixing the form of $\lambda_q$,
we also fix a convention for the abelianized Pauli group.
The first $q$ components represents Pauli matrices $\sigma^x$ and the second $q$ components $\sigma^z$.

In view of Eq.~\eqref{eq:nullity},
an additive code is a classical linear code with an additional nullity condition.
Calderbank and Shor~\cite{CalderbankShor1996Good}, and Steane~\cite{Steane1996Multiple}
(CSS) proposed a solution to this matrix equation of form
\[
 \sigma = \begin{pmatrix} C_1 & 0 \\ 0 & C_2 \end{pmatrix}
\]
where $C_1^T C_2 = 0$, which we now call CSS code.
The history actually goes backwards.
CSS first constructed quantum codes.
Later, Calderbank, Rains, Shor, and Sloane~\cite{CalderbankRainsShorEtAl1997Quantum},
and Gottesman~\cite{Gottesman1996Saturating} 
formulated a version with the full matrix equation Eq.~\eqref{eq:nullity}.
The matrix $\sigma$ is called {\bf generating matrix} of the code.

We may extend the basis for the null space $V$ spanned by the columns of $\sigma$
to a canonical basis $B$ of whole Pauli space $P$
by the Gram-Schmidt process.
Another canonical basis $C$ of $P$ is induced by
\[
 \{ \sigma^x \otimes I \otimes \cdots, I \otimes \sigma^x \otimes \cdots, \ldots, 
  \sigma^z \otimes I \otimes \cdots, I \otimes \sigma^z \otimes \cdots \}
\]
Two bases are interchanged by a symplectic transformation,
and due to Proposition~\ref{prop:elem-sym-transformation-generates-whole-BlockCodeCase},
such a symplectic transformation is induced by \emph{unitary} operators CNOT, Hadamard, and Phase.
Under the basis $C$, the code space ({trivial code}) is the common eigenspace of
\[
 \{ \sigma^z \otimes I \otimes \cdots, I \otimes \sigma^z \otimes \cdots \}
\]
of eigenvalue $+1$.
If $s = \dim_{\FF_2} V \le q$, then the code space consists of all states of $q-s$ qubits.
That is, the {\bf trivial code} is
\[
 \ket{0}^{\otimes s} \otimes \lspan _\mathbb{C} \{ \ket{i_1 \cdots i_{q-s} } | i_a \in \FF_2 \} .
\]
It is clear that the orthogonal complement $V^\perp$ with respect to the symplectic form
is a direct sum of $V$ itself and a hyperbolic subspace of $P$.
In conclusion, if there are $s$ independent stabilizers,
the code space dimension is $2^{q-s}$ and there are $q-s$ pairs of Pauli operators
commuting with every stabilizer
that give rise to a hyperbolic subspace in the symplectic Pauli space.
These $q-s$ pairs of Pauli operators are logical operators.
The {\bf logical operators} are by definition operators that preserve the code space.
If they are Pauli operators, they must belong to $V^\perp$.
Focusing on the action on the code space by the logical operators,
one realizes that different logical operators might have the same action.
Indeed, if $U$ is a logical operator,
then $US$ is also a logical operator for any stabilizer $S$ of the same action on the code space:
\[
 U S \ket \psi = U \ket \psi
\]
where $\ket \psi$ is any code vector.
Conversely, if two logical Pauli operators $U_1$ and $U_2$ have the same action,
then $U_1^\dagger U_2$ have the trivial action so it must be proportional to a stabilizer.
Therefore, an equivalence class of logical \emph{Pauli} operators 
precisely corresponds to a coset of $V^\perp / V$.
Sometimes, stabilizers are called {\bf trivial logical operators}.

The {\bf weight of a Pauli operator} is the number of its non-identity tensor factors.
Likewise, the {\bf weight of a vector} is the number of its nonzero components.
A particularly interesting logical operator is one that corresponds to
the \emph{minimal weight} representative of nonzero elements of $V^\perp / V$.
This minimal weight is called {\bf code distance}, {\bf minimal distance} or just distance for short.
The importance of the code distance in relation to error correcting capability
will become clear once we establish cleaning lemma in the next section.

\section{Cleaning lemma}
\label{sec:cleaning-lemma}

Although this thesis mainly discusses additive codes,
the content of this section is stated in terms of \emph{subsystem} codes.
We continue to use the symplectic viewpoint for the Pauli group/space,
and the terms group and space will be used interchangeably.

Let $G$ be an arbitrary subspace of the Pauli space $P$.
We denote by $[G]$ the dimension of $G$ as a vector space over $\FF_2$.
By Proposition~\ref{prop:structure-symplectic-space}, we have a decomposition
$G = S \oplus W$ where $S$ is null and $W$ is hyperbolic.
The orthogonal complement $G^\perp$ consists of $S$ and a hyperbolic subspace $W'$ disjoint from $W$.
Mapping to trivial code, we see that $W'$ describes all Pauli operators acting on some qubits.
A {\bf subsystem code} is a selection of code space
given by the states of qubits acted upon by $W'$~\cite{Bacon2006Operator,Poulin2005Stabilizer}.
The common eigenspace of $S$
has a nontrivial tensor decomposition $\mathcal H _\text{gauge} \otimes \mathcal H _\text{logical}$.
The Pauli operators acting on $\mathcal H_\text{gauge}$ are represented by $W$,
and those on the \emph{subsystem} $\mathcal H _\text{logical}$ we wish to make use of are represented by $W'$.
The code space of the subsystem code is identified with $\mathcal H_\text{logical}$.
One may think of a subsystem code as a stabilizer code for which some logical operators are discarded.
The group/space $S$ is still called \emph{stabilizer group/space}.
$G$ is called {\bf gauge group}.
In this section \emph{logical operators} only refer to Pauli operators.
For subsystem codes, a {\bf bare} logical operator is one that belongs to $G^\perp$,
and a {\bf dressed} logical operator is one that belongs to $S^\perp$.
Since logical operators' action on $\mathcal H_\text{gauge}$ is ignored,
the set of all equivalence classes of bare logical operators is $G^\perp / S$,
and the set of all equivalence classes of dressed logical operators is $S^\perp / G$.
It is clear that $G^\perp / S = W' = S^\perp / G$.

The cleaning lemma for subsystem codes 
relates the number of independent bare logical operators supported on a set of qubits $M$ 
to the number of independent dressed logical operators supported on the complementary set $M^c$. 
The concept of the cleaning lemma was introduced in \cite{BravyiTerhal2009no-go}, 
then generalized in \cite{YoshidaChuang2010Framework} and  \cite{Bravyi2010Subsystem}. 
Here we use ideas from \cite{YoshidaChuang2010Framework} 
to prove a version stated in \cite{Bravyi2010Subsystem}.
(See also \cite{WildeFattal2009Nonlocal}.)

We use $P_A$ to denote the subgroup of the Pauli group $P$ supported on a set $A$ of qubits; likewise for any subgroup $G$ of the Pauli group $G_A= G \cap P_A$, is the subgroup of $G$ supported on $A$.
We denote by $\Pi_A : P \to P_A$ the restriction map that maps a Pauli operator to its restriction supported on the set $A$, 
and we use $|A|$ to denote the number of qubits contained in $A$; thus $[P_A] = 2|A|$.

If we divide $n$ qubits into two complementary sets $A$ and $B$, then a subgroup $G$ of $P$ can be decomposed into $G_A$, $G_B$, and a ``remainder,'' as follows:

\begin{lem}
Suppose that $A$ and $B$ are complementary sets of qubits. Then for any subgroup $G$ of the Pauli group,
\[
 G = G_A \oplus G_B \oplus G'
\]
for some $G'$, where
\begin{align*}
 [ (G^\perp)_A ] &= 2|A| - [G_A] - [G'] ,\\
 [ (G^\perp)_B ] &= 2|B| - [G_B] - [G']
\end{align*}
\end{lem}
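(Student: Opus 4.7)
The plan is to exploit the fact that the symplectic form on $P$ decomposes as an orthogonal direct sum of the forms on $P_A$ and $P_B$ (since a Pauli in $P_A$ and one in $P_B$ have disjoint support and thus commute). The decomposition of $G$ will be elementary, and the two dimension formulas will follow from a single symplectic-orthogonal computation in $P_A$ (resp.\ $P_B$) together with rank-nullity applied to the restriction map $\Pi_A$ (resp.\ $\Pi_B$).

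First I would construct $G'$. Since elements of $P_A$ and $P_B$ have disjoint support, $G_A \cap G_B = 0$, so $G_A \oplus G_B$ is genuinely a subspace of $G$, and I simply pick any vector-space complement $G'$ inside $G$. This gives $G = G_A \oplus G_B \oplus G'$ and $[G] = [G_A] + [G_B] + [G']$.

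Next I would identify $(G^\perp)_A$ with the symplectic orthogonal of $\Pi_A(G)$ inside $P_A$. For $g \in P_A$ and $h = h_A + h_B \in G$ (with $h_A = \Pi_A h$, $h_B = \Pi_B h$) one has $\lambda(g,h) = \lambda(g, h_A)$ because $\lambda(g, h_B) = 0$; hence $g \in G^\perp$ iff $g$ is orthogonal in $P_A$ to every element of $\Pi_A(G)$. Since $P_A$ is a non-degenerate symplectic space of dimension $2|A|$, Proposition~\ref{prop:structure-symplectic-space} (or rather its consequence that the orthogonal-complement map is dimension-reversing in the non-degenerate case) gives
\[
[(G^\perp)_A] \;=\; 2|A| - [\Pi_A(G)].
\]

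Finally I would apply rank-nullity to $\Pi_A$ restricted to $G$. Its kernel is exactly the set of $h \in G$ with $\Pi_A h = 0$, i.e.\ $h \in P_B \cap G = G_B$. Thus $[\Pi_A(G)] = [G] - [G_B] = [G_A] + [G']$, and substituting yields the claimed formula for $[(G^\perp)_A]$. The $B$-case is symmetric. The only real content is the symplectic-orthogonal step; the main thing to be careful about is the observation that disjoint support forces $\lambda(P_A, P_B) = 0$, which is precisely what makes $(G^\perp)_A = \Pi_A(G)^\perp$ inside the non-degenerate space $P_A$, and I do not foresee a serious obstacle beyond stating this cleanly.
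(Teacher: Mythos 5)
Your proof is correct and follows essentially the same route as the paper: choose $G'$ as a complement to $G_A\oplus G_B$ in $G$, identify $(G^\perp)_A$ with $\Pi_A(G)^\perp$ inside the non-degenerate space $P_A$, compute $[\Pi_A(G)]=[G_A]+[G']$, and conclude by dimension count. The only cosmetic difference is that you compute $[\Pi_A(G)]$ by rank--nullity on $\Pi_A|_G$ (kernel $=G_B$), whereas the paper shows directly that $\Pi_A$ restricted to $G_A\oplus G'$ is an isomorphism onto $\Pi_A(G)$ --- these are the same observation.
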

\begin{proof}
If $V$ is a vector space and $W$ is a subspace of $V$, then there is a vector space $V'$ such that $V=W\oplus V'$; we may choose $V'$ to be the span of the basis vectors that extend a basis for $W$ to a basis for $V$. 
Since $G_A$ and $G_B$ are disjoint, i.e., $G_A \cap G_B = \{0\}$, $G_A\oplus G_B$ is a subspace of $G$, and thus there exists an auxiliary vector space $G' \leq G$ such that
\[
 G = G_A \oplus G_B \oplus G'.
\]
The choice of $G'$ is not canonical, but we need only its existence. Since the restriction map $\Pi_A$ obviously annihilates $G_B$, we may regard it as a map from $G_A\oplus G'$ onto $\Pi_A G$. In fact this map is injective. Note that if $\Pi_A x = 0$ for some $x \in G_A\oplus G'$, then since $P=P_A\oplus P_B$ it must be that $x \in G_B$. But because the sum is direct, i.e., $G_B \cap (G_A\oplus G') = \{0\}$, it follows that $x = 0$, which proves injectivity. Hence $\Pi_A: G_A\oplus G'\to \Pi_A G$ is an isomorphism. Now, we may calculate $(G^\perp)_A$ by solving a system of linear equations. Noting that $x \in P_A$ is contained in $G^\perp$ if and only if $x$ commutes with the restriction to $A$ of each element of $G$, we see that the number of independent linear constraints is $[\Pi_A G] = [G_A] + [G']$; hence $[(G^\perp)_A]=[P_A] - [G_A] - [G']= 2|A| - [G_A] - [G']$. Likewise, $\Pi_B: G_B\oplus G'\to \Pi_B G$ is also an isomorphism, and hence $[(G^\perp)_B]=[P_B] - [G_B] - [G']= 2|B| - [G_B] - [G']$.
\end{proof}

Now we are ready to state and prove the cleaning lemma. For a subsystem code, let $g_{\rm bare}(M)$ be the number of independent nontrivial bare logical operators supported on $M$, and let $g(M)$ be the number of independent nontrivial dressed logical operators supported on $M$, i.e.,
\begin{align*}
 g_{\rm bare}(M) &= [G^\perp \cap P_M / S_M ] = [(G^\perp)_M/S_M], \\
 g(M)        &= [S^\perp \cap P_M / G_M ]= [(S^\perp)_M/G_M].
\end{align*}
Likewise, for a CSS subsystem code, let $g_{\rm bare}^X(M)$ be the number of independent nontrivial bare $X$-type logical operators supported on $M$, and let $g^X(M)$ be the number of independent nontrivial dressed $X$-type logical operators supported on $M$, i.e.,
\begin{align*}
 g_{\rm bare}^X(M) &= [(G^Z)^\perp \cap P^X_M / S^X_M], \\
 g^X(M)        &= [(S^Z)^\perp \cap P^X_M / G^X_M],
\end{align*}
and similarly for the $Z$-type logical operators.

\begin{lem}\emph{(Cleaning lemma for subsystem codes)}
Let $k$ be the number of encoded qubits.
For any subsystem code, we have
\[
 g_{\rm bare}(M) + g(M^c) = 2k ,
\]
where $M$ is any set of qubits and $M^c$ is its complement. 
Moreover, for a CSS subsystem code
\[
g_{\rm bare}^X(M) + g^Z(M^c) = k = g_{\rm bare}^Z(M) + g^X(M^c).
\]
\label{lem:counting_op}
\end{lem}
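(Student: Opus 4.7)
The plan is to reduce the statement to the preceding decomposition lemma, applied twice: once to the gauge group $G$ and once to the stabilizer $S$, each time with the partition $\{M, M^c\}$. Since $S \subseteq G \subseteq G^\perp \subseteq S^\perp$ and restriction respects containment, one has $S_M \subseteq (G^\perp)_M$ and $G_{M^c} \subseteq (S^\perp)_{M^c}$, so
\begin{align*}
g_{\rm bare}(M) &= [(G^\perp)_M] - [S_M], \\
g(M^c) &= [(S^\perp)_{M^c}] - [G_{M^c}].
\end{align*}
Writing $G = G_M \oplus G_{M^c} \oplus G'$ and $S = S_M \oplus S_{M^c} \oplus S''$ as guaranteed by the previous lemma gives $[(G^\perp)_M] = 2|M| - [G_M] - [G']$ and $[(S^\perp)_{M^c}] = 2|M^c| - [S_{M^c}] - [S'']$. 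Adding these, and using $|M| + |M^c| = n$ for the total qubit count, everything collapses to
\begin{align*}
g_{\rm bare}(M) + g(M^c) = 2n - [G] - [S].
\end{align*}

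To finish I would invoke the structural identity $[G] + [S] = 2(n-k)$, which is immediate from the non-degeneracy of the ambient symplectic form (giving $[G^\perp] = 2n - [G]$) together with the defining count $[G^\perp / S] = 2k$ of the logical subsystem, whence $2n - [G] - [S] = 2k$, exactly the claim.

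The CSS case follows the same template one Pauli type at a time. The ambient symplectic form restricts to a non-degenerate pairing $P^X \times P^Z \to \FF_2$ between two $n$-dimensional $\FF_2$-spaces, and the same injectivity argument that proved the previous lemma yields, for any subspace $H \subseteq P^Z$ with decomposition $H = H_M \oplus H_{M^c} \oplus H'$, the identity $[H^{\perp_X} \cap P^X_M] = |M| - [H_M] - [H']$, where $H^{\perp_X}$ denotes the annihilator of $H$ inside $P^X$. Applying this to $H = G^Z$ and, with $X$ and $Z$ swapped, to $H = S^X$, then invoking the CSS refinement $[G^Z] + [S^X] = n - k$ (which itself comes from $[(S^Z)^{\perp_X} / G^X] = k$ and the non-degeneracy of the $X$-$Z$ pairing), one obtains $g_{\rm bare}^X(M) + g^Z(M^c) = k$. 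The second CSS equality is the mirror under $X \leftrightarrow Z$.

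The only step requiring a small verification is that the previous lemma's injectivity argument $\Pi_M : G_M \oplus G' \to \Pi_M G$ transfers verbatim to the CSS setting with the restricted pairing $P^X \times P^Z$; since the original proof used only linear algebra over $\FF_2$ together with the fact that $\Pi_M$ annihilates $P_{M^c}$, the adaptation is immediate. Beyond this, the proof is pure bookkeeping once the dimension counts $[G^\perp / S] = 2k$ and $[(S^Z)^{\perp_X} / G^X] = k$ are recognized as restatements of the definition of $k$, so I expect no real obstacle beyond keeping the four subspaces $G^X, G^Z, S^X, S^Z$ and their supports straight.
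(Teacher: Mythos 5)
Your proof follows the paper's approach exactly: apply the decomposition lemma separately to $G$ and to $S$ with the partition $\{M,M^c\}$, sum to collapse everything to $2n - [G] - [S]$, and close with $[G] + [S] = 2(n-k)$; the CSS case repeats the pattern one Pauli type at a time using the non-degenerate $P^X \times P^Z$ pairing. One small labeling slip in the CSS paragraph: the identity $[G^Z] + [S^X] = n - k$ that you need for $g_{\rm bare}^X(M) + g^Z(M^c) = k$ follows from $\left[(S^X)^{\perp_Z}/G^Z\right] = k$ (the count of dressed $Z$-type logicals), not from $\left[(S^Z)^{\perp_X}/G^X\right] = k$ as written, though both identities hold and the argument is otherwise unaffected.
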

\begin{proof}
We use Lemma 1 to prove the cleaning lemma by a direct calculation:
\[
g_{\rm bare}(M) = [(G^\perp)_M  / S_M] = 2|M| - [G_M] - [G'] - [S_M] ,
\]
and
\[
g(M^c) = [(S^\perp)_{M^c} / G_{M^c}] = 2|M^c| - [S_{M^c}] - [S'] - [G_{M^c}] .
\]
Summing, we find
\[
g_{\rm bare}(M) + g(M_c) = 2|M| + 2|M_c| -([G_M] + [G_{M_c}] + [G']) -([S_M] + [S_{M_c}] + [S'])
\]
and invoking Lemma 1 once again,
\[
g_{\rm bare}(M) + g(M_c) = 2n -[G] - [S] = 2k ,
\]
which proves the claim for general subsystem codes.
For the CSS case, we apply the analogue of Lemma 1 to the $X$-type and $Z$-type Pauli operators, finding
\[
g^Z_{\rm bare}(M) = [ (G^X)^\perp \cap P^Z_{M} / S^Z_{M} ] = |M| - [G^X_{M}] - [(G^X)'] - [S^Z_{M}]
\]
and also 
\[
g^X(M^c) = [ (S^Z)^\perp \cap P^X_{M^c} / G^X_{M^c} ] = |M^c| - [S^Z_{M^c}] - [(S^Z)'] - [G^X_{M^c}]. 
\]
Summing and using Lemma 1 we have
\[
g_{\rm bare}^Z(M) + g^X(M^c) = n - [G^X]-[S^Z] =k ;
\]
a similar calculation yields 
\[
g_{\rm bare}^X(M) + g^Z(M^c) = n - [G^Z]-[S^X] =k ,
\]
proving the claim for CSS subsystem codes.
\end{proof}

Of course, for a stabilizer code there is no distinction 
between bare and dressed logical operators; 
the statement of the cleaning lemma becomes
\[
g(M) + g(M^c) = 2k
\]
for general stabilizer codes, and
\[
g^X(M) + g^Z(M^c) = k
\]
for CSS stabilizer codes.

To understand how the cleaning lemma gets its name, 
note that it implies that 
if no bare logical operator can be supported on the set $M$ 
then all dressed logical operators can be supported on its complement $M^c$. 
That is, any of the code's dressed logical Pauli operators can be ``cleaned up'' 
by applying elements of the gauge group $G$. 
The cleaned operator acts the same way on the protected qubits 
as the original operator (though it might act differently on the gauge qubits),
and acts trivially on $M$.

We say that a region $M$ is \emph{correctable} if there are no nontrivial dressed logical operators supported on $M$.
If $M$ is correctable 
then $g(M)=0$ and thus $g_{\rm bare}(M)=0$.
The cleaning lemma is then rephrased as follows.
\begin{lem}
\label{lem:clean-region}
For any subsystem code, if $M$ is a correctable region and $x$ is a dressed logical operator,
then there is a dressed logical operator $y$ supported on $M^c$ that is equivalent to $x$.
\end{lem}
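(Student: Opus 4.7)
The plan is to derive this lemma as a direct consequence of the cleaning lemma (Lemma~\ref{lem:counting_op}). Applying that lemma with the roles of $M$ and $M^c$ exchanged gives
\[
g_{\rm bare}(M^c) + g(M) = 2k,
\]
and since $M$ is correctable we have $g(M)=0$, hence $g_{\rm bare}(M^c)=2k$. This matches the total dimension $[S^\perp/G]=2k$ of dressed logical equivalence classes, so at the level of dimensions every dressed logical class ought to be realizable by a bare logical supported on $M^c$.

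To promote this dimensional coincidence into an existence statement, I would build the natural comparison map
\[
\bar\phi : (G^\perp)_{M^c}/S_{M^c} \longrightarrow S^\perp/G
\]
by composing the inclusion $(G^\perp)_{M^c} \hookrightarrow G^\perp \subseteq S^\perp$ with the quotient by $G$. This is well-defined because $S_{M^c} \subseteq S \subseteq G$. The crucial step is injectivity of $\bar\phi$: if $y\in (G^\perp)_{M^c}$ is sent to zero then $y\in G$, so $y\in G\cap G^\perp=S$, and since $y$ is supported on $M^c$ we get $y\in S_{M^c}$. Both source and target have dimension $2k$ over $\FF_2$, hence $\bar\phi$ is a bijection.

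Given any dressed logical $x\in S^\perp$, surjectivity of $\bar\phi$ produces $y\in (G^\perp)_{M^c}$ whose class in $S^\perp/G$ equals that of $x$, i.e., $x$ and $y$ differ by an element of $G$. By construction $y$ is supported on $M^c$ and is itself a dressed logical operator (indeed a bare one), which is exactly the desired conclusion.

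The step most in need of care is the injectivity of $\bar\phi$; it rests on the identification $G\cap G^\perp=S$, i.e., the stabilizer group is the symplectic center of the gauge group, which is the structural hypothesis of a subsystem code. Everything else is dimension bookkeeping on top of Lemma~\ref{lem:counting_op}.
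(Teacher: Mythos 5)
Your proof is correct and takes essentially the same route as the paper: the paper presents this lemma as a direct rephrasing of the cleaning lemma (Lemma~\ref{lem:counting_op}) together with a dimension count, and your argument is a careful formalization of exactly that, including the small but necessary point that the natural comparison map into $S^\perp/G$ is injective via $G\cap G^\perp = S$. One minor difference: the paper's informal discussion invokes $g_{\rm bare}(M)=0 \Rightarrow g(M^c)=2k$ (counting dressed logicals on $M^c$), while you instead use $g(M)=0 \Rightarrow g_{\rm bare}(M^c)=2k$, which yields the slightly stronger conclusion that the cleaned representative $y$ can be taken \emph{bare}, not merely dressed; both follow from the same counting identity.
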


\begin{rem}
Given a correctable region $M$, we have a complete set of logical Pauli operators $\{ y_a \}$ supported on $M^c$.
Let $U$ be the unitary transformation that maps the code space to that of the trivial code of the previous section;
$U$ is a composition of elementary symplectic transformations.
Since any error $e$ on $M$ was (trivially) commuting with any $y_a$,
it follows that $UeU^\dagger$ acts by identity on the logical qubits of the trivial code.
Replacing the non-logical qubits with fresh qubits and applying $U^\dagger$,
we can map the damaged code vector to its original state.
In conclusion, any error on the correctable region can be corrected.
The code distance is the upper bound on the number of qubits in any correctable regions.

A more general error correcting criterion
can be found in \cite[Chapter 15]{KitaevShenVyalyi2002CQC}.
\end{rem}

\section{Operator trade-off for local subsystem codes}
\label{sec:subsystem-tradeoff}

In this section we consider local subsystem codes 
with qubits residing at the sites of a $D$-dimensional hypercubic lattice $\Lambda$.
The code has {\bf interaction range} $w$, meaning that 
the generators of the gauge group $G$ can be chosen 
so that each generator has support on a hypercube containing $w^D$ sites.

\begin{defn}
\label{defn:boundary}
Given a set of gauge generators for a subsystem code, and a set of qubits $M$, 
let $M'$ denote the support of all the gauge generators that act nontrivially on $M$. 
The {\bf external boundary} of $M$ is $\partial_+ M = M' \cap M^c$, 
where $M^c$ is the complement of $M$, 
and the {\bf internal boundary} of $M$ is  $\partial_- M = \left(M^c\right)' \cap M$.
The {\bf boundary} of $M$ is $\partial M=\partial_+M\cup\partial_-M$,
and the {\bf interior} of $M$ is $M^\circ = M \setminus \partial_- M$.
\end{defn}

Recall that a region (i.e., a set of qubits) $M$ is said to be \emph{correctable}
if no nontrivial dressed logical operation is supported on $M$, 
in which case erasure of $M$ can be corrected. 
Since the code distance $d$ is defined as the minimum weight of a dressed logical operator,
$M$ is certainly correctable if $|M| < d$.
But in fact much larger regions are also correctable, as follows from this lemma: 

\begin{lem}
For a local subsystem code, if $M$ and $A$ are both correctable,
where $A$ contains $\partial M$, then $M\cup A$ is correctable.
\label{lem:subsystem-extend}
\end{lem}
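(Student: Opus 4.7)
The plan is: take an arbitrary dressed logical $L$ supported on $M\cup A$ and show $L\in G$, which is precisely the statement that $M\cup A$ is correctable. The strategy is a two-stage cleanup --- first use correctability of $M$ to eliminate the $M$-part of $L$, then use correctability of $A$ to finish the job. The subtlety is that Lemma~\ref{lem:clean-region} as stated gives no control over the geometric shape of the cleaning gauge element, so one must first upgrade it to a cleaning operator supported near $M$.

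Concretely, Lemma~\ref{lem:clean-region} applied to $M$ yields some $g\in G$ with $Lg$ supported on $M^c$. Since $G$ is spanned as an $\FF_2$-vector space by the local gauge generators, I can expand $g$ as a sum of generators and split it as $g=g_{\rm in}+g_{\rm out}$, where $g_{\rm in}$ collects the generators that act nontrivially on $M$ and $g_{\rm out}$ collects the remaining generators. Both summands still lie in $G$. By Definition~\ref{defn:boundary}, $\mathrm{supp}(g_{\rm in})\subseteq M'\subseteq M\cup\partial_+M$, while $\mathrm{supp}(g_{\rm out})\subseteq M^c$. Because $g_{\rm out}$ acts trivially on $M$, one has $g_{\rm in}|_M=g|_M=L|_M$, so $L':=L\cdot g_{\rm in}$ vanishes on $M$ and satisfies
\[
\mathrm{supp}(L')\;\subseteq\;(M\cup A)\cup M'\;\subseteq\;(M\cup A)\cup(M\cup\partial_+M)\;\subseteq\;M\cup A,
\]
where the last inclusion uses the hypothesis $\partial_+M\subseteq\partial M\subseteq A$. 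Combined with $L'|_M=0$, this forces $\mathrm{supp}(L')\subseteq A$.

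At this point $L'$ is the product of a dressed logical with an element of $G\subseteq S^\perp$, hence itself a dressed logical, and it is supported on the correctable region $A$; therefore $L'\in G$ by Lemma~\ref{lem:clean-region} applied to $A$. It follows that $L=L'\cdot g_{\rm in}^{-1}\in G$, which is the desired conclusion. The one non-routine step is the localization of $g$ into $g_{\rm in}+g_{\rm out}$: the cleaning lemma by itself says nothing about the shape of $g$, and only the locality of the gauge generators together with the hypothesis $\partial M\subseteq A$ guarantees both that discarding $g_{\rm out}$ leaves $g|_M$ intact and that the localized support $M'$ still sits inside $M\cup A$. Without the boundary hypothesis, $M'$ could protrude into $(M\cup A)^c$ and the argument would collapse.
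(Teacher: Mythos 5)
Your proof is correct, and it takes a genuinely different and in fact more direct route than the paper's. The paper constructs an auxiliary subsystem code $\mathcal{C}_{M^c}$ on the complement of $M$ with gauge group $\Pi_{M^c}G$, relates bare and dressed logical operators between $\mathcal{C}$ and $\mathcal{C}_{M^c}$, and then invokes the cleaning lemma (Lemma~\ref{lem:counting_op}) three times to push a dimension count $g^{(\mathcal{C})}(M\cup A)=0$ through the two codes. You instead work directly at the level of a single dressed logical $L$: clean $M$ via Lemma~\ref{lem:clean-region}, then notice that the cleaning gauge element $g$ can be split into a piece $g_{\rm in}$ built from the local gauge generators touching $M$ (hence supported on $M'\subseteq M\cup\partial_+M$) and a piece $g_{\rm out}$ supported on $M^c$, so that $Lg_{\rm in}$ both vanishes on $M$ and stays inside $M\cup A$ once $\partial M\subseteq A$ is used. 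Correctability of $A$ then finishes it in one application. Both proofs exploit the locality of the gauge generators in an essential way --- the paper does so when extending dressed logicals of $\mathcal{C}_{M^c}$ from $\partial_+M$ to $\partial M$, you do so in the $g_{\rm in}/g_{\rm out}$ split --- but your argument avoids the derived code entirely and is shorter and easier to audit. The paper's route has the virtue of producing, almost as a byproduct, explicit operator-count identities on both $\mathcal{C}$ and $\mathcal{C}_{M^c}$, which can be reused in related bookkeeping arguments; your route buys transparency and brevity at the cost of not exhibiting those counts.

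One small stylistic caution: you write $g=g_{\rm in}+g_{\rm out}$ additively, then later use $g_{\rm in}^{-1}$ multiplicatively. It is worth stating once that all manipulations are in the abelianized Pauli group over $\FF_2$, where the two notations coincide (every element is its own inverse and restriction to a region is a group homomorphism modulo phases); this makes the step $L|_M=g|_M=g_{\rm in}|_M$ airtight without fussing over signs.
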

\begin{proof}
Given a subsystem code $\mathcal{C}$ with gauge group $G$,
we may define a subsystem code $\mathcal{C}_{M^c}$ on $M^c$ with gauge group $\Pi_{M^c}G$, 
where $\Pi_{M^c}$ maps a Pauli operator to its restriction supported on $M^c$. 
We note that a Pauli operator $x$ supported on $M^c$ is a bare logical operator for $\mathcal{C}$ 
if and only if $x$ is a bare logical operator for $\mathcal{C}_{M^c}$; 
that is, $x$ commutes with all elements of $G$ 
if and only if it commutes with all elements of the restriction of $G$ to $M^c$. 

Furthermore, if $x$ is a dressed logical operator for $\mathcal{C}_{M^c}$ supported on $\partial_+M$, 
then $x$ can be extended to a dressed logical operator $\bar x$ for $\mathcal{C}$ supported on $\partial M$. 
Indeed, suppose $x=yz$, where $y$ is a bare logical operator for $\mathcal{C}_{M^c}$ 
(and hence also a bare logical operator for $\mathcal{C}$ supported on $M^c$), 
while $z$ is an element of the gauge group $\Pi_{M^c} G$ of $\mathcal{C}_{M^c}$. 
Then $z$ can be written as a product $z=\prod_i g_i$ of generators of $\Pi_{M^c} G$, 
each of which can be expressed as $g_i = \Pi_{M^c} \bar g_i$, 
where $\bar g_i$ is a generator of $G$ supported on $M^c\cup  \partial_-M$.
Thus $\bar x = y\prod_i\bar g_i$ is a dressed logical operator for $\mathcal{C}$ supported on $\partial M$. 
It follows that if $\partial M$ is correctable for the code $\mathcal{C}$ 
(i.e., code $\mathcal{C}$ has no nontrivial dressed logical operators supported on $\partial M$), 
then $\partial_+ M$ is correctable for the code $\mathcal{C}_{M^c}$ 
($\mathcal{C}_{M^c}$ has no nontrivial dressed logical operators supported on $\partial_+ M$). 
By similar logic, if $A$ is correctable for $\mathcal{C}$ and contains $\partial M$, 
then $A\cap M^c$ is correctable for $\mathcal{C}_{M^c}$.

Suppose now that the code $\mathcal{C}$ has $k$ encoded qubits and that $M$ is correctable, 
i.e., $g^{(\mathcal{C})}(M)=0$. 
Therefore, applying Lemma~\ref{lem:counting_op} to the code $\mathcal{C}$, 
$g_{\rm bare}^{(\mathcal{C})}(M^c)= 2k$. 
Suppose further that the set $A$ containing $\partial M$ is correctable for $\mathcal{C}$, 
implying that $A\cap M^c$ is correctable for $\mathcal{C}_{M^c}$,
i.e., $g^{(\mathcal{C}_{M^c})}(A\cap M^c)=0$. 
Then applying Lemma~\ref{lem:counting_op} to the code $\mathcal{C}_{M^c}$, 
we conclude that $g_{\rm bare}^{(\mathcal{C}_{M^c})}(M^c\setminus A)=2k$.
Since each bare logical operator for $\mathcal{C}_{M^c}$, supported on $M^c\setminus A$, 
is also a bare logical operator for $\mathcal{C}$, supported on $M^c\setminus A$, 
we can now apply Lemma \ref{lem:counting_op} once again to the code $\mathcal{C}$, 
using the partition into $M^c\setminus A$ and $M\cup A$, 
finding $g^{(\mathcal{C})}(M \cup A)=0$.
Thus $M \cup A$ is correctable. 
\end{proof}

If the interaction range is $w$, and $M$ is a correctable hypercube with linear size $l-2(w-1)$, 
then we may choose $A\supseteq \partial M$ 
so that $M\cup A$ is a hypercube with linear size $l$ and $M\setminus A$ is a hypercube with linear size $l - 4(w-1)$. 
Then $A$ contains
\[
|A| = l^D - \left[l-4(w-1)\right]^D \le 4(w-1)Dl^{D-1}
\]
qubits, and $A$ is surely correctable provided $|A|<d$, where $d$ is the code distance. 
Suppose that $d>1$, so a single site is correctable. 
Applying Lemma~\ref{lem:subsystem-extend} repeatedly,
we can build up larger and larger correctable hypercubes,
with linear size $1 + 2(w-1), 1+ 4(w-1), 1+ 6(w-1), \dots$. 
This process continues as long as $|A|< d$. We conclude:

\begin{lem}
\label{lem:subsystem-hypercube}
For a $D$-dimensional local subsystem code with interaction range $w>1$ and distance $d>1 $, 
a hypercube with linear size $l$ is correctable if 
\begin{equation}\label{eq:hypercube-size}
4(w-1)Dl^{D-1} < d.
\end{equation}
\end{lem}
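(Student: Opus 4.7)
The plan is to apply Lemma \ref{lem:subsystem-extend} iteratively, starting from a single-qubit seed and growing the correctable hypercube one concentric shell at a time. The base case is that any single qubit is correctable: since $d > 1$, there is no weight-one dressed logical operator, so the hypercube of linear size $1$ is correctable.

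For the inductive step, suppose a correctable hypercube $M$ of linear size $l - 2(w-1)$ is given, placed concentrically inside a target hypercube $N$ of linear size $l$. Let $A = N \setminus M^\circ_{\text{inner}}$, where $M^\circ_{\text{inner}}$ is the concentric inner sub-hypercube of $M$ of linear size $l - 4(w-1)$; thus $A$ consists of an outer shell of thickness $w-1$ outside $M$ together with an inner shell of thickness $w-1$ inside $M$. Because every gauge generator has support inside some hypercube of side $w$, a generator acting nontrivially on $M$ can extend at most $w-1$ sites beyond $\partial M$ in either direction, so $\partial M \subseteq A$ and the generator's support is contained in $N$. Thus the geometric hypotheses of Lemma \ref{lem:subsystem-extend} are met, provided $A$ is itself correctable. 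Since a region of fewer than $d$ qubits supports no nontrivial dressed logical operator (by definition of $d$), it suffices to check $|A| < d$.

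To bound $|A|$, use the telescoping inequality $a^D - b^D \le (a-b) D a^{D-1}$ with $a=l$ and $b=l - 4(w-1)$:
\[
|A| \;\le\; l^D - \bigl(l - 4(w-1)\bigr)^D \;\le\; 4(w-1) D\, l^{D-1}.
\]
Hence the hypothesis $4(w-1) D\, l^{D-1} < d$ immediately yields $|A| < d$, and Lemma \ref{lem:subsystem-extend} gives correctability of $N$. Iterating from the single-qubit seed, we obtain correctable concentric hypercubes of linear sizes $1,\, 1+2(w-1),\, 1+4(w-1), \dots$; the sufficient condition at every intermediate step is weaker than at the final size $l$ because the bound $4(w-1)D\,(\cdot)^{D-1}$ is monotone in the linear size. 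For a hypercube whose linear size $l$ is not exactly of the form $1+2k(w-1)$, take the smallest such size that contains it; correctability passes to sub-hypercubes because any dressed logical operator on the smaller region is also one on the larger. The only nontrivial point in the proof is the geometric bookkeeping of $\partial M$, which reduces to the simple observation that generators have diameter at most $w-1$ and therefore reach across the boundary of $M$ by no more than that distance.
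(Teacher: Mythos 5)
Your proof takes the same route as the paper: seed from a single site (correctable because $d>1$), grow the hypercube concentrically one shell of thickness $w-1$ at a time via Lemma~\ref{lem:subsystem-extend}, and bound the annulus by $|A| = l^D - \bigl(l-4(w-1)\bigr)^D \le 4(w-1)Dl^{D-1}$. The choice of $A$, the check that $\partial M \subseteq A$, the telescoping volume bound, and the monotonicity remark for the intermediate steps all match the paper's argument. (Your side remark that the gauge generators stay inside $N$ is true but not needed; Lemma~\ref{lem:subsystem-extend} only asks that $M$ and $A$ be correctable and $A\supseteq \partial M$.)

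The one place where you are more explicit than the paper — the case when $l$ is not of the form $1 + 2k(w-1)$ — is also where there is a gap. You pad up to the smallest $l' = 1 + 2k(w-1) \ge l$, show the $l'$-hypercube is correctable, and pass to the $l$-subhypercube. The last step is fine, but running the iteration up to size $l'$ requires $4(w-1)D\,(l')^{D-1} < d$ at the final shell, which is \emph{strictly stronger} than the stated hypothesis $4(w-1)Dl^{D-1}<d$ when $l' > l$, and is not implied by it. (The paper simply does not address this case at all.) A clean repair is to change the seed rather than the target: seed the iteration at size $s \in \{1,\dots,2(w-1)\}$ with $s \equiv l \pmod{2(w-1)}$, so that the iteration lands on $l$ exactly. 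The seed is correctable because, using $s \le 2(w-1)$ and $s \le l$, one has $s^D = s\cdot s^{D-1} \le 2(w-1)\,l^{D-1} \le 4(w-1)D\,l^{D-1} < d$, so the seed hypercube contains fewer than $d$ qubits and supports no nontrivial dressed logical operator.
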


\noindent Thus (roughly speaking) for the hypercube to be correctable 
it suffices for its $\left[2(w-1)\right]$-thickened \emph{boundary}, 
rather than its volume, to be smaller than the code distance. 
Bravyi~\cite{Bravyi2010Subsystem} calls this property ``the holographic principle for error correction,''
because the \emph{absence} of information encoded at the boundary of a region 
ensures that no information is encoded in the ``bulk.'' 

For local stabilizer codes, the criterion for correctability is slightly weaker than for local subsystem codes.
We say that a local stabilizer code has interaction range $w$
if each stabilizer generator has support on a hypercube containing $w^D$ sites.
For this case, we can improve the criterion for correctability of a hypercube,
found for local subsystem codes in Lemma~\ref{lem:subsystem-hypercube}.

\begin{lem}
For a local stabilizer code, 
suppose that $\partial_+M$, $A$, and $M\setminus A$  are all correctable, 
where $\partial_- M \subseteq A \subseteq M$. 
Then $M$ is also correctable. 
\end{lem}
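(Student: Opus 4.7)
Let $x \in P_M \cap S^\perp$; I will show $x \in S$. The central idea is to restrict the stabilizer code $\mathcal{C}$ to the qubits $N := (\partial_+ M)^c = M \cup (M^c \setminus \partial_+ M)$, obtaining a restricted code $\mathcal{C}'$ with stabilizer group $S' := \Pi_N S$; this is again a stabilizer group because the restrictions of commuting Pauli operators commute. The crucial observation is that $\mathcal{C}'$ factorizes as a direct product of a code on $M$ and a code on $M^c \setminus \partial_+ M$, because every stabilizer of $\mathcal{C}$ that crosses $\partial M$ has its $M^c$-part contained in $\partial_+ M$ by the very definition of the external boundary, and that part is erased by the restriction. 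Thus the stabilizer of the $M$-factor $\mathcal{C}'_M$ is $\Pi_M S$, and the stabilizer of the $(M^c \setminus \partial_+ M)$-factor is $\Pi_{M^c \setminus \partial_+ M} S_{M^c}$.

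Correctability of $A$ and $B := M \setminus A$ in $\mathcal{C}$ passes to correctability in $\mathcal{C}'_M$, since for any Pauli operator supported in $A$ (or $B$), commuting with $\Pi_M S$ is the same as commuting with $S$, while the trivializing group in $\mathcal{C}'_M$ is larger. Applying the cleaning lemma (Lemma~\ref{lem:counting_op}) inside $\mathcal{C}'_M$ with the complementary partition $M = A \sqcup B$ yields $g_{\mathcal{C}'_M}(A) + g_{\mathcal{C}'_M}(B) = 2\,k_{\mathcal{C}'_M}$, so $k_{\mathcal{C}'_M} = 0$. Consequently $\Pi_M S$ is a maximal isotropic subspace of $P_M$ and equals its own symplectic complement inside $P_M$. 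Since $x \in P_M \cap S^\perp = P_M \cap (\Pi_M S)^\perp = \Pi_M S$, I can write $x = s|_M$ for some $s \in S$.

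To upgrade $x = s|_M$ to an actual element of $S$, I invoke the correctability of $\partial_+ M$. By the tensor factorization, the restriction $s|_{M^c \setminus \partial_+ M}$ is the $(M^c \setminus \partial_+ M)$-part of a stabilizer of $\mathcal{C}'$ and therefore lies in $\Pi_{M^c \setminus \partial_+ M} S_{M^c}$, so I may pick $s^* \in S_{M^c}$ with $s^*|_{M^c \setminus \partial_+ M} = s|_{M^c \setminus \partial_+ M}$. Define $t := s|_{M^c} \cdot (s^*)^{-1}$. Then $t$ is supported only on $\partial_+ M$, and $t \in S^\perp$ as a product of elements already in $S^\perp$ (note that $s|_{M^c}$ commutes with all of $S$ because both $s$ and $s|_M = x$ do). Correctability of $\partial_+ M$ then forces $t \in S_{\partial_+ M} \subseteq S$, whence $s|_{M^c} = t \cdot s^* \in S$ and $x = s \cdot (s|_{M^c})^{-1} \in S$.

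The principal obstacle is verifying the tensor factorization carefully. One must show that the definition of $\partial_+ M$ is precisely what makes $\Pi_{M^c \setminus \partial_+ M} S = \Pi_{M^c \setminus \partial_+ M} S_{M^c}$ and that no stabilizer of $\mathcal{C}'$ spans both $M$ and $M^c \setminus \partial_+ M$, so that the restricted code genuinely decouples into two independent stabilizer codes. Once that decoupling is in hand, the rest of the argument is a clean application of the cleaning lemma within $\mathcal{C}'_M$, followed by the two-step patching on the $M^c$-side that uses the $\partial_+ M$ correctability hypothesis in a single decisive way.
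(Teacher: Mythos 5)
Your opening move — restricting $\mathcal{C}$ to $N = (\partial_+M)^c$ and declaring $\Pi_N S$ a stabilizer group "because the restrictions of commuting Pauli operators commute" — is where the proof breaks. That claim is false in general: two commuting Pauli operators can easily anticommute after restriction to a sub-region (they anticommute an odd number of times inside and an odd number of times outside). In the present setup this really can happen: if $g_i, g_j$ are local generators both acting on $M$ and on $\partial_+M$, the total commutation value of $g_i$ with $g_j$ splits as the sum of a contribution on $M$ and one on $\partial_+M$, and both contributions can be $1$. Hence $\Pi_M g_i$ and $\Pi_M g_j$ can anticommute, so $\Pi_M S$ (and therefore $\Pi_N S$) is in general a non-abelian subgroup of the Pauli group — a \emph{gauge} group, not a stabilizer group. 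Indeed, the paper's own treatment of restricted codes (Lemma~\ref{lem:subsystem-extend}) deliberately works with $\Pi_{M^c} G$ as the gauge group of a \emph{subsystem} code for exactly this reason. Once $\mathcal{C}'_M$ is a genuine subsystem code, the cleaning identity you need is the asymmetric $g_{\rm bare}(A) + g(B) = 2k_{\mathcal{C}'_M}$, not the symmetric stabilizer-code version $g(A)+g(B)=2k$ you invoke. Your correctability-transfer argument correctly shows $g_{\rm bare}(A) = g_{\rm bare}(B) = 0$ (since a Pauli on $A\subseteq M$ commutes with $\Pi_M S$ iff it commutes with $S$, and correctability of $A$ in $\mathcal{C}$ then forces it into $S_A \subseteq \Pi_M S$). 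But it does \emph{not} show $g(B) = 0$: a \emph{dressed} logical operator on $B$ need only commute with the center of $\Pi_M S$, which is a strictly weaker constraint than commuting with $S$, so correctability of $B$ in $\mathcal{C}$ does not control it. Thus $k_{\mathcal{C}'_M} = 0$ is not established, and the key intermediate conclusion $x \in \Pi_M S$ is unsupported. (Note also that the anticommuting restrictions are precisely those supported on $\partial_-M \subseteq A$, so the hypothesis $\partial_-M \subseteq A$ does not save you; it concentrates the obstruction inside $A$ rather than eliminating it.)

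The paper's proof sidesteps this by never forming a restricted code at all. Starting from a putative nontrivial logical $x$ supported on $M$, it applies the clean-region lemma to $A$ directly inside $\mathcal{C}$: there is $y \in S$ with $xy$ supported off $A$, and — by dropping from the generator expansion of $y$ all local generators that do not touch $A$ — one may take $y$ supported on $A' \subseteq M' = M\cup\partial_+M$. Then $xy$ lives in $(M\setminus A)\cup\partial_+M$, and the hypothesis $\partial_-M\subseteq A$ guarantees no local generator straddles both pieces, so $xy$ factors into two commuting operators that are separately logical in $\mathcal{C}$ and hence separately trivial by correctability of $M\setminus A$ and $\partial_+M$. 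This stays entirely inside the original code, uses the clean-region lemma only once, and avoids any claim about the commutativity of restricted stabilizers. If you want to rescue your approach you would need to show that $k_{\mathcal{C}'_M}=0$ in the subsystem-code sense, which seems to require an argument of roughly the same difficulty as the one you are trying to avoid.
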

\begin{proof}
Suppose, contrary to the claim, that there is a nontrivial logical operator $x$ supported on $M$.
Then, because $A$ is correctable, Lemma \ref{lem:clean-region} implies that
there is a stabilizer generator $y$ such that $xy$ acts trivially on $A$. 
Furthermore, $y$ can be expressed as a product of local stabilizer generators, 
each supported on $M'=M\cup\partial_+M$. 
Thus $xy$ is a product of two factors, 
one supported on $M\setminus A$ and the other supported on $\partial_+M$. 
Because $\partial_-M\subseteq A$, no local stabilizer generator acts nontrivially on both $M\setminus A$ and $\partial_+M$;
therefore, each factor commutes with all stabilizer generators and hence is a logical operator.
Because $M\setminus A$ and $\partial_+M$ are both correctable, 
each factor is a trivial logical operator and therefore $xy$ is also trivial.
It follows that $x$ is trivial, a contradiction. 
\end{proof}

Now, if the interaction range is $w$ and $M$ is a hypercube with linear size $l$, 
we choose $A$ so that $M\setminus A$ is a hypercube with linear size $l-2(w-1)$, 
and we notice that $\partial_+M$ is contained in a hypercube with linear size $l+2(w-1)$. 
Thus both $M\setminus A$ and $\partial_+ M$ are correctable provided that
\begin{align*}
|\partial_+ M| &\le \left[l +2(w-1)\right]^D - l^D \nonumber\\
&\le 2(w-1)D\left[l +2(w-1)\right]^{D-1} < d.
\end{align*}
Reasoning as in the proof of Lemma \ref{lem:subsystem-hypercube}, we conclude that:

\begin{lem}
\label{lem:stabilizer-hypercube}
For a $D$-dimensional local stabilizer code with interaction range $w>1$ and distance $d>1 $, a hypercube with linear size $l$ is correctable if 
\begin{equation}\label{eq:hypercube-size-stabilizer}
2(w-1)D\left[l+ 2(w-1)\right]^{D-1} < d.
\end{equation}
\end{lem}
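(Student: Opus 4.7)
The plan is to prove Lemma \ref{lem:stabilizer-hypercube} by iterating the preceding extension lemma for stabilizer codes (the one stating that $M$ is correctable whenever $\partial_+ M$, $A$, and $M \setminus A$ are correctable with $\partial_- M \subseteq A \subseteq M$). The proof will be a straightforward induction on the linear size of the hypercube, growing by $2(w-1)$ at each step, with the size bound $(\ref{eq:hypercube-size-stabilizer})$ ensuring that all newly introduced ``shell'' regions are small enough to be automatically correctable.

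First I would establish the base case: since $d > 1$, every single site forms a correctable region (any single-site operator of weight less than $d$ cannot be a nontrivial logical operator), so a hypercube of linear size $l_0 = 1$ is correctable. For the induction step, assume a hypercube $N$ of linear size $\ell$ is correctable, and consider the hypercube $M$ of linear size $\ell + 2(w-1)$ obtained by padding $N$ by thickness $w-1$ on each side. Define $A := M \setminus N$, so that $M \setminus A = N$ is correctable by hypothesis, and note that $A$ contains $\partial_- M$ by construction. It remains to verify that $A$ and $\partial_+ M$ are correctable; since both are contained in hypercubic shells of thickness $w-1$ around $M$, I would bound their sizes by
\[
|A| \le \ell'^D - [\ell' - 2(w-1)]^D \le 2(w-1)D\,[\ell' + 2(w-1)]^{D-1},
\]
and likewise
\[
|\partial_+ M| \le [\ell' + 2(w-1)]^D - \ell'^D \le 2(w-1)D\,[\ell' + 2(w-1)]^{D-1},
\]
where $\ell' = \ell + 2(w-1)$ is the linear size of $M$. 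Provided the right-hand side is strictly less than $d$, both $A$ and $\partial_+ M$ have weight below the code distance and hence are correctable. Applying the preceding stabilizer extension lemma then gives correctability of $M$.

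Finally I would iterate: starting from $\ell_0 = 1$ and stepping up through $\ell_k = 1 + 2k(w-1)$, each promotion requires $2(w-1)D[\ell_k + 2(w-1)]^{D-1} < d$. Since this bound is monotonically increasing in $\ell_k$, if the hypothesis (\ref{eq:hypercube-size-stabilizer}) holds for the target linear size $l$ then it holds for every intermediate size, and the induction reaches a hypercube of linear size at least $l$, from which a hypercube of linear size exactly $l$ is correctable (any subhypercube of a correctable region is correctable).

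The only mildly delicate point, which would be the main source of care rather than difficulty, is matching the geometry: one must choose the padding direction so that $A$ genuinely contains $\partial_- M$ and $M \setminus A$ is itself a hypercube, and one must use the looser upper bound $[l + 2(w-1)]^{D-1}$ (as opposed to $l^{D-1}$ that would suffice for $A$ alone) so that the \emph{same} inequality governs both the internal shell $A$ and the external shell $\partial_+ M$. With that uniform choice the bookkeeping is immediate and the induction closes.
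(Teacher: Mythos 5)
Your proof is correct and takes essentially the same route as the paper: iterate the stabilizer extension lemma outward by shells of thickness $w-1$, noting that both the internal shell $A = M\setminus N$ and the external boundary $\partial_+ M$ are dominated by $2(w-1)D[l+2(w-1)]^{D-1}$, and use monotonicity of that bound to conclude that all intermediate steps of the induction are valid. The paper compresses this to a one-line invocation of ``reasoning as in the proof of Lemma~\ref{lem:subsystem-hypercube}''; you have simply written out that induction in full, including the observation that the single uniform bound $[l+2(w-1)]^{D-1}$ covers both shells.
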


\noindent To ensure that the hypercube $M$ is correctable, it suffices for its $(w-1)$-thickened boundary, rather than its $\left[2(w-1)\right]$-thickened boundary, to be smaller than the code distance.

Now we are ready to prove our first trade-off theorem.

\begin{figure}
\centering
\includegraphics[width=0.42\textwidth]{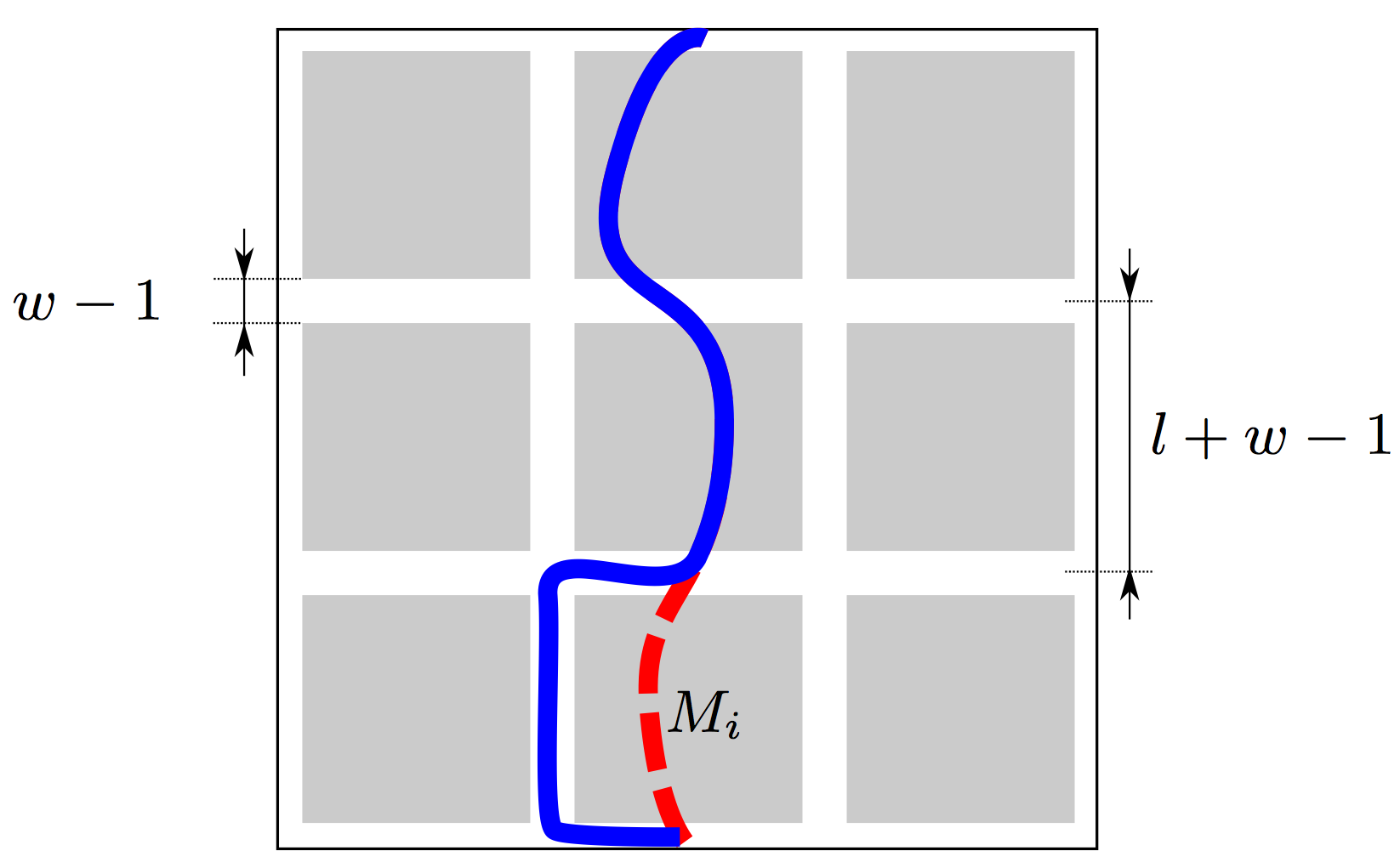}
\caption{Lattice covering used in the proof of Theorem 1, shown in two dimensions. 
Each gray square is $l\times l$ and the white gap between squares has width $w-1$. 
The solid blue curve represents the support of a nontrivial logical operator; 
because the square $M_i$ is correctable, this square can be ``cleaned.'' 
We can find an equivalent logical operator supported on $M_i^c$, the complement of $M_i$. 
When all squares are cleaned, the logical operator is supported on the narrow strips between the squares.}
\label{fig:cleaning}
\end{figure}

\begin{theorem}\emph{(Trade-off theorem for subsystem codes)}
For a local subsystem code in $D\ge 2$ dimensions with interaction range $w>1$ and distance $d\gg w$, 
defined on a hypercubic lattice with linear size $L$, 
every dressed logical operator is equivalent to an operator with weight $\tilde d$ satisfying
\begin{equation}\label{eq:tradeoff-bound}
  \tilde d {d}^{1/(D-1)} < c L^D,
\end{equation}
where $c$ is a constant depending on $w$ and $D$.
\label{thm:subsystem-tradeoff}
\end{theorem}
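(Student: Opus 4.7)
The plan is to exhibit, for any dressed logical operator $x$, an equivalent representative supported only on thin ``seams'' dividing $\Lambda$ into correctable hypercubic tiles, and then to bound the volume of those seams; this is the strategy illustrated in Figure~\ref{fig:cleaning}.

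First, I choose a tile size $l$ so that each hypercube of linear size $l$ is correctable. By Lemma~\ref{lem:subsystem-hypercube} it suffices that $4(w-1)Dl^{D-1}<d$, so I set $l=\lfloor c_1 d^{1/(D-1)}\rfloor$ for a constant $c_1=c_1(w,D)>0$ chosen to make the inequality strict; the hypothesis $d\gg w$ ensures $l\ge 1$. Partition $\Lambda$ into a regular grid of hypercubes $M_1,\dots,M_N$ of linear size $l$, leaving corridors of width $w-1$ between adjacent tiles, and set $R=\Lambda\setminus\bigsqcup_i M_i$. Because gauge generators have linear range $w$, any generator acting nontrivially on $M_i$ is contained in $M_i\cup \partial_+M_i\subseteq M_i\cup R$, and in particular is disjoint from every other $M_j$.

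Next I clean $x$ off the tiles one by one, maintaining the invariant that after step $i$ the representative $x^{(i)}$ is supported on $\bigcap_{j\le i} M_j^c$. At step $i$, Lemma~\ref{lem:clean-region} yields some $h\in G$ with $x^{(i-1)}h$ supported on $M_i^c$. Writing $h$ as a product of gauge generators and discarding those whose restriction to $M_i$ is trivial produces $g_i\in G$ supported in $M_i\cup \partial_+M_i$ with the same restriction to $M_i$ as $h$; this uses only that restriction to $M_i$ is an $\FF_2$-linear map on the Pauli space, so dropping summands that vanish under it preserves the total restriction. Set $x^{(i)}:=x^{(i-1)}g_i$: its restriction to $M_i$ vanishes, while by the disjointness above $g_i$ acts trivially on every previously cleaned tile $M_j$ with $j<i$, so the invariant persists. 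After $N$ steps, $\tilde x:=x^{(N)}$ is equivalent to $x$ and supported inside $R$.

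Finally I bound $|R|$. With $N\sim (L/(l+w-1))^D$ tiles, Bernoulli's inequality gives
\[
|R| \;\le\; L^D\!\left(1-\left(\tfrac{l}{l+w-1}\right)^{\!D}\right) \;\le\; \frac{D(w-1)L^D}{l+w-1} \;\le\; \frac{D(w-1)L^D}{l}.
\]
Substituting $l\asymp d^{1/(D-1)}$ yields $|R|\le cL^D/d^{1/(D-1)}$ for a constant $c=c(w,D)$, and since $\tilde x$ has weight at most $|R|$, the inequality $\tilde d\, d^{1/(D-1)}<cL^D$ follows; boundary contributions from the outer faces of $\Lambda$ are of lower order and may be absorbed into $c$. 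The subtle ingredient is the local cleaning argument in the third paragraph: this is what allows successive cleanings to be applied without later steps undoing earlier ones, and it is the main technical obstacle, reduced by the locality of the gauge generators together with the linearity of restriction on the symplectic Pauli space.
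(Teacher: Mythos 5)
Your proof is correct and follows essentially the same route as the paper: tile $\Lambda$ with correctable hypercubes of size $l\sim d^{1/(D-1)}$ separated by corridors of width $w-1$, use the locality of gauge generators to clean each tile without disturbing the others, and bound the corridor volume to get $\tilde d\lesssim (w-1)D L^D/l$. Your treatment of why the cleaning steps don't interfere (restricting $h$ to the gauge generators that act nontrivially on $M_i$, using $\FF_2$-linearity of restriction) is a slightly more explicit version of the paper's one-line observation that the $y_i$'s can be chosen with disjoint supports, and your Bernoulli-inequality estimate of $|R|$ reproduces the paper's bound $(w-1)DL^D/(l+w-1)$.
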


\begin{proof}
As shown in Fig.~\ref{fig:cleaning}, we fill the lattice with hypercubes, separated by distance $w-1$, 
such that each hypercube has linear size $l$ satisfying Eq.~\eqref{eq:hypercube-size}. 
(By ``distance'' we mean the number of sites in between --- e.g., we say that adjacent sites are ``distance zero'' apart.)
Thus no gauge generator acts nontrivially on more than one hypercube, 
and each hypercube is correctable by Lemma~\ref{lem:subsystem-hypercube}. 
Consider any nontrivial dressed logical operator $x$, 
and label the hypercubes $\{M_1, M_2, M_3, \dots\}$. 
By Lemma 3 there exists a gauge operator $y_i$ that ``cleans'' the logical operator in the hypercube $M_i$, 
i.e., such that $xy_i$ acts trivially in $M_i$. 
Furthermore, since no gauge generator acts nontrivially on more than one hypercube, 
we can choose $y_i$ so that it acts trivially in all other hypercubes. 
Taking the product of all the $y_i$'s we construct 
a gauge operator that cleans all hypercubes simultaneously; 
thus $\tilde x= x\prod_i y_i$ is equivalent to $x$ and supported on the complement of the union of hypercubes $M=\cup_i M_i$. 
Therefore, the weight $\tilde d$ of $\tilde x$ is upper bounded by $|M^c|$. 

The lattice is covered by hypercubes of linear size $l+(w-1)$, 
each centered about one of the $M_i$'s. 
There are $L^D/\left[l+(w-1)\right]^D$ such hypercubes in this union, 
each containing no more than $\left[l+(w-1)\right]^D - l^D \le (w-1)D\left[l+(w-1)\right]^{D-1}$ elements of $M^c$. 
Thus 
\[
\tilde d \le |M^c| 
\le (w-1)D\left[l+(w-1)\right]^{D-1}\frac{L^D}{\left[l+(w-1)\right]^D} 
= \frac{(w-1)D}{l+(w-1)}L^D.
\]
We optimize this upper bound on $\tilde d$ by choosing $l$ to be the largest integer 
such that a hypercube with linear size $l$ is known to be correctable,  
i.e., satisfying
\[
l < \left(\frac{d}{4(w-1)D}\right)^{1/(D-1)},
\]
thus obtaining Eq.~\eqref{eq:tradeoff-bound}.
Note that Eq.~\eqref{eq:tradeoff-bound} is trivial 
if $d$ is a constant independent of $L$, 
since the weight $\tilde d$ cannot be larger than $L^D$.
\end{proof}

\chapter{Algebraic formulation for translationally invariant codes}
\label{chap:alg-theory}

If an additive code is defined on a lattice with periodic boundary conditions
by a translationally invariant set $\{g_i\}$ of local generators,
then we can define a \emph{code Hamiltonian}
\[
 H = - \sum_i g_i
\]
whose ground space is identified with the code space.
Prototypical is the toric code model~\cite{Kitaev2003Fault-tolerant}.
If the stabilizer generators are given as translations of a finite collection of Pauli operators,
we have a family of code Hamiltonians parameterized by the system size or boundary conditions.
In this chapter we present a framework to study such translationally invariant code Hamiltonians.
We will be particularly interested in the \emph{phases of matter} represented by the code Hamiltonians.
Hence, we should allow local unitary transformations and deformation of Hamiltonians 
as long as the deformation does not close the energy gap between the ground and first excited state,
and we study properties of Hamiltonians that are invariant under these transformations.

If one does not make use of the translation structure, 
but insists on the use of usual symplectic vector space description,
then one should deal with infinitely many qubits and the generating matrix of infinite size.
However, the translation symmetry tells us 
that there is only a finite amount of data describing the family of code Hamiltonians.
It is certainly uneconomical to study a general infinite matrix.
We must look for a succinct description.
The starting observation is that translation symmetry is mostly well expressed
with the group algebra of the underlying translation group.
Fortunately, the lattice is an abelian group isomorphic to $\ZZ^D$ for some $D \ge 1$;
the group algebra is a commutative ring, or actually, a Laurent polynomial ring.
We will see that, inasmuch as the additive codes are described by null spaces in symplectic vector spaces,
the translation-invariant code Hamiltonians have corresponding algebraic descriptions
by certain submodules of finitely generated free modules.

We build an effective dictionary between lattice codes and commutative algebra in Section~\ref{sec:Pauli-space-on-group}.
The equivalence of phases of matter is re-expressed in an algebraic form in Section~\ref{sec:equivalent-hamiltonians}.
Perhaps the most important entry in this dictionary is that
the topological order condition can be expressed by 
the vanishing homology of a certain chain complex, presented in Section~\ref{sec:topological-order}.
As we will see in the next chapter,
the physical dimension is essentially the length of the chain complex,
which is a reminiscence of Hilbert syzygy theorem.
The ground-state subspace can be analyzed by studying an algebraic set defined by the chain complex.
In particular, the degeneracy, or the number of encoded qubits, can be approximated by counting points in the algebraic set,
for which some bounds are given in Section~\ref{sec:degeneracy}.
The last section~\ref{sec:fractal-operators} introduces fractal operators and 
establishes a precise algebraic description of topological charges in connection to the fractal operators.
Our discussion assumes some familiarity with commutative algebra.
In Appendix~\ref{app:algebra}, we include basic materials relevant to analysis of codes.

Note that in the classical coding theory the use of multivariate polynomial 
in multidimensional cyclic codes at least dates back to Imai~\cite{Imai1977TDC}.
Imai realized the importance of zero-locus of defining polynomials of the code,
which had been emphasized in the one-dimensional cyclic code~\cite{MacWilliamsSloane1977}.
See also \cite{SaintsHeegard1995MultidimensionalCyclicCodes,GueneriOezbudak2008} and references therein.
In the classical coding theory,
each lattice sites carries one bit $\{0,1\}$,
whereas in our quantum codes each lattice sites may have several qubits.
For this generalization	 we use modules instead of ideals.
Moreover, quantum additive codes require the nullity equation
$ \sigma^T \lambda \sigma = 0$.
(See Section~\ref{sec:additive-stabilizer-codes}.)
We decompose this equation into two parts $\epsilon = \sigma^T\lambda$ and $\sigma$,
and view them as connecting maps of a chain complex
$ G \xrightarrow{\sigma} P \xrightarrow{\epsilon} E $.
The chain complex is relevant only to quantum codes.

We will treat code Hamiltonians on sets of qubits, two-dimensional (two-level) local Hilbert spaces.
However, our language naturally allows an extension to prime-dimensional \emph{qudits}.
This is because we only use the fact that the set of coefficients $\FF_2$ is a field.
The generalization is achieved simply by replacing the ground field $\FF_2$ 
with $\FF_p$ for any prime number $p$.

\begin{table}[t]
\centering
\begin{tabular}{c|l}
\hline
$\FF_2$ & binary field $\{0,1\}$\\
$D$ & spatial dimension \\
$R$ & $\FF_2[x_1, x_1^{-1}, \ldots, x_D, x_D^{-1}]$ \\
$\bb_L$ & ideal $(x_1^L-1,\ldots,x_D^L -1)$\\
$q$ & number of qubits per site \\
$t$ & number of interaction types \\
$G$ & free $R$-module of the interaction labels (rank $t$) \\
$P$ & free $R$-module of Pauli operators (rank $2q$)\\
$E$ & free $R$-module of excitations (rank $t$)\\
$\sigma$ & $G \to P$, generating matrix or map for the stabilizer module \\
$\epsilon$ & $P \to E$, generating matrix or map for excitations \\
$r \mapsto \bar r$ & antipode map of the group algebra $R$.\\
$\dagger$ & transpose followed by antipode map \\
$\lambda_q$ & anti-symmetric $2q \times 2q$ matrix $\begin{pmatrix} 0 & \id \\ -\id & 0 \end{pmatrix}$ \\
\hline
\end{tabular}
\caption{Reserved symbols in Chapter~\ref{chap:alg-theory}. Any ring in this thesis is commutative with 1.}
\end{table}

\section{Algebraic structure of code Hamiltonians}

\subsection{Pauli space on a group}
\label{sec:Pauli-space-on-group}

Let $\Lambda$ be the index set of all qubits,
and suppose now that $\Lambda$ itself is an abelian group.
There is a natural action of $\Lambda$ on the Pauli group modulo phase factors
induced from the group action of $\Lambda$ on itself by multiplication.
For example, if $\Lambda = \mathbb{Z}$,
the action of $\Lambda$ is the translation on the one-dimensional chain of qubits.
If $R=\FF_2[\Lambda]$ is the group algebra with the multiplicative identity denoted by $1$,
the Pauli group modulo phase factors acquires a structure of an $R$-module.
We shall call it the {\em Pauli module}.
The Pauli module is free and has rank 2.

Let $r \mapsto \bar r$ be the antipode map of $R$, 
i.e., the $\FF_2$-linear map into itself 
such that each group element is mapped to its inverse.
Since $\Lambda$ is abelian, the antipode map is an algebra-automorphism.
Let the coefficient of $a \in R$ at $g \in \Lambda$ be denoted by $a_g$.
Hence, $a = \sum_{g \in \Lambda} a_g g$ for any $a \in R$.
One may write $a_g = (a \bar g)_1$.

Define
\[
 \tr(a) = a_1
\]
for any $a \in R$.
\begin{prop}
[\cite{CalderbankRainsShorEtAl1997Quantum}]
Let $(a,b), (c,d) \in R^2$ be two vectors representing Pauli operators $O_1, O_2$ up to phase factors:
\begin{align*}
 O_1 &= \left( \bigotimes_{g \in \Lambda} (\sigma_x^{(g)})^{a_g} \right) 
       \left( \bigotimes_{g \in \Lambda} (\sigma_z^{(g)})^{b_g} \right) ,
\\
 O_2 &= \left( \bigotimes_{g \in \Lambda} (\sigma_x^{(g)})^{c_g} \right) 
       \left( \bigotimes_{g \in \Lambda} (\sigma_z^{(g)})^{d_g} \right) 
\end{align*}
where $\sigma^{(g)}$ denotes the single-qubit Pauli operator at $g \in \Lambda$.
Then, $O_1$ and $O_2$ commute if and only if  
\[
 \tr \left(
\begin{pmatrix} \bar a & \bar b \end{pmatrix}
\begin{pmatrix} 0 & 1 \\ -1 & 0 \end{pmatrix}
\begin{pmatrix} c \\ d \end{pmatrix}
\right) = 0 .
\]
\label{prop:trace-formula-commutation-value}
\end{prop}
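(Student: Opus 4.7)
The plan is to verify the identity in two steps: first expand the commutation value of $O_1$ and $O_2$ as a sum over sites of $\Lambda$, and second show that the right-hand side equals the same sum via the standard duality pairing $\tr(\bar r s) = \sum_g r_g s_g$ on the group algebra $R$.

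First I would recall the single-qubit commutation rule: using $\sigma^x \sigma^z = -\sigma^z \sigma^x$, the product $X^{a_g} Z^{b_g}$ and $X^{c_g} Z^{d_g}$ at a site $g$ commute or anticommute according to whether $a_g d_g + b_g c_g$ is even or odd. Since Pauli operators at distinct sites always commute, the combined operators $O_1$ and $O_2$ commute if and only if the total parity
\[
\sum_{g \in \Lambda}\bigl(a_g d_g + b_g c_g\bigr) \pmod{2}
\]
vanishes. This sum is well-defined because $a,b,c,d$ have only finitely many nonzero coefficients.

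Next I would expand the right-hand side. Matrix multiplication gives
\[
\begin{pmatrix} \bar a & \bar b \end{pmatrix}\begin{pmatrix} 0 & 1 \\ -1 & 0 \end{pmatrix}\begin{pmatrix} c \\ d \end{pmatrix} = \bar a\, d - \bar b\, c = \bar a\, d + \bar b\, c
\]
in characteristic 2. So it remains to prove the pairing identity $\tr(\bar r s) = \sum_{g \in \Lambda} r_g s_g$ for any $r, s \in R$. Writing $\bar r = \sum_g r_g g^{-1}$ and $s = \sum_h s_h h$, one has $\bar r s = \sum_{g,h} r_g s_h g^{-1} h$, and the coefficient of the identity element $1 \in \Lambda$ is picked out exactly when $h = g$, giving $\sum_g r_g s_g$. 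Applying this with $(r,s) = (a,d)$ and $(r,s) = (b,c)$ and adding recovers the sum from the first step.

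The main obstacle is essentially cosmetic rather than conceptual: one must be careful about the roles of the antipode and the trace. The antipode is precisely what converts the natural ``multiplication'' pairing in the group algebra into the desired coordinate inner product; without $\bar r$ one would instead land on a convolution, not the pointwise sum. Once the identity $\tr(\bar r s) = \sum_g r_g s_g$ is in place, the proof is a direct comparison. I would also note briefly that everything works over $\FF_p$ for any prime $p$ provided one interprets ``anticommute'' through the omega-commutation constant, which justifies the remark preceding the proposition that the framework extends to qudits of prime dimension.
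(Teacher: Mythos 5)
Your proof is correct and takes essentially the same route as the paper's: reduce to the single-site commutation parity $a_g d_g + b_g c_g$, then match it against the trace formula by observing that $\tr(\bar r s)$ picks out the coefficient of the identity, hence equals $\sum_g r_g s_g$. The only cosmetic difference is that the paper verifies the formula for a single pair of monomials $(ng, mg)$, $(n'g, m'g)$ and then invokes bilinearity, whereas you expand both sides as sums over $\Lambda$ directly and isolate the pairing identity $\tr(\bar r s) = \sum_g r_g s_g$ as an explicit lemma; both are fine.
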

\begin{proof}
The commutation value of $(\sigma_x^{(g)})^{n} (\sigma_z^{(g)})^{m}$
and $(\sigma_x^{(g)})^{n'} (\sigma_z^{(g)})^{m'}$
is $nm' - mn' \in \FF_2$.
Viewed as pairs of group algebra elements,
$(\sigma_x^{(g)})^{n} (\sigma_z^{(g)})^{m}$
and $(\sigma_x^{(g)})^{n'} (\sigma_z^{(g)})^{m'}$
are $(n g, m g)$ and $(n' g, m' g)$, respectively.
We see that
\[
 nm' - mn' =  \tr \left(
\begin{pmatrix} n g^{-1} &  m g^{-1} \end{pmatrix}
\begin{pmatrix} 0 & 1 \\ -1 & 0 \end{pmatrix}
\begin{pmatrix} n' g \\ m' g \end{pmatrix}
\right) .
\]
Since any Pauli operator is a finite product of these,
the result follows by linearity.
\end{proof}

We wish to characterize a $\FF_2$-subspace $S$ of the Pauli module
invariant under the action of $\Lambda$, i.e., a submodule,
on which the commutation value is always zero.
As we will see in the next subsection,
this particular subspace yields a local Hamiltonian whose energy spectrum is exactly solvable,
which is the main object of this thesis.
Let $(a,b)$ be an element of $S \subseteq R^2 = (\FF_2[\Lambda])^2$.
For any $r \in R$, $(ra, rb)$ must be a member of $S$.
Demanding that the symplectic form on $S$ vanish,
by Proposition~\ref{prop:trace-formula-commutation-value} we have
\[
 \tr (ra \bar b - rb \bar a) = 0 .
\]
Since $r$ was arbitrary, we must have $a \bar b - b \bar a = 0$.%
\footnote{{A symmetric bilinear form $ \langle r, s \rangle = \tr(r \bar{s}) $ on $R$ is nondegenerate.}}
Let us denote $\begin{pmatrix}\bar a & \bar b \end{pmatrix}$ 
as $\begin{pmatrix} a \\ b \end{pmatrix}^\dagger$,
and write any element of $R^2$ as a $2 \times 1$ matrix.
We conclude that $S$ is a submodule of $R^2$ over $R$ generated by $s_1, \ldots, s_t$
such that any commutation value always vanishes, if and only if
\[
 s_i^\dagger \lambda_1 s_j = 0
\]
for all $i,j = 1,\ldots,t$.

The requirement that $\Lambda$ be a group might be too restrictive.
One may have a coarse group structure on $\Lambda$, the index set of all qubits.
We consider the case that the index set is a product of a finite set and a group.
By abuse of notation, we still write $\Lambda$ to denote the group part,
and insist that to each group element are associated $q$ qubits ($q \ge 1$).
Thus obtained Pauli module should now be identified with $R^{2q}$,
where $R = \FF_2[\Lambda]$ is the group algebra
that encodes the notion of translation.
We write an element $v$ of $R^{2q}$ by a $2q \times 1$ matrix,
and denote by $v^\dagger$ the transpose matrix of $v$ whose each entry is applied by the antipode map.
We always order the entries of $v$ such that the upper $q$ entries
describes the $\sigma_x$-part and the lower the $\sigma_z$-part.
Since the commutation value on $R^{2q}$ is the sum of commutation values on $R^2$,
we have the following:
If $S$ is a submodule of $R^{2q}$ over $R$ generated by $s_1, \ldots, s_t$,
the commutation value always vanishes on $S$, if and only if for all $i,j = 1, \ldots, t$
\[
 s_i^\dagger \lambda_q s_j = 0
\]
where $\lambda_q = \begin{pmatrix} 0 & \id_q \\ -\id_q & 0 \end{pmatrix}$ is a $2q \times 2q$ matrix.

Let us summarize our discussion so far.
\begin{prop}
On a set of qubits $\Lambda \times \{1,\ldots,q\}$ where $\Lambda$ is an abelian group,
the group of all Pauli operators of finite support up to phase factors,
form a free module $P=R^{2q}$ over the group algebra $R = \FF_2[\Lambda]$.
The commutation value
\[ 
 \langle a, b \rangle = \tr( a^\dagger \lambda_q b )
\]
for $a,b \in P$ is zero
if and only if the Pauli operators corresponding to $a$ and $b$ commute.
If $\sigma$ is a $2q \times t$ matrix whose columns generate a submodule $S \subseteq P$,
then the commutation value on $S$ always vanishes if and only if
\[
 \sigma^\dagger \lambda_q \sigma = 0 .
\]
\label{prop:pauli-module}
\end{prop}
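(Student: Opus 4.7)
The plan is to bootstrap from the single-qubit result (Proposition~\ref{prop:trace-formula-commutation-value}) to $q$ qubits per site, and then promote pointwise vanishing of the scalar commutation value to the matrix identity $\sigma^\dagger \lambda_q \sigma = 0$ by invoking nondegeneracy of the trace form on $R$.

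First I would establish the module structure. Since the index set $\Lambda \times \{1,\ldots,q\}$ is a disjoint union of $q$ copies of $\Lambda$, the preceding subsection identifies the Pauli operators supported on each copy with the free $R$-module $R^2$. A multi-qubit Pauli operator is a tensor product over qubit labels, so the whole Pauli group modulo phases becomes $P = (R^2)^{\oplus q} \cong R^{2q}$. With the convention that the first $q$ coordinates carry the $\sigma_x$-parts and the last $q$ the $\sigma_z$-parts, the direct sum of the single-qubit forms $\lambda_1$ at each label assembles exactly into the block matrix $\lambda_q = \bigl(\begin{smallmatrix} 0 & \id_q \\ -\id_q & 0\end{smallmatrix}\bigr)$.

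Next I would establish the commutation formula. Two Pauli operators commute precisely when an even number of their single-qubit tensor factors anticommute. Splitting the product along qubit labels and applying Proposition~\ref{prop:trace-formula-commutation-value} to each $R^2$-component, the total $\FF_2$-valued commutation value is $\sum_\alpha \tr\bigl((a^\alpha)^\dagger \lambda_1 b^\alpha\bigr)$, which collapses to $\tr(a^\dagger \lambda_q b)$ once the blocks are assembled under the chosen ordering. Hence $\langle a,b\rangle = 0$ exactly when the underlying Pauli operators commute.

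For the final claim, one direction is immediate: if $\sigma^\dagger \lambda_q \sigma = 0$, then any $a = \sigma u$ and $b = \sigma v$ with $u,v \in R^t$ satisfy $a^\dagger \lambda_q b = u^\dagger (\sigma^\dagger \lambda_q \sigma) v = 0$ in $R$, so its trace also vanishes. The substantive step is the converse. Assume commutation vanishes throughout $S$; specializing to $a = r s_i$ and $b = s_j$ and using that the antipode is an $\FF_2$-algebra automorphism yields $\tr\bigl(\bar r \cdot (s_i^\dagger \lambda_q s_j)\bigr) = 0$ for every $r \in R$ and every pair $(i,j)$. The main step, such as it is, is then a single observation: the $\FF_2$-bilinear pairing $\langle r, s\rangle_R = \tr(r \bar s)$ on $R$ is nondegenerate, because if $x = \sum_g x_g g \in R$ satisfies $\tr(\bar g x) = x_g = 0$ for all $g \in \Lambda$, then $x = 0$. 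Applying this with $x = s_i^\dagger \lambda_q s_j$ forces every entry of $\sigma^\dagger \lambda_q \sigma$ to vanish, completing the proof.
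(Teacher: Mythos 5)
Your proposal is correct and mirrors the paper's own argument in every essential respect: you identify the free module structure from the disjoint union $\Lambda \times \{1,\ldots,q\}$, extend the single-qubit commutation formula by additivity over qubit labels, and for the final claim use the submodule closure under the $R$-action together with nondegeneracy of the trace form $\tr(r\bar s)$ on $R$ (which the paper notes in a footnote) to promote the scalar vanishing to the matrix identity $\sigma^\dagger \lambda_q \sigma = 0$. The only superficial difference is that you phrase the converse by specializing to the generator pairs $(rs_i, s_j)$, whereas the paper states it for general elements of $S$, but the underlying step is identical.
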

Proposition~\ref{prop:trace-formula-commutation-value}~\cite{CalderbankRainsShorEtAl1997Quantum}
is a special case of Proposition~\ref{prop:pauli-module} when $\Lambda$ is a trivial group.
When $\Lambda \cong \ZZ$, a similar equation appears in quantum convolutional 
codes~\cite{OllivierTillich2004Convolutional}.

\subsection{Local Hamiltonians on groups}

Recall that we place $q$ qubits on each {\em site} of $\Lambda$.
The total system of the qubits is $\Lambda \times \{1,\ldots,q\}$.
\begin{defn}
Let
\[
 H = -\sum_{g \in \Lambda} h_{1,g} + \cdots + h_{t,g}
\]
be a local Hamiltonian consisted of Pauli operators that is (i)commuting, (ii) translation-invariant up to signs, and (iii) frustration-free.
We call $H$ a {\bf code Hamiltonian} (also known as {\bf stabilizer Hamiltonian}).
The {\bf stabilizer module} of $H$ is the submodule of the Pauli module $P$
generated by the images of $h_1,\ldots,h_t$ in $P$.
The number of {\bf interaction types} is $t$.
\end{defn}
The energy spectrum of the code Hamiltonian is trivial;
it is discrete and equally spaced.

\begin{example}
One-dimensional Ising model is the Hamiltonian
\[
 H = - \sum_{i \in \mathbb{Z}} \sigma_z^{(i)} \otimes \sigma_z^{(i+1)} .
\]
The lattice is the additive group $\mathbb{Z}$,
and the group algebra is $R=\FF_2[x,\bar x]$.
The Pauli module is $R^2$ and the stabilizer module $S$ is generated by
\[
\begin{pmatrix}
 0 \\
 1+x
\end{pmatrix} .
\]
One can view this as the matrix $\sigma$ of Proposition~\ref{prop:pauli-module}.
$H$ is commuting; $\sigma^\dagger \lambda_1 \sigma = 0$.
\hfill $\Diamond$
\label{eg:1d-ising}
\end{example}

\subsection{Excitations}

For a code Hamiltonian $H$,
an excited state is described by the terms in the Hamiltonian that have eigenvalues $-1$.
Each of the flipped terms is interpreted as an {\bf excitation}.
Although the actual set of all possible configurations of excitations
that are obtained by applying some operator to a ground state, may be quite restricted,
it shall be convenient to think of a larger set.
Let $E$ be the set of all configurations of finite number of excitations without asking physical relevance.
Since an excitation is by definition a flipped term in $H$,
the set $E$ is equal to the collection of all finite sets consisted of the terms in $H$.

If Pauli operators $U_1, U_2$ acting on a ground state creates excitations $e_1, e_2 \in E$,
their product $U_1 U_2$ creates excitations $(e_1 \cup e_2) \setminus (e_1 \cap e_2)$.
Here, we had to remove the intersection because each excitation is its own annihilator;
any term in the $H$ squares to the identity.
Exploiting this fact, we make $E$ into a vector space over $\FF_2$.
Namely, we take formal linear combinations of terms in $H$
with the coefficient $1 \in \FF_2$ when the terms has $-1$ eigenvalue,
and the coefficient $0 \in \FF_2$ when the term has $+1$ eigenvalue.
The symmetric difference is now expressed as the sum of two vectors $e_1 + e_2$ over $\FF_2$.
In view of Pauli group as a vector space,
$U_1 U_2$ is the sum of the two vectors $v_1 + v_2$ that, respectively, represent $U_1$ and $U_2$.
Therefore, the association $U_i \mapsto e_i$ induces 
a linear map from the Pauli space to the space of virtual excitations $E$.

The set of all excited states obeys the translation invariance as the code Hamiltonian $H$ does.
So, $E$ is a module over the group algebra $R=\FF_2[\Lambda]$.
The association $U_i \mapsto e_i$ clearly respects this translation structure.
Our discussion is summarized by saying that the excitations are described by an $R$-linear map
\[
 \epsilon : P \to E
\]
from the Pauli module $P$ to the {\bf module of virtual excitations $E$}.

As the excitation module is the collection of all finite sets of the terms in $H$,
we can speak of the {\bf module of generator labels} $G$,
which is equal to $E$ as an $R$-module.
$G$ is a free module of rank $t$ if there are $t$ types of interaction.
The matrix $\sigma$ introduced in Section~\ref{sec:Pauli-space-on-group} can be viewed as
\[
 \sigma : G \to P
\]
from the module of generator labels to the Pauli module.
\begin{prop}
If $\sigma$ is the generating map for the stabilizer module of a code Hamiltonian,
then 
\[
 \epsilon = \sigma^\dagger \lambda_q .
\]
\end{prop}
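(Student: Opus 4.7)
The plan is to unwind the definition of $\epsilon$ and reduce to the trace formula for the commutation value already established. First I would fix a Pauli operator represented by $v \in P$ and describe $\epsilon(v) \in E$ component by component. Under the identification $E \cong R^t$, the basis element indexed by $(i, g)$ corresponds to the $i$-th stabilizer generator translated to site $g$. By definition of $\epsilon$, the coefficient at $(i,g)$ is $1 \in \FF_2$ if and only if the Pauli operator $U_v$ anticommutes with that generator, because conjugation flips the corresponding term's eigenvalue exactly in that case.

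Next I would write the $i$-th translated generator as $g \cdot s_i$, where $s_i \in P$ is the $i$-th column of $\sigma$ and the action is simply multiplication by the algebra element $g \in R$. Applying Proposition~\ref{prop:pauli-module} (which specializes Proposition~\ref{prop:trace-formula-commutation-value} to the module setting), the commutation value is
\[
\langle g s_i,\, v \rangle \;=\; \tr\bigl((g s_i)^\dagger \lambda_q\, v\bigr) \;=\; \tr\bigl(\bar g \cdot s_i^\dagger \lambda_q v\bigr).
\]
For any $a \in R$ one has $\tr(\bar g \cdot a) = a_g$, the coefficient of $g$. Hence the $g$-coefficient of the $i$-th component of $\epsilon(v)$ coincides with the $g$-coefficient of $s_i^\dagger \lambda_q v$. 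Letting $g$ range over $\Lambda$, the full $i$-th component of $\epsilon(v)$ is $s_i^\dagger \lambda_q v$, which is exactly the $i$-th entry of $\sigma^\dagger \lambda_q v$.

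Assembling the components, $\epsilon(v) = \sigma^\dagger \lambda_q v$ for every $v \in P$. Both sides are manifestly $R$-linear, so this pointwise identity yields equality of maps $P \to E$.

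The proof is essentially a direct computation; the only real subtlety is bookkeeping. One must confirm that the isomorphism $E \cong R^t$ indexes translated generators in exactly the convention that makes the rows of $\sigma^\dagger \lambda_q$ compute commutation values against them, and that the geometric action of $g \in \Lambda$ on the stabilizer generator agrees with the algebraic multiplication by $g$ in $R^{2q}$. Once these identifications are pinned down, the trace formula does all the work.
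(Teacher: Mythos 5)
Your proof is correct and follows essentially the same route as the paper: both express the translated generator as $g \cdot s_i$, invoke the trace formula for the commutation value, and use the identity $\tr(\bar g\, a) = a_g$ to reassemble the $i$-th component of $\epsilon(v)$ as $s_i^\dagger \lambda_q v$. The paper's version is just a more compressed display of the same computation (summing $\sum_g g\,\tr(\bar g\, h_i^\dagger \lambda_q u)$ directly back into $h_i^\dagger \lambda_q u$), whereas you argue coefficient-by-coefficient; the content is identical.
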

The matrix $\epsilon$ can be viewed as a generalization of
the parity check matrix of the standard theory of classical or quantum error correcting codes%
~\cite{MacWilliamsSloane1977,CalderbankShor1996Good,Steane1996Multiple,Gottesman1996Saturating},
when a translation structure is given.

\begin{proof}
This is a simple corollary of Proposition~\ref{prop:pauli-module}.
Let $h_{i,g}$ be the terms in the Hamiltonian where $i = 1,\ldots, t$, and $ g \in \Lambda $.
In the Pauli module, they are expressed as $g h_i$ where $h_i$ is the $i$-th column of $\sigma$.
For any $u \in P$, let $\epsilon(u)_i$ be the $i$-th component of $\epsilon(u)$. By definition,
\[
 \epsilon( u )_i 
= \sum_{g \in \Lambda} g~ \tr \left( (g h_i)^\dagger \lambda_q u \right)
= \sum_{g \in \Lambda} g~ \tr \left( \bar g h_i^\dagger \lambda_q u  \right)
= h_i^\dagger \lambda_q u
\]
Thus, $h_i^\dagger \lambda_q$ is the $i$-th row of $\epsilon$.
\end{proof}

\begin{rem}
The commutativity condition in Proposition~\ref{prop:pauli-module} of the code Hamiltonian
is recast into the condition that
\[
 G \xrightarrow{\sigma} P \xrightarrow{\epsilon} E
\]
be a complex, i.e., $\epsilon \circ \sigma = 0$.
Equivalently,
\[
 \im \sigma \subseteq (\im \sigma)^\perp = \ker \epsilon
\]
where $\perp$ is with respect to the symplectic form.
\end{rem}

\section{Equivalent Hamiltonians}
\label{sec:equivalent-hamiltonians}

The stabilizer module entirely determines
the physical phase of the code Hamiltonian
in the following sense.
\begin{prop}
Let $H$ and $H'$ be code Hamiltonians on a system of qubits,
and suppose their stabilizer modules are the same.
Then, there exists a unitary
\[
 U = \bigotimes_{g \in \Lambda} U_{g}
\]
mapping the ground space of $H$ onto that of $H'$.
Moreover, there exist a continuous one-parameter family of gapped Hamiltonians
connecting $U H U^\dagger$ and $H'$.
\label{prop:associated-module-invariance}
\end{prop}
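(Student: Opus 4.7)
The plan is to construct a site-local unitary $U$ that aligns the two ground spaces, and then to exhibit an explicit gapped linear interpolation from $UHU^\dagger$ to $H'$.

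First I would analyze the ground spaces in terms of lifts of the common stabilizer module. Both $H$ and $H'$ are frustration-free, so their ground spaces are the joint $+1$-eigenspaces of the signed stabilizer groups $T$ and $T'$ they generate in the Pauli group. Since the stabilizer modules coincide, $T$ and $T'$ are two sections of the projection from the Pauli group (with $-I$ absent) onto $S\subseteq P$, and therefore they differ by a character $\chi : S \to \{\pm 1\}$.

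Next I would show that $\chi$ is realized by conjugation with an operator of the form $U = \bigotimes_g U_g$ where each $U_g$ is a Pauli acting on the $q$ qubits at site $g$. By the non-degeneracy of the commutation pairing in Proposition~\ref{prop:pauli-module}, any character of $S$ has the form $\chi(h) = (-1)^{\langle v, h\rangle}$ for some $v$ in the (possibly infinite-support) extension of the Pauli module; writing $v$ in its site decomposition gives the desired $U$. One then has $U h_{i,g} U^\dagger = \chi(h_{i,g}) h_{i,g}$ for every generator, so $UTU^\dagger = T'$ and $U$ carries the ground space of $H$ onto that of $H'$.

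Finally I would verify the gapped path via the linear interpolation
\[
 H(s) = (1-s)\,UHU^\dagger + s\,H', \qquad s \in [0,1].
\]
After conjugation, both $\{U h_{i,g} U^\dagger\}$ and $\{h'_{j,g}\}$ are generating sets of the same signed stabilizer group $T'$, so every term in $H(s)$ commutes with every other, making $H(s)$ a weighted code Hamiltonian whose ground space is the common $+1$-eigenspace of $T'$ for every $s$. For any Pauli $v$ producing a non-trivial excitation, the character $\chi_v = (-1)^{\langle v, \cdot\rangle}$ is non-trivial on $T'$; since both generating sets span $T'$, $v$ must anticommute with at least one generator from each set, so the excitation energy is at least $2(1-s) + 2s = 2$, uniformly in $s$.

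The hard part will be the realization of $\chi$ by a strictly site-local tensor product. While the symplectic non-degeneracy argument supplies an abstract $v$, one must check that the constraints it imposes on each $U_g$ are mutually consistent; this hinges on each generator $h_{i,g}$ having finite support together with $\chi$ being a genuine homomorphism, which follows from both $H$ and $H'$ being frustration-free lifts of the same stabilizer module $S$.
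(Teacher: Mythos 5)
Your proposal is correct and follows essentially the same route as the paper: both identify the sign discrepancy between the two signed stabilizer groups as a character (dual vector) on the common stabilizer module $S$, extend it to a linear functional on the whole Pauli module, realize it by a (possibly infinitely supported, but site-local) Pauli $U$ via the non-degeneracy of the symplectic form and the duality $\bigl(\bigoplus_I \FF_2^2\bigr)^* \cong \prod_I \FF_2^2$, and finish with an interpolation of Hamiltonians.

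One small difference is worth noting. The paper uses a two-segment path $H(1,0)\to H(1,1)\to H(0,1)$ with $H(u,u') = u\,UHU^\dagger + u'H'$, where the gap along each segment is immediate because one of the two commuting summands is always held with full coefficient; you instead take the direct linear interpolation $H(s) = (1-s)UHU^\dagger + sH'$ and then explicitly show the gap stays $\ge 2$ by observing that any Pauli $v$ producing an excitation must anticommute with at least one generator from each of the two generating sets of $T'$. Your gap argument is sound and in fact fills in a step the paper glosses over with ``it is clear.'' Net, the proof is the same; your interpolation and its justification are a minor, correct variation.
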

\begin{proof}
Let $\{ p_\alpha \}$ be a maximal set of $\FF_2$-linearly independent Pauli operators of finite support
that generates the common stabilizer module $S$.
$\{p_\alpha\}$ is not necessarily translation-invariant.
Any ground state $\ket \psi$ of $H$ is a common eigenspace of $\{p_\alpha\}$
with eigenvalues $p_\alpha \ket \psi = e_\alpha \ket \psi$, $e_\alpha = \pm 1$.
Similarly, the ground space of $H'$ gives the eigenvalues $e'_\alpha = \pm 1$ for each $p_\alpha$.

The abelian group generated by $\{ p_\alpha \}$ is precisely the vector space $S$,
and the assignment $p_\alpha \mapsto e_\alpha$ defines a dual vector on $S$.
If $U$ is a Pauli operator of possibly infinite support,
then $ p_\alpha U \ket \psi = e''_\alpha e_\alpha U \ket \psi$ for some $e''_\alpha = \pm 1$,
where $e''_\alpha$ is determined by the commutation relation between $U$ and $p_\alpha$.
Thus, the first statement follows if we can find $U$
such that the commutation value between $U$ and $p_\alpha$ is precisely $e''_\alpha$.
This is always possible since the dual space of the vector space $P$
is isomorphic to the direct product $\prod _{\Lambda \times \{1,\ldots, q\}} \FF_2^2$,
which is vector space isomorphic to the Pauli group of arbitrary support up to phase factors.%
\footnote{{If $V$ is a finite dimensional vector space over some field,
the dual vector space of $\bigoplus_I V$ 
is isomorphic to $\prod_{I} V$ where $I$ is an arbitrary index set.}}

Now, $U H U^\dagger$ and $H'$ have the same eigenspaces, and in particular, the same ground space.
Consider a continuous family of Hamiltonians
\[
H(u,u') = u U H U^\dagger + u' H'
\]
where $u,u' \in \mathbb{R}$.
It is clear that
\[
 H = H(1,0) \to H(1,1) \to H(0,1) = H'
\]
is a desired path.
\end{proof}

The criterion of Proposition~\ref{prop:associated-module-invariance}
to classify the physical phases is too narrow.
Physically meaningful universal properties should be invariant
under simple and local changes of the system. More concretely,
\begin{defn}
Two code Hamiltonians $H$ and $H'$ are {\bf equivalent} if
their stabilizer modules become the same under a finite composition
of symplectic transformations, coarse-graining, and tensoring ancillas.
\label{defn:equiv-H}
\end{defn}
We shall define the symplectic transformations,
the coarse-graining, and the tensoring ancillas shortly.

\subsection{Symplectic transformations}
\label{sec:symplectic-transformations}

\begin{defn}
A {\bf symplectic transformation} $T$ is an automorphism of the Pauli module
induced by a unitary operator on the system of qubits
such that 
\[
  T^\dagger \lambda_q T = \lambda_q
\]
where $\dagger$ is the transposition followed by the entry-wise antipode map.
\label{defn:symplectic-transformation}
\end{defn}
When the translation group is trivial these transformations are given by so-called {\bf Clifford operators}.
Compare Section~\ref{sec:additive-stabilizer-codes} and
see \cite[Chapter 15]{KitaevShenVyalyi2002CQC}.

Only the unitary operator on the physical Hilbert space 
that respects the translation can induce a symplectic transformation.
By definition, a symplectic transformation maps each local Pauli operator to a local Pauli operator, and 
preserves the commutation value for any pair of Pauli operators.
\begin{prop}
Any two unitary operators $U_1, U_2$ that induce the same symplectic transformation
differ by a Pauli operator (of possibly infinite support).
\end{prop}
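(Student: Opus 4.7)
The plan is to study $V = U_1 U_2^\dagger$ and show it is the product of a Pauli operator and a scalar. By hypothesis $V$ implements the identity on the Pauli module, so for every finite-support Pauli operator $O$, the operator $V O V^\dagger$ agrees with $O$ in $P$, i.e.\ $VOV^\dagger = c(O)\, O$ for some phase $c(O) \in \{\pm 1, \pm i\}$. Since $O$ is Hermitian and $V$ unitary, $VOV^\dagger$ is also Hermitian, which forces $c(O) = \pm 1$. Hence we may write $VOV^\dagger = (-1)^{\chi(O)} O$ with $\chi(O)\in\FF_2$.

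The next step is to check that $\chi: P\to\FF_2$ is $\FF_2$-linear. If $O_1,O_2$ are finite-support Paulis and $O_1O_2 = i^k O_3$ in the Pauli group (with $O_3$ another Pauli), then conjugation by $V$ preserves the scalar factor $i^k$, so computing $V(O_1O_2)V^\dagger$ in two ways gives $\chi(O_3) = \chi(O_1) + \chi(O_2) \pmod 2$, which is exactly the additivity statement on the abelianized Pauli space $P$.

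Now I invoke the same duality used in the proof of Proposition~\ref{prop:associated-module-invariance}: a linear functional on the $\FF_2$-vector space of finite-support Paulis (which is $\bigoplus_{\Lambda\times\{1,\dots,q\}} \FF_2^2$) is, via the direct product description of the dual, represented by an element of $\prod_{\Lambda\times\{1,\dots,q\}}\FF_2^2$, that is, by the class of a Pauli operator $P$ of arbitrary (possibly infinite) support. The concrete realization is through the symplectic form: one chooses $P$ so that for every single-qubit generator $O$ one has $\langle P,O\rangle_{\mathrm{symp}} = \chi(O)$, which is solvable qubit by qubit. Then $POP^\dagger = (-1)^{\chi(O)} O$ for every single-qubit Pauli, and hence, by multiplicativity, for every finite-support Pauli.

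Finally, $VP^\dagger$ commutes with every finite-support Pauli operator. Since the algebra generated by finite-support Pauli operators acts irreducibly on the physical Hilbert space, $VP^\dagger$ must be a scalar, and that scalar can be absorbed into $P$. Therefore $U_1 = P\, U_2$ for a Pauli operator $P$, as claimed. The main subtlety is the third step: making sure that the linear functional $\chi$, which in general does not have finite support, is actually realized by conjugation by a genuine (though possibly infinite-support) Pauli operator; this is precisely the content of the footnoted dual-space identification already used earlier in the chapter, and reusing it makes the argument routine.
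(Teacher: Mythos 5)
Your proof is correct and follows the same route as the paper: both reduce to showing that $U_1 U_2^\dagger$ (or its adjoint) induces the trivial symplectic transformation, hence acts on each finite-support Pauli by a sign, and both then realize this sign functional by a possibly-infinite-support Pauli via the duality between $\bigoplus_\Lambda \FF_2^2$ and $\prod_\Lambda \FF_2^2$ already invoked in the proof of Proposition~\ref{prop:associated-module-invariance}, concluding with an irreducibility (completeness of Paulis) argument. The only difference is that you spell out the $\FF_2$-linearity and Hermiticity checks that the paper leaves implicit by citing the earlier proposition.
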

If the translation group is trivial,
the proposition reduces to Theorem~15.6 of \cite{KitaevShenVyalyi2002CQC}
\begin{proof}
The symplectic transformation induced by $U = U_1^\dagger U_2$ is the identity.
Hence, $U$ maps each single-qubit Pauli operator $\sigma_{x,z}^{(g,i)}$ to $\pm \sigma_{x,z}^{(g,i)}$.
By the argument as in the proof of Proposition~\ref{prop:associated-module-invariance},
there exists a Pauli operator $O$ of possibly infinite support
that acts the same as $U$ on the system of qubits.
Since Pauli operators form a basis of the operator algebra of qubits,
we have $O=U$.
\end{proof}

The effect of a symplectic transformation on the generating map $\sigma$
 is a matrix multiplication on the left.
\[
 \sigma \to U \sigma
\]
For example, the following is induced
by uniform Hadamard, controlled-Phase, and controlled-NOT gates.
For notational clarity,
define $E_{i,j}(a)\ (i \neq j)$ as the row-addition elementary $2q \times 2q$ matrix
\[
 \left[ E_{i,j}(a) \right]_{\mu \nu} = \delta_{\mu \nu} + \delta_{\mu i} \delta_{\nu j} a
\]
where $\delta_{\mu \nu}$ is the Kronecker delta and $a \in R = \FF_2[\Lambda]$.
Recall that we order the components of $P$ such that the first half components are for $\sigma_x$-part, 
and the second half components are for $\sigma_z$-part.
\begin{defn}
The following are {\bf elementary symplectic transformations}:
\begin{itemize}
 \item (Hadamard) $E_{i,i+q}(-1) E_{i+q,i}(1) E_{i,i+q}(-1)$ where $1 \le i \le q$,
 \item (controlled-Phase) $E_{i+q,i}(f)$ where $f = \bar f$ and $1 \le i \le q$,
 \item (controlled-NOT) $E_{i,j}(a) E_{j+q,i+q}(-\bar a)$ where $1 \le i \ne j \le q$.
\end{itemize}
\end{defn}
For the case of a trivial translation group,
these transformations explicitly appear in \cite{CalderbankRainsShorEtAl1997Quantum}
and \cite[Chapter~15]{KitaevShenVyalyi2002CQC}.
The one-dimensional case appears in the context of
quantum convolutional codes~\cite{GrasslRotteler2006Convolutional}.

Recall that the Hadamard gate is a unitary transformation on a qubit given by
\[
U_H = \frac{1}{\sqrt 2}
 \begin{pmatrix}
  1 & 1 \\
  1 & -1
 \end{pmatrix}
\]
with respect to basis $\{ \ket 0 , \ket 1 \}$.
At operator level,
\[
 U_H X U_H^\dagger = Z, \quad U_H Z U_H^\dagger = X 
\]
where $X$ and $Z$ are the Pauli matrices $\sigma_x$ and $\sigma_z$, respectively.
Thus, the application of Hadamard gate on every $i$-th qubit of each site of $\Lambda$
swaps the corresponding $X$ and $Z$ components of $P$.

The controlled phase gate is a two-qubit unitary operator whose matrix is
\[
U_P =
 \begin{pmatrix}
  1 & 0 & 0 & 0 \\
  0 & 1 & 0 & 0 \\
  0 & 0 & 1 & 0 \\
  0 & 0 & 0 & -1
 \end{pmatrix}
\]
with respect to basis $\{ \ket{00}, \ket{01}, \ket{10}, \ket{11} \}$.
At operator level,
\begin{align*}
U_P (X \otimes I ) U_P^\dagger = X \otimes Z, & &
U_P (Z \otimes I ) U_P^\dagger = Z \otimes I, \\
U_P (I \otimes X ) U_P^\dagger = Z \otimes X, & &
U_P (I \otimes Z ) U_P^\dagger = I \otimes Z.
\end{align*}
Note that since $U_P$ is diagonal, any two $U_P$ on different pairs of qubits commute.
Let $(g,i)$ denote the $i$-th qubit at $g \in \Lambda$.
The uniform application
\[
U^{(i)}_g = \prod_{h \in \Lambda} U_P( (h,i), (h+g,i) )
\]
of $U_P$ throughout the lattice $\Lambda$
such that each $U_P( (h,i), (h+g,i) )$ acts on the pair of qubits $(h,i)$ and $(h+g,i)$
is well-defined. From the operator level calculation of $U_P$, we see that $U^{(i)}_g$
induces
\[
P \ni (\ldots,x_i,\ldots, z_i, \ldots) \mapsto ( \ldots, x_i,\ldots, z_i + (g+ \bar g)x_i, \ldots ) \in P
\]
on the Pauli module,
which is represented as $E_{i+q,i}(g+\bar g)$. 
The composition
\[
 U^{(i)}_{g_1} U^{(i)}_{g_2} \cdots U^{(i)}_{g_n}
\]
of finitely many controlled-Phase gates $U^{(i)}_g$ with different $g$
is represented as $E_{i+q,i}(f)$ where $f = \bar f = \sum_{k=1}^{n} g_k + \bar g_k$.
The single-qubit phase gate
\[
 \begin{pmatrix}
  1 & 0 \\
  0 & i
 \end{pmatrix}
\]
maps $X \leftrightarrow Y$ and $Z \mapsto Z$. On the Pauli module $P$, it is
\[
P \ni (\ldots,x_i,\ldots, z_i, \ldots)^T \mapsto ( \ldots, x_i,\ldots, z_i + x_i, \ldots )^T \in P .
\]
which is $E_{i+q,i}(1)$.
Note that any $f \in R$ such that $f = \bar f$ is always of form
$f = \sum g_k + \bar g_k$ or $f = 1 + \sum g_k + \bar g_k$
where $g_k$ are monomials.
Thus, the Phase gate and the controlled-Phase gate induce
transformations $E_{i+q,i}(f)$ where $f = \bar f$. 

The controlled-NOT gate is a two-qubit unitary operator whose matrix is
\[
U_N =
 \begin{pmatrix}
  1 & 0 & 0 & 0 \\
  0 & 1 & 0 & 0 \\
  0 & 0 & 0 & 1 \\
  0 & 0 & 1 & 0
 \end{pmatrix}
\]
with respect to basis $\{ \ket{00}, \ket{01}, \ket{10}, \ket{11} \}$.
That is, it flips the {\em target} qubit conditioned on the {\em control} qubit.
At operator level,
\begin{align*}
U_N (X \otimes I ) U_N^\dagger = X \otimes X, & &
U_N (Z \otimes I ) U_N^\dagger = Z \otimes I, \\
U_N (I \otimes X ) U_N^\dagger = I \otimes X, & &
U_N (I \otimes Z ) U_N^\dagger = Z \otimes Z.
\end{align*}
If $i < j$, the uniform application
\[
U^{(i,j)}_g = \bigotimes_{h \in \Lambda} U_P( (h,i), (h+g,j) )
\]
such that each $U_N( (h,i), (h+g,j) )$ acts on the pair of qubits $(h,i)$ and $(h+g,j)$
with one at $(h,i)$ being the control
induces
\begin{align*}
P \ni & (\ldots,x_i,\ldots,x_j,\ldots, z_i, \ldots, z_j, \ldots)^T \\
& \mapsto ( \ldots, x_i, \ldots, x_j + g x_i, \ldots, z_i + \bar g z_j, \ldots, z_j, \ldots )^T \in P .
\end{align*}
Thus, any finite composition of controlled-NOT gates with various $g$ is of form $E_{i,j}(a) E_{j+q,i+q}(\bar a)$.
It might be useful to note that the controlled-NOT and the Hadamard combined,
induces a symplectic transformation
\begin{itemize}
\item (controlled-NOT-Hadamard) $E_{i+q,j}(a) E_{j+q,i}(\bar a)$ where $a \in R$ and $ 1 \le i \ne j \le q$.
\end{itemize}

Remark that an arbitrary row operation on the upper $q$ components
can be compensated by a suitable row operation on the lower $q$ components
so as to be a symplectic transformation.

\subsection{Coarse-graining}

Not all unitary operators conform with the lattice translation.
In Example~\ref{eg:1d-ising} the lattice translation has period 1.
Then, for example, the Hadamard gate on every second qubit 
does not respect this translation structure;
it only respects a coarse version of the original translation.
We need to shrink the translation group to treat such unitary operators.

Let $\Lambda$ be the original translation group of the lattice with $q$ qubits per site,
and $\Lambda'$ be its subgroup of finite index: $|\Lambda/\Lambda'| = c < \infty$.
The total set of qubits $\Lambda \times \{1,\ldots,q\}$ is set-theoretically the same as
$\Lambda' \times \{ 1, \ldots, c \} \times \{1,\ldots,q\} = \Lambda' \times \{ 1, \ldots, cq\}$.
We take $\Lambda'$ as our new translation group under coarse-graining.
The Pauli group modulo phase factors remains the same as a $\FF_2$-vector space
for it depends only on the total index set of qubits.
We shall say that the system is {\bf coarse-grained by $R'=\FF_2[\Lambda']$}
if we restrict the scalar ring $R$ to $R'$ for all modules pertaining to the system.

For example, suppose $\Lambda = \mathbb{Z}^2$,
so the original base ring is $R = \FF_2[x,y,\bar x,\bar y]$.
If we coarse-grain by $R' = \FF_2[x',y', \bar x', \bar y']$ where $x' = x^2, y' = y^2$,
we are taking the sites $1,x,y,xy$ of the original lattice as a single new site.

Abstractly, the original translation group algebra $R$ is a finitely generated free module
over the coarse translation group algebra $R'$.
Thus, the coarse-graining can be regarded as an exact functor from the category of $R$-modules
to the category of $R'$-modules.

The one-dimensional case appears in the context of 
quantum convolutional codes~\cite{WildeBrun2010Convolutional}.

\subsection{Tensoring ancillas}

We have considered possible transformations
on the stabilizer modules of code Hamiltonians,
and kept the underlying index set of qubits invariant.
It is quite natural to allow tensoring ancilla qubits in trivial states.
In terms of the stabilizer module $S \subseteq P=R^{2q}$,
it amounts to embed $S$ into the larger module $R^{2q'}$ where $q' > q$.
Concretely,
let $\sigma = \begin{pmatrix} \sigma_X \\ \sigma_Z \end{pmatrix}$
be the generating matrix of $S$ as in Proposition~\ref{prop:pauli-module}.
By {\bf tensoring ancilla}, we embed $S$ as
\[
\begin{pmatrix} \sigma_X \\ \sigma_Z \end{pmatrix}
 \to 
\begin{pmatrix}
 \sigma_X & 0 \\
 0        & 0 \\
 \sigma_Z & 0 \\
 0        & 1
\end{pmatrix} .
\]
This amounts to taking the direct sum of the original complex
\[
 G \xrightarrow{\sigma} P \xrightarrow{\epsilon} E
\]
and the trivial complex
\[
0
\to
 R
 \xrightarrow{ \begin{pmatrix} 0 \\ 1 \end{pmatrix} }
 R^2
 \xrightarrow{ \begin{pmatrix} 1 & 0 \end{pmatrix} }
 R
\to 0
\]
to form
\[
  G \oplus R \xrightarrow{} P \oplus R^2 \xrightarrow{} E \oplus R .
\]

\section{Topological order}
\label{sec:topological-order}
From now on we assume that $\Lambda$ is isomorphic to $\mathbb{Z}^D$ as an additive group.
$D$ shall be called the {\bf spatial dimension} of $\Lambda$.

\begin{defn}
Let $\sigma: G \to P$ be the generating map for the stabilizer module of
a code Hamiltonian $H$. We say $H$ is {\bf exact}
if $(\im \sigma )^\perp = \im \sigma$, or equivalently
\[
 G \xrightarrow{\sigma} P \xrightarrow{\epsilon=\sigma^\dagger \lambda_q} E
\]
is exact, i.e., $\ker \epsilon = \im \sigma$.
\end{defn}
\noindent It follows that the exactness condition is
a property of the equivalence class of code Hamiltonians in the sense of Definition~\ref{defn:equiv-H}.

By imposing periodic boundary conditions,
a translation-invariant Hamiltonian yields
a family of Hamiltonians $\{ H(L) \}$ defined on a finite system consisted of $L^D$ sites.
One might be concerned that some $H(L)$ would be frustrated.
We intentionally exclude such a situation.
The frustration might indeed occur, but it can easily be resolved
by choosing the signs of terms in the Hamiltonian.
In this way, one might lose the translation invariance in a strict sense.
However, we retain the physical phase regardless of the sign choice
because different sign choices are related by a Pauli operator acting on
the whole system which is a product unitary operator.
Hence, the entanglement property of the ground state 
and the all properties of excitations do not change.

\begin{defn}
Let $H(L)$ be Hamiltonians on a finite system of linear size $L$
in $D$-dimensional physical space,
and $\Pi_L$ be the corresponding ground space projector.
$H(L)$ is called {\bf topologically ordered} if
for any $O$ supported inside a hypercube of size $(L/2)^D$
one has
\begin{equation}
 \Pi_L O \Pi_L \propto \Pi_L .
\label{eq:tqo}
\end{equation}
\label{defn:tqo}
\end{defn}
This means that no local operator is capable of distinguishing different ground states.
This condition is trivially satisfied if $H(L)$ has a unique ground state.
A technical condition that is used in the proof of the stability of topological order
against small perturbations is the following `local topological order'
condition~\cite{MichalakisPytel2011stability,
BravyiHastingsMichalakis2010stability,
BravyiHastings2011short}.
We say a {\bf diamond region $A(r)$ of radius $r$ at $o \in \ZZ^D$} for the set
\[
 A(r)_o = \left\{ (i_1,\ldots,i_D) + o \in \ZZ^D ~\middle| \sum_\mu |i_\mu| \le r \right\} .
\]
\begin{defn}
Let $H(L)$ be code Hamiltonians on a finite system of linear size $L$
in $D$-dimensional physical space.
For any diamond region $A=A(r)$ of radius $r$,
let $\Pi_A$ be the projector onto the common eigenspace of the most negative eigenvalues of terms
in the Hamiltonian $H(L)$ that are supported in $A$.
For $b > 0$, denote by $A^b$ the distance $b$ neighborhood of $A$.
$H(L)$ is called {\bf locally topologically ordered} if
there exists a constant $b > 0$ such that
for any operator $O$ supported on a diamond region $A$ of radius $r < L/2$
one has
\begin{equation}
 \Pi_{A^b} O \Pi_{A^b} \propto \Pi_{A^b} .
\label{eq:local-tqo}
\end{equation}
\label{defn:local-tqo}
\end{defn}
Since any operator is a $\mathbb{C}$-linear combination of Pauli operators,
if Eq.~\eqref{eq:tqo},\eqref{eq:local-tqo} are satisfied for Pauli operators,
then the (local) topological order condition follows.
If a Pauli operator $O$ is anticommuting with a term in a code Hamiltonian $H(L)$,
The left-hand side of Eq.~\eqref{eq:tqo},\eqref{eq:local-tqo} are identically zero.
In this case, there is nothing to be checked.
If $O$ acting on $A$ is commuting with every term in $H(L)$ supported inside $A^b$,
Eq.~\eqref{eq:tqo} demands that it act as identity on the ground space,
i.e., $O$ must be a product of terms in $H(L)$ up to $\pm i,\pm 1$.
Eq.~\eqref{eq:local-tqo} further demands that
$O$ must be a product of terms in $H(L)$ supported inside $A^b$ up to $\pm i,\pm 1$.

\begin{lem}
A code Hamiltonian $H$ is exact
if and only if $H(L)$ is locally topologically ordered for
all sufficiently large $L$.
\label{lem:local-tqo=exact}
\end{lem}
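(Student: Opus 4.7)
The claim is an equivalence. In each direction it suffices to test the relevant condition on Pauli operators, since all operators are $\mathbb{C}$-linear combinations of these.

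For the implication local topological order $\Rightarrow$ exactness, since $\epsilon\circ\sigma = 0$ gives $\im\sigma\subseteq\ker\epsilon$ for free, only the reverse inclusion is in question. Let $O$ be a Pauli operator of finite support on the infinite lattice with $\epsilon(O)=0$. Choose $L$ so large that $O$ is supported in some diamond $A$ of radius $r<L/2$ with $A^b$ embedding into the $L$-torus without wrapping. Then the stabilizer generators of $H(L)$ that meet $A^b$ coincide with their infinite-lattice counterparts, and $O$ commutes with all of them. The local topological order of $H(L)$ forces $O$ to equal, up to a sign, a product of stabilizer generators supported in $A^b$, which exhibits $O\in\im\sigma$.

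For the reverse implication, exactness $\Rightarrow$ local topological order, the central ingredient is a \emph{localization lemma}: there exists a constant $b_0$, depending only on $\sigma$, such that every Pauli operator $O$ of finite support contained in a diamond $A$ with $\epsilon(O)=0$ admits a preimage $g\in\sigma^{-1}(O)$ supported inside $A^{b_0}$. Granting this, set $b=\max(b_0,w-1)$, where $w$ is the interaction range. Consider any Pauli operator $O$ supported in $A$ which commutes with every stabilizer generator of $H(L)$ in $A^b$. Since $b\ge w-1$, every generator touching $A$ is supported in $A^b$, while those disjoint from $A$ commute with $O$ automatically, so $\epsilon(O)=0$. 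The localization lemma then writes $O$ as a product of stabilizer generators in $A^{b_0}\subseteq A^b$, giving $\Pi_{A^b}O\Pi_{A^b}=\pm\Pi_{A^b}\propto\Pi_{A^b}$, which is the desired local topological order for all sufficiently large $L$.

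The substantive obstacle is the localization lemma. Because $R=\FF_2[\Lambda]$ is Noetherian and $G$ is a finitely generated free $R$-module, the submodule $K=\ker\sigma$ is itself finitely generated; choose generators $k_1,\ldots,k_m$ of bounded support, say of diameter at most $d$. Starting from any preimage $g_0\in\sigma^{-1}(O)$, my plan is to iteratively subtract $R$-linear combinations of translates of the $k_j$ to trim the support of $g_0$ from the outside inward. The outermost contributions of $g_0$, being constrained by the vanishing of $\sigma(g_0)$ off $A$, locally constitute a syzygy and hence can be absorbed into translates of the $k_j$; after finitely many such reductions the support is pruned down to a uniform neighborhood $A^{b_0}$, where $b_0$ depends only on $m$ and $d$. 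This locality-preserving refinement of the bare equality $\ker\epsilon=\im\sigma$ is what distinguishes topological order at the algebraic level, and carrying it out rigorously (perhaps via a Gr\"obner-basis division in $R$) is the technical heart of the proof.
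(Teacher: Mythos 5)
Your treatment of the direction ``local topological order $\Rightarrow$ exactness'' agrees with the paper's argument, and you are right that the crux of the other direction is a localization lemma of exactly the form you state: every $v\in\ker\epsilon$ supported in a diamond $A$ has a preimage under $\sigma$ supported in a fixed-width thickening $A^{b_0}$.

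However, the mechanism you propose for the localization lemma has a genuine gap, and it is instructive to see why. You reduce an arbitrary preimage $g_0\in\sigma^{-1}(v)$ modulo $\ker\sigma$, subtracting translates of syzygy generators to prune the outer support. This plan is vacuous precisely when $\sigma$ is injective, so $\ker\sigma=0$ and there is nothing to subtract --- yet the bound on the preimage's support is still a nontrivial claim. The 2D toric code is exactly such a case: the matrix $\sigma_{\text{2D-toric}}$ has linearly independent columns over the domain $R$, so each $v\in\im\sigma$ has a \emph{unique} preimage, and one still has to show that a local $v$ has a local preimage. Your plan gives no purchase here. Moreover, even when $\ker\sigma\neq 0$, the assertion that ``the outermost contributions of $g_0$ locally constitute a syzygy'' is not justified: vanishing of $\sigma(g_0)$ outside $A$ constrains the image near the boundary of $\mathrm{supp}(g_0)$, but the outer coefficients of $g_0$ themselves need not lie in $\ker\sigma$; their contribution can be cancelled by inner coefficients unless you enforce a gap, which you cannot if the support is contiguous. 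The paper avoids all of this by doing the division on $v$ rather than on $g_0$: compute a Gr\"obner basis for $\im\sigma$ once, then divide $v$ by it. Each reduction step replaces $v$ by $v-c\,g$ where $c$ is a monomial of absolute degree at most that of $v$, and the leading term strictly decreases; so the algorithm terminates with coefficients $c_i$ of absolute degree $\le w+r$ for some constant $w$ depending only on $\sigma$. Your parenthetical instinct ``perhaps via a Gr\"obner-basis division in $R$'' is the right tool, but it should be applied to $v$ against the columns of $\sigma$, not to $g_0$ against generators of $\ker\sigma$.
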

In order to see this, it will be important to use {\em Laurent polynomials}
to express elements of the group algebra 
$R = \FF_2[\mathbb{Z}^D] \cong \FF_2[x_1,x_1^{-1},\ldots,x_D,x_D^{-1}]$.
The reader might want to see \cite{Imai1977TDC,GueneriOezbudak2008} for classical multidimensional cyclic codes.
For example, 
\[
 x y^2 z^2 + x y^{-1} \quad \Longleftrightarrow \quad 1(1,2,2)+1(1,-1,0) .
\]
The sum of the absolute values of exponents of a monomial will be referred to as {\bf absolute degree}.
The absolute degree of a Laurent polynomial is defined to be the maximum absolute degree of its terms.
The degree measures the distance or size in the lattice.

The Laurent polynomial viewpoint enables us to apply Gr\"obner basis techniques.
The long division algorithm for polynomials in one variable yields an effective and efficient
test whether a given polynomial is divisible by another.
When two or more but finitely many variables are involved,
a more general question is how to test
whether a given polynomial is a member of an ideal.
For instance, $f=xy-1$ is a member of an ideal $J = (x-1,y-1)$
because $xy -1 = y(x-1) + (y-1)$.
But, $g=xy$ is not a member of $J$ because $g = y(x-1) + (y-1) + 1$ and the `remainder' 1 cannot be removed.
Here, the first term is obtained by looking at the initial term $xy$ of $f$
and comparing with the initial terms $x$ and $y$ of the generators of $J$.
While one tries to eliminate the initial term of $f$ and to eventually reach zero,
if one cannot reach zero as for $g$, then the membership question is answered negatively.

Systematically, an well-ordering on the monomials, i.e., a \emph{term order}, is defined
such that the order is preserved by multiplications.
And a set of generators $\{ g_i \}$ for the ideal is given with a special property 
that any element in the ideal has an initial term (leading term)
divisible by an initial term of some $g_i$.
A Gr\"obner basis is precisely such a generating set.
This notion generalizes to free modules over polynomial ring by refining the term order with the basis of the modules.
An example is as follows. Let
\[
 \sigma_1 = 
 \begin{pmatrix}
  \mathbf{x^2} - y \\
  x^2 + 1
 \end{pmatrix} \quad
 \sigma_2 = 
 \begin{pmatrix}
  1 \\
  \mathbf{y}
 \end{pmatrix}
\]
generate a submodule $M$ of $S^2$ where $S = \FF[x,y]$ is a polynomial ring.
They form a Gr\"obner basis, and the initial terms are marked as bold.
A member of $S^2$
\[
 \begin{pmatrix}
 x^2 + x^2 y - y^2 \\
 y+2 x^2 y
 \end{pmatrix}
\]
is in $M$ because the following ``division'' results in zero.
\[
  \begin{pmatrix}
 x^2+\mathbf{x^2 y}-y^2 \\
 y+2 x^2 y
 \end{pmatrix}
 \xrightarrow{-y \sigma_1}
 \begin{pmatrix}
 x^2 \\
 \mathbf{x^2 y}
 \end{pmatrix}
 \xrightarrow{-x^2 \sigma_2} 0
\]
A comprehensive material can be found in \cite[Chapter~15]{Eisenbud}.

The situation for Laurent polynomial ring is less discussed,
but is not too different.
A direct treatment is due to Pauer and Unterkircher~\cite{PauerUnterkircher1999}.
One introduces a well-order on monomials,
that is preserved by multiplications with respect to a so-called \emph{cone decomposition}.
An ideal $J$ over a Laurent polynomial ring can be thought of
as a collection of configurations of coefficient scalars written on the sites of the integral lattice $\ZZ^D$.
If we take a cone, say,
\[
 C = \{ (i_1,i_2,i_3) \in \ZZ^3 | i_1 \le 0, i_2 \ge 0, i_3 \ge 0 \} ,
\]
then $J_C = J \cap \FF[C]$ looks very similar to an ideal $I$ over a polynomial ring $\FF[x,y,z]$.
Concretely, $I$ can be obtained by applying $x^{-1} \mapsto x, y \mapsto y, z \mapsto z$ to $J_C$.
The initial terms of $J_C$ should be treated similarly as those in $I$.
This is where the cone decomposition plays a role.
The lattice $\ZZ^D$ decomposes into $2^D$ cones,
and the initial terms of $J$ is considered in each of the cones.
Correspondingly, a Gr\"obner basis is defined 
to generate the initial terms of a given module in each of the cones.
An intuitive picture for the division algorithm is 
to consider the support of a Laurent polynomial as a finite subset of $\ZZ^D$ around the origin (the least element of $\ZZ^D$),
and to eliminate outmost points so as to finally reach the origin.
If $m$ is a column matrix of Laurent polynomials,
each step in the division algorithm by a Gr\"obener basis $\{g\}$
replaces $m$ with $m' = m - c g$, where $c$ is a monomial,
such that the initial term of $m'$ is strictly smaller than that of $m$.
Note that the absolute degree of $c$ does not exceed that of $m$.%
\footnote{Strictly speaking, one can introduce a term order such that this is true.}

\begin{proof}[Proof of \ref{lem:local-tqo=exact}]
We have to show that 
if $v \in \ker \epsilon = \im \sigma$ is supported in the diamond of radius $r$ centered at the origin,
then $v$ can be expressed as a linear combination
\[
 v = \sum_i c_i \sigma_i
\]
of the columns $\sigma_i$ of $\sigma$ 
such that the coefficients $c_i \in R$ have absolute degree not exceeding $w+r$.
for some fixed $w$.
A Gr\"obner basis~\cite{PauerUnterkircher1999} is computed solely from the matrix $\sigma$,
and the division algorithm yields desired $c_i$.

Conversely,
suppose $v \in \ker \epsilon$. We have to show $v \in \im \sigma$.
Choose so large $L$ that the Pauli operator $O$ representing $v$ is contained
in a pyramid region far from the boundary.
The local topological order condition implies
that $O$ is a product of terms near the pyramid region.
Since this product expression is independent of the boundary,
we see $v \in \im \sigma$.
\end{proof}

The Buchsbaum-Eisenbud theorem~\cite{BuchsbaumEisenbud1973Exact}
below characterizes an exact sequence
from the properties of connecting maps.
(See also \cite[Theorem~20.9, Proposition~18.2]{Eisenbud},\cite[Chapter~6 Theorem~15]{Northcott}.)
A few notions should be recalled.
Let $\mathbf M$ be a matrix, not necessarily square, over a ring.
A minor is the determinant of a square submatrix of $\mathbf M$.
{\bf $k$-th determinantal ideal} $I_k(\mathbf M)$ is the ideal generated by all $k \times k$ minors of $\mathbf M$.
It is not hard to see that the determinantal ideal is invariant under any invertible
matrix multiplication on either side. 
The {\bf rank} of $\mathbf M$ is the largest $k$ such that $k$-th determinantal ideal is nonzero.
Thus, the rank of a matrix over an arbitrary ring is defined,
although the dimension of the image in general is not defined or is infinite.
The $0$-th determinantal ideal is taken to be the unit ideal by convention.
For a map $\phi$ between free modules,
we write $I(\phi)$ to denote the $k$-th determinantal
ideal of the matrix of $\phi$ where $k$ is the rank of that matrix.
Fitting Lemma~\cite[Corollary-Definition~20.4]{Eisenbud}
states that determinantal ideals only depend on $\coker \phi$.

The {\bf (Krull) dimension} of a ring is the supremum of lengths of chains of prime ideals.
Here, the length of a chain of prime ideals
\[
 \pp_0 \subsetneq \pp_1 \subsetneq \cdots \subsetneq \pp_n
\]
is defined to be $n$.
Most importantly, the dimension of $\FF[x_1,\ldots,x_n]$ is $n$ where $\FF$ is a field, as
\[
 (0) \subset (x_1) \subset (x_1,x_2) \subset \cdots \subset (x_1,\ldots,x_n) .
\]
Dimensions are in general very subtle,
but intuitively, it counts the number of independent `variables.'
Geometrically, a ring is a function space of a geometric space,
and the independent variables define a coordinate system on it.
So the Krull dimension correctly captures the intuitive dimension.
For instance, $y-x^2=0$ defines a parabola in a plane,
and the functions that vanish on the parabola form an ideal $(y-x^2) \subset \FF[x,y]$.
Thus, the function space is identified with $\FF[x,y]/(y-x^2) \cong \FF[x]$,
whose Krull dimension is, as expected, 1.

Facts we need are quite simple:
\begin{itemize}
 \item In a zero-dimensional ring, every prime ideal is maximal.
 \item $\dim R = \dim \FF_2[x_1^{\pm 1},\ldots,x_D^{\pm 1}] = D$
 \item When $I$ is an ideal of $R$, $\dim R/I + \codim I = D$.%
\footnote{
The {\bf codimension} or {\bf height} of a prime ideal $\pp$ is 
the supremum of the lengths of chains of prime ideals contained in $\pp$.
That is, the codimension of $\pp$ is the Krull dimension of the local ring $R_\pp$.
The codimension of an arbitrary ideal $I$ is the minimum of codimensions of primes that contain $I$.
If $S$ is an affine domain, i.e., 
a homomorphic image of a polynomial ring over a field with finitely many variables
such that $S$ has no zero-divisors,
it holds that $\codim I + \dim R/I = \dim S$~\cite[Chapter~13]{Eisenbud}.
}
\end{itemize}

We shall be dealing with three different kinds of `dimensions':
The first one is the spatial dimension $D$,
which has an obvious physical meaning.
The second one is the Krull dimension of a ring, just introduced.
The Krull dimension is upper bounded by the spatial dimension in any case.
The last one is the dimension of some module as a vector space.
Recall that all of our base ring contains a field --- $\FF_2$ for qubits.
The vector space dimension arises naturally
when we actually count the number of orthogonal ground states.
The dimension as a vector space will always be denoted with a subscript like $\dim_{\FF_2}$.

\begin{prop}
[\cite{BuchsbaumEisenbud1973Exact}%
\footnote{The original result is stronger than what is presented here.
It is stated with the \emph{depth}s of the determinantal ideals.}]
If a complex of free modules over a ring
\[
 0 \to F_n \xrightarrow{\phi_n} F_{n-1} \to \cdots \to F_1 \xrightarrow{\phi_1} F_0
\]
is exact, then
\begin{itemize}
\item $\rank F_k = \rank \phi_k + \rank \phi_{k+1}$ for $k=1,\ldots,n-1$
\item $\rank F_n = \rank \phi_n$.
\item $I(\phi_k)=(1)$ or else $\codim I(\phi_{k}) \ge k$ for $k=1,\ldots,n$.
\end{itemize}
\label{prop:exact-sequence}
\end{prop}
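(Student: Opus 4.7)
The proof splits into two parts: the rank identities (first two bullets) and the codimension bound (third bullet). For the rank identities I would argue by localization, since exactness and determinantal ideals both behave well under flat base change. Localize at a minimal prime $\pp$ of $R$: then $R_\pp$ is local with maximal ideal consisting of zero-divisors, and one reduces to essentially linear-algebraic manipulations on free modules. The base case is McCoy's theorem: because the complex begins with $0 \to F_n$, the map $\phi_n$ is injective, which forces its top determinantal ideal to have zero annihilator, so $\rank \phi_n$ equals the number of generators of $F_n$. For the inductive step, invert an $r_{k+1}\times r_{k+1}$ nonzero minor of $\phi_{k+1}$ and perform a suitable change of basis on $F_k$ and $F_{k+1}$ so that $\phi_{k+1}$ sends a free summand of $F_{k+1}$ isomorphically onto a free summand of $F_k$; the orthogonal complement then carries a shorter exact complex of free modules. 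The induction hypothesis applied to this truncation yields $\rank F_k = r_{k+1} + r_k$ locally, and since both sides are integers read off from nonzero determinantal ideals, the identity is global.

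For the codimension statement I argue by descending induction on $k$ and by contradiction. Suppose $I(\phi_k) \neq R$ and $\codim I(\phi_k) < k$; pick $\pp \supseteq I(\phi_k)$ with $\text{ht}(\pp) < k$ and localize at $\pp$. By the inductive hypothesis, $I(\phi_j) R_\pp = R_\pp$ for all $j > k$, so the splitting argument of the rank step lets me truncate the complex, down to an exact complex $0 \to F_k' \xrightarrow{\phi_k'} F_{k-1}' \to \cdots \to F_0$ over $R_\pp$ in which $\phi_k'$ is injective of rank $r_k$ and $I(\phi_k')$ equals the image of $I(\phi_k)$ in $R_\pp$. By McCoy applied to $\phi_k'$, this image contains a non-zero-divisor of $R_\pp$, so $I(\phi_k) R_\pp$ has grade at least $1$. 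To amplify this to grade $\geq k$, quotient $R_\pp$ by the chosen non-zero-divisor; the resulting complex remains exact, of the same form, with the same ranks, and the induction machinery applies again. After $k$ iterations one has built a regular sequence of length $k$ inside $I(\phi_k) R_\pp$, contradicting $\text{depth}(R_\pp) \leq \dim R_\pp = \text{ht}(\pp) < k$.

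The main obstacle is verifying that the quotient complex in the iteration step remains exact---this is essentially where the sharper grade version of the Buchsbaum--Eisenbud criterion is self-reinforcing, and it rests on the fact that quotienting by a non-zero-divisor in the top determinantal ideal does not introduce spurious cycles. Fitting's lemma ensures that $I(\phi_k)$ is intrinsic to $\coker \phi_k$, so no bookkeeping issues arise from base change, and the splittings propagate cleanly. A subtle point is that the argument requires the Noetherian hypothesis---implicit in the paper's definition of codimension via chains of primes---to guarantee both the existence of minimal primes for the localization step and the finiteness of regular sequences via $\text{depth} \leq \dim$ in the local ring $R_\pp$.
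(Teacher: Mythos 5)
You should note at the outset that the paper contains no proof of this proposition: it is quoted from Buchsbaum--Eisenbud (in a weakened form, with codimension in place of depth), with pointers to Eisenbud and Northcott, so your argument can only be compared with the classical one. Your plan does follow that classical route, but two steps are genuinely gapped as written. In the rank part, ``invert an $r_{k+1}\times r_{k+1}$ nonzero minor'' is not a legitimate move: after localizing at a minimal prime every non-unit of $R_\pp$ is nilpotent, so inverting a non-unit minor annihilates the ring, while inverting over $R$ lands you in some $R_t$ (possibly zero) where the ranks of the $\phi_j$ may drop, so nothing about ranks over $R$ follows. What actually produces the splitting is that $R_\pp$ has depth zero: McCoy applied to the injective map $(\phi_n)_\pp$ gives $\ann I_{\rank F_n}\bigl((\phi_n)_\pp\bigr)=0$, and since the maximal ideal of a depth-zero local ring is an associated prime, an ideal with zero annihilator is forced to be the unit ideal; hence a maximal minor of $(\phi_n)_\pp$ is \emph{already} a unit and one splits from the left end inductively, with no inversion. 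Moreover ``the identity is global'' conceals the harder half: localization only gives $\rank_{R_\pp}(\phi_k)_\pp\le\rank_R\phi_k$, so the split local complex yields $\rank F_k\le\rank\phi_k+\rank\phi_{k+1}$; to exclude strict inequality you must show the next determinantal ideal $I_{e_k+1}(\phi_k)$ vanishes (with $e_k$ the alternating sum of the $\rank F_i$), and that requires localizing at an associated prime of that ideal --- again of depth zero, where the split form kills it --- not at one fixed minimal prime.

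In the codimension part, the step you yourself call ``the main obstacle'' is asserted rather than proved, and as stated it is false: after killing a non-zero-divisor $x$ the complex does \emph{not} remain exact ``of the same form.'' Since $F_\bullet$ is a free resolution of $\coker\phi_1$ over $R_\pp$, tensoring with $R_\pp/(x)$ kills homology only in degrees $\ge 2$, because $\Tor_i(\coker\phi_1,R_\pp/(x))=0$ for $i\ge 2$ while $\Tor_1(\coker\phi_1,R_\pp/(x))=(0:_{\coker\phi_1}x)$ need not vanish; exactness at $F_1$ can genuinely be lost, and after $j$ successive quotients (re-truncating the bottom of the complex each time so that it is again a resolution of the new bottom cokernel) exactness is only guaranteed in degrees $\ge j+1$. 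The argument is saved by the bookkeeping you omit: extracting the next regular element from the image of $I(\phi_k)$ only needs exactness in degrees $\ge k$, and you take at most $k-1$ further quotients, so the creeping failure at the bottom never reaches degree $k$ and a length-$k$ regular sequence inside $I(\phi_k)R_\pp$ is still produced, contradicting $\mathrm{depth}\,R_\pp\le\dim R_\pp<k$. With these two repairs your proof becomes essentially the Buchsbaum--Eisenbud/Northcott argument that the paper cites; without them, the rank globalization and the ``quotienting does not introduce spurious cycles'' claim are exactly the points that carry the content of the theorem.
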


\begin{rem}
For an exact code Hamiltonian,
we have an exact sequence $G \xrightarrow{\sigma} P \xrightarrow{\epsilon=\sigma^\dagger \lambda} E$.
As we will see in Lemma~\ref{lem:coker-epsilon-resolution-length-D},
$\coker \sigma$ has a finite free resolution,
and we may apply the Proposition~\ref{prop:exact-sequence}.
Since $\overline{ I_k(\sigma) } = I_k(\epsilon)$ for any $k \ge 0$, we have
\[
 2q = \rank P = \rank \sigma + \rank \epsilon = 2~ \rank \sigma.
\]
The size $2q \times t$ of the matrix $\sigma$ satisfies $t \ge q$.
If $I_q(\sigma) \ne R$, then $\codim I_q(\sigma) \ge 2$.
\label{rem:rank-sigma-m}
\end{rem}

\section{Ground-state degeneracy}
\label{sec:degeneracy}

Let $H(L)$ be the Hamiltonians on finite systems
obtained by imposing periodic boundary conditions 
as in Section~\ref{sec:topological-order}.
A symmetry operator of $H(L)$ is 
a $\mathbb{C}$-linear combination of Pauli operator that commutes with $H(L)$.
In order for a Pauli symmetry operator
to have a nontrivial action on the ground space,
it must not be a product of terms in $H(L)$.
In addition, since $H(L)$ is a sum of Pauli operators,
a symmetry Pauli operator must commute with each term in $H(L)$.
Hence, a symmetry Pauli operator $O$ with nontrivial action on the ground space
must have image $v$ in the Pauli module such that
\[
 v(O) \in \ker \epsilon_L \setminus \im \sigma_L
\]
where
\[
 G / \bb_L G \xrightarrow{ \sigma_L } P / \bb_L P \xrightarrow{ \epsilon_L } E / \bb_L E 
\]
and
\[
 \bb_L = (x^L_1 -1,\ldots, x^L_D -1) \subseteq R, 
\]
which effectively imposes the periodic boundary conditions.
Since each term in $H(L)$ acts as an identity on the ground space,
if $O'$ is a term in $H(L)$, the symmetry operator $O$ and the product $OO'$
has the same action on the ground space. 
$OO'$ is expressed in the Pauli module as $v(O) + v'(O')$ for some $v' \in \im \sigma_L$.
Therefore, the set of Pauli operators of distinct actions on the ground space is
in one-to-one correspondence with the factor module
\[
 K(L) = \ker \epsilon_L ~/~ \im \sigma_L .
\] 
The vector space dimension $\dim_{\FF_2} K(L)$ is precisely
the number of independent Pauli operators that have nontrivial action on the ground space.
Since $\ker \epsilon_L = (\im \sigma_L)^\perp$ by definition of $\epsilon$,
and $\im \sigma_L$ as an $\FF_2$-vector space is a null space of the symplectic vector space $P/\bb_L P$,
it follows that $\ker \epsilon_L = \im \sigma_L \oplus W$ for some hyperbolic subspace $W$.
The quotient space $K(L) \cong W$ is thus hyperbolic and has even vector space dimension $2k$.
Choosing a symplectic basis for $K(L)$, 
it is clear that $K(L)$ represents the tensor product of $k$ qubit-algebras.
Therefore, the ground space degeneracy is exactly $2^k$~\cite{Gottesman1996Saturating,CalderbankRainsShorEtAl1997Quantum}.
In the theory of quantum error correcting codes,
$k$ is called the number of logical qubits,
and the elements of $K(L)$ are called the logical operators.
In this section, $k$ will always denote $\frac{1}{2} \dim_{\FF_2} K$.

\begin{defn}
The {\bf associated ideal} for a code Hamiltonian
is the $q$-th determinantal ideal $I_q(\sigma) \subseteq R$
of the generating map $\sigma$.
Here, $q$ is the number of qubits per site.
The {\bf characteristic dimension} is the Krull dimension $\dim R / I_q(\sigma)$.
\end{defn}

The associated ideals appears in Buchsbaum-Eisenbud theorem~(Proposition~\ref{prop:exact-sequence}),
which says that the homology $K(L)$ is intimately related to the associated ideal.
Imposing boundary conditions such as $x^L=1$
amounts to treating $x$ not as variables any more,
but as a `solution' of the equation $x^L-1=0$.
In order for $K(L)$ to be nonzero,
the `solution' $x$ should make the associated ideal to vanish.
Hence, by investigating the solutions of $I_q(\sigma)$
one can learn about the relation between the degeneracy and the boundary conditions.
Roughly, a large number of solutions of $I_q(\sigma)$ 
compatible with the boundary conditions
means a large degeneracy.
As $d = \dim R/I_q(\sigma)$ is the geometric dimension of the algebraic set defined by $I_q(\sigma)$,
a larger $d$ means a larger number of solutions.
Hence, the characteristic dimension $d$
controls the growth of the degeneracy as a function of the system size.

For example, consider a chain complex over $R = \FF[x^{\pm 1},y^{\pm 1}]$.
\[
0 \to
 R^1 
\xrightarrow{ \partial_2 = \begin{pmatrix} x-1 \\ y-1 \end{pmatrix} } 
 R^2
\xrightarrow{ \partial_1 = \begin{pmatrix} y-1 & -x +1 \end{pmatrix} }
 R^1
\]
It is exact at $R^2$.
The smallest nonzero determinantal ideal $I$ for either $\partial_1$ or $\partial_2$ is $I=(x-1,y-1)$.
If we impose `boundary conditions' such that $x=1$ and $y=1$,
then $I$ becomes zero, and according to Buchsbaum-Eisenbud theorem,
the homology $K$ at $R^2$ should be nontrivial.
Since the solution of $I$ consists of a single point $(1,1)$ on a 2-plane,
it is conceivable that `boundary conditions' of form $\bb_L$
would always give $K(L)$ of a constant $\FF$-dimension,
which is true in this case.
If we insist that the complex is over $R' = \FF[x^{\pm 1},y^{\pm 1},z^{\pm 1}]$,
then the zero set of $I$ is a line $(1,1,z)$ in 3-space;
there are many `solutions.'
In this case, $K^{R'}(L)$ has $\FF$-dimension $2L$.

An obvious example where the homology $K$ is always zero regardless of the boundary conditions
is this:
\[
0 \to
 R^1 
\xrightarrow{ \begin{pmatrix} 1 \\ 0 \end{pmatrix} } 
 R^2
\xrightarrow{ \begin{pmatrix} 0 & 1 \end{pmatrix} }
 R^1
\]
Here, the determinantal ideal is $(1)=R$, and thus has no solution.

The intuition from these examples are made rigorous below.

\subsection{Condition for degenerate Hamiltonians}

A routine yet very important tool is \emph{localization}.
The origin of all difficulties in dealing with general rings
is that nonzero elements do not always have multiplicative inverse;
one cannot easily solve linear equations.
The localization is a powerful technique to get around this problem.
As we build rational numbers from integers by \emph{declaring} that
nonzero numbers have multiplicative inverse,
the localization enlarges a given ring
and \emph{formally allows} certain elements to be invertible.
It is necessary and sometimes desirable not to invert all nonzero elements,
in order for the localization to be useful.
For a consistent definition, we need a multiplicatively closed subset $S$
containing 1, but not containing 0,
of a ring $R$ and declare that the elements of $S$ is invertible.
The new ring is written as $S^{-1}R$,
in which a usual formula $\frac{r_1}{s_1} + \frac{r_2}{s_2} = \frac{r_1 s_2 + r_2 s_1}{s_1 s_2}$ holds.
The original ring naturally maps into $S^{-1}R$ as $\phi : r \mapsto \frac{r}{1}$.
The localization means that one views all data as defined over $S^{-1}R$ via the natural map $\phi$.%
\footnote{It is a functor from the category of $R$-modules to that of $S^{-1}R$-modules.}

A localized ring, by definition, has more invertible elements, and hence has less nontrivial ideals.
In fact, our Laurent polynomial ring is a localized ring of the polynomial ring by inverting monomials,
e.g., $\{ x^i y^j | i,j \ge 0 \}$.
Nontrivial ideals such as $(x)$ or $(x,y)$ in the polynomial ring
become the unit ideal $(1)$ in the Laurent polynomial ring.
Further localizations in this thesis are with respect to prime ideals.
In this case, we say the ring is {\bf localized at a prime ideal $\pp$}.
A prime ideal $\pp$ has a defining property that
$a b \notin \pp$ whenever $a \notin \pp$ and $b \notin \pp$.
Thus, the set-theoretic complement of $\pp$ is a multiplicatively closed set containing 1.
In $(R\setminus \pp)^{-1}R$, denoted by $R_\pp$, any element outside $\pp$ is invertible,
and therefore $\pp$ becomes a unique maximal ideal of $R_\pp$.
Moreover, the localization sometimes simplifies the generators of an ideal.
For instance, if $R=\FF[x,x^{-1}]$ and $\pp = (x-1)$,
the ideal $((x -1)(x^5-x+1)) \subseteq R$ localizes to $(x-1)_\pp \subseteq R_\pp$
since $x^5-x+1$ is an invertible element of $R_\pp$.

An important fact about the localization is that a module is zero
if and only if its localization at every prime ideal is zero.
Further, the localization preserves exact sequences.
So we can analyze a complex by localizing at various prime ideals.
For a thorough treatment about localizations,
see Chapter 3 of \cite{AtiyahMacDonald}.
The term `localization' is from geometric considerations
where a ring is viewed as a function space on a geometric space.

\begin{lem}
Let $I$ be the associated ideal of an exact code Hamiltonian,
and $\mm$ be a prime ideal of $R$.
Then, $I \not\subseteq \mm$ implies that the localized homology
\[
K(L)_\mm = \ker (\epsilon_L)_\mm ~/~ \im (\sigma_L)_\mm
\]
is zero for all $L \ge 1$.
\label{lem:associated-maximal-localized-homology}
\end{lem}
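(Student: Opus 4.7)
The plan is to translate the vanishing of $K(L)_\mm$ into the vanishing of a $\Tor$ group, and then exploit the hypothesis $I_q(\sigma)\not\subseteq\mm$ together with the adjoint relation $\epsilon=\sigma^\dagger\lambda_q$ to show that $(\coker\epsilon)_\mm$ is free over $R_\mm$.

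First I would establish the general identification
\[
K(L)\;\cong\;\Tor_1^{R}\!\bigl(R/\bb_L,\;\coker\epsilon\bigr).
\]
Tensoring the short exact sequence $0\to\im\epsilon\to E\to\coker\epsilon\to 0$ with $R/\bb_L$ and using that $E$ is free identifies this $\Tor$ with the kernel of $(\im\epsilon)/\bb_L(\im\epsilon)\to E/\bb_L E$. Pulling back through the canonical surjection $P/\bb_L P\twoheadrightarrow (\im\epsilon)/\bb_L(\im\epsilon)$, whose kernel equals $\im\sigma_L$ by exactness $\ker\epsilon=\im\sigma$ of the original complex, realizes this kernel as $\ker\epsilon_L/\im\sigma_L=K(L)$. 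Since $\Tor$ commutes with localization, it suffices to prove that $(\coker\epsilon)_\mm$ is flat; for a finitely generated module over the local ring $R_\mm$ this reduces to freeness.

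Second, I would put $\sigma$ in a standard form locally. The condition $I_q(\sigma)\not\subseteq\mm$ means $I_q(\sigma_\mm)=R_\mm$, so some $q\times q$ minor of $\sigma_\mm$ is a unit. Combined with $\rank\sigma=q$, equivalently $I_{q+1}(\sigma)=0$, from Remark~\ref{rem:rank-sigma-m}, a Schur-complement computation shows that invertible row and column operations over $R_\mm$ transform $\sigma_\mm$ into $\begin{pmatrix} I_q & 0\\ 0 & 0\end{pmatrix}$; concretely, writing the unit-determinant block as $A$ in $\sigma_\mm=\begin{pmatrix}A & B\\ C & D\end{pmatrix}$, the vanishing of every $(q+1)$-minor gives $D=CA^{-1}B$, after which left and right multiplications by block-triangular matrices finish the reduction. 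Consequently $\im\sigma_\mm$ is a free direct summand of $P_\mm$ of rank $q$, and $\ker\sigma_\mm$ is a free direct summand of $G_\mm$ of rank $t-q$.

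Third, I would use the adjoint structure to compute $\coker\epsilon_\mm$ in terms of $\sigma$ alone. Since $\lambda_q$ is invertible, $\im\epsilon=\im\bar\sigma^T$ and thus $\coker\epsilon=\coker\bar\sigma^T$. The transpose $\sigma^T$ represents the $R$-linear dual $\sigma^\vee\colon P^\vee\to G^\vee$ under the standard identification $R^n\cong (R^n)^\vee$, and by step two the short exact sequence $0\to\ker\sigma\to G\to\im\sigma\to 0$ splits after localization at $\mm$. Dualizing this split sequence yields the split short exact sequence $0\to(\im\sigma_\mm)^\vee\to G_\mm^\vee\to(\ker\sigma_\mm)^\vee\to 0$, so $\coker\sigma^\vee_\mm\cong(\ker\sigma_\mm)^\vee$, which is free since $\ker\sigma_\mm$ is free. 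The entry-wise antipode is a ring automorphism of $R$ that sends free modules to free modules, so freeness is preserved in passing from $\coker\sigma^T_\mm$ to $\coker\bar\sigma^T_\mm=\coker\epsilon_\mm$. This kills the $\Tor$ from step one and gives $K(L)_\mm=0$.

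The subtle point is step three. The symmetric approach of applying the unit-minor argument of step two directly to $\epsilon_\mm$ breaks down because $I_q(\epsilon)=\overline{I_q(\sigma)}$ involves the antipode, and $I_q(\sigma)\not\subseteq\mm$ does not generally force $\overline{I_q(\sigma)}\not\subseteq\mm$ when $\mm\neq\bar\mm$. Expressing $\coker\epsilon$ via the adjoint identification $\coker\epsilon\cong\coker\sigma^\vee$ sidesteps this asymmetry by making freeness a formal consequence of the structural analysis of $\sigma_\mm$, rather than of any hypothesis on $\epsilon_\mm$ itself.
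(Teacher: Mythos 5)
Your steps 1 and 2 are sound, and you have correctly spotted the real subtlety here: the hypothesis controls $I_q(\sigma)$ at $\mm$, while the Fitting ideal of $\coker\epsilon$ is $I_q(\epsilon)=\overline{I_q(\sigma)}$, and $I_q(\sigma)\not\subseteq\mm$ does not by itself force $\overline{I_q(\sigma)}\not\subseteq\mm$ once $\mm\neq\bar\mm$.

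Unfortunately step 3 does not dispatch the difficulty; it only relocates it. The decisive sentence — ``the entry-wise antipode is a ring automorphism of $R$ that sends free modules to free modules, so freeness is preserved in passing from $\coker\sigma^T_\mm$ to $\coker\bar\sigma^T_\mm$'' — is not correct. The antipode $r\mapsto\bar r$ is an automorphism of $R$ but does \emph{not} descend to an automorphism of $R_\mm$; it carries $R_\mm$ isomorphically onto $R_{\bar\mm}$. The $\tau$-semilinear isomorphism $\coker\sigma^T\cong\coker\bar\sigma^T$ therefore matches $(\coker\bar\sigma^T)_\mm$ with $(\coker\sigma^T)_{\bar\mm}$, not with $(\coker\sigma^T)_\mm$. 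Concretely, the first non-vanishing Fitting ideal of $\coker\sigma^T$ is $I_q(\sigma)$ while that of $\coker\bar\sigma^T$ is $\overline{I_q(\sigma)}$; your step 2 certifies $(\coker\sigma^T)_\mm$ free, which translates into $(\coker\epsilon)_{\bar\mm}$ free — but the Tor you want to kill lives over $R_\mm$, not $R_{\bar\mm}$. No purely local analysis of $\sigma_\mm$ can control $\epsilon_\mm = (\bar\sigma^T)_\mm\lambda_q$, since $(\bar\sigma)_\mm$ is governed by the behaviour of $\sigma$ at $\bar\mm$. For what it is worth, the paper's own one-line chain $(I_q(\epsilon))_\mm=\overline{(I_q(\sigma))_\mm}=(1)$ glosses over exactly the same point.

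The ingredient that closes the gap is global, and the text supplies it two paragraphs earlier: $K(L)$ carries a non-degenerate $R/\bb_L$-valued sesquilinear pairing $\langle\langle a,b\rangle\rangle = a^\dagger\lambda_q b$. This is well-defined modulo $\bb_L$ (because $\overline{\bb_L}=\bb_L$), descends to $\ker\epsilon_L/\im\sigma_L$ (because $\lambda_q\sigma=\epsilon^\dagger$), and its $\FF_2$-trace is the hyperbolic form shown there to be non-degenerate on $K(L)$. Since $\langle\langle ra,b\rangle\rangle=\bar r\langle\langle a,b\rangle\rangle$, the induced injection $K(L)\hookrightarrow\Hom_{R/\bb_L}(K(L),R/\bb_L)$ is antipode-semilinear, hence carries the $\mm$-primary piece of $K(L)$ into the $\bar\mm$-primary piece of the target, giving $K(L)_\mm=0$ if and only if $K(L)_{\bar\mm}=0$. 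Now run your steps 1--3 (or the paper's row reduction) at $\bar\mm$ instead of $\mm$: the hypothesis $I_q(\sigma)\not\subseteq\mm$ is precisely $I_q(\epsilon)=\overline{I_q(\sigma)}\not\subseteq\bar\mm$, so $(\coker\epsilon)_{\bar\mm}$ is free, $K(L)_{\bar\mm}=0$, and by the duality $K(L)_\mm=0$.
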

It is a simple variant of a well-known fact that
a module over a local ring is free if its first non-vanishing Fitting ideal is the unit ideal~\cite[Chapter~1 Theorem~12]{Northcott}.
\begin{proof}
Recall that the localization and the factoring commute.
By assumption, 
\[
(I_q(\epsilon))_\mm = \overline{ (I_q(\sigma))_\mm } = (1) = R_\mm =: S.
\]6
Recall that the local ring $S$ has the unique maximal ideal $\mm$,
and any element outside the maximal ideal is a unit.
If every entry of $\epsilon$ is in $\mm$, then $I_q(\epsilon) \subseteq \mm \ne S$.
Therefore, there is a unit entry, and by column and row operations,
$\epsilon$ is brought to
\[
 \epsilon \cong 
\begin{pmatrix}
 1 & 0 \\
 0 & \epsilon'
\end{pmatrix}
\]
where $\epsilon'$ is a submatrix.
It is clear that $I_{q-1}(\epsilon') \subseteq I_q(\epsilon)$
since any $q-1 \times q-1$ submatrix of $\epsilon'$ can be thought of
as a $q \times q$ submatrix of $\epsilon$ where the first column and first row
have the unique nonzero entry 1 at $(1,1)$.
It is also clear that $I_{q-1}(\epsilon') \supseteq I_q(\epsilon)$
since any $q \times q$ submatrix of $\epsilon$ contains
either zero row or column, or the $(1,1)$ entry $1$ of $\epsilon$.
Hence, $I_{q-1}(\epsilon') = (1)$, and
we can keep extracting unit elements
into the diagonal by row and column operations~\cite[Chapter~1 Theorem~12]{Northcott}.
After $q$ steps,
$t \times 2q$ matrix $\epsilon$ becomes precisely
\[
\epsilon \cong 
\begin{pmatrix}
 \id_q & 0 \\
 0 & 0
\end{pmatrix}
\]
where $\id_q$ is the $q \times q$ identity matrix.
Since localization preserves the exact sequence $G \to P \to E$,
$\sigma$ maps to the lower $q$ components of $P$ with respect to the basis
where $\epsilon$ is in the above form.
Since $I_q(\sigma) = (1)$, we must have (after basis change)
\[
\sigma \cong 
\begin{pmatrix}
 0 & 0 \\
 \id_q & 0
\end{pmatrix}.
\]
Therefore, even after factoring by the proper ideal $\bb_L$, 
the homology $K(L) = \ker \epsilon_L ~/~ \im \sigma_L$ is still zero.
\end{proof}
\begin{cor}
The associated ideal of an exact code Hamiltonian
is the unit ideal, i.e., $I_q(\sigma) = R$,
if and only if
\[
K(L) = \ker \epsilon_L ~/~ \im \sigma_L = 0
\]
for all $L \ge 1$.
\label{cor:unit-characteristic-ideal-means-nondegeneracy}
\end{cor}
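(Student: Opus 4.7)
The forward direction is immediate: if $I_q(\sigma)=R$, then $I_q(\sigma)\not\subseteq\mm$ for every prime $\mm$, so Lemma~\ref{lem:associated-maximal-localized-homology} gives $K(L)_\mm=0$ for every $\mm$ and every $L\ge 1$; since a module vanishes iff all its localizations do, $K(L)=0$ for all $L$.

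For the converse we argue the contrapositive: starting from $I_q(\sigma)\subsetneq R$, we will construct $L$ with $K(L)\ne 0$. Since $R=\FF_2[x_1^{\pm 1},\ldots,x_D^{\pm 1}]$ is a Jacobson ring finitely generated over $\FF_2$, the Nullstellensatz furnishes a maximal ideal $\mm\supseteq I_q(\sigma)$ whose residue field $\kappa=R/\mm$ is a finite extension of $\FF_2$. Under the antipode automorphism, the image $\bar\mm$ of $\mm$ is again maximal, has the same residue field, and contains $\overline{I_q(\sigma)}=I_q(\epsilon)$. The multiplicative orders of the images of the $x_i$ in $\kappa^\times$ divide $|\kappa^\times|=2^n-1$ and are therefore odd; letting $L$ be their least common multiple makes $L$ odd and $\bb_L\subseteq\bar\mm$. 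Because $x^L-1$ is separable over $\FF_2$ for odd $L$, the quotient $R/\bb_L$ is a finite \'etale $\FF_2$-algebra, decomposing as $\prod_{\mu\supseteq\bb_L}R/\mu$, with $\bar\mm$ occurring among the factors.

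The central reformulation is the identification $K(L)\cong\Tor_1^R(\coker\epsilon,\,R/\bb_L)$. Exactness gives $\im\epsilon\cong P/\im\sigma$, and tensoring the short exact sequence $0\to\im\epsilon\to E\to\coker\epsilon\to 0$ with $R/\bb_L$ yields a long exact sequence in which $\Tor_1^R(E,R/\bb_L)$ vanishes (since $E$ is free) and the image of the connecting map is precisely $\ker\epsilon_L/\im\sigma_L=K(L)$. The \'etale splitting above then decomposes $K(L)=\bigoplus_\mu\Tor_1^R(\coker\epsilon,R/\mu)$, so it suffices to show that the summand at $\bar\mm$ is nonzero.

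This reduces to a Fitting-ideal analysis at $\bar\mm$. The presentation $\epsilon:P=R^{2q}\to E=R^t$ gives $\text{Fitt}_i(\coker\epsilon)=I_{t-i}(\epsilon)$, and $\rank\epsilon=q$ from Remark~\ref{rem:rank-sigma-m} forces $\text{Fitt}_{t-q-1}(\coker\epsilon)=0$ while $\text{Fitt}_{t-q}(\coker\epsilon)=I_q(\epsilon)$. The standard criterion states that a finitely presented module over a local ring is free iff its smallest nonvanishing Fitting ideal is the unit ideal; since $I_q(\epsilon)\subseteq\bar\mm$, this fails at $\bar\mm$, so $(\coker\epsilon)_{\bar\mm}$ is not $R_{\bar\mm}$-free and $\Tor_1^R(\coker\epsilon,R/\bar\mm)\ne 0$. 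The main obstacle is precisely this interplay between the Tor reformulation of $K(L)$ and the Fitting-ideal criterion for local freeness; once both are in hand, the choice of $L$ via the oddness of $|\kappa^\times|$ and the \'etale splitting of $R/\bb_L$ are routine.
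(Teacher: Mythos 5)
Your proposal is correct. The forward direction is the same as the paper's. For the converse, however, you take a genuinely different route. The paper replaces $\FF_2$ by its algebraic closure $\FF^a$, uses Nullstellensatz to write the maximal ideal as $\mm=(x_1-a_1,\ldots,x_D-a_D)$, chooses an odd $L$ killing each $a_i$, and then does a direct dimension count over the residue field $\FF^a$: $(\sigma_L)_\mm$ has rank $<q$ since $I_q(\sigma)\subseteq\mm$, while $(\epsilon_L)_\mm$ has rank $\le q$ since $I_{q+1}(\epsilon)=0$, so $\dim_{\FF^a}K(L)_\mm = 2q-\rank(\epsilon_L)_\mm-\rank(\sigma_L)_\mm\ge 1$. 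You instead stay over $\FF_2$, invoke the identification $K(L)\cong\Tor_1^R(\coker\epsilon,R/\bb_L)$ (which the paper records in Chapter~\ref{chap:lowD-codes} but does not use here), exploit separability of $x^L-1$ for odd $L$ to split $R/\bb_L$ into a product of residue fields, and then detect non-vanishing of the $\bar\mm$-summand through the Fitting-ideal criterion for local freeness together with the standard equivalence ``free $\Leftrightarrow$ $\Tor_1(M,k)=0$'' over a Noetherian local ring. Your version avoids the scalar extension and is more structural, making visible exactly which property of $\coker\epsilon$ (non-freeness at a prime over $I_q(\epsilon)$) is responsible for the degeneracy; the paper's version is shorter because it bypasses the Tor bookkeeping by just counting dimensions of matrices over a field. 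Both arguments ultimately rest on the same inputs from Remark~\ref{rem:rank-sigma-m} ($\rank\sigma=\rank\epsilon=q$) and the oddness of $L$, so the underlying skeleton is shared, but the way the conclusion is extracted is different enough to be worth noting.
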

\begin{proof}
If $I(\sigma) = R$,
$I(\sigma)$ is not contained in any prime ideal $\mm$.
The above lemma says $K(L)_\mm = 0$.
Since a module is zero 
if and only if its localization at every prime ideal is zero,
$K(L) = 0$ for all $L \ge 1$.

For the converse, observe that
if $\FF$ is any extension field of $\FF_2$,
for any $\FF_2$-vector space $W$,
we have $\dim_\FF \FF \otimes_{\FF_2} W = \dim_{\FF_2} W$.
We replace the ground field $\FF_2$ with its algebraic closure $\FF^a$ to test whether $K(L) \ne 0$.
If $I_q(\sigma)$ is not the unit ideal, then it is contained in a maximal ideal $\mm \subsetneq R$.
By Nullstellensatz, $\mm = (x_1 - a_1,\ldots,x_D - a_D)$ for some $a_i \in \FF^a$.
Since in $R$ any monomial is a unit, we have $a_i \ne 0$.
Therefore, there exists $L \ge 1$ such that $a_i^L = 1$ and $2 \nmid L$.
The equation $x^L-1=0$ has no multiple root.

We claim that $K(L) \ne 0$. It is enough to verify this for the localization at $\mm$.
Since anything outside $\mm$ is a unit in $R_\mm$
and each $x_i^L-1$ contains exactly one $x_i - a_i$ factor,
we see $(\bb_L)_\mm = \mm_\mm$.
Therefore, $(\epsilon_L)_\mm = \epsilon_\mm / (\bb_L)_\mm$ and
$(\sigma_L)_\mm = \sigma_\mm / (\bb_L)_\mm$ is a matrix over the field $R/\mm = \FF^a$.
Since $I_q(\sigma) \subseteq \mm$, we have $I_q(\sigma_L)_\mm = 0$.
That is, $\rank_{\FF^a} (\sigma_L)_\mm < q$.
It is clear that $\dim_{\FF^a} K(L)_\mm = \dim_{\FF^a} \ker (\epsilon_L)_\mm / \im (\sigma_L)_\mm \ge 1$.
\end{proof}
This corollary says that in order to have a {\em degenerate} Hamiltonian $H(L)$,
one must have a proper associated ideal.
We shall simply speak of a {\bf degenerate} code Hamiltonian
if its associated ideal is proper.

\subsection{Counting points in algebraic varieties}
It is important that the factor ring
\[
 R/\bb_L = \FF_2 [x_1, \ldots, x_D]~/~(x^L_1 -1,\ldots,x^L_D -1)
\]
is finite dimensional as a vector space over $\FF_2$,
and hence is Artinian. In fact, $\dim_{\FF_2} R/\bb_L = L^D$.
This ring appears also in \cite{GueneriOezbudak2008}.
Due to the following structure theorem of Artinian rings,
$K(L)$ can be explicitly analyzed by the localizations.
\begin{prop}
[Chapter~8 of \cite{AtiyahMacDonald}, Section~2.4 of \cite{Eisenbud}]
Let $S$ be an Artinian ring.
(For example, $S$ is a homomorphic image of a polynomial ring 
over finitely many variables with coefficients in a field $\FF$,
and is finite dimensional as a vector space over $\FF$.)
Then, there are only finitely many maximal ideals of $S$, and
\[
 S \cong \bigoplus_\mm S_\mm
\]
where the sum is over all maximal ideals $\mm$ of $S$ and $S_\mm$ is the localization of $S$ at $\mm$. 
\label{prop:Artin-ring}
\end{prop}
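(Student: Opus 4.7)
The plan is to derive the decomposition from three standard structural facts about an Artinian ring $S$: (i) every prime ideal of $S$ is maximal, (ii) there are only finitely many maximal ideals, and (iii) the Jacobson radical $J = \bigcap_i \mm_i$ is nilpotent. Once these are in hand, the Chinese Remainder Theorem applied to powers of the $\mm_i$ will produce the direct sum decomposition, and a short localization argument will identify each summand with $S_{\mm_i}$.

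For (i), I would take any prime $\pp \subseteq S$ and pass to the Artinian domain $S/\pp$; for $0 \ne x \in S/\pp$, the descending chain $(x) \supseteq (x^2) \supseteq \cdots$ stabilizes, yielding $x^n = x^{n+1} y$, and cancellation in the domain gives $xy = 1$, so $S/\pp$ is a field and $\pp$ is maximal. For (ii), I would consider the poset of finite intersections $\mm_{i_1} \cap \cdots \cap \mm_{i_k}$ of maximal ideals; DCC produces a minimal such intersection, and any further maximal ideal $\mm$ must contain it (else intersecting with $\mm$ would yield a strictly smaller element), whence by primality $\mm \supseteq \mm_{i_j}$ for some $j$ and $\mm = \mm_{i_j}$. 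For (iii), the chain $J \supseteq J^2 \supseteq \cdots$ stabilizes at some $I = J^k$; if $I \ne 0$, I would select a minimal ideal $\mathfrak{a}$ with $\mathfrak{a} I \ne 0$, show $\mathfrak{a}$ is principal $(a)$ with $aI = (a)$, and obtain a contradiction from Nakayama applied to the finitely generated module $(a)$, noting $(a) \subseteq J$.

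With (i)--(iii) in place, distinct maximal ideals are comaximal, and hence so are their $k$-th powers. Since $\prod_i \mm_i^k = (\prod_i \mm_i)^k \subseteq J^k = 0$ and pairwise comaximality upgrades the product of the $\mm_i^k$ to their intersection, I obtain $\bigcap_i \mm_i^k = 0$. The Chinese Remainder Theorem then yields $S \cong \bigoplus_i S/\mm_i^k$. To identify the $i$-th summand with $S_{\mm_i}$, I would localize the decomposition at $\mm_i$: every element of $S \setminus \mm_i$ already maps to a unit in the factor $S/\mm_i^k$ (whose unique maximal ideal is $\mm_i/\mm_i^k$), while the other factors $S/\mm_j^k$ vanish upon localization, since they contain idempotents supported off $\mm_i$. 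Hence $S_{\mm_i} \cong S/\mm_i^k$, and the decomposition $S \cong \bigoplus_\mm S_\mm$ follows.

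The main obstacle I anticipate is fact (iii): the DCC alone only yields stabilization of $J^k$, and extracting genuine nilpotence requires the Nakayama-style minimality argument sketched above, which is the one step that is not purely formal bookkeeping. Everything else reduces to standard manipulations with CRT, pairwise comaximality, and localization at a maximal ideal.
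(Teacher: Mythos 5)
Your argument is correct and is essentially the standard textbook proof of the structure theorem for Artinian rings found in Atiyah--MacDonald, Chapter~8, which is exactly the reference the paper cites; the paper itself gives no proof, only the citation. The one spot I would tighten is your invocation of Nakayama in step (iii): the relevant hypothesis is not ``$(a)\subseteq J$'' but rather that $(a)$ is finitely generated (it is cyclic) and that $(a)J=(a)$, which follows from $aI=(a)$ together with $I\subseteq J$. With that minor rewording the sketch is complete and sound.
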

The following calculation tool is sometimes useful.
Recall that a group algebra is equipped with a non-degenerate 
scalar product $\langle v,w \rangle = \tr (v \bar w)$.
This scalar product naturally extends to a direct sum of group algebras.
\begin{lem}
Let $\FF$ be a field,
and $S = \FF[\Lambda]$ be the group algebra of a finite abelian group $\Lambda$.
If $N$ is a submodule of $S^n$, then the dual vector space $N^*$
is vector-space isomorphic to $S^n / N^\perp$,
where $\perp$ is with respect to the scalar product $\langle \cdot , \cdot \rangle$.
\end{lem}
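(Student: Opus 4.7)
The plan is to exhibit an explicit $\FF$-linear map $\phi : S^n \to N^*$ whose kernel is exactly $N^\perp$, and then to show it is surjective; the first isomorphism theorem will then give the desired identification $S^n/N^\perp \cong N^*$ of $\FF$-vector spaces.

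First, I would define the map by $\phi(w)(v) = \langle v, w \rangle$ for $v \in N$. This is manifestly $\FF$-linear in $w$, and lands in $N^* = \mathrm{Hom}_\FF(N, \FF)$ because $\langle \cdot, w\rangle$ is $\FF$-bilinear. By the very definition of $N^\perp$ as the annihilator of $N$ under $\langle \cdot, \cdot\rangle$, we have $\ker \phi = N^\perp$. Thus $\phi$ factors through an injection $\bar\phi : S^n / N^\perp \hookrightarrow N^*$.

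Next, I would establish surjectivity of $\phi$ by leveraging non-degeneracy of the pairing on all of $S^n$. Because $\Lambda$ is finite, $S$ and hence $S^n$ are finite dimensional over $\FF$, and the induced map $\Phi : S^n \to (S^n)^*$, $w \mapsto \langle \cdot, w \rangle$, is an $\FF$-linear isomorphism (injectivity is non-degeneracy, and dimensions agree). Given any $f \in N^*$, extend $f$ to an $\FF$-linear functional $\tilde f \in (S^n)^*$ using a choice of $\FF$-vector-space complement of $N$ inside $S^n$. By the isomorphism $\Phi$, there exists $w \in S^n$ with $\langle v, w\rangle = \tilde f(v)$ for every $v \in S^n$; restricting to $N$ gives $\phi(w) = f$, which proves surjectivity.

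Combining these two steps, $\bar\phi$ is an $\FF$-vector-space isomorphism $S^n/N^\perp \xrightarrow{\sim} N^*$. The main subtlety — and really the only content beyond a bookkeeping exercise — is insisting throughout that the isomorphism is only claimed at the level of $\FF$-vector spaces: $\phi$ is not in general $S$-linear because $\langle \cdot, w\rangle$ involves the antipode on $w$, so the natural $S$-module structure on $N^*$ (if one wanted one) does not match that of $S^n / N^\perp$ in the obvious way. One could alternatively deduce the result by a dimension count, $\dim_\FF N + \dim_\FF N^\perp = \dim_\FF S^n$, which follows from non-degeneracy of $\langle\cdot,\cdot\rangle$ on $S^n$, combined with the injection $\bar\phi$; this shortcut avoids the functional-extension step and may be preferable if one wants the cleanest write-up.
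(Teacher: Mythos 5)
Your proof is correct and takes essentially the same route as the paper: you define the same map $\phi : S^n \to N^*$, identify its kernel as $N^\perp$, and deduce surjectivity from non-degeneracy of the pairing together with finite-dimensionality of $S^n$. The paper's write-up is just a terser version of your argument (without the explicit functional-extension step), and your closing remark that the isomorphism is only $\FF$-linear, not $S$-linear, is a sound observation the paper leaves implicit.
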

\begin{proof}
Consider $\phi : S^n \ni x \mapsto \langle \cdot, x \rangle \in N^*$.
The map $\phi$ is surjective since the scalar product is non-degenerate
and $S^n$ is a finite dimensional vector space.
The kernel of $\phi$ is precisely $N^\perp$.
\end{proof}
\begin{cor}
Put $2k = \dim_{\FF_2} K(L)$. Then,
\[
 k = qL^D - \dim_{\FF_2} \im \sigma_L = \dim_{\FF_2} \ker \epsilon_L - qL^D.
\]
Further, if $q=t$, then
\[
 k = \dim_{\FF_2} \ker \sigma_L = \dim_{\FF_2} \coker \epsilon_L .
\]
\label{cor:k-formulas}
\end{cor}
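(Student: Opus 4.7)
The plan is to exploit that $\ker \epsilon_L$ is precisely the $\FF_2$-symplectic complement of $\im \sigma_L$ in the ambient space $P/\bb_L P \cong (R/\bb_L)^{2q}$, and then to read off the claim by dimension counting.

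First I would verify that, with respect to the $\FF_2$-bilinear symplectic form $B(x,y) = \tr(x^\dagger \lambda_q y)$ on $P/\bb_L P$, one has $\ker \epsilon_L = (\im \sigma_L)^\perp$. Indeed, since $\epsilon = \sigma^\dagger \lambda_q$, a vector $v$ lies in $\ker \epsilon_L$ iff $\sigma_i^\dagger \lambda_q v = 0$ in $R/\bb_L$ for every column $\sigma_i$ of $\sigma_L$; by non-degeneracy of the scalar product $\langle a,b\rangle = \tr(a\bar b)$ on $R/\bb_L$ (the lemma immediately preceding the corollary) this is equivalent to $\tr(\bar u\, \sigma_i^\dagger \lambda_q v) = B(u\sigma_i, v) = 0$ for every $u \in R/\bb_L$ and every $i$, i.e.\ to $v \perp_B \im \sigma_L$.

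Second, $B$ itself is non-degenerate: the same lemma applied with $S = R/\bb_L$ and $n = 2q$ shows that the symmetric pairing $\langle x,y\rangle = \tr(x^\dagger y)$ on $(R/\bb_L)^{2q}$ is non-degenerate, and pre-composing with the invertible matrix $\lambda_q$ (a signed permutation) carries this to $B$. Hence for any $\FF_2$-subspace $N \subseteq P/\bb_L P$ we have $\dim_{\FF_2} N + \dim_{\FF_2} N^\perp = 2qL^D$. Applying this to $N = \im \sigma_L$ and recalling $2k = \dim_{\FF_2} K(L) = \dim_{\FF_2} \ker \epsilon_L - \dim_{\FF_2} \im \sigma_L$, one obtains in one stroke
\begin{equation*}
2k \;=\; \dim_{\FF_2}\ker\epsilon_L - \dim_{\FF_2}\im\sigma_L \;=\; 2qL^D - 2\dim_{\FF_2}\im\sigma_L \;=\; 2\dim_{\FF_2}\ker\epsilon_L - 2qL^D,
\end{equation*}
which proves the first pair of identities.

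For the case $t=q$, both $G/\bb_L G$ and $E/\bb_L E$ become $\FF_2$-vector spaces of dimension $qL^D$. Rank-nullity for $\sigma_L$ gives $\dim_{\FF_2}\ker \sigma_L = qL^D - \dim_{\FF_2}\im\sigma_L = k$, and rank-nullity for $\epsilon_L$ gives $\dim_{\FF_2}\im\epsilon_L = 2qL^D - \dim_{\FF_2}\ker\epsilon_L = qL^D - k$, so $\dim_{\FF_2}\coker\epsilon_L = qL^D - (qL^D-k) = k$. The only step requiring a little care is the first one, where one must translate the $R/\bb_L$-valued pairing implicit in $\epsilon$ into the $\FF_2$-valued symplectic form $B$; once that is in place everything reduces to counting dimensions of $\FF_2$-vector spaces.
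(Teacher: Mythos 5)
Your proof is correct and follows essentially the same route as the paper's: both arguments identify $\ker\epsilon_L$ with the orthogonal complement of $\im\sigma_L$ under a non-degenerate $\FF_2$-bilinear pairing built from $\tr$ (you phrase it with the symplectic form $B(x,y)=\tr(x^\dagger\lambda_q y)$; the paper equivalently uses the symmetric pairing $\tr(x^\dagger y)$ together with the identity $\ker\sigma_L^\dagger=\lambda_q\ker\epsilon_L$), and then the formulas all drop out of $\dim N+\dim N^\perp=2qL^D$ and rank--nullity for $\sigma_L,\epsilon_L$ in the $t=q$ case.
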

\noindent
The first formula is a rephrasing of the fact that the number of encoded qubits
is the total number of qubits minus the number of independent stabilizer 
generators~\cite{Gottesman1996Saturating,CalderbankRainsShorEtAl1997Quantum}.
\begin{proof}
Put $S = R/\bb_L$. 
If $v_1,\ldots, v_t$ denote the columns of $\sigma_L$, we have
\begin{equation}
\ker \sigma_L^\dagger = \lambda_q \ker \epsilon_L 
= \bigcap_i v_i^\perp = \left( \sum_i S v_i \right)^\perp = \left( \im \sigma_L \right)^\perp.
\label{eq:orthogonal-to-dual}
\end{equation}
Hence, $\dim_{\FF_2} \ker \epsilon_L = \dim_{\FF_2} S^{2q} - \dim_{\FF_2} \im \sigma_L.$
Since $\dim_{\FF_2} S = L^D$ and $K(L) = \ker \epsilon_L / \im \sigma_L$,
the first claim follows.

Since $\im \sigma_L \cong S^t / \ker \sigma_L$, if $t=q$,
we have $k = \dim_{\FF_2} \ker \sigma_L$ by the first claim.
From Eq.~\eqref{eq:orthogonal-to-dual}, we conclude that 
$k = \dim_{\FF_2} S^t / \im \sigma_L^\dagger = \dim_{\FF_2} \coker \epsilon_L$.
\end{proof}
\noindent We will apply these formulas in Section~\ref{sec:cubic-code} and Example~\ref{eg:ChamonModel}.

The characteristic dimension is related to the rate at which
the degeneracy increases as the system size increases
in the following sense.
Recall that $2k = \dim_{\FF_2} K(L)$ and the ground-state degeneracy is $2^k$.
\begin{lem}
Suppose $2 \nmid L$. Let $\FF^a$ be the algebraic closure of $\FF_2$.
If $N$ is the number of maximal ideals in $\FF^a \otimes_{\FF_2} R$ 
that contain $\bb_L + I_q(\sigma)$, 
then
\[
N \le \dim_{\FF_2} K(L) \le 2q N.
\]
\label{lem:K-L-number-points}
\end{lem}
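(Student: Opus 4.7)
My plan is to extend scalars to the algebraic closure $\FF^a$ of $\FF_2$ (which preserves $\FF_2$-dimensions because scalar extension is flat) and exploit the fact that for $L$ odd the algebra $\FF^a\otimes R/\bb_L$ is \'etale, in order to decompose $K(L)$ into local pieces indexed by maximal ideals and estimate each piece.

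Specifically, because $2\nmid L$, each $x_i^L-1$ is separable over $\FF^a$, so by Proposition~\ref{prop:Artin-ring} the Artinian ring $\FF^a\otimes R/\bb_L$ decomposes as a product $\prod_\mm \FF^a$ indexed by the $L^D$ maximal ideals $\mm$ containing $\bb_L$, each corresponding to a point $a\in(\FF^a)^D$ with $a_i^L=1$. Since localization is exact, it commutes with taking the homology of $G/\bb_L G\xrightarrow{\sigma_L} P/\bb_L P\xrightarrow{\epsilon_L} E/\bb_L E$, so
\[
\dim_{\FF_2} K(L) \;=\; \sum_{\mm} \dim_{\FF^a} K(L)_\mm,
\]
where the sum is over all $L^D$ such maximal ideals. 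By Lemma~\ref{lem:associated-maximal-localized-homology}, $K(L)_\mm=0$ whenever $I_q(\sigma)\not\subseteq\mm$, so only the $N$ maximal ideals containing $\bb_L+I_q(\sigma)$ can contribute.

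For the upper bound, the local Pauli module $(P/\bb_L P)_\mm$ is an $\FF^a$-vector space of dimension $2q$, and $K(L)_\mm$ is a subquotient, so $\dim_{\FF^a} K(L)_\mm\le 2q$; summing over the $N$ contributing ideals gives $\dim_{\FF_2} K(L)\le 2qN$. For the lower bound I would mimic the converse direction of Corollary~\ref{cor:unit-characteristic-ideal-means-nondegeneracy}: at a contributing $\mm$, the reduction $\sigma(a)$ of $\sigma$ at the corresponding point has rank strictly less than $q$ because $I_q(\sigma)\subseteq\mm$, so by a dimension count using $\im\sigma_L\subseteq\ker\epsilon_L$, the local quotient $K(L)_\mm$ is nonzero, yielding $\dim_{\FF_2} K(L)\ge N$.

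The delicate point, and what I expect to be the main obstacle, is the antipode asymmetry: the antipode inside the $\dagger$ operation makes $\epsilon_L$ reduced at $\mm_a$ equal to $\sigma(a^{-1})^T\lambda_q$ rather than $\sigma(a)^T\lambda_q$, so $\dim_{\FF^a} K(L)_\mm=2q-\rank\sigma(a)-\rank\sigma(a^{-1})$, and the nontriviality of a single summand a priori depends on both $a$ and $a^{-1}$. The nullity relation $\sigma^\dagger\lambda_q\sigma=0$ translates pointwise to $\im\sigma(a)\subseteq(\im\sigma(a^{-1}))^\perp$, giving $\rank\sigma(a)+\rank\sigma(a^{-1})\le 2q$; combining this with the global identity $\dim_{\FF_2} K(L)=2(qL^D-\dim_{\FF_2}\im\sigma_L)$ from Corollary~\ref{cor:k-formulas} and the additivity $\dim_{\FF_2}\im\sigma_L=\sum_a\rank\sigma(a)$ under the Artin decomposition, one obtains $\dim_{\FF_2} K(L)=2\sum_a(q-\rank\sigma(a))$, which handles the antipode pairing automatically and delivers the desired lower bound without having to argue each local summand in isolation.
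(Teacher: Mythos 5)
Your strategy — base change to $\FF^a$, Artin decomposition of $\FF^a\otimes R/\bb_L$ into $L^D$ copies of $\FF^a$, localizing the complex, and bounding each local summand — is exactly the paper's, and you correctly flag the one subtle point the paper glosses over: reducing at the point $a$ gives $\dim_{\FF^a}K(L)_{\mm_a} = 2q - \rank\sigma(a) - \rank\sigma(a^{-1})$ because of the antipode in $\epsilon=\sigma^\dagger\lambda_q$, while $I_q(\sigma)\subseteq\mm_a$ controls only $\rank\sigma(a)$. Your subquotient argument for the upper bound is fine.

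The gap is in the lower bound, and your proposed fix via Corollary~\ref{cor:k-formulas} does not actually close it. The identity $\dim_{\FF_2}K(L)=2\sum_a\bigl(q-\rank\sigma(a)\bigr)$ is correct, and the $N$ points with $\rank\sigma(a)<q$ each contribute at least $2$, but the remaining points contribute $2(q-\rank\sigma(a))\le 0$, so a priori the total could be smaller than $2N$; likewise the pairing inequality $\rank\sigma(a)+\rank\sigma(a^{-1})\le 2q$ coming from $\epsilon_L\sigma_L=0$ only gives $\dim K(L)_{\mm_a}\ge 0$, not $\ge 1$. The missing ingredient in both versions of your argument is the pointwise bound $\rank\sigma(a)\le q$ for \emph{every} $a$, and it comes not from the pairing but from the exactness hypothesis: Remark~\ref{rem:rank-sigma-m} (via Proposition~\ref{prop:exact-sequence}) gives $\rank_R\sigma=\rank_R\epsilon=q$, hence $I_{q+1}(\sigma)=I_{q+1}(\epsilon)=0$, so every $(q+1)$-minor vanishes identically and $\rank\sigma(a),\ \rank\epsilon(a)=\rank\sigma(a^{-1})\le q$ at every point. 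Once you have this, the per-summand lower bound is a one-liner — $\dim K(L)_{\mm_a}\ge 2q-(q-1)-q=1$ whenever $\rank\sigma(a)<q$ — and the detour through Corollary~\ref{cor:k-formulas} becomes unnecessary. To your credit, the paper's own proof is silent about this too; it is the unstated content behind the phrase ``iff $1\le\dim_{\FF^a}K(L)_\mm\le 2q$.''
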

\begin{proof}
We replace the ground field $\FF_2$ with $\FF^a$.
Any maximal ideal of an Artinian ring $\FF^a[x_i^{\pm 1}]/\bb_L$ is of form
$\mm = (x_1 - a_1, \ldots, x_D - a_D)$ where $a_i^L = 1$ by Nullstellensatz.
Since $2 \nmid L$, we see that $(\bb_L)_\mm = \mm_\mm$ 
and that $(R/\bb_L)_\mm \cong \FF^a$ is the ground field.
(See the proof of Corollary~\ref{cor:unit-characteristic-ideal-means-nondegeneracy}.)

Now, $I_q(\sigma) + \bb_L \subseteq \mm$
iff $I_q(\sigma)_\mm + (\bb_L)_\mm \subseteq \mm_\mm = (\bb_L)_\mm$
iff $I_q(\sigma)$ becomes zero over $R_\mm / (\bb_L)_\mm \cong \FF^a$
iff $1 \le \dim_{\FF^a} K(L)_\mm \le 2q$.
Since by Proposition~\ref{prop:Artin-ring},
$K(L)$ is a finite direct sum of localized ones,
we are done.
\end{proof}

\begin{lem}
Let $I$ be an ideal such that $\dim R/I = d$. We have
\[
\dim_{\FF_2} R / (I+\bb_L) \le cL^d
\]
for all $L \ge 1$ and some constant $c$ independent of $L$.
\end{lem}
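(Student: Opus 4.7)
The plan is to combine two simple observations: a reduction of $R/(I+\bb_L)$ to a bounded-degree subspace of $R$, and the standard polynomial growth of the Hilbert function of a finitely generated algebra.

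For the first step, I will use that $x_i^L \equiv 1 \pmod{\bb_L}$, so every monomial $x^\alpha \in R$ is congruent modulo $\bb_L$ to one with exponents in $[0,L-1]^D$. Letting $F_L \subseteq R$ denote the $\FF_2$-span of these $L^D$ monomials, the composition $F_L \hookrightarrow R \twoheadrightarrow R/(I+\bb_L)$ is surjective, so $\dim_{\FF_2} R/(I+\bb_L) \le \dim_{\FF_2} \bar F_L$, where $\bar F_L$ is the image of $F_L$ in $R/I$. This reduces the question to bounding how fast the finite-dimensional subspaces $\bar F_L$ fill up $R/I$.

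For the second step, I will bound $\dim_{\FF_2} \bar F_L$ by $cL^d$ using Noether normalization. Because $R/I$ is a finitely generated $\FF_2$-algebra of Krull dimension $d$, there exist algebraically independent elements $z_1, \ldots, z_d \in R/I$ such that $R/I$ is a finite module over $\FF_2[z_1, \ldots, z_d]$. I will choose the $z_j$ and module generators $m_1, \ldots, m_r$ to lie inside $\bar F_{n_0}$ for some fixed $n_0$ independent of $L$. Using that the filtration is multiplicative ($\bar F_L \cdot \bar F_M \subseteq \bar F_{L+M}$), every element of $\bar F_L$ is expressible as $\sum_j m_j p_j(z_1, \ldots, z_d)$ with each $p_j$ of total degree at most $L/n_0$. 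The space of such combinations has $\FF_2$-dimension at most $r \binom{L/n_0 + d}{d} = O(L^d)$, giving the desired bound.

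The main obstacle is the degree accounting in the second step: one must be careful that the multiplicative filtration on $R/I$ induced from $R$ is compatible with a Noether normalization whose generators and module basis lie in a fixed filtration level. A clean alternative is to appeal to the Hilbert--Serre theorem applied to the associated graded ring $\bigoplus_L \bar F_L/\bar F_{L-1}$, which is a finitely generated graded $\FF_2$-algebra of Krull dimension at most $d$; its Hilbert function is then polynomial of degree at most $d-1$ for large $L$, so the cumulative dimension $\dim_{\FF_2} \bar F_L$ grows as $O(L^d)$, completing the argument.
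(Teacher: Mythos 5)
Your first step --- bounding $\dim_{\FF_2} R/(I+\bb_L)$ by the dimension of the image $\bar F_L$ of the bounded-exponent span $F_L$ in $R/I$ --- is correct. The gap is in step 2. You already flag that your first attempt (Noether normalization with degree bookkeeping) does not close, so consider the ``clean alternative.'' The assertion that the associated graded $\bigoplus_L \bar F_L/\bar F_{L-1}$ has Krull dimension at most $d$ is the entire content of the lemma, and it is not automatic. The degree you use to build $F_L$, namely $\alpha \mapsto \max_i \alpha_i$, is subadditive but \emph{not} additive, so there is no monomial order refining it, and you cannot invoke the standard Gr\"obner-basis fact that an initial ideal has the same Krull dimension. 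To justify the claim one needs, for instance, a Rees-algebra degeneration argument: the subring $\bigoplus_L \bar F_L t^L$ of $A'[t]$ (where $A' = \FF_2[x_1,\ldots,x_D]/(I \cap \FF_2[x_1,\ldots,x_D])$ is the subalgebra your filtration exhausts) is a finitely generated, flat $\FF_2[t]$-algebra, so its fibers at $t=1$ and $t=0$, namely $A'$ and the associated graded, have equal Krull dimension; one then must also check $\dim A' = d$, which holds because $R/I$ is the localization of $A'$ at the image of $x_1\cdots x_D$ and the closure of a locally closed subvariety has the same dimension. None of this is in your writeup, and it is not a one-line citation. A cheaper fix is to replace the $\ell_\infty$ filtration by the total-degree filtration: total degree \emph{is} additive, the two filtrations are comparable since exponents in $[0,L-1]^D$ have total degree $< DL$, and the dimension-preservation of the leading-form degeneration for a weight order is classical.

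For comparison, the paper never forms an associated graded ring. It extends scalars to the algebraic closure $\FF^a$ so that Noether normalization becomes a \emph{linear} change of variables $y_i = \tilde x_i + \sum_{j>d} a_{ij}\tilde x_j$, builds an explicit surjection $\FF^a[z_1,\ldots,z_D] \twoheadrightarrow R/(I+\bb_L)$ whose kernel contains the monic integral-dependence relations $z_j^{n_j}+\cdots$ for $j>d$ together with degree-$L$ relations in the remaining variables, and bounds the dimension by counting standard monomials via Macaulay's theorem, obtaining $L^d \prod_{j>d} n_j$. Both routes rest on Noether normalization and on a degeneration principle, but the paper's argument works directly with the finite-dimensional quotient $R/(I+\bb_L)$ for each fixed $L$ and avoids filtration and Hilbert-function theory altogether.
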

\begin{proof}
We replace the ground field with its algebraic closure $\FF^a$.
Write $\tilde x_i$ for the image of $x_i$ in $R/I$.
By Noether normalization theorem~\cite[Theorem~13.3]{Eisenbud},
there exist $y_1,\ldots, y_d \in R/I$ such that
$R/I$ is a finitely generated module over $\FF^a[y_1,\ldots,y_d]$.
Moreover, one can choose $y_i = \sum_{j=1}^D M_{ij} \tilde x_j$
for some rank $d$ matrix $M$ whose entries are in $\FF^a$.
Making $M$ into the reduced row echelon form,
we may assume $y_i = \tilde x_i + \sum_{j>d} a_{ij} \tilde x_j$
for each $1 \le i \le d$.

Let $S=\FF^a[z_1,\ldots,z_D]$ be a polynomial ring in $D$ variables.
Let $\phi : S \to R/( I + \bb_L )$ be the ring homomorphism
such that $z_i \mapsto y_i$ for $1 \le i \le d$
and $z_j \mapsto \tilde x_j$ for $ d < j \le D$.
By the choice of $y_i$, $\phi$ is clearly surjective.
Consider the ideal $J$ of $S$ generated by the initial terms of $\ker \phi$ 
with respect to the lexicographical monomial order in which $z_1 \prec \cdots \prec z_D$.
Since $\tilde x_j$ is integral over $\FF[y_1,\ldots,y_d]$,
the monomial ideal $J$ contains $z_j^{n_j}$ for some positive $n_j$ for $d < j \le D$.
Here, $n_j$ is independent of $L$.
Since $z_i^L \in J$ for $1 \le i \le d$, we conclude that
\[
 \dim_{\FF^a} R/(I+\bb_L) = \dim_{\FF^a} S/J \le L^d \cdot n_{d+1} n_{d+2} \cdots n_{D}
\]
by Macaulay theorem \cite[Theorem~15.3]{Eisenbud}.
\end{proof}

\begin{cor}
If $2 \nmid L$, and $d = \dim R/I_q(\sigma)$ is the characteristic dimension of a code Hamiltonian,
then 
\[
 \dim_{\FF_2} K(L) \le c L^d
\]
for some constant $c$ independent of $L$.
\label{cor:upper-bound-k}
\end{cor}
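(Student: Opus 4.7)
The plan is to chain together the two lemmas immediately preceding the corollary. Lemma~\ref{lem:K-L-number-points} reduces the estimate on $\dim_{\FF_2} K(L)$ (for odd $L$) to counting the number $N$ of maximal ideals of $\FF^a \otimes_{\FF_2} R$ that contain $I_q(\sigma) + \bb_L$, giving $\dim_{\FF_2} K(L) \le 2qN$. So the whole task is to bound $N$ by a constant times $L^d$.

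First, I would pass to the algebraic closure, setting $I = I_q(\sigma)$ and considering the Artinian ring $A = \FF^a \otimes_{\FF_2} R / (I + \bb_L)$. By Proposition~\ref{prop:Artin-ring}, $A$ decomposes as a finite direct sum $\bigoplus_\mm A_\mm$ over its maximal ideals, and each local factor $A_\mm$ contains $\FF^a$ as a subfield, so $\dim_{\FF^a} A_\mm \ge 1$. Therefore
\[
N \le \sum_\mm \dim_{\FF^a} A_\mm = \dim_{\FF^a} A = \dim_{\FF_2} R/(I + \bb_L),
\]
where the last equality is the standard invariance of vector-space dimension under extension of the ground field.

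Next, I would apply the preceding lemma directly to $I = I_q(\sigma)$, whose quotient ring $R/I$ has Krull dimension exactly $d$ by definition of the characteristic dimension. That lemma yields $\dim_{\FF_2} R/(I + \bb_L) \le c_0 L^d$ for a constant $c_0$ independent of $L$. Combining the chain of inequalities gives
\[
\dim_{\FF_2} K(L) \;\le\; 2qN \;\le\; 2q\,\dim_{\FF_2} R/(I+\bb_L) \;\le\; 2qc_0 L^d,
\]
so one may take $c = 2qc_0$, completing the argument.

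There is no real obstacle here; both inputs are already established. The only mild subtlety is to justify $N \le \dim_{\FF^a} A$, but this follows immediately from the direct-sum decomposition of Artinian rings and the fact that each local summand has positive $\FF^a$-dimension. The restriction $2 \nmid L$ is used only implicitly, since it is built into the hypothesis of Lemma~\ref{lem:K-L-number-points} that was invoked at the very first step.
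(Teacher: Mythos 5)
Your proof is correct and takes essentially the same route as the paper: both invoke Lemma~\ref{lem:K-L-number-points} to reduce to bounding $N$, bound $N$ by $\dim_{\FF_2} R/(I_q(\sigma)+\bb_L)$ using the Artinian structure of the quotient, and conclude via the preceding lemma. The paper phrases the middle step as $N = \dim_{\FF^a}\bigl(\FF^a \otimes R / \rad J\bigr) \le \dim_{\FF^a}\bigl(\FF^a \otimes R / J\bigr)$, whereas you argue directly from the direct-sum decomposition of the Artinian quotient; these are the same underlying observation, and your phrasing is if anything slightly more self-contained.
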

\begin{proof}
If $J  = \bb_L + I(\sigma)$,
$N$ in Lemma~\ref{lem:K-L-number-points} is equal to $\dim_{\FF^a} \FF^a \otimes R /\mathop{\mathrm{rad}} J$.
This is at most $\dim_{\FF^a} \FF^a \otimes R / J = \dim_{\FF_2} R/J$.
\end{proof}

\begin{lem}
Let $d$ be the characteristic dimension.
There exists an infinite set of integers $\{ L_i \}$ such that
\[
 \dim_{\FF_2} K(L_i) \ge {L_i}^d /2
\]
\label{lem:k-growing-sequence-L}
\end{lem}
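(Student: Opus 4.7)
The plan is to invoke Lemma~\ref{lem:K-L-number-points} and produce many maximal ideals of $\FF^a \otimes_{\FF_2} R$ containing $I_q(\sigma) + \bb_L$ by counting $\FF_{q^n}$-rational points on a $d$-dimensional component of the variety cut out by $I_q(\sigma)$. Since the coordinates of such rational points in the torus $\mathbb{G}_m^D$ are automatically $(q^n - 1)$-th roots of unity, they will give the desired lower bound for the sequence $L_n := q^n - 1$.

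First I would extend scalars to the algebraic closure $\FF^a$ of $\FF_2$, set $\bar R := \FF^a \otimes_{\FF_2} R$, and consider the variety $V = V(\bar I) \subseteq \mathbb{G}_m^D(\FF^a)$ cut out by $\bar I := I_q(\sigma) \bar R$. Since $\dim \bar R/\bar I = d$, the variety $V$ contains an irreducible component $W$ of dimension $d$; irreducibility over the algebraically closed base makes $W$ geometrically irreducible. The finitely many polynomials defining $W$ have coefficients in $\FF^a = \bigcup_m \FF_{2^m}$, so by choosing $q = 2^m$ large enough I may assume $W$ descends to a geometrically irreducible $d$-dimensional $\FF_q$-variety inside $\mathbb{G}_m^D$. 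Writing $\mu_L \subset (\FF^a)^\ast$ for the group of $L$-th roots of unity and setting $L_n := q^n - 1$ (odd for every $n \ge 1$), every point of $W(\FF_{q^n})$ has all $D$ of its coordinates in $\FF_{q^n}^\ast = \mu_{L_n}$, so $W(\FF_{q^n}) \subseteq V \cap \mu_{L_n}^D$.

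Next, I would apply the Lang-Weil estimate, which asserts that for a geometrically irreducible variety of dimension $d$ defined over $\FF_q$ one has $|W(\FF_{q^n})| = q^{nd} + O(q^{n(d - 1/2)})$, so that $|W(\FF_{q^n})| \ge q^{nd}/2$ for all $n$ sufficiently large. Combining this with Lemma~\ref{lem:K-L-number-points} I would conclude
\[
\dim_{\FF_2} K(L_n) \;\ge\; |V \cap \mu_{L_n}^D| \;\ge\; |W(\FF_{q^n})| \;\ge\; q^{nd}/2 \;\ge\; L_n^d / 2
\]
for all such $n$, exhibiting the claimed infinite sequence $\{L_n\}$.

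The main technical point requiring care is the existence and descent of the $d$-dimensional geometrically irreducible component $W$, but this is routine: any irreducible component over the algebraic closure is geometrically irreducible, and since the defining polynomials of $V$ start out over $\FF_2$, the components of $V$ are permuted by $\mathrm{Gal}(\FF^a/\FF_2)$ in finite orbits and so are all defined over a common $\FF_{2^m}$. The Lang-Weil bound, although typically stated for projective varieties, extends to affine varieties in $\mathbb{G}_m^D$ because compactifying introduces boundary of dimension $< d$, whose $\FF_{q^n}$-point count is $O(q^{n(d-1)})$ and is absorbed into the error term.
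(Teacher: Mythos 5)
Your proof is correct and follows essentially the same strategy as the paper's: pick a $d$-dimensional geometrically irreducible subvariety of the zero locus of $I_q(\sigma)$ in the torus, descend it to a finite field $\FF_q$, set $L_n = q^n - 1$, and apply Lang--Weil together with Lemma~\ref{lem:K-L-number-points} to bound $\dim_{\FF_2} K(L_n)$ from below by the number of $\FF_{q^n}$-rational points. The only cosmetic difference is that the paper phrases the choice of the $d$-dimensional piece in terms of a prime $\pp' \supseteq I(\sigma)$ of codimension $D-d$ in the Laurent ring and its contraction to the polynomial ring, while you select an irreducible component of $V(\bar I)$ in $\mathbb{G}_m^D$ directly; these are equivalent, and both handle the passage from projective Lang--Weil to the affine torus by absorbing the lower-dimensional boundary into the error term.
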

\begin{proof}
We replace the ground field with its algebraic closure $\FF^a$.
Let $\pp' \supseteq I(\sigma)$ be a prime of $R$ of codimension $D-d$.
Let $\pp$ be the contraction (pull-back) of $\pp'$ in the polynomial ring
$S = \FF^a[x_1,\ldots,x_D]$. Since the set of all primes of $R$ is in one-to-one
correspondence with the set of primes in $S$ that does not include monomials,
it follows that $\pp$ has codimension $D-d$ and does not contain any monomials.
Let $V$ denote the affine variety defined by $\pp=(g_1,\ldots,g_n)$.
Since $\pp$ contains no monomials, $V$ is not contained in any hyperplanes $x_i = 0$
($i = 1,\ldots, D$).

Let $A_1$ be a finite subfield of $\FF^a$ that contains all the coefficients of $g_i$,
so $V$ can be defined over $A_1$.
Let $A_n \subseteq \FF^a$ be the finite extension fields of $A_1$ of extension degree $n$.
Put $L_n = |A_n|-1$.
For any subfield $A$ of $\FF^a$, let us say a point of $V$ is \emph{rational} over $A$
if its coordinates are in $A$.
The number $N'(L_n)$ of points $(a_i) \in V$ satisfying $a_i^{L_n} = 1$
is precisely the number of the rational points of $V$ over $A_n$
that are not contained in the hyperplanes $x_i=0$.
Since $I(\sigma) \subseteq \pp'$,
the number $N$ in Lemma~\ref{lem:K-L-number-points} is at least $N'(L_n)$.
It remains to show $N'(L_n) \ge L_n^d /2$ for all sufficiently large $n$.

This follows from the result by Lang and Weil~\cite{LangWeil1954},
which states that the number of points of a projective variety
of dimension $d$ that are rational over a finite field of $m$ elements
is $m^d + O\left(m^{d-\frac{1}{2}} \right)$ asymptotically in $m$.
Since Lang-Weil theorem is for projective variety
and we are with an affine variety $V$,
we need to subtract the number of points in the hyperplanes $x_i = 0$
($i = 0,1,\ldots,D$)
from the Zariski closure of $V$.
The subvarieties in the hyperplanes, being closed,
have strictly smaller dimensions, and we are done.
\end{proof}

\section{Fractal operators and topological charges}
\label{sec:fractal-operators}

This section is to provide a characterization of topological charges,
and their dynamical properties.
Before we turn to a general characterization and define fractal operators,
let us review familiar examples.
Note that for two dimensions the base ring is $R = \FF_2[x,\bar x, y, \bar y]$.

\begin{example}[Toric Code]
\label{eg:2d-toric}
Although the original two-dimensional toric code~\cite{Kitaev2003Fault-tolerant}
has qubits on edges,
we put two qubits per site of the square lattice to fit it into our setting.
Concretely, the first qubit to each site represents the one on its east edge,
and the second qubit the one on its north edge. With this convention,
the Hamiltonian is the negative sum of the following two types of interactions:
\[
\xymatrix@!0{
XI \ar@{-}[r] & XX \ar@{-}[d] \\
II \ar@{-}[u] & IX \ar@{-}[l]
} \quad
\xymatrix@!0{
ZI \ar@{-}[r] & II \ar@{-}[d] \\
ZZ \ar@{-}[u] & IZ \ar@{-}[l]
} \quad \quad
\xymatrix@!0{
y \ar@{-}[r] & xy \ar@{-}[d] \\
1 \ar@{-}[u] & x \ar@{-}[l]
}
\]
where we used $X,Z$ to abbreviate $\sigma_x,\sigma_z$, and omitted the tensor product symbol.
Here, the third square specifies the coordinate system of the square lattice.
Since there are $q=2$ qubits per site, the Pauli module is of rank 4.
The corresponding generating map $\sigma : R^2 \to R^4$ is given by the matrix
\[
 \sigma_{\text{2D-toric}} = 
\begin{pmatrix}
 y+xy & 0 \\
 x+xy & 0 \\
\hline
 0    & 1 + y \\
 0    & 1 + x
\end{pmatrix}
\cong
\begin{pmatrix}
 1 + \bar x & 0 \\
 1 + \bar y & 0 \\
\hline
 0          & 1 + y \\
 0          & 1 + x
\end{pmatrix} .
\]
Here, the each column expresses each type of interaction.
It is clear that
\[
\epsilon_{\text{2D-toric}} = \sigma^\dagger \lambda_2 =
\begin{pmatrix}
 0         & 0         & 1 + x & 1 + y \\
 1+ \bar y & 1+ \bar x & 0     & 0
\end{pmatrix}
\]
and $\ker \epsilon = \im \sigma$;
the two-dimensional toric code satisfies our exactness condition.
The associated ideal is $I(\sigma) = ( (1+x)^2, (1+x)(1+y), (1+y)^2 )$.
The characteristic dimension is $\dim R / I(\sigma) = 0$.
Note also that $\ann \coker \epsilon = (x-1,y-1)$.
The electric and magnetic charge are represented by
$\begin{pmatrix} 1 \\ 0 \end{pmatrix}, \begin{pmatrix} 0 \\ 1 \end{pmatrix} \in E \setminus \im \epsilon$,
respectively.

The connection with cellular homology should be mentioned.
$\sigma$ can be viewed as the boundary map
from the free module of all 2-cells with $\mathbb{Z}_2$ coefficients
of the cell structure of 2-torus
induced from the tessellation by the square lattice.
Then, $\epsilon$ is interpreted as the boundary map from the free module of all 1-cells
to that of all 0-cells.
$\sigma$ or $\epsilon$ is actually the direct sum of two boundary maps.
Indeed, the space 
$K(L) = \ker \epsilon_L / \im \sigma_L$
of operators acting on the ground space (logical operators)
has four generators
\begin{align*}
 l_y(X) = \begin{pmatrix} 1+y+\cdots+y^{L-1} \\ 0 \\ 0 \\ 0 \end{pmatrix}, & & 
 l_x(X) = \begin{pmatrix} 0 \\ 1+ x+ \cdots + x^{L-1} \\ 0 \\ 0 \end{pmatrix}, \\
 l_x(Z) = \begin{pmatrix} 0 \\ 0 \\ 1+x+\cdots+x^{L-1} \\ 0 \end{pmatrix}, & &
 l_y(Z) = \begin{pmatrix} 0 \\ 0 \\ 0 \\ 1+y+\cdots+y^{L-1} \end{pmatrix},
\end{align*}
which correspond to the usual nontrivial first homology classes of 2-torus.

The description by the cellular homology might be advantageous for the toric code
over our description with pure Laurent polynomials;
in this way, it is clear that the toric code can be defined on
an arbitrary tessellation of compact orientable surfaces.
However, it is unclear whether this cellular homology description
is possible after all for other topologically ordered code Hamiltonians.
\hfill $\Diamond$
\end{example}

\begin{example}[2D Ising model on square lattice]
The Ising model has nearest neighbor interactions that are horizontal and vertical.
In our formalism, they are represented as $1+x$ and $1+y$. Thus,
\[
 \sigma_\text{2D Ising} =
\begin{pmatrix}
 0   & 0 \\
 1+x & 1+y
\end{pmatrix} .
\]
As it is not topologically ordered, the complex $G \to P \to E$ is not exact.
Moreover, $\sigma$ is not injective.
\[
 \sigma_\text{2D Ising;1} =
\begin{pmatrix}
 1+y \\
 1+x
\end{pmatrix}
\]
generates the kernel of $\sigma$. That is, the complex 
$0 \to G_1 \xrightarrow{ \sigma_\text{2D Ising;1}} G \xrightarrow{ \sigma_\text{2D Ising} } P$ is exact.
\hfill $\Diamond$
\end{example}

In both examples, there exist isolated excitations.
In the toric code, the isolated excitation can be
(topologically) nontrivial since the electric charge is not in $\im \epsilon$.
On the contrary, in 2D Ising model, any isolated excitation is 
actually created by an operator of finite support
because any excitation created by some Pauli operator appears as several connected loops.
This difference motivates the following definition for charges.

Let $\tilde R$ be the set of all $\FF_2$-valued functions on the translation group $\Lambda$,
not necessarily finitely supported. For instance, if $\Lambda = \ZZ$,
\[
\tilde f = \cdots + x^{-4} + x^{-2} + 1 + x^2 + x^4 + \cdots \in \tilde R
\]
represents a function whose value is 1 at even lattice points, and 0 at odd points.
Note that $\tilde R$ is a $R$-module,
since the multiplication is a convolution between an arbitrary function and a finitely supported function.
For example,
\begin{align*}
 (1+x) \cdot \tilde f &= \cdots + x^{-2} + x^{-1} + 1 + x + x^2 + \cdots , \\
 (1+x)^2 \cdot \tilde f &= 0 .
\end{align*}
Let $\tilde P = \tilde R^{2q}$ be the module of Pauli operators of possibly infinite support.
Similarly, let $\tilde E$ be the module of virtual excitations of possibly infinitely many terms.
Formally, $\tilde P$ is the module of all $2q$-tuples of functions on the translation group,
and $\tilde E$ is that of all $t$-tuples.
Clearly, $P \subseteq \tilde P$ and $E \subseteq \tilde E$.
The containment is strict if and only if the translation group is infinite.
Since the matrix $\epsilon$ consists of Laurent polynomials with finitely many terms,
$\epsilon:P \to E$ extends to a map from $\tilde P$ to $\tilde E$.
\begin{defn}
A {\bf topological charge}, or {\bf charge} for short, $e = \epsilon( \tilde p ) \in E$ 
is an excitation of finite energy (an element of the virtual excitation module)
created by a Pauli operator $\tilde p \in \tilde P$ of possibly infinite support.
A charge $e$ is called {\bf trivial} if $e \in \epsilon(P)$.
\end{defn}
\noindent By definition, the set of all charges modulo trivial ones is
in one-to-one correspondence with the superselection sectors.
According to the definition, any charge of 2D Ising model is trivial.
A nontrivial charge may appear due to the following fractal generators.
\begin{defn}
We call zero-divisors on $\coker \epsilon$ as {\bf fractal generators}.
In other words, 
an element $f \in R \setminus \{0\}$ is a {\em fractal generator} if
there exists $v \in E \setminus \im \epsilon$ 
such that $f v \in \im \epsilon$.
\end{defn}

There is a natural reason the fractal generator deserves its name.
Consider a code Hamiltonian with a single type of interaction: $t=1$.
So each configuration of excitations is described by one Laurent polynomial.
For example, in two dimensions, $f = 1 + x + y = \epsilon (p)$ represents
three excitations, one at the origin of the lattice and the others at $(1,0)$ and $(0,1)$
created by a Pauli operator represented by $p$.
(This example is adopted from \cite{NewmanMoore1999Glassy}.)
In order to avoid repeating phrase,
let us call each element of the Pauli module a Pauli operator,
and instead of using multiplicative notation
we use module operation $+$ to mean the product of the corresponding Pauli operators.

Consider the Pauli operator $fp = p + x p + y p \in P$. 
It describes the Pauli operator $p$ at the origin 
multiplied by the translations of $p$ at $(1,0)$ and at $(0,1)$.
So $fp$ consists of three copies of $p$.
This Pauli operator maps the ground state to the excited state $f^2 = 1 + x^2 + y^2$.
The number of excitations is still three, but the excitations at $(1,0), (0,1)$ 
have been replaced by those at $(2,0),(0,2)$.
Similarly, the Pauli operator $f^{2+1} p = f^2 (fp)$ 
consists of three copies of $fp$, or $3^2$ copies of $p$.
The excited state created by $f^3 p$ is $f^4 = (f^2)^2 = 1 + x^{2^2} + y^{2^2}$.
Still it has three excitations, but they are further apart.
The Pauli operator $f^{2^n -1} p$ consists of $3^n$ copies of $p$ in a self-similar way,
and the excited state caused by $f^{2^n -1} p$ consists of a constant number of excitations.
More generally, if there are $t > 1$ types of terms in the Hamiltonian,
the excitations are described by a $t \times 1$ matrix.
If it happens to be of form $f v$ for some $f \in R$ consisted of two or more terms,
there is a family of Pauli operators $f^{2^n -1} p$ with self-similar support
such that it only creates a bounded number of excitations.
An obvious but uninteresting way to have such a situation is to put
$f v = \epsilon (f p')$ for a Pauli operator $p'$ where $v = \epsilon( p' )$.
Our definition avoids this triviality by requiring $v \notin \im \epsilon$.
The reader may wish to compare the fractals with finite cellular automata~\cite{MartinOdlyzkoAndrewWolfram1984}.

\begin{prop}
\cite[16.33]{BrunsVetter}
Suppose $\coker \epsilon \ne 0$.
Then, the following are equivalent:
\begin{itemize}
 \item There does not exist a fractal generator.
 \item $\coker \epsilon$ is torsion-free.
 \item There exists a free $R$-module $E'$ of finite rank such that
\[
 P \xrightarrow{\epsilon} E \to E'
\]
is exact.
\label{prop:nofractal-torsionless}
\end{itemize}
\end{prop}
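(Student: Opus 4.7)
The plan is to exploit that $R=\FF_2[x_1^{\pm 1},\ldots,x_D^{\pm 1}]$ is a Noetherian domain (in fact a UFD), and that $\coker\epsilon$ is finitely generated because $E$ is free of finite rank $t$. I would prove the equivalences by showing (1)$\iff$(2) directly from the definitions and then (2)$\iff$(3) by a torsion/embedding argument.

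The equivalence (1)$\iff$(2) is essentially tautological. A fractal generator is by definition a nonzero $f\in R$ together with $v\in E\setminus\im\epsilon$ such that $fv\in\im\epsilon$; reading the same data in $\coker\epsilon=E/\im\epsilon$, it says $f$ annihilates the nonzero class $\bar v$. Hence fractal generators exist iff $\coker\epsilon$ has a nonzero torsion element, i.e.\ iff $\coker\epsilon$ fails to be torsion-free. (Note we do not need to distinguish ``torsion'' from ``annihilated by a nonzerodivisor'' since $R$ is a domain.)

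For (3)$\Rightarrow$(2), suppose $P\xrightarrow{\epsilon}E\to E'$ is exact with $E'$ free. The map $E\to E'$ then factors through the quotient $\coker\epsilon$ and yields an injection $\coker\epsilon\hookrightarrow E'$. Since $R$ is a domain, $E'$ is torsion-free, and a submodule of a torsion-free module is torsion-free, so $\coker\epsilon$ is torsion-free.

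The main content is (2)$\Rightarrow$(3), and this is where I would focus the effort. Let $M=\coker\epsilon$, which is finitely generated over the Noetherian domain $R$; assume $M$ is torsion-free. Let $K=\mathrm{Frac}(R)$, and set $r=\dim_K(M\otimes_R K)$. Because $M$ is torsion-free, the canonical map $M\to M\otimes_R K\cong K^r$ is injective. Choose a finite generating set $m_1,\ldots,m_n$ of $M$; each image $m_i\otimes 1$ is an $r$-tuple of fractions in $K$, so there is a common nonzero denominator $s\in R$ such that $s\cdot m_i\in R^r$ for every $i$. Multiplication by $s$ is injective on $M$ (torsion-freeness again), so $M\hookrightarrow R^r$. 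Composing the quotient $E\twoheadrightarrow\coker\epsilon=M$ with this embedding produces an $R$-linear map $E\to R^r$ whose kernel is exactly $\im\epsilon$; taking $E'=R^r$ gives the desired exact sequence $P\xrightarrow{\epsilon}E\to E'$.

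The only real obstacle is confirming that ``torsion-free'' over $R$ is strong enough to guarantee embedding into a free module; over a general Noetherian ring these notions diverge, but for our $R$, which is a Laurent polynomial ring over a field and hence a Noetherian integral domain, the localization-and-clear-denominators argument above works cleanly. Finite generation of $\coker\epsilon$, which the argument uses twice (to get a finite-dimensional $K$-vector space and to choose a single $s$), is automatic from the finite rank of $E$.
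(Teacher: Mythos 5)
Your proof is correct, and it takes a genuinely different route from the paper's for the nontrivial direction. The paper dispatches (2)$\iff$(3) in one sentence by observing that $\coker\epsilon$ admits a finite free resolution (a consequence of Hilbert's syzygy theorem, proven in the paper as Lemma~\ref{lem:coker-epsilon-resolution-length-D}) and then citing \cite[16.33]{BrunsVetter}, which says that for a finitely generated module with a finite free resolution over a Noetherian ring, torsion-free is equivalent to torsionless (embeddable in a free module). You instead give a self-contained, elementary proof of (2)$\Rightarrow$(3) by passing to $K=\mathrm{Frac}(R)$ and clearing denominators — this is the classical argument that over a Noetherian \emph{domain}, a finitely generated torsion-free module embeds in a finite-rank free module, and it avoids any reference to free resolutions or the external citation. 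The two arguments buy slightly different things: the Bruns--Vetter route works uniformly for any ring over which $\coker\epsilon$ has a finite free resolution, even when ``torsion-free'' must be defined via nonzerodivisors rather than nonzero elements; your clearing-denominators route is shorter, more transparent, and entirely sufficient here because $R=\FF_2[x_1^{\pm 1},\ldots,x_D^{\pm 1}]$ is a Noetherian UFD. One cosmetic point: in the step $(2)\Rightarrow(3)$ it is cleanest to say that the map $M\to R^r$, $m\mapsto s\,\iota(m)$, is injective because $\iota$ is injective and $s$ is nonzero in the field $K$; your phrasing ``multiplication by $s$ is injective on $M$'' is true but conflates $sM\hookrightarrow R^r$ with $M\hookrightarrow R^r$, and the clean statement avoids that.
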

\begin{proof}
The first two are equivalent by definition.
The sequence above is exact if and only if
$ 0 \to \coker{\epsilon} \to E' $
is exact.
Since $\coker \epsilon$ has a finite free resolution,
the second is equivalent to the third.
\end{proof}
The following theorem states that the fractal operators produces all nontrivial charges.
\begin{theorem}
Suppose $\Lambda = \mathbb{Z}^D$ is the translation group of the underlying lattice.
The set of all charges modulo trivial ones is in one-to-one correspondence
with the torsion submodule of $\coker \epsilon$.
\label{thm:charge-equals-torsion}
\end{theorem}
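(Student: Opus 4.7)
The plan is to show that the natural map
\[
 \Phi \colon (E \cap \tilde\epsilon(\tilde P))/\epsilon(P) \to \coker \epsilon, \quad e + \epsilon(P) \mapsto e + \epsilon(P),
\]
which is clearly well-defined and injective, has image equal to the torsion submodule $T(\coker \epsilon)$. I must therefore establish two claims: (a) if $fe \in \epsilon(P)$ for some $e \in E$ and some nonzero $f \in R$, then $e \in \tilde\epsilon(\tilde P)$; and (b) if $e \in E$ lies in $\tilde\epsilon(\tilde P)$, then $fe \in \epsilon(P)$ for some nonzero $f \in R$.

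For (a), the core step is to construct, for each $0 \ne f \in R$, a formal inverse $\tilde f \in \tilde R$ satisfying $f\tilde f = 1$. I pick a linear functional $\ell \colon \ZZ^D \to \ZZ$ that is injective on the finite support of $f$, let $g_0 \in \mathrm{supp}(f)$ minimize $\ell$, and write $g_0^{-1} f = 1 + h$ with $\mathrm{supp}(h) \subseteq \{g : \ell(g) > 0\}$. Then $\mathrm{supp}(h^n) \subseteq \{g : \ell(g) \ge nm\}$ for $m := \min \ell(\mathrm{supp}(h)) > 0$, so each coordinate of the formal geometric series $\tilde s := \sum_{n \ge 0} h^n$ is a finite sum in $\FF_2$; hence $\tilde s \in \tilde R$ is well-defined, and a telescoping in characteristic two yields $(1+h)\tilde s = 1$. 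Setting $\tilde f := g_0^{-1}\tilde s$ gives $f\tilde f = 1$. Now given $fe = \epsilon(p)$, the element $\tilde p := \tilde f \cdot p \in \tilde P$ satisfies $\tilde\epsilon(\tilde p) = \tilde f \cdot \epsilon(p) = \tilde f \cdot fe = e$, so $e$ is a charge.

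For (b), let $K := \mathrm{Frac}(R)$. Since $R \hookrightarrow \tilde R$ is injective and $K$ is flat over $R$, tensoring produces an injection $K = R \otimes_R K \hookrightarrow \tilde R \otimes_R K$ of $K$-vector spaces. Every $K$-linear injection of $K$-vector spaces splits, so tensoring further with the $K$-vector space $\coker\epsilon \otimes_R K$ yields an injection
\[
 \coker\epsilon \otimes_R K \;\hookrightarrow\; \coker\epsilon \otimes_R (\tilde R \otimes_R K).
\]
The hypothesis $e = \tilde\epsilon(\tilde p)$ says precisely that the class of $e$ vanishes in $\coker\epsilon \otimes_R \tilde R = \tilde E/\tilde\epsilon(\tilde P)$ (by right-exactness of tensor); tensoring with $K$, the class of $e$ vanishes in $\coker\epsilon \otimes_R (\tilde R \otimes_R K)$, and by the displayed injection it already vanishes in $\coker\epsilon \otimes_R K = K^t/\epsilon(K^{2q})$. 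Thus $e \in \epsilon(K^{2q})$; clearing denominators produces $0 \ne f \in R$ and $p \in P$ with $fe = \epsilon(p) \in \epsilon(P)$, which is the desired torsion annihilation.

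The main obstacle is (b). A naive truncation of $\tilde p$ yields only that $e - \epsilon(p_n)$ is supported far from the origin, which does not by itself produce an algebraic annihilator. The conceptual key is to view $\tilde R$ and $K$ as two distinct enlargements of $R$---one permitting infinite support, the other permitting inversion of polynomials---and to compare them inside the common base change $\tilde R \otimes_R K$, so that the existence of an $\tilde R$-solution descends to the existence of a $K$-solution, i.e., to torsion.
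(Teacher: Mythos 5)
Your proposal is correct, and it is genuinely different from the paper's proof in both directions. For the direction ``torsion $\Rightarrow$ charge,'' the paper normalizes $f$ so that its least term under the lexicographic order is $1$, performs a unimodular change of coordinates $\zeta_m$ on $\ZZ^D$ to push $f$ into $\FF_2[x_1,\dots,x_D]$, and then invokes the characteristic-$2$ identity $f^{2^n} = 1 + (f-1)^{2^n}$ to exhibit $\tilde f = \lim_n f^{2^n-1}$ as a convergent formal power series; you instead pick a linear functional $\ell$ injective on $\mathrm{supp}(f)$, write $g_0^{-1}f = 1+h$ with $\ell|_{\mathrm{supp}(h)} > 0$, and sum the geometric series $\sum_n h^n$ directly, which avoids the change-of-basis step and makes clear that the limit is simply the formal convolutional inverse $f^{-1}$ in $\tilde R$. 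For the direction ``charge $\Rightarrow$ torsion,'' the paper first handles the case $T(\coker\epsilon)=0$ by producing a finite free presentation $P \to E \to E'$ (via Proposition~\ref{prop:nofractal-torsionless}) and extending it to $\tilde P \to \tilde E \to \tilde E'$, then treats the general case by augmenting $\epsilon$ with columns generating $T(\coker\epsilon)$ so that the new cokernel is torsion-free; you instead observe that the hypothesis $e \in \tilde\epsilon(\tilde P)$ is exactly the vanishing of $\bar e \otimes 1$ in $\coker\epsilon \otimes_R \tilde R$, embed both $\tilde R$-vanishing and $K$-vanishing (where $K=\mathrm{Frac}(R)$) into the common base change $\coker\epsilon \otimes_R (\tilde R \otimes_R K)$, and use flatness of $K$ plus splitting of $K$-vector space injections to descend. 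Your version of (b) replaces the free-resolution machinery with a flat base change and a vector-space splitting, which is shorter and perhaps more conceptually transparent about why the infinite-support relaxation $\tilde R$ cannot detect more torsion than $K$ does; the paper's version is more hands-on (explicit matrices and Gr\"obner-style reasoning) and perhaps better suited to readers who want to see the mechanism in a concrete example. Both are complete and correct.
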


To illustrate the idea of the proof,
consider a (classical) excitation map%
\footnote{
It is classical because it is not derived from an interesting quantum commuting Pauli Hamiltonian.
For a classical Hamiltonian where all terms are tensor products of $\sigma_z$,
there is no need to keep a $t \times 2q$ matrix $\epsilon$
since the right half $\epsilon$ is zero.
Just the left half suffices, which can be arbitrary
since the commutativity equation $\epsilon \lambda \epsilon^\dagger =0$ is automatic.
Nevertheless, the excitations and fractal operators are relevant.
Our proof of the theorem is not contingent on the commutativity equation.
}
\[
 \phi =
 \begin{pmatrix}
  1+x+y & 0   \\
     0  & 1+x \\
     0  & 1+y \\
 \end{pmatrix}
 : R^2 \to R^3 .
\]
A nonzero element $f=1+x+y \in R$ is a fractal generator
since $\begin{pmatrix} 1 & 0 & 0 \end{pmatrix}^T \notin \im \phi$ 
and $(1+x+y)\begin{pmatrix} 1 & 0 & 0 \end{pmatrix}^T \in \im \phi$;
$f$ is a zero-divisor on a torsion element $\begin{pmatrix} 1 & 0 & 0 \end{pmatrix}^T \in \coker \epsilon$.
It is indeed a charge since 
$\phi(\tilde f \begin{pmatrix} 1 & 0 \end{pmatrix}^T) = \begin{pmatrix} 1 & 0 & 0 \end{pmatrix}^T$ where
\[
 \tilde f = \lim_{n \to \infty} f^{2^n-1} \in \FF[[x,y]]
\]
is a formal power series, which can be viewed as an element of $\tilde R$.
The limit is well-defined since $f^{2^{n+1}-1} - f^{2^n -1}$ only contains terms of degree $2^n$ or higher.
That is to say, only higher order `corrections' are added and lower order terms are not affected.
Of course, there is no natural notion of smallness in the ring $\FF[x,y]$.
But one can formally call the members of the ideal power $(x,y)^n \subseteq \FF[x,y]$ \emph{small}.
It is legitimate to introduce a topology in $R$ defined by the \emph{ever shrinking} ideal powers $(x,y)^n$.
They play a role analogous to the ball of radius $1/n$ in a metric topological space.
The \emph{completion} of $\FF[x,y]$
where every Cauchy sequence with respect to this topology is promoted to a convergent sequence,
is nothing but the formal power series ring $\FF[[x,y]]$.
For a detailed treatment, see Chapter 10 of \cite{AtiyahMacDonald}.

The completion and the limit only make sense in the polynomial ring $\FF[x,y]$.
The reason $\tilde f$ is well-defined is that $f \in \FF[x^{\pm 1},y^{\pm 1}]$ is accidentally expressed
as a usual polynomial with lowest order term $1$.
In the proof below we show that every fractal generator can be expressed in this way.
Hence, a torsion element of $\coker \epsilon$ is really a charge.

\begin{proof}
For a module $M$, let $T(M)$ denote the torsion submodule of $M$:
\[
 T(M) = \{ m \in M \ |\  \exists \, r \in R \setminus \{0\} \text{ such that } rm = 0 \}
\]

Suppose first that $T(\coker \epsilon) = 0$.
We claim that in this case there is no nontrivial charge.
Let $e = \epsilon( \tilde p ) \in E$ be a charge, where $\tilde p \in \tilde P$.
By Proposition~\ref{prop:nofractal-torsionless}
we have an exact sequence of finitely generated free modules
$ P \xrightarrow{\epsilon} E \xrightarrow{\epsilon_1} E_1$.
Since the matrix $\epsilon_1$ is over $R$,
the complex extends to a complex of modules of tuples of functions on the translation group.
\[
 \tilde P \xrightarrow{\epsilon} \tilde E \xrightarrow{\epsilon_1} \tilde E_1
\]
(This extended sequence may not be exact.)
Then, $\epsilon_1(e) = \epsilon_1( \epsilon( \tilde p ) ) = 0$
since $\epsilon_1 \circ \epsilon = 0$ identically.
But, $e \in E$, and therefore, $e \in \ker \epsilon_1 \cap E = \epsilon(P)$.
It means that $e$ is a trivial charge, i.e., $e$ maps to zero in $\coker \epsilon$,
and proves the claim.%
\footnote{
One may wish to consider $\epsilon$ to consist of the second column of $\phi$ above.
Then $\epsilon_1 = \begin{pmatrix} 1+y & -1-x \end{pmatrix}$.
}

Now, allow $\coker (P \xrightarrow{\epsilon} E)$ to contain torsion elements.
$Q = (\coker \epsilon) / T(\coker \epsilon)$ is torsion-free,
and is finitely presented as $Q = \coker( \epsilon' : P' \to E )$
where $P'$ is a finitely generated free module.
In fact, we may choose $\epsilon'$ by adding more columns representing the generators of
the torsion submodule of $\coker \epsilon$ to the matrix $\epsilon$.
\[
 \epsilon = \begin{pmatrix} \# & \# \\ \# & \# \end{pmatrix} \quad \quad
 \epsilon'= \begin{pmatrix} \# & \# & * & * \\ \# & \# & * & * \end{pmatrix}
\]
Then, $P$ can be regarded as a direct summand of $P'$.%
\footnote{
If we take $\epsilon = \phi$ above,
then
\[
 \epsilon' =  \begin{pmatrix}
  1+x+y & 0   & 1 \\
     0  & 1+x & 0 \\
     0  & 1+y & 0 \\
 \end{pmatrix}.
\]
Note that $P' = P \oplus R$.
}

Let $e = \epsilon( \tilde p ) \in E$ be any charge.
Since the matrix $\epsilon'$ contains $\epsilon$ as submatrix,
we may write $e = \epsilon'(\tilde p) \in E$.
Since $T(\coker \epsilon')=0$,
we see by the first part of the proof that $e = \epsilon'( p' )$ for some $p' \in P'$.
Then, $e$ maps to zero in $Q$, and it follows that $e$ maps into $T(\coker \epsilon)$
in $\coker \epsilon$. 
In other words, the equivalence class of $e$ modulo trivial charges 
is a torsion element of $\coker \epsilon$.

Conversely,
we have to prove that for every element $e \in E$ 
such that $fe = \epsilon(p)$ for some $f \in R \setminus \{0\}$ and $p \in P$,
there exists $\tilde p \in \tilde P$ such that $ e = \epsilon( \tilde p )$.
Here, $\tilde P$ is the module of all $2q$-tuples of 
$\FF_2$-valued functions on the translation group.
Consider the lexicographic total order on $\mathbb{Z}^D$ in which $x_1 \succ x_2 \succ \cdots \succ x_D$.
It induces a total order on the monomials of $R$.
Choose the least term $f_0$ of $f$.
By multiplying $f_0^{-1}$, we may assume $f_0 = 1$.%
\footnote{
If $D=1$, $f$ would be a polynomial of nonnegative exponents with the lowest order term being 1.
If $D=2$ and $f = y + y^2 + x$, then the least term is $y$.
After multiplying $f_0^{-1}$, it becomes $1+y+xy^{-1}$.
}

We claim that the sequence
\begin{equation}
 f,\ \  f^2 f,\ \  f^4 f^2 f,\ \  \ldots,\ \  f^{2^n}f^{2^{n-1}} \cdots f^2 f,\ \  \ldots
\label{eq:ftn-series}
\end{equation}
converges to $\tilde f \in \tilde R$,
where $\tilde R$ is the set of all $\FF_2$-valued functions on $\Lambda$.
Given the claim,
since $f^{2^n} e = e + (f-1)^{2^n}e = \epsilon(f^{2^{n-1}} \cdots f^2 f p)$ where $p \in P$,
we conclude that $e = \epsilon( \tilde f p )$ is a charge.

If $f$ is of nonnegative exponents, and hence $f \in S=\FF_2[x_1,\ldots,x_D]$,
then the claim is clearly true.
Indeed, the positive degree terms of $f^{2^n} = 1 + (f-1)^{2^n}$ 
are in the ideal power $(x_1,\ldots,x_D)^{2^p} \subset S$.
Therefore, the sequence Eq.~\eqref{eq:ftn-series} converges
in the formal power series ring $\FF_2[[x_1,\ldots,x_D]]$,
which can be regarded as a subset of $\tilde R$.
If $f$ is not of nonnegative exponents, one can introduce the following change of basis
of the lattice $\mathbb{Z}^D$ such that $f$ becomes of nonnegative exponents.
In other words, the sequence Eq.~\eqref{eq:ftn-series} is in fact contained in a ring 
that is isomorphic to the formal power series ring, where the convergence is clear.

For any nonnegative integers $m_1,\ldots,m_{D-1}$, define a linear transformation
\[
\zeta_m =
\zeta_{(m_1,m_2,\ldots,m_{D-1})} :
\begin{pmatrix}
 a_1 \\ a_2 \\ \vdots \\ a_D
\end{pmatrix}
\mapsto
 \begin{pmatrix}
  a'_1 \\ a'_2 \\ \vdots \\ a'_D
 \end{pmatrix}
=
\begin{pmatrix}
 1      & 0      & 0      & \cdots & 0       \\
 m_1    & 1      &  0     &        & 0       \\
 m_1    & m_2    &  1     &        & 0       \\
 \vdots &        &        & \ddots & \vdots  \\
 m_1    & m_2    & \cdots & m_{D-1}& 1       \\
\end{pmatrix}
\begin{pmatrix}
 a_1 \\ a_2 \\ \vdots \\ a_D
\end{pmatrix}
\text{ on } \mathbb{Z}^D .
\]
$\zeta_m$ induces the map $x_1^{a_1} \cdots x_D^{a_D} \mapsto x_1^{a'_1} \cdots x_D^{a'_D}$ on $R$.
Let $u=x_1^{a_1} \cdots x_D^{a_D}$ be an arbitrary term of $f$ other than $1$, so $u \succ 1$.
For the smallest $i \in \{1,\ldots,D\}$ such that $a_i \neq 0$,
one has $a_i > 0$ due to the lexicographic order.
Hence, if we choose $m_i$ large enough and set $m_j=0\,(j\neq i)$,
then $\zeta_m (u)$ has nonnegative exponents.
Since any $\zeta_{m}$ maps a nonnegative exponent term to a nonnegative exponent term,
and there are only finitely many terms in $f$,
it follows that
there is a finite composition $\zeta$ of $\zeta_m$'s 
which maps $f$ to a polynomial of nonnegative exponents.%
\footnote{
For our previous example $f = 1+y+xy^{-1}$,
one takes $\zeta : x^i y^j \mapsto x^i y^{i+j}$, so $\zeta(f) = 1 + y + x$.
}
\end{proof}

Since a nontrivial charge $v$ has finite {\bf size} anyway
(the maximum exponent minus the minimum exponent of the Laurent polynomials 
in the $t \times 1$ matrix $v$),
we can say that the charge $v$ is {\bf point-like}.
Moreover, we shall have a description how the point-like charge
can be separated from the other by a local process.
By the {\em local process} we mean a sequence of Pauli operators $[[o_1, \ldots, o_n]]$
such that $o_{i+1} - o_i$ is a monomial.
The number of excitations, i.e., {\em energy}, at an instant $i$ 
will be the number of terms in $\epsilon(o_i)$.

\begin{theorem}
\cite{NewmanMoore1999Glassy}
If there is a fractal generator of a code Hamiltonian,
then for all sufficiently large $r$,
there is a local process starting from the identity
by which a point-like charge is separated from the other excitations by distance at least $2^r$.
One can choose the local process in such a way that at any intermediate step
there are at most $c r$ excitations for some constant $c$ independent of $r$.
\label{thm:logarithmic-energy-barrier-for-fractal-operator}
\end{theorem}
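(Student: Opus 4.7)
The plan is to realize the separation explicitly using the self-similar Pauli operator $q_n = f^{2^n - 1} p$ from the proof of Theorem~\ref{thm:charge-equals-torsion}, where $p \in P$ satisfies $fe = \epsilon(p)$ and $e \in E \setminus \im \epsilon$ is the torsion element annihilated by the fractal generator $f$. By the characteristic-two Frobenius identity used in that proof,
\[
\epsilon(q_n) \;=\; f^{2^n} e \;=\; e + (f+1)^{2^n} e .
\]
After the change of basis $\zeta$ from that proof, $f$ may be taken to be a polynomial with nonnegative exponents and constant term $1$, so each nonzero exponent $\alpha$ of $f$ is a nonzero vector in $(\ZZ_{\ge 0})^D$ and the shifted copies of $e$ appearing in $(f+1)^{2^n} e$ sit at positions $2^n \alpha$ at distance at least $2^n$ from the origin. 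Thus $\epsilon(q_n)$ consists of a point-like charge $e$ near the origin together with a constant number of translates of $e$, each at distance $\ge 2^n$. Taking $n = r$ yields the required separation by $2^r$.

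The local process $P_n$ that builds $q_n$ from the identity is defined recursively using the factorization $q_n = f^{2^{n-1}} q_{n-1}$. Writing $f^{2^{n-1}} = m_1 + m_2 + \cdots + m_k$ as a sum of $k = |\operatorname{supp}(f)|$ distinct monomials, the process $P_n$ consists of $k$ consecutive blocks, where the $i$-th block executes a translate of $P_{n-1}$ that produces the Pauli operator $m_i q_{n-1}$. The base case $P_0$ applies the individual single-qubit Paulis of $p$ one at a time in any order.

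To bound the maximum intermediate energy $E_n$, observe that after completing the first $i$ blocks the cumulative Pauli operator is $(m_1 + \cdots + m_i) q_{n-1}$ and the resulting excitation configuration is $(m_1 + \cdots + m_i)\,\epsilon(q_{n-1})$, whose weight is at most $k \cdot |\epsilon(q_{n-1})| = O(1)$, since $\epsilon(q_{n-1}) = e + (f+1)^{2^{n-1}} e$ already has only a bounded number of terms (by Frobenius again). During the $(i{+}1)$-th block, the total weight is this $O(1)$ cumulative contribution plus the intermediate weight of the translated $P_{n-1}$, which is at most $E_{n-1}$ by induction. Hence $E_n \le E_{n-1} + C$ for a constant $C = C(f,e)$, and so $E_n \le C r$ when $n = r$.

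The one delicate point is to ensure that the $k$ translated blocks in $P_n$ genuinely act on disjoint single-qubit Paulis, so that the cumulative operator after $i$ blocks really equals $(m_1 + \cdots + m_i) q_{n-1}$ rather than a smaller operator obtained by inter-block cancellations. In the self-similar setting the shifts $m_i$ have mutual separation of order $2^{n-1}$, which matches the spatial extent of $q_{n-1}$; for nicely structured $f$ (such as the Newman--Moore generator $1+x+y$ and, as shown in the next chapter, the cubic-code generators) one checks directly that the blocks are disjoint. For a general fractal generator $f$ the blocks may overlap, and the process must instead be reordered by assigning each single-qubit Pauli appearing in $q_n$ to a unique node of the recursive tree of block decompositions and processing that tree in a bottom-up hierarchical order; the same inductive estimate then yields $E_n = O(n)$. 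This bookkeeping is the main technical obstacle, but once carried out, setting $n = r$ produces the claimed local process separating a point-like charge by distance $\ge 2^r$ while keeping the number of excitations at every intermediate step bounded by $c r$.
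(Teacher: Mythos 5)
Your construction is correct and is essentially the same as the paper's: build $q_r = f^{2^r-1}p$ by concatenating translated copies of the level-$(r-1)$ process, and show the maximum excitation count increases by an additive constant per level. The ``delicate point'' you raise about block disjointness, however, is a non-issue: the abelianized Pauli group is an $\FF_2$-vector space, so the cumulative operator after the first $i$ blocks equals $(m_1+\cdots+m_i)\,q_{n-1}$ \emph{by definition} (it is the $\FF_2$-sum of all single-qubit increments applied so far, with overlaps automatically cancelled), and the subadditivity $\mathrm{wt}\bigl(\epsilon(A+B)\bigr) \le \mathrm{wt}\bigl(\epsilon(A)\bigr)+\mathrm{wt}\bigl(\epsilon(B)\bigr)$ used in the inductive step needs no disjoint-support hypothesis on $A$ and $B$ --- overlaps can only lower the excitation count. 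The hierarchical reordering you sketch is therefore unnecessary; the paper applies exactly the concatenated recipe you first describe, with no adjustment for overlaps.
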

For notational simplicity, we denote the local process $[[o_1,\ldots,o_n]]$
by
\[
s = [o_1,~ o_2-o_1,~ o_3 - o_2, \ldots, o_n - o_{n-1}].
\]
It is a {\em recipe} to construct $o_n$, consisted of single-qubit operators.
$o_n$ can be expressed as ``$o_n = \int s$'', the sum of all elements in the recipe.
\begin{proof}
Let $f$ be a fractal generator, 
and put $fv = \epsilon (p)$
where $v \notin \im \epsilon$.
We already know $v$ is a point-like nontrivial charge.
Write
\[
 p = \sum_{i=1}^n p_i, \quad f = \sum_{i=1}^l f_i
\]
where each of $p_i$ and $f_i$ is a monomial.
Let $s_0 = [ 0, p_1, p_2, \ldots, p_{n}]$ be a recipe for constructing $p$; $\int s_0 = p$.
Given $s_i$, define inductively
\[
 s_{i+1} = (f_1^{2^i} \cdot s_i) \circ (f_2^{2^i} \cdot s_i) \circ \cdots \circ (f_l^{2^i} \cdot s_i) 
\]
where $\circ$ denotes the concatenation and $f_i \cdot [u_1,\ldots,u_{n'}] = [f_i u_1, \ldots, f_i u_{n'}]$.
It is clear that $s_{i+1}$ constructs the Pauli operator
\begin{align*}
\int s_{r} 
= f^{2^{r-1}} \int s_{r-1} = f^{2^{r-1}} f^{2^{r-2}} \int s_{r-2} 
= f^{2^{r-1} + 2^{r-2} + \cdots + 1} \int s_0 = f^{2^r-1} p 
\end{align*}
whose image under $\epsilon$ is $f^{2^r}v$. 
Thus, if $r$ is large enough so that $2^r$ is greater than the size of $v$,
the configuration of excitations is precisely $l$ copies of $v$.
The distance between $v$'s is at least $2^r$ minus twice the size of $v$.

Therefore, there is a constant $e > 0$ such that for any $r \ge 0$
the energy of $f^{2^r}v \in E$ is $\le e$.
Let $\Delta(r)$ be the maximum energy during the process $s_r$.
We prove by induction on $r$ that 
\[
\Delta(r) \le el (r+1).
\]
When $r = 0$, it is trivial.
In $s_{r+1}$, the energy is $\le \Delta(r)$ until $f_1^{2^r} s_r$ is finished.
At the end of $f_1^{2^r} s_r$, the energy is $\le e $.
During the subsequent $f_2^{2^r} s_r$, the energy is $\le \Delta(r) + e$,
and at the end of $(f_1^{2^r} s_r) \circ (f_2^{2^r} s_r)$, the energy is $\le 2 e$.
During the subsequent $f_j^{2^r} s_r$, the energy is $\le \Delta(r) + je$.
Therefore, 
\[
\Delta(r+1) \le \Delta(r) + el \le el(r+2)
\]
by the induction hypothesis.
\end{proof}

Fractal operators appear in Newman-Moore model~\cite{NewmanMoore1999Glassy} where classical spin glass is discussed.
Their model has generating matrix $\sigma = \begin{pmatrix} 0 & 1+x+y \end{pmatrix}^T$.
The theorem is a simple generalization of Newman and Moore's construction.
Another explicit example of fractal operators 
in a quantum model can be found in Section~\ref{sec:numerics}.

Note that the notion of fractal generators includes that of ``string operators.''
In fact, a fractal generator that contains exactly two terms
gives a family of nontrivial {\em string segments} of unbounded length,
as defined in Chapter~\ref{chap:cubic-code}.

Below, we point out a couple of sufficient conditions 
for nontrivial charges, or equivalently, fractal generators to exist.
\begin{prop}
For code Hamiltonians,
the existence of a fractal generator is a property of an equivalence class of Hamiltonians
in the sense of Definition~\ref{defn:equiv-H}.
\label{prop:fractal-property-of-eq-class}
\end{prop}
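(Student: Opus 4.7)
The plan is to reduce the statement to a purely algebraic property via Proposition~\ref{prop:nofractal-torsionless}: a fractal generator exists if and only if the torsion submodule $T(\coker \epsilon)$ is nonzero. Since the equivalence relation of Definition~\ref{defn:equiv-H} is generated by three elementary operations---symplectic transformations, coarse-graining, and tensoring of ancillas---I would verify the invariance of the condition $T(\coker \epsilon) \neq 0$ under each of them in turn.

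For a symplectic transformation induced by an automorphism $U$ of $P$ satisfying $U^\dagger \lambda_q U = \lambda_q$, the new generating map is $\sigma' = U\sigma$, and therefore $\epsilon' = (U\sigma)^\dagger \lambda_q = \sigma^\dagger U^\dagger \lambda_q = \sigma^\dagger \lambda_q U^{-1} = \epsilon \circ U^{-1}$. Since $U^{-1}$ is an automorphism of $P$, the images agree and $\coker \epsilon$ is literally unchanged as an $R$-module. For ancilla tensoring, one forms the direct sum of the existing chain complex with the trivial complex $0 \to R \to R^2 \to R \to 0$, whose excitation map is surjective; the resulting cokernel is $\coker \epsilon \oplus 0 \cong \coker \epsilon$, so the torsion submodule is again preserved.

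The only substantive case is coarse-graining, in which the scalar ring shrinks from $R$ to a subring $R'$ such that $R$ is free of finite rank $c$ as an $R'$-module. The underlying abelian group of $\coker \epsilon$ is unchanged, so nonzero elements remain nonzero. Any $R'$-annihilation relation on an element $v$ is a fortiori an $R$-annihilation relation, giving one direction for free. For the reverse direction, suppose $fv = 0$ with $f \in R \setminus \{0\}$ and $v \neq 0$. The key input is that $R$, being a free $R'$-module of finite rank and itself an integral domain containing the domain $R'$, is integral over $R'$. Multiplication by $f$ is an $R'$-linear endomorphism $m_f$ of the free module $R$; injectivity of $m_f$ (from $R$ being a domain) forces its determinant $N(f) \in R'$ to be nonzero. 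The Cayley--Hamilton identity applied to $m_f$ yields a monic relation $f^{c} + a_{c-1} f^{c-1} + \cdots + a_1 f \pm N(f) = 0$ in $R$ with $a_i \in R'$, whence $N(f) \in fR$. Writing $N(f) = f g$ with $g \in R$ and using commutativity, $N(f)\, v = g\,(f v) = 0$, which exhibits $v$ as $R'$-torsion with annihilator $N(f) \in R' \setminus \{0\}$.

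I expect the only real obstacle to lie in the norm argument for coarse-graining; the other two operations reduce to direct identifications of cokernels, and no further ideas beyond Proposition~\ref{prop:nofractal-torsionless} and the standard determinant trick should be required.
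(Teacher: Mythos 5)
Your treatment of the three elementary operations matches the paper's, and the individual arguments are correct: the symplectic and tensoring cases reduce to the observation that $\coker\epsilon$ is literally unchanged, and your Cayley--Hamilton argument for coarse-graining is a sound elaboration of the paper's one-line claim that ``the determinant of $f$ as a matrix over $R'$ is a fractal generator'' --- $R$ is a Laurent polynomial ring over a field, hence a domain, so $m_f$ is injective, its determinant $N(f)\in R'$ is nonzero over $\mathrm{Frac}(R')$, and the constant coefficient of its characteristic polynomial places $N(f)$ in $fR$.

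There is a real gap, however. Definition~\ref{defn:equiv-H} is stated in terms of stabilizer \emph{modules} $\im\sigma$, not generating \emph{matrices} $\sigma$, whereas $\coker\epsilon$ and its torsion are defined from the presentation $\epsilon = \sigma^\dagger\lambda_q : P \to E = R^t$, which depends on the chosen generating matrix --- for instance, $\sigma'$ could have redundant columns, changing the rank $t$ of $E$ entirely. Before one can speak of a ``property of the equivalence class,'' one must verify that the existence of a fractal generator depends on $\sigma$ only through $\im\sigma$. The paper devotes its first (and longest) paragraph to precisely this: given $\im\sigma = \im\sigma'$, it constructs matrices $B$, $B'$ with $\epsilon' = B\epsilon$ and $\epsilon = B'\epsilon'$, deduces from $BB'\epsilon' = \epsilon'$ that $B'$ and $B$ are injective on $\im\epsilon'$ and $\im\epsilon$, and uses this to transport a fractal generator of $\epsilon$ to one of $\epsilon'$. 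Your proposal never addresses this case and would need to supply this argument (or equivalently, show directly that the torsion submodule of $\coker\epsilon$ is, up to isomorphism, determined by $\im\sigma$ alone) to be complete.
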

\begin{proof}
Suppose $\im \sigma = \im \sigma'$.
Each column of $\sigma'$ is a $R$-linear combination of those of $\sigma$, and vice versa.
Thus, there is a matrix $B$ and $B'$ such that $\epsilon' = B\epsilon$ and $\epsilon = B' \epsilon'$.
$BB'$ and $B'B$ are identity on $\im \epsilon'$ and $\im \epsilon$, respectively.
In particular, $B'$ and $B$ are injective on $\im \epsilon'$ and $\im \epsilon$, respectively.
Suppose $f$ is a fractal generator for $\epsilon$, i.e., $fv = \epsilon p \ne 0$.
Then, $0 \ne Bfv = f B v = B\epsilon(p) = \epsilon'(p)$.
If $Bv \in \im \epsilon'$, then $v = B'Bv \in \im \epsilon$, a contradiction.
Therefore, $f$ is also a fractal generator for $\epsilon'$.
By symmetry, a fractal generator for $\epsilon'$ is a fractal generator for $\epsilon$, too.

Suppose $R' \subseteq R$ is a coarse-grained base ring.
If $\coker \epsilon$ is torsion-free as an $R$-module,
then so it is as an $R'$-module.
If $f \in R$ is a fractal generator,
the determinant of $f$ as a matrix over $R'$ is a fractal generator.

A symplectic transformation or tensoring ancillas
does not change $\coker \epsilon$.
\end{proof}

\begin{prop}
For any ring $S$ and $t \ge 1$,
if $0 \to S^t \to S^{2t} \xrightarrow{\phi} S^t$ is exact and $I(\phi) \ne S$,
then $\coker \phi$ is not torsion-free.
In particular,
for a degenerate exact code Hamiltonian,
if $\sigma$ is injective, then there exists a fractal generator.
\label{prop:injective-sigma-implies-fractal}
\end{prop}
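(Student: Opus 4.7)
The strategy is proof by contradiction: I will assume $\coker\phi$ is torsion-free and deduce $I(\phi)=S$, contradicting the hypothesis. The key idea is to extend the given sequence one step further on the right, producing a longer exact complex of finitely generated free modules, and then to read off strong rank information via Buchsbaum--Eisenbud (Proposition~\ref{prop:exact-sequence}). Concretely, since $\coker\phi$ is torsion-free and finitely presented, Proposition~\ref{prop:nofractal-torsionless} supplies a finitely generated free module $E'$ together with a map $\psi\colon S^t\to E'$ --- the composition of the projection $S^t\twoheadrightarrow\coker\phi$ with an embedding $\coker\phi\hookrightarrow E'$ --- such that
\[
 0 \to S^t \xrightarrow{\alpha} S^{2t} \xrightarrow{\phi} S^t \xrightarrow{\psi} E'
\]
is exact.

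I then apply Proposition~\ref{prop:exact-sequence} to this length-three exact complex. The rank identities read
\[
 \rank\alpha = t,\qquad \rank\alpha+\rank\phi = 2t,\qquad \rank\phi+\rank\psi = t,
\]
forcing $\rank\phi=t$ and $\rank\psi=0$. Because rank is defined as the largest $k$ with $I_k\ne 0$, the conclusion $\rank\psi=0$ means $I_1(\psi)=0$, i.e.\ every entry of the matrix of $\psi$ vanishes, so $\psi=0$. Since $\psi$ factors through the injection $\coker\phi\hookrightarrow E'$, this forces $\coker\phi=0$, hence $\phi$ is surjective. The resulting short exact sequence $0\to S^t\to S^{2t}\to S^t\to 0$ splits because $S^t$ is free, and after a change of basis $\phi$ becomes the canonical projection $\bigl(\id_t\ \ 0\bigr)$, whose $t\times t$ determinantal ideal is all of $S$. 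This contradicts $I(\phi)=I_t(\phi)\ne S$. The step I expect to require the most care is the passage from $\rank\psi=0$ to $\psi=0$, but the determinantal-ideal definition of rank makes it immediate and free of any domain hypothesis on $S$.

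For the \emph{in particular} clause I first reconcile the indexing. Remark~\ref{rem:rank-sigma-m} gives $\rank\sigma=q$; on the other hand, injectivity of $\sigma\colon R^t\hookrightarrow R^{2q}$ forces $\rank\sigma=t$ (a McCoy-type fact: an injective map out of a finitely generated free module has rank equal to the rank of its source), so $t=q$. Consequently $0\to R^q\xrightarrow{\sigma} R^{2q}\xrightarrow{\epsilon} R^q$ is exact, matching the template of the first part with $S=R$ and $\phi=\epsilon$. The antipode $\dagger$ is a ring automorphism that sends $I_q(\sigma)$ onto $I_q(\epsilon)$, so degeneracy (properness of the associated ideal $I_q(\sigma)$) implies $I(\epsilon)=I_q(\epsilon)\ne R$. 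The first part then produces a nonzero torsion element of $\coker\epsilon$, which by definition is a fractal generator.
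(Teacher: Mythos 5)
Your proof is correct, but it follows a genuinely different route from the paper's. The paper argues directly: applying Proposition~\ref{prop:exact-sequence} to the given length-two complex yields $\rank\phi=t$, so $I(\phi)=I_t(\phi)$ is exactly the initial Fitting ideal $F_0(\coker\phi)$, which is therefore nonzero and --- by hypothesis --- proper; the containments $(\ann M)^n\subseteq F_0(M)\subseteq \ann M$ then give $0\ne\ann\coker\phi\ne S$, so $\coker\phi$ is nonzero and admits a nonzero annihilating element, hence has torsion. This argument is short, requires no hypothesis beyond commutativity of $S$, and matches the proposition's ``for any ring $S$'' clause verbatim. Your argument instead proceeds by contradiction: you prolong the complex via Proposition~\ref{prop:nofractal-torsionless} (the Bruns--Vetter equivalence between torsion-free and embeds-in-free, available because $\coker\phi$ has a finite free resolution), apply Buchsbaum--Eisenbud to the longer complex to force $\rank\psi=0$, hence $\psi=0$, hence $\coker\phi=0$, and then read off $I_t(\phi)=S$ from the split short exact sequence. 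The logic goes through, and the extra Buchsbaum--Eisenbud step nicely explains why the resolution must truncate; but the trade-off is that you lean on the torsion-free $\Leftrightarrow$ embeds-in-free equivalence, a heavier tool than the Fitting-ideal/annihilator sandwich the paper uses. You should also be careful that Proposition~\ref{prop:nofractal-torsionless} as stated in the paper is phrased for the specific map $\epsilon$ over the Laurent polynomial ring $R$; to use it for arbitrary $S$ you are implicitly invoking the more general cited theorem, which is mathematically fine but worth flagging. For the ``in particular'' clause you and the paper coincide: the McCoy-type fact that an injective map out of a free module has rank equal to the source rank is precisely the second bullet of Buchsbaum--Eisenbud, which is also how Remark~\ref{rem:rank-sigma-m} pins down $t=q$.
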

\begin{proof}
By Proposition~\ref{prop:exact-sequence}, $\rank \phi = t$.
Since $0 \subsetneq I_t(\phi) \subsetneq S$ is the initial Fitting ideal,
we have $0 \ne \ann \coker \phi \ne S$. That is, $\coker \phi$ is not torsion-free.

For the second statement, set $S=R$.
If $\sigma$ is injective, we have an exact sequence
\[
 0 \to G \xrightarrow{\sigma} P \xrightarrow{\epsilon} E .
\]
By Remark~\ref{rem:rank-sigma-m}, $t = \rank G = \rank \sigma = \rank \epsilon = q$.
\end{proof}

\begin{prop}
Suppose the characteristic dimension is $D-2$
for a degenerate exact code Hamiltonian.
Then, there exists a fractal generator.
\label{prop:maximal-characteristic-dimension-means-fractal}
\end{prop}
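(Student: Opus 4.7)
The plan is to argue by contradiction: assume $\coker \epsilon$ is torsion-free and derive a contradiction from the Buchsbaum--Eisenbud acyclicity criterion (Proposition~\ref{prop:exact-sequence}) applied to a long exact sequence of free modules built around $\sigma$ and $\epsilon$.

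First, since $\coker \epsilon$ is assumed torsion-free, Proposition~\ref{prop:nofractal-torsionless} provides a free $R$-module $E_1$ of finite rank and a map $\epsilon_1 : E \to E_1$ making $P \xrightarrow{\epsilon} E \xrightarrow{\epsilon_1} E_1$ exact at $E$. To extend in the other direction, note that $\ker \sigma \subseteq G$ is torsion-free (as a submodule of a free module) and finitely generated, and $R = \FF_2[x_1^{\pm 1},\ldots,x_D^{\pm 1}]$ has finite global dimension $D$ by Hilbert's syzygy theorem, so $\ker \sigma$ admits a finite free resolution $0 \to F_a \to \cdots \to F_1 \to \ker \sigma \to 0$. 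Splicing yields an exact complex of finitely generated free modules
\[
0 \to F_a \to \cdots \to F_1 \to G \xrightarrow{\sigma} P \xrightarrow{\epsilon} E \xrightarrow{\epsilon_1} E_1,
\]
exact at every term except possibly $E_1$.

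Next, I would apply Proposition~\ref{prop:exact-sequence} to this complex. Labeling the maps from the right as $\phi_1 = \epsilon_1$, $\phi_2 = \epsilon$, $\phi_3 = \sigma$, and so on, the theorem asserts that for every $k$, either $I(\phi_k) = R$ or $\codim I(\phi_k) \ge k$. By Remark~\ref{rem:rank-sigma-m}, $\rank \sigma = q$, so $I(\phi_3) = I_q(\sigma)$. The degeneracy hypothesis rules out $I_q(\sigma) = R$ (Corollary~\ref{cor:unit-characteristic-ideal-means-nondegeneracy}), and the hypothesis $\dim R/I_q(\sigma) = D - 2$, together with the fact that $R$ is a Cohen--Macaulay (indeed regular) ring of Krull dimension $D$, forces $\codim I_q(\sigma) = 2$. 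This contradicts the required bound $\codim I(\phi_3) \ge 3$, so $\coker \epsilon$ must have a nonzero torsion element; by Proposition~\ref{prop:nofractal-torsionless} this is equivalent to the existence of a fractal generator.

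The main subtlety is bookkeeping the position of $\sigma$ in the resulting complex: the Buchsbaum--Eisenbud bound at $\sigma$ is exactly what is needed (codim $\ge 3$, one more than the actual codim $2$) precisely because we append a single extra free term $E_1$ on the right. Without the extension $\epsilon_1$ supplied by the torsion-free assumption via Proposition~\ref{prop:nofractal-torsionless}, $\sigma$ would occupy position $2$ of the complex, where the bound $\codim \ge 2$ is already satisfied by hypothesis and no contradiction arises. Verifying that the ranks along the spliced sequence are consistent with Buchsbaum--Eisenbud --- in particular $\rank \epsilon_1 = t - q$ and $\rank(F_1 \to G) = t - q$ --- is routine given the rank identities recorded in Remark~\ref{rem:rank-sigma-m}.
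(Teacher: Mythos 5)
Your proof is correct and follows essentially the same argument as the paper's: assume no fractal generator, use Proposition~\ref{prop:nofractal-torsionless} to append $\epsilon_1 : E \to E_1$ on the right (pushing $\sigma$ to position three), extend to a finite free complex on the left via a finite free resolution (the paper cites the Hilbert syzygy theorem for this, exactly as you do), then apply Buchsbaum--Eisenbud to conclude $\codim I(\sigma) \ge 3$, contradicting $\codim I_q(\sigma) = 2$ from the degeneracy and characteristic-dimension hypotheses. Your closing remark about why the single extra free term $E_1$ on the right is exactly what produces the contradiction is a correct reading of the mechanism, and the paper relies on it implicitly.
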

\begin{proof}
Suppose on the contrary there are no fractal generators.
Then, by Proposition~\ref{prop:nofractal-torsionless},
\[
 G \xrightarrow{\sigma} P \xrightarrow{\epsilon} E \to E'
\]
is exact for some finitely generated free module $E'$.
Together with Hilbert syzygy theorem~\cite[Corollary~15.11]{Eisenbud},%
\footnote{We will prove a stronger version in Lemma~\ref{lem:coker-epsilon-resolution-length-D}.}
Proposition~\ref{prop:exact-sequence} implies
$\codim I(\sigma) \ge 3$ unless $I(\sigma) = R$.
But, $\codim I(\sigma) = 2$ and $I(\sigma) \ne R$ by 
Corollary~\ref{cor:unit-characteristic-ideal-means-nondegeneracy}.
This is a contradiction.
\end{proof}

\chapter{Structure of exact Hamiltonians on lattices}
\label{chap:lowD-codes}

One of the goals of the formalism in the previous chapter
is to classify all code Hamiltonians that are translationally invariant.
A satisfactory classification should be 
comprised of a complete set of invariants,
and a table of inequivalent code Hamiltonians or a machinery to obtain them.
In this chapter we present preliminary results towards this goal.

In one dimension, the classification problem is completely solved.
A particularly nice property of one dimension is
that its translation group algebra $\FF_2[x,x^{-1}]$ is a Euclidean domain,
over which any matrix is diagonal 
up to multiplications on the left and right by invertible matrices.
We show that this diagonalization is possible
even if the left multiplications are restricted to symplectic transformations.
Exploiting the finiteness of the ground field $\FF_2$,
we conclude that there are only Ising models~\cite{Yoshida2011Classification}.
An almost identical treatment appears in quantum convolutional
codes~\cite{OllivierTillich2004Convolutional,GrasslRotteler2006Convolutional}.
(A more general statement is proved by Beigi~\cite{Beigi2012OneDimension}.)

In two dimensions, we find that 
any excitation of an exact code Hamiltonian 
is described by a collection of point-like ones,
each of which is attached to a string operator.
Furthermore, it seems that one can prove a stronger statement that there are only toric codes%
~\cite{Yoshida2011Classification,BombinDuclosCianciPoulin2012}, 
although we do not complete the proof.
In three dimensions, we show that there must be point-like excitations for exact code Hamiltonians.
This implies that it is unavoidable in the translation-invariant case
to have a topological charge penetrating through the system,
for which the energy penalty is only logarithmic in the displacement.
If we assume that the degeneracy under periodic boundary conditions should be constant independent of system size,
then the energy penalty is upper bounded by a constant,
reproducing the result of Yoshida~\cite{Yoshida2011feasibility}.


Note that all the lemmas and theorems in this chapter are valid over qudits with prime dimensions.
We conclude this chapter with several examples.

\section{One dimension}

The group algebra $R = \FF_2[x,\bar x]$ for the one-dimensional lattice $\mathbb{Z}$
is a Euclidean domain
where the degree of a polynomial is defined to be the maximum exponent minus the minimum exponent.
(In particular, any monomial has degree $0$.)
Given two polynomials $f,g$ in $R$, one can find their $\gcd$ by the Euclid's algorithm.
It can be viewed as a column operation on the $1 \times 2$ matrix $\begin{pmatrix} f & g \end{pmatrix}$.
Similarly, one can find $\gcd$ of $n$ polynomials by column operations on $1 \times n$ matrix
\[
 \begin{pmatrix}
  f_1 & f_2 & \cdots & f_n
 \end{pmatrix} .
\]
The resulting matrix after the Euclid's algorithm will be
\[
\begin{pmatrix}
\gcd(f_1,\ldots,f_n) & 0 & \cdots & 0
\end{pmatrix} .
\]

Given a matrix $\mathbf M$ of univariate polynomials, we can apply Euclid's algorithm
to the first row and first column by elementary row and column operations
in such a way that the degree of $(1,1)$-entry $\mathbf M_{11}$ decreases unless all other
entries in the first row and column are divisible by $\mathbf M_{11}$.
Since the degree cannot decrease forever, this process must end
with all entries in the first row and column being zero except $\mathbf M_{11}$.
By induction on the number of rows or columns,
we conclude that $\mathbf M$ can be transformed to a diagonal matrix
by the elementary row and column operations.
This is known as the Smith's algorithm.

The following is a consequence of the finiteness of the ground field.
\begin{lem}
Let $\FF$ be a finite field and $S = \FF[x]$ be a polynomial ring.
Let $\phi : S \xrightarrow{f(x) \times} S$ be a $1 \times 1$ matrix such that $f(0) \ne 0$.
$\phi$ can be viewed as an $n \times n$ matrix acting on the free $S'$-module $S$
where $S' = \FF[x']$ and $x' =x^n$.
Then, for some $n \ge 1$, the matrix $\phi$ is transformed by elementary row and column operations
into a diagonal matrix with entries $1$ or $x'-1$.
The number of $x'-1$ entries in the transformed $\phi$ is equal to the degree of $f$.
\label{lem:coarse-graining-single-polynomial-in-1D}
\end{lem}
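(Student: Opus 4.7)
The plan is to choose $n$ so that $f(x) \mid x^n - 1$, which then lets me read off the Smith normal form of $\phi$ from the structure of the cokernel as a module over the principal ideal domain $S' = \FF[x']$. The hypothesis $f(0) \ne 0$ means $\gcd(x, f(x)) = 1$ in $\FF[x]$, so the image of $x$ is a unit in $A = \FF[x]/(f(x))$. Since $\FF$ is finite and $\deg f$ is finite, $A$ is a finite ring and in particular its unit group $A^\times$ is finite. I will take $n$ to be the multiplicative order of $\bar x$ in $A^\times$; then $x^n \equiv 1 \pmod{f(x)}$, equivalently $f(x) \mid x^n - 1$.

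Next I would identify the relevant modules. With this $n$, view $S = \FF[x]$ as a free $S'$-module of rank $n$ with basis $\{1, x, \ldots, x^{n-1}\}$, using $S \cong S'[t]/(t^n - x')$ via $t \mapsto x$. Multiplication by $f$ is then represented by an $n \times n$ matrix over $S'$, namely the matrix $\phi$ of the lemma. Its cokernel, as $S'$-module, is $\coker \phi \cong \FF[x]/(f(x))$, which has $\FF$-dimension $d = \deg f$. By the choice of $n$ the element $x' - 1 = x^n - 1$ acts as zero on $\coker \phi$.

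Then I would invoke Smith normal form over the Euclidean domain $S'$. Row and column swaps, unit scalings, and Euclidean row-additions suffice to bring $\phi$ to a diagonal form $\mathrm{diag}(d_1, \ldots, d_n)$ with $d_1 \mid d_2 \mid \cdots \mid d_n$. The map $\phi$ is injective because multiplication by the nonzero $f$ in the domain $S$ is, so each $d_i$ is nonzero. The decomposition $\coker \phi \cong \bigoplus_i S'/(d_i)$ is annihilated by $x' - 1$, which forces $d_i \mid x' - 1$ for every $i$. Since $x' - 1$ is irreducible in $S'$, each $d_i$ is a unit times $1$ or a unit times $x' - 1$; the scalar units in $\FF^\times$ can be absorbed by elementary column scalings, so I may take $d_i \in \{1, \, x' - 1\}$. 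A final $\FF$-dimension count using $\dim_\FF S'/(x' - 1) = 1$ and $\dim_\FF S'/(1) = 0$ gives exactly $d$ entries equal to $x' - 1$, and the remaining $n - d$ entries equal to $1$.

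The only delicate point — and the one I expect to require care rather than real difficulty — is the choice of $n$ when $f$ has repeated roots. In that case $f \nmid x^m - 1$ for any $m$ coprime to $\mathrm{char}(\FF)$, so $n$ must be divisible by the characteristic to supply extra multiplicity at the roots of $x^n - 1$. The slick definition as the order of $\bar x$ in $A^\times$ handles the separable part and the wild part simultaneously, so I anticipate no real obstacle beyond verifying this single existence statement cleanly.
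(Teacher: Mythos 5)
Your proof is correct and follows essentially the same route as the paper: choose $n$ with $f \mid x^n - 1$, view $\phi$ as an $n \times n$ matrix over $S' = \FF[x^n]$, apply Smith normal form, observe that the invariant factors divide $x'-1$, and count by comparing $\dim_\FF \coker\phi = \deg f$. Your definition of $n$ as the order of $\bar x$ in $(\FF[x]/(f))^\times$ is a cleaner packaging of the same integer (the paper reaches the minimal such $n$ via the splitting field plus a power of the characteristic to absorb multiplicities), and your deduction that the $d_i$ divide $x'-1$ via annihilation of the cokernel replaces the paper's argument from the factorization $AB = (x'-1)\id_n$; both are minor variants within the same strategy.
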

\begin{proof}
The splitting field $\tilde \FF$ of $f(x)$ is a finite extension of $\FF$.
Since $\tilde \FF$ is finite, every root of $f(x)$ is a root of $x^{n'} -1$ for some $n' \ge 1$.
Choose an integer $p \ge 1$ such that $2^p$ is greater than any multiplicity of the roots of $f(x)$.
Then, clearly $f(x)$ divides $(x^{n'}-1)^{2^p} = x^{2^p n'} - 1$.
Let $n$ be the smallest positive integer such that $f(x)$ divides $x^n-1$.%
\footnote{This part is well known, at least in the linear cyclic coding theory~\cite{MacWilliamsSloane1977}.}

Consider the coarse-graining by $S' = \FF[x']$ where $x'=x^n$.
$S$ is a free $S'$-module of rank $n$, and $(f)$ is now an endomorphism of the module $S$
represented as an $n \times n$ matrix.
Since $f(x)g(x) = x^n -1$ for some $g(x) \in \FF[x]$,
we have
\[
 A B = (x'-1) \id_n
\]
where $x' = x^n$, and $A, B$ are the matrix representation of $f(x)$ and $g(x)$, respectively, as endomorphisms.
$A$ and $B$ have polynomial entries in variable $x'$.
The determinants of $A,B$ are nonzero for their product is $(x'-1)^n \ne 0$.
Let $E_1$ and $E_2$ be the products of elementary matrices such that $A' = E_1 A E_2$ is diagonal.
Such matrices exist by the Smith's algorithm.
Put $B' = E_2^{-1} B E_1^{-1}$.
Then,
\[
 A' B' = E_1 A E_2 E_2^{-1} B E_1^{-1} = E_1 A B E_1^{-1} = (x'-1)\id_n .
\]
Since $A'$ and $I_n$ are diagonal of non-vanishing entries, $B'$ must be diagonal, too.
It follows that the diagonal entries of $A'$ divides $(x'-1)$; that is, they are $1$ or $x'-1$.

The number of entries $x'-1$ can be counted by considering $S/(f(x))$ as an $\FF$-vector space.
It is clear that $\dim_{\FF} S/(f(x)) = \deg f(x)$.
$S/(f(x)) = \coker \phi$ viewed as a $S'$-module is isomorphic to $S'^n / \im A'$,
the vector space dimension of which is precisely the number of $x'-1$ entries in $A'$.
\end{proof}

For example, consider $f(x) = x^2 + x + 1 \in S = \FF_2[x]$.
It is the primitive polynomial of the field $\FF_4$ of four elements over $\FF_2$.
Any element in $\FF_4$ is a solution of $x^4-x=0$.
Since $f(0) = 1$, we see that $n=3$ is the smallest integer such that $f(x)$ divides $x^n-1$.
As a module over $S' = \FF_2[x^3]$, the original ring $S$ is free with (ordered) basis $\{1, x, x^2\}$.
The multiplication by $x$ on $S$ viewed as an endomorphism has a matrix representation
\[
x = 
 \begin{pmatrix}
  0 & 0 & x^3 \\
  1 & 0 & 0   \\
  0 & 1 & 0   
 \end{pmatrix} .
\]
Thus, $f(x)$ as an endomorphism of $S'$-module $S$ has a matrix representation as follows.
\[
 f(x) =
 \begin{pmatrix}
  1 & x^3 & x^3 \\
  1 & 1   & x^3 \\
  1 & 1   & 1  
 \end{pmatrix} 
\cong
 \begin{pmatrix}
  1 & 0      & 0 \\
  0 & x^3 +1 & 0 \\
  0 & 0      & x^3 +1
 \end{pmatrix}
\]
Here, the second matrix is obtained by row and column operations.
There are 2 diagonal entries $x^3+1$ as $f(x)$ is of degree 2.

\begin{theorem}
If $\Lambda = \mathbb{Z}$,
any system governed by a code Hamiltonian
is equivalent to finitely many copies of Ising models,
plus some non-interacting qubits.
In particular, the topological order condition is never satisfied.
\label{thm:1D-classification}
\end{theorem}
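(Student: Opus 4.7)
The goal is to reduce the $2q \times t$ generating matrix $\sigma$, via symplectic row transformations (Definition~\ref{defn:symplectic-transformation}), column operations, coarse-graining, and removal of ancilla qubits, to a block-diagonal form whose blocks are each either a single $Z$ on a frozen qubit or an Ising generator $(1+x')e_{q+i}$. Crucially, $R = \FF_2[x, x^{-1}]$ is a Euclidean domain, with degree the span of exponents of a Laurent polynomial, so ordinary Smith normal form ideas are available, subject to the constraint that the row operations must be symplectic.

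The proof plan is by induction on $q$. If $\sigma = 0$, the system is a collection of non-interacting qubits and there is nothing to prove. Otherwise, pick any nonzero column $v$ of $\sigma$ and reduce it via symplectic row operations to the form $f \cdot e_{q+1} = (0, \ldots, 0, f, 0, \ldots, 0)^T$ with $f$ at position $q+1$, where $f$ equals the $\gcd$ of the entries of $v$ up to a unit. This is possible because the CNOT family executes the Euclidean algorithm inside the $Z$-block while the Hadamard swaps the $X$- and $Z$-components of any fixed qubit into the block being reduced; the whole procedure is the module-theoretic analogue of bringing a single Pauli operator to $Z_1$ by Clifford gates in the finite-dimensional theory. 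The commutativity condition $\sigma^\dagger \lambda_q \sigma = 0$ then forces every other column $w$ of $\sigma$ to satisfy $\bar f \cdot w_1 = 0$, and since $R$ is a domain and $f \ne 0$, this gives $w_1 = 0$. Column operations (which do not alter $\im \sigma$) now reduce the $(q+1)$-th entries of the remaining columns modulo $f$.

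After multiplying $f$ by a unit monomial to arrange $f(0) \ne 0$, Lemma~\ref{lem:coarse-graining-single-polynomial-in-1D} supplies an integer $n$ such that coarse-graining by $x' = x^n$ converts the $1 \times 1$ matrix $(f)$ into an $n \times n$ diagonal matrix whose diagonal entries are each $1$ or $x'-1 = 1+x'$. In the coarse-grained picture the first qubit of each site becomes an $n$-qubit block: a diagonal entry $1$ stabilises a sub-qubit in $\ket{0}$ and can be discarded as an ancilla after an appropriate Clifford, while a diagonal entry $1+x'$ is precisely the Ising generator of Example~\ref{eg:1d-ising}. The residual $(q+1)$-entries on the surviving columns represent an $R$-linear map that, viewed after the same coarse-graining, is again amenable to Smith-type reduction; further symplectic transformations then decouple the first-qubit block from the rest, and the inductive hypothesis applied to a code on $q-1$ qubits per site closes the induction. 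Topological order fails immediately: for a single Ising factor on a periodic chain of length $L$, the ground states $\ket{0 \cdots 0}$ and $\ket{1 \cdots 1}$ are distinguished by the local operator $Z_1$, so $\Pi_L Z_1 \Pi_L$ is not proportional to $\Pi_L$, in violation of Definition~\ref{defn:tqo}.

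The main obstacle is the decoupling step at the end of the induction. After reducing modulo $f = x'-1$, the residual $(q+1)$-entries on the surviving columns are elements of $\FF_2$; although they commute with the first generator by construction, any nonzero residue couples the first sub-qubit to the remaining $q-1$ qubits per site. Showing that such residues can always be eliminated by a finite sequence of symplectic transformations, possibly after one more coarse-graining, is the technical heart of the argument, and it is precisely the point at which the finiteness of the ground field $\FF_2$ (as exploited inside Lemma~\ref{lem:coarse-graining-single-polynomial-in-1D}) becomes essential.
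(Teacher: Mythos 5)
Your plan has a genuine gap, and it is precisely the step you flag at the end. After you bring the first column to $f e_{q+1}$, the commutativity $\sigma^\dagger \lambda_q \sigma = 0$ only forces the \emph{first}-row entries of the remaining columns to vanish; the $(q+1,j)$-entries for $j>1$ are not constrained, and column operations can only reduce them modulo $f$. When $f$ is not a unit a nonzero residue generically survives, so the first qubit is still coupled to the rest. You acknowledge this but give no method, and your suggestion (coarse-graining plus the finiteness of $\FF_2$, as in Lemma~\ref{lem:coarse-graining-single-polynomial-in-1D}) does not resolve it: those tools reduce the residue to an element of $\FF_2$, not to zero. The root cause is that you put the pivot in the $Z$-block, which is the wrong block for this argument.

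The paper places the $\gcd$ at position $(1,1)$ in the $X$-block instead. Smith's algorithm (CNOT row operations within the $X$-block, together with column operations) then clears the entire first row, so $(1,j)=0$ for $j>1$. Row $q+1$ is cleared in two stages: the self-commutativity $\bar f_1 g_1 = \bar g_1 f_1$ of the pivot column, combined with the degree comparison $\deg f_1 \le \deg g_1$, yields an explicit controlled-Phase whose parameter is automatically self-conjugate and which strictly decreases $\deg g_1$; iterating this with Hadamard swaps drives $g_1\to 0$ by well-ordering of the degree. Then cross-commutativity with the remaining columns forces $\bar f_1 g_i = 0$, hence $g_i = 0$. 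With both rows $1$ and $q+1$ empty off the pivot, the first qubit decouples exactly, leaving no residue, and the induction closes. Note also that your preliminary claim --- that any single nonzero column can be brought to $f e_{q+1}$ by symplectic row operations, with $f$ the $\gcd$ --- is asserted rather than proved; it is \emph{not} simply the finite-dimensional fact that every non-identity Pauli can be mapped to $Z_1$ by Cliffords, because over $\FF_2[x,x^{-1}]$ the controlled-Phase parameter must satisfy $c = \bar c$, so Euclid across the $X$/$Z$ boundary is constrained and succeeds only after the commutativity relation is invoked, exactly as the paper's degree comparison does.
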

\noindent
Yoshida~\cite{Yoshida2011Classification} arrived at a similar conclusion
assuming that the ground space degeneracy when the Hamiltonian is defined on a ring 
should be independent of the length of the ring.
If translation group is trivial,
the proof below reduces to a well-known fact that 
the Clifford group is generated by controlled-NOT, Hadamard, and Phase
gates~\cite[Proposition~15.7]{KitaevShenVyalyi2002CQC}.
The proof in fact implies that the group of all symplectic transformations
in one dimension is generated by elementary symplectic transformations
of Section~\ref{sec:symplectic-transformations}.

We will make use of the elementary symplectic transformations
and coarse-graining to deform $\sigma$ to a familiar form.
Recall that for any elementary row-addition $E$ on the upper block of $\sigma$
there is a unique symplectic transformation that restricts to $E$.

\begin{proof}
Applying Smith's algorithm to the first row and the first column of $2q \times t$ matrix $\sigma$,
one gets
\[
\begin{pmatrix}
  f_1 & 0 \\
  0   & A \\
  \hline
  g_1 & g_2 \\
  \vdots   & B  
\end{pmatrix}
\]
by elementary symplectic transformations.
Let $1 \le i < j \le q$ be integers.
If some $(1,q+j)$-entry is not divisible by $f_1$,
apply Hadamard on $j^\text{th}$ qubit to bring $(q+j)^\text{th}$ row to the upper block,
and then run Euclid's algorithm again to reduce the degree of $(1,1)$-entry.
The degree is a positive integer, so this process must end after a finite number of iteration.
Now every $(q+j,1)$-entry is divisible by $f_1$ and hence can be made to be $0$
by the controlled-NOT-Hadamard:
\[
\begin{pmatrix}
  f_1 & 0 \\
  0   & A \\
  \hline
  g_1 & g_2 \\
  0   & B  
\end{pmatrix} .
\]
Further we may assume $\deg f_1 \le \deg g_1$.
Since $\sigma^\dagger \lambda_q \sigma = 0$,
we have a commutativity condition
\[
 \bar f_1 g_1 - \bar g_1 f_1 = 0 .
\]
Write $f_1 = \alpha x^a + \cdots + \beta x^b $
and $ g_1 = \gamma x^c + \cdots + \delta x^d $
where $ a \le b$ and $c \le d$ and $\alpha,\beta,\gamma,\delta \neq 0$.
Then, $\bar f_1 g_1 = \beta \gamma x^{c-b} + \cdots + \alpha \delta x^{d-a}$.
Since $f_1 \bar g_1 = \bar f_1 g_1$, it must hold that $-(c-b)=d-a$
and $\alpha \delta = \beta \gamma$.
Since $\deg f_1 \le \deg g_1$, we have $d-b = -(c-a) \ge 0$.
The controlled-Phase $E_{1+q,1}( -(x^{d-b} + x^{c-a})\delta / \beta )$
will decrease the degree of $g_1$ by two,
which eventually becomes smaller than $\deg f_1$.
One may then apply Hadamard to swap $f_1$ and $g_1$.
Since the degree of $(1,1)$-entry cannot decrease forever,
the process must end with $g_1 = 0$.

The commutativity condition between $i^\text{th}$ ($i>1$) column and the first
is $ f_1 \bar g_i = 0 $. Since $f_1 \ne 0$, we get $g_i = 0$:
\[
\begin{pmatrix}
  f_1 & 0 \\
  0   & A \\
  \hline
  0 & 0 \\
  0 & B  
\end{pmatrix} .
\]
Continuing, we transform $\sigma$ into a diagonal matrix.
(We have shown that $\sigma$ can be transformed 
via elementary symplectic transformations to the Smith normal form.)

Now the Hamiltonian is a sum of non-interacting purely classical spin chains
plus some non-interacting qubits ($f_i = 0$).
It remains to classify classical spin chains
whose stabilizer module is generated by
\[
\begin{pmatrix}
f
\end{pmatrix}
\]
where we omitted the lower half block.
We can always choose $f=f(x)$ such that $f(x)$ has only nonnegative exponents and
$f(0)\ne 0$ since $x$ is a unit in $R$.
Lemma~\ref{lem:coarse-graining-single-polynomial-in-1D} says that
$(f)$ becomes a diagonal matrix of entries $1$ or $x'-1$
after a suitable coarse-graining followed by a symplectic transformation
and column operations.
$1$ describes the ancilla qubits,
and $x'-1 = x'+1$ does the Ising model.
\end{proof}
Note that an almost identical treatment appears in \cite{GrasslRotteler2006Convolutional}.

\section{Two dimensions}
If $D=2$, the lattice is $\Lambda = \mathbb{Z}^2$, and our base ring is $R = \FF_2 [x,\bar x, y, \bar y]$.

The following asserts that the local relations --- 
a few terms in the Hamiltonian that multiply to identity in a nontrivial way
as in 2D Ising model, or the kernel of $\sigma$ ---
among the terms in a code Hamiltonian,
can be completely removed for exact Hamiltonians in two dimensions~\cite{Bombin2011Structure}.
We prove a more general version.
\begin{lem}
If $G \xrightarrow{\sigma} P \xrightarrow{\epsilon} E$ is exact
over $R = F_2[ x_1, \bar x_1, \ldots, x_D, \bar x_D ]$,
There exists $\sigma' : G' \to P$ such that $\im \sigma' = \im \sigma$ and
\[
 0 \to G_{D-2} \to \cdots \to G_1 \to G' \xrightarrow{\sigma'} P \xrightarrow{\epsilon} E
\]
is an exact sequence of free $R$-modules. If $D=2$, one can choose $\sigma'$ to be injective.
\label{lem:coker-epsilon-resolution-length-D}
\end{lem}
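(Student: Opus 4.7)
The plan is to deduce the claim from Hilbert's syzygy theorem via a two-step dimension shift. The base ring $R = \FF_2[x_1^{\pm 1}, \ldots, x_D^{\pm 1}]$ is a localization of the polynomial ring $S = \FF_2[x_1, \ldots, x_D]$, and since localization cannot raise projective dimension, Hilbert's syzygy theorem for $S$ guarantees that every finitely generated $R$-module has projective dimension at most $D$. I would apply this bound to $N = \coker \epsilon$ and split the four-term exact sequence
\[
0 \to \ker \epsilon \to P \to E \to N \to 0
\]
into the two short exact sequences $0 \to \ker \epsilon \to P \to \im \epsilon \to 0$ and $0 \to \im \epsilon \to E \to N \to 0$. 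Because $P$ and $E$ are free, the standard dimension shift via $\mathrm{Ext}$ gives $\mathrm{pd}_R(\im \epsilon) = \max(\mathrm{pd}_R(N) - 1,\, 0)$ and then $\mathrm{pd}_R(\ker \epsilon) = \max(\mathrm{pd}_R(N) - 2,\, 0) \le D - 2$.

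Next I would construct a free resolution of $M := \ker \epsilon = \im \sigma$ by the usual syzygy iteration: pick a surjection from a finitely generated free module $G' \to M$, set $K_1 = \ker(G' \to M)$, pick a surjection $G_1 \to K_1$, and continue. By the projective dimension bound, the kernel reached after $D - 2$ steps is projective. Invoking the Quillen--Suslin theorem in the form extended to Laurent polynomial rings by Swan---which states that finitely generated projective modules over $R$ are free---this terminal kernel can be taken to be a free module $G_{D-2}$. Splicing the resulting finite free resolution of $M$ onto the inclusion $M \hookrightarrow P$ produces the exact sequence asserted by the lemma.

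For the last clause, when $D = 2$ the bound reads $\mathrm{pd}_R(M) \le 0$, so $M$ itself is projective and hence free by Quillen--Suslin/Swan. Taking $G' = M$ with $\sigma'$ the inclusion yields an injective $\sigma'$ with $\im \sigma' = \im \sigma$, as required.

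The technical heart of the argument is the appeal to Quillen--Suslin/Swan that upgrades ``projective'' to ``free'' at the terminal stage of the resolution; this nontrivial input is essentially unavoidable, though one could instead coarse-grain in the sense of Section~\ref{sec:equivalent-hamiltonians} to reduce to a polynomial ring and apply the original Quillen--Suslin theorem. Everything else---the two-step dimension shift and the truncation of the syzygy construction---is routine homological algebra.
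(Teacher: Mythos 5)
Your proof is correct, and it takes a genuinely different route from the paper's.

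The paper's proof is explicitly constructive: it passes to the polynomial ring $S = \FF_2[x_1,\ldots,x_D]$, extends the columns of $\epsilon$ to a Gr\"obner basis by Buchberger's algorithm, and invokes the constructive syzygy theorem (Eisenbud, Thm.~15.10) under which the S-polynomial relations of a Gr\"obner basis have initial terms missing one more variable at each stage. After $D$ iterations the syzygy module is free by inspection of its initial terms, and the $D-2$ bound emerges from a unimodularity/Quillen--Suslin basis change that reorganizes $\tau_1$ so as to split off $\sigma'$ and the identity block. Only at the very end does the paper localize at the monomial semigroup to return to $R$. Your route instead reads off the global-dimension bound $\mathrm{gl.dim}(R)\le D$ from Hilbert syzygy plus the fact that localization does not raise projective dimension, shifts down twice along $0\to\ker\epsilon\to P\to\im\epsilon\to 0$ and $0\to\im\epsilon\to E\to\coker\epsilon\to 0$ to obtain $\mathrm{pd}_R(\ker\epsilon)\le D-2$, and then truncates a syzygy tower and appeals to Quillen--Suslin/Swan to turn the terminal projective kernel into a free module. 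The two arguments buy different things: yours is shorter and conceptually transparent, while the paper's version is an algorithm (important elsewhere in the thesis, where Gr\"obner bases of $\im\epsilon$ are actually computed) and keeps explicit track of how $\sigma'$ is assembled from $\sigma$ via basis change.

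One small inaccuracy in your closing remark: coarse-graining in the sense of Section~\ref{sec:equivalent-hamiltonians} replaces $R$ by $\FF_2[x_1^{\pm L},\ldots,x_D^{\pm L}]$, which is still a Laurent polynomial ring isomorphic to $R$, so it does not reduce the problem to the polynomial-ring case. If you want to avoid invoking Swan's extension, the correct replacement is the paper's own device: carry out the resolution entirely over the polynomial ring $S$ (after clearing denominators in $\epsilon$) where the original Quillen--Suslin theorem applies, and then localize the resulting exact sequence of free $S$-modules at the multiplicatively closed monomial set $U$, using that localization is exact and sends free modules to free modules.
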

\noindent
The lemma is almost the same as the Hilbert syzygy theorem~\cite[Corollary~15.11]{Eisenbud}
applied to $\coker \epsilon$,
which states that any finitely generated module over a polynomial ring with $n$ variables
has a finite free resolution of length $\le n$, by finitely generated free modules.
A difference is that our two maps on the far right in the resolution 
has to be related as $\epsilon = \sigma^\dagger \lambda$.
To this end, we make use of a constructive version of Hilbert syzygy theorem via Gr\"obner basis.
\begin{prop}\cite[Theorem~15.10, Corollary~15.11]{Eisenbud}
Let $\{ g_1, \ldots, g_n \}$ be a Gr\"obner basis of a submodule of
a free module $M_0$ over a polynomial ring.
Then, the S-polynomials $\tau_{ij}$ of $\{ g_i \}$
in the free module $M_1 = \bigoplus_{i=1}^n S e_i$
generate the syzygies for $\{g_i\}$.
If the variable $x_1,\ldots, x_s$ are absent from the initial terms of $g_i$,
one can define a monomial order on $M_1$
such that $x_1,\ldots,x_{s+1}$ is absent from
the initial terms of $\tau_{ij}$.
If all variables are absent from the initial terms of $g_i$,
then $M_0/(g_1,\ldots,g_n)$ is free.
\label{prop:constructive-syzygy}
\end{prop}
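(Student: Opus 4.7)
The plan is to prove the three assertions in order by invoking Schreyer's construction of a Gr\"obner basis for the syzygy module, then optimizing the monomial order so as to peel off one extra variable per step.

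First, I would set up the Schreyer order on $M_1 = \bigoplus_i S e_i$: declare $u e_i \succ v e_j$ iff $u\cdot \lt(g_i) \succ v\cdot \lt(g_j)$ in the given order on $M_0$, with ties broken by $i < j$. Writing $m_{ij} = \mathrm{lcm}(\lt(g_i),\lt(g_j))$, the usual S-polynomial is $S(g_i,g_j) = (m_{ij}/\lt(g_i))g_i - (m_{ij}/\lt(g_j))g_j$, and it lifts to the element $\tau_{ij} = (m_{ij}/\lt(g_i)) e_i - (m_{ij}/\lt(g_j)) e_j \in M_1$. Because $\{g_i\}$ is a Gr\"obner basis, Buchberger's criterion gives a standard representation $S(g_i,g_j) = \sum_k h_{ijk} g_k$ with $\lt(h_{ijk} g_k) \preceq \lt(S(g_i,g_j)) \prec m_{ij}$; thus $\rho_{ij} := \tau_{ij} - \sum_k h_{ijk} e_k$ is a genuine syzygy whose leading term in Schreyer order is exactly the tie-broken term $(m_{ij}/\lt(g_i))\, e_i$ (for $i<j$), since every other monomial appearing in $\rho_{ij}$ is strictly smaller in the Schreyer order. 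To conclude that the $\rho_{ij}$ generate all syzygies, I would run the standard argument: given any syzygy $\rho = \sum f_i e_i$, its leading term $\lt(\rho) = c\cdot u\, e_{i_0}$ forces $u\cdot \lt(g_{i_0})$ to cancel against some $v\cdot \lt(g_{j_0})$ in the image; but $u\cdot\lt(g_{i_0}) = v\cdot\lt(g_{j_0})$ means $m_{i_0 j_0}$ divides $u\cdot\lt(g_{i_0})$, so a suitable monomial multiple of $\rho_{i_0 j_0}$ kills $\lt(\rho)$, and we terminate by Noetherian induction on the Schreyer order.

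Second, for the absence of $x_1,\ldots,x_{s+1}$, I would permute the indexing so that the $g_i$ are ordered by \emph{decreasing} degree in $x_{s+1}$, breaking remaining ties arbitrarily; this permutation is the only freedom needed, since the Schreyer order already respects the original monomial order on $M_0$. Then for $i<j$, the $x_{s+1}$-degree of $\lt(g_i)$ is at least that of $\lt(g_j)$, so $\deg_{x_{s+1}} m_{ij} = \deg_{x_{s+1}}\lt(g_i)$, and therefore the leading monomial $m_{ij}/\lt(g_i)$ of $\tau_{ij}$ is free of $x_{s+1}$. Since by hypothesis $x_1,\ldots,x_s$ are absent from every $\lt(g_k)$, they are absent from $m_{ij}$ and from $\lt(g_i)$, hence absent from $m_{ij}/\lt(g_i)$ as well. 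This gives the desired statement for the initial terms of $\tau_{ij}$ (equivalently $\rho_{ij}$).

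Third, suppose no variable appears in any $\lt(g_i)$; then each $\lt(g_i)$ is a scalar multiple of a basis vector $e_{k_i}$ of $M_0$. Discarding $g_i$'s with repeated $k_i$ (they are redundant modulo the others via S-polynomial reduction with monomial coefficients, which under these hypotheses are scalars), I may assume the $k_i$ are distinct. Filtering $M_0$ by the monomial order, the associated graded of $M_0/(g_1,\ldots,g_n)$ has as a free basis exactly those monomials $x^\alpha e_k$ with $k\notin\{k_1,\ldots,k_n\}$, because each $g_i$ lets us rewrite $e_{k_i}$ in terms of strictly smaller monomials. Lifting this basis back through the filtration shows $M_0/(g_1,\ldots,g_n)$ itself is free on $\{e_k : k \notin\{k_i\}\}$.

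The main obstacle, and the only place requiring care, is the second part: one must verify that reordering the $g_i$ by decreasing $x_{s+1}$-degree is compatible with all the other requirements of the Schreyer order (in particular, that the leading-term analysis of $\rho_{ij}$ versus $\tau_{ij}$ survives reindexing), and one must check that the leading coefficient $m_{ij}/\lt(g_i)$ of the actual syzygy $\rho_{ij}$ coincides with that of $\tau_{ij}$ --- which is where the standard-representation form of Buchberger's criterion is essential.
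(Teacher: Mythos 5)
Your proof is correct and is essentially the argument the paper relies on: the paper gives no proof of its own here but cites Eisenbud's Theorem~15.10 and Corollary~15.11, and your sketch reproduces exactly that route --- the Schreyer order on $M_1$ with index tie-breaking, the standard-representation bound showing $\lt(\rho_{ij})=(m_{ij}/\lt(g_i))e_i$, reordering the $g_i$ by decreasing $x_{s+1}$-degree to eliminate one more variable, and the normal-form (Macaulay) argument for freeness when the initial terms are scalar multiples of basis vectors. The only cosmetic omission is the standard convention that S-pairs are formed only for pairs whose initial terms involve the same basis element of $M_0$, which your cancellation argument in fact uses implicitly and correctly.
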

\begin{proof}[Proof of \ref{lem:coker-epsilon-resolution-length-D}]
Without loss of generality
assume that the $t \times 2q$ matrix $\epsilon$ have entries with nonnegative exponents,
so $\epsilon$ has entries in $S=\FF_2[x_1,\ldots,x_D]$.
Below, every module is over the polynomial ring $S$ unless otherwise noted.
Let $E_+$ be the free $S$-module of rank equal to $\mathrm{rank}_R~ E$.

If $g_1,\cdots, g_{2q}$ are the columns of $\epsilon$,
apply Buchberger's algorithm to obtain a Gr\"obner basis
$g_1,\cdots, g_{2q}, \ldots, g_n$ of $\im \epsilon$.
Let $\epsilon'$ be the matrix whose columns are $g_1,\ldots, g_n$.
We regard $\epsilon'$ as a map $M_0 \to E_+$.
By Proposition~\ref{prop:constructive-syzygy},
the initial terms of the syzygy generators (S-polynomials) 
$\tau_{ij}$ for $\{ g_i \}$ lacks the variable $x_1$.
Writing each $\tau_{ij}$ in a column of a matrix $\tau_1$, we have a map $\tau_1 : M_1 \to M_0$.

By induction on $D$, we have an exact sequence
\[
 M_{D} \xrightarrow{\tau_D} M_{D-1} \xrightarrow{\tau_{D-1}}
 \cdots \xrightarrow{\tau_1} M_0 \xrightarrow{\epsilon'} E_+ 
\]
of free $S$-modules, where the initial terms of columns of $\tau_D$ lack all the variables.
By Proposition~\ref{prop:constructive-syzygy} again, $M'_{D-1} = M_{D-1} / \im \tau_{\tau_D}$ is free.
Since $\ker \tau_{D-1} = \im \tau_D$, we have
\[
 0 \to M'_{D-1} \xrightarrow{\tilde \tau_{D-1}}
 \cdots \xrightarrow{\tau_1} M_0 \xrightarrow{\epsilon'} E_+ 
\]
Since $g_{2q+1},\ldots,g_n$ are $S$-linear combinations of $g_1,\ldots, g_{2q}$,
there is a basis change of $M_0$ so that the matrix representation of $\epsilon'$ becomes
\[
 \epsilon' \cong
\begin{pmatrix}
 \epsilon & 0
\end{pmatrix} .
\]
With respect to this basis of $M_0$,
the matrix of $\tau_1$ is
\[
 \tau_1 \cong
\begin{pmatrix}
 \tau_{1u} \\
 \tau_{1d}
\end{pmatrix}
\]
where $\tau_{1u}$ is the upper $2q \times t'$ submatrix.
Since $\ker \epsilon' = \im \tau_1$, 
The first row $r$ of $\tau_{1d}$ should generate $1 \in S$.
(This property is called unimodularity.)
Quillen-Suslin theorem~\cite[Chapter~XXI Theorem~3.5]{Lang}
states that there exists a basis change of $M_1$
such that $r$ becomes $\begin{pmatrix}1 & 0 & \cdots & 0 \end{pmatrix}$.
Then, by some basis change of $M_0$, one can make 
\[
\epsilon' \cong
\begin{pmatrix}
 \epsilon & 0
\end{pmatrix} ,
\quad
 \tau_{1d} \cong
\begin{pmatrix}
 1 & 0 \\
 0 & \tau'_{1d}
\end{pmatrix} .
\]
where $\tau'_{1d}$ is a submatrix.
By induction on the number of rows in $\tau_{1d}$,
we deduce that the matrix of $\tau_1$ can be brought to
\[
\epsilon' \cong
\begin{pmatrix}
 \epsilon & 0
\end{pmatrix} ,
\quad
 \tau_1 \cong
\begin{pmatrix}
 \sigma'' & \sigma' \\
 I & 0
\end{pmatrix}
\]
Note that $\epsilon \sigma'' = 0$ and $\epsilon \sigma' = 0$.
The basis change of $M_0$ by $\begin{pmatrix} I & -\sigma'' \\ 0 & I \end{pmatrix}$ gives
\[
\epsilon' \cong
\begin{pmatrix}
 \epsilon & 0
\end{pmatrix} ,
\quad
 \tau_1 \cong
\begin{pmatrix}
 0 & \sigma' \\
 I & 0
\end{pmatrix} .
\]
The kernel of $\begin{pmatrix} \sigma' \\ 0 \end{pmatrix}$ determines $\ker \tau_1 = \im \tau_2$.
Let $M'_1$ denote the projection of $M_1$ such that the sequence
\[
 0 \to M'_{D-1} \xrightarrow{\tilde \tau_{D-1}}
 \cdots \to M_2 \to M'_1 \xrightarrow{\sigma'} M'_0 \xrightarrow{\epsilon} E_+ 
\]
of free $S$-modules is exact.

Taking the ring of fractions with respect to the multiplicatively closed set 
\[
U = \{x_1^{i_1} \cdots x_D^{i_D} | i_1,\ldots,i_D \ge 0\},
\]
we finally obtain the desired exact sequence over $U^{-1}S = R$ 
with $P = U^{-1}M'_0$ and $E = U^{-1}E_+$.
Since $\im \sigma = \ker \epsilon$, we have $\im \sigma' = \im \sigma$.
\end{proof}

\begin{lem}
Let $R$ be a Laurent polynomial ring in $D$ variables over a finite field $\FF$,
and $N$ be a module over $R$.
Suppose $J = \ann_R N$ is a proper ideal such that $\dim R/J = 0$.
Then, there exists an integer $L \ge 1$ such that
\[
 \ann_{R'} N = (x_1^L -1, \ldots, x_D^L -1) \subseteq R'
\]
where $R' = \FF[x_1^{\pm L},\ldots,x_D^{\pm L}]$ is a subring of $R$.
\label{lem:zero-dim-ideal-over-finite-field}
\end{lem}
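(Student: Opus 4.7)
The plan is to extract the integer $L$ from the structure of the finite ring $R/J$, and then recognize $\ann_{R'} N$ as the kernel of the evaluation map $y_i \mapsto 1$ on $R' \cong \FF[y_1^{\pm 1},\ldots,y_D^{\pm 1}]$ with $y_i = x_i^L$.

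First I would establish that $R/J$ is a \emph{finite} ring. Since $R$ is Noetherian, so is $R/J$, and together with $\dim R/J = 0$ this forces $R/J$ to be Artinian; hence it decomposes as a finite product of Artinian local rings. At each maximal ideal $\mm$ of $R/J$, the residue field is a finitely generated $\FF$-algebra that happens to be a field, so Zariski's form of the Nullstellensatz makes it a finite algebraic extension of $\FF$, which is finite because $\FF$ is. The filtration of each local factor by powers of its maximal ideal has quotients that are finite-dimensional over the (finite) residue field, so each local factor is finite, and therefore so is $R/J$.

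Because each $x_i$ is a unit in $R$, its image $\bar x_i$ lies in the finite unit group $(R/J)^\times$, so there is a single integer $L \ge 1$ with $\bar x_i^L = 1$ for every $i$, i.e.\ $x_i^L - 1 \in J$ for all $i=1,\ldots,D$. Setting $y_i = x_i^L$ and observing that $\ann_{R'} N = J \cap R'$, this immediately yields the inclusion $(x_1^L - 1,\ldots,x_D^L - 1)R' \subseteq \ann_{R'} N$.

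For the reverse inclusion I would use the evaluation $\phi \colon R' \to \FF$ that sends every $y_i$ to $1$. A short telescoping computation based on $y_i^{\pm k} - 1 \in (y_i - 1)R'$ identifies $\ker \phi$ with $(y_1 - 1,\ldots,y_D - 1)R'$. Given any $f \in J \cap R'$, its image in $R/J$ vanishes; but $\bar y_i = \bar x_i^L = 1$ means this image equals the scalar $\phi(f) \in \FF$, and since $J$ is proper the composite $\FF \hookrightarrow R \twoheadrightarrow R/J$ is injective (no nonzero scalar can lie in a proper ideal of $R$), forcing $\phi(f) = 0$ and hence $f \in (y_1 - 1,\ldots,y_D - 1)R' = (x_1^L - 1,\ldots,x_D^L - 1)R'$. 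The only delicate point in the whole argument is the finiteness of $R/J$, which rests on the Nullstellensatz; once it is in hand, everything else is bookkeeping.
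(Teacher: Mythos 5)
Your proof is correct, and the forward inclusion takes a genuinely different route from the paper's. The paper works at the level of residue fields: for each maximal ideal $\mm \supseteq J$, $R/\mm$ is a finite field (Nullstellensatz), so $\bar x_i$ has finite multiplicative order there, giving $\bb_n \subseteq \mm$; then using $\dim R/J = 0$ to get finitely many such $\mm$, it concludes $\bb_n \subseteq \rad J$, and finally raises to a $p^r$-th power (where $p$ is the characteristic) to push into $J$ itself, setting $L = np^r$. You instead establish that the whole quotient $R/J$ is a finite ring and observe that its unit group is therefore a finite group, so $\bar x_i$ has finite order \emph{in $R/J$ directly}, not merely modulo nilpotents. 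This bypasses the radical detour and the characteristic-$p$ trick entirely. The price is that you must first prove finiteness of $R/J$ (via Artinian decomposition and the Nullstellensatz, or equivalently by Noether normalization), but since the paper also needs the Artinian structure and the Nullstellensatz, your version is arguably cleaner and uses no feature of positive characteristic. For the reverse inclusion the two arguments are essentially the same fact seen two ways: you show $J \cap R' \subseteq \ker\phi$ via the factorization of $R' \to R/J$ through the evaluation $\phi$ and the injectivity of $\FF \hookrightarrow R/J$ (which is where properness of $J$ enters), while the paper phrases it as $\bb_L \cap R'$ being a maximal ideal of $R'$ contained in the proper ideal $J \cap R'$. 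Your version spells out the mechanism.
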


This is a variant of Lemma~\ref{lem:coarse-graining-single-polynomial-in-1D}.
The annihilator $J = \ann_R N$ is the set of all elements $r \in R$ such that
$r n = 0$ for any $n \in N$.
It is an ideal;
if $r_1, r_2 \in \ann_R N$, then
$r_1+r_2$ is an annihilator since $(r_1 + r_2)n = r_1 n + r_2 n = 0$,
and $a r_1 \in \ann_R N$ for any $a \in R$ since $(a r_1) n = a(r_1 n)=0$.
If $R' \subseteq R$ is a subring and $N$ is an $R$-module,
$N$ is an $R'$-module naturally.
Clearly, $J' = \ann_{R'} N$ is by definition equal to $(\ann_{R} N) \cap R'$.
Note that $J'$ is the kernel of the composite map $R' \hookrightarrow R \to R/J$.
Hence, we have an algebra homomorphism $\varphi' : R'/J' \to R/J$.
Although $R'$ is a subring, it is isomorphic to $R$ via the correspondence $x_i^L \leftrightarrow x_i$.
Therefore, we may view $\varphi'$ as a map $\varphi: R/I \to R/J$ for some ideal $I \subseteq R$.
It is a homomorphism such that $\varphi(x_i) = x_i^L$.
Considering the algebras as the set of all functions on the algebraic sets $V(I)$ and $V(J)$ defined by $I$ and $J$, respectively,
we obtain a map $\hat \varphi : V(J) \to V(I)$.
Intuitively, $\hat \varphi$ maps each point $(a_1,\ldots,a_D) \in \FF^D$ to $(a_1^L,\ldots,a_D^L) \in \FF^D$.
In a finite field, any nonzero element is a root of unity.
Since $\dim R/J = 0$, which means that $V(J)$ is a finite set, we can find a certain $L$
so $V(I)$ would consist of a single point. A formal proof is as follows.

\begin{proof}
Since $R$ is a finitely generated algebra over a field,
for any maximal ideal $\mm$ of $R$, the field $R/\mm$ is a finite extension of $\FF$
(Nullstellensatz~\cite[Theorem~4.19]{Eisenbud}).
Hence, $R/\mm$ is a finite field.
Since $x_i$ is a unit in $R$, the image $a_i \in R/\mm$ of $x_i$ is nonzero.
$a_i$ being an element of finite field, a power of $a_i$ is $1$.
Therefore, there is a positive integer $n$ such that
$\bb_n = (x_1^n-1,\ldots,x_D^n-1) \subseteq \mm$.
Since $x^n-1$ divides $x^{nn'}-1$,
we see that there exists $n \ge 1$ such that $\bb_n \subseteq \mm_1 \cap \mm_2$
for any two maximal ideals $\mm_1, \mm_2$.
One extends this by induction to any finite number of maximal ideals.

Since $\dim R/J = 0$, any prime ideal of $R/J$ is maximal and 
the Artinian ring $R/J$ has only finitely many maximal ideals.
$\mathrm{rad}~ J$ is then the intersection of the contractions (pull-backs)
of these finitely many maximal ideals.
Therefore, there is $n \ge 1$ such that
\[
 \bb_n \subseteq \mathrm{rad}~ J .
\]
Since $R$ is Noetherian, $(\mathrm{rad}~J)^{p^r} \subseteq J$ for some $r \ge 0$
where $p$ is the characteristic of $\FF$. Hence, we have 
\[
\bb_{np^r} \subseteq \bb_n^{p^r} \subseteq (\mathrm{rad}~J)^{p^r} \subseteq J.
\]

Let $L = np^r$.
If $R' = \FF[x_1^L,\bar x_1^L,\ldots,x_D^L,\bar x_D^L]$,
$\ann_{R'} N$ is nothing but $J \cap R'$.
We have just shown $\bb_L \cap R' \subseteq J \cap R'$.
Since $J$ is a proper ideal, we have $1 \notin J \cap R'$.
Thus, $\bb_L \cap R'= J \cap R'$ since $\bb_L \cap R'$ is maximal in $R'$.
\end{proof}

\begin{theorem}
For any two-dimensional degenerate exact code Hamiltonian,
there exists an equivalent Hamiltonian such that
\[
\ann \coker \epsilon = (x-1,y-1).
\]
Thus, $\coker \epsilon$ is a torsion module.
\label{thm:structure-2d-ann-coker-epsilon}
\end{theorem}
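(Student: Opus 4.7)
The plan is to pass, via the admissible equivalences of Definition~\ref{defn:equiv-H}, to a Hamiltonian for which Lemma~\ref{lem:zero-dim-ideal-over-finite-field} directly applies, and then to coarse-grain to cut the annihilator down to $(x-1,y-1)$. First I would invoke Lemma~\ref{lem:coker-epsilon-resolution-length-D} in the case $D=2$ to replace $\sigma$ by an injective $\sigma'$ with $\im \sigma' = \im \sigma$; this does not alter the stabilizer module, so the code Hamiltonian is unchanged (and $\epsilon$ is kept fixed). Remark~\ref{rem:rank-sigma-m} then forces $t = \rank G = \rank \sigma = q$, so $\epsilon$ is a $q \times 2q$ matrix of rank $q$. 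By Corollary~\ref{cor:unit-characteristic-ideal-means-nondegeneracy} the degeneracy hypothesis makes $I_q(\sigma)$ proper, and Buchsbaum--Eisenbud (Proposition~\ref{prop:exact-sequence}) applied to $0 \to G \xrightarrow{\sigma} P \xrightarrow{\epsilon} E$ gives $\codim I_q(\sigma) \ge 2$; since $\dim R = 2$, this forces $\dim R/I_q(\sigma) = 0$. As the antipode is a ring automorphism and $I_q(\epsilon) = \overline{I_q(\sigma)}$, the same conclusion holds for $I_q(\epsilon)$.

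Next I would translate the determinantal ideal into an annihilator. Because $E$ has rank $q$, the ideal $I_q(\epsilon)$ is exactly the initial Fitting ideal $\mathrm{Fitt}_0(\coker \epsilon)$. The standard Fitting--annihilator sandwich $\mathrm{Fitt}_0(M) \subseteq \ann(M)$ and $(\ann M)^{\mu(M)} \subseteq \mathrm{Fitt}_0(M)$ for any finitely generated $M$ shows that $J := \ann \coker \epsilon$ has the same radical as $I_q(\epsilon)$, so $\dim R/J = 0$. Moreover Proposition~\ref{prop:injective-sigma-implies-fractal} guarantees the existence of a fractal generator, so $\coker \epsilon \ne 0$ and $J$ is a proper ideal of $R$.

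At this point the hypotheses of Lemma~\ref{lem:zero-dim-ideal-over-finite-field} are satisfied by $N = \coker \epsilon$, and it produces an integer $L \ge 1$ such that, inside $R' = \FF_2[x^{\pm L}, y^{\pm L}]$, one has $\ann_{R'} \coker \epsilon = (x^L - 1,\, y^L - 1)$. Coarse-graining by $R'$ is precisely the admissible equivalence of Definition~\ref{defn:equiv-H} and, being an exact functor, it preserves the chain complex $G \to P \to E$ together with the identity $\epsilon = \sigma^\dagger \lambda_q$. After renaming $x^L, y^L$ as the new lattice generators, the resulting equivalent code Hamiltonian satisfies $\ann \coker \epsilon = (x-1, y-1)$, and in particular $\coker \epsilon$ is torsion.

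The main obstacle is the very first step: asserting that $\sigma$ may be chosen injective without disturbing the identity $\epsilon = \sigma^\dagger \lambda_q$. This is the content of Lemma~\ref{lem:coker-epsilon-resolution-length-D} (itself relying on Quillen--Suslin to trim the syzygy), which keeps $\epsilon$ fixed and modifies only $\sigma$; once injectivity of $\sigma$ is available, the remainder is a fairly mechanical combination of Buchsbaum--Eisenbud, the Fitting-ideal dictionary, and the finite-field point-counting in Lemma~\ref{lem:zero-dim-ideal-over-finite-field}.
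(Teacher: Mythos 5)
Your proposal is correct and follows essentially the same route as the paper's own proof: replace $\sigma$ by an injective map via Lemma~\ref{lem:coker-epsilon-resolution-length-D}, use Buchsbaum--Eisenbud and Corollary~\ref{cor:unit-characteristic-ideal-means-nondegeneracy} to conclude $\dim R/I_q(\epsilon)=0$, pass through the Fitting-ideal/annihilator sandwich, and finish with Lemma~\ref{lem:zero-dim-ideal-over-finite-field} and a coarse-graining. Two small notes: after replacing $\sigma$ by $\sigma'$ the excitation map is the \emph{new} $\epsilon' = (\sigma')^\dagger \lambda_q$ (a $q\times 2q$ matrix), not the original $\epsilon$ -- the lemma preserves exactness of the old complex, but the theorem's $\coker\epsilon$ refers to the new one, which your computation in fact uses -- and the invocation of Proposition~\ref{prop:injective-sigma-implies-fractal} to see $\coker\epsilon \ne 0$ is redundant, since $I_q(\epsilon)=\mathrm{Fitt}_0(\coker\epsilon)$ being proper already forces $\coker\epsilon \ne 0$.
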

The content of Theorem~\ref{thm:structure-2d-ann-coker-epsilon}
is presented in \cite{Bombin2011Structure}.
We will comment on it after the proof.
\begin{proof}
By Lemma~\ref{lem:coker-epsilon-resolution-length-D},
we can find an equivalent Hamiltonian such that the generating map $\sigma$ for its stabilizer module
is injective:
\[
 0 \to G \xrightarrow{\sigma} P .
\]
Let $t$ be the rank of $G$.
The exactness condition says
\[
 0 \to G \xrightarrow{\sigma} P \xrightarrow{\epsilon} E
\]
is exact where $\epsilon = \sigma^\dagger \lambda_q$ and $E$ has rank $t$.
Applying Proposition~\ref{prop:exact-sequence},
since $\overline{ I(\sigma) } = I(\epsilon)$ 
and hence in particular $\codim I(\sigma) = \codim I(\epsilon)$,
we have that $q=t$ and $\codim I(\epsilon) \ge 2$ if $I(\epsilon) \ne R$.
But, $I(\epsilon) \ne R$ by Corollary~\ref{cor:unit-characteristic-ideal-means-nondegeneracy}.

Since $q=t$, $I(\epsilon)$ is equal to the initial Fitting ideal,
and therefore has the same radical as the annihilator of $\coker \epsilon = E/ \im \epsilon$.
(See \cite[Proposition~20.7]{Eisenbud} or \cite[Chapter~XIX Proposition~2.5]{Lang}.)
In particular, $\dim R / (\ann \coker \epsilon) = 0$.
Apply Lemma~\ref{lem:zero-dim-ideal-over-finite-field} to conclude the proof.
\end{proof}

An interpretation of the theorem is the following.
For systems of qubits, Theorem~\ref{thm:structure-2d-ann-coker-epsilon} says that
$x+1$ and $y+1$ are in $\ann \coker \epsilon$.
In other words, any element $v$ of $E$ is a charge,
and a pair of $v$'s of distance 1 apart
can be created by a local operator.
Equivalently, $v$ can be translated by distance 1 by the local operator.
Since translation by distance 1 generates all translations of the lattice,
we see that any excitation can be moved through the system by some sequence of local operators.
This is exactly what happens in the 2D toric code:
Any excited state is described by a configuration of magnetic and electric charge,
which can be moved to a different position by a string operator.

Moreover, since $(x-1,y-1) = \ann \coker \epsilon$,
the action of $x,y \in R$ on $\coker \epsilon$ is the same as the identity action.
Therefore, the $R$-module $\coker \epsilon$ is completely determined
up to isomorphism by its dimension $k$ as an $\FF_2$-vector space.
In particular, $\coker \epsilon$ is a finite set,
which means there are finitely many charges.
The module $K(L)$ of Pauli operators acting on the ground space (logical operators),
can be viewed as $K(L) = \Tor_1(\coker \epsilon, R/\bb_L)$.
Thus, $K(L)$ is determined by $k$ up to $R$-module isomorphisms.
This implies that the translations of a logical operator are all equivalent.
It is not too obvious at this moment whether the symplectic structure,
or the commutation relations among the logical operators,
of $K(L)$ is also completely determined.

Yoshida~\cite{Yoshida2011Classification} argued a similar result assuming
that the ground-state degeneracy should be independent of system size.
Bombin~\cite{Bombin2011Structure} later claimed without the constant degeneracy assumption
that one can choose locally independent stabilizer generators in a `translationally invariant way'
in two dimensions, for which Lemma~\ref{lem:coker-epsilon-resolution-length-D} is a generalization,
and that there are finitely many topological charges,
which is immediate from Theorem~\ref{thm:structure-2d-ann-coker-epsilon} since $\coker \epsilon$ is a finite set.
The claim is further strengthened assuming extra conditions by Bombin~\emph{et al.}~\cite{BombinDuclosCianciPoulin2012},
which can be summarized by saying
that $\sigma$ is a finite direct sum of $\sigma_\text{2D-toric}$ in Example~\ref{eg:2d-toric}.

\begin{rem}
\label{rem:long-string}
Although the strings are capable of moving charges on the lattice,
it could be very long compared to the interaction range.
Consider
\[
 \epsilon_p = 
\begin{pmatrix}
 p(x) & p(y) & 0         & 0          \\
 0    & 0    & p(\bar y) & -p(\bar x)
\end{pmatrix}
\]
where $p$ is any polynomial. It defines an exact code Hamiltonian.
For instance, the choice $p(t) = t-1$ reproduces the 2D toric code of Example~\ref{eg:2d-toric}.
Now let $p(t)$ be a primitive polynomial of the extension field $\FF_{2^w}$ over $\FF_2$.
$p(t)$ has coefficients in the base field $\FF_2$ and factorizes in $\FF_{2^w}$
as $p(t) = (t-\theta)(t-\theta^2)(t-\theta^{2^2})\cdots(t-\theta^{2^{w-1}})$.
(See \cite[Chapter~V Section~5]{Lang}.)
The multiplicative order of $\theta$ is $N = 2^w -1$.
The degree $w$ of $p(t)$ may be called the interaction range.
If the charge $e = \begin{pmatrix}1 & 0 \end{pmatrix}^T$ at $(0,0)\in \ZZ^2$
is transported to $(a,b) \in \ZZ^2 \setminus \{ (0,0) \}$ by some finitely supported operator,
we have $(x^a y^b -1)e \in \im \epsilon$.
That is, $x^a y^b -1 \in (p(x),p(y))$.
Substituting $x \mapsto \theta$ and $y \mapsto \theta^{2^m}$,
we see that $\theta^{a+2^m b} = 1$ or $a+2^m b \equiv 0 \pmod N$ for any $m \in \ZZ$.
In other words, $a \equiv -b \equiv -2b \pmod N$. It follows that $|a| + |b| \ge N$.%
\footnote{In case of qudits with prime dimensions $p$,
the lower bound will be $N/(p-1)$.}
Therefore, the length of the string segment transporting a charge is exponential in the interaction range $w$.
\end{rem}

\begin{example}[Wen plaquette~\cite{Wen2003Plaquette}]
\label{eg:wen-plaquette}
This model consists of a single type of interaction ($t=q=1$)
\[
\xymatrix@!0{
X \ar@{-}[r] & Y \ar@{-}[d] \\
Y \ar@{-}[u] & X \ar@{-}[l]
} \quad \quad
\sigma_\text{Wen} = 
\begin{pmatrix}
 1 + x + y + xy \\
\hline
 1 + xy
\end{pmatrix}
\]
where $X,Y$ are abbreviations of $\sigma_x,\sigma_y$.
It is known to be equivalent to the 2D toric code.
Take the coarse-graining given by $R' = \FF_2[x',y',\bar x', \bar y']$ where
\[
 x' = x \bar y, \quad \quad y' = y^2 .
\]
(The coarse-graining considered in this example
is intended to demonstrate a non-square blocking of the old lattice
to obtain a `tilted' new lattice, and is by no means special.)
As an $R'$-module, $R$ is free with basis $\{ 1, y \}$.
With the identification $R = (R' \cdot 1) \oplus (R' \cdot y)$,
we have $x \cdot 1 = x' \cdot y$, $x \cdot y = x'y' \cdot 1$,
and $y \cdot 1 = 1 \cdot y$, $y \cdot y = y' \cdot 1$.
Hence, $x$ and $y$ act on $R'$-modules
as the matrix-multiplications on the left:
\[
x \mapsto 
\begin{pmatrix}
0  & x'y'\\
x' &   0
\end{pmatrix},
\quad
y \mapsto
\begin{pmatrix}
0 & y'\\
1 & 0
\end{pmatrix}.
\]
Identifying
\[
 R^n = [ (R' \cdot 1) \oplus (R' \cdot y) ] \oplus \cdots \oplus [(R' \cdot 1) \oplus (R' \cdot y)] ,
\]
our new $\sigma$ on the coarse-grained lattice becomes
\[
 \sigma' =
\begin{pmatrix}
1+x'y' & y'+x'y'\\
1+x'   & 1+x'y'\\
\hline
1+x'y' & 0 \\
0      & 1+x'y'
\end{pmatrix}.
\]
By a sequence of elementary symplectic transformations, we have
\begin{align*}
\sigma'
&
\xrightarrow[E_{1,3}(1)]{E_{2,4}(1)}
\begin{pmatrix}
0      & y'+x'y' \\
1+x'   & 0       \\
1+x'y' & 0       \\
0      & 1+x'y'
\end{pmatrix} 
\xrightarrow[E_{3,2}(y')]{E_{4,1}(\bar y')}
\begin{pmatrix}
0      & y'+x'y' \\
1+x'   & 0       \\
1+y' & 0       \\:
0      & x'y'+x'
\end{pmatrix}
\\
& \xrightarrow[\times \bar x' \bar y']{\text{col.2}}
\begin{pmatrix}
0      & 1 + \bar x' \\
1+x'   & 0       \\
1+y' & 0       \\
0      & 1+\bar y'
\end{pmatrix}
\xrightarrow{1 \leftrightarrow 3}
\begin{pmatrix}
1+y' & 0     \\
1+x'   & 0   \\
0      & 1+ \bar x'\\
0      & 1 + \bar y'\\
\end{pmatrix},
\end{align*}
which is exactly the 2D toric code.
\hfill $\Diamond$
\end{example} 


\section{Three dimensions}
\label{sec:3d}

In the previous section, we derived a consequence of the exactness of code Hamiltonians.
The two-dimensional Hamiltonian was special 
so we were able to characterize the behavior of the charges more or less completely.
Here, we prove a weaker property of three dimensions that there must exist a nontrivial charge
for any exact code Hamiltonian.
It follows from Theorems~\ref{thm:charge-equals-torsion},%
\ref{thm:logarithmic-energy-barrier-for-fractal-operator}
that such a charge can spread through the system
by surmounting the logarithmic energy barrier.
\begin{lem}
Suppose $D=3$,
\[
  0 \to G_1 \xrightarrow{\sigma_1} G \xrightarrow{\sigma} P \xrightarrow{\epsilon=\sigma^\dagger \lambda_q} E
\]
is exact, and $I(\sigma) \subseteq \mm = (x-1,y-1,z-1)$.
Then, $\coker \epsilon$ is not torsion-free.
\label{lem:char-ideal-with-support-on-origin-implies-fractal}
\end{lem}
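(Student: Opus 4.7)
The plan is to assume $\coker \epsilon$ is torsion-free and derive a contradiction by confronting the Auslander--Buchsbaum bound against an explicit $\Tor$ computation extracted from the hypothesized length-$4$ free resolution. First I would dispose of a degenerate case. The rank formula in Proposition~\ref{prop:exact-sequence} applied to the exact segment $0 \to G_1 \to G \to P$ gives $\rank G_1 = \rank G - \rank \sigma = t - q$, where $q$ is the number of qubits per site and $t$ the rank of $G$. If $t = q$, then $G_1 = 0$ and $\sigma$ is injective, so Proposition~\ref{prop:injective-sigma-implies-fractal} produces a fractal generator and hence torsion in $\coker \epsilon$. So I may assume $t > q$; then $\rank_R \coker \epsilon = t - q > 0$, so $M := \coker \epsilon$ satisfies $M_\mm \ne 0$.

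Now suppose $M$ is torsion-free. Since torsion-freeness is preserved by localization, $M_\mm$ is a nonzero torsion-free module over the regular local ring $R_\mm$ of Krull dimension $3$. Because $R_\mm$ is a domain with $\mm_\mm \ne 0$, any nonzero element of $\mm_\mm$ is a non-zero-divisor on $M_\mm$, whence $\mathrm{depth}_{R_\mm} M_\mm \ge 1$. The Auslander--Buchsbaum formula then gives
\[
\mathrm{pd}_{R_\mm} M_\mm \;\le\; \mathrm{depth}\,R_\mm - 1 \;=\; 3 - 1 \;=\; 2.
\]
On the other hand, localizing the sequence in the hypothesis yields a length-$4$ free resolution
\[
0 \to (G_1)_\mm \xrightarrow{\sigma_1} G_\mm \xrightarrow{\sigma} P_\mm \xrightarrow{\epsilon} E_\mm \to M_\mm \to 0,
\]
and tensoring with the residue field $\FF_2 = R_\mm/\mm_\mm$ computes
\[
\Tor^{R_\mm}_3(M_\mm, \FF_2) \;=\; \ker(\sigma \otimes \FF_2) ~/~ \im(\sigma_1 \otimes \FF_2).
\]

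The hypothesis $I_q(\sigma) \subseteq \mm$ means that no $q \times q$ minor of $\sigma$ is a unit modulo $\mm$, so $\rank(\sigma \otimes \FF_2) < q$, whence $\dim_{\FF_2} \ker(\sigma \otimes \FF_2) > t - q$. Meanwhile $\dim_{\FF_2} \im(\sigma_1 \otimes \FF_2) \le \rank G_1 = t - q$. Therefore $\Tor^{R_\mm}_3(M_\mm, \FF_2) \ne 0$, forcing $\mathrm{pd}_{R_\mm} M_\mm \ge 3$, which contradicts the Auslander--Buchsbaum bound. The main obstacle is the alignment of two pieces of structure: one must observe that the rank drop of $\sigma$ modulo $\mm$ enforced by $I_q(\sigma) \subseteq \mm$ is precisely what is needed to push $\Tor_3$ above the projective-dimension ceiling that torsion-freeness imposes through Auslander--Buchsbaum. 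The role of the spatial dimension $D = 3$ is exactly to make this ceiling ($\le 2$) small enough to collide with the length of the resolution.
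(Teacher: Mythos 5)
The proposal contains a fatal off-by-one error on the $\Tor$ index, and the strategy cannot be repaired within this framework because it never uses the symplectic self-duality $\epsilon = \sigma^\dagger\lambda_q$, which is the essential hypothesis.

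Indexing the free resolution $0 \to G_1 \xrightarrow{\sigma_1} G \xrightarrow{\sigma} P \xrightarrow{\epsilon} E \to M \to 0$ as $F_3 = G_1$, $F_2 = G$, $F_1 = P$, $F_0 = E$, one has
\[
\Tor_2(M_\mm, \FF_2) = \ker(\sigma\otimes\FF_2)/\im(\sigma_1\otimes\FF_2),
\qquad
\Tor_3(M_\mm, \FF_2) = \ker(\sigma_1\otimes\FF_2).
\]
What you compute is therefore $\Tor_2$, not $\Tor_3$. Showing $\Tor_2 \ne 0$ only gives $\mathrm{pd}_{R_\mm} M_\mm \ge 2$, which is perfectly consistent with the Auslander--Buchsbaum bound $\mathrm{pd} \le 2$; there is no contradiction.

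Worse, the genuine $\Tor_3 = \ker(\sigma_1\otimes\FF_2)$ \emph{must} vanish under the hypothesis that $M$ is torsion-free. By Proposition~\ref{prop:nofractal-torsionless}, torsion-freeness of $\coker\epsilon$ extends the exact sequence to
$0 \to G_1 \to G \to P \to E \to E'$ of length four; Buchsbaum--Eisenbud then forces $\codim I(\sigma_1) \ge 4 > \dim R$, hence $I(\sigma_1) = R$, hence $(\sigma_1)_\mm$ is unimodular and $\sigma_1\otimes\FF_2$ has full rank $t-q$. So $\Tor_3 = 0$. Your approach is attempting to prove a projective-dimension lower bound that provably does not hold: a torsion-free module over a $3$-dimensional regular local ring really does have $\mathrm{pd} \le 2$, and the given length-$3$ resolution is simply non-minimal.

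The missing ingredient is the duality $\epsilon = \sigma^\dagger\lambda_q$. The paper's proof exploits exactly this: after localizing at $\mm$ (where the antipode is an automorphism because $\overline{\mm} = \mm$), the unit entries of $(\sigma_1)_\mm$ let one trim the resolution to $0 \to R_\mm^q \to R_\mm^{2q} \xrightarrow{\tilde\epsilon} R_\mm^q$ with $\tilde\epsilon = \tilde\sigma^\dagger\lambda_q$, and then $I_q(\tilde\sigma) = I_q(\sigma)_\mm \subseteq \mm_\mm$ places you squarely in the setting of Proposition~\ref{prop:injective-sigma-implies-fractal}: a map $S^{2t} \to S^t$ whose rank equals $t$ but whose initial Fitting ideal is proper must have torsion in its cokernel. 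Without the symmetry between $\sigma$ and $\epsilon$, the ranks $q$ and $2q$ that make this proposition applicable do not materialize, and no amount of homological dimension counting will recover the conclusion.
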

\begin{proof}
Suppose on the contrary $\coker \epsilon$ is torsion-free.
We have an exact sequence
\[
 0 \to G_1 \xrightarrow{\sigma_1} G \xrightarrow{\sigma} P \xrightarrow{\epsilon} E \to E'.
\]
If $G_1 = 0$, Proposition~\ref{prop:injective-sigma-implies-fractal} implies the conclusion.
So we assume $G_1 \ne 0$, and therefore we have $I(\sigma_1) = R$ by Proposition~\ref{prop:exact-sequence}.

Let us localize the sequence at $\mm$, so $I(\sigma_1)_\mm = R_\mm$.
Since $\rank (G_1)_\mm = \rank (\sigma_1)_\mm$, the matrix of $(\sigma_1)_\mm$ becomes
\[
 (\sigma_1)_\mm =
\begin{pmatrix}
 0 \\
 I
\end{pmatrix}
\]
for some basis of $(G_1)_\mm$ and $G_\mm$. See the proof of 
Lemma~\ref{lem:associated-maximal-localized-homology}.
In other words, there is an invertible matrix $B \in \mathrm{GL}_{ t \times t}( R_\mm )$ such that
\[
 \sigma_\mm B = 
\begin{pmatrix}
 \tilde \sigma & 0
\end{pmatrix}
\]
where $\tilde \sigma$ is the $2q \times t'$ submatrix.
Note that the antipode map is a well-defined automorphism of $R_\mm$
since $\overline \mm = \mm$.

Since $\epsilon = \sigma^\dagger \lambda_q$, we have
\begin{equation}
 B^\dagger \epsilon_\mm = 
\begin{pmatrix}
 \tilde \sigma^\dagger \\
 0
\end{pmatrix} \lambda_q
=
\begin{pmatrix}
 \tilde \sigma^\dagger \lambda_q \\
 0
\end{pmatrix} .
\label{eq:tilde-epsilon-sigma-dagger}
\end{equation}
Therefore, we get a new exact sequence
\[
   0 \to G' \xrightarrow{\tilde \sigma} P_\mm \xrightarrow{\tilde \epsilon = \tilde \sigma^\dagger \lambda_q} R_\mm^{t'}
\]
where $G' = G_\mm / \im (\sigma_1)_\mm$ is a free $R_\mm$-module and $t' = \rank G'$.
It is clear that 
$\rank \tilde \epsilon = \rank \tilde \sigma$.
Setting $S = R_\mm$ in Proposition~\ref{prop:injective-sigma-implies-fractal} implies that
$\coker \tilde \epsilon$ is not torsion-free.
But, since we are assuming $\coker \epsilon_m$ is torsion-free,
$\coker \tilde \sigma^\dagger$ is also torsion-free by Eq.~\eqref{eq:tilde-epsilon-sigma-dagger}.
This is a contradiction.
\end{proof}

\begin{theorem}
For any three-dimensional, degenerate and exact code Hamiltonian,
there exists a fractal generator.
\label{thm:fractal-exists-in-3D}
\end{theorem}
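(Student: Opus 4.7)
The plan is to reduce the statement to Lemma~\ref{lem:char-ideal-with-support-on-origin-implies-fractal} by coarse-graining so that a zero of the associated ideal is sent to the antipode-fixed point $(1,1,1)\in(\bar\FF_2^\times)^3$, and then transport the conclusion back by Proposition~\ref{prop:fractal-property-of-eq-class}.

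First I would apply Lemma~\ref{lem:coker-epsilon-resolution-length-D} with $D=3$ to replace $\sigma$ with an equivalent generating map fitting into an exact sequence
\[
0 \to G_1 \xrightarrow{\sigma_1} G \xrightarrow{\sigma} P \xrightarrow{\epsilon} E.
\]
This does not alter $\coker\sigma$, so it preserves $I(\sigma)$ and the presence of a fractal generator. The degeneracy hypothesis combined with Corollary~\ref{cor:unit-characteristic-ideal-means-nondegeneracy} gives $I(\sigma)\neq R$, so by Nullstellensatz there is a point $(a,b,c)\in(\bar\FF_2^\times)^3$ lying in the zero locus of $I(\sigma)$; equivalently, $I(\sigma)$ is contained in the contraction $\mathfrak{p}\subset R$ of the maximal ideal $(x-a,y-b,z-c)\subset \bar\FF_2\otimes R$.

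Next I would let $L$ be a common multiple of the multiplicative orders of $a$, $b$, $c$ in $\bar\FF_2^\times$, and coarse-grain by $R' = \FF_2[x'^{\pm 1}, y'^{\pm 1}, z'^{\pm 1}]$ with $x' = x^L$, $y' = y^L$, $z' = z^L$. Since restriction of scalars is exact and turns finitely generated free $R$-modules into finitely generated free $R'$-modules, the coarse-grained sequence $0 \to G_1' \to G' \xrightarrow{\sigma'} P' \xrightarrow{\epsilon'} E'$ is still exact, and the relation $\epsilon' = (\sigma')^\dagger \lambda_{q'}$ (with $q' = qL^3$) persists by Proposition~\ref{prop:pauli-module} applied in the coarse-grained setting. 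Applying Fitting's lemma to $\coker\sigma$ viewed over the two rings, and using that taking radicals commutes with contraction of ideals, I would get
\[
\mathrm{rad}(I_{q'}(\sigma')) \;=\; \mathrm{rad}(\ann_{R'}\coker\sigma) \;=\; \mathrm{rad}(\ann_R\coker\sigma)\cap R' \;=\; \mathrm{rad}(I_q(\sigma)) \cap R'.
\]
Since $(x-a)$ divides $x^L-1 = x'-1$ over $\bar\FF_2$, and similarly for $y,z$, the elements $x'-1$, $y'-1$, $z'-1$ all lie in $\mathfrak{p}$; hence the contracted prime $\mathfrak{p}\cap R'$ contains the maximal ideal $(x'-1, y'-1, z'-1)\subset R'$ and must equal it. Therefore
\[
I_{q'}(\sigma') \;\subseteq\; \mathrm{rad}(I_{q'}(\sigma')) \;\subseteq\; (x'-1,\; y'-1,\; z'-1).
\]

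Now the hypotheses of Lemma~\ref{lem:char-ideal-with-support-on-origin-implies-fractal} hold over $R'$, and invoking it yields that $\coker\epsilon'$ is not torsion-free, i.e., the coarse-grained code Hamiltonian admits a fractal generator. Proposition~\ref{prop:fractal-property-of-eq-class} (invariance of the existence of a fractal generator under coarse-graining) then produces a fractal generator for the original Hamiltonian. The main obstacle of the plan is the bookkeeping around the coarse-graining step: verifying via Fitting's lemma and divisibility of cyclotomic-type polynomials that the top determinantal ideal of $\sigma'$ contracts into the antipode-fixed maximal ideal $(x'-1,y'-1,z'-1)$, and choosing $L$ so that all $\mathrm{Gal}(\bar\FF_2/\FF_2)$-conjugates of $(a,b,c)$ are handled simultaneously (which is automatic once $L$ is a common multiple of the orders of $a,b,c$, since Galois preserves multiplicative order).
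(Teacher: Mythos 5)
Your overall strategy matches the paper's: reduce to Lemma~\ref{lem:char-ideal-with-support-on-origin-implies-fractal} by coarse-graining until the associated ideal lands in $(x'-1,y'-1,z'-1)$, then carry the fractal generator back by Proposition~\ref{prop:fractal-property-of-eq-class}. However, there is a concrete gap in the middle step.

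The chain
\[
\mathrm{rad}\bigl(I_{q'}(\sigma')\bigr) = \mathrm{rad}\bigl(\ann_{R'}\coker\sigma\bigr) = \mathrm{rad}\bigl(\ann_{R}\coker\sigma\bigr)\cap R' = \mathrm{rad}\bigl(I_q(\sigma)\bigr)\cap R'
\]
is not correct, because the outer equalities invoke the relation between Fitting ideals and the annihilator in a regime where it does not hold. The identity $\mathrm{rad}(F_0(M)) = \mathrm{rad}(\ann M)$ from Eq.~\eqref{Fitting-annihilator} (also \cite[Proposition~20.7]{Eisenbud}) applies only to the \emph{initial} Fitting ideal $F_0$. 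Here $\coker\sigma$ is presented by the $2q\times t$ matrix $\sigma$ of rank $q$ (Remark~\ref{rem:rank-sigma-m}), so $F_0(\coker\sigma)=I_{2q}(\sigma)=0$ and the first \emph{nonvanishing} Fitting ideal is $F_q(\coker\sigma)=I_q(\sigma)$, which is not $F_0$. Since $\coker\sigma$ has positive generic rank $q$, it is not a torsion module and $\ann_R\coker\sigma=0$, whence $\mathrm{rad}(\ann_R\coker\sigma)=(0)$ while $\mathrm{rad}(I_q(\sigma))$ is a nonzero proper ideal. The claimed equalities therefore fail outright, and the justification for the crucial containment $I_{q'}(\sigma')\subseteq(x'-1,y'-1,z'-1)$ evaporates. (The same difficulty occurs if one tries $\coker\epsilon$ instead of $\coker\sigma$: in three dimensions after Lemma~\ref{lem:coker-epsilon-resolution-length-D} one still has $t>q$ in general, so $\coker\epsilon$ is not torsion either.)

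The containment you want is nonetheless true, but the paper reaches it by a different device. By Corollary~\ref{cor:unit-characteristic-ideal-means-nondegeneracy}, degeneracy gives some $L$ with $K(L)\ne 0$. After coarse-graining by $L$, the module $K'(1)$ (the homology over $R'$ with $\bb'_1$ imposed) is $\FF_2$-vector-space equal to $K(L)$, hence nonzero, and since $K'(1)=K'(1)_{\mm}$ for $\mm=(x'-1,y'-1,z'-1)$ the contrapositive of Lemma~\ref{lem:associated-maximal-localized-homology} forces $I(\sigma')\subseteq\mm$. In other words, the paper replaces the ideal-theoretic bookkeeping you attempt with the nonvanishing of the homology $K(L)$; this sidesteps the non-torsion issue entirely, because Lemma~\ref{lem:associated-maximal-localized-homology} is stated directly for the homology localization rather than for annihilators of $\coker\sigma$. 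To repair your argument, the cleanest route is to use the Fitting-ideal characterization of local freeness (``$M_{\pp}$ is free iff the first nonvanishing Fitting ideal of $M$ is not contained in $\pp$'') together with the dimension count $\dim_{\FF_2}\im\sigma_L = qL^3 - k < qL^3$ from Corollary~\ref{cor:k-formulas} to show $(\coker\sigma')_{\mm}$ is not free; but that is precisely equivalent to the paper's step through $K(L)\ne 0$.
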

\begin{proof}
By Lemma~\ref{lem:coker-epsilon-resolution-length-D}, there exists an equivalent Hamiltonian
such that
\[
  0 \to G_1 \xrightarrow{\sigma_1} G \xrightarrow{\sigma} P \xrightarrow{\epsilon=\sigma^\dagger \lambda_q} E
\]
is exact. The existence of a fractal generator is a property of the equivalence class
by Proposition~\ref{prop:fractal-property-of-eq-class}.
If we can find a coarse-graining such that $I(\sigma') \subseteq (x'-1,y'-1,z'-1)$,
then Lemma~\ref{lem:char-ideal-with-support-on-origin-implies-fractal} shall imply the conclusion.

Recall that $\epsilon_L$ and $\sigma_L$ denote the induced maps
by factoring out $\bb_L = (x^L-1,y^L-1,z^L-1)$. See Sec.~\ref{sec:degeneracy}.
There exists $L$ such that $K(L) = \ker \epsilon_L / \im \sigma_L \ne 0$ by 
Corollary~\ref{cor:unit-characteristic-ideal-means-nondegeneracy}.
Consider the coarse-grain by $x'=x^L,~ y'=y^L,~ z'=z^L$.
Let $R' = F_2[x'^{\pm 1}, y'^{\pm 1}, z'^{\pm 1}]$ denote the coarse-grained base ring.
If $K'(L')$ denotes $\ker \epsilon'_{L'} / \im \sigma'_{L'}$ as $R'$-module,
we see that $K'(1) = K(L)$ as $\FF_2$-vector space.
In particular, $K'(1) \ne 0$.
Put $\mm = (x'-1,y'-1,z'-1) = \bb'_1 \subseteq R'$.
Then, $K'(1)_{\mm} = K'(1) \ne 0$.
By Lemma~\ref{lem:associated-maximal-localized-homology},
we have $I(\sigma') \subseteq \mm$.
\end{proof}

Yoshida argued that when the ground-state degeneracy is constant independent of system size
there exists a string operator~\cite{Yoshida2011feasibility}.
To prove it, we need an algebraic fact.
\begin{prop}
Let $M$ be a finitely presented $R$-module, and $T$ be its torsion submodule.
Let $I$ be the first non-vanishing Fitting ideal of $M$.
Then,
\[
 \rad I \subseteq \rad \ann T .
\]
\label{prop:support-torsion-fitting-ideal}
\end{prop}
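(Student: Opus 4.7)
The plan is to establish the equivalent geometric inclusion $V(\ann T) \subseteq V(I)$ on $\mathrm{Spec}\,R$ by working locally at each prime. Since $R$ (a Laurent polynomial ring over $\FF_2$) is a Noetherian domain and $M$ is finitely presented, the submodule $T \subseteq M$ is finitely generated, so $V(\ann T) = \mathrm{Supp}\,T = \{\pp \in \mathrm{Spec}\,R : T_\pp \neq 0\}$. It therefore suffices to prove the contrapositive at each prime: if $I \not\subseteq \pp$, then $T_\pp = 0$.

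Fix a presentation $\phi : R^m \to R^n$ of $M$, and let $r = \rank \phi$, so $I = I_r(\phi)$ is the first non-vanishing Fitting ideal. Because $R$ is a domain, the injection $R \hookrightarrow R_\pp$ sends nonzero $r \times r$ minors of $\phi$ to nonzero elements of $R_\pp$, and therefore $\rank \phi_\pp = r$ as well. If $I \not\subseteq \pp$ then $I_r(\phi_\pp)$ contains a unit, so after permuting rows and columns $\phi_\pp$ has an $r \times r$ submatrix $A$ of unit determinant; writing $\phi_\pp = \begin{pmatrix} A & B \\ C & D \end{pmatrix}$, the standard Schur-complement conjugation by the invertible matrices $\begin{pmatrix} \id & 0 \\ -C A^{-1} & \id \end{pmatrix}$ and $\begin{pmatrix} \id & -A^{-1} B \\ 0 & \id \end{pmatrix}$ transforms $\phi_\pp$ into the block diagonal matrix $\begin{pmatrix} A & 0 \\ 0 & \phi' \end{pmatrix}$ with $\phi' = D - C A^{-1} B$ the Schur complement. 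These row and column operations preserve both the cokernel and the rank of the presentation matrix, so $\rank \phi' = 0$; since $R_\pp$ is a domain, this forces every entry of $\phi'$ to vanish.

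Consequently $M_\pp \cong R_\pp^{\,n-r}$ is free, hence torsion-free, so $T_\pp = 0$, which completes the contrapositive and therefore the proof. The main conceptual point --- and where the domain hypothesis on $R$ enters --- is the preservation of rank under localization: over a general ring one cannot rule out that all $r \times r$ minors become zero in $R_\pp$, and the entire argument would collapse. Once rank preservation is in hand, the remainder is the elementary observation that an $r \times r$ unit-determinant minor splits an identity block off the presentation and leaves a complement forced to be zero by rank considerations, which is essentially the classical statement that a finitely presented module over a local ring is free precisely when its first non-vanishing Fitting ideal is the unit ideal.
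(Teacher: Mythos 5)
Your proof is correct and takes the same essential route as the paper: both you and the paper reduce the containment of radicals to the pointwise statement that $T_\pp = 0$ whenever $\pp \not\supseteq I$, and both establish this by showing that the presentation matrix localizes to the canonical block form $\begin{pmatrix}\id_r & 0 \\ 0 & 0\end{pmatrix}$, making $M_\pp$ free and hence torsion-free. The only difference is one of exposition: the paper defers this local freeness to "the calculation of the proof of Lemma 3.5" (its earlier analysis of $\epsilon$ localized at a prime), whereas you redo the computation in place via a Schur complement, which is a clean self-contained way to say the same thing.

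One small quibble about your framing: you present "rank preservation under localization, which needs the domain hypothesis" as the main conceptual pivot, but this is not quite where the weight sits. The needed facts are (i) $I_r(\phi) \not\subseteq \pp$ implies $I_r(\phi_\pp) = R_\pp$, and (ii) $I_{r+1}(\phi) = 0$ implies $I_{r+1}(\phi_\pp) = 0$; both are immediate from $I_k(\phi)_\pp = I_k(\phi_\pp)$ and hold over any commutative ring, so neither uses that $R$ or $R_\pp$ is a domain. What the domain hypothesis is actually buying is that the torsion submodule $T$ as defined (elements killed by some nonzero ring element) is a genuine submodule and commutes with localization, so that $M_\pp$ free $\Rightarrow T_\pp = 0$ makes sense. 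That is already implicit in the paper's standing assumption that $R$ is a Laurent polynomial ring over a field, so it is not a gap, just a mis-attributed emphasis.
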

\begin{proof}
Let $\pp$ be any prime ideal of $R$ such that $I \not\subseteq \pp$.
By the calculation of the proof of Lemma~\ref{lem:associated-maximal-localized-homology},
$M_\pp$ is a free $R_\pp$-module, and hence is torsion-free.
Since $T$ is embedded in $M$, it follows that $T_\pp = 0$, or equivalently, $\ann T \not\subseteq \pp$.
Since the radical of an ideal is the intersection of all primes containing it~\cite[Proposition 1.8]{AtiyahMacDonald},
the claim is proved.
\end{proof}

\begin{cor}
Let $T$ be the set of all point-like charges modulo locally created ones
of a degenerate and exact code Hamiltonian in three dimensions
of characteristic dimension zero.
Then, one can coarse-grain the lattice such that
\[
 \ann T = (x-1,y-1,z-1) .
\]
\label{cor:annT-3d-characteristic-dimension-0}
\end{cor}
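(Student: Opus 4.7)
The plan is to combine Theorem~\ref{thm:charge-equals-torsion}, Proposition~\ref{prop:support-torsion-fitting-ideal}, Theorem~\ref{thm:fractal-exists-in-3D}, and Lemma~\ref{lem:zero-dim-ideal-over-finite-field} essentially in one line, after verifying that the hypotheses line up. By Theorem~\ref{thm:charge-equals-torsion}, the module $T$ of point-like charges modulo locally created ones is naturally identified with the torsion submodule $T(\coker \epsilon)$. Since the Hamiltonian is degenerate and exact in three dimensions, Theorem~\ref{thm:fractal-exists-in-3D} supplies a fractal generator, which by Proposition~\ref{prop:nofractal-torsionless} forces $T \ne 0$, so $\ann_R T$ is a proper ideal.

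Next I would show that $\ann_R T$ is zero-dimensional. The first non-vanishing Fitting ideal of the finitely presented module $\coker \epsilon$ is $I(\epsilon) = I_t(\epsilon)$, and Proposition~\ref{prop:support-torsion-fitting-ideal} gives $\rad I(\epsilon) \subseteq \rad \ann T$. Because $\epsilon = \sigma^\dagger \lambda_q$ and the antipode is a ring automorphism, $I(\epsilon) = \overline{I_q(\sigma)}$, so $\dim R/I(\epsilon) = \dim R/I_q(\sigma) = 0$ by the characteristic-dimension-zero hypothesis. Hence $\dim R/\ann T \le \dim R/\rad I(\epsilon) = 0$, so $\ann T$ is a proper, zero-dimensional ideal.

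Finally, apply Lemma~\ref{lem:zero-dim-ideal-over-finite-field} to the $R$-module $T$ (over the finite field $\FF_2$): there is $L \ge 1$ with
\[
\ann_{R'} T = (x^L - 1,\, y^L - 1,\, z^L - 1) \subseteq R',
\]
where $R' = \FF_2[x^{\pm L}, y^{\pm L}, z^{\pm L}]$. This $R'$ is exactly the group algebra of the sublattice that appears after coarse-graining by the factor $L$ in each direction, so renaming $x^L, y^L, z^L$ as the new generators $x, y, z$ yields $\ann T = (x-1, y-1, z-1)$ in the coarse-grained description, as claimed.

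No serious obstacle is expected here: every nontrivial ingredient has been set up in the preceding sections. The only minor caveat to watch is that coarse-graining does not affect the equivalence class of the code Hamiltonian (Definition~\ref{defn:equiv-H}) nor the identification of $T$ with the torsion of $\coker \epsilon$ (since the latter is a property of the equivalence class by Proposition~\ref{prop:fractal-property-of-eq-class}), so the statement in the corollary is invariant under the required coarse-graining.
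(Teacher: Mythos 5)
Your argument matches the paper's proof step for step: identify $T$ with the torsion submodule of $\coker\epsilon$ via Theorem~\ref{thm:charge-equals-torsion}, obtain $T\ne 0$ from Theorem~\ref{thm:fractal-exists-in-3D}, deduce $\dim R/\ann T = 0$ from Proposition~\ref{prop:support-torsion-fitting-ideal} plus the characteristic-dimension hypothesis, and finish with Lemma~\ref{lem:zero-dim-ideal-over-finite-field}. One small slip worth fixing: the first non-vanishing Fitting ideal of $\coker\epsilon$ is $I_q(\epsilon)$, not $I_t(\epsilon)$, since $\rank\epsilon=q$ by Remark~\ref{rem:rank-sigma-m} while in general $t\ge q$ (so $I_t(\epsilon)$ could vanish); your subsequent identification with $\overline{I_q(\sigma)}$ is nevertheless the correct ideal and the rest of the argument goes through unchanged.
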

\noindent
The corollary says that any point-like charge is attached to strings and is able to move freely through the lattice.
The condition is implied by Lemma~\ref{lem:k-growing-sequence-L} 
if the ground-state degeneracy is constant independent of the system size
when defined on a periodic lattice.
\begin{proof}
By Theorem~\ref{thm:charge-equals-torsion}, $T$ is the torsion submodule of $\coker \epsilon$.
By Theorem~\ref{thm:fractal-exists-in-3D}, $T$ is nonzero.
Setting $M = \coker \epsilon$ in Proposition~\ref{prop:support-torsion-fitting-ideal},
the associated ideal $I_q(\epsilon)$ is the first non-vanishing Fitting ideal of $M$.
Since $\dim R / I_q(\epsilon) = 0$ by assumption, we have $\dim R / \ann T = 0$.
Lemma~\ref{lem:zero-dim-ideal-over-finite-field} implies the claim.
\end{proof}

\begin{example}[Toric codes in higher dimensions]
\label{eg:highD-toric-codes}
Any higher-dimensional toric code can be treated similarly as for the two-dimensional case.
In three dimensions one associates each site with $q=3$ qubits.
It is easily checked that
\[
 \sigma_\text{3D-toric} = 
\begin{pmatrix}
1 + \bar x & 0   & 0      & 0    \\
1 + \bar y & 0   & 0      & 0    \\
1 + \bar z & 0   & 0      & 0    \\
\hline
0          & 0   & 1+z    & 1+y  \\
0          & 1+z & 0      & 1+x  \\
0          & 1+y & 1+x    & 0    \\
\end{pmatrix} .
\]

Both two- and three-dimensional toric codes have the property 
that $\coker \epsilon$ is not torsion-free.
However, in two dimensions any element of $E$ is a physical charge,
whereas in three dimensions $E$ contains physically irrelevant elements.
Note that in both cases, $1+x$ and $1+y$ are fractal generators.
Being consisted of two terms, they generate the `string operators.'

The 4D toric code~\cite{DennisKitaevLandahlEtAl2002Topological}
has $\sigma_x$-type interaction and $\sigma_z$-type interaction.
Originally the qubits are placed on every plaquette of 4D hypercubic lattice;
instead we place $q=6$ qubits on each site.
The generating map $\sigma$ for the stabilizer module is written as
a $12 \times 8$-matrix ($t=8$)
\[
\sigma_\text{4D-toric} =
\begin{pmatrix}
 \sigma_X & 0 \\
 0        & \sigma_Z
\end{pmatrix}
\]
where
\begin{align*}
 \sigma_X &= 
\begin{pmatrix}
1+y & 1+x & 0   & 0   \\
1+w & 0   & 0   & 1+x \\
1+z & 0   & 1+x & 0   \\
0   & 1+z & 1+y & 0   \\
0   & 1+w & 0   & 1+y \\
0   & 0   & 1+w & 1+z
\end{pmatrix}, \\
\bar \sigma_Z &=
\begin{pmatrix}
0   & 0   & 1+w & 1+z \\
0   & 1+z & 1+y & 0   \\
0   & 1+w & 0   & 1+y \\
1+w & 0   & 0   & 1+x \\
1+z & 0   & 1+x & 0   \\
1+y & 1+x & 0   & 0
\end{pmatrix}.
\end{align*}
Note the bar on $\sigma_Z$.

Theorem~\ref{thm:fractal-exists-in-3D} does not prevent
the absence of a fractal generator in four or higher dimensions.
Indeed, this 4D toric code lacks any fractal generator.
To see this, it is enough to consider $\sigma_Z$
since $\overline{\coker \sigma_X^\dagger} \cong \coker \sigma_Z^\dagger$ as $R_4$-modules,
where $R_4 = \FF_2[x^{\pm 1},y^{\pm 1},z^{\pm 1},w^{\pm 1}]$.
If
\[
\epsilon_1 =
 \begin{pmatrix}
1+x & 1+y & 1+z & 1+w
 \end{pmatrix} : R_4^4 \to R_4,
\]
then
\[
 R_4^6 \xrightarrow{\sigma_Z^\dagger} R_4^4 \xrightarrow{\epsilon_1} R_4
\]
is exact.
(A direct way to check it is to compute S-polynomials~\cite[Chapter~15]{Eisenbud}
of the entries of $\epsilon_1$,
and to verify that they all are in the rows of $\sigma_Z$.)
Hence, $\coker \sigma_Z^\dagger$ is torsion-free by Proposition~\ref{prop:nofractal-torsionless}.

For the toric codes in any dimensions, $\sigma$ has nonzero entries of form $x_i-1$.
The radical of the associated ideal $I(\sigma)$ is equal to $\mm = (x_1-1,\ldots, x_D-1)$.
So $\mm$ is the only maximal ideal of $R$ that contains $I(\sigma)$.
The characteristic dimension is zero.
If $2 \nmid L$, since $(\bb_L)_{\mm} = \mm_{\mm}$,
$(\sigma_L)_\mm$ is a zero matrix.
Any other localization of $\sigma_L$ does not contribute to $\dim_{\FF_2} K(L)$
by Lemma~\ref{lem:associated-maximal-localized-homology}.
Therefore, if $2 \nmid L$, $K(L)$ has constant vector space dimension independent of $L$.

There is a more direct way to compute the $R$-module $K(L)$.
For the three-dimensional case, consider a free resolution of $R_3/\mm$,
where $R_3 = \FF_2[x^{\pm 1},y^{\pm 1},z^{\pm 1}]$,
as
\[
 0 \to 
R_3^1 
\xrightarrow{
\partial_3 = 
\begin{pmatrix} a \\ b \\ c \end{pmatrix}
}
R_3^3
\xrightarrow{  
\partial_2 =
\begin{pmatrix}
 0 & -c & b  \\
 c & 0 & -a  \\
 -b & a & 0   
\end{pmatrix}
}
R_3^3
\xrightarrow{
\partial_1 =
\begin{pmatrix}
a & b  & c
\end{pmatrix}
}
R_3^1
\to
R_3/\mm
\to 0
\]
where $a=x-1$, $b=y-1$, and $c=z-1$.
We see that
\begin{equation}
 \sigma_\text{3D-toric} = \bar \partial_3 \oplus \partial_2 ,
 \quad \text{and} \quad
 \epsilon_\text{3D-toric} = \bar \partial_2 \oplus \partial_1  .
\label{eq:resolution-F}
\end{equation}
Therefore,
\begin{align*}
 K(L)_\text{3D-toric} \cong \Tor_1(\coker \epsilon_\text{3D-toric},R_3/\bb_L) 
&\cong \Tor_2( \overline{ R_3/\mm}, R_3/\bb_L ) \oplus \Tor_1( R_3/\mm, R_3/\bb_L ).
\end{align*}
Using $\Tor(M,N) \cong \Tor(N,M)$ and the fact that a resolution of $R_3/\bb_L$ is Eq.~\eqref{eq:resolution-F}
with $a,b,c$ replaced by $x^L-1,y^L-1,z^L-1$, respectively,
we have 
\[
 \Tor_i(R_3/\mm,R_3/\bb_L) \cong \Tor_i(R_3/\mm,R_3/\mm) \cong (\FF_2)^{_3 C _i}
\]
for each $0 \le i \le 3$.
Therefore, $K(L)_\text{3D-toric} \cong (\FF_2)^{_3 C _2} \oplus (\FF_2)^{_3 C _1} \cong (\FF_2)^6$.
The four-dimensional case is similar:
\[
 K(L)_\text{4D-toric} \cong \Tor_2 (R_4/\mm, R_4/\bb_L) \oplus \Tor_2 (\overline{R_4/\mm}, R_4/\bb_L) \cong \left( (\FF_2)^{_4 C _2} \right)^2 .
\]
The calculation here is closely related
to the cellular homology interpretation of toric codes.
\hfill $\Diamond$
\end{example} 

\begin{example}[Chamon model~\cite{Chamon2005Quantum,BravyiLeemhuisTerhal2011Topological}]
This three-dimensional model consists of single type of term in the Hamiltonian.
The generating map is
\[
\sigma_\text{Chamon} =
\begin{pmatrix}
 x+\bar x + y + \bar y \\
\hline
 z+\bar z + y + \bar y
\end{pmatrix}.
\]
Since
\[
 \sigma^\dagger \lambda_1
\begin{pmatrix}
0 \\ 
1
\end{pmatrix}
=
(1+ x \bar y)
\begin{pmatrix}
 0 \\
\bar x + y
\end{pmatrix},
\]
$1+x \bar y$ is a fractal generator.
Consisted of two terms, it generates a string operator.
The degeneracy can be calculated using Corollary~\ref{cor:k-formulas}.
Assume all the three linear dimensions of the system are even.
Put
\[
 S = R / (x + \bar x + y + \bar y, z + \bar z + y + \bar y, x^{2l} -1, y^{2m}-1, z^{2n} -1 ).
\]
Then, the $\log_2$ of the degeneracy is $k = \dim_{\FF_2} S$.
In $S$, we have $x + \bar x = y + \bar y = z + \bar z$.
Since $S$ has characteristic 2, it holds that
\[
 w^{p+1} + w^{-p-1} = (w + w^{-1})(w^p + w^{p-2} + \cdots + w^{-p})
\]
for $p \ge 1$ and $w = x,y,z$. By induction on $p$, we see that $w^{p} + w^{-p}$
is a polynomial in $w + w^{-1}$. Therefore,
\[
 x^p + \bar x ^p = y^p + \bar y^p = z^p + \bar z^p
\]
for all $p \ge 1$ in $S$.
Put $g = \gcd(l,m,n)$.
Since $x^l + x^{-l} = y^m + y^{-m} = z^n + z^{-n} = 0$ in $S$,
we have $x^g + x^{-g} = y^g + y^{-g} = z^g + z^{-g} = 0$.

Applying Buchberger's criterion with respect to the lexicographic order in which $x \prec y \prec z$,
we see that
\[
 S = \FF_2[x,y,z] / (z^2 + zx^{2l-1}+ zx+1, y^2 + yx^{2l-1}+yx+1, x^{2g}+1)
\]
is expressed with a Gr\"obner basis. Therefore,
\[
 k = \dim_{\FF_2} S = 8 \gcd(l,m,n).
\]
\hfill $\Diamond$
\label{eg:ChamonModel}
\end{example}

\begin{example}[Levin-Wen fermion model~\cite{LevinWen2003Fermions}]
\label{eg:Levin-Wen-fermion-model}
The 3-dimensional model is originally defined 
in terms of Hermitian bosonic operators $\{ \gamma^{ab} \}_{a,b=1,\ldots,6}$,
squaring to identity if nonzero,
such that $\gamma^{ab}=-\gamma^{ba}$,
$[\gamma^{ab},\gamma^{cd}]=0$ if $a,b,c,d$ are distinct,
and
$\gamma^{ab}\gamma^{bc}=i\gamma^{ac}$ if $a \neq c$.
An irreducible representation is given by Pauli matrices acting on $\mathbb{C}^2 \otimes \mathbb{C}^2$,
and their commuting Hamiltonian fits nicely into our formalism.
The model was proposed to demonstrate that the point-like excitations may actually be fermions.
\begin{align*}
 \sigma_\text{Levin-Wen} &=
\begin{pmatrix}
 1+z & 1+z & x+y \\
 y+y z & x+x z & x+y \\
 y+z & 1+x & 1+x \\
 y+z & z+x z & y+x y
\end{pmatrix} \\
 \epsilon_\text{Levin-Wen} &=
\begin{pmatrix}
 y+z & y+z & y+y z & 1+z \\
 z+x z & 1+x & x+x z & 1+z \\
 y+x y & 1+x & x+y & x+y
\end{pmatrix}
\end{align*}
Here we multiplied the rows of $\epsilon_\text{Levin-Wen}$ by suitable monomials to avoid negative exponents.
One readily verifies that $\ker \epsilon_\text{Levin-Wen} = \im \sigma_\text{Levin-Wen}$.
The model is symmetric under the spatial rotation by $\pi/3$ about $(1,1,1)$ axis.
Indeed, if one changes the variables as $x \mapsto y \mapsto z \mapsto x$ and apply a symplectic transformation
\begin{equation}
 \omega = 
 \begin{pmatrix}
 1 & 0 & 1 & 0 \\
 0 & 1 & 0 & 1 \\
 1 & 0 & 0 & 0 \\
 0 & 1 & 0 & 0
 \end{pmatrix} :
 \begin{cases}
   XI \mapsto YI \\
   IX \mapsto IY \\
   ZI \mapsto XI \\
   IZ \mapsto IX
\end{cases} ,
\label{eq:symplectic-Levin-Wen}
\end{equation}
then $\sigma_\text{Levin-Wen}$ remains the same up to permutations of columns.

The torsion submodule $T$ of $C = \coker \epsilon_\text{Levin-Wen}$,
which describes the point-like charges according to Theorem~\ref{thm:charge-equals-torsion},
is
\begin{equation}
 T = R \cdot \begin{pmatrix} 1+y \\ 1+x \\ 0 \end{pmatrix} .
\label{eq:Levin-Wen-torsion}
\end{equation}
In order to see this, first shift the variables $a = x+1, b=y+1, c=z+1$.
Then, $\epsilon_\text{Levin-Wen}$ becomes
\[
\epsilon_\text{Levin-Wen} =
\begin{pmatrix}
 b+c & b+c & c+b c & c \\
 a+a c & a & c+a c & c \\
 a+a b & a & a+b & a+b
\end{pmatrix}
 =: \phi
\]
We will verify that $N = C / T$ is torsion-free.
A presentation of $N = \coker \phi'$ is obtained by joining the generator of $T$ to the matrix $\phi$.
\[
 \phi' = 
\begin{pmatrix}
 b+c & b+c & c+b c & c & b \\
 a+a c & a & c+a c & c & a \\
 a+a b & a & a+b & a+b & 0
\end{pmatrix}
\]
Column operations of $\phi'$ give
\[
 \phi' \cong
\begin{pmatrix}
 0 & c & b & 0 & 0 \\
 c & 0 & a & 0 & 0 \\
 b & a & 0 & 0 & 0
\end{pmatrix} = 
\begin{pmatrix}
 \partial_2 & 0 & 0
\end{pmatrix}
\]
where $\partial_2$ is from Eq.~\eqref{eq:resolution-F}.
Therefore, $\phi'$ generates the kernel of $\partial_1$,
and by Proposition~\ref{prop:nofractal-torsionless}, $N = \coker \phi'= \coker \partial_2$ is torsion-free.

The torsion submodule $T$ of $C = \coker \phi$ is annihilated by $a$, $b$, or $c$
(See Corollary~\ref{cor:annT-3d-characteristic-dimension-0}):
\[
 a \begin{pmatrix} b \\ a \\ 0 \end{pmatrix} =\phi \begin{pmatrix} 1 \\ 1+a \\ 0 \\ a \end{pmatrix}, \quad
 b \begin{pmatrix} b \\ a \\ 0 \end{pmatrix} =\phi \begin{pmatrix} 1 \\ 1+b \\ 1 \\ 1 \end{pmatrix}, \quad
 c \begin{pmatrix} b \\ a \\ 0 \end{pmatrix} =\phi \begin{pmatrix} 0 \\ 0 \\ 1 \\ 1 \end{pmatrix}.
\]
Therefore, $T$ is isomorphic to $\coker \partial_1 \cong \FF_2$ of Eq.~\eqref{eq:resolution-F}.
The arguments $h_x,h_y,h_z$ of $\phi$ can be thought of as \emph{hopping} operators for the charge.
According to \cite{LevinWen2003Fermions}, one can check that the charge is actually a fermion
from the commutation values among, for example, $h_x,h_y,\bar y h_y$.

Consider a short exact sequence
\[
 0 \to T \to C \to N \to 0 .
\]
The corresponding sequence for 3D toric code splits,
i.e., $C \cong T \oplus N$, while this does not.
It implies that this model is not the same as the 3D toric code.

Now we can compute the ground-state degeneracy, or $\dim_{\FF_2} K(L)$.
Tensoring the boundary condition
\[
B = R/\bb_L = R/(x^L-1,y^L-1,z^L-1)
\]
to the short exact sequence,
we have a long exact sequence
\[
 \cdots \to
\Tor_1(T,B) \xrightarrow{\delta'} \Tor_1( C, B) \xrightarrow{\delta} \Tor_1(N,B) 
\to T \otimes B \to C \otimes B \to N \otimes B \to 0 .
\]
Hence, $K(L) \cong \Tor_1( C,B)$ has vector space dimension $\dim_{\FF_2} \im \delta + \dim_{\FF_2} \ker \delta$.
Since the sequence is exact, $\dim_{\FF_2} \ker \delta = \dim_{\FF_2} \im \delta'$.
As we have seen in Example~\ref{eg:highD-toric-codes},
\begin{align*}
 \Tor_1(T,B) & \cong \Tor_1(R/\mm,B) \cong (\FF_2)^3 , \quad \text{and} \\
 \Tor_1(N,B) & \cong \Tor_2(R/\mm,B) \cong (\FF_2)^3 .
\end{align*}
It follows that $\dim_{\FF_2} K(L) \le \dim_{\FF_2} \Tor_1(N,B) + \dim_{\FF_2} \Tor_1(T,B) = 6$.

It is routine to verify that $\bb_4 \subseteq I_2(\phi) \subseteq \mm := (x+1,y+1,z+1)$.
Recall the decomposition $K(L) = \bigoplus_\pp K(L)_\pp$ where $\pp$ runs over all maximal ideals of $R/\bb_L$.
Due to Lemma~\ref{lem:associated-maximal-localized-homology}, this decomposition consists of only one summand $K(L)_\mm$.
When $L$ is odd, since $(\bb_L)_\mm = \mm_\mm$, we know $K(L)_\mm = K(1)_\mm$.
Since $\phi \mapsto 0$ under $a=b=c=0$, we see $\dim_{\FF_2} K(1) = 4$.
The logical operators in this case are
\[
 \begin{pmatrix}  0 \\ 0 \\ \widehat z  \\ \widehat z  \end{pmatrix}
\smile
 \widehat x \cdot \widehat{y \bar z}
 \begin{pmatrix}  0 \\ 1 \\ 0 \\ 0 \end{pmatrix}
 \quad ; \quad
 \begin{pmatrix}  \widehat x \\ \widehat x \\ 0 \\ 0 \end{pmatrix}
\smile
 \widehat z \cdot \widehat{x y} 
 \begin{pmatrix}  1 \\ 1 \\ 1 \\ 0 \end{pmatrix}
\]
where $\widehat \mu = \sum_{n=0}^{L-1} \mu^n$ so $\mu \cdot \widehat \mu = \widehat \mu$,
and symplectic pairs are tied.
The left elements are string-like, and the right surface-like.

When $L$ is even, the following are $\FF_2$-independent elements of $K(L)$.
As there are 6 in total, the largest possible number, we conclude that $K(L)$ is 6-dimensional,
i.e., the number of encoded qubits is 3 when linear dimensions are even.
\[
 \begin{pmatrix}  0 \\ 0 \\ \widehat z \\ \widehat z \end{pmatrix}
\smile
 \widehat x ' \widehat y '
 \begin{pmatrix}  1+y \\ x+xy \\ 0 \\ 1+x+y+xy \end{pmatrix}
 ;
 \begin{pmatrix}  \widehat x \\ \widehat x \\ 0 \\ 0 \end{pmatrix}
\smile
 \widehat y ' \widehat z '
 \begin{pmatrix}  1+z \\ 1+z \\ 1+z \\ y+yz \end{pmatrix}
  ;
 \begin{pmatrix}  \widehat y \\ \widehat y \\ \widehat y \\ \widehat y \end{pmatrix}
\smile 
 \widehat z ' \widehat x '
 \begin{pmatrix}  0 \\ 1+x+z+xz \\ 1+x \\ 1+x \end{pmatrix}
\]
where $\widehat \mu ' = \sum_{i=0}^{L/2-1} \mu^{2i}$ so $(1+\mu)\widehat \mu ' = \widehat \mu$.
The pairs are transformed cyclically by $x \mapsto y \mapsto z \mapsto x$
together with the symplectic transformation $\omega$ of Eq.~\eqref{eq:symplectic-Levin-Wen}.
\hfill $\Diamond$
\end{example}

\section{Discussion}
\label{sec:lowD-discussion}

There are many natural questions left unanswered.
Perhaps, it would be the most interesting to answer how much 
the associated ideal $I(\sigma)$
determines about the Hamiltonian.
Note that the very algebraic set defined by the associated ideal
is not invariant under coarse-graining.
For instance, in the characteristic dimension zero case,
the algebraic set can be a several points in the affine space,
but becomes a single point under a suitable coarse-graining.
However, the geometry of the algebraic set seems to be crucial
to prove, for example, ``no-strings rule.''
See Chapter~\ref{chap:cubic-code}.

It is interesting on its own to prove or disprove
that the elementary symplectic transformations 
generate the whole symplectic transformation group.
In the zero-dimensional case where $\Lambda$ is the trivial group,
it is true as we have already seen in 
Proposition~\ref{prop:elem-sym-transformation-generates-whole-BlockCodeCase}.
The one-dimensional case is also true 
because it is implied by the computation in the proof of Theorem~\ref{thm:1D-classification}.
A classical problem answered affirmatively by Suslin~\cite{Suslin1977Stability}
is that any sufficiently large invertible matrix over a polynomial ring
is a finite product of elementary matrices such as row operations and scalar multiplications.
Later, an algorithmic proof is given by Park and Woodburn~\cite{ParkWoodburn1995Algorithmic}.
A similar problem under a confusingly similar name `symplectic group' over polynomial rings
is solved by Grunewald~{\em et al.}~\cite{GrunewaldMennickeVaserstein1991SymplecticGroup},
who defined the `symplectic group' as
$ \left\{ S \in \mathrm{Mat}\left(n, \FF[x_1,\ldots,x_n]\right) ~\middle| ~ S^T \lambda S = \lambda \right\}$
where $T$ is the transpose.
Kopeyko~\cite{Kopeyko1999Symplectic} generalized it to include Laurent polynomials,
but still the `symplectic group' is different from ours
since the antipode map is absent from the definition
\[
 \left\{ S \in \mathrm{Mat}\left(n, \FF[x_1^{\pm 1},\ldots,x_n^{\pm 1}]\right) ~\middle| ~ S^T \lambda S = \lambda \right\} .
\]

The characteristic dimension is not proven to be invariant under coarse-graining.
It suffices to have an upper bound on $\dim_{\FF_2} K(L)$ in Corollary~\ref{cor:upper-bound-k}.
without the condition that $2 \nmid L$.
A closely related object is the Hilbert function.
Given a graded module $M = \bigoplus_{s=0}^{\infty} M_s$ 
over a polynomial ring with coefficients in a field $\FF$,
the Hilbert function $f_M$ is a numerical function defined by $f_M(s) = \dim_\FF M_s$.
Since $K(L) = \dim_{\FF_2} \Tor_1(\coker \epsilon, R / \bb_L)$,
the Hilbert function might be useful if we could make $\coker \epsilon$ graded.
A technical difficulty would be that the ideal $\bb_L=(x_1^L-1,\ldots,x_D^L-1)$ 
is not a power of $\mm=(x_1 - 1,\ldots,x_D-1)$.
See \cite[Chapter~12]{Eisenbud} and \cite{IyengarPuthenpurakal2007HilbertSamuelFunctions}.

Lastly, an important problem is to give a criterion to a module $M$
that can be realized as $\coker \sigma^\dagger$
for an exact code Hamiltonian described by $\sigma$.
In the two-dimensional case, we know the answer ---
$\ann M = (x-1,y-1)$ by Theorem~\ref{thm:structure-2d-ann-coker-epsilon}.

\chapter{Cubic code}
\label{chap:cubic-code}

The toric code~\cite{Kitaev2003Fault-tolerant} 
is usually defined on a planar graph with qubits residing on edges.
The star operator and plaquette operators 
are defined according to the data of the graph.
The model is thus a priori lattice dependent.
However, a certain set of important properties of the model
turns out to be lattice independent.
Particularly the ground-state subspace 
conveys a structure that depends only on the topology of the underlying space, 
insensitive to the microscopic detail.
If the underlying space is a genus $g$ (oriented) surface,
the degeneracy is $4^g$.
Even if the Hamiltonian is perturbed,
the energy splitting of lowest $4^g$ states
is exponentially small in the system size
as long as the perturbation is small enough~%
\cite{WenNiu1990GroundState,
Kitaev2003Fault-tolerant,
BravyiHastingsMichalakis2010stability}.
It is a signature of \emph{topological order}.

Authors have used the term `topological quantum order' to mean
all or a part of the following properties:
Ground-state degeneracy as a function of topology,
no spontaneous symmetry breaking,
anyonic particle content~%
\cite{
WenWilczekZee1989Chiral,
Wen1989Degeneracy,
Einarsson1990Fractional,
WenNiu1990GroundState,
ReadSachdev1991LargeN,
Wen1991SpinLiquid},
robust edge modes against 
perturbations~\cite{KaneMele2005,FuKaneMele2007Topological,HasanKane2010TIReview},
locally indistinguishable ground states~\cite{
BravyiHastingsMichalakis2010stability,
BravyiHastings2011short,
MichalakisPytel2011stability},
and topological entanglement entropy~\cite{KitaevPreskill2006Topological,LevinWen2006Detecting}.
Here, we take the local indistinguishability of ground states
as a definition of topological quantum order.
The local indistinguishability is precisely the one used in Chapter~\ref{chap:alg-theory},
as well as in the proof of gap stability 
results~\cite{BravyiHastings2011short,
MichalakisPytel2011stability}.
See Lemma~\ref{lem:local-tqo=exact}.
We ask what quantum phases are possible in the class of code Hamiltonians.

An important example of topological quantum order 
presented in this thesis is \emph{cubic code}.
In this chapter we explain how the model is found,
and study its consequences.
The cubic code is an exact local additive code with translation symmetry 
on the simple cubic lattice, 
where the exactness is as in Chapter~\ref{chap:alg-theory}.
This model is a gapped unfrustrated spin Hamiltonian,
and is topologically ordered in the very sense we just defined,
but breaks many aspects of conventional models of topological order.
Most prominently, the excitations or charges of the cubic code
\emph{cannot} be interpreted as particles
since they are immobile;
the hopping term appears only after $L$-th
or higher order perturbation theory, where $L$ is the linear system size.
The immobility implies that the system spends
quite a long time to reach its thermal equilibrium
in response to environment's change.
(Later, we will give quantitative statements regarding this.)
Moreover, due to the topological quantum order,
the phase is stable with respect to arbitrary but small perturbations;
the immobility of excitations is also protected~\cite{BravyiHastingsMichalakis2010stability}.

The immobility of the charges translates 
into the theory of quantum error correcting codes
as the absence of string logical operators.
We call this property by \emph{no-strings rule}.
The notion of string logical operators might be intuitive
if one imagines the toric codes in two or three 
dimensions~\cite{DennisKitaevLandahlEtAl2002Topological}.
However, this intuition is too model specific;
the strings in discrete lattices are not well-defined objects
since a set of points in the lattice does not in general have
a well-defined dimensionality.
We overcome this issue by defining \emph{string segments}
that capture characteristics that are responsible for the mobility of charges.

We proceed by translating conditions for the absence of the logical string segments
into our algebraic framework,
in order to systematically search for codes without logical strings;
the cubic code is not an ad hoc model.
By Corollary~\ref{cor:annT-3d-characteristic-dimension-0}%
~\cite{Yoshida2011feasibility},
the characteristic dimension of the cubic code must be 1
and the degeneracy must generally grow with the system size.
An explicit formula for the degeneracy is given.
Additionally, a real-space renormalization computation is presented.
It seems that the cubic code is a fixed point of a certain unconventional kind.
Namely, the model is a direct sum of two daughter models $A$ and $B$ at a coarse-grained lattice
where one daughter model $A$ is the same as the original model, but the other $B$ is not.
The model $B$ produces two copies of itself at a further coarse-grained lattice.
The renormalization group flow continues to branch.
Next, we compute the thermal partition function,
and show that there is no finite temperature phase transition.

\section{String segments and no-strings rule}

The most important property of one-dimensional objects
is that a finite part of it has two disjoint boundary points.
An intuitive role of string operator 
is to move an excitation at its one boundary point
to another boundary point.
Essentially, it is a concatenation or juxtaposition of hopping operators.
The hopping operator is, as a whole, a finitely supported operator.
Therefore, if the hopping operator acts on the vacuum (ground state),
the overall effect is to create a trivial or neutral charge,
which consists of two spatially separated charges.
When either of the two charges is neutral by itself,
the hopping is meaningless 
because trivial charges can be annihilated locally
and created anywhere arbitrarily;
only the hopping of nontrivial charges is important.
Now we can define string segments and their absence.
\begin{defn}
\cite{Haah2011Local,BravyiHaah2011Energy}
A {\bf string segment} is a finitely supported Pauli operator
that creates excitations contained in the union of two finite boxes
of {\bf width} $w$.
The string segment is {\bf nontrivial} 
if the charge contained in one of the boxes is nontrivial.
The distance between the boxes is the {\bf length} of the string segment.
We say a model obeys {\bf no-strings rule}
if the length of any nontrivial string segment of width $w$
is bounded by $\alpha w$ for some constant $\alpha \ge 1$.
\label{defn:no-strings}
\end{defn}
\noindent
The no-strings rule may seem too strong than necessary;
why do we need an upper bound by a \emph{linear} function?
It is rather a technicality 
that is necessary to prove a logarithmic energy barrier theorem
in Section~\ref{sec:log-barrier}.
However, our definition seems sharp yet broad enough to derive further results.

An immediate consequence of the no-strings rule
in the case of a translation-invariant code Hamiltonian
is that there are infinitely many charges.
If there were only finitely many,
then in the sequence of all translations of a charge $c$
there would be an equivalent charge $c'$.
It means $c - c'$ is neutral 
and therefore $c-c'$ is created by a finitely supported operator,
which is a nontrivial string segment.
This string segment can be juxtaposed many times
to give arbitrarily long string segments with fixed width.
This is a contradiction to the no-strings rule.
Also, in view of Corollary~\ref{cor:annT-3d-characteristic-dimension-0}
and Lemma~\ref{lem:k-growing-sequence-L},
a translationally invariant three-dimensional exact code Hamiltonian that obeys the no-strings rule,
must have a growing ground-state degeneracy with respect to the system size
when defined on lattices with periodic boundary conditions.

\section{Search for models}

We wish to find an interesting class of models
that are relatively handy to deal with.
The following lemma shows such a family.
Recall that for an exact code Hamiltonian,
we have $\ker \sigma^\dagger \lambda = \im \sigma$.
\begin{lem}
The generating matrix of a code Hamiltonian
\[
\sigma =
\begin{pmatrix}
f & 0 \\
g & 0 \\
0 & g' \\
0 & f' \\
\end{pmatrix},
\]
where $f,g,f',g' \in R$, elements of group algebra of $\ZZ^D$,
is \emph{exact}
if and only if 
\[
\gcd(f,g)=1,
\quad \quad
g' = \bar g,
\quad
 f' = -\bar f 
\]
up to units of $R$.
\label{lem:two-terms-CSS-exactness-condition}
\end{lem}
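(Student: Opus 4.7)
The plan is to translate both the implicit commutativity condition $\sigma^\dagger \lambda_q \sigma = 0$ and the exactness condition $\ker\epsilon = \im\sigma$ into syzygy statements in the Laurent polynomial ring $R = \FF_2[x_1^{\pm 1},\ldots,x_D^{\pm 1}]$, and then exploit the fact that $R$ is a UFD (being the localization of $\FF_2[x_1,\ldots,x_D]$ at the monomials) so that the syzygy module of any pair of elements of $R$ is principal, with a generator read off from the gcd.

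Because $\sigma$ has the CSS form (first column purely of $X$-type, second purely of $Z$-type), so does $\epsilon = \sigma^\dagger \lambda_q$; with $q=2$, a direct computation gives
\[
\epsilon = \begin{pmatrix} 0 & 0 & \bar f & \bar g \\ -\bar{g'} & -\bar{f'} & 0 & 0 \end{pmatrix},
\]
and the commutativity requirement collapses to the single scalar identity $\bar f g' + \bar g f' = 0$, equivalently $f\bar{g'} + g\bar{f'} = 0$ after applying the antipode. Exactness therefore decouples into an upper-block claim (every $(p_1,p_2)\in R^2$ annihilated by $(\bar{g'},\bar{f'})$ lies in $R\cdot(f,g)$) and a lower-block claim (every $(p_3,p_4)\in R^2$ annihilated by $(\bar f,\bar g)$ lies in $R\cdot(g',f')$).

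For the forward direction, I factor $(\bar{g'},\bar{f'}) = h\cdot(a,b)$ with $\gcd(a,b)=1$ in $R$. The syzygies of $(\bar{g'},\bar{f'})$ are then generated by the single element $(b,-a)$, and the commutation identity says $(f,g)$ is one such syzygy, so $(f,g) = t(b,-a)$ for some $t\in R$. Exactness of the upper block forces $R\cdot(f,g) = R\cdot(b,-a)$, i.e., $t$ must be a unit, which is exactly $\gcd(f,g)=1$ up to units. Applying the antipode one then has $\gcd(\bar f,\bar g)=1$, so the same principal-syzygy argument applied to the lower block forces $(g',f') = u\cdot(\bar g,-\bar f)$ for some unit $u\in R$, which is precisely the claim $g' = \bar g$ and $f' = -\bar f$ up to units.

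The reverse direction is a direct verification: the stated relations make the commutation identity $\bar f g' + \bar g f' = 0$ automatic, and in each block $\ker\epsilon$ is, by the UFD/principal-syzygy description, equal to the submodule generated by the appropriate column of $\sigma$ (after absorbing the unit $u$). The only substantive input is the UFD property of $R$; the main annoyance will be careful bookkeeping of antipodes and units, with the sign in $f' = -\bar f$ becoming genuinely nontrivial only in the prime-$p$ qudit extension mentioned elsewhere in the thesis.
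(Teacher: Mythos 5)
Your proof is correct and takes essentially the same route as the paper: decouple the exactness condition into the two CSS blocks and exploit that, over the UFD $R$, the syzygy module of a pair of elements is principal with a generator read off from the gcd. The only cosmetic difference is that the paper factors $\gcd(f,g)$ out of $(f,g)$ up front and then shows the cofactor must be a unit, whereas you first factor $(\bar g',\bar f')$ to deduce $\gcd(f,g)=1$ and then iterate the same argument on the other block.
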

\noindent
It is a generalization of $\sigma$ in Remark~\ref{rem:long-string}.
The 2D toric code of Example~\ref{eg:2d-toric} falls into this form.
\begin{proof}
Observe that there is no mixing between 
the $\sigma^x$ part (the upper two rows of $\sigma$)
and the $\sigma^z$ part (the lower two rows of $\sigma$).
So $\ker \sigma^\dagger \lambda = \im \sigma$ can be checked
separately.
That is, we need to verify whether the kernel of
\[
\epsilon_z = \begin{pmatrix} \bar f & \bar g \end{pmatrix}
\]
is generated by $\begin{pmatrix} g' \\ f' \end{pmatrix}$,
and whether the kernel of
\[
\epsilon_x = \begin{pmatrix} \bar g' & \bar f' \end{pmatrix}
\]
is generated by $\begin{pmatrix} f \\ g \end{pmatrix}$.
If $f = f_1 h$ and $g = g_1 h$ where $\gcd(f,g)=h$,
then $\begin{pmatrix} \bar g_1 \\ -\bar f_1 \end{pmatrix} \in \ker \epsilon_z$.
We have $\begin{pmatrix} \bar g_1 & -\bar f_1 \end{pmatrix}
 = r \begin{pmatrix} g' & f' \end{pmatrix}$ for some $r \in R$.
Since $\gcd(g_1, f_1)=1$, $r$ must be a unit.
Hence, we may set $g' = \bar g_1$ and $f' = -\bar f_1$.
Then, $\begin{pmatrix} f_1 \\ g_1 \end{pmatrix} \in \ker \epsilon_x$,
so $h$ is a unit. The converse is straightforward.
\end{proof}

We stay with the form of $\sigma$ as in
Lemma~\ref{lem:two-terms-CSS-exactness-condition}
for its simplicity.
Now we consider the no-strings rule.
The equivalence classes of topologically nontrivial charges
is $\coker \epsilon$ by Theorem~\ref{thm:charge-equals-torsion},
which decomposes as $\coker \epsilon_z \oplus \coker \epsilon_x$.
Since the two summands are related by the antipode map,
it suffices to consider $\coker \epsilon_x \cong R/(f,g)$ only.
A necessary condition for the no-strings rule is that
any fractal generator should \emph{not} consist of two terms.
For Theorem~\ref{thm:structure-2d-ann-coker-epsilon},
the minimal dimension we should be interested in is 3;
$ R = \FF_2[x^{\pm 1},y^{\pm 1},z^{\pm 1}]$.
Let $I = (f,g)$, the ideal of $R$ generated by $f$ and $g$.
If, for example, $f$ factorizes as $f=(1+x)f_1$ and $f_1 \notin I$,
then $(1+x)$ is a string generator.
So it is necessary that the Gr\"obner basis of the ideal $(f,g)$ does not have
any two-term factor such as $x+1$, $y+1$, or $z+1$.
In order to have a degenerate code Hamiltonian 
(Corollary~\ref{cor:unit-characteristic-ideal-means-nondegeneracy}),
we have to have a non-unit associated ideal.
For simplicity, we demand that $x=y=z=1$ is a root of $f$ and $g$
so that the associated ideal is contained in a maximal ideal $(x+1,y+1,z+1)$.

We further assume that $f$ and $g$ have exponents $0$ or $1$ in each variable.
Thus, $f$ and $g$ are linear combinations of $1,x,y,z,xy,yz,zx,xyz$ over $\FF_2$.
Naively there are $2^{16}$ possibilities, but they are not all different.
For example, three choices $(f(x,y,z),g(x,y,z))$, $(x f(\frac{1}{x}),y,z), x g(\frac{1}{x},y,z))$,
and $(f(y,z,x),g(y,z,x))$ define the same models 
because they are related by the reflection about $yz$-plane
or the $\pi/3$ rotation about $(1,1,1)$-axis.
Up to these symmetries of the unit cube, there are 392 pairs of $f,g$.
An exhaustive search gives
10 models in Table~\ref{tb:10-models-wo-string} that satisfy all our requirements.
The most symmetric model defined by
\[
f = 1+ x + y + z, \quad \quad g = 1+xy+yz+zx
\]
will be called {\bf cubic code}.

\begin{table}[htbp]
\centering
\begin{tabular}{c|c|c}
\hline
   & $f$ & $g$ \\
\hline
1 & $1+x+y+z$ & $1+x y+x z+y z$ \\
2 & $x+y+z+y z$ & $1+y+x y+z+x z+x y z$ \\
3 & $1+x+y+z$ & $1+x z+y z+x y z$ \\
4 & $1+x+z+y z$ & $1+y+x y+x z$ \\
5 & $1+x+z+y z$ & $y+z+x z+y z$ \\
6 & $1+x+y+z$ & $1+y+x z+y z$ \\
7 & $1+x+y+z$ & $1+z+y z+x y z$ \\
8 & $1+x+z+y z$ & $1+y+x y+z+x z+y z$ \\
9 & $1+x+y+z$ & $x y+z+x z+y z$ \\
10 & $1+x+z+y z$ & $1+y+x y+x z+y z+x y z$ \\
\hline
\end{tabular}
\caption{Complete list of cubic codes. 
Each pair of polynomials defines an exact degenerate code Hamiltonian
according to Lemma~\ref{lem:two-terms-CSS-exactness-condition}.
They potentially obey the no-strings rule.
Especially, code 1 indeed obeys the no-strings rule as proven in Section~\ref{sec:cubic-code}.
The table is exhaustive, up to symmetries of simple cubic lattice and local symplectic transformations,
under the following criteria:
(i) $f$ and $g$ have exponents $0$ or $1$ for each variable,
(ii) any member of the Gr\"obner basis of the ideal $(f,g)$ is not divided by any two-term factor, and
(iii) $f$ and $g$ become zero when $x=y=z=1$.
In the main text, the {\em cubic code} refers specifically to the code 1 in the table.
The numbering is consistent with a table in \cite{Haah2011Local}.}
\label{tb:10-models-wo-string}
\end{table}

\section{Cubic code}
\label{sec:cubic-code}

Written out explicitly, cubic code
is the translation-invariant negative sum of two types of interaction terms as in Figure~\ref{fig:CubicCode}.
\begin{figure}[b]
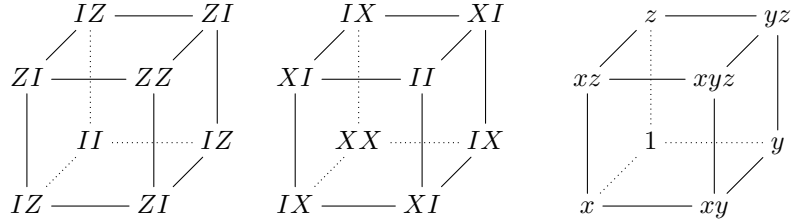

\[
\drawgenerator{ZI}{ZZ}{IZ}{ZI}{IZ}{II}{ZI}{IZ} 
\quad
\drawgenerator{XI}{II}{IX}{XI}{IX}{XX}{XI}{IX}
\quad \quad
\drawgenerator{xz}{xyz}{x}{xy}{y}{1}{yz}{z}
\]
\caption{Stabilizer generators of the 3D cubic code. Here
$X\equiv \sigma^x$ and $Z\equiv \sigma^x$ represent single-qubit Pauli operators,
while $I$ is the identity operator.
Double-letter indices represent two-qubit Pauli operators,
for example,
$IZ\equiv I\otimes Z$, $ZZ\equiv Z\otimes Z$, $II\equiv I\otimes I$, etc.}
\label{fig:CubicCode}
\end{figure}
There are two qubits per site,
and the two-letter notation stands for tensor product of Pauli matrices.
For example, $XI = \sigma^x \otimes I$, $ZZ = \sigma^z \otimes \sigma^z$, etc.
The third cube specifies the coordinate system of the simple cubic lattice.
The generating map for the stabilizer module is
\[
\sigma_\text{cubic-code} =
\begin{pmatrix}
1 + x + y + z    & 0 \\
1 + xy + yz + zx & 0 \\
0 & 1 + \bar x \bar y + \bar y \bar z + \bar z \bar x \\
0 & 1 + \bar x + \bar y + \bar z \\
\end{pmatrix}
\]
where one has to interchange the first and second qubit.
The associated ideal is contained in a prime ideal of codimension 2 in $\FF_2[x^{\pm 1},y^{\pm 1},z^{\pm 1}]$:
\[
 I(\sigma) \subseteq ( 1+x+y+z,~ 1+xy+yz+zx ) = \pp_{xyz}.
\]
Since $\codim I(\sigma) \ge 2$, the characteristic dimension is 1.
Since $\coker \epsilon_\text{cubic-code} = R / \pp_{xyz} \oplus R / \overline{ \pp_{xyz}}$,
any nonzero element of $\pp_{xyz}$ is a fractal generator.
Since the conditions used in the search for the model
were only necessary conditions for the no-strings rule.
A rigorous treatments is as follows.
We prove a purely algebraic statement,
of which the no-strings rule is an interpretation.
A more elementary method can be found in \cite{Haah2011Local}.
\begin{lem}
Let $S = \FF_2[x,y,z]$ be a polynomial ring, and $\pp = (1+x+y+z,~1+xy+yz+zx) \subseteq S$ an ideal.
If $m_1 e_1 + m_2 e_2 \in \pp$ for polynomials $e_1, e_2$
and monomials $m_1, m_2$ such that $\gcd( m_1, m_2 ) = 1$,
then only one of the following is true:
\begin{itemize}
\item $e_1 \in \pp$ and $e_2 \in \pp$.
\item $\max( \deg e_1, \deg e_2 ) \ge \max( \deg m_1, \deg m_2 )$.
\end{itemize}
\label{lem:alg-proof-no-string-cubic}
\end{lem}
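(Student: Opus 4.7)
My plan is to translate the ideal membership $m_1 e_1 + m_2 e_2 \in \pp$ into an identity of rational functions on the smooth projective model of the curve $V(\pp) \subset \mathbb{A}^3_{\FF_2}$, then bound the degrees of $e_1, e_2$ by a divisor count on that model.

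First I would record two structural facts about $\pp$. The pair $(f,g) = (1+x+y+z,\ 1+xy+yz+zx)$ is a regular sequence: $f$ is a nonzero linear polynomial, and modulo $f$ (eliminating $z = 1+x+y$) the image of $g$ is $1 + x + y + x^2 + xy + y^2 \in \FF_2[x,y]$, which is nonzero and irreducible. Hence $\pp$ is prime and $S/\pp$ is a one-dimensional integral domain. After the change of variables $X = x+1,\ Y = y+1,\ Z = z+1$ the ideal becomes the homogeneous ideal $(X+Y+Z,\, XY+YZ+ZX)$; inside the plane $X+Y+Z=0$ the quadric cuts out $X^2 + XY + Y^2 = 0$, which factors over $\FF_4$ as $(X + \omega Y)(X + \omega^2 Y)$ for $\omega$ a primitive cube root of unity. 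Thus $V(\pp)(\overline{\FF_2})$ is the union of two distinct lines $L_\omega$ and $L_{\omega^2}$ meeting only at the shifted origin (a node), and the normalization $\tilde C$ is their disjoint union, whose projective completion is $\mathbb{P}^1 \sqcup \mathbb{P}^1$ with two points at infinity $P_\omega$ and $P_{\omega^2}$.

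Next I would compute divisors of the coordinate functions. Parameterizing $L_\omega$ by $s$ via $(x,y,z) = (1 + \omega s,\ 1 + s,\ 1 + \omega^2 s)$, the functions $\bar x, \bar y, \bar z$ each have degree $1$: $\bar x$ has a simple zero at $s = \omega^2$, $\bar y$ at $s = 1$, $\bar z$ at $s = \omega$ (all distinct since $1, \omega, \omega^2$ are distinct in $\FF_4$), and each has a simple pole at $P_\omega$; the picture on $L_{\omega^2}$ is symmetric. A monomial $m = x^a y^b z^c$ of degree $d$ therefore restricts on each component to a polynomial in $s$ of degree $d$, with zeros at the vanishing loci of $\bar x, \bar y, \bar z$ of multiplicities $a, b, c$ and a pole of order $d$ at infinity. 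In particular no monomial lies in $\pp$ (none vanishes at the shifted origin), and every polynomial $h$ of degree at most $d$ has pole of order at most $d$ at each of $P_\omega, P_{\omega^2}$.

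Finally I would combine these facts by a divisor comparison. The hypothesis gives $\bar m_1 \bar e_1 = \bar m_2 \bar e_2$ in $S/\pp$ using $\mathrm{char}(\FF_2) = 2$; if $\bar e_1 = 0$, then since $\bar m_2 \neq 0$ in the domain $S/\pp$ we also have $\bar e_2 = 0$ and land in the first conclusion. Otherwise both $\bar e_i$ are nonzero. For each variable $v \in \{x,y,z\}$ appearing in $m_1$ with exponent $\alpha_v \ge 1$, the coprimality $\gcd(m_1, m_2) = 1$ forces $v$ to be absent from $m_2$; at the zero $Q$ of $\bar v$ on $L_\omega$ the other two coordinates do not vanish, so $\bar m_2$ is a unit in the local DVR at $Q$, and equating valuations yields
\[
\mathrm{ord}_Q(\bar e_2) = \alpha_v + \mathrm{ord}_Q(\bar e_1) \geq \alpha_v,
\]
since $\bar e_1$ is regular at $Q$. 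Summing over the three coordinates and both components $L_\omega, L_{\omega^2}$ gives $\deg\mathrm{div}_0(\bar e_2) \geq 2 \deg m_1$, while the pole bound above yields $\deg\mathrm{div}_\infty(\bar e_2) \leq 2\deg e_2$; since principal divisors on each $\mathbb{P}^1$ component of $\tilde C$ have degree zero, this forces $\deg e_2 \geq \deg m_1$. The symmetric argument gives $\deg e_1 \geq \deg m_2$, and taking maxima yields the desired bound. The main subtlety is that the two branches of $V(\pp)$ are Galois-conjugate over $\FF_2$ rather than defined over $\FF_2$, but because orders of vanishing and divisor degrees are Galois-invariant, the inequalities descend from $\overline{\FF_2}$ to $\FF_2$ without issue.
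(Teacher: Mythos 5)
Your proof is correct and is, at bottom, the same argument as the paper's: you parameterize a branch of $V(\pp)$ over $\FF_4$ by a linear map, observe that the images of $x,y,z$ are pairwise coprime degree-one polynomials in the parameter, and then compare zero and pole orders to get $\deg e_2\ge\deg m_1$ and $\deg e_1\ge\deg m_2$. The paper packages this as a ring homomorphism $\phi\colon S\to\FF_4[t]$ with $\ker\phi=\pp$, using coprimality and degree counting in $\FF_4[t]$ in place of your divisor bookkeeping on $\mathbb{P}^1\sqcup\mathbb{P}^1$; the content is identical.
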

\begin{proof}
Put $\FF_4 = \{ 0, 1, \omega, \omega^2 \}$,
i.e., $\omega$ is the primitive third root of unity over the binary field.
Consider a ring homomorphism $\phi : S \to U:=\FF_4[t]$ defined by
\[
\phi : x \mapsto 1+t, \quad y \mapsto 1+\omega t, \quad z \mapsto 1+\omega^2 t .
\]
It maps the two generators of $\pp$ to zero in $U$.
\begin{align*}
 \phi(1+x+y+z)    &= 4 + (1+\omega+\omega^2) t = 0 ,\\
 \phi(1+xy+yz+zx) &= 4 + 2(1+\omega+\omega^2)t + (1+\omega+\omega^2)t^2 = 0 .
\end{align*}
Moreover, $\ker \phi$ is precisely $\pp$.
(This can be verified by eliminating the variable $x$ 
and computing the Gr\"obner basis of $((y+1)+\omega(z+1),\omega^2+\omega+1)$ in an elimination monomial order.)
If $m_1 e_1 + m_2 e_2 \in \pp$, then $\phi(m_1)\phi(e_1) = \phi(m_2)\phi(e_2)$.
Since $m_1$ and $m_2$ are co-prime monomials and $\phi(x),\phi(y),\phi(z)$ are pairwise co-prime,
it follows that $\phi(m_1)$ and $\phi(m_2)$ are nonzero and co-prime.
Therefore, $\phi(e_1) = 0$ if and only if $\phi(e_2) = 0$ if and only if $e_1, e_2 \in \pp$,
which is the first case.
If $\phi(e_1) \neq 0$, then $\phi(m_1)$ must divide $\phi(e_2)$ and $\phi(m_2)$ must divide $\phi(e_1)$.
Since $\phi$ is degree-preserving, we have the second case.
\end{proof}
\begin{theorem}
The cubic code obeys the no-strings rule with the constant $\alpha = 1$ under $\ell_\infty$-metric.
\label{thm:CubicCode-no-strings}
\end{theorem}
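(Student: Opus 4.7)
The plan is to translate the no-strings rule directly into the hypothesis of Lemma~\ref{lem:alg-proof-no-string-cubic}. Since $\sigma_{\mathrm{cubic\text{-}code}}$ has the CSS block form of Lemma~\ref{lem:two-terms-CSS-exactness-condition}, the excitation map $\epsilon$ decomposes into a $Z$-stabilizer component valued in the ideal $\pp = \pp_{xyz} = (1+x+y+z,\ 1+xy+yz+zx) \subseteq R$ and an $X$-stabilizer component valued in $\bar{\pp}$. By Theorem~\ref{thm:charge-equals-torsion}, a charge supported in a finite box is topologically trivial in a given component precisely when its polynomial representative lies in $\pp$ (respectively $\bar{\pp}$).

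Given a string segment $p$ with $\epsilon(p)$ contained in $B_1 \cup B_2$, where each $B_i$ is a box of width $w$, I would split the $Z$-component of $\epsilon(p)$ as $e_1 + e_2$ with $e_i$ supported in $B_i$, and handle the $X$-component analogously. After translating so that the componentwise minimum of the two box corners lies at the origin, I factor $e_i = m_i \tilde{e}_i$, where $\tilde{e}_i$ is a polynomial supported in $[0, w-1]^3$ and the monomials $m_1, m_2$ record the displacements of $B_1, B_2$ from the origin. With this choice $m_1$ collects the negative components of the displacement between $B_1$ and $B_2$ while $m_2$ collects the positive components, so $\gcd(m_1, m_2) = 1$.

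Now I would invoke Lemma~\ref{lem:alg-proof-no-string-cubic} on the relation $m_1 \tilde{e}_1 + m_2 \tilde{e}_2 \in \pp$. If both $\tilde{e}_i \in \pp$, then both $e_i \in \pp$, so the corresponding charges in each box are trivial in that component; repeating the argument in the $X$-component via the antipode then forces both total charges to be trivial, contradicting the assumed nontriviality of the string segment. Hence the second alternative must hold, giving $\max(\deg \tilde{e}_1, \deg \tilde{e}_2) \ge \max(\deg m_1, \deg m_2)$, where $\deg$ denotes total degree.

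Converting this algebraic bound to the geometric statement, I would observe that every monomial in $\tilde{e}_i$ has total degree controlled by $w$ (since $\tilde{e}_i$ is supported in $[0,w-1]^3$), while $\max(\deg m_i)$ bounds below the $\ell_\infty$-separation between the two boxes by restricting to the single coordinate realising the $\ell_\infty$-distance. The main obstacle is matching the total-degree measurement native to the lemma — visible from the homomorphism $\phi(x^a y^b z^c) = (1+t)^a(1+\omega t)^b(1+\omega^2 t)^c$ of degree $a+b+c$ in $t$ — with the $\ell_\infty$-metric of the theorem, and in particular arranging the box normalization so that the total degree of $m_i$ coincides with the dominant coordinate of the displacement between $B_1$ and $B_2$, which is what pins down the sharp constant $\alpha = 1$.
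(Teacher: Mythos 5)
Your proposal follows the same line of argument as the paper's proof: exploit the CSS structure to reduce to a single ideal $\pp = (1+x+y+z,\ 1+xy+yz+zx)$, normalize the two charge polynomials so that the separation is absorbed into coprime monomials $m_1, m_2$, invoke Lemma~\ref{lem:alg-proof-no-string-cubic}, and then translate the total-degree inequality into a statement about box separation. The paper handles the CSS duality more economically — noting that $\coker\epsilon$ is a direct sum of two summands related by the antipode map, it simply fixes attention on the $\sigma^x$-type charges from the start — whereas you run the lemma on the $Z$-component, and upon landing in the ``both trivial'' alternative, rerun on the $X$-component; these are logically equivalent, and the choice is stylistic. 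Your coprime factorization (negative displacement components in $m_1$, positive in $m_2$) is identical to the paper's condition that each variable appear in only one of $m_1, m_2$, and, like the paper, you need the fact $\pp_{xyz}\cap S = \pp$ to stay inside the polynomial ring where Lemma~\ref{lem:alg-proof-no-string-cubic} applies.

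The obstacle you flag at the end is real but you have diagnosed the wrong side of it. The direction $\deg m_i \ge$ (dominant coordinate of the displacement) is automatic — total degree of a monomial always dominates its largest exponent — so no special normalization is needed there. The genuine slippage is on the width side: if $\tilde e_i$ is supported in an $\ell_\infty$-cube of side $w$, its total degree can be as large as $3(w-1)$, not $w$, so the chain ``$3(w-1) \ge \max\deg\tilde e_i \ge \max\deg m_i \ge$ ($\ell_\infty$-displacement)'' naively yields $\alpha$ closer to $3$. The paper's own proof sidesteps this by quietly identifying ``width of the string segment'' with $\max(\deg e_1, \deg e_2)$, i.e., the absolute-degree ($\ell_1$-diameter) of the charge supports rather than the $\ell_\infty$-diameter of the enclosing boxes; with that reading, $\alpha = 1$ is immediate. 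So the paper is no more rigorous than you on this point, and your instinct that the metric bookkeeping needs care is sound — the specific value of $\alpha$ is not exploited downstream, only the fact that the bound is linear in $w$, which both versions of the argument deliver.
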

\begin{proof}
Since the cubic code is translationally invariant,
we may use the formalism of Chapter~\ref{chap:alg-theory}.
Since the cubic code is of CSS type, where $\coker \epsilon$ is a direct sum of isomorphic summands,
we only have to consider $\sigma^x$-type charges.
The set of all virtual charges is $R^1 = R = \FF_2[x^{\pm 1},y^{\pm 1},z^{\pm 1}]$
and the set of all trivial charges are given by a submodule (ideal) $\pp_{xyz} = (1+x+y+z,1+xy+yz+zx) \subseteq R$.
Note that $R$ is the localization of $S = \FF_2[x,y,z]$ by a single element $xyz$,
and $\pp_{xyz}$ is the localization of $\pp = (1+x+y+z,~1+xy+yz+zx)$ by the same single element $xyz$.

Let $e_1$ and $e_2$ be the charges contained in two boxes of a string segment.
Since they are overall trivial, we have $e_1 + x^i y^j z^k e_2 \in \pp$ where $i,j,k \in \ZZ$.
Equivalently, we may write $m_1 e_1 + m_2 e_2 \in \pp$ where $e_1$ and $e_2$ have nonnegative exponents,
and $m_1, m_2$ are monomials such that each variable ($x$, $y$, or $z$) appears
only in one of $m_1$ or $m_2$.
Since $\pp_{xyz} \cap S = \pp$, which can be verified using Gr\"obner basis techniques,
we are in the situation of Lemma~\ref{lem:alg-proof-no-string-cubic}.
The width of the string segment is the maximum of the degrees of $e_1$ and $e_2$,
and the $\ell_1$-distance between the boxes enclosing $e_1$ and $e_2$ is $\le \max( \deg m_1, \deg m_2 )$.
If we use $\ell_\infty$-distance, the constant $\alpha$ in the no-strings rule is $1$.
\end{proof}

Let us explicitly calculate the ground-state degeneracy
when the Hamiltonian is defined on $L \times L \times L$ 
cubic lattice with periodic boundary conditions.
By Corollary~\ref{cor:k-formulas},
\[
 k = \dim_{\FF_2} R / (\pp_{xyz} + \bb_L) \oplus R / (\overline{\pp_{xyz}} + \bb_L) = 2 \dim_{\FF_2} R / (\pp_{xyz} + \bb_L),
\]
where $\bb_L = ( x^L -1,~ y^L -1,~ z^L -1 )$.
So the calculation of ground-state degeneracy comes down to the calculation of
\[
 d = \dim_{\FF_2} T' / \pp
\]
where $T' = \FF_2[x,y,z]/(x^{n_1}-1, y^{n_2}-1, z^{n_3}-1 )$.

We may extend the scalar field to any extension field
without changing $d$.
Let $\FF$ be the algebraic closure of $\FF_2$ and let
\[
 T=\FF[x,y,z]/(x^{n_1}-1, y^{n_2}-1, z^{n_3}-1 )
\]
be an Artinian ring.
By Proposition~\ref{prop:Artin-ring},
it suffices to calculate for each maximal ideal $\mm$ of $T$
the vector space dimension 
\[
d_\mm = \dim_{\FF} (T / \pp)_\mm
\]
of the localized rings, and sum them up.

Suppose $n_1,n_2,n_3 > 1$.
By Nullstellensatz, any maximal ideal of $T$ is of form
$\mm=(x-x_0,y-y_0,z-z_0)$ where $x_0^{n_1}=y_0^{n_2}=z_0^{n_3}=1$.
(If $n_1 = n_2 = n_3 = 1$, then $T$ becomes a field, and there is no maximal ideal other than zero.)
Put $n_i = 2^{l_i}n_i'$ where $n_i'$ is not divisible by $2$.
Since the polynomial $x^{n_1} -1$ contains the factor $x-x_0$ with multiplicity $2^{l_1}$,
it follows that
\[
 T_\mm = \FF [x,y,z]_\mm / ( x^{2^{l_1}}+a',~ y^{2^{l_2}}+b',~ z^{2^{l_3}}+c')
\]
where $a' = x_0^{2^{l_1}}, b' = y_0^{2^{l_2}}, c' = z_0^{2^{l_3}}$.
Hence, $(T/\pp)_\mm \cong \FF[x,y,z] / I'$ where
\[
 I' = (x+y+z+1, xy+xz+yz+1 ,~ x^{2^{l_1}}+a',~ y^{2^{l_2}}+b',~ z^{2^{l_3}}+c').
\]
If $I' = \FF[x,y,z]$, then $d_\mm = 0$.

Without loss of generality, we assume that $l_1 \le l_2 \le l_3$.
By powering the first two generators of $I'$,
we see that $(x_0,y_0,z_0)$ must be a solution of them in order for $I'$ not to be a unit ideal.
Eliminating $z$ and shifting $x \to x+1$, $y \to y+1$,
our objective is to calculate the Gr\"obner basis for the proper ideal
\[
 I = (x^2+xy+y^2, x^{2^{l_1}}+a,~ y^{2^{l_2}}+b )
\]
where $a=a'+1$ and $b = b'+1$. So
\[
 d_\mm = \dim_{\FF} \FF[x,y]/I.
\]
One can easily deduce by induction that $y^{2^m} + x^{2^m-1}(m x +y) \in I$ for any integer $m \ge 0$.
And $b = \omega a^{2^{l_2 - l_1}}$ for a primitive third root of unity $\omega$.
So we arrive at
\[
 I = ( y^2 + yx + x^2,~ yx^{2^{l_2} -1 } + b ( 1 + l_2 \omega^2 ),~ x^{2^{l_1}} + a )
\]
We apply the Buchberger criterion.
If $a \neq 0$, i.e., $x_0 \neq 1$,
then $b \neq 0$ and $I = (x+(\omega^2 + l_2)y, x^{2^{l_1}} + a)$,
so $d_\mm = 2^{l_1}$

If $a=b=0$,
then $I = (y^2 + yx + x^2, yx^{2^{l_2}-1}, x^{2^{l_1}} )$.
The three generators form Gr\"obner basis if $l_2 = l_1$.
Thus, in this case, $d_\mm = 2^{l_1+1}-1$.
If $ l_2 > l_1$, then $d_\mm = 2^{l_1+1}$.

To summarize, except for the special point $(1,1,1)\in \FF^3$ of the affine space,
each point in the algebraic set
\[
 V = \left\{ (x,y,z) \in \FF^3 ~\middle|~
\begin{matrix}
x+y+z+1 = xy+xz+yz+1 = 0 \\  
x^{n_1'}-1 = y^{n_2'}-1 = z^{n_3'}-1 = 0
\end{matrix}
\right\}
\]
contribute $2^{l_1}$ to $d$. The contribution of $(1,1,1)$ is either $2^{l_1 +1}$ or $2^{l_1 +1}-1$.
The latter occurs if and only if $l_1$ and $l_2$,
the two smallest numbers of factors of $2$ in $n_1,n_2,n_3$,
are equal. Let $d_0 = \#V$ be the number of points in $V$.
The desired answer is
\[
 d = 2^{l_1} (d_0 -1) + 
\begin{cases} 
2^{l_1 +1} -1 & \text{if $l_1 = l_2$ } \\
2^{l_1 +1}    & \text{otherwise}
\end{cases}
\]
where $l_1 \le l_2 \le l_3$ are the number of factors of $2$ in $n_i$.

The algebraic set defined by $(x+y+z+1,~xy+xz+yz+1)$ is the union of two isomorphic lines
intersecting only at $x=y=z=1$, one of which is parametrized by $x \in \FF$ as
\[
 (1+x, 1+\omega x, 1+\omega^2 x) \in \FF^3,
\]
and another is parametrized as
\[
 (1+x,1+\omega^2 x, 1+\omega x) \in \FF^3.
\]
where $\omega$ is a primitive third root of unity.
Therefore, the purely geometric number $d_0 = 2 d_1 -1$ can be calculated by
\[
 d_1 = \deg_x \gcd \left( (1+x)^{n_1'}+1, (1+\omega x)^{n_2'}+1, (1+\omega^2 x)^{n_3'}+1 \right) .
\]
Using $(\alpha+\beta)^{2^p} = \alpha^{2^p} + \beta^{2^p}$ and $\omega^2 + \omega + 1 = 0$,
one can easily compute some special cases as summarized in the following corollary.
Some values of $k$ for small $L$ are presented in Table~\ref{tb:k-numerics} and Figure~\ref{fig:k-numerics}.

\begin{figure}[htbp]
\centering
\begin{minipage}{.45\textwidth}
\includegraphics[width=\textwidth]{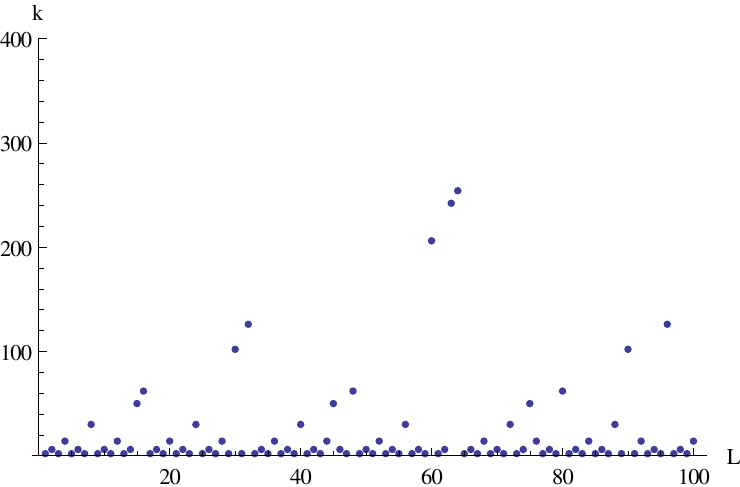}
\end{minipage}
\begin{minipage}{.45\textwidth}
\includegraphics[width=\textwidth]{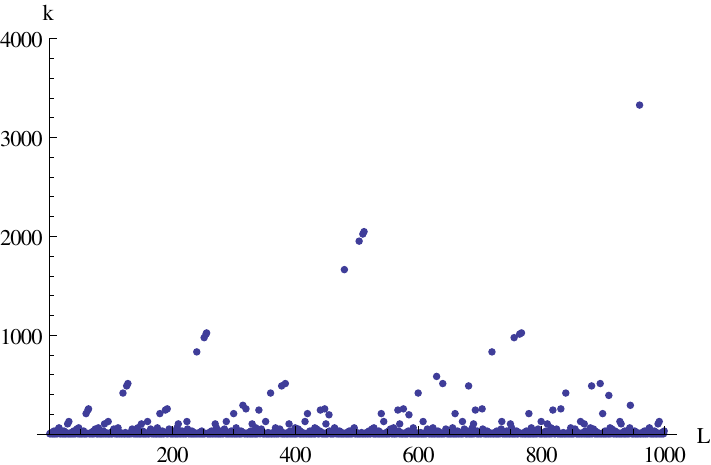}
\end{minipage}
\caption{Number of encoded qubits $k$ of the cubic code
defined on $L\times L \times L$ periodic lattice}
\label{fig:k-numerics}
\end{figure}

\begin{cor}
\label{cor:cubic-code-degeneracy-formula}
Let $2^k$ be the ground-state degeneracy of
the cubic code on the cubic lattice of size $L^3$ with periodic boundary conditions.
($k=k(L)$ is the number of encoded qubits.)
Then
\begin{align*}
\frac{k+2}{4} 
&= \deg_x \gcd \left( (1+x)^{L}+1,~ (1+\omega x)^{L}+1,~ (1+\omega^2 x)^{L}+1 \right)_{\FF_4} \\
&= \begin{cases}
     1    & \text{if $L = 2^p+1$}, \\
     L    & \text{if $L =2^p$}, \\
     L-2  & \text{if $L = 4^p -1$}, \\
     1    & \text{if $L = 2^{2p+1} -1$}.
   \end{cases}
\label{eq:cubic-code-formula-k}
\end{align*}
where $\omega^2 + \omega + 1 = 0$ and $p \ge 1$ is any integer.
If $L = 2^r L'$, then $k(L)+2 = 2^r( k(L') + 2 )$.
\end{cor}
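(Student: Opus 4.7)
The approach will be to unpack the explicit formula for $d = \dim_{\FF_2} R/(\pp_{xyz}+\bb_L)$ derived in the paragraphs immediately preceding the statement, specialize it to the fully symmetric case $n_1 = n_2 = n_3 = L$, and then evaluate the resulting gcd over $\FF_4$ in each listed case. First I would write $L = 2^r L'$ with $L'$ odd so that the $2$-adic valuations $l_1, l_2, l_3$ all equal $r$; this lands us in the ``$l_1 = l_2$'' branch of the displayed formula, giving $d = 2^{r+1}d_1(L') - 1$ and hence $(k+2)/4 = 2^r d_1(L')$ via $k = 2d$. To lift the exponent from $L'$ to $L$, the plan is to invoke the characteristic-two Frobenius $(a+b)^{2^r} = a^{2^r} + b^{2^r}$, which gives $(1+\zeta x)^L + 1 = ((1+\zeta x)^{L'}+1)^{2^r}$ for each $\zeta \in \{1,\omega,\omega^2\}$; since $\FF_4[x]$ is a UFD, the gcd commutes with the $2^r$-th power, so the gcd-degree with exponent $L$ equals $2^r d_1(L')$, matching the first claimed formula. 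The recursion $k(L)+2 = 2^r(k(L')+2)$ will then follow by taking ratios.

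Next I would dispose of the four special cases as gcd computations in $\FF_4[x]$. The case $L = 2^p$ is the recursion applied to $L' = 1$, where $d_1(1) = \deg_x \gcd(x, \omega x, \omega^2 x) = 1$. For $L = 2^p + 1$, I would use $(1+x)^{2^p} = 1 + x^{2^p}$ to factor $(1+x)^L + 1 = x\,A(x)$ with $A(x) = 1 + x^{2^p-1} + x^{2^p}$, and similarly $\omega x \cdot B$ and $\omega^2 x \cdot C$ for the other two, then check $\gcd(A,B,C) = 1$ directly --- using $A + B + C = 1$ when $p$ is odd, or $A + B = \omega^2 x^{2^p}$ together with $A(0) = 1$ when $p$ is even --- to conclude $d_1(L) = 1$. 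For $L = 4^p - 1 = |\FF_{4^p}^\times|$, separability for odd $L$ identifies the $L$-th roots of unity in $\FF^a$ with $\FF_{4^p}^\times$; after the substitution $y = 1 + x$ the common roots must additionally make $u := \omega^2 + \omega y$ and $v := \omega + \omega^2 y$ into $L$-th roots of unity, and since $\omega \in \FF_4 \subset \FF_{4^p}$ both $u, v$ automatically lie in $\FF_{4^p}$ and fail to be units only at $y = \omega$ and $y = \omega^2$, so I expect $d_1(L) = L - 2$. For $L = 2^{2p+1}-1$, the $L$-th roots of unity live in $\FF_{2^m}$ with $m \mid 2p+1$, hence in a field of odd degree over $\FF_2$ not containing $\FF_4$; applying the Frobenius $t \mapsto t^{2^m}$ to $u \in \FF_{2^m}$ will force $y = 1$, giving $d_1(L) = 1$.

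The main obstacle will be the $L = 4^p - 1$ count, which rests on two algebraic ingredients pulling in the same direction: separability of $y^L - 1$ over $\FF_2$ for odd $L$ (so that its roots form exactly the full cyclic group $\FF_{4^p}^\times$) and the clean field containment $\FF_4 \subset \FF_{4^p}$ (so that the $\omega$-shifts $u, v$ stay inside the same field). The complementary $L = 2^{2p+1} - 1$ case exhibits exactly the opposite containment failure $\FF_4 \not\subset \FF_{2^{\mathrm{odd}}}$, and together with the $L = 2^p + 1$ expansion and the trivial $L = 2^p$ recursion, all four advertised special values are covered.
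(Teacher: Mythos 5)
Your plan is correct and follows essentially the same route as the paper, which derives the general formula $(k+2)/4 = 2^{l_1} d_1$ from the localization computation and then points to the Frobenius identity $(\alpha+\beta)^{2^p}=\alpha^{2^p}+\beta^{2^p}$ together with $1+\omega+\omega^2=0$ as the tools for evaluating the special cases. Your Frobenius lift of the gcd from $L'$ to $L$, the recursion $k(L)+2=2^r(k(L')+2)$, and the four case computations all check out (in particular, for $L=4^p-1$ the count $L-2$ from removing $y\in\{0,\omega,\omega^2\}$ from $\FF_{4^p}^\times$ is exactly right, and for $L=2^{2p+1}-1$ the Frobenius constraint forcing $y=1$ is sound because $2^{\mathrm{odd}}\equiv 2\pmod 3$). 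One small stylistic note: you use two different methods for the special cases --- direct polynomial algebra with the auxiliary polynomials $A,B,C$ for $L=2^p+1$, and field-theoretic root counting for $L=4^p-1$ and $L=2^{2p+1}-1$. This is a sensible split, because for $L=2^p+1$ the group $\mu_L$ is a \emph{proper} subgroup of $\FF_{4^p}^\times$, so the clean ``all nonzero field elements are $L$-th roots'' argument available for $L=4^p-1$ no longer applies and the direct manipulation is genuinely easier; you might want to state this explicitly so a reader does not wonder why the method changes midstream.
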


\begin{table}[tp]
\centering
\begin{tabular}{r|rll}
\hline
\multicolumn{4}{c}{$(k+2)/4 = 1 + 12 \sum_n q_n(L)$} \\
\multicolumn{4}{c}{$q_n(L)$ is nonzero only if $n | L$.} \\
\hline
$q_n(L)$&  & $n$ & \\
\hline
1 & 15 & $= 2^4 -1$ & $=3 \cdot 5$  \\
5 & 63 & $= 2^6 -1$ & $=3^2 \cdot 7$  \\
20 & 255 & $= 2^8 -1$ & $=3 \cdot 5 \cdot 17$  \\
80 & 1023 & $=2^{10}-1$ & $=3 \cdot 11 \cdot 31$  \\
322 & 4095 & $= 2^{12} -1$ & $=3^2 \cdot 5 \cdot 7 \cdot 13$ \\
\hline
5 & 341 & $=(2^{10}-1)/(2^2 -1)$ & $= 11 \cdot 31$  \\
6 & 1365 & $=(2^{12}-1)/(2^2 -1)$ & $=3 \cdot 5 \cdot 7 \cdot 13$  \\
49 & 5461 & $=(2^{14}-1)/(2^2 -1)$ & $=43 \cdot 127$  \\
\hline
4 & 455 & $=(2^{12}-1)/(2^3 +1)$ & $= 5 \cdot 7 \cdot 13$  \\
3 & 585 & $=(2^{12}-1)/(2^3 -1)$ & $= 3^2 \cdot 5 \cdot 13$  \\
9 & 9709& $=(2^{18}-1)/(3(2^3+1))$& $= 7\cdot 19 \cdot 73$  \\
5 & 11275&$=11(2^{10}+1)$ & $=5^2 \cdot 11 \cdot 41$ \\
\hline
\end{tabular}
\caption{Numerical values of $k$ for \emph{odd} linear size $L$
computed from Corollary~\ref{cor:cubic-code-degeneracy-formula}.
The list is complete if $2 \le L \le 20000$.
For example, if $L=945=15 \cdot 63$, then $\frac{k+2}{4} = 1 + 12\cdot(1+5) = 73$.
}
\label{tb:k-numerics}
\end{table}

\section{Real-space renormalization of the cubic code}
\label{sec:real-space-RG-CubicCode}

Renormalization group refers to a machinery 
to extract essential properties of the system at long distances.
For lattice spin models, the so-called ``block spin'' method amounts to
considering a sequence of coarse-grained lattices
and finding effective Hamiltonians pertaining to the coarse-grained lattices~\cite{Wilson1975RG,Kadanoff1977RG}.
In the sequence of coarse-graining, short-ranged correlations will disappear,
and long-ranged essential correlations will remain.
Thus, if a generic Hamiltonian retains its form under the renormalization group flow,
we expect that the Hamiltonian should represent a proper phase of matter,
and can wonder about universal aspects of the phase.

For topologically ordered systems,
it is interesting to look at entanglement structure instead of correlation functions,
since, as they are gapped, the correlation functions of local operators would decay 
exponentially with the distance between the regions the operators act on~\cite{HastingsKoma2006CorDecay},
and hence, if they are renormalization group fixed points, they will have zero correlation length.
As we wish to ignore local deformations of the system,
it is legitimate to apply a finite depth quantum circuit (local unitary) to the system
so as to remove some local
entanglement~\cite{VerstraeteCiracLatorreEtAl2005Renormalization,
Vidal2007ER,AguadoVidal2007Entanglement,Vidal2008class,ChenGuWen2010transformation}.
More precisely, one applies a finite composition of unitary operators on a ground state,
each of which can be written as a product of local unitary operators of disjoint supports,
and then try to identify spins in a product state.

Code Hamiltonians admit an even simpler renormalization scheme.
Under the local unitaries the Hamiltonian is conjugated,
and spins in product states is readily identified by single-spin operators
such as $\sigma^z$ acting on unentangled spins.
If we restrict ourselves to translationally invariant case,
we may also assume that the local unitaries obey the translation invariance.
For example, it is well-known that the 2D toric code model
is a renormalization group fixed point under this scheme%
~\cite{AguadoVidal2007Entanglement}.

In Chapter~\ref{chap:alg-theory}, we have developed enough tools for the simplified renormalization group flow computation.
In fact, we have done a computation in Example~\ref{eg:wen-plaquette}.
To warm up, let us compute the renormalization of the 2D toric code.
The generating matrix is
\[
 \sigma =
\begin{pmatrix}
 1+x & 0 \\
 1+y & 0 \\
\hline
\hline
  0  & 1+\frac{1}{y} \\
  0  & 1+\frac{1}{x} \\
\end{pmatrix}
\]
If we coarse-grain the lattice by blocking two sites in $x$-direction,
then every module is now viewed as a module over $\FF_2[x^{\pm 2}, y^{\pm 1}]$,
and each entry of $\sigma$ is replaced by a $2 \times 2$ matrix,
since $\sigma$ is a map between free modules $G$ and $P$.
Concretely,
\[
 \sigma' = 
\begin{pmatrix}
 1 & 1 & 0 & 0 \\
 x & 1 & 0 & 0 \\
 1+y & 0 & 0 & 0 \\
 0 & 1+y & 0 & 0 \\
\hline
\hline
 0 & 0 & 1+\frac{1}{y} & 0 \\
 0 & 0 & 0 & 1+\frac{1}{y} \\
 0 & 0 & 1 & \frac{1}{x} \\
 0 & 0 & 1 & 1 
\end{pmatrix}
\]
where the double line distinguishes $\sigma^x$-part and $\sigma^z$-part.
Here, the variable $x$ really means the translation along $x$-direction by \emph{two} units of the original lattice.
Applying local unitaries, we see
\[
 \sigma' \to
\left(
\begin{array}{cccc}
 1 & 1 & 0 & 0 \\
 0 & 1+x & 0 & 0 \\
 0 & 1+y & 0 & 0 \\
 0 & 1+y & 0 & 0 \\
\hline
\hline
 0 & 0 & 0 & 0 \\
 0 & 0 & 0 & 1+\frac{1}{y} \\
 0 & 0 & 1 & \frac{1}{x} \\
 0 & 0 & 1 & 1
\end{array}
\right)
\to
\left(
\begin{array}{cccc}
 1 & 1 & 0 & 0 \\
 0 & 1+x & 0 & 0 \\
 0 & 1+y & 0 & 0 \\
 0 & 0 & 0 & 0 \\
\hline
\hline
 0 & 0 & 0 & 0 \\
 0 & 0 & 0 & 1+\frac{1}{y} \\
 0 & 0 & 0 & 1+\frac{1}{x} \\
 0 & 0 & 1 & 1
\end{array}
\right)
\to
\left(
\begin{array}{c|c|c|c}
 1 & 0 & 0 & 0 \\
\hline
 0 & 1+x & 0 & 0 \\
 0 & 1+y & 0 & 0 \\
\hline
 0 & 0 & 0 & 0 \\
\hline
\hline
 0 & 0 & 0 & 0 \\
\hline
 0 & 0 & 0 & 1+\frac{1}{y} \\
 0 & 0 & 0 & 1+\frac{1}{x} \\
\hline
 0 & 0 & 1 & 0
\end{array}
\right).
\]
In the last matrix, it is evident that the first qubit (all the first qubits on every site)
and the fourth qubit are disentangled.
Ignoring those, we see that the generating matrix and also the Hamiltonian retain the original form.
A drawback of this computation is that it is hard to understand why this should happen.
Fortunately, there is a better understanding for two-dimensional quantum double models~\cite{Kitaev2003Fault-tolerant},
which includes the 2D toric code model,
using so-called $G$-injective PEPS by Schuch, Cirac, and P\'erez-Garc\'ia~\cite{SchuchCiracPerez-Garcia2010G-injective}.
Their conclusion is that
the quantum double model is constructed with the regular representation of a finite symmetry group $G$,
and it is a renormalization group fixed point
because of the plethysm $\mathbb{C}G \otimes \mathbb{C}G \cong \mathbb{C}G \otimes (\mathbb{C}1)^{\oplus |G|}$ 
of the regular representation.

Now we return to the cubic code. We perform a similar computation as above,
and find that under blocking of $2 \times 2 \times 2$ sites (16 spins in total)
the cubic code model $A$ decomposes into two non-interacting Hamiltonians $A$ and $B$
living in the coarse-grained lattice,
one of which is the same as the original $A$,
but the other $B$ looks different.
Detailed calculation will be given below.
The generating matrices are as follows.
\[
 \sigma_A = 
\begin{pmatrix}
 1+x+y+z & 0 \\
 1+x y+y z+z x & 0 \\
\hline
\hline
 0 & 1+\frac{1}{xy}+\frac{1}{yz}+\frac{1}{zx} \\
 0 & 1+\frac{1}{x}+\frac{1}{y}+\frac{1}{z} \\
\end{pmatrix}, \quad
 \sigma_B = 
\left(
 \begin{array}{cccc}
 x+z & 1+x &   &   \\
 1+x & 1+z &   &   \\
 x+y & 1+y &   &   \\
 1+y & 1+x &   &   \\
\hline
\hline
   &   & 1+\frac{1}{y} & \frac{1}{x}+\frac{1}{y} \\
   &   & 1+\frac{1}{x} & 1+\frac{1}{y} \\
   &   & 1+\frac{1}{x} & \frac{1}{x}+\frac{1}{z} \\
   &   & 1+\frac{1}{z} & 1+\frac{1}{x}
\end{array}
\right).
\]
We can repeat the renormalization group flow computation for the model $B$ only.
We find that after blocking of $2 \times 2 \times 2$ sites,
$B$ is renormalized to the identical two copies of $B$ itself.
(Calculation will be given below.)
\[
 A \xrightarrow{2\times 2 \times 2} A \oplus B, \quad \text{ and } \quad
B \xrightarrow{2 \times 2 \times 2} B \oplus B.
\]
The renormalization group yields an explicit method to produce the ground states of the cubic code,
Start with a state one wish to encode, put auxiliary qubits in the trivial state,
apply the inverse local unitaries, and iterate the overall process on the \emph{refined} lattice with more auxiliary qubits.
It is slightly different from the MERA (Multiscale Entanglement Renormalization Ansatz)
prescription of 2D toric code state developed 
in~\cite{AguadoVidal2007Entanglement}.
Rather a so-called branching MERA~\cite{EvenblyVidal2012class} is more appropriate.

We do not have a deeper understanding why this should happen.
However, there are some consistency checks
from the degeneracy formula and the annihilator of the module of topological charges.
Corollary~\ref{cor:cubic-code-degeneracy-formula} says that
the number of encoded qubits $k$ is
\begin{align*}
k(L) = 4L -2 
&= \left( 4 \frac{L}{2} - 2\right) & &+ \left( 4 \frac{L}{2} \right) \\
&= \left( 4 \frac{L}{4} - 2\right) + \left( 4 \frac{L}{4} \right) & &+ \left( 4 \frac{L}{4} \right) + \left( 4 \frac{L}{4} \right)
\end{align*}
when the linear system size $L$ is a power of 2.
The expression is decomposed to display contributions from the model $A$ and the model $B$ explicitly.
It suggests that the model $A$, the original cubic code, 
cannot give rise to an identical pair of models at coarse-grained lattice,
whereas $B$ can.
It is not too clear whether the two models $A$ and $B$ are really non-isomorphic.

On the other hand, the annihilator of the topological charge module 
tells us there is something special about $2 \times 2 \times 2$ blocking.
The topological charge module is the torsion part of the virtual excitation module factored by trivial charge module,
i.e., the torsion submodule of $\coker \epsilon$.
As for the cubic code, $\coker \epsilon$ is a torsion module, which is a direct sum of two isomorphic summands.
\begin{align*}
& \coker \epsilon = R/\pp_{xyz} \oplus R / \overline{ \pp_{xyz} } , \\
& R = \FF_2[x^{\pm 1},y^{\pm 1},z^{\pm 1}],\\
& \pp_{xyz} = (1+x+y+z,1+xy+yz+zx) \subset R.\\
\end{align*}
where the bar denotes antipode map.
We focus on the first summand $R/\pp_{xyz}$.
If we replace $\FF_2$ with its algebraic closure $\FF$ for convenience,
$R/\pp_{xyz} = (\FF[x,y,z]/\pp)_{xyz}$ is the coordinate ring of an affine variety defined by $\pp$
localized on the complement of the union of three planes $xyz=0$.
The variety is a union of two isomorphic lines, 
as we have seen in the degeneracy calculation of Section~\ref{sec:cubic-code}.
One line is parameterized as
\[
x = 1+ t, \quad y = 1+ \omega t, \quad z = 1 + \omega^2 t
\]
and the other is
\[
x = 1+t, \quad y = 1+ \omega^2 t, \quad z = 1+ \omega t
\]
where $\omega$ is the primitive third root of unity.
The ideal $\pp_{xyz}$ is the annihilator of a module $M = R/\pp_{xyz}$.
Under coarse-graining $M$ is promoted to a module over the coarse-grained
translation group algebra, which is $R ' = \FF_2[x^{\pm 2}, y^{\pm 2}, z^{\pm 2}]$
in our $2 \times 2 \times 2$ blocking.
Then, the annihilator of $M$ in the new ring $R'$ is just $R' \cap I$.
It is easy to verify that $R' \cap \pp_{xyz} = (1+x^2 + y^2 + z^2, 1+ x^2 y^2 + y^2 z^2 + z^2 x^2 ) = \pp'$
using Gr\"obner basis.
As rings, $R'/\pp'$ and $R/\pp_{xyz}$ are isomorphic.
Thus, it is consistent that $A$ renormalizes to something similar to itself.
Further, we see that the charge annihilator of $B$ must be the same as that of $A$.

This observation tells us that $2^3$ blocking is special.
If we had blocked $3^3$ sites, we would not see the self-reproducing behavior,
since the charge annihilator would be different:
\begin{align*}
 \FF_2[x^{\pm 3}, y^{\pm 3}, z^{\pm 3}] \cap \pp_{xyz} = 
& \left( 1+{y'}+{y'}^2+{y'}^3+{z'}+{y'} {z'}+{y'}^2 {z'}+{z'}^2+{y'} {z'}^2+{z'}^3, \right. \\
& \left. 1+{x'}+{x'}^2+{y'}+{x'} {y'}+{y'}^2+{z'}+{x'} {z'}+{y'} {z'}+{z'}^2 \right)
\end{align*}
where $x'=x^3,y'=y^3,z'=z^3$.

\subsection*{Calculation}

The generating matrix $\sigma_A$
for the stabilizer module of the cubic code
transforms under the coarse-graining by blocking two sites along $x$-direction as
\begin{align*}
 \sigma_A \xrightarrow{\text{coarse-grain }x^2 \to x} \quad
& \sigma_1 = \begin{pmatrix} \sigma_{1X} & 0 \\ 0 & \sigma_{1Z} \end{pmatrix} \\
& \sigma_{1X} =
\begin{pmatrix}
 1+y+z   & 1     \\
 x       & 1+y+z \\
 1+y z   & y+z   \\
 x y+x z & 1+y z \\
\end{pmatrix}, \quad
\sigma_{1Z} =
\begin{pmatrix}
 1+\frac{1}{y z} & \frac{1}{x y}+\frac{1}{x z} \\
 \frac{1}{y}+\frac{1}{z} & 1+\frac{1}{y z} \\
 1+\frac{1}{y}+\frac{1}{z} & \frac{1}{x} \\
 1 & 1+\frac{1}{y}+\frac{1}{z} \\
\end{pmatrix}.
\end{align*}
We apply elementary symplectic transformations.
Recall $\dagger$ is the transpose followed by entry-wise antipode 
map $x \mapsto x^{-1}$, $y \mapsto y^{-1}$, $z \mapsto z^{-1}$.
\[
\sigma_2 = \begin{pmatrix} \sigma_{2X} & 0 \\ 0 & \sigma_{2Z} \end{pmatrix}
 = 
\begin{pmatrix} r_1 & 0 \\ 0 & r_1^\dagger \end{pmatrix} 
\begin{pmatrix} \sigma_{1X} & 0 \\ 0 & \sigma_{1Z} \end{pmatrix}
\begin{pmatrix} c_1 & 0 \\ 0 & c_1^\dagger \end{pmatrix}
 \]
where
\begin{align*}
\sigma_{2X} &=
\left(\begin{array}{c|c}
 0 & 1 \\
\hline
 1+x+y^2+z^2 & 0  \\
 1+y+y^2+z+y z+z^2 & 0  \\
\hline
 0 & 0  \\
\end{array}\right),
& &
\sigma_{2Z} =
\left(
\begin{array}{c|c}
 0 & 0 \\
\hline
0 & 1+\frac{1}{y^2}+\frac{1}{y}+\frac{1}{z^2}+\frac{1}{z}+\frac{1}{y z} \\
0 & 1+\frac{1}{x}+\frac{1}{y^2}+\frac{1}{z^2} \\
\hline
1 & 0 \\
\end{array}
\right) ,
\\
 r_1 &= 
\begin{pmatrix}
 1     & 0   & 0     & 0 \\
 1+y+z & 1   & 0     & 0 \\
 y+z   & 0   & 1     & 0 \\  
 1+y z & y+z & 1+y+z & 1 \\
\end{pmatrix},
& &
c_1 =
\begin{pmatrix}
 1     & 0 \\
 1+y+z & 1 \\
\end{pmatrix}.
\end{align*}
The first and fourth qubit may be factored out from $\sigma_2$.
A subsequent coarse-graining by blocking two sites in each $y$- and $z$-direction gives
\begin{align*}
\sigma_2 \xrightarrow{y^2 \to y,~~z^2 \to z} \quad
& \sigma_3 = \begin{pmatrix} \sigma_{3X} & 0 \\ 0 & \sigma_{3Z} \end{pmatrix} \\
\end{align*}
\begin{align*}
& \sigma_{3X} =
\begin{pmatrix}
 1+x+y+z & 0 & 0 & 0 \\
 0 & 1+x+y+z & 0 & 0 \\
 0 & 0 & 1+x+y+z & 0 \\
 0 & 0 & 0 & 1+x+y+z \\
 1+y+z & 1 & 1 & 1 \\
 z & 1+y+z & z & 1 \\
 y & y & 1+y+z & 1 \\
 y z & y & z & 1+y+z
\end{pmatrix},
\end{align*}
\begin{align*}
 & \sigma_{3Z} =
\begin{pmatrix}
 1+\frac{1}{y}+\frac{1}{z} & \frac{1}{z} & \frac{1}{y} & \frac{1}{y z} \\
 1 & 1+\frac{1}{y}+\frac{1}{z} & \frac{1}{y} & \frac{1}{y} \\
 1 & \frac{1}{z} & 1+\frac{1}{y}+\frac{1}{z} & \frac{1}{z} \\
 1 & 1 & 1 & 1+\frac{1}{y}+\frac{1}{z} \\
 1+\frac{1}{x}+\frac{1}{y}+\frac{1}{z} & 0 & 0 & 0 \\
 0 & 1+\frac{1}{x}+\frac{1}{y}+\frac{1}{z} & 0 & 0 \\
 0 & 0 & 1+\frac{1}{x}+\frac{1}{y}+\frac{1}{z} & 0 \\
 0 & 0 & 0 & 1+\frac{1}{x}+\frac{1}{y}+\frac{1}{z} \\
\end{pmatrix}
\end{align*}
We keep applying elementary symplectic transformations.
\[
 \sigma_4 = \begin{pmatrix} \sigma_{4X} & 0 \\ 0 & \sigma_{4Z} \end{pmatrix}
=
\begin{pmatrix} r_{2X} & 0 \\ 0 & r_{2Z} \end{pmatrix}
\begin{pmatrix} \sigma_{3X} & 0 \\ 0 & \sigma_{3Z} \end{pmatrix}
\begin{pmatrix} c_{2X} & 0 \\ 0 & c_{2Z} \end{pmatrix}
\]
where
\begin{align*}
\sigma_{4X} = 
\left(\begin{array}{c|c|cc}
 0 & 1 & 0 & 0  \\
 0 & 0 & 0 & 0  \\
\hline
 0 & 0 & 1+x+y+z & 0  \\
 0 & 0 & 0 & 1+x+y+z  \\
 1+x+y+z & 0 & 0 & 0  \\
 x+x y+z+x z & 0 & 1+y & y+z \\
 y+x y & 0 & 1+z & 1+y       \\
 x y+y z & 0 & y+z & 1+z     \\
\end{array}\right),
\end{align*}
\begin{align*}
\sigma_{4Z} =
\left(\begin{array}{c|ccc}
 0 & 0 & 0 & 0 \\
 1 & 0 & 0 & 0 \\
\hline
 0 & 1+\frac{1}{y} & 1+\frac{1}{z} & \frac{1}{y}+\frac{1}{z} \\
 0 & \frac{1}{y}+\frac{1}{z} & 1+\frac{1}{y} & 1+\frac{1}{z} \\
 0 & \frac{1}{x}+\frac{1}{x y}+\frac{1}{z}+\frac{1}{x z} & \frac{1}{y}+\frac{1}{x y} & \frac{1}{x y}+\frac{1}{y z} \\
 0 & 1+\frac{1}{x}+\frac{1}{y}+\frac{1}{z} & 0 & 0 \\
 0 & 0 & 1+\frac{1}{x}+\frac{1}{y}+\frac{1}{z} & 0 \\
 0 & 0 & 0 & 1+\frac{1}{x}+\frac{1}{y}+\frac{1}{z}
\end{array}\right),
\end{align*}
\begin{align*}
r_{2X} = 
\begin{pmatrix}
 1 & 0 & 0 & 0 & 1 & 0 & 0 & 0   \\
 1+y+z & 1 & 1 & 1 & 1+x+y+z & 0 & 0 & 0   \\
 0 & 0 & 1 & 0 & 0 & 0 & 0 & 0   \\
 0 & 0 & 0 & 1 & 0 & 0 & 0 & 0   \\
 1 & 0 & 0 & 0 & 0 & 0 & 0 & 0   \\
 1+y+z & 0 & 0 & 0 & 1+y+z & 1 & 0 & 0   \\
 y & 0 & 0 & 0 & y & 0 & 1 & 0   \\
 y & 0 & 0 & 0 & y & 0 & 0 & 1   \\
\end{pmatrix},
\end{align*}
\begin{align*}
r_{2Z} =
\begin{pmatrix}
 0 & 1+\frac{1}{x}+\frac{1}{y}+\frac{1}{z} & 0 & 0 & 1 & 1+\frac{1}{y}+\frac{1}{z} & \frac{1}{y} & \frac{1}{y} \\
 0 & 1 & 0 & 0 & 0 & 0 & 0 & 0 \\
 0 & 1 & 1 & 0 & 0 & 0 & 0 & 0 \\
 0 & 1 & 0 & 1 & 0 & 0 & 0 & 0 \\
 1 & \frac{1}{x} & 0 & 0 & 1 & 0 & 0 & 0 \\
 0 & 0 & 0 & 0 & 0 & 1 & 0 & 0 \\
 0 & 0 & 0 & 0 & 0 & 0 & 1 & 0 \\
 0 & 0 & 0 & 0 & 0 & 0 & 0 & 1 \\
\end{pmatrix}.
\end{align*}
\begin{align*}
c_{2X} =
\begin{pmatrix}
 1 & 0 & 0 & 0 \\
 x & 1 & 1 & 1 \\
 0 & 0 & 1 & 0 \\
 0 & 0 & 0 & 1 \\
\end{pmatrix}, \quad
c_{2Z} =
\begin{pmatrix}
 1 & 1+\frac{1}{y}+\frac{1}{z} & \frac{1}{y} & \frac{1}{y} \\
 0 & 1 & 0 & 0 \\
 0 & 0 & 1 & 0 \\
 0 & 0 & 0 & 1
\end{pmatrix},
\end{align*}
Factoring out the first and second qubit from $\sigma_4$
and applying elementary symplectic transformations,
we arrive at
\[
 \sigma_5 = 
\begin{pmatrix} \sigma_{5X} & 0 \\ 0 & \sigma_{5Z} \end{pmatrix}
=
\begin{pmatrix} r_{3X} & 0 \\ 0 & r_{3Z} \end{pmatrix}
\left( \begin{pmatrix} \sigma_{4X} & 0 \\ 0 & \sigma_{4Z} \end{pmatrix} \rvert_{\{1,2\}^c} \right)
\begin{pmatrix} c_{3X} & 0 \\ 0 & c_{3Z} \end{pmatrix}
\]
where
\begin{align*}
\sigma_{5X} =
\begin{pmatrix}
 1+x+y+z & 0 & 0  \\
 1+x y+x z+y z & 0 & 0  \\
 0 & x+z & 1+x  \\
 0 & 1+x & 1+z  \\
 0 & x+y & 1+y  \\
 0 & 1+y & 1+x  \\
\end{pmatrix},
\quad
\sigma_{5Z} =
\begin{pmatrix}
 0 & 0 & 1+\frac{1}{x y}+\frac{1}{x z}+\frac{1}{y z} \\
 0 & 0 & 1+\frac{1}{x}+\frac{1}{y}+\frac{1}{z} \\
 1+\frac{1}{y} & \frac{1}{x}+\frac{1}{y} & 0 \\
 1+\frac{1}{x} & 1+\frac{1}{y} & 0 \\
 1+\frac{1}{x} & \frac{1}{x}+\frac{1}{z} & 0 \\
 1+\frac{1}{z} & 1+\frac{1}{x} & 0
\end{pmatrix},
\end{align*}
\begin{align*}
r_3 =
\begin{pmatrix}
 0 & 0 & 1 & 0 & 0 & 0 \\
 0 & 0 & 1 & 1 & 1 & 1 \\
 1 & 1 & x & 1 & 0 & 0 \\
 1 & 0 & 1 & 1 & 1 & 0 \\
 1 & 0 & 0 & 0 & 1 & 0 \\
 0 & 1 & 1+x & 1 & 0 & 0  \\
\end{pmatrix},
\quad
r_{3Z} =
\begin{pmatrix}
1 & \frac{1}{x} & 1 & 1 & 1 & 1 \\
0 & 0 & 0 & 0 & 0 & 1 \\
1 & 0 & 0 & 0 & 1 & 1 \\
0 & 1 & 0 & 1 & 0 & 1 \\
0 & 1 & 0 & 1 & 1 & 0 \\
1 & 1 & 0 & 0 & 1 & 1
\end{pmatrix},
\end{align*}
\begin{align*}
c_3 =
\begin{pmatrix}
 1 & 0 & 0 \\
 1 & 1 & 0 \\
 x & 0 & 1 \\
\end{pmatrix},
\quad
c_{3Z} =
\begin{pmatrix}
1 & 0 & 1 \\
0 & 1 & 1 \\
0 & 0 & 1
\end{pmatrix}.
\end{align*}
It is clear that $\sigma_A$, $\sigma_1$, $\sigma_2$, $\sigma_3$, $\sigma_4$, and $\sigma_5$ are all equivalent
because we applied only elementary symplectic transformations.
$\sigma_5$ shows a decomposition of $\sigma$ into two non-interacting two models,
one of which is $\sigma_A$ at $2 \times 2 \times 2$ coarse-grained lattice and another is $\sigma_B$.
We perform a similar renormalization for $\sigma_B$.
The coarse-graining by blocking two sites in $x$-direction gives
\begin{align*}
\sigma_B \xrightarrow{x^2 \to x} \quad
&\sigma_{B1} = \begin{pmatrix} \sigma_{B1X} & 0 \\ 0 & \sigma_{B1Z} \end{pmatrix} \\
&
\sigma_{B1X} =
\begin{pmatrix}
 z & 1 & 1 & 1   \\
 x & z & x & 1   \\
 1 & 1 & 1+z & 0 \\
 x & 1 & 0 & 1+z \\
 y & 1 & 1+y & 0 \\
 x & y & 0 & 1+y \\
 1+y & 0 & 1 & 1 \\
 0 & 1+y & x & 1 \\
\end{pmatrix},
\quad
\sigma_{B1Z} =
\begin{pmatrix}
 1+\frac{1}{y} & 0 & \frac{1}{y} & \frac{1}{x} \\
 0 & 1+\frac{1}{y} & 1 & \frac{1}{y} \\
 1 & \frac{1}{x} & 1+\frac{1}{y} & 0 \\
 1 & 1 & 0 & 1+\frac{1}{y} \\
 1 & \frac{1}{x} & \frac{1}{z} & \frac{1}{x} \\
 1 & 1 & 1 & \frac{1}{z} \\
 1+\frac{1}{z} & 0 & 1 & \frac{1}{x} \\
 0 & 1+\frac{1}{z} & 1 & 1
\end{pmatrix}.
\end{align*}
Apply elementary symplectic transformations to factor out trivial qubits.
\[
\sigma_{B2} =
 \begin{pmatrix} \sigma_{B2X} & 0 \\ 0 & \sigma_{B2Z} \end{pmatrix}
=
\begin{pmatrix} r_{4X} & 0 \\ 0 & r_{4Z} \end{pmatrix}
\begin{pmatrix} \sigma_{B1X} & 0 \\ 0 & \sigma_{B1Z} \end{pmatrix}
\begin{pmatrix} c_{4X} & 0 \\ 0 & c_{4Z} \end{pmatrix}
\]
where
{\small
\begin{align*}
&\sigma_{B2X} =                    & & \sigma_{B2Z} = \\
&\left(
\begin{array}{c|cc|c}
 0 & 0 & 0 & 1 \\
 0 & 0 & 0 & 0 \\
 1 & 0 & 0 & 0 \\
 0 & 0 & 0 & 0 \\
\hline
 0 & 1+y & y+z & 0       \\
 0 & 1+x+z+y z & y+z & 0 \\
 0 & 0 & x+y+z+y z & 0   \\
 0 & y+z & 1+x+y+y z & 0 \\
\end{array}
\right),
& &
\left(\begin{array}{c|c|c|c}
 0 & 0 & 0 & 0 \\
 0 & 0 & 1 & 0 \\
 0 & 0 & 0 & 0 \\
 1 & 0 & 0 & 0 \\
\hline
 0 & 1+\frac{1}{x}+\frac{1}{z}+\frac{1}{y z} & 0 & \frac{1}{y}+\frac{1}{z} \\
 0 & 1+\frac{1}{y} & 0 & \frac{1}{y}+\frac{1}{z} \\
 0 & \frac{1}{y}+\frac{1}{z} & 0 & 1+\frac{1}{x}+\frac{1}{y}+\frac{1}{y z} \\
 0 & 0 & 0 & \frac{1}{x}+\frac{1}{y}+\frac{1}{z}+\frac{1}{y z}
\end{array}\right),
\end{align*}
\begin{align*}
& r_{4X} = &  & r_{4Z} = \\
& \begin{pmatrix}
 1 & 0 & z & 0 & 0 & 0 & 0 & 0    \\
 y & 1 & 1+y & 0 & z & 1 & 1 & 1  \\
 0 & 0 & 1 & 0 & 0 & 0 & 0 & 0    \\
 1+y & 0 & 1 & 1 & 1 & 1 & 1+z & 0\\
 0 & 0 & y & 0 & 1 & 0 & 0 & 0         \\
 1+y & 0 & x+z+y z & 0 & 0 & 1 & 0 & 0 \\
 0 & 0 & 1+y & 0 & 0 & 0 & 1 & 1       \\
 1 & 0 & z & 0 & 0 & 0 & 0 & 1         \\
\end{pmatrix},
& &
\begin{pmatrix}
 1 & 1 & 0 & 1+\frac{1}{z} & 0 & 1+\frac{1}{y} & 1 & 1 \\
 0 & 1 & 0 & 0 & 0 & 0 & 0 & 0 \\
 \frac{1}{z} & \frac{1}{x} & 1 & \frac{1}{x} & \frac{1}{y} & \frac{1}{x} & 1+\frac{1}{y} & 0 \\
 0 & 0 & 0 & 1 & 0 & 0 & 0 & 0 \\
 0 & \frac{1}{z} & 0 & 1 & 1 & 0 & 0 & 0 \\
 0 & 1 & 0 & 1 & 0 & 1 & 0 & 0 \\
 0 & 1 & 0 & 1+\frac{1}{z} & 0 & 0 & 1 & 0 \\
 0 & 0 & 0 & 1+\frac{1}{z} & 0 & 0 & 1 & 1 \\
\end{pmatrix}.
\end{align*}
\[
c_{4X} =
\begin{pmatrix}
 1 & 1 & 0 & 0   \\
 0 & 1 & 1+z & 0 \\
 0 & 0 & 1 & 0   \\
 0 & 1+z & z & 1 \\
\end{pmatrix},
\quad
c_{4Z} =
\begin{pmatrix}
1 & 1 & 0 & \frac{1}{y} \\
0 & 1 & 0 & 1 \\
0 & 1+\frac{1}{y} & 1 & 1 \\
0 & 0 & 0 & 1
\end{pmatrix},
\]
}
Factoring out trivial qubits from $\sigma_{B2}$ and coarse-graining by blocking two sites along $z$-direction,
we have
\begin{align*}
& \sigma_{B2}|_{\{1,2,3,4\}^c} \xrightarrow{z^2 \to z} \quad 
\sigma_{B3} = \begin{pmatrix} \sigma_{B3X} & 0 \\ 0 & \sigma_{B3Z} \end{pmatrix}, \text{ where }
\end{align*}
\begin{align*}
& \sigma_{B3X} = & & \sigma_{B3Z} =\\
& \begin{pmatrix}
 1+y & 0 & y & 1    \\
 0 & 1+y & z & y    \\
 1+x & 1+y & y & 1  \\
 z+y z & 1+x & z & y\\
 0 & 0 & x+y & 1+y  \\
 0 & 0 & z+y z & x+y\\
 y & 1 & 1+x+y & y  \\
 z & y & y z & 1+x+y\\
\end{pmatrix},
& &
\begin{pmatrix}
 1+\frac{1}{x} & \frac{1}{z}+\frac{1}{y z} & \frac{1}{y} & \frac{1}{z} \\
 1+\frac{1}{y} & 1+\frac{1}{x} & 1 & \frac{1}{y} \\
 1+\frac{1}{y} & 0 & \frac{1}{y} & \frac{1}{z} \\
 0 & 1+\frac{1}{y} & 1 & \frac{1}{y} \\
 \frac{1}{y} & \frac{1}{z} & 1+\frac{1}{x}+\frac{1}{y} & \frac{1}{y z} \\
 1 & \frac{1}{y} & \frac{1}{y} & 1+\frac{1}{x}+\frac{1}{y} \\
 0 & 0 & \frac{1}{x}+\frac{1}{y} & \frac{1}{z}+\frac{1}{y z} \\
 0 & 0 & 1+\frac{1}{y} & \frac{1}{x}+\frac{1}{y}
\end{pmatrix}.
\end{align*}
Again apply elementary symplectic transformations.
\[
\sigma_{B4} =
 \begin{pmatrix} \sigma_{B4X} & 0 \\ 0 & \sigma_{B4Z} \end{pmatrix}
=
\begin{pmatrix} r_{5X} & 0 \\ 0 & r_{5Z} \end{pmatrix}
\begin{pmatrix} \sigma_{B3X} & 0 \\ 0 & \sigma_{B3Z} \end{pmatrix}
\begin{pmatrix} c_{5X} & 0 \\ 0 & c_{5Z} \end{pmatrix}
\]
where
\[
 \sigma_{B4X} =
\left(\begin{array}{c|c|c|c}
 0 & 0 & 1 & 0 \\
\hline
 1+x+y+x y^2+z+y z & 0 & 0 & x y+y^2+x y^2+y z \\
\hline
 0 & 0 & 0 & 0 \\
\hline
 1+x^2+x y+x^2 y+z+y^2 z & 0 & 0 & x^2 y+z+y z+y^2 z \\
 x+y+x y+y^2 & 0 & 0 & 1+y+x y+y^2 \\
 z+y^2 z & 0 & 0 & x+y+y z+y^2 z \\
\hline
 0 & 1 & 0 & 0 \\
\hline
 y+x y+x y^2+y^2 z & 0 & 0 & 1+x+y+y^2+x y^2+z+y z+y^2 z
\end{array}\right),
\]
\[
 \sigma_{B4Z} =
\left(\begin{array}{cc|c|c}
 0 & 0 & 0 & 0 \\
\hline
 1+\frac{1}{y^2} & 1+\frac{1}{x} & 0 & 1+\frac{1}{x}+\frac{1}{y}+\frac{1}{y z} \\
\hline
 0 & 0 & 1 & 0 \\
\hline
 1+\frac{1}{y} & 1+\frac{1}{y} & 0 & 1+\frac{1}{z} \\
 1+\frac{1}{x}+\frac{1}{x y} & \frac{1}{z} & 0 & \frac{1}{x z}+\frac{1}{y z} \\
 \frac{1}{y^2} & \frac{1}{y} & 0 & 1+\frac{1}{x}+\frac{1}{z}+\frac{1}{y z} \\
\hline
 0 & 0 & 0 & 0 \\
\hline
 1+\frac{1}{y^2} & 0 & 0 & \frac{1}{x}+\frac{1}{y}+\frac{1}{z}+\frac{1}{y z}
\end{array}\right),
\]
\[
 r_{5X} =
\left(\begin{array}{cccccccc}
 1 & 0 & 0 & 0 & 0 & 0 & 0 & 0 \\
 1+x+y+x y+z & 1 & 0 & 0 & 0 & 0 & 1+y & 0 \\
 1+x+y & y & 1 & 1 & 1+x & 1+y & x+y & 1+y \\
 1+x^2+y z & 0 & 0 & 1 & 0 & 0 & 1+x & 0 \\
 x+y & 0 & 0 & 0 & 1 & 0 & 0 & 0 \\
 z+y z & 0 & 0 & 0 & 0 & 1 & 0 & 0 \\
 0 & 0 & 0 & 0 & 0 & 0 & 1 & 0 \\
 y+x y+z+y z & 0 & 0 & 0 & 0 & 0 & y & 1
\end{array}\right),
\]
\[
 r_{5Z} =
\left(\begin{array}{cccccccc}
 1 & 1+\frac{1}{x}+\frac{1}{y}+\frac{1}{x y}+\frac{1}{z} & \frac{1}{x y} & 1+\frac{1}{x^2}+\frac{1}{y z} & \frac{1}{x}+\frac{1}{y} & \frac{1}{z}+\frac{1}{y z} & 0 & \frac{1}{y}+\frac{1}{x y}+\frac{1}{z}+\frac{1}{y z} \\
 0 & 1 & \frac{1}{y} & 0 & 0 & 0 & 0 & 0 \\
 0 & 0 & 1 & 0 & 0 & 0 & 0 & 0 \\
 0 & 0 & 1 & 1 & 0 & 0 & 0 & 0 \\
 0 & 0 & 1+\frac{1}{x} & 0 & 1 & 0 & 0 & 0 \\
 0 & 0 & 1+\frac{1}{y} & 0 & 0 & 1 & 0 & 0 \\
 0 & 1+\frac{1}{y} & 1+\frac{1}{y} & 1+\frac{1}{x} & 0 & 0 & 1 & \frac{1}{y} \\
 0 & 0 & 1+\frac{1}{y} & 0 & 0 & 0 & 0 & 1
\end{array}\right),
\]
\[
 c_{5X} = 
\left(\begin{array}{cccc}
 y & 0 & 1 & 1+y \\
 1+x+x y & 1 & 1+x & y+x y \\
 1+y & 0 & 1 & y \\
 0 & 0 & 0 & 1
\end{array}\right),
\quad
c_{5Z} =
\left(\begin{array}{cccc}
 \frac{1}{y} & 0 & 1 & \frac{1}{z} \\
 0 & 1 & 0 & 1 \\
 1+\frac{1}{y} & 0 & 1 & \frac{1}{z} \\
 0 & 0 & 0 & 1
\end{array}\right).
\]
Factor out the first, third, and seventh qubits from $\sigma_{B4}$,
and coarse-grain by blocking two sites along $y$-direction.
\[
\sigma_{B4} \xrightarrow{y^2 \to y} \quad 
\sigma_{B5} = \begin{pmatrix} \sigma_{B5X} & 0 \\ 0 & \sigma_{B5Z} \end{pmatrix}, \text{ where }
\]
\[
\sigma_{B5X} =
\left(\begin{array}{cccc}
 1+x+x y+z & 1+z & y+x y & x+z \\
 y+y z & 1+x+x y+z & x y+y z & y+x y \\
 1+x^2+z+y z & x+x^2 & z+y z & x^2+z \\
 x y+x^2 y & 1+x^2+z+y z & x^2 y+y z & z+y z \\
 x+y & 1+x & 1+y & 1+x \\
 y+x y & x+y & y+x y & 1+y \\
 z+y z & 0 & x+y z & 1+z \\
 0 & z+y z & y+y z & x+y z \\
 x y+y z & 1+x & (1+x+z)(1+y) & 1+z \\
 y+x y & x y+y z & y+y z & (1+x+z)(1+y)
\end{array}\right),
\]
\[
\sigma_{B5Z} =
\left(\begin{array}{cccccc}
 1+\frac{1}{y} & 0 & 1+\frac{1}{x} & 0 & 1+\frac{1}{x} & \frac{1}{y}+\frac{1}{y z} \\
 0 & 1+\frac{1}{y} & 0 & 1+\frac{1}{x} & 1+\frac{1}{z} & 1+\frac{1}{x} \\
 1 & \frac{1}{y} & 1 & \frac{1}{y} & 1+\frac{1}{z} & 0 \\
 1 & 1 & 1 & 1 & 0 & 1+\frac{1}{z} \\
 1+\frac{1}{x} & \frac{1}{x y} & \frac{1}{z} & 0 & \frac{1}{x z} & \frac{1}{y z} \\
 \frac{1}{x} & 1+\frac{1}{x} & 0 & \frac{1}{z} & \frac{1}{z} & \frac{1}{x z} \\
 \frac{1}{y} & 0 & 0 & \frac{1}{y} & 1+\frac{1}{x}+\frac{1}{z} & \frac{1}{y z} \\
 0 & \frac{1}{y} & 1 & 0 & \frac{1}{z} & 1+\frac{1}{x}+\frac{1}{z} \\
 1+\frac{1}{y} & 0 & 0 & 0 & \frac{1}{x}+\frac{1}{z} & \frac{1}{y}+\frac{1}{y z} \\
 0 & 1+\frac{1}{y} & 0 & 0 & 1+\frac{1}{z} & \frac{1}{x}+\frac{1}{z}
\end{array}\right).
\]
Apply elementary symplectic transformations to factor out two more qubits.
\[
\sigma_{B6} =
 \begin{pmatrix} \sigma_{B6X} & 0 \\ 0 & \sigma_{B6Z} \end{pmatrix}
=
\begin{pmatrix} r_{6X} & 0 \\ 0 & r_{6Z} \end{pmatrix}
\begin{pmatrix} \sigma_{B5X} & 0 \\ 0 & \sigma_{B5Z} \end{pmatrix}
\begin{pmatrix} \id_{4 \times 4} & 0 \\ 0 & c_{6Z} \end{pmatrix}
\]
where
\[
 \sigma_{B6X} =
\left(\begin{array}{cccc}
 1+x+x y+z & 1+z & y+x y & x+z \\
 y+y z & 1+x+x y+z & x y+y z & y+x y \\
\hline
 0 & 0 & 0 & 0 \\
\hline
 x y+x^2 y & 1+x^2+z+y z & x^2 y+y z & z+y z \\
 x+y & 1+x & 1+y & 1+x \\
 y+x y & x+y & y+x y & 1+y \\
 z+y z & 0 & x+y z & 1+z \\
\hline
 0 & 0 & 0 & 0 \\
\hline
 x y+y z & 1+x & (1+x+z)(1+y) & 1+z \\
 y+x y & x y+y z & y+y z & (1+x+z)(1+y) 
\end{array}\right),
\]
\[
 \sigma_{B6Z} =
\left(\begin{array}{c|c|c|ccc}
 0 & \frac{1}{y}+\frac{1}{x y} & 0 & \frac{1}{y^2}+\frac{1}{y} & \frac{1}{x}+\frac{1}{y}+\frac{1}{z}+\frac{1}{x z} & \frac{1}{x^2}+\frac{1}{x}+\frac{1}{x y}+\frac{1}{x z} \\
 0 & 1+\frac{1}{y} & 0 & 1+\frac{1}{x} & 1+\frac{1}{z} & 1+\frac{1}{x} \\
\hline
 1 & 0 & 0 & 0 & 0 & 0 \\
\hline
 0 & 1+\frac{1}{y} & 0 & 1+\frac{1}{y} & 1+\frac{1}{z} & 1+\frac{1}{z} \\
 0 & \frac{1}{x y}+\frac{1}{y z} & 0 & \frac{1}{y}+\frac{1}{x y} & 1+\frac{1}{x}+\frac{1}{z^2}+\frac{1}{x z} & 1+\frac{1}{x^2}+\frac{1}{z^2}+\frac{1}{y z} \\
 0 & 1+\frac{1}{x} & 0 & \frac{1}{x y}+\frac{1}{z} & \frac{1}{x}+\frac{1}{z} & \frac{1}{x^2}+\frac{1}{x} \\
 0 & 0 & 0 & \frac{1}{y^2}+\frac{1}{y} & 1+\frac{1}{x}+\frac{1}{y}+\frac{1}{z} & \frac{1}{y}+\frac{1}{x y} \\
\hline
 0 & 0 & 1 & 0 & 0 & 0 \\
\hline
 0 & 0 & 0 & \frac{1}{y^2}+\frac{1}{y} & 1+\frac{1}{x}+\frac{1}{y}+\frac{1}{z} & 1+\frac{1}{x}+\frac{1}{x y}+\frac{1}{z} \\
 0 & 1+\frac{1}{y} & 0 & 0 & 1+\frac{1}{z} & \frac{1}{x}+\frac{1}{z}
\end{array}\right),
\]
\[
 r_{6X} =
\left(\begin{array}{cccccccccc}
 1 & 0 & 0 & 0 & 0 & 0 & 0 & 0 & 0 & 0 \\
 0 & 1 & 0 & 0 & 0 & 0 & 0 & 0 & 0 & 0 \\
 1+y & 0 & 1 & 1 & 1+x & x & y & 0 & 1+y & 0 \\
 0 & 0 & 0 & 1 & 0 & 0 & 0 & 0 & 0 & 0 \\
 0 & 0 & 0 & 0 & 1 & 0 & 0 & 0 & 0 & 0 \\
 0 & 0 & 0 & 0 & 0 & 1 & 0 & 0 & 0 & 0 \\
 0 & 0 & 0 & 0 & 0 & 0 & 1 & 0 & 0 & 0 \\
 x+y & 0 & 0 & 0 & 1+x+z & x & y & 1 & 1+y & 0 \\
 0 & 0 & 0 & 0 & 0 & 0 & 0 & 0 & 1 & 0 \\
 0 & 0 & 0 & 0 & 0 & 0 & 0 & 0 & 0 & 1
\end{array}\right),
\]
\[
 r_{6Z} =
\left(\begin{array}{cccccccccc}
 1 & 0 & 1+\frac{1}{y} & 0 & 0 & 0 & 0 & \frac{1}{x}+\frac{1}{y} & 0 & 0 \\
 0 & 1 & 0 & 0 & 0 & 0 & 0 & 0 & 0 & 0 \\
 0 & 0 & 1 & 0 & 0 & 0 & 0 & 0 & 0 & 0 \\
 0 & 0 & 1 & 1 & 0 & 0 & 0 & 0 & 0 & 0 \\
 0 & 0 & 1+\frac{1}{x} & 0 & 1 & 0 & 0 & 1+\frac{1}{x}+\frac{1}{z} & 0 & 0 \\
 0 & 0 & \frac{1}{x} & 0 & 0 & 1 & 0 & \frac{1}{x} & 0 & 0 \\
 0 & 0 & \frac{1}{y} & 0 & 0 & 0 & 1 & \frac{1}{y} & 0 & 0 \\
 0 & 0 & 0 & 0 & 0 & 0 & 0 & 1 & 0 & 0 \\
 0 & 0 & 1+\frac{1}{y} & 0 & 0 & 0 & 0 & 1+\frac{1}{y} & 1 & 0 \\
 0 & 0 & 0 & 0 & 0 & 0 & 0 & 0 & 0 & 1
\end{array}\right),
\]
\[
 c_{6Z} =
\left(\begin{array}{cccccc}
 1 & 0 & 1 & \frac{1}{y} & 1 & 1+\frac{1}{x}+\frac{1}{z} \\
 0 & 1 & 0 & 0 & 0 & 0 \\
 0 & \frac{1}{y} & 1 & 0 & \frac{1}{z} & 1+\frac{1}{x}+\frac{1}{z} \\
 0 & 0 & 0 & 1 & 0 & 0 \\
 0 & 0 & 0 & 0 & 1 & 0 \\
 0 & 0 & 0 & 0 & 0 & 1
\end{array}\right).
\]
The third and eighth qubits in $\sigma_{B6}$ are trivial.
We finally finish our lengthy transformation.
\[
 \sigma_{B7} =
 \begin{pmatrix} \sigma_{B7X} & 0 \\ 0 & \sigma_{B7Z} \end{pmatrix}
=
\begin{pmatrix} r_{7X} & 0 \\ 0 & r_{7Z} \end{pmatrix}
\left( \begin{pmatrix} \sigma_{B6X} & 0 \\ 0 & \sigma_{B6Z} \end{pmatrix}\rvert_{\{3,8\}^c} \right)
\begin{pmatrix} c_{7X} & 0 \\ 0 & c_{7Z} \end{pmatrix}
\]
where
\[
 \sigma_{B7X} =
\left(\begin{array}{cccc}
 x+z & 1+x & 0 & 0 \\
 1+x & 1+z & 0 & 0 \\
 x+y & 1+y & 0 & 0 \\
 1+y & 1+x & 0 & 0 \\
 0 & 0 & x+z & 1+x \\
 0 & 0 & 1+x & 1+z \\
 0 & 0 & x+y & 1+y \\
 0 & 0 & 1+y & 1+x
\end{array}\right), \quad
\sigma_{B7Z} =
\left(\begin{array}{cccc}
 1+\frac{1}{y} & \frac{1}{x}+\frac{1}{y} & 0 & 0 \\
 1+\frac{1}{x} & 1+\frac{1}{y} & 0 & 0 \\
 1+\frac{1}{x} & \frac{1}{x}+\frac{1}{z} & 0 & 0 \\
 1+\frac{1}{z} & 1+\frac{1}{x} & 0 & 0 \\
 0 & 0 & 1+\frac{1}{y} & \frac{1}{x}+\frac{1}{y} \\
 0 & 0 & 1+\frac{1}{x} & 1+\frac{1}{y} \\
 0 & 0 & 1+\frac{1}{x} & \frac{1}{x}+\frac{1}{z} \\
 0 & 0 & 1+\frac{1}{z} & 1+\frac{1}{x}
\end{array}\right),
\]
\[
 r_{7X} =
\left(\begin{array}{cccccccc}
 1+z & 1 & 0 & 0 & z & 0 & 0 & 1 \\
 1+x & 1 & 1 & x & 0 & 1 & 0 & 1 \\
 1+x+y & 0 & 1 & x & 1 & 1+y & y & 1 \\
 1+y & 1 & 1 & 0 & 1+x & y & y & 1 \\
 1+x & 1 & 1 & x+z & 0 & 0 & 0 & 1 \\
 x & 1 & 1 & x+z & 0 & 1 & 1 & 1 \\
 1+x & 1 & 1 & 1+x+z & 1 & 1 & 1 & 1 \\
 0 & 0 & 0 & 1 & 0 & 0 & 0 & 0
\end{array}\right),
\]
\[
r_{7Z} =
\left(\begin{array}{cccccccc}
 0 & 0 & 1 & 0 & 0 & 0 & 0 & 1 \\
 0 & 1 & 0 & 0 & 0 & 1 & 1 & 1 \\
 0 & 1 & 0 & 0 & 0 & 0 & 0 & 1 \\
 1 & 1 & \frac{1}{x} & 0 & 1 & 0 & 1 & 0 \\
 1+\frac{1}{y} & 1 & 1+\frac{1}{x}+\frac{1}{x y} & 0 & 1+\frac{1}{y} & 1 & 1+\frac{1}{y} & \frac{1}{y} \\
 1+\frac{1}{x}+\frac{1}{y} & \frac{1}{x} & \frac{1}{x^2}+\frac{1}{x}+\frac{1}{x y}+\frac{1}{z} & 0 & \frac{1}{x}+\frac{1}{y} & 0 & \frac{1}{x}+\frac{1}{y} & 1+\frac{1}{y}+\frac{1}{z} \\
 1+\frac{1}{x} & \frac{1}{x} & \frac{1}{x^2}+\frac{1}{x}+\frac{1}{z} & 0 & \frac{1}{x} & 0 & 1+\frac{1}{x} & 1+\frac{1}{z} \\
 1+\frac{1}{z} & \frac{1}{z} & \frac{1}{x z} & 1 & \frac{1}{z} & \frac{1}{z} & 1 & 1+\frac{1}{z}
\end{array}\right),
\]
\[
 c_{7X} =
\left(\begin{array}{cccc}
 1 & 0 & 0 & 1+z \\
 0 & 1 & 0 & 1 \\
 1 & 0 & 1 & 1+z \\
 1 & 1 & 0 & 1+z
\end{array}\right),\quad
c_{7Z} =
\left(\begin{array}{cccc}
 \frac{1}{z} & 1+\frac{1}{z} & 1 & 0 \\
 1 & 1 & 0 & 0 \\
 \frac{1}{y} & 1+\frac{1}{y} & 0 & 1 \\
 0 & 1 & 0 & 0
\end{array}\right).
\]
We see that $\sigma_{B7}$ is a direct sum of two copies of $\sigma_B$.
This verifies the bifurcation:
\[
 A \xrightarrow{2\times 2 \times 2} A \oplus B, \text{ and } B \xrightarrow{2 \times 2 \times 2} B \oplus B.
\]

\section{Thermal partition function of the cubic code}
\label{sec:thermal-partition-function}

The only relevant property of the cubic code in relation to a thermal partition function
is that the generating map $\sigma$ for the stabilizer module is injective.
\[
 0 \to G \xrightarrow{\sigma} P
\]
This property is shared with 1D Ising model (Example~\ref{eg:1d-ising}),
2D toric code (Example~\ref{eg:2d-toric}), and 3D Chamon model (Example~\ref{eg:ChamonModel}).
Slightly more generally, Lemma~\ref{lem:coker-epsilon-resolution-length-D} says that
for two-dimensional exact code Hamiltonians one can choose an injective generating map 
that gives rise to an equivalent Hamiltonian in the sense of Definition~\ref{defn:equiv-H}.

Let $H = -\sum_i P_i$ be the Hamiltonian of the cubic code on $L \times L \times L$ periodic lattice.
We set the coupling constant to be $1$ so that each term literally squares to the identity $P_i^2 = 1$.
The thermal partition function is
\[
 \calZ = \calZ(\beta) = \trace \exp ( - \beta H )
\]
where $\beta$ is an inverse temperature.
Since $H$ consists of commuting terms,
the exponential function can be written as products.
\[
 \calZ = \trace \prod_i (\cosh \beta + P_i \sinh \beta) 
 = \cosh^M \beta \trace \prod_i (1 + x P_i)
\]
where $M=tL^3$ is the number of terms in $H$ and $x = \tanh \beta$.
Since the trace of a Pauli operator that is not identity is zero,
only the term in the expansion of the product 
that is proportional to the identity contributes to $\calZ$.
If there are $N=qL^3$ qubits in the system, 
the trace of the identity operator is $2^N$.
\[
 \frac{\calZ}{2^N \cosh^M \beta} = \sum_{\gamma \in \Gamma} x^{|\gamma|}
\]
where $\gamma$ runs over all possible collections of terms $P_i$ of $H$
that multiply to the identity. $\Gamma$ contains the empty collection.
Thus, $\Gamma$ is in one-to-one correspondence with $\ker \sigma_L$.
(It may be nonzero due to the periodic boundary conditions
although $\sigma$ was injective. See Section~\ref{sec:degeneracy}.)
Therefore, the cardinality of $\Gamma$ is precisely
\[
 |\Gamma| = 2^{\dim_{\FF_2} \ker \sigma_L} = 2^{k} ,
\]
the ground-state degeneracy.
The second equality is implied by Corollary~\ref{cor:k-formulas}.

We know from Corollary~\ref{cor:cubic-code-degeneracy-formula} that $k$ is bounded by
a linear function of $L$ for the cubic code; $k = O(L)$.
Therefore, the partition function is sandwiched as
\[
 | \log \calZ(\beta) - L^3\log (2^q \cosh^t \beta) | \le k \log 2 = O(L).
\]
Thus, the free energy per unit volume in the thermodynamic limit
is just a smooth function $\log( 2^q \cosh^t \beta )$ for all nonzero temperatures.
This should contrast with three- or higher-dimensional toric code model 
where the free energy density has a singularity at a nonzero temperature~\cite{NussinovOrtiz2008Autocorrelations}.

The analyticity of the free energy does not directly invalidate
a possibility of self-correction of encoded quantum information.
The latter is rather a dynamical process,
which may have little to do with the thermal equilibrium.
In fact, it is quite subtle to analyze self-correcting power
or to give a criterion on it, based on the thermal partition function.
The existence or possibility of a good decoding process
to extract the encoded quantum information
would be a more appropriate way to address the question of self-correction.
For instance, in the four-dimensional toric code model
the thermal expectation value for any bare logical operator is zero.
However, a special dressed logical operator defined with respect to a decoding algorithm
can have a non-vanishing expectation value in the thermodynamic
limit~\cite{ChesiLossBravyiEtAl2010Thermodynamic}.
For the cubic code, the analyticity of the partition function suggests that
it does not allow such a nonzero thermal expectation value for logical operators.
Nevertheless, we can show that the characteristic time scale
of the expectation values of logical operators dressed by a decoding algorithm,
is very large at low temperatures.
This is the topic of the next chapters.

\chapter{Consequences of no-strings rule}
\label{chap:consq-no-strings}

The no-strings rule prohibits a nontrivial charge 
to travel a distance longer than a constant times its size.
We repeat the rule.

\begin{center}
\parbox{.8\textwidth}{
Definition~\ref{defn:no-strings}:
{\em
A string segment is a finitely supported Pauli operator
that creates excitations contained in the union of two finite boxes (anchor regions)
of width $w$.
The string segment is nontrivial 
if the charge contained in one of the boxes is nontrivial.
The distance between the boxes is the length of the string segment.
We say that a model obeys no-strings rule
if the length of any nontrivial string segment of width $w$
is bounded by $\alpha w$ for some constant $\alpha \ge 1$.
}}
\end{center}

\noindent
The cubic code is an extreme case where a point charge cannot move distance $1$
(Theorem~\ref{thm:CubicCode-no-strings}).
It does not mean that isolation or diffusion of charges is impossible;
Theorem~\ref{thm:logarithmic-energy-barrier-for-fractal-operator}
actually gives a local process for the isolation to happen in the translationally invariant case.
The isolation process for a charge has to pay an energy penalty 
upper bounded by the logarithm of the distance from others.

In this chapter, we illustrate the energy landscape of the cubic code Hamiltonian
in a view towards self-correcting quantum memory.
We show that the separation or isolation of charges
requires an energy barrier logarithmically high in the separation distance.
This can be visualized as follows.
Imagine all the energy eigenstates as points on an imaginary ``land,''
and introduce a metric on the land by the minimum number of local operations
one has to apply in order to map one state to another.
Consider a function on the land given by the energy of the energy eigenstate.
Our logarithmic energy barrier means that the energy function, or ``energy landscape,''
has a macroscopic number of local minimums separated by macroscopic energy barriers.
These minimums correspond to low-energy excited states 
in which the separation between defects is approximately the system size.

The energy landscape with the large number of the local minimums 
suggests a possibility of a spin glass phase at a sufficiently low temperature.
Note that there is no quenched disorder in the Hamiltonian~\cite{BinderYoung1986SpinGlass,Weissman1993SpinGlass},
and the glassy feature, if present, would be protected ``topologically''~\cite{BravyiHastingsMichalakis2010stability}.
A spin glass phase can indeed be realized for some classical spin Hamiltonians 
with logarithmic energy barriers such as the model discovered by Newman and Moore~\cite{NewmanMoore1999Glassy}.
Interplay between the topological order and the spin glassiness has been studied recently by
several authors~\cite{Chamon2005Quantum,TsomokosOsborneCastelnovo2010Interplay}.

The logarithmic energy barrier is an optimal bound up to constants
because of Theorem~\ref{thm:logarithmic-energy-barrier-for-fractal-operator}
and \ref{thm:fractal-exists-in-3D}.
An obvious difficulty in proving the lower bound on the energy barrier
is that there are a huge number of paths to isolate a charge.
Relying on scale-invariant nature of the no-strings rule,
we introduce a technique which may be regarded
as a renormalization group in the space of error paths.

Also, we attempt to estimate the length of the separation process:
How many local operations do we need in order to isolate a charge?
Put differently,
how large is a quantum mechanical tunneling probability between two ground states
in presence of local perturbations?
Since the tunneling amplitude would be non-vanishing only for
virtual process that is mediated by an operator acting on the ground space,
the question translates into the theory of error correcting codes
as to find the weight distribution of logical operators.
We do not answer this question at a satisfactory level,
but we find a superlinear lower bound on the code distance assuming the no-strings rule.

\section{Logarithmic energy barrier}
\label{sec:log-barrier}

We consider a regular $D$-dimensional cubic lattice $\Lambda$
with periodic boundary conditions and linear size $L$, that is, $\Lambda=\ZZ_L^D$.
Each site $u\in \Lambda$ is populated by a finite number of qubits.
The class of models we are going to discuss is that of code Hamiltonians
\begin{equation}
H=-\sum_{a=1}^{M} G_a,
\label{eq:stabH}
\end{equation}
where each term $G_a$  is a multi-qubit Pauli operator 
(a tensor product of $I,X,Y,Z$ with an overall $\pm 1$ sign)
and different terms commute with each other;
$G_a G_b =G_b G_a$ and $G_a^2=I$.
We assume that each generator $G_a$ acts nontrivially (by $X,Y$ or $Z$)
only on a set of qubits located at vertices of an elementary cube.
It is allowed to have more than one generator per cube.
Any short-range stabilizer Hamiltonian can be written in this form by performing
a coarse-graining of the lattice.
We continue to assume that $H$ is frustration-free%
\footnote{This is always the case for independent generators $G_a$. 
Since our goal is to obtain a lower bound on the
energy barrier, we can assume that the generators are independent, although it does not
play any role in our analysis.},
The Hamiltonian may or may not be translation-invariant.

Consider any multi-qubit Pauli operator $E$.
A state $\psi = E\,  \psi_0 $
is an excited eigenstate of $H$. Obviously, $G_a \, \psi=\pm \psi$
where the sign depends on whether $G_a$ commutes (plus)
or anticommutes (minus) with $E$.
Any flipped generator ($G_a\, \psi=-\psi$) will be referred to as a {\bf defect}.
Different defects may occupy the same elementary cube.
It should be emphasized that 
a {\em configuration of  defects}, called a {\bf syndrome}, in $\psi$ is the same for all ground states $\psi_0$.
An eigenstate with $m$ defects has energy $2m$ above the ground state.
For brevity, we shall use the term {\bf vacuum} for a  ground state of $H$
whenever its choice  is not important.
A Pauli operator $E$ whose action on the vacuum creates no defects is either a stabilizer ($E\in \mathcal G$),
or a {\bf logical operator} ($E\notin \mathcal G$, but $E$ commutes with $\mathcal G$).
In the former case any ground state of $H$ is invariant under $E$.
In the latter case $E$ maps some ground state of $H$ to an orthogonal ground state.

A Hamiltonian is  said to have topological order if it
has a degenerate ground state and different ground states are locally indistinguishable.
We shall need a slightly stronger version of this condition
that involves properties of both ground and excited states.
These properties depend on a length scale $L_{tqo}$ that must
be bounded as $L_{tqo}\ge  L^\gamma$ for some constant $\gamma>0$.
(For code Hamiltonians, $L_{tqo}$ corresponds to the code distance.)
Most of stabilizer code Hamiltonians with topological order satisfy our conditions
with $L_{tqo} \sim L$.
Our first topological quantum order condition concerns ground states,
which is a rephrasing of Definition~\ref{defn:tqo}:
\begin{defn}
\label{defn:TQO1}
{\bf TQO1} refers to the following condition: 
If a Pauli operator $E$ creates no defects when applied to the vacuum
and its support can be enclosed by a cube of linear size $L_{tqo}$ then $E$
is a stabilizer,  $E\in \mathcal G$.
\end{defn}
Our second TQO condition concerns excited states.
A cluster of defects $S$ will be called {\bf neutral} if it can be created
from the vacuum by a Pauli operator $E$ whose support can be enclosed
by a cube of linear size $L_{tqo}$ without creating any other defects.
Otherwise we say that $S$ is a {\bf charged} cluster.
Given a region $A \subseteq \Lambda$ we shall use a notation
$\mathcal B _r(A)$ for the $r$-neighborhood of $A$, that is, a set of all points that have distance at most $r$ from $A$.
Here and below we use $\ell_{\infty}$-distance on $\ZZ_L^D$.
We shall need the following condition saying that neutral clusters of defects can be
created from the vacuum locally.%
\footnote{If a lattice has a boundary, charged defects
might be created locally on the boundary, as it is the case for the planar version of the toric
code. This is the reason why we restrict ourselves to periodic boundary conditions.}
\begin{defn}
\label{defn:TQO2}
{\bf TQO2} refers to the following condition:
Let $S$ be a neutral cluster of defects 
and $C_{min}(S)$ be the smallest cube that encloses $S$.
Then $S$ can be created from the vacuum 
by a Pauli operator supported on $\mathcal B_1(C_{min}(S))$.
\end{defn}
\begin{rem}
\label{rem:TransInv-strongTQO}
A translationally invariant exact code Hamiltonian satisfies
both of our topological order conditions.
In particular, the cubic code satisfies the present TQO conditions.
[~$\because$
TQO1 is immediate from Lemma~\ref{lem:local-tqo=exact},
and TQO2 follows from an argument similar to the proof of Lemma~\ref{lem:local-tqo=exact}.
Indeed, given a neutral cluster $e$ of defects,
we have an equation $e = \epsilon p$ for some $p$ of the Pauli module.
$p$ is computed by the standard division algorithm applied to the columns of
the excitation map $\epsilon$.
If $e$ is centered at origin, then the degree of $p$ does not exceed
that of $e$, which implies the second TQO condition.%
]
\end{rem}

Let us consider a process of building a cluster of defects $S$ (syndrome) from the vacuum.
It can be described by an {\bf error path} 
--- a finite sequence of local Pauli errors  $E_1,\ldots,E_T$ such that
$E = E_T \cdots E_2 E_1$ creates $S$ from the vacuum.
For simplicity, we assume that each local error  $E_t$ 
is a single-qubit Pauli operator $\sigma^x$, $\sigma^y$, or $\sigma^z$.
Applying this sequence of errors to a ground state $\psi_0$
generates a sequence of states $\{ \psi(t)\}_{t=0,\ldots,T}$,
where and $\psi(T)=E \, \psi_0$ is the excited state of defect configuration $S$.
We will say that $S$ has {\bf energy barrier $\omega$}
if for any Pauli operator $E$ that creates $S$ from the vacuum 
and for any error path implementing $E$,
at least one of the intermediate states has more than $\omega$ defects.
Note that we do not impose any restriction on the length of the path $T$ as long as it is finite.
In particular, one and the same error may be repeated in the error path several times
at different time steps.
Similarly, we consider the energy barrier for a logical operator $\overline{P}$;
we say that a logical operator $\overline{P}$ has {\bf energy barrier $\omega$}
if for any error path implementing $\psi_0 \mapsto \overline P \, \psi_0$
at least one of the intermediate states $\psi(t)$ has more than $\omega$ defects.

For any integer $p\ge 0$, define a {\bf level-$p$ unit of length $\xi(p)$} as%
\footnote{The choice of the constant $10$ is somewhat arbitrary.
We do not try to optimize constants in our proof.}
\begin{equation}
\xi(p)=(10\alpha)^p, \quad p=0,1,\ldots.
\end{equation}
Let $S$ be any non-empty syndrome.
Recall that each defect in $S$ can be associated with some elementary cube of the lattice.
\begin{defn}
\label{defn:psparse}
A syndrome $S$ is said to be {\bf sparse at level $p$} 
if the set of elementary cubes occupied by the defects in $S$
can be partitioned into a disjoint union of clusters
such that each cluster has diameter at most $\xi(p)$
and any pair of distinct clusters combined together has diameter larger
than $\xi(p+1)$. Otherwise, $S$ is {\bf non-sparse at level $p$}.
\end{defn}
\noindent
For example, suppose all defects in $S$ occupy the same elementary cube.
Since an elementary cube has diameter $1$, such a syndrome $S(t)$ is sparse at any level $p \ge 0$.
If $S$ occupies a pair of adjacent cubes, $S(t)$ is sparse at any level $p\ge 1$,
but is non-sparse at level $p=0$.
Note that the partition of $S$ into clusters required for level-$p$ sparsity
is unique whenever it exists.
The non-sparsity provides a lower bound on the number of defects in a cluster
as follows.
\begin{lem}
\label{lemma:counting}
A non-empty syndrome $S$ that is non-sparse at all levels $q=0,\ldots,p$
contains at least $p+2$ defects.
\end{lem}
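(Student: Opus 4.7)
The plan is to prove Lemma \ref{lemma:counting} by induction on $p$. The base case $p=0$ is immediate: a syndrome consisting of a single defect admits the trivial one-cluster partition, whose cluster has diameter $0 \le \xi(0)=1$ and satisfies the pairwise condition vacuously. Hence any single-defect syndrome is sparse at every level, and non-sparsity at level $0$ forces $|S| \ge 2 = 0+2$.

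For the inductive step, assume the lemma for $p-1$ and suppose $S$ is non-empty and non-sparse at all of $0,1,\ldots,p$. The strategy is to exhibit a subset $C \subsetneq S$ which is non-sparse at all of $0,1,\ldots,p-1$, and which is accompanied by at least one defect outside $C$, so that the inductive hypothesis $|C| \ge p+1$ combined with the extra defect yields $|S| \ge p+2$. To construct $C$, let $p^\star := \min\{q : S \text{ is sparse at level } q\}$; by hypothesis $p^\star \ge p+1$, and we fix a valid level-$p^\star$ partition $S = C_1 \sqcup \cdots \sqcup C_m$. The pairwise combined-diameter condition at level $p^\star$ reads $\mathrm{diam}(C_i \cup C_j) > \xi(p^\star+1) \ge \xi(p+2) > \xi(p+1)$, so this partition automatically satisfies the pairwise condition at the lower level $p$. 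Since $S$ is non-sparse at $p$, the cluster-diameter condition must fail at level $p$ for this partition, meaning some $C_i$ (say $C_1$) has $\mathrm{diam}(C_1) > \xi(p)$. Take $C = C_1$. If $m \ge 2$, then $C_2$ provides the required extra defect; the residual case $m=1$ (i.e., $S=C_1$) will be handled below.

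The main obstacle is verifying that $C_1$ is indeed non-sparse at each level $q \in \{0,1,\ldots,p-1\}$. I will argue by contradiction: if $C_1$ were sparse at some such $q$, with valid sub-partition $C_1 = D_1 \sqcup \cdots \sqcup D_s$, then replacing $C_1$ in the original $\{C_k\}$ by $\{D_j\}$ produces a refined partition of $S$. The sub-clusters $D_j$ have diameter $\le \xi(q)$; the remaining $C_k$ ($k \ne 1$) have diameter $\le \xi(p^\star)$; and the combined diameters across the refined partition inherit the separation $> \xi(p^\star + 1)$ between the original $C_k$'s, while the combined diameters among the $D_j$'s exceed $\xi(q+1)$. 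The key arithmetic step is to choose a single level $q' \in \{q, q+1, \ldots, p\}$ at which every cluster of the refined partition has diameter $\le \xi(q')$ \emph{and} the pairwise combined diameter exceeds $\xi(q'+1)$; the slack comes from the geometric growth $\xi(r+1) = 10\alpha\, \xi(r)$, which creates ample gaps between consecutive scales. Such a $q'$ then certifies sparsity of $S$ at level $q' \le p$, contradicting the hypothesis. The residual case $m = 1$ (where $S=C_1$ and no extra defect is immediately available) is handled by sharpening the choice of $p^\star$: one selects the minimal-cluster-count valid partition at level $p^\star$ and shows that $m=1$ forces $\mathrm{diam}(S) \in (\xi(p^\star-1), \xi(p^\star)]$, from which a further sub-partition argument at level $p^\star - 1$ isolates the missing defect. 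The hardest part of the proof will be the triangle-inequality bookkeeping that makes the level shift $q \leadsto q'$ work uniformly for all $q \le p-1$, which will rely on the explicit value $\alpha \ge 1$ and the large gap factor $10$ built into the definition of $\xi$.
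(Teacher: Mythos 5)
Your inductive approach hits a genuine gap in the refinement step. You take a level-$p^\star$ sparse partition $\{C_k\}$ of $S$ (with $p^\star \ge p+1$), replace $C_1$ by a level-$q$ sub-partition $\{D_j\}$ of $C_1$ (with $q \le p-1$), and hope to find some $q' \le p$ at which the refined partition certifies sparsity of $S$. This cannot succeed: the clusters $C_2,\ldots,C_m$ have diameters bounded only by $\xi(p^\star)$, which strictly exceeds $\xi(p) \ge \xi(q')$ for every admissible $q'$, so the cluster-diameter condition $d(C_k) \le \xi(q')$ fails for them outright. The geometric gap $\xi(r+1)=10\alpha\,\xi(r)$ cannot rescue this; that slack only helps when you need a pairwise separation to clear a threshold from above, and it is powerless against an \emph{upper} bound on cluster diameters at a scale that is necessarily smaller than $p^\star$. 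Worse, since $S$ is non-sparse at level $p$ while the level-$p^\star$ partition already satisfies the level-$p$ pairwise condition, several of the $C_k$ may have diameter $> \xi(p)$ simultaneously, and refining only $C_1$ leaves the others untouched. The $m=1$ residual case is also left as an unproven sketch.

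The paper's proof avoids all of this by a bottom-up iterative-merging count, not induction. Start with the partition of $S$ into the $g$ occupied elementary cubes, each of diameter at most $1=\xi(0)$. Non-sparsity at level $q$ means the current partition (whose clusters have diameter $\le \xi(q)$) cannot certify sparsity; since the diameter condition holds, this forces two things: there must be at least two clusters (a single-cluster partition is vacuously sparse), and some pair must have combined diameter $\le \xi(q+1)$. Merge that pair; the result has one fewer cluster, all of diameter $\le \xi(q+1)$. Running this for $q=0,1,\ldots,p$ requires at least two clusters at each level, so $g-p \ge 2$, i.e., the number of defects is at least $g \ge p+2$. This is shorter, needs no induction, and entirely sidesteps the sub-partition difficulties your approach runs into; I would rework your proof along these lines.
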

\begin{proof}
Let $C^{(0)}_1,\ldots, C^{(0)}_g$ be elementary cubes occupied by $S$.
Obviously, $S$ contains at least $g$ defects.
Since $S$ is non-empty and non-sparse at level $0$,
we have $g \ge 2$ and there exists
a pair of cubes $C_a^{(0)}, C_b^{(0)}$ such that the union
$C_a^{(0)} \cup C_b^{(0)}$ has diameter at most $\xi(1)$.
Combining the pair $C_a^{(0)}, C_b^{(0)}$  into a single cluster,
we obtain a partition $S = C_1^{(1)} \cup \cdots \cup C_{g-1}^{(1)}$
where each cluster $C_a^{(1)}$ has diameter at most $\xi(1)$.
Since $S$ is non-sparse at level $1$, we have $g-1 \ge 2$,
and there exists a pair of clusters $C_a^{(1)}, C_b^{(1)}$ such that the union
$C_a^{(1)}\cup C_b^{(1)}$ has diameter at most $\xi(2)$.
Combining the pair $C_a^{(1)}, C_b^{(1)}$ into a single cluster and
proceeding in the same way we arrive at $g \ge p+2$.
\end{proof}

A configuration of defects created by applying a Pauli operator $E$ to the vacuum
will be called a {\em syndrome caused by $E$}.
The process of building up a logical operator $\overline{P}$ 
by a sequence of local errors $E_1,\ldots,E_T$ can be described by
a {\bf syndrome history} $\{S(t)\}_{t=0,\ldots,T}$.
Here $S(t)$ is the syndrome caused by the product $E_t \cdots E_2 E_1$,
a partial implementation of $\overline{P}$ up to a step $t$.
The following concerns the size of local errors
\begin{lem}
\label{lem:large-support-for-nontrivial-transition}
Let $Q_j$ be Pauli operators causing  a chain of transitions
\[
\mathrm{vac} \xrightarrow{Q_1} S_1 \xrightarrow{Q_2}  S_2 \xrightarrow{Q_3} \ldots \xrightarrow{Q_r} S_r
\xrightarrow{Q_{r+1}} \mathrm{vac}.
\]
Let $P_j$ be some Pauli operator creating the syndrome $S_j$ from the vacuum.
Suppose the support of any operator $P_j$ and any operator $Q_j$
can be enclosed by $n$ or less cubes of linear size $R$ such that $4nR < \ltqo$.
Then, the product $\overline{Q} = Q_1\cdots Q_r Q_{r+1}$ is a stabilizer.
\end{lem}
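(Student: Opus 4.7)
The plan is to telescope $\overline{Q}$ into a product of small centralizer elements, then apply the TQO1 condition to each factor. Setting $P_0 = P_{r+1} = I$, I introduce $R_j := P_j\, Q_j\, P_{j-1}$ for $j = 1, \ldots, r+1$. Because every Pauli operator satisfies $P_j^2 \propto I$, the telescoping product $R_{r+1} R_r \cdots R_1$ equals $\overline{Q}$ up to an inconsequential overall sign. Moreover, $Q_j P_{j-1}|\mathrm{vac}\rangle$ has syndrome $S_j$, and $P_j$ creates exactly that syndrome from the vacuum, so $R_j |\mathrm{vac}\rangle$ is a ground state; equivalently, $R_j$ commutes with every stabilizer generator $G_a$ and hence lies in the centralizer of $\mathcal{G}$. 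It therefore suffices to prove that each individual $R_j$ is a stabilizer.

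Fix $j$. The support of $R_j$ is contained in $\mathrm{supp}(P_{j-1}) \cup \mathrm{supp}(Q_j) \cup \mathrm{supp}(P_j)$, and hence is covered by a collection $\mathcal{C}$ of at most $3n$ cubes of linear size $R$. I would partition $\mathcal{C}$ into clusters by merging any two cubes whose $\ell_\infty$-distance is at most $1$, and write the result as $\mathcal{C} = \bigsqcup_\ell K_\ell$. Cubes belonging to distinct clusters then lie at $\ell_\infty$-distance at least $2$, so no elementary lattice cube (of diameter $1$) can contain sites from two different clusters. Consequently, every stabilizer generator $G_a$ has support intersecting cubes of at most one cluster, and the restriction $R_j^{(\ell)}$ of $R_j$ to the union of cubes in $K_\ell$ yields a factorization $R_j = \prod_\ell R_j^{(\ell)}$ in which each $R_j^{(\ell)}$ separately commutes with every $G_a$ and hence is itself a centralizer element creating no defects from the vacuum.

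Each cluster is a chain of at most $3n$ size-$R$ cubes with consecutive members at $\ell_\infty$-distance $\le 1$, so its enclosing cube has linear size at most $3n(R+1) \le 4nR < \ltqo$; TQO1 (Definition~\ref{defn:TQO1}) then forces each $R_j^{(\ell)}$ to be a stabilizer, making $R_j$ and therefore $\overline{Q}$ stabilizers as well. The main obstacle I anticipate is this last piece of geometric bookkeeping: the merging threshold must be chosen large enough to guarantee inter-cluster $\ell_\infty$-distance at least $2$ (so that the factorization of $R_j$ into independent centralizer pieces is legitimate), yet tight enough that each cluster still fits in a cube of side strictly below $\ltqo$. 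Reconciling these two constraints with the explicit constant $4nR$, and dealing with the small-$R$ edge cases where the naive bound $3n(R+1)$ does not automatically lie below $4nR$, is the delicate part of the argument and will require either a slightly sharper merging rule or a small adjustment of constants.
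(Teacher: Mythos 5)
Your overall strategy matches the paper's: telescope $\overline{Q}$ into factors $R_j = P_j Q_j P_{j-1}$ (the paper's $O_j = P_j P_{j+1} Q_{j+1}$, up to an index shift and your clean boundary convention $P_0 = P_{r+1} = I$), observe each factor has trivial syndrome, decompose its support into pieces far enough apart that no stabilizer generator straddles two pieces, and apply TQO1 to each piece. That much is sound.

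The gap you flagged is real, and the fix is not a constant-juggling exercise but a change of what you cluster. You merge \emph{cubes} whose pairwise $\ell_\infty$-distance is at most $1$. A cluster of $m$ cubes of linear size $R$, connected under this adjacency, can only be guaranteed to fit inside a box of linear size roughly $m(R+1)$, which is not below the hypothesis $4nR$ unless $R\ge 3$. The paper instead decomposes the \emph{support of $R_j$ itself} into its connected components, where two sites are joined if they are at $\ell_\infty$-distance $1$. This yields both properties you need, but with a sharper size bound: (i) distinct components are automatically at distance $\ge 2$, so the restriction of $R_j$ to each component still commutes with every generator; and (ii) a connected set of sites contained in the union of at most $3n$ cubes of linear size $R$ has, in each coordinate, a connected projection covered by $\le 3n$ intervals of length $\le R$, hence has extent at most $3nR$. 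Since $3nR < 4nR < L_{tqo}$, TQO1 applies without any restriction on $R$. So rather than sharpen the merging rule or adjust constants, replace cube-level adjacency by site-level connectivity of the support; that closes the gap for all $R$.
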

\begin{proof}
Let $\psi_0$ be any ground state.
Define a sequence of states
\begin{align*}
\psi(1) &= P_1 Q_1\cdot \psi_0, \\
\psi(j+1) &=  (P_j P_{j+1}) Q_{j+1} \cdot \psi(j) \quad \text{ for $j=1,\ldots,r-1$},\\
\psi(r+1) &= Q_{r+1} P_{r} \cdot \psi(r).
\end{align*}
Obviously,
\begin{align*}
\psi(j) &= \pm P_j \cdot (Q_1 \cdots Q_j) \cdot  \psi_0 \quad \text{for $j=1,\ldots,r$} \\
\psi(r+1) &= \pm \overline{Q}\cdot \psi_0.
\end{align*}
It follows that all states $\psi(j)$ are ground states,
and the transition from $\psi(j)$ to $\psi(j+1)$ 
can be caused by a Pauli operator
\[
O_j = P_j P_{j+1} Q_{j+1}.
\]
Let $M_j$ be the support of $O_j$. 
By assumption, $M_j$ can be enclosed by at most $3n$ cubes of linear size $R$.
If $M_j$ is a connected set,
i.e., one can connect any pair of qubits from $M_j$ by a path $(u_1,\ldots,u_l)$
such that the distance between $u_a$ and $u_{a+1}$ is $1$,
then $M_j$ can be enclosed by a single cube of linear size at most $3nR$.
That $3nR < \ltqo$ implies $O_j$ is a stabilizer by the topological order condition.
Generally, $M_j$ consists of several disconnected components $M_j^\alpha$,
such that the distance between any pair of distinct components is at least $2$.
The restriction of $O_j$ on a connected component 
commutes with any stabilizer generator,
and is supported in a box of linear size $3nR$,
which is smaller than $\ltqo$ by assumption.
Therefore, the restrictions are stabilizers,
and their product $O_j$ is also a stabilizer.
In other words, $\psi(j+1)=\pm \psi(j)$ for all $j$.
It means that $\overline{Q}\, \psi_0 = \pm \psi_0$ for any ground state $\psi_0$.
We conclude that $\overline{Q}$ is a stabilizer.
\end{proof}

The no-strings rule says that an isolated charged defect belonging to some sparse syndrome
cannot be moved further than distance $\alpha$ away by a sequence of local errors.
Since the no-strings rule is scale invariant,
it may be applied to a coarse-grained lattice to show that isolated
charged clusters cannot be moved further than distance $\alpha \xi(p)$ away.
In order to exploit the scale invariance,
we define a {\bf level-$p$ syndrome history}
as a subsequence of the original syndrome history $\{S(t)\}_{t=0,\ldots,T}$
that includes only those syndromes $S(t)$ that are non-sparse
at all levels $q=0,\ldots,p-1$.
The level-$0$ syndrome history includes all syndromes $S(t)$,
the level-$1$ syndrome history omits $S(t)$ that is sparse at level 0,
and so on.
When $S(t')$ and $S(t'')$ are a consecutive pair of level-$p$ syndromes,
we define a {\bf level-$p$ error $E$} connecting $S(t')$ and $S(t'')$
as the product of all single-qubit errors $E_j$ that occurred between $S(t')$ and $S(t'')$.
Level-$p$ errors are represented by horizontal arrows on Fig.~\ref{fig:RG}.
Note that we do not have any bound on the number
of single-qubit errors $E_j$ in the interval between $S(t')$ and $S(t'')$.
In the worst case, $E$ could act nontrivially on every qubit in the system.
A main technical lemma of this chapter below asserts, loosely speaking, that
if a syndrome history does not have a deep enough non-sparse hierarchy,
any error path is equivalent to one that is localized around the defects.

\begin{lem}
\label{lemma:RG2}
Let $S'= S(t')$ and $S''= S(t'')$ be a consecutive pair of syndromes in the level-$p$ syndrome history
of a Hamiltonian obeying the no-strings rule.
Let $E$ be the product of all errors $E_j$ that occurred between $S'$ and $S''$.
Suppose that any $S(t)$ contains at most $m$ defects.
If 
\[
16 m \xi(p) < L_{tqo},
\]
then there exists an error $\tilde{E}$  supported on $\mathcal B _{\xi(p)}(S'\cup S'')$
such that $E\tilde{E}$ is a stabilizer.
\end{lem}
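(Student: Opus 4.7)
The plan is to induct on $p$. For the base case $p=0$, consecutive level-$0$ syndromes differ by a single-qubit error $E=E_{t'+1}$ flipping a single qubit $q\in\mathcal B_1(S')\cap\mathcal B_1(S'')$, so I would set $\tilde E=E$, which yields $E\tilde E=I\in\mathcal G$ with support inside $\mathcal B_{\xi(0)}(S'\cup S'')$.

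For the inductive step, first enumerate the level-$(p-1)$ sub-history $S'=R_0,R_1,\ldots,R_k=S''$. Since $16m\xi(p-1)<\ltqo$, the induction hypothesis provides Pauli operators $\hat E_i$ supported in $\mathcal B_{\xi(p-1)}(R_i\cup R_{i+1})$ with $E_i\hat E_i\in\mathcal G$, so the product $\hat E:=\hat E_{k-1}\cdots\hat E_0$ satisfies $E\hat E\in\mathcal G$ and has support in $\bigcup_i\mathcal B_{\xi(p-1)}(R_i\cup R_{i+1})$. Each intermediate $R_i$ ($0<i<k$) is sparse at level $p-1$, hence decomposes into clusters $\{C_{i,j}\}$ of diameter at most $\xi(p-1)$ pairwise separated by more than $\xi(p)$. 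I would then build a \emph{cluster graph} whose vertices are all clusters across all time steps (including those in $R_0$ and $R_k$), with an edge between $C_{i,j}$ and $C_{i+1,j'}$ whenever their $\xi(p-1)$-neighborhoods meet; its connected components I call \emph{orbits}.

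The main structural claim is that charged clusters cannot appear or disappear abruptly. If a charged cluster $C\subset R_i$ had no neighbor in $R_{i-1}$, then sparsity at scale exceeding $\xi(p)>2\xi(p-1)$ would make $\mathcal B_{\xi(p-1)}(C)$ disjoint from every other cluster's $\xi(p-1)$-neighborhood, and the restriction of $\hat E_{i-1}$ to $\mathcal B_{\xi(p-1)}(C)$ would act on the local vacuum (since $R_{i-1}$'s defects are far from $C$) and produce $C$'s charge---contradicting the definition of a charged cluster. So every charged orbit spans from $R_0$ to $R_k$, and gathering the restrictions of the $\hat E_i$'s along such an orbit produces a finitely supported Pauli operator whose excitations lie in two boxes of width $\xi(p-1)$, one in $S'$ and one in $S''$, with nontrivial charge in each. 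The no-strings rule (Definition~\ref{defn:no-strings}) then forces the two boxes to lie within distance $\alpha\xi(p-1)=\xi(p)/10$, so the entire orbit sits inside $\mathcal B_{\xi(p)}(S'\cup S'')$. For a neutral orbit disjoint from $S'\cup S''$, every cluster in it is neutral; I would use TQO2 to produce a local Pauli creator for each instance, telescope them step by step, and conclude that the orbit operator equals the identity modulo local stabilizers whose diameters are $O(\xi(p-1))$ and hence certified by TQO1. Deleting these orbits from $\hat E$ yields the desired $\tilde E$ supported in $\mathcal B_{\xi(p)}(S'\cup S'')$ with $E\tilde E\in\mathcal G$.

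The hardest part will be the neutral-orbit elimination, because neutral orbits are not directly controlled by the no-strings rule and could a priori wander across many time steps. The telescoping-by-TQO2 step sketched above trades one long-range operator for many local ones, and the constant $16$ in the hypothesis $16m\xi(p)<\ltqo$ is calibrated so that every local stabilizer replacement stays within the TQO1 threshold throughout all levels of the induction.
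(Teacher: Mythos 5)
Your overall strategy---induction on $p$, applying the induction hypothesis to localize each level-$(p-1)$ error $\hat E_i$, tracking world-lines of clusters (your ``orbits''), locking charged orbits with the no-strings rule, and eliminating neutral orbits via TQO2---is essentially the paper's proof re-expressed in graph language, and your observations that orbits are paths and that charged clusters cannot appear or vanish abruptly are both correct. But there is a genuine gap in the no-strings step. You apply the no-strings rule only to the full orbit operator, bounding the distance between the two endpoint boxes $C_0\subset S'$ and $C_k\subset S''$. This does \emph{not} imply ``the entire orbit sits inside $\mathcal B_{\xi(p)}(S'\cup S'')$'': a charged cluster could wander a distance $\gg\alpha\xi(p-1)$ from $C_0$ and return, and the support of the orbit operator tracks the whole excursion. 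You must apply the no-strings rule to each \emph{partial} orbit operator (which creates $C_0\oplus C_i$ and is a valid string segment because every cluster along a charged orbit carries the same nontrivial charge), exactly as the paper does by fixing the first excursion time $t_1$ and bounding $\mathrm{dist}(C_a(t_1),C_a(1))$.

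The second and more serious gap is in the boundary treatment. The endpoints $R_0=S'$ and $R_k=S''$ are by construction \emph{non-sparse} at level $p-1$, so they do not decompose into $\xi(p-1)$-sized clusters separated by $\xi(p)$; consequently the cluster graph has no well-defined vertices at the boundary, and the orbit decomposition of $\hat E_0$ and $\hat E_{k-1}$ is not a clean product of disjointly-supported factors. Worse, you only clean neutral orbits ``disjoint from $S'\cup S''$''. A neutral cluster in $R_1$ adjacent to $S'$ starts a neutral orbit that the no-strings rule cannot constrain; it may wander arbitrarily far and your $\tilde E$ retains it, so the support of $\tilde E$ is not controlled. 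The paper handles this by keeping only the charged restrictions $E^c_t$ at intermediate times and constructing modified boundary operators $\tilde E_{\mathrm{lead}}$, $\tilde E_{\mathrm{tail}}$ in which every neutral cluster of $S(1)$ and $S(\tau)$ is replaced by a local TQO2-creation $O'(C)$; far-away connected components of $E_{\mathrm{lead}}$ then cancel against the corresponding $O'(C)$ modulo small stabilizers, yielding an operator genuinely supported near $S'$. You need the analogous modification---delete \emph{all} neutral orbits, including those touching the boundary, and absorb their boundary creations/annihilations into modified $\hat E_0$, $\hat E_{k-1}$---and you then need Lemma~\ref{lem:large-support-for-nontrivial-transition} to certify that the quotient $E\tilde E$ is a stabilizer, since the deleted-orbit stabilizers do not sit in any single small box.
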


\begin{figure}[tb]
\centerline{\includegraphics[height=4cm]{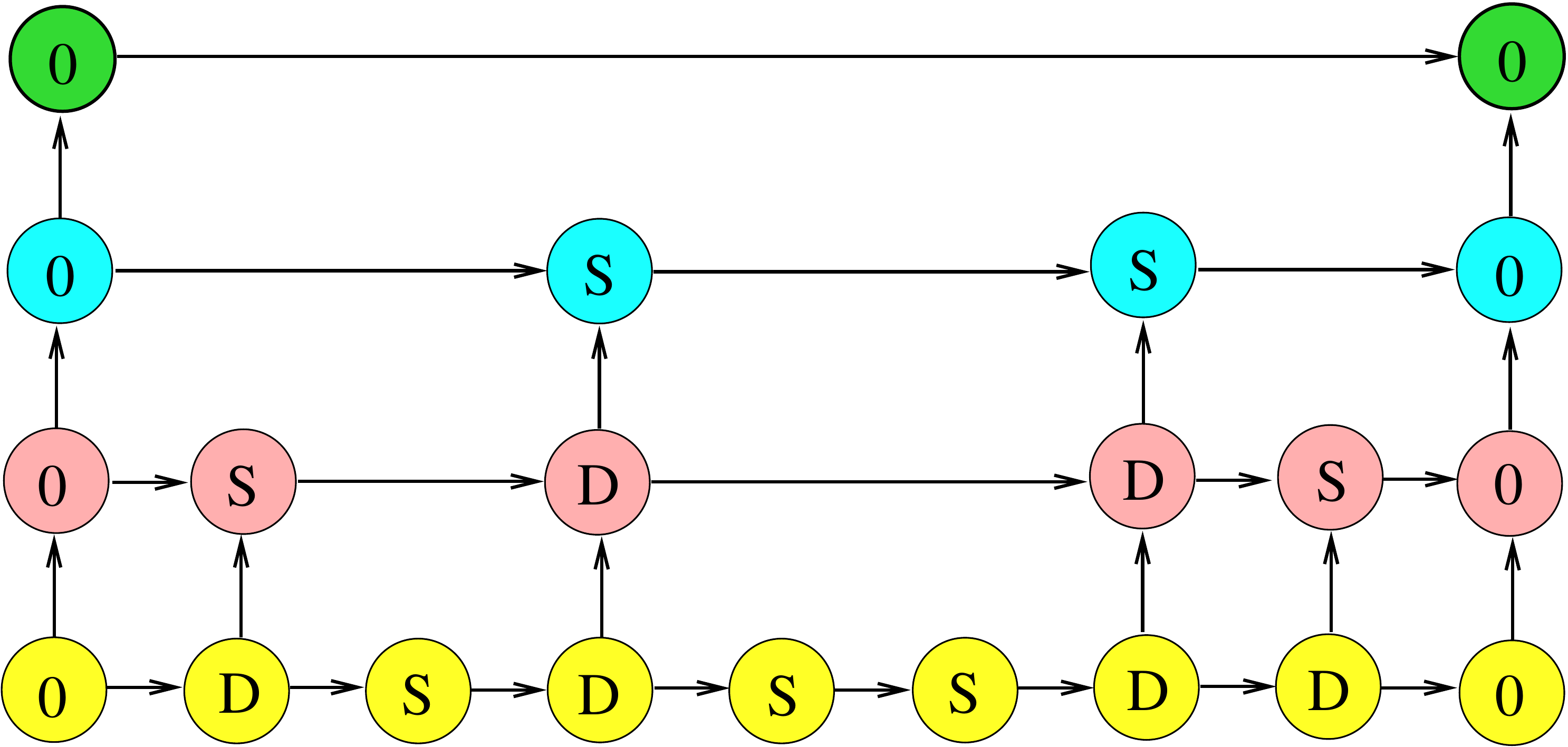}}
\caption{Renormalization group technique 
used to prove a logarithmic  lower bound on the energy barrier for logical operators.
Horizontal axis represents time. 
Vertical axis represents RG level $p=0,1,\ldots,p_{max}$.
A sequence of level-$0$ errors (single-qubit Pauli operators) 
implementing a logical operator $\overline{P}$
defines a level-$0$ syndrome history (yellow circles)
that consists of sparse (S) and non-sparse (D) syndromes.
The history begins and ends with the vacuum ($0$).
For any level $p\ge 1$ we define a level-$p$ syndrome history
by retaining only non-sparse syndromes at the lower level.
A syndrome is called non-sparse at level $p$ if it
cannot be partitioned into clusters of size $\le (10\alpha)^p$
separated by distance $\ge (10\alpha)^{p+1}$,
where $\alpha$ is a constant coefficient from the no-strings rule.
Each  level-$p$ error (horizontal arrows) connecting syndromes $S',S''$
is equivalent to the product of all level-$(p-1)$ errors between $S',S''$
modulo a stabilizer.
We prove that these stabilizers can be chosen such that
level-$p$ errors act on $2^{O(p)}$ qubits.
Since at the highest  level $p=p_{max}$
a single level-$p$ error is a logical operator,
one must have $p_{max}=\Omega(\log{L})$.
We prove that level-$p$ non-sparse syndromes contain $\Omega(p)$ defects
which implies that at least one syndrome at level $p=p_{max}-2$
consists of $\Omega(\log{L})$ defects.}
\label{fig:RG}
\end{figure}

\begin{proof}
The proof is by induction in $p$.
When $p=0$, $E=E_j$ is a single-qubit error.
If the qubit acted on by $E$ does
not belong to $\mathcal B_{1}(S'\cup S'')$, one must have $S'=S''$.
It means that $E$ is a single-qubit error with a trivial syndrome.
The topological order condition implies that $E$ is a stabilizer.
Choosing $\tilde{E}=I$ proves the lemma for $p=0$.

Suppose the assertion is true for some level $p \ge 0$.
Let $S'=S(t')$ and $S''=S(t'')$ be consecutive syndromes in the level-$(p+1)$
history. Consider first the trivial case when
$S'=S(t')$ and $S''=S(t'')$ are also consecutive syndromes in the level-$p$ history.
Then $S'$ and $S''$ are connected by a single%
\footnote{%
The word `single' does not necessarily mean a single blob of errors supported in a small ball; 
it just means a single arrow in the level-$p$ history.
For instance, suppose that a model permits neutral point defects.
Two neutral defects separated by distance 1 is a non-sparse at level 0 and 1.
If we annihilate one of them, put sparsely many neutral defects,
and finally put a neutral defect adjacent to one of the defects,
then the whole process is described by a consecutive pair in the level 1 history.
The `single' level-1 error in this case consists of many components.%
}
level-$p$ error $E$ which, by induction hypothesis, has support on $\mathcal B _{\xi(p)}(S'\cup S'')$
modulo stabilizers.
The latter is contained in $\mathcal B_{\xi(p+1)}(S'\cup S'')$ which proves the induction step.

The nontrivial case is when there is at least one level-$p$ syndrome between $S'$ and $S''$.
The interval of the level-$p$ syndrome history between $S'$ and $S''$ can be represented
(after properly redefining the time variable $t$) as
\[
S'\xrightarrow{E_\text{lead}}
S(1) \xrightarrow{E_1}
S(2) \xrightarrow{E_2}
\cdots
 \xrightarrow{E_{\tau-1}}
S(\tau) \xrightarrow{E_\text{tail}}
S''.
\]
Here all syndromes $S(1),\ldots,S(\tau)$ are sparse at the level $p$ and
all transitions are caused by level-$p$ errors.
The sparsity implies that
the set of elementary cubes occupied by  $S(t)$ has a unique
partition into a disjoint union of clusters $C_a(t)$ such that
each cluster has diameter at most $\xi(p)$ and the distance
between any pair, if any, of clusters is at least
\begin{align*}
\mathrm{dist}(C_a(t),C_b(t)) & \ge   \xi(p+1)-2\xi(p) \ge (10\alpha -2)\xi(p)  \ge   8\alpha \xi(p).
\end{align*}
Represent any intermediate syndrome as a disjoint union
\begin{equation}
\label{S1+}
S(t)=S^c(t) \cup S^n(t), \quad t=1,\ldots,\tau,
\end{equation}
where $S^c(t)$ and $S^n(t)$ include all charged and all neutral clusters $C_a(t)$, respectively.
Let $g$ be  the number of clusters in $S^c(t)$. We claim  that $g$  does not depend on $t$.
Indeed, since a level-$p$ error $E_t$ acts on $\xi(p)$-neighborhood of $S(t) \cup S(t+1)$
by the induction hypothesis,
the sparsity condition implies that $E_t$
cannot create or annihilate a charged cluster $C_a(t)$ from the vacuum, 
or map a charged cluster  to a neutral cluster and vice versa.
The same argument shows that each  cluster $C_a(t)\subseteq  S^c(t)$ can `move' 
at most by $\xi(p)$ per time step,
that is, we can parameterize
\[
S^c(t)=C_1(t) \cup \ldots \cup C_g(t)
\]
such that
a `world-line' of the $a$-th charged cluster
obeys the continuity condition
\begin{equation}
\label{cont+}
\mathrm{dist}(C_a(t+1),C_a(t)) \le \xi(p).
\end{equation}
We can now use the no-strings rule to show that
all charged clusters are `locked' near their
initial positions, so that their world-lines
are essentially parallel to the time axis.  More precisely, we claim that
\begin{equation}
\label{locking+}
\mathrm{dist}(C_a(t), C_a(1)) \le \alpha  \xi(p)  \quad \text{for all $1\le t \le \tau$}.
\end{equation}
Indeed, suppose Eq.~\eqref{locking+} is false for some $a$.
Using the continuity Eq.~\eqref{cont+},
one can find a time step $t_1$
such that $\mathrm{dist}(C_a(t_1), C_a(1))>\alpha \xi(p)$ and
$\mathrm{dist}(C_a(t), C_a(1)) \le \alpha \xi(p)$ for all $1\le t<t_1$.
Let $E_{close}$ be the product of all level-$p$ errors $E_j$
that occurred between $S(1)$ and $S(t_1)$ within distance $(2+\alpha)\xi(p)$
from $C_a(1)$. Since all intermediate syndromes are sparse at level $p$,
the net effect of $E_{close}$ is to annihilate the  charged cluster  $C_a(1)$
and create the charged cluster $C_a(t_1)$.
Equivalently, applying $E_{close}$ to the vacuum creates a pair of charged clusters
$C_a(1)$ and $C_a(t_1)$.
However, this contradicts to the no-strings rule since
$C_a(1)$ and $C_a(t_1)$ have linear size at most $\xi(p)$ while
the distance between them is greater than $\alpha \xi(p)$.
Thus we have proved Eq.~\eqref{locking+}.

We say that $\vec{x}\in \Lambda$ is {\em close to $S'$} if $\vec{x}\in \mathcal B_{\xi(p+1)}(S')$,
and $\vec{x}\in \Lambda$ is {\em close to $S''$} if $\vec{x}\in \mathcal B_{\xi(p+1)}(S'')$.
Let $E_t$ be the level-$p$ error causing the transition from  $S(t)$ to $S(t+1)$,
where $t=1,\ldots,\tau-1$.
By induction hypothesis, we may assume that the support of $E_t$ is in $\mathcal B_{\xi(p)}(S(t) \cup S(t+1))$.
Let $E^c_t$ be the restriction of $E_t$ onto $\mathcal B_{\xi(p)}(S^c(t)\cup S^c(t+1))$,
and $E^n_t$ be the restriction of $E_t$ onto $\mathcal B_{\xi(p)}(S^n(t)\cup S^n(t+1))$.
The sparsity of $S(t)$ implies that
\begin{equation}
\label{EcEn+}
E_t = E^c_t \cdot E^n_t.
\end{equation}
We claim that {\em any error $E^c_t$ is close to $S'$}.
Indeed, each cluster in $S^c(1)$ is within distance $2\xi(p)$ from $S'$ 
since, otherwise, a single level-$p$ error $E_\text{lead}$,
supported in $\mathcal B_{\xi(p)}(S' \cup S(1) )$ by induction hypothesis (modulo stabilizers),
would be able to create a charged cluster from the vacuum.
Using Eq.~\eqref{locking+}, we infer that 
$C_a(t) \subseteq \mathcal B_{(2+\alpha)\xi(p)}(S')$ for all $a=1,\ldots,g$.
Therefore, $E^c_t$ is close to $S'$.

We wish to find a ``localized'' leading error $\tilde{E}_\text{lead}$
that maps the syndrome $S'$ to $S^c(1)$ such that the support of
$\tilde{E}_\text{lead}$ is close to $S'$.
For each neutral cluster $C \in S^n(1)$ of diameter at most $\xi(p)$,
let $O'(C)$ be a Pauli operator creating $C$ from the vacuum.
Because of our TQO2, we can choose $O'(C)$ to be supported in $\mathcal B_1(C)$.
Set
\[
\tilde{E}_\text{lead} = E_\text{lead} \prod_{ C \in S^n(1) } O'(C),
\]
We have seen that $S^c(1)$ is within $(2+\alpha)\xi(p)$-neighborhood of $S'$.
If $E_\text{lead}$ has a disconnected component $E(C)$,
isolated by distance $\xi(p)$ and centered at a neutral cluster $C$ of $S^n(1)$,
then $E(C) O'(C)$ is a stabilizer.
Hence, $\tilde E_\text{lead}$ is close to $S'$ modulo stabilizers.
If $E_\text{lead}$ does not have such a component, $\tilde E_\text{lead}$ is already close to $S'$.
Certainly, $\tilde E_\text{lead}$ maps $S'$ to $S^c(1)$.
We can apply the same rules to the error $E_\text{tail}$ and the syndrome $S^n(\tau)$.
We find a localized error
\[
\tilde{E}_\text{tail} = E_\text{tail} \cdot \prod_{C \in S^n(\tau) } O''(C)
\]
modulo stabilizers
such that $\tilde{E}_\text{tail}$ maps $S^c(\tau)$ to $S''$ 
and the support of $\tilde{E}_\text{tail}$ is close to $S''$.
The operator $O''(C)$ above creates a neutral cluster $C \in S^n(\tau)$ from the vacuum.

We can now define a localized level-$(p+1)$ error $\tilde{E}$
whose support is close to $S'\cup S''$ as
\begin{equation}
\label{tildeE_cluster}
\tilde{E}=\tilde{E}_\text{tail} \cdot E^c_{\tau-1} \cdots E^c_{1} \cdot \tilde{E}_\text{lead}.
\end{equation}
By construction, it describes an error path
\[
S'\xrightarrow{\tilde{E}_\text{lead}}
S^c(1) \xrightarrow{E^c_1}
S^c(2) \xrightarrow{E^c_2}
\cdots
 \xrightarrow{E^c_{\tau-1}}
S^c(\tau) \xrightarrow{\tilde{E}_\text{tail}}
S''.
\]
It remains to check that $E \cdot \tilde{E}$ is a stabilizer.
Combining Eq.~\eqref{EcEn+} and
Eq.~\eqref{tildeE_cluster} we
conclude that
\[
E\cdot \tilde{E} = \pm (\tilde{E}_\text{lead} E_\text{lead}) \cdot E^n_1 \cdots E^n_{\tau-1} \cdot (\tilde{E}_\text{tail} E_\text{tail}).
\]
Applying  $E\cdot \tilde{E}$ to the vacuum generates the following chain of transitions:
\begin{align}
\label{chain1_cluster}
\mathrm{vac} \xrightarrow{\tilde{E}_\text{lead} E_\text{lead}}
S^n(1) \xrightarrow{E^n_1}
S^n(2) \xrightarrow{E^n_2} \cdots 
 \xrightarrow{E^n_{\tau-1}}
S^n(\tau) \xrightarrow{\tilde{E}_\text{tail} E_\text{tail}}
\mathrm{vac}
\end{align}
Each syndrome $S^n(t)$ consists of at most $m$ neutral clusters of diameter $\xi(p)$, 
i.e., it can be created from the vacuum by an error
whose support can be enclosed by at most $m$ cubes of linear size $2+\xi(p)$,
due to our TQO2.
The first transition $\tilde{E}_\text{lead} E_\text{lead}$
or the last transition $\tilde{E}_\text{tail} E_\text{tail}$
is caused by errors whose support can be enclosed by at most $m$ cubes of linear size $2+\xi(p)$.
All intermediate ones are supported on $\mathcal B_{\xi(p)}(S^n(t)\cup S^n(t+1))$.
Hence their support can be enclosed by at most $2m$ cubes of linear size $2+\xi(p)$.
Now the statement that $E\cdot \tilde{E}$ is a stabilizer follows from Lemma~\ref{lem:large-support-for-nontrivial-transition}.
\end{proof}

Now we state two theorems that apply to any stabilizer Hamiltonian Eq.~\eqref{eq:stabH}
on a $D$-dimensional lattice that obeys the topological order condition (TQO1,2)
and the no-strings rule.
The numerical constants for the Hamiltonian are
$\alpha \ge 1$ in the no-strings rule
and $1 \ge \gamma>0$ in the bound $\ltqo \ge L^\gamma$
where $L$ is the linear system size.

\begin{theorem}
\label{thm:lop-log-energy-barrier}
The energy barrier for any logical operator is at least $c \log{L}$ for some constant $c=c(\alpha,\gamma)$.
\end{theorem}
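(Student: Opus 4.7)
The plan is to use the renormalization hierarchy of syndrome histories together with the two key lemmas already in place. Fix any error path implementing a logical operator $\overline P$, let $\{S(t)\}$ be the resulting syndrome history (which starts and ends at the vacuum), and let $m$ be the largest number of defects appearing in any $S(t)$; the energy barrier is $2m$, so it suffices to show $m \ge c \log L$. The idea is to promote this single history to the nested family of level-$p$ histories described before Lemma~\ref{lemma:RG2}, and to squeeze $m$ between two bounds extracted from opposite ends of the hierarchy.

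First I will define $p^\ast$ to be the largest level whose history still contains a non-vacuum syndrome. At level $p^\ast+1$, by construction only vacuum syndromes remain, so the entire error path collapses to a single level-$(p^\ast+1)$ arrow $\mathrm{vac} \xrightarrow{E} \mathrm{vac}$ with $E$ equal to $\overline P$ times a stabilizer. Lemma~\ref{lemma:RG2} says that, provided $16 m\, \xi(p^\ast+1) < L_{tqo}$, this $E$ differs from some $\tilde E$ supported on $\mathcal B_{\xi(p^\ast+1)}(\emptyset)=\emptyset$ by a stabilizer; in other words $\tilde E = I$ and $\overline P$ would be a stabilizer, contradicting its being a nontrivial logical operator. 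Consequently the hypothesis of the lemma must fail, giving the lower bound
\[
\xi(p^\ast+1) = (10\alpha)^{p^\ast+1} \;\ge\; \frac{L_{tqo}}{16 m} \;\ge\; \frac{L^\gamma}{16 m}.
\]

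Next I will use Lemma~\ref{lemma:counting} from the other direction. By definition $p^\ast$ is achieved, so there exists at least one non-vacuum syndrome $S(t)$ in the level-$p^\ast$ history; such a syndrome is non-sparse at every level $q=0,\ldots,p^\ast-1$, hence contains at least $p^\ast+1$ defects. Since $m$ upper-bounds the defect count in every intermediate syndrome, this yields $m \ge p^\ast+1$, equivalently $p^\ast \le m-1$. Plugging back into the previous display,
\[
(10\alpha)^{m} \;\ge\; \frac{L^\gamma}{16 m},
\]
which, after taking logarithms and absorbing the $\log(16m)$ term into a constant for $L$ large, gives $m \ge c(\alpha,\gamma)\, \log L$ as desired.

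The substantive work has already been done inside Lemma~\ref{lemma:RG2}; the only delicate point in assembling the proof is a careful treatment of the boundary/vacuum syndromes in the RG hierarchy. Specifically, I need to check that vacuum start- and end-points are compatible with the definition of the level-$p$ history (they are not themselves "non-sparse," but are retained as boundary data), and that when both endpoints of a single level-$(p^\ast+1)$ arrow are vacuum, the statement of Lemma~\ref{lemma:RG2} really does produce a \emph{trivially} supported $\tilde E$, forcing $\overline P$ into the stabilizer group. Once these bookkeeping points are handled, the chain of estimates above immediately yields the claimed $\Omega(\log L)$ energy barrier for every logical operator, with an explicit constant $c = \gamma/\log(10\alpha)$ up to lower-order corrections.
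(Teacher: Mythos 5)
Your proposal is correct and follows essentially the same route as the paper's proof: define the top RG level at which the history collapses to $\mathrm{vac}\to\mathrm{vac}$, apply Lemma~\ref{lemma:RG2} with $S'=S''=\emptyset$ to force $16m\,\xi(p_{\max})\ge L_{tqo}\ge L^\gamma$, and apply Lemma~\ref{lemma:counting} at the penultimate level to get $m\ge p_{\max}$, from which $m=\Omega(\log L)$ follows. One small slip is the aside that ``the energy barrier is $2m$'' -- in the paper's convention the energy barrier is measured in defects, so it is $m$ itself -- but this parenthetical does not affect the argument.
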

\begin{proof}
Consider a syndrome history of an implementation of a logical operator $E$.
We keep the initial and the final syndromes (the empty syndromes) at all levels;
the syndrome history starts and ends with the vacuum at any level $p$.
It suffices to treat the case where
all intermediate syndromes $S(t)$ are non-empty.
Let $p_{max}$ be the highest RG level, that is, the smallest integer $p \ge 0$
such that a single level-$p$ error $E$ maps the vacuum to itself, see Fig.~\ref{fig:RG}.
Let $m$ be the maximum number of defects in the syndrome history at any given moment.

Suppose that $16m \xi(p_{max}) < \ltqo$.
Then Lemma~\ref{lemma:RG2} applied to the level-$p_{max}$ syndrome history
with $S'=S''=\emptyset$ (vacuum),
would imply $\tilde{E}=I$.
Since $E$ is not a stabilizer by assumption, we must have $16 m \xi(p_{max}) \ge \ltqo$.
Since the syndrome history must contain a syndrome $S(t)$ non-sparse at all levels $0,\ldots,p_{max}-2$,
Lemma~\ref{lemma:counting} implies $m \ge p_{max}$.
And, TQO1 requires $\ltqo \ge L^\gamma$.
Therefore, $ 16(10\alpha)^{2m} \ge 16m(10\alpha)^m \ge 16 m (10\alpha)^{p_{max}} \ge L^\gamma $,
and $m = \Omega(L)$.
\end{proof}

\begin{theorem}
\label{thm:cluster-log-energy-barrier}
Let $S$ be a neutral cluster of defects
containing a charged cluster $S' \subseteq S$ of diameter $r$
such that there are no other defects within distance $R$ from $S'$.
If $r+R < L_{tqo}$,
then the energy barrier for creating $S$ from the vacuum is at least $c \log R$
for some constant $c=c(\alpha)$.
\end{theorem}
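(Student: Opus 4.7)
My plan is to adapt the renormalization-group strategy used in the proof of Theorem~\ref{thm:lop-log-energy-barrier}, replacing the role of ``logical operator'' with ``operator creating a neutral cluster $S$ whose charged sub-cluster $S'$ is isolated by distance $R$.'' The key twist will be that the top-level contradiction comes not from concluding $\tilde E = I$, but from a spatial factorization of $\tilde E$ that would produce the charged cluster $S'$ by itself inside a cube of linear size $r+2\xi(p_{max}) < r+R < L_{tqo}$, directly contradicting the definition of ``charged.''

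Concretely, I would start from an arbitrary syndrome history from vacuum to $S$, with maximum defect count $m$, and build the level-$p$ hierarchy with $\xi(p)=(10\alpha)^p$ exactly as in Theorem~\ref{thm:lop-log-energy-barrier}. Let $p_{max}$ be the smallest integer for which the level-$p_{max}$ history is just vacuum $\to S$ (a single transition). Then I would split into two cases according to whether the hypothesis $16m\xi(p_{max})<L_{tqo}$ of Lemma~\ref{lemma:RG2} holds. In the affirmative case, the lemma (with empty $S'$ and $S''=S$) yields $\tilde E$ supported in $\mathcal B_{\xi(p_{max})}(S)$ with $E\tilde E$ a stabilizer, so $\tilde E$ creates $S$ from vacuum. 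If $\xi(p_{max})<R/4$ (say), then $\mathcal B_{\xi(p_{max})}(S')$ and $\mathcal B_{\xi(p_{max})}(T)$ (with $T=S\setminus S'$) are separated by more than the interaction range, so no stabilizer generator straddles the two regions. Hence $\tilde E$ factors as $\tilde E_1\tilde E_2$ with non-interacting supports, and the commutation of each $G_a$ with $\tilde E$ is determined by a single factor, forcing $\tilde E_1$ to create exactly $S'$ from vacuum inside a cube of size $r+2\xi(p_{max})<L_{tqo}$ --- contradicting chargedness. Thus in this case $\xi(p_{max})\ge R/4$, so $p_{max}\ge c_1\log R$, and Lemma~\ref{lemma:counting} applied to any syndrome in the level-$(p_{max}-1)$ history non-sparse at all levels $0,\ldots,p_{max}-2$ gives $m\ge p_{max}\ge c_1\log R$.

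In the failing case $16m\xi(p_{max})\ge L_{tqo}$, I would combine the bound $m\ge L_{tqo}/(16\xi(p_{max}))=L_{tqo}/(16(10\alpha)^{p_{max}})$ with $m\ge p_{max}$ (again from Lemma~\ref{lemma:counting}) to get
\[
m\ \ge\ \max\!\left(p_{max},\ \tfrac{L_{tqo}}{16(10\alpha)^{p_{max}}}\right)\ \ge\ \min_{p\ge 0}\max\!\left(p,\ \tfrac{L_{tqo}}{16(10\alpha)^{p}}\right)\ =\ \Omega(\log L_{tqo}),
\]
and since $L_{tqo}>r+R\ge R$ this again yields $m\ge c_2\log R$. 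Taking $c=\min(c_1,c_2)$ completes the argument.

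The main obstacle I expect is the bookkeeping in the splitting step: one must verify carefully that the separation of $\mathcal B_{\xi(p_{max})}(S')$ from $\mathcal B_{\xi(p_{max})}(T)$ really exceeds the support diameter of every stabilizer generator (so that $\tilde E_1$ and $\tilde E_2$ create disjoint syndromes), and that the syndrome attributable to $\tilde E_1$ coincides with $S'$ rather than an arbitrary subset of $S$ lying in that neighborhood. This is ensured by the hypothesis that no defect of $S$ other than $S'$ lies within distance $R$ of $S'$, combined with the bound $\xi(p_{max})<R/4$, but the constants and interaction range have to be tracked explicitly. Beyond this, the rest of the argument is essentially mechanical and parallels Theorem~\ref{thm:lop-log-energy-barrier}.
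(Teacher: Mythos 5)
Your proposal matches the paper's proof almost exactly: the same RG hierarchy with level‑$p_{max}$, the same dichotomy on $16m\xi(p_{max})\lessgtr L_{tqo}$, the same use of Lemma~\ref{lemma:RG2} to localize $\tilde E$ near $S$, the same spatial-factorization argument (requiring $\xi(p_{max})<R/4$) that would make $S'$ locally creatable and hence neutral, and the same fallback to $m\ge p_{max}$ via Lemma~\ref{lemma:counting} with $L_{tqo}>R$. The only cosmetic differences are the exact constant in the box size for $\tilde E_1$ ($r+2\xi(p_{max})$ versus the paper's $r+R/2$, both $<L_{tqo}$) and how explicitly the factorization bookkeeping is written out.
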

\begin{proof}
Let $S$ be a neutral cluster of defects and $E$ be a Pauli operator creating $S$ from the vacuum,
with $S' \subset S$ of diameter $r$ being charged.
Consider a hierarchy of syndrome histories similar to the one shown on Fig.~\ref{fig:RG},
where we now maintain the initial syndrome $\emptyset$ and the final syndrome $S$ for all levels.
Let $p_{max}$ be the highest RG level.
Then a single level-$p_{max}$ error $E$ creates $S$ from the vacuum.
Since there must be a syndrome that is non-sparse for all levels $0,1,\ldots,p_{max}-2$,
Lemma~\ref{lemma:counting} implies $m \ge p_{max}$
where $m$ is the maximum number of defects in the syndrome history.

Suppose $16 m \xi(p_{max}) < \ltqo$ 
Lemma~\ref{lemma:RG2} implies that 
$E$ is the equivalent modulo stabilizers to $\tilde{E}$ 
supported on $\mathcal B _{\xi(p_{max})}(S)$.
If $\xi(p_{max}) < R/4$,
then $\tilde{E}$ must act on two separated regions, one near $S'$ and another far from $S'$.
This means that $S'$ alone can be created by a Pauli operator whose support is enclosed by a cube of linear size $r+R/2$.
Since $r+R < L_{tqo}$, it is contradictory to the assumption that $S'$ is charged.
Therefore, $\xi(p_{max}) \ge R/4$, and $m = \Omega(R)$.
If $16 m \xi(p_{max}) \ge \ltqo$, then, since $\ltqo > R$, we also have $m = \Omega(R)$.
\end{proof}

\section{Superlinear code distance}
\label{sec:superlinear-distance}

We have shown that in a process of isolating a charged cluster, there is a logarithmic energy barrier.
The following theorem quantifies how long the process must be.
The proof again makes use of renormalization group,
and shows that there is a subset of `fractal dimension' $\gamma > 1$ in the support of $E$,
the operator that makes two separated clusters from the vacuum of which one is charged.
We assume that $E$ has weight minimum possible.

Let $w$ be an odd positive number.
We say a set of sites $C \subset \Lambda$ is a {\bf level-$p$ chunk} if $\mathrm{diam}(C) < w^p$.
A \emph{path} in the lattice is a finite sequence of sites $(u_1,u_2,\ldots,u_n)$ such that $d(u_i,u_{i+1})=1$.
(Recall that we use the $l_\infty$ metric $d$.)
Using paths, we can say whether a set is connected.

\begin{defn}
A connected level-$p$ chunk $C \subseteq S$ is \emph{maximal} with respect to a set of sites $S$
if there exist a connected subset $C^\circ \subseteq C$
and a path $\zeta = (u_1,\ldots,m,\ldots,u_n) \subseteq C^\circ$ satisfying
\begin{enumerate}
 \item[(i)]   $d(u_1,u_n)=w^p-w^{p-1}$,
 \item[(ii)]  $d(u_1,m),d(u_n,m)\ge \frac{w^p-w^{p-1}}{2}$,
 \item[(iii)]  $C^\circ$ contains the connected component of $m$ in $\mathcal{B}_{\frac{w^{p}-w^{p-1}}{2}}(m) \cap S$, and
 \item[(iv)] $C$ contains the connected component of $C^\circ$ in $\mathcal{B}_{\frac{w^{p-1}}{2}}(C^\circ)\cap S$.
\end{enumerate}
\end{defn}

The last two conditions restricts the position of $\zeta$ in $C$ such that $\zeta$ lies sufficiently far from the boundary of $C$.
The site $m$ will be referred to as a {\bf midpoint} of $C$.
Let $S$ be the support of the Pauli operator $E$,
any restriction of which obeys the no-strings rule.

\begin{lem}
Given a path $\zeta$ in $S$ joining $u_1$ and $u_n$ such that $d(u_1,u_n)=lw^p-1$,
there are $l$ disjoint maximal chunks of level $p$ whose midpoints are on $\zeta$.
\label{lem:max_chunks_on_path}
\end{lem}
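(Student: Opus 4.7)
The plan is to place $l$ midpoints on $\zeta$ with pairwise $\ell_\infty$-distance at least $w^p$, to build around each one the canonical maximal chunk specified by conditions (iii)--(iv), and to deduce disjointness from the midpoint spacing; the main obstacle will be producing the in-chunk path required by conditions (i)--(ii).

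First I would place the midpoints. Because $d(u_1,u_n)=lw^p-1$, some coordinate axis, say $x_1$, realises this $\ell_\infty$-distance; without loss of generality $u_n(1)-u_1(1)=lw^p-1$. Consecutive vertices of $\zeta$ differ by at most one in each coordinate, so the first-coordinate sequence $(\zeta_t(1))_t$ visits every integer in $[u_1(1),u_n(1)]$. Setting $v_k:=u_1(1)+(k-1)w^p+(w^p-1)/2$ (an integer, since $w$ is odd) for $k=1,\ldots,l$, let $s_k$ be the smallest index with $\zeta_{s_k}(1)=v_k$, and define $m_k:=\zeta_{s_k}$. Then $m_k(1)-m_{k-1}(1)=w^p$, so $d(m_k,m_{k'})\ge w^p$ for $k\ne k'$. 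Writing $L:=(w^p-w^{p-1})/2$ and $L':=(w^{p-1}-1)/2$ (both integers for odd $w$), I would take $C^\circ_k$ to be the connected component of $m_k$ in $\mathcal{B}_L(m_k)\cap S$ and $C_k$ the connected component of $C^\circ_k$ in $\mathcal{B}_{L'}(C^\circ_k)\cap S$. Conditions (iii)--(iv) hold by construction, $\mathrm{diam}\,C_k\le 2L+2L'=w^p-1<w^p$ so each $C_k$ is a level-$p$ chunk, and from $C_k\subseteq\mathcal{B}_{(w^p-1)/2}(m_k)$ combined with $d(m_k,m_{k'})\ge w^p>w^p-1$ one obtains disjointness of distinct chunks.

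The main technical obstacle is certifying conditions (i)--(ii), namely exhibiting inside $C^\circ_k$ a path through $m_k$ whose endpoints lie at $\ell_\infty$-distance \emph{exactly} $2L=w^p-w^{p-1}$. The natural candidate is the subpath of $\zeta$ bracketed by the extremal indices around $s_k$ at which the first coordinate equals $v_k\pm L$: by extremality its $x_1$-coordinate is confined to $[v_k-L,v_k+L]$, and its endpoints differ in the first coordinate by exactly $2L$. The subtlety is that this subpath may leave $\mathcal{B}_L(m_k)$ through an orthogonal face, so the raw endpoints need not lie in $C^\circ_k$. I would handle this by trimming the subpath at the last in-ball vertex on each side of $s_k$ and then, if the $x_1$-coordinate of a trimmed endpoint has overshot back into the interior, re-routing inside $C^\circ_k$ using the discrete intermediate-value property in the remaining coordinates to reach a point still in the ball whose first coordinate is exactly $v_k\pm L$. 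The resulting path lies in $C^\circ_k$ by connectedness through $m_k$; its endpoints both lie in $\mathcal{B}_L(m_k)$ and differ in $x_1$ by $2L$, forcing $\ell_\infty$-distance exactly $2L$ and yielding (i), while (ii) is automatic since both endpoints then sit on $\partial\mathcal{B}_L(m_k)$.
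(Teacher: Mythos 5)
Your proposal follows the same overall strategy as the paper's proof and is structurally equivalent to it: partition the path along the axis that realizes the $\ell_\infty$-distance $lw^p-1$ into $l$ consecutive bins of width $w^p$ (the paper slices by $l+1$ planes $P_i$ perpendicular to one axis; you place midpoints at the first-coordinate values $v_k$, spaced by $w^p$), pick one midpoint per bin on the path, build each chunk as the closure under (iii)--(iv), and read off disjointness from the spacing of the midpoints. Your careful replacement of the half-integer radii $w^p/2$ and $w^{p-1}/2$ by $(w^p-1)/2$ and $(w^{p-1}-1)/2$ is a sensible tightening of the paper's statement.

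The obstacle you flag for conditions (i)--(ii) is, however, not closed by the fix you sketch, and this is the place where your argument (and the paper's, which is equally terse here) has a real gap. Two separate problems. First, the ``extremality'' claim that the bracketed subpath has $x_1$-coordinate confined to $[v_k-L,v_k+L]$ is not correct as stated: your $t_+$ is the first index \emph{after} $s_k$ at which $x_1=v_k+L$, but nothing prevents $x_1$ from dipping below $v_k-L$ between $s_k$ and $t_+$; this is fixable by a more careful choice of bracketing indices. Second and more seriously, after trimming at the last in-ball vertex on each side, the trimmed endpoints $a_\pm$ do satisfy $d(a_\pm,m_k)=L$, but nothing forces $d(a_-,a_+)=2L$. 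For example, suppose $\zeta$ arrives at $m_k$ along the $x_1$-axis and immediately leaves along the $x_2$-axis: then $a_-=m_k-(L,0,0)$ while $a_+=m_k+(0,L,0)$, and $d(a_-,a_+)=L$, so condition (i) fails. Your proposed re-routing ``to reach a point still in the ball whose first coordinate is exactly $v_k\pm L$'' would need to pass through $S\cap\mathcal{B}_L(m_k)$, and there is no reason $S$ contains the required sites: the discrete intermediate-value property you invoke is a property of the path $\zeta$, not of the ambient set $S$, and once $\zeta$ has left the ball at $a_+$ the relevant portion of $\zeta$ is no longer available for routing inside $C^\circ_k$. A repair would require choosing $m_k$ more flexibly than by prescribing its first coordinate, so that the in-ball segment of $\zeta$ through $m_k$ exits the ball in nearly opposite directions; the paper's proof asserts the midpoint property for a slab subpath without verifying that the resulting $C'_i$ satisfies (i)--(ii), so it too leaves this step implicit.
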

\begin{proof}
For convenience, we assume that the $z$-coordinates of $u_1$ and $u_n$ are
$0$ and $lw^p-1$, respectively.
Consider $l+1$ planes $P_i$ perpendicular to the $z$-axis, whose $z$-coordinates are $iw^p$ for $i=0,1,\ldots,l$.
In each region between the two consecutive planes $P_{i-1}$ and $P_i$,
there is a subpath $\zeta_i=(u_{j_{i-1}},\ldots,u_{j_{i}})$ such that $d(u_{j_{i-1}},u_{j_{i}})=w^p-1$.
Choose $m_i \in \zeta_i$ such that $d(u_{j_{i-1}},m_i),d(m_i,u_{j_{i}}) \ge \frac{w^p -1}{2}$.
Let $C_i$ be the connected component of $m_i$ within $S \cap \mathcal B_{\frac{w^p}{2}}(m_i)$.
Add, if necessary, some points of $\zeta_i$ to $C_i$ to get a maximally connected $C'_i$.
This $C'_i$ is a maximal chunk of sites with midpoint being $m_i$.
Any two $C'_i$'s are disjoint since each of them lies in a unique region enclosed by $P_{i-1}$ and $P_i$.
\end{proof}

\begin{lem}
For sufficiently large $w$,
a maximal level $(p+1)$ chunk $C$ with respect to $S$
admits a decomposition into $w+1$ or more maximal chunks of level $p$ with respect to $S$.
\label{lem:many_max_in_max}
\end{lem}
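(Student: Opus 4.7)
The plan is to apply Lemma~\ref{lem:max_chunks_on_path} to a path inside $C$ whose endpoints are forced to be sufficiently far apart in $\ell_\infty$-distance. By the definition of a maximal level-$(p+1)$ chunk, $C$ contains a path $\zeta = (u_1,\ldots,u_n) \subseteq C^\circ$ with $d(u_1,u_n) = w^{p+1}-w^p = (w-1)w^p$ through a midpoint $m$. Applying Lemma~\ref{lem:max_chunks_on_path} with $l = w-1$ already yields $w-1$ disjoint maximal level-$p$ chunks whose midpoints lie on $\zeta$; each of these chunks lies in $C$ because every ball $\mathcal{B}_{w^p/2}(m_i) \cap S$ sits in the connected $w^p/2$-neighborhood of $C^\circ$ in $S$, which property (iv) places inside $C$.

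To recover the remaining two chunks I would extend $\zeta$ in both directions within $C$. Since $C$ contains the connected component of $C^\circ$ in $\mathcal{B}_{w^p/2}(C^\circ)\cap S$, one can prepend sites $u_0,u_{-1},\dots$ at $u_1$ and append sites $u_{n+1},u_{n+2},\dots$ at $u_n$, all lying in $S \cap C$, so that the extended path $\tilde{\zeta}$ has endpoints at $\ell_\infty$-distance at least $(w+1)w^p - 1$. Choosing the extension in a coordinate direction that realizes the original $\ell_\infty$-distance between $u_1$ and $u_n$ (and using the freedom to choose $u_1,u_n$ at the extremes of $\zeta$) gives the necessary $w^p$-increment at each end for $w$ large enough. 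Now Lemma~\ref{lem:max_chunks_on_path} with $l = w+1$ produces $w+1$ disjoint maximal level-$p$ chunks along $\tilde{\zeta}$, all contained in $C$ by the same fat-neighborhood argument as above.

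The main obstacle is carrying out the extension cleanly. Two issues arise: first, the extended endpoints must genuinely add $w^p$ to the $\ell_\infty$-distance (rather than being absorbed by fluctuations of the path within the $w^p/2$-cushion), and second, the expanded midpoints used in Lemma~\ref{lem:max_chunks_on_path} must remain in the regime where property (iv) forces the resulting chunk into $C$. Both are quantitative matches of the $w^p$-level extension against the $w^{p+1}$-level diameter of $C$, and this is exactly what the hypothesis \emph{for sufficiently large $w$} is meant to absorb. Should the direct geometric extension fail in a pathological configuration of $S$, I would argue instead by contradiction: a decomposition of $C$ into at most $w$ maximal level-$p$ chunks would arrange those chunks in a nearly linear chain across $C$, producing within $S$ a pair of clusters separated by distance $\gg \alpha w^p$ whose anchor regions have width $\le w^p$, contradicting the scale-invariant form of the no-strings rule that every restriction of $E$ obeys.
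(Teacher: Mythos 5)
Your plan to extend $\zeta$ by $w^p$ at each end cannot work: by definition a level-$(p+1)$ chunk $C$ has diameter strictly less than $w^{p+1}$, whereas the extended path $\tilde\zeta$ would have diameter $(w+1)w^p-1 = w^{p+1}+w^p-1 > w^{p+1}$, so $\tilde\zeta$ necessarily leaves $C$. The level-$p$ chunks Lemma~\ref{lem:max_chunks_on_path} produces near the new endpoints then live outside $C$ and do not contribute to a decomposition of $C$, which is what the induction in Theorem~\ref{thm:separation_charged_defects} needs. The original $\zeta$ already spans almost the full diameter of $C$, so there is no room for a longer collinear path; the two extra chunks cannot come from making the path longer.

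The paper's proof gets them from \emph{branching}, not lengthening, and the fallback you gesture at with the no-strings rule is in fact the heart of one of the two cases, but it must be made quantitative. Let $M,N$ be the slabs of width $\eta w^p$ around the endpoints. If some boundary site of $C^\circ$ lies outside $M\cup N$, there is a site $s\in C^\circ$ far from both endpoints; the paper connects $u_1,m,u_n,s$ by a shortest network in $C^\circ$, and depending on whether $s$ or $m$ lies near the $u_1$--$u_n$ path, either finds a long transverse branch or splits the path into three well-separated pieces after excising balls around $s$ and $m$ — always harvesting at least $w+1$ chunks via Lemma~\ref{lem:max_chunks_on_path}. If instead all of $\partial C^\circ$ lies in $M\cup N$, the interior of $C^\circ$ between the slabs is a boundaryless string segment, and the no-strings rule forces some cross-sectional plane to have diameter $>\eta' w^p$; that fat cross-section supplies two transverse sites $v_1,v_2$, and again the four points $u_1,u_n,v_1,v_2$ yield the chunks. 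Your contradiction sketch stops at the observation that the configuration "should" violate no-strings; the actual proof needs to exhibit the concrete anchor regions and their separation, and this is done through the boundary/cross-section dichotomy rather than through a chain of level-$p$ chunks. In short, the missing ingredients are the case split on $\partial C^\circ$, the use of the midpoint condition (iii) to place a fat ball around $m$, and the Steiner-network argument to find four well-separated nodes inside $C$.
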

\begin{proof}
Recall that $S$ is the support of the Pauli operator $E$, any restriction of which obeys no-strings rule.
Define the \emph{boundary} of a subset $U$ of $S$ to be $\partial U = \mathcal B_1(U) \cap U^c \cap S$.
Then, any subset $U$ of sites with boundary enclosed in a two disjoint regions can be regarded as a string segment.

By the definition of the maximal chunk,
there exists a path $(u_1,\ldots,m,\ldots,u_n)$ in $C^\circ \subseteq C$ such that $d(u_1,u_n)=w^{p+1}-w^p$.
We assume that the $z$-coordinates of $u_1, u_n$ differ by $w^{p+1}-w^p$.
We will show that there are sufficiently long and separated paths in $C^\circ$,
to which we apply Lemma~\ref{lem:max_chunks_on_path} to find $w+1$ maximal chunks of level $p$.
They will lie in $\mathcal B_\frac{w^p}{2}(C^\circ)$, and hence in $C$.

Let $M$ ($N$) be the subset of $S$ consisted of sites
whose $z$-coordinates differ from that of $u_1$ ($u_n$) by at most $\eta w^p$.
First, suppose $\partial C^\circ$ is not contained in $M \cup N$.
Since $u_1 \in M$ and $u_n \in N$, there is a site $s \in C^\circ$
adjacent (of distance 1) to $\partial C^\circ$ such that $d(s,u_1),d(s,u_n) > \eta w^p$.
Furthermore, $d(s,m) \ge \frac{w^{p+1}-w^p}{2}-1$;
otherwise, $C^\circ$ contains a site in the boundary, which is a contradiction.

Consider the shortest network $\mathcal{N}$ of paths in $C^\circ$ connecting four sites $u_1,m,u_n,s$.
(The length of a network of paths is the number of sites in the union of the paths.)
Let $\zeta$ be the shortest path in $\mathcal{N}$ from $u_1$ to $u_n$.
If $s$ is not contained in $\mathcal B_{3w^p}(\zeta)$,
then $\zeta' \subseteq \mathcal{N}$ joining $s$ to a site on $\zeta$
has a subpath $\zeta'' \subseteq \zeta'$ of diameter at least $2w^p$
such that $\zeta''$ is separated from $\zeta$ by $w^p$.
Applying Lemma~\ref{lem:max_chunks_on_path} to $\zeta$ and $\zeta''$,
we find at least $w+1$ maximal chunk of level $p$.
If $m$ is not contained in $\mathcal B_{3w^p}(\zeta)$,
a similar argument reveals at least $w+1$ maximal chunk of level $p$.

Suppose both $s$ and $m$ are contained in $\mathcal B_{3w^p}(\zeta)$.
Observe that $\zeta \setminus ( \mathcal B_{4w^p}(s) \cup \mathcal B_{4w^p}(m) )$
consists of three connected components $\zeta_1,\zeta_2,\zeta_3$,
two of which have diameter $\ge \frac{w^{p+1}-w^p}{2}-8w^p$ and the other has diameter $\ge (\eta - 4)w^p$.
Two distinct $\mathcal B_{\frac{w^p}{2}}(\zeta_i)$ and $\mathcal B_{\frac{w^p}{2}}(\zeta_j)$ ($i,j=1,2,3$)
do not overlap because of the minimality of $\zeta$. Applying Lemma~\ref{lem:max_chunks_on_path},
we find $w + \eta - 21$ maximal chunks of level $p$.
Choosing $\eta > 21$, we get the desired result.

Next, suppose $\partial C^\circ$ is contained in $M \cup N$.
Let $s_M, s_N \in C^\circ \setminus (M\cup N)$ be sites adjacent to $M$ and $N$, respectively.
The separation between $M$ and $N$ is $(w-1-2\eta)w^p$.
If it is greater than $\eta' \alpha w^p$, there must be a $z$-plane $P$ that contains $s_M$ or $s_N$
such that $P \cap C^\circ$ has diameter $> \eta' w^p$; Otherwise, the no-strings rule is violated.
Let $v_1, v_2 \in P \cap C^\circ$ be sites separated by $\eta' w^p$.
The four sites, $u_1,u_n,v_1,v_2$ are sufficiently separated and connected by some paths in $C^\circ$.
Arguing as before, we find $(w -1) + \eta'-16$ maximal chunks of level $p$.
The choice of $\eta' > 17$ and $w > 1+2\eta+\eta'\alpha$ proves the lemma.
\end{proof}

\begin{theorem}
Let $E$ be a Pauli operator creating $S$, a neutral cluster of defects
containing a charged cluster $S' \subseteq S$ of diameter $r$
such that there are no other defects within distance $R$ from $S'$.
If $r+2R < L_{tqo}$, then the weight of $E$ must be $\ge c R^\gamma$ for some constant $\gamma > 1$ and $c$.
\label{thm:separation_charged_defects}
\end{theorem}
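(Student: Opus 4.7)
The plan is to show that the support $S_E \subseteq \Lambda$ of $E$ contains a distance-$1$ connected path of $\ell_\infty$-diameter $\Omega(R)$, and then to decompose this path into a hierarchy of maximal chunks via Lemmas~\ref{lem:max_chunks_on_path} and~\ref{lem:many_max_in_max}, producing $\Omega(R^{\log_w(w+1)})$ pairwise disjoint level-$0$ chunks, each of which contributes at least one site to $S_E$. Throughout, I assume the interaction range has been reduced to $1$ by a preliminary coarse-graining, so that every stabilizer generator is supported on an elementary cube of diameter $1$.

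To produce the long path, partition $S_E$ into its distance-$1$ connected components. Every stabilizer generator overlaps at most one component (its support has diameter $1$ while distinct components are separated by distance $\ge 2$), so $E = \prod_{\mathcal{C}} E|_{\mathcal{C}}$ (up to sign) with mutually commuting factors, and the defect set splits as $S = \bigsqcup_{\mathcal{C}} S_{\mathcal{C}}$ with $S_{\mathcal{C}} \subseteq \mathcal{B}_1(\mathcal{C})$. Call a component $\mathcal{A}$-\emph{type} if it meets $\mathcal{B}_1(S')$ and $\mathcal{B}$-\emph{type} otherwise; $\mathcal{B}$-type components sit at distance $\ge 2$ from $S'$, so their defect sets are disjoint from $S'$. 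Writing $D_\mathcal{A}$ and $D_\mathcal{B}$ for the defects from the two groups, one has $D_\mathcal{A} \sqcup D_\mathcal{B} = S$ and $D_\mathcal{B} \cap S' = \emptyset$, hence $S' \subseteq D_\mathcal{A}$. If every $\mathcal{A}$-type component were contained in $\mathcal{B}_{R-2}(S')$, then $D_\mathcal{A} \subseteq S \cap \mathcal{B}_{R-1}(S') = S'$, forcing $D_\mathcal{A} = S'$; but $E_\mathcal{A} := \prod_{\mathcal{C} \in \mathcal{A}} E|_{\mathcal{C}}$ would then be supported in a cube of side $r + 2(R-2) < r + 2R < \ltqo$ and would create exactly $S'$ from the vacuum, contradicting the hypothesis that $S'$ is charged. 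Therefore some $\mathcal{A}$-type component $\mathcal{C}^*$ contains a site $u_1$ with $d(u_1, S') \le 1$ together with a site $u_n$ with $d(u_n, S') > R - 2$; being distance-$1$-connected, $\mathcal{C}^*$ contains a lattice path $\zeta \subseteq S_E$ joining $u_1$ to $u_n$, and the triangle inequality gives $d(u_1, u_n) \ge R - 3$.

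Now fix $w$ large enough for Lemma~\ref{lem:many_max_in_max} and set $p_{\max} := \lfloor \log_w(R - 2) \rfloor$. Apply Lemma~\ref{lem:max_chunks_on_path} to a sub-path of $\zeta$ of length exactly $w^{p_{\max}} - 1$ to obtain a maximal level-$p_{\max}$ chunk $C_{p_{\max}}$ with respect to $S_E$. Apply Lemma~\ref{lem:many_max_in_max} inductively: $C_{p_{\max}}$ decomposes into $\ge w+1$ maximal chunks of level $p_{\max} - 1$, each of those into $\ge w+1$ maximal chunks of level $p_{\max} - 2$, and so on, yielding at the bottom a family of at least $(w+1)^{p_{\max}}$ pairwise disjoint maximal chunks of level $0$ inside $C_{p_{\max}}$. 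Each level-$0$ chunk has diameter strictly less than $w^0 = 1$, so it consists of a single site of $S_E$, and the chunks are disjoint; therefore
\[
\mathrm{weight}(E) \;=\; |S_E| \;\ge\; (w+1)^{p_{\max}} \;\ge\; c\, R^{\log_w(w+1)},
\]
which proves the theorem with $\gamma = \log_w(w+1) > 1$.

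The main obstacle is the structural step producing $\mathcal{C}^*$: the argument rests on the fact that no $\mathcal{B}$-type component can contribute defects to $S'$ (because such components are at distance $\ge 2$ while generators have diameter $1$), so $S'$ must be created by $\mathcal{A}$-type components alone; the hypothesis $r + 2R < \ltqo$ is then exactly what is needed for a hypothetical confinement of $\mathcal{A}$-components near $S'$ to contradict the charge of $S'$. A minor secondary check is that a sub-path satisfying the hypotheses of Lemma~\ref{lem:max_chunks_on_path} --- a straight-line path of length $w^{p_{\max}} - 1$ --- can be extracted from $\zeta$; this is automatic since $d(u_1, u_n) \ge R - 3 \ge w^{p_{\max}} - 1$ by the choice of $p_{\max}$.
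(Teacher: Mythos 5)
Your proof is correct and follows essentially the same route as the paper: extract a path of diameter $\Omega(R)$ inside the support of $E$ (which must exist, or else $S'$ could be created locally and would be neutral), plant a maximal chunk at level $\approx\log_w R$ via Lemma~\ref{lem:max_chunks_on_path}, and iterate Lemma~\ref{lem:many_max_in_max} down to level $0$ to count $(w+1)^{\Omega(\log_w R)}$ disjoint sites. The $\mathcal{A}$/$\mathcal{B}$-component decomposition you use to justify the existence of the long path is a careful unpacking of the paper's one-sentence assertion of this fact.
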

\begin{proof}
The support of the minimal Pauli operator $E$ in Theorem~\ref{thm:separation_charged_defects}
must admit a path connecting $S'$ and $S \setminus S'$.
Otherwise, $S'$ can be regarded as being created locally,
and our topological order condition demands the cluster be neutral.
Since the path has length $\ge R$,
Lemma~\ref{lem:max_chunks_on_path} says we have a maximal chunk of level $p$
where $p$ is such that $w^p \le R < w^{p+1}$.
Lemma~\ref{lem:many_max_in_max} implies any maximal chunk of level $p$ must contain
at least $(w+1)^p$ sites.
This concludes the proof with $\gamma=\frac{\log(w+1)}{\log w}>1$.
\end{proof}
A similar argument proves the lower bound $d=\Omega(L^\gamma)$ on the code distance $d$ of the cubic code
since the minimal logical operator must contain a path of length $L$.

\chapter{Renormalization group decoder and error threshold theorem}
\label{chap:RG-decoder}

Any error correcting scheme would be comprised of a chosen code space,
an encoding procedure, and a decoding procedure.
We have studied a way to choose a code space via additive/stabilizer code formalism.
Our focus has been the situation where the code space is realized as a ground space of a local Hamiltonian.
The encoding is a process in which one prepares a state
that is to be transferred or stored,
and then one embeds the state into the designed code space.
For a concatenated code the encoding would be hierarchical
resembling the very way the code is constructed.
Interestingly, a ground state of toric code model
can also be prepared in a similarly hierarchical way~\cite{AguadoVidal2007Entanglement}.
In case of the cubic code, for example, the encoding can be done
by inverting the real-space renormalization group procedure
presented in Section~\ref{sec:real-space-RG-CubicCode}.

The decoder of a quantum code
restores a damaged state into the code space.
In contrast to its name, the decoder
should not reveal any information that is encoded.
Rather, it prepares the state appropriate for next information processing step
which assumes that the state is in the code space;
it detects errors and suggests an operator that would undo the errors.
The performance of the decoder is measured
by how closely the damaged state is restored to the original encoded state.
In this chapter, we explain a decoding algorithm,
called \emph{renormalization group decoder},
that is applicable for a family of topological codes
including the cubic code.
A very similar idea appears in Harrington's thesis~\cite{Harrington2004thesis}.
A decoder for 2D toric code with a similar name was proposed by 
Duclos-Cianci and Poulin~\cite{Duclos-CianciPoulin2009Fast}.
The two decoders are conceptually similar, and the running times are the same up to a multiplicative constant.
Our decoder is however advantageous 
for its simplicity and applicability.
In particular, our decoder is the only decoder so far
that has a positive error threshold under stochastic error
when used with the cubic code.
In fact, our decoder provides a universal positive error threshold 
for all topological codes in a given number of spatial dimensions,
as we prove in Section~\ref{sec:threshold}.

Formally, if we restrict ourselves to local additive codes,
the decoder is an association of a Pauli operator $P$ to any possible syndrome $S$
such that the Pauli operator $P$ transforms $S$ to the empty syndrome.
Recall that the syndrome measurement reveals locations of defects (flipped stabilizer generators)
created by an unknown error. The renormalization group (RG) decoder attempts to annihilate the defects comprising
the syndrome $S$ by dividing them into disjoint connected clusters $S=C_1\cup \ldots \cup C_m$ 
and then trying to annihilate each cluster $C_a$  individually.
More specifically, the decoder checks whether $C_a$ can be annihilated by a Pauli operator  $P_a$
supported  on a sufficiently small spatial region  $b(C_a)$ enclosing $C_a$.
If such a local annihilation operator $P_a$ exists, the decoder  updates the syndrome by erasing
all the defects comprising $C_a$, records the operator $P_a$, and moves on to the next cluster.
If $C_a$ cannot be annihilated, the decoder skips it.
The annihilation operator $P_a$ is not unique.
However, if the enclosing region $b(C_a)$ is small enough to ensure that no logical operator
can be supported on $b(C_a)$, all annihilation operators $P_a$ must be equivalent modulo stabilizers
and the choice of $P_a$ does not matter.

After all clusters $C_a$ have been examined, the decoder is left with a new configuration of defects
$S'$, which is typically smaller than the original one.
If no defects are left, i.e., $S'=\emptyset$, the decoder stops and returns the product of all recorded Pauli operators $P_a$.
If $S'\ne \emptyset$,  the decoder applies a scale transformation
increasing the unit of length by some constant factor and repeats all the above steps starting from the syndrome $S'$.
The scale transformation potentially merges several unerased clusters $C_a$ into a single connected cluster
whereby giving the decoder one more attempt to annihilate them.

The full decoding algorithm is the iteration of partitioning the defects into the connected clusters and calculating the annihilation operators. It declares failure and aborts if the recorded operator cannot annihilate all the defects before the rescaled unit length is comparable to the lattice size.

A detailed  implementation of the RG decoder must  be tailored to a specific lattice geometry
and  a stabilizer code under consideration.
It must include a precise definition of  the connected clusters of defects $C_a$
and the enclosing regions $b(C_a)$. It must also include an algorithm for choosing the annihilation
operators $P_a$, a schedule for increasing the unit of length,
and clearly stated conditions under which the decoder aborts.
In the rest of this chapter we describe an efficient implementation of the RG decoder
for arbitrary stabilizer codes satisfying topological order conditions defined in the previous chapter.
The only part of this implementation specialized for the 3D cubic code 
is the ``broom algorithm'' of Section~\ref{subs:broom}.
As we have noted in Remark~\ref{rem:TransInv-strongTQO},
the our topological order conditions are satisfied by every translationally invariant exact code.
It turns out that the broom algorithm is also applicable for every translationally invariant code.

\section{Assumptions and conventions}
\label{sec:rg-decoder-assumptions}

Let $\Lambda$ be the regular 3D cubic lattice of linear size $L$ with periodic boundary conditions
along all coordinates $x,y,z$.
We shall label sites of $\Lambda$ by triples of integers $(i,j,k)$ defined modulo $L$
and measure the distance between sites using the $\ell_\infty$-metric. In other words,
the distance $d(u,v)$ between a pair of sites $u$ and $v$ is the smallest
integer $r$ such that $u$ and $v$ can be enclosed by a cubic box with dimensions
$r\times r\times r$. For example, $d(u,v)=1$ whenever $u$ and $v$ belong to the same
edge, plaquette, or elementary cube of the lattice.
Each site of $\Lambda$ represents one or several physical qubits (two qubits for the 3D Cubic Code).
Each elementary cube $c$ represents a spatial location of one or several  stabilizer generators
For example, there are two generators for the 3D Cubic Code.
A generator located at cube $c$ may act only on
qubits located at vertices of $c$. We shall label each elementary cube by
coordinates of its center, the triple of half-integers $(i,j,k)$ defined modulo $L$.
The distance $d(c,c')$ between a pair of cubes $c$ and $c'$ is the
distance between their centers.
For example, $d(c,c')=1$ whenever $c$ and $c'$ share a vertex, an edge, or a plaquette.

A {\bf defect} is a stabilizer generator whose eigenvalue has been flipped as a result of the error.
We shall use a term {\bf cluster of defects}, or simply {\bf cluster}
for any set of defects.  Define the {\em diameter} of a cluster $d(C)$ as
the maximum distance $d(c,c')$ where $c,c'\in C$.
Here and below  the distance between defects is defined as the distance between the cubes
occupied by these defects.
Given two non-empty clusters $C$ and $C'$,
define a distance $d(C,C')$ as the minimum distance $d(c,c')$ where $c\in C$ and $c'\in C'$.
Given an integer $r$, we shall say that a cluster $C$ is {\bf connected at scale $r$},
or simply {\bf $r$-connected}, if $C$
cannot be partitioned into two proper subsets $C=C'\cup C''$ such that $d(C',C'')>r$.
A maximal $r$-connected subset of a cluster $C$  is called a {\bf $r$-connected component} of $C$.
The {\bf minimal enclosing box} $b(C)$ of a cluster $C$ is the smallest
rectangular box $B$ enclosing all defects of $C$ 
such that all vertices of $B$ are dual sites of $\Lambda$.  
Note that the minimal enclosing box $b(C)$ is unique as long as $d(C)<L/2$;
if a cluster $C$ has diameter $L/2$, 
one may have two boxes with the same dimensions
enclosing $C$ that `wrap' around the lattice in two different ways.

Let $\cal G$ be the abelian group
generated by the stabilizer generators. Elements of $\cal G$ are called stabilizers.
Let $S(P)$ be the {\bf syndrome} of a Pauli operator $P$, that is,
the set of all stabilizer generators anticommuting with $P$.
The syndrome can be viewed as a cluster of defects.
\vspace{3mm}
\begin{center}
\parbox{.8\textwidth}{\em
We assume that our topological code obeys TQO1 and TQO2 of 
Definitions~\ref{defn:TQO1},\ref{defn:TQO2} throughout the chapter,
but not the no-strings rule of Definition~\ref{defn:no-strings}.
}
\end{center}
\vspace{3mm}
The 3D cubic code satisfies both of TQO1 and TQO2 with
$\ltqo = \half L$, since it is exact.
In order to avoid unnecessary complications due to boundaries,
we always assume that $\ltqo \le \half L$.
Below we consider only topological stabilizer codes.
Continued from the previous chapter,
a cluster of defects $C$ is called {\bf neutral} 
if it can be created from the vacuum by a Pauli operator $P$ 
supported on a cube of linear size $\ltqo$. 
Otherwise, the cluster is said to be {\bf charged}.
For example, the 2D toric code~\cite{Kitaev2003Fault-tolerant} has two types of defects:
magnetic charges (flipped plaquette operators) and electric charges (flipped star operators).
In this case,
a cluster of defects $C$ is neutral if and only if
$C$ contains even number of magnetic charges and even number of electric charges.
It follows from TQO2
that any neutral cluster of defects $C$ can be annihilated 
by a Pauli operator supported on the $1$-neighborhood
of the minimum enclosing box $b(C)$.

\section{Renormalization group decoder}
\label{subs:rgdecoder}

We are now ready to define our RG decoder precisely.
Recall that $d(C)$ is the diameter of a cluster $C$, and $\ltqo \le \half L$ by convention.
\begin{center}
\fbox{
\parbox{0.95\textwidth}{{\bf TestNeutral} \\
{\bf Input} $S$ : a set of defects, {\bf Output} $P$ : a Pauli operator.\\
1. Compute the minimal enclosing box $B$ of $S$.\\
2. {\bf if} $d(B) > \ltqo$, {\bf then} {\bf return} $I$.\\
3. Try to compute a Pauli $P$ supported on the 1-neighborhood of $B$ such that $S(P) = S$.\\
4. {\bf if} a consistent $P$ is found {\bf then} {\bf return} $P$ {\bf else} {\bf return} $I$.
}}
\end{center}
TQO2 implies that
TestNeutral successfully computes the correcting Pauli operator for any neutral cluster.
Step~1 is easy as we discuss in the end of Section~\ref{subs:clusters}.
The specification of Step~3 depends on the code,
but it always has an efficient implementation
using the standard stabilizer formalism~\cite{Gottesman1998Theory}.
In general, the condition $S(P)=S$ can be described by
a system of $O(V)$ linear equations over $O(V)$ binary variables parameterizing $P$,
where $V$ is the volume of $B$.
The running time is then $O(V^3)$ by the Gauss elimination.
In the special case of the 3D cubic code or more generally all translationally invariant codes,
there is a much more efficient algorithm running in time $O(V)$
which we describe in Section~\ref{subs:broom}.

\begin{figure*}[p]
\begin{minipage}{.48\textwidth}
\centering
\includegraphics[width=\textwidth,height=\textwidth]{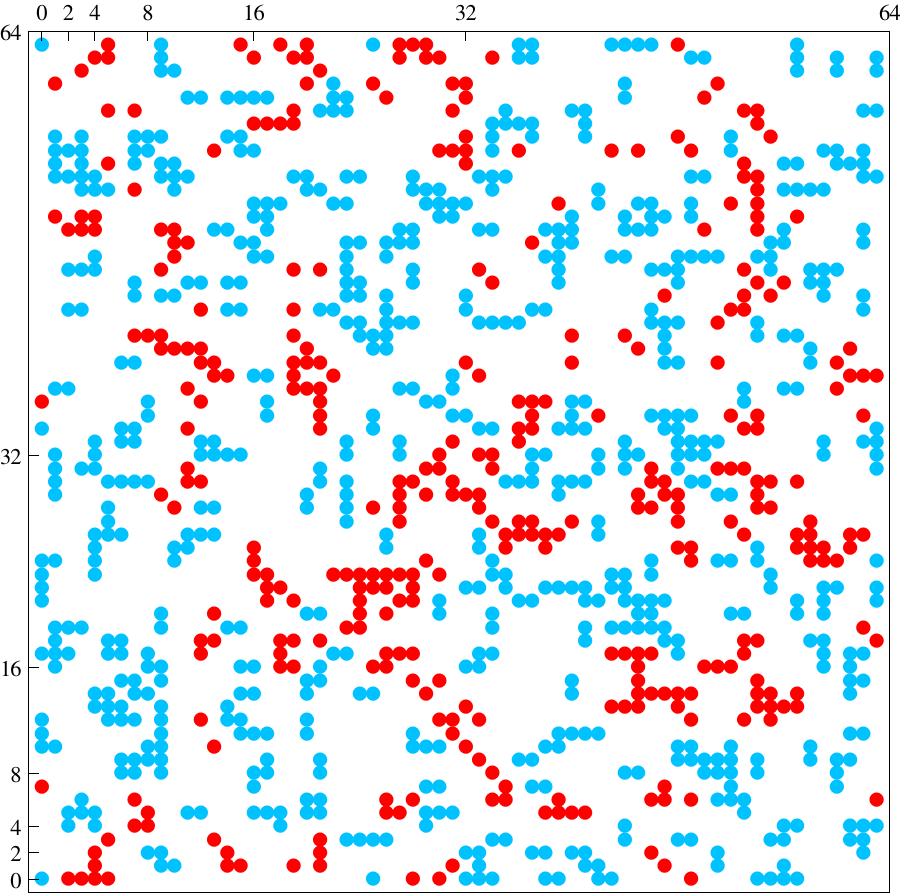}
Level 0: Unit length 1
\includegraphics[width=\textwidth,height=\textwidth]{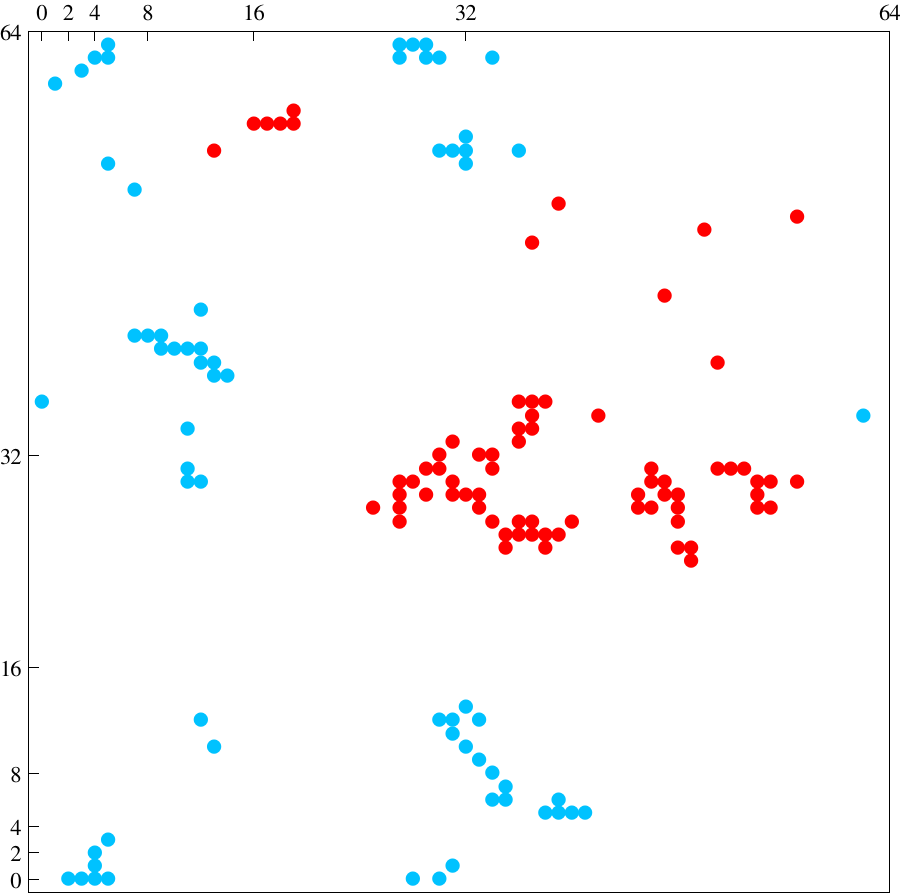}
Level 2: Unit length 4
\end{minipage}
\begin{minipage}{.48\textwidth}
\centering
\includegraphics[width=\textwidth,height=\textwidth]{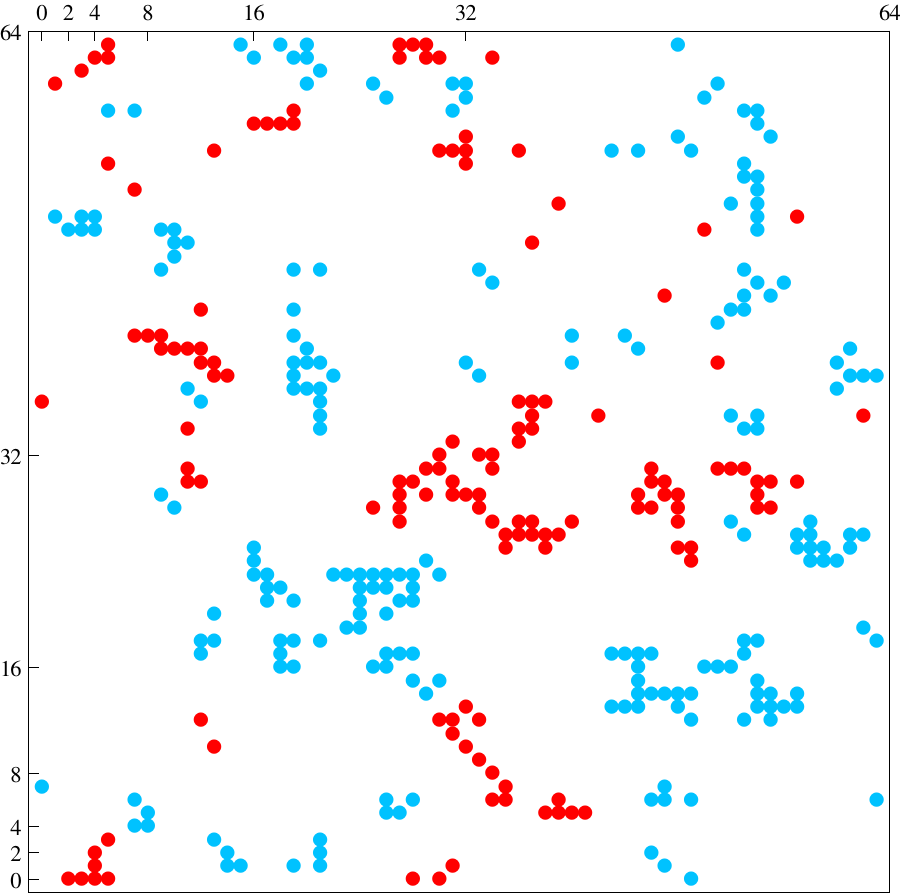}
Level 1: Unit length 2
\includegraphics[width=\textwidth,height=\textwidth]{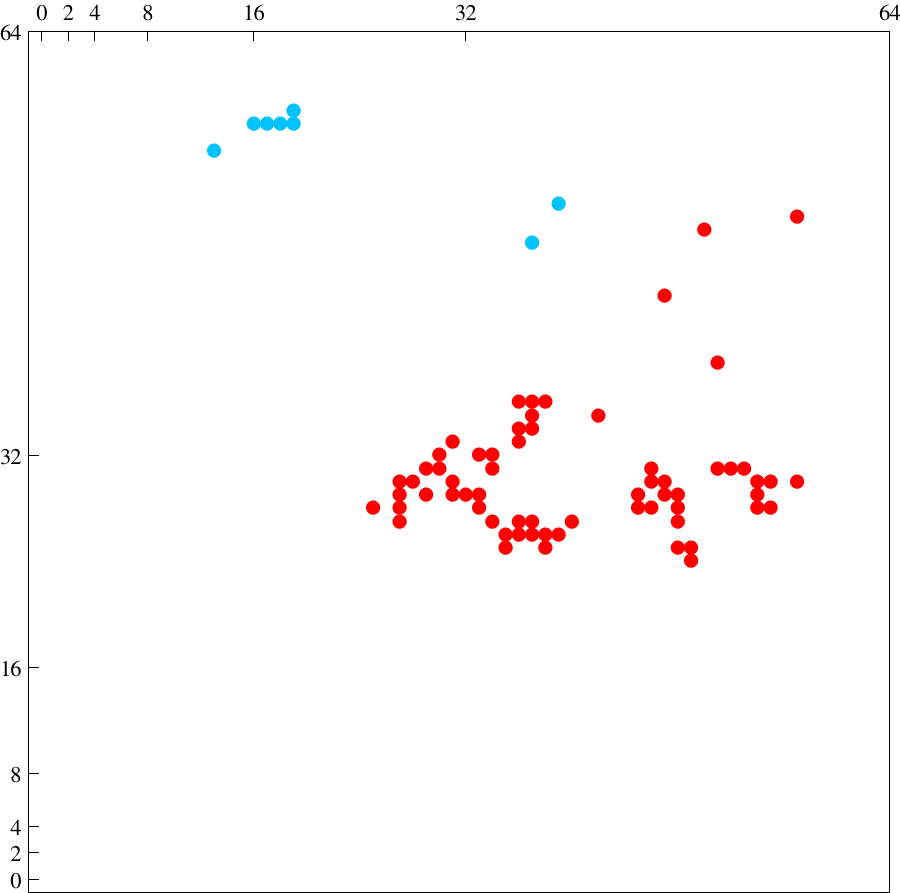}
Level 3: Unit length 8
\end{minipage}
\caption{%
Configuration of defects at consecutive levels of renormalization group decoder.
Bit-flip errors are generated stochastically for an illustrative purpose,
over $64 \times 64$ lattice with a uniform error rate $8\%$ per qubit.
The dots are flipped stabilizer generators (defects) of the 2D toric code.
At each level $p$, RG decoder decomposes the set of defects into connected clusters ---
connected components in a graph with vertices representing defects and edges connecting
pairs of defects separated by distance $2^p$ or less (not shown). 
Each neutral cluster containing even number of defects (blue)
is annihilated by applying a local Pauli operator supported 
in the smallest rectangular box enclosing the cluster.
Charged clusters containing odd number of defects (red) cannot be annihilated locally. 
All defects in the red clusters are  passed to the next RG level $p+1$.
In this example the red clusters from level 3 are annihilated at level 4 (not shown)
and RG decoder successfully annihilated all defects returning the
corrupted state to the originally encoded state. 
}
\label{fig:snapshot-RGdecoder}
\end{figure*}

Let $p_{M}$ be the largest integer such that $2^{p_M} < \ltqo$.
For any integer $0\le p\le p_M$, we define
the {\bf level-$p$ error correction}:
\begin{center}
\fbox{
\parbox{0.95\textwidth}{{\bf EC($p$)} \\
{\bf Input} $S$ : a syndrome, {\bf Output} $P$ : a Pauli operator.\\
1. Partition $S$ into $2^p$-connected components: $S=C_1\cup \ldots \cup C_m$.\\
2. For each component, compute $P_a = $ TestNeutral$(C_a)$.\\
3. {\bf return} the product $P_1 P_2 \cdots P_m$.
}}
\end{center}
The overall running time of EC($p$)  is polynomial in the number of qubits $N$.
Step~1 can be done, for example, by examining the distance between all pairs of defects,
forming a graph whose edges connect pairs of defects separated by distance $\le 2^p$, and finding
connected components of this graph.
A more efficient algorithm with the running time $O(N)$
is described in Section~\ref{subs:clusters}.
Since $V=O(N)$ and $m=O(N)$,
we see that the worst-case running time of EC($p$) is $O(N^2)$.
For instance, one can consider nested boxes,
near the faces of which many defects lie.
However, clusters are created from the vacuum
with some probability which we expect to be smaller than, say, $\half$.
So the number of clusters that have overlapping minimal enclosing box
appears with exponentially small probability.
On average, the running time of EC($p$) will be $O(N)$.
The full RG decoding algorithm is as follows.
\begin{center}
\fbox{
\parbox{0.95\textwidth}{{\bf RG Decoder} \\
{\bf Input} $S$ : the syndrome, {\bf Output} $P_{ec}$ : a Pauli operator.\\
1. Set $P_{ec}=I$.\\
2. {\bf for} $p=0$ {\bf to}  $p_M$ {\bf do}\\
\makebox[2em]{}  Let $Q=$EC($p$)($S$).\\
\makebox[2em]{}  Update $P_{ec} \leftarrow P_{ec} \cdot Q$ and $S\leftarrow S\oplus S(Q)$.\\
\makebox[1em]{} {\bf end for} \\
3. {\bf if} $S=0$ {\bf then} return $P_{ec}$ {\bf else} declare failure.
}}
\end{center}
Here the notation $S\oplus S(Q)$ stands for the symmetric difference of the sets
$S$ and $S(Q)$ or addition modulo two, if the syndromes are represented by binary strings.
The discussion above implies that the RG decoder has running time $O(N^2\log{N})$ in the worst case,
and $O(N\log{N})$ in the case of sparse syndromes.
An example of RG decoder in action is illustrated in Figure~\ref{fig:snapshot-RGdecoder}.

Being a physically realizable operation, 
any decoder should be written as a trace preserving completely positive map on the set of density matrices.
By measuring a syndrome $S$, the decoder projects the state onto a subspace $\Pi_S$ of the syndrome $S$,
and then applies a correcting operator $P_{ec}(S)$:
\[
\Phi_{ec}(\rho)=\sum_S P_{ec}(S)\Pi_S \, \rho \Pi_S P_{ec}(S)^\dag
\]
where the sum is over all possible syndromes.
Thus, to conform with this equation 
our decoder should return some operator that is consistent with the measured syndrome,
rather than declaring a ``failure.'' 
It is however no better than initializing the memory with an arbitrary state.

\section{Cluster decomposition}
\label{subs:clusters}
\newcommand{\ls}{r} 

Given a length scale $\ls$, the cluster decomposition of defects
is to partition the defects into maximally connected subsets
(connected components) of the syndrome at scale $\ls$.
Naively, the task to compute the decomposition of all the defects into clusters
will take time $O(m^2)$ where $m$ is the total number of defects.
The density of defects will typically be constant irrespective of the system size,
and the computation time for decomposition will be $O(N^2)$,
where $N$ is the volume of the system.
However, by exploiting the geometry of simple cubic lattice,
we can do it in time $O(N)$.
This is the optimal scaling
since we have to sweep through the whole system anyway
to identify the position of defects.

If $\ls = 1$, the problem is to label the connected components of binary array \cite{AsanoTanaka2010,KiranEtAl2011Labeling}.
Given a defect $u_0$, we can compute the connected component
containing $u_0$ in time $O(m)$ where $m$ is the number of defects in the component.
One prepares an empty queue (first-in-first-out data structure),
and puts $u_0$ into it.
The subsequent computing is as follows:
(i) Pop out the first element $u$ from the queue,
and of the neighborhood put the \emph{unlabeled} defects into the queue and label them.
(ii) Repeat until the queue becomes empty.
Every defect in a connected component $j$ is stored in the queue only once.
Hence, this process computes the component $j$ of a given defect
in time proportional to the number $m_j$ of defects in $j$.
One examines the whole system in some order
and finds the connected component whenever there is an unlabeled defect.
The total computation time is proportional to $N + O(1)\sum_j m_j = O(N)$,
since the connected components are disjoint.

For $\ls > 1$, the algorithm begins by dividing the whole lattice
into boxes of linear size $\ls$ or smaller.
The defects in a box certainly belong to a single connected component
(recall that we use the $\ell_\infty$ metric).
The defects in the boxes $B, B'$ belong to the same component
if and only if there is a pair $u \in B$, $v \in B'$ of defects
such that $d(u,v) \le \ls$.
In other words, we evaluate the binary function
\[
 \delta(B, B') =
\begin{cases}
1 & \text{if there are $u \in B, v \in B'$ such that } d(u,v) \le \ls ,\\
0 & \text{otherwise}
\end{cases}
\]
for each neighbor $B'$ of $B$;
if $B$ does not meet $B'$,
we know that $\delta(B,B') = 0$.

Given the table of $\delta$,
we can finish computing the decomposition
in time $O((L/r)^D)$ as in the $\ls = 1$ case.
We show that the computation of $\delta(B,B')$ can be done
in time $O(r^D)$ where $D$ is the dimension of the lattice.
Then, the total time to compute the table of $\delta$ for all adjacent boxes
will be $O(r^D (L/r)^D )=O(N)$.
Let $B$ and $B'$ be adjacent.
For clarity of presentation, we restrict to $D=2$.
Suppose $B$ and $B'$ meet along an edge parallel to $x$-axis.
Since any difference $|x-x'|$ of $x$-coordinates
of the defects in $B\cup B'$ is at most $\ls$,
we only need to compare $y$-coordinates.
That is, the problem is reduced to one dimension.
It suffices to pick two defects from $B$ and $B'$, respectively,
that are the closest to the $x$-axis.
If the $y$-coordinates differ at most by $\ls$,
then $\delta(B,B') = 1$;
otherwise, $\delta(B,B')=0$.

Suppose $B$ and $B'$ meet at a vertex.
Without loss of generality, we assume $B$ is in the third quadrant,
and $B'$ is in the first quadrant.
Define a binary function $\delta'$ on $B'$ as
\[
 \delta'(i,j)=
\begin{cases}
 1 & \text{if there is a defect $(x,y) \in B'$ where $ x \le i$ and $ y \le j$}, \\
 0 & \text{otherwise}.
\end{cases}
\]
The function table of $\delta'$ is computed in time $O(\ls^2)$.
It is important to note that $\delta'(i,j) = 1$ implies $\delta'(i+1,j) = \delta'(i,j+1)=1$.
One starts from the origin and sets $\delta'(\half,\half)=1$
if there is a defect at $(\half,\half)$; otherwise $\delta'(\half,\half)=0$.
Here, $(\half,\half)$ means the elementary square in the first quadrant
that is the closest to the origin.
Proceeding by a lexicographic order of the coordinates,
one sets $\delta'(i,j) = 1$
if $\delta'(i-1,j)=1$, or $\delta'(i,j-1)=1$, or there is a defect at $(i,j)$;
otherwise $\delta'(i,j)=0$.
It is readily checked that this procedure correctly computes $\delta'$.
Equipped with this $\delta'$ table,
we can immediately test for each defect in $B$
whether there is a defect in $B'$ within distance $\ls$.
Thus, we have computed $\delta(B,B')$ in time $O(r^2)+O(m)$
where $m$ is the number of defects in $B$,
which is at most $O(r^2)$.
The computation of $\delta$ in higher dimensions is similar.

The computation of the minimal enclosing box for each cluster is also efficient.
Given the coordinates of the $m$ points in the cluster,
we read out, say, $x$-coordinates $x_1,\ldots,x_m$.
The minimal enclosing interval $B_x$ of $x_1,\ldots,x_m$ under periodic boundary conditions,
is the complement of the longest interval between consecutive points $x_i$ and $x_{i+1}$
which can be computed in time $O(m)$.
$B_x$ is unambiguous if the diameter of the cluster is smaller than $L/2$.
The minimal enclosing box is the product set $B_x \times B_y \times B_z$,
whose vertices are computed in time $O(m)$.

\section{Gr\"obner basis and broom algorithm}
\label{subs:broom}

Now we describe an efficient algorithm for the 3D cubic code
that tests whether a cluster is neutral. If the test is positive,
the algorithm also returns a
Pauli operator $E$ that annihilates the cluster.

A crucial property of the 3D cubic code is
that it is translationally invariant;
it is described by a few Laurent polynomials 
over the variables that represent translations.
See Chapter~\ref{chap:alg-theory}.
The polynomials form a matrix $\sigma$ satisfying $\sigma^\dagger \lambda \sigma = 0$,
where $\dagger$ is transposition followed by entry-wise antipode map,
$x \mapsto x^{-1}$, etc., and $\lambda$ is an alternating full rank matrix.
Since we are working with qubits, the alternating matrix is actually symmetric.
More important than $\sigma$ is the excitation map $\epsilon = \sigma^\dagger \lambda$.
A Pauli operator described by a column matrix $p$ produces a syndrome described by $\epsilon p$.
Thus, the neutrality of a cluster $c$ is equivalent to the existence of $p$ of finitely many terms
such that $c = \epsilon p$.
That is, $c$ is neutral if and only if $c \in \im \epsilon$;
the neutrality test is really a submodule membership problem.
Gr\"obner basis provides an efficient algorithmic answer to the membership problem:
Compute a Gr\"obner basis $B$ for the module $\im \epsilon$.
It can be done by, for example, Buchberger algorithm applied to columns of $\epsilon$~\cite{PauerUnterkircher1999}.
The Gr\"obner basis is computed only once for a given code.
Then, the neutrality test is straightforward:
\begin{itemize}
\item[(1)] Express a cluster as a column matrix $e$ of Laurent polynomials.
This step takes running time $O(V)$ where $V$ is the volume of the cluster.
\item[(2)] Run a standard division algorithm with respect to $B$.
It generates an explicit expression
\[
 e = \sum_i c_i b_i + r
\]
where $b_i \in B$, and $r$ is a unique remainder that cannot be further reduced by $B$.
During the division the degree does not increase.
Therefore, the running time of the division is $O(V)$.
\item[(3)] If $r = 0$, then the cluster is neutral, and $c_i$ give the annihilating operator for the cluster.
If $r \neq 0$, then the cluster is charged.
\end{itemize}

\begin{figure}
\centering
\begin{minipage}{.4\textwidth}
\includegraphics[width=\textwidth]{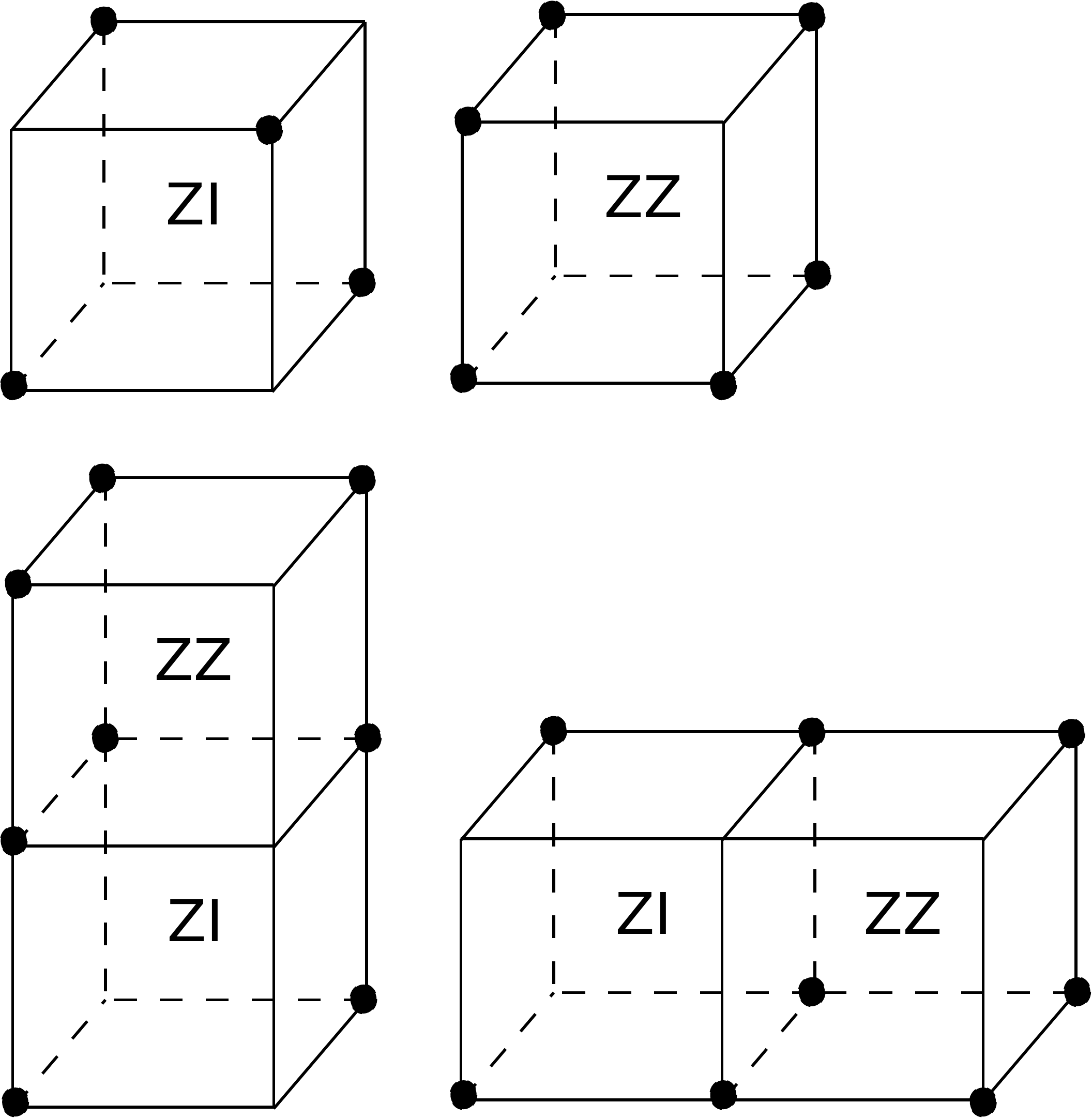}
\end{minipage}
\begin{minipage}{.2\textwidth}
{\drawgenerator{xz}{xyz}{x}{xy}{y}{1}{yz}{z}}
\end{minipage}
\caption{Elementary syndromes created by $Z$ errors. The vertices which are on the dual lattice, represent the defects created by the error at the center. The elementary syndrome by $ZI$ is used to push the defects to the bottom and to the left. The syndromes by errors of weight three is used to push the defects to the bottom-left corner.
The cube on the right specifies the coordinate system.
}
\label{fig:elementary-syndrome}
\end{figure}

The 3D cubic code is simple because it is Calderbank-Shor-Steane type code;
$X$- and $Z$-type errors can be treated separately.
Since there is one $X$-type stabilizer generator,
the syndrome caused by $Z$ errors is expressed by one Laurent polynomial,
which we call a \emph{syndrome polynomial}.
And neutral syndromes are described by an ideal (submodule) $I=(xyz+xy+yz+zx,~  xyz+x+y+z)$.%
\footnote{
The generators of $I$ are different from those presented in Chapter~\ref{chap:cubic-code},
but they are related by redefinition of lattice coordinate system.
The antipode map applied to $I$ yields $(1+x+y+z,~1+xy+yz+zx)$.}
For simplicity, suppose that the syndrome polynomial is of nonnegative exponents.
A Gr\"obner basis%
\footnote{%
The basis presented here is not the \emph{reduced} Gr\"obner basis,
by which we mean a basis where no term is divisible by a leading term of other elements in the basis.
The presented basis is actually what is used in the numerical simulation of the cubic code
in Chapter~\ref{chap:q-mem}.
It also matches with Figure~\ref{fig:elementary-syndrome}.
}
of $I$ is
\begin{align}
x+y+z+\mathbf{xyz}, \nonumber \\
x + y + \mathbf{x y} + z + x z + y z, \nonumber \\
x + y + z + x z + y z + z^2 + \mathbf{x z^2} + y z^2, \label{eq:gb-cubic-code} \\
x + y + x y + y^2 + \mathbf{x y^2} + z + y z + y^2 z,\nonumber \\
y^2 + y z + y^2 z + z^2 + y z^2 + \mathbf{y^2 z^2} \nonumber
\end{align}
where leading terms are marked as bold.
The following is an graphical explanation for the division algorithm.
Figure~\ref{fig:elementary-syndrome} shows a subset of a Gr\"obner basis for the cubic code.
One can directly see that the first four polynomials in Eq.~\eqref{eq:gb-cubic-code}
matches the diagrams.
A step-by-step explanation is as follows.
We fix a box $B$ that encloses all the defects in the neutral cluster.
We will sweep the defects to bottom-left-back corner.
Since each defect is a $Z_2$ charge, they will disappear in the end.
The algorithm begins with the top-right foremost vertex of $B$ on the dual lattice.
If there is a defect at $(x+\half,y+\half,z+\half)$,
we apply $ZI$ at $(x,y,z)$ to eliminate it.
This might create another defects as there are four defects in the elementary syndrome.
Important is that the potentially new defects
are all contained in the box $B$ we started with.
Continuing with $ZI$ we push all the defects in the foremost plane of $B$
to the left vertical line and the bottom horizontal line.
During this process, we record where $ZI$ has been applied.

For the defects on the vertical line at the left or on the horizontal line at the bottom,
we use the operator of weight 3 to further move the defects to the bottom-left corner.
See Fig.~\ref{fig:elementary-syndrome}.
This will in general create more defects behind, all of which are still contained in $B$.
Thus, we have moved all the defects on the foremost plane to the bottom-left corner
except for the three sites $t,u,v$:
\begin{equation}
\xymatrix@!0{ t \ar@{-}[d] & o \\ u \ar@{-}[r] & v }
\label{eq-diagram:last-three-defects}
\end{equation}
That is, if $E'$ is the recorded operator during the sweeping process,
the syndrome $S(EE') \subseteq B$ has potential defects only at $t,u,v$ on the foremost plane.

Let $o$ be at $(x_o+\half,y_o+\half,z_o+\half)$.
By considering the multiplication by suitable stabilizer generators $Q^Z$,
we can assume that $EE'$ is the identity on the plane $x=x_o$, except $(x_o,y_o,z_o)$.
Since there is no defect at $o$,
the operator at $(x_o,y_o,z_o)$ has to commute with $XX$;
it is either $II$ or $ZZ$.
Applying $ZZ$ if necessary, the operator at $(x_o,y_o,z_o)$ will become $II$,
and the defects at $t,u,v$ will disappear.
In this way, we have successfully pushed all the defects to the next-to-foremost plane.
We emphasize that the box $B$ still envelops all the defects,
and further $B$ can be shrunk in one direction.

Due to the threefold symmetry of the cubic code,
one can carry out this broom algorithm along any of three directions.
We will have, at last, a box $B$ of volume 1 that encloses all defects.
The defects in the cluster must be from one of the three elementary syndrome cubes
created either by $ZI$, $IZ$, or $ZZ$, which are easily eliminated.
It is clear that in time $O(V)$ the error operator has been computed up to stabilizer,
where $V$ is the initial volume of the minimal enclosing box of the cluster.

\section{Threshold theorem for topological stabilizer codes}
\label{sec:threshold}

In this section we prove that any topological stabilizer code can tolerate stochastic local errors with
a small constant rate assuming that the error correction is performed using the RG decoder.
We assume without loss of generality that each stabilizer generator is supported on a unit cube.
Each site of the lattice may contain finitely many qubits. A generator at a cube $c$ may act only on qubits of $c$.
We shall assume that errors at different sites are independent and identically distributed.
More precisely, let $E(P)$ be the set of sites at which a Pauli error $P$ acts nontrivially. We shall assume that
\begin{equation}
\mathrm{Pr}[E(P)=E] = (1- \epsilon)^{V-|E|} \epsilon^{|E|}
\label{eq:random-error-distribution}
\end{equation}
where $0 \le \epsilon \le 1$ is the error rate and $V=L^D$ is the total number of sites (the volume of the lattice).
For example, the depolarizing noise in which every qubit experiences $X,Y,Z$ errors
with the probability $p/3$ each, satisfies
Eq.~(\ref{eq:random-error-distribution}) with the error rate $\epsilon=1-(1-p)^q$, where $q$ is the number of qubits per site.
\begin{theorem}
Suppose a family of stabilizer codes has topological order satisfying TQO1,2.
(In particular, every translationally invariant exact codes do.)
Then, there exists a constant threshold $\epsilon_0>0$
such that for any $\epsilon<\epsilon_0$ the RG decoder corrects random independent errors with rate $\epsilon$
with the failure probability at most $e^{-\Omega(L^\eta)}$ for some constant $\eta>0$.
\label{thm:universal-threshold}
\end{theorem}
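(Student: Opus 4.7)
The plan is to adapt the hierarchical renormalization-group threshold analysis (in the spirit of Harrington's proof for the toric code) to the general setting of topological stabilizer codes satisfying TQO1 and TQO2. Fix a large enough constant $\beta \ge 3$ (to be tuned later) and call a physical error set $E \subseteq \Lambda$ \textbf{level-$p$ sparse} if its $\beta^p$-connected components each have diameter at most $\beta^p/3$ and any two of them are separated by distance greater than $\beta^{p+1}$. Level-$0$ sparsity is automatic. The proof has two parts: a deterministic decoder-correctness lemma assuming sparsity at every level, and a Peierls-style probability bound on the non-sparse events.

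\textbf{Step 1 (Decoder correctness under sparsity).} I will prove by induction on $p$ that if $E$ is level-$q$ sparse for all $0 \le q \le p$, then after executing the level-$0$ through level-$p$ error correction steps, the combined effect of the true error and the recorded recovery is a Pauli operator supported on a disjoint union of boxes, one per level-$(p+1)$ component of $E$, each of linear size at most $\beta^{p+1}/3$. The induction step uses that each restricted syndrome is neutral (because it is produced by a Pauli error whose support fits in such a box), and that TQO2 then guarantees TestNeutral returns a local annihilator, provided $\beta^{p+1}/3 + 2 \le \ltqo$, which holds for every $p \le p_M$ by definition of $p_M$ (and $\ltqo \ge L^\gamma$ from TQO1). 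The sparsity hypothesis at scale $\beta^{p+1}$ ensures that distinct surviving clusters are far enough apart that the level-$(p+1)$ step partitions them correctly and that the enclosing boxes do not overlap.

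\textbf{Step 2 (Peierls-style probability bound).} The RG decoder fails only if $E$ is non-sparse at some level $p \le p_M$. A level-$p$ non-sparse configuration forces a $\beta^p$-connected ``animal'' of error sites with diameter exceeding $\beta^p/3$; by the recursive definition of sparsity, such an event requires two distinct level-$(p-1)$ non-sparse events at distance at most $\beta^p$. Iterating this dichotomy back to level $0$ shows that a level-$p$ bad event anchored at a given site requires at least $2^p$ error sites contained in a box of side $O(\beta^p)$. The number of such anchored animals is bounded by a Cayley-tree count $c_1^{2^p}$ for a constant $c_1$ depending on $\beta$ and the spatial dimension $D$. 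By independence of errors,
\[
\Pr[\text{level-}p\text{ bad event anchored at }u] \le (c_1 \epsilon)^{2^p}.
\]
A union bound over the $O(L^D)$ anchors and the $p_M+1$ levels yields
\[
\Pr[\text{decoder fails}] \;\le\; L^D \sum_{p=0}^{p_M} (c_1\epsilon)^{2^p} \;\le\; 2 L^D (c_1\epsilon)^{2^{p_M}},
\]
and since $2^{p_M} = \Omega(\ltqo) = \Omega(L^\gamma)$, choosing any $\epsilon < \epsilon_0 := 1/c_1$ gives failure probability $e^{-\Omega(L^\eta)}$ with $\eta = \gamma$, after absorbing the polynomial $L^D$ prefactor into the exponent.

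\textbf{Main obstacle.} The delicate step is Step 1: ensuring the geometric book-keeping in the inductive hypothesis is consistent across levels. Without the no-strings rule the defects produced by a localized physical error can spread through a neighborhood of size determined by TQO2, so one must choose $\beta$ large enough, relative to the spreading factor, that neighborhoods around disjoint level-$p$ clusters remain disjoint after $p_M$ applications of the decoder. Once $\beta$ is fixed so that ``small'' clusters stay inside the boxes the sparsity condition allocates to them, the recursion $\Pr[\text{bad}_p] \le (c_1 \Pr[\text{bad}_{p-1}])^2$ is mechanical and yields the stated stretched-exponential bound.
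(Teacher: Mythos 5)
Your high-level plan (hierarchical decomposition of the error set, a correctability lemma for hierarchically small errors, and a doubly exponential probability decay giving the threshold) matches the paper's, but the specific notion you use — \emph{level-$p$ sparsity} of $E$ — does not deliver the Peierls bound, and this is a genuine gap rather than a matter of tuning $\beta$.

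The crucial difference is that your ``sparsity'' is a global property of $E$ at each scale with no built-in recursive structure. Consequently the claim ``a level-$p$ bad event anchored at a given site requires at least $2^p$ error sites'' is false. Consider $E$ consisting of exactly two isolated sites $u,v$ with $\beta^{p} < d(u,v) \le \beta^{p+1}$. Then for all $q < p$ the $\beta^q$-connected components are the two singletons and they are separated by distance $> \beta^{q+1}$, so $E$ is level-$q$ sparse for $q<p$; at level $p$ the separation condition fails, so $E$ is level-$p$ non-sparse. This bad event involves only \emph{two} error sites, so $\Pr[\text{level-}p\text{ bad at a fixed anchor}]$ is $\Theta(\epsilon^2)$ up to volume factors, not $(c_1\epsilon)^{2^p}$. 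The intermediate step ``such an event requires two distinct level-$(p-1)$ non-sparse events'' is also unjustified: the two sub-pieces in the example above are perfectly sparse at every level, so there is nothing to recurse on. For the same reason ``level-$0$ sparsity is automatic'' is wrong as stated: with $\beta^0=1$ it forces every error site to be isolated from all others by distance $>\beta$, which for i.i.d.\ errors at constant rate $\epsilon$ fails with high probability.

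What makes the paper's argument work is that its hierarchical object is a \emph{chunk}, defined recursively so that the $2^p$-site count is built in: a level-$n$ chunk is by definition a \emph{disjoint} union of two level-$(n-1)$ chunks of diameter $\le Q^n/2$, hence contains exactly $2^n$ sites confined to a box of size $O(Q^n)$. The decoder is shown to succeed whenever the maximal chunk level $m$ satisfies $Q^{m+1} < L_{tqo}$ (Lemma~\ref{lem:correctability-criterion}), so failure forces the existence of a chunk at level $\Omega(\log L)$, i.e.\ $\Omega(L^{\gamma})$ error sites packed into a box of comparable size. The doubly exponential decay then follows from the recursion $p_n \le (3Q)^{2D} p_{n-1}^2$, and the BK inequality is what converts ``two \emph{disjoint} level-$(n-1)$ chunks in a box'' into $\Pr[\text{one}]^2$; without disjointness your argument from independence alone does not self-improve. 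If you replace your sparsity definition with a recursively defined ``density'' notion — a level-$p$ dense object is two disjoint level-$(p-1)$ dense objects within distance $O(\beta^p)$ — you would essentially recover the chunk construction, and at that point the BK inequality (or an equivalent disjoint-occurrence bound) is what you need for the probability recursion, not a raw Cayley-tree count.

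A secondary issue: your Step 1 induction does not track the support of the residual error $P\cdot P_{ec}^{(p)}$ after each EC stage, which is the part of Lemma~\ref{lem:correctability-criterion} that actually needs care. The paper proves by induction that $P_{ec}^{(p)}$ is supported on the union of the boxes $B_{j,\alpha}$ (built from the chunk decomposition) and that $P$ and $P_{ec}^{(p)}$ agree modulo stabilizers on all boxes of size $\le 2^p$; this is what ensures no spurious far-reaching logical error is created by the decoder. Your sketch asserts the conclusion but does not address this propagation.
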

\noindent
In the rest of this section we prove the theorem.
Our proof borrows some techniques from \cite{Gray2001Guide,Gacs1986Reliable,Harrington2004thesis},
specifically Section~5.1 of Gray's review~\cite{Gray2001Guide} on G\'acs' 1D cellular automata~\cite{Gacs1986Reliable}.

Recall that we use $\ell_\infty$-metric, so a cube of linear size $r$ thus has diameter $r$.
We keep the terminologies and conventions from Section~\ref{sec:rg-decoder-assumptions},
and our decoder is what we have explained in the previous sections:
%
The \emph{level-$p$ error correction} {\em EC($p$)} on a syndrome $S$ is the following subroutine.
(i) find all neutral $2^p$-connected components $M$ of $S$,
(ii) for each $M$ found at step 1, calculate and apply a Pauli operator $P$ supported on the 1-neighborhood of $b(M)$ that annihilates $M$, and update the syndrome accordingly.
Calling the full RG decoder on a syndrome $S$ involves the following steps:
(i) run EC(0), EC(1), ..., EC($\lfloor \log_2 \ltqo \rfloor$),
(ii) if the resulting syndrome $S$ is empty, return the accumulated Pauli operator applied by the subroutines EC($p$). Otherwise, declare a failure.

Below we shall use the term `error' 
both for the error operator $P$ and for the subset of sites $E$ acted on by $P$, 
whenever the meaning is clear from the context.
Let us choose an integer $Q \gg 1$ and find a class of errors which are properly corrected by the RG decoder,
see Lemma~\ref{lem:correctability-criterion} below.
We will see later that this class of errors actually includes all errors which are likely to appear for small enough error rate.
\begin{defn}
Let $E$ be a fixed error.
A site $u\in E$ is called a {\bf level-$0$ chunk}.
A non-empty subset of $E$ is called a {\bf level-$n$ chunk} ($n\ge 1$) 
if it is a disjoint union of two level-$(n-1)$ chunks and its diameter is at most $Q^n /2$.
\end{defn}
The term `chunk,' not to be confused with the usage in Section~\ref{sec:superlinear-distance},
is chosen in order to avoid confusion with `cluster', which is used for a set of defects.
Note that a level-$n$ chunk contains exactly  $2^n$ sites.
Given an error $E$, let $E_n$ be the union of all level-$n$ chunks of $E$.
If $u \in E_{n+1}$, then by definition $u$ is an element of a level-$(n+1)$ chunk.
Since a level-$(n+1)$ chunk is a union of two level-$n$ chunks, $u$ is contained in a level-$n$ chunk. Hence, $u \in E_n$, and the sequence $E_n$ form a descending chain
\[
 E = E_0 \supseteq E_1 \supseteq \cdots \supseteq E_m ,
\]
where $m$ is the smallest integer such that $E_{m+1}=\emptyset$.
Let $F_i = E_i \setminus E_{i+1}$, so $E = F_0 \cup F_1 \cup \cdots \cup F_m$ is expressed as a disjoint union, which we call the \emph{chunk decomposition} of $E$.
\begin{prop}
Let $Q \ge 6$ and $M$ be any $Q^n$-connected component of $F_n$. Then
$M$ has diameter $\le Q^n$ and is separated from $E_n \setminus M$ by distance $> \frac{1}{3}Q^{n+1}$.
\label{prop:structure-Fn}
\end{prop}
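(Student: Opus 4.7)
The plan is to derive both claims from a single structural fact: since $u\in F_n$ lies in no level-$(n+1)$ chunk, for every level-$n$ chunk $K$ with $u\in K$ and every level-$n$ chunk $L$ disjoint from $K$ one must have $\mathrm{diam}(K\cup L)>Q^{n+1}/2$, for otherwise $K\cup L$ would itself be a level-$(n+1)$ chunk through $u$. Combined with $\mathrm{diam}(K),\mathrm{diam}(L)\le Q^n/2$, this yields the quantitative separation $d(K,L)>Q^{n+1}/2-Q^n=Q^n(Q/2-1)\ge 2Q^n$ once $Q\ge 6$. This one inequality, played against the $Q^n$-connectedness of $M$, will drive the entire argument.

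For the diameter bound I would fix $u\in M$ and a level-$n$ chunk $K_u\ni u$, and then show by induction along any $Q^n$-chain $u=w_0,w_1,\ldots,w_\ell=v$ in $M$ that every level-$n$ chunk $K_i\ni w_i$ overlaps $K_u$. In the inductive step, first note that $K_i\cap K_{i+1}\ne\emptyset$: otherwise $\mathrm{diam}(K_i\cup K_{i+1})\le Q^n/2+Q^n+Q^n/2=2Q^n\le Q^{n+1}/2$ for $Q\ge 4$, producing a level-$(n+1)$ chunk through $w_i\in F_n$. Picking $a\in K_u\cap K_i$ and $b\in K_i\cap K_{i+1}$ and triangulating then gives $\mathrm{diam}(K_u\cup K_{i+1})\le\mathrm{diam}(K_u)+\mathrm{diam}(K_i)+\mathrm{diam}(K_{i+1})\le 3Q^n/2\le Q^{n+1}/2$, so if $K_u\cap K_{i+1}$ were empty the structural fact applied to $u$ would be violated. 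Taking $i=\ell$ and using a shared site $z\in K_u\cap K_v$ gives $d(u,v)\le d(u,z)+d(z,v)\le Q^n$, hence $\mathrm{diam}(M)\le Q^n$.

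For the separation bound I would produce, for each $w\in E_n\setminus M$, a level-$n$ chunk $K_w\ni w$ that is disjoint from $K_u$, and then apply the structural fact to conclude $d(u,w)\ge d(K_u,K_w)>Q^{n+1}/2-Q^n=Q^{n+1}/3$ for $Q\ge 6$. When $w\in F_n\setminus M$ this disjointness is immediate: any overlap would trigger the chain argument above and place $w$ in the same $Q^n$-component as $u$, contradicting $w\notin M$. The hard case is $w\in E_{n+1}$, where $w$ sits in some level-$(n+1)$ chunk $L=L_1\sqcup L_2$; here I would need to show that the level-$n$ piece $L_i$ containing $w$ can be chosen disjoint from $K_u$, because an overlap of $K_u$ with the $w$-side together with disjointness of $K_u$ from the other side would let one paste $K_u$ onto a piece of $L$ to produce a level-$(n+1)$ chunk through $u$. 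Nailing this last step down — in particular ensuring that the overlapping side of $L$ is never the one that carries $w$, and that the hypothesis $Q\ge 6$ is what pins the constant $1/3$ — is the main technical obstacle, and is where I expect to spend most of the work.
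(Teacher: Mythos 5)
Your diameter argument is correct and takes a genuinely different route from the paper's. The paper proves only the dichotomy that for $u\in F_n$ and $v\in E_n$ one has $d(u,v)\le Q^n$ or $d(u,v)>Q^{n+1}/3$, and then (implicitly) plays this against $Q^n$-connectedness; you instead show directly by induction that every level-$n$ chunk of every point on a $Q^n$-chain from $u$ must meet $K_u$, and the two-step triangulation keeps $\mathrm{diam}(K_u\cup K_{i+1})\le 3Q^n/2<Q^{n+1}/2$ as required. Your version is cleaner in that it never has to track distances along the chain, and it yields the stronger picture that $M$ lives inside the chunks meeting $K_u$. Your treatment of the separation bound for $w\in F_n\setminus M$ is also sound.

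The case $w\in E_{n+1}$, however, is a real gap and the pasting idea cannot close it. If $K_u$ meets $L_1$ at some site $a$ and is disjoint from $L_2$, the only available estimate is $\mathrm{diam}(K_u\cup L_2)\le\mathrm{diam}(K_u)+\mathrm{diam}(L)\le Q^n/2+Q^{n+1}/2$ (triangulate through $a\in L$), which already exceeds the threshold $Q^{n+1}/2$; nothing makes $K_u\sqcup L_2$ a level-$(n+1)$ chunk, so no contradiction arises. In fact the separation assertion, as literally stated, fails for sites in $E_{n+1}$: a site $u\in F_n$ can share a level-$n$ chunk with a site $v\in E_{n+1}$, which forces $d(u,v)\le Q^n/2\ll Q^{n+1}/3$. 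For example, take $Q=10$ and a one-dimensional error $E=\{0,5,10,50,55\}$; the level-$1$ chunks are $\{0,5\},\{5,10\},\{50,55\}$, and the unique level-$2$ chunk is $\{5,10\}\sqcup\{50,55\}$ of diameter $50$ (while $\{0,5\}\sqcup\{50,55\}$ has diameter $55>50$), so $E_2=\{5,10,50,55\}$, $F_1=\{0\}$, $M=\{0\}$, yet $d(M,E_1\setminus M)=d(0,5)=5$. Note that the paper's own proof only establishes the dichotomy and leaves the deduction of the proposition unspoken; what actually follows from the dichotomy is the weaker but genuine statement that any $v\in E_n$ with $d(v,M)>Q^n$ in fact has $d(v,M)>Q^{n+1}/3$. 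You should aim for that, rather than try to nail down the $E_{n+1}$ case of the proposition as written.
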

\begin{proof}
We claim that for any pair of sites $u \in F_n = E_n \setminus E_{n+1}$ and $v \in E_n$ we have $ d( u,v) \le Q^n$ or $d( u,v ) > \frac{1}{3} Q^{n+1}$.
Suppose on the contrary to the claim, that there is a pair $u \in F_n$ and $v \in E_n$ such that $ Q^n < d(u,v) \le Q^{n+1} /3$. Let $C_u \ni u$ and $C_v \ni v$ be level-$n$ chunks that contains $u$ and $v$, respectively. Since the diameters of $C_{u,v}$ are $\le Q^n /2$ and $d(u,v) > Q^n$, we deduce that $C_u$ and $C_v$ are disjoint. On the other hand,
\[
 d( C_u \cup C_v ) \le d( u,v ) + d(C_u) + d(C_v)  \le Q^{n+1}/2
\]
since $Q \ge 6$. Thus, $C_u \cup C_v$ is a level-$(n+1)$ chunk that contains $u$
which shows that $u \in E_{n+1}$. It contradicts to our assumption that $u\in F_n=E_n \setminus E_{n+1}$.
\end{proof}
Note that in the chunk decomposition a $Q^n$-connected component $P$ of $E_n$ may not be separated from the rest $E \setminus P$ by distance $> Q^n$.
\begin{lem}
Let $Q \ge 10$. If the length $m$ of the chunk decomposition of an error $E$ satisfies $Q^{m+1} < \ltqo$, then $E$ is corrected by the RG decoder.
\label{lem:correctability-criterion}
\end{lem}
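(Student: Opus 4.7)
I would prove Lemma~\ref{lem:correctability-criterion} by induction on the chunk level $n$, showing that the RG decoder cleans up the $F_n$ contribution at a specifically chosen intermediate RG pass. First I would set $p_n = \lceil \log_2(Q^n+1) \rceil$, with the slack adjusted so that $2^{p_n}$ lies in the window $d(S(M))+1 \le 2^{p_n} \le Q^{n+1}/3 - 2$; Proposition~\ref{prop:structure-Fn} and a short calculation show that this window is non-empty once $Q$ is large enough (e.g.\ $Q\ge 10$), and the hypothesis $Q^{m+1} < \ltqo$ gives $p_m \le p_M = \lfloor \log_2 \ltqo\rfloor$, so every chosen level is actually reached by the decoder.

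The inductive claim would be: after running $\mathrm{EC}(0), \mathrm{EC}(1), \ldots, \mathrm{EC}(p_n)$, the accumulated correction $C^{(n)}$ satisfies $C^{(n)} \cdot E \equiv E_{n+1} \pmod{\mathcal{G}}$, and in particular the remaining syndrome is exactly $S(E_{n+1})$. The base case $n=-1$ is trivial, and taking $n=m$ gives $C^{(m)} \cdot E \equiv E_{m+1} = I \pmod{\mathcal{G}}$; the subsequent passes $\mathrm{EC}(p_m+1),\ldots,\mathrm{EC}(p_M)$ act on the empty syndrome, so the decoder terminates successfully.

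For the inductive step $n-1 \to n$, the key geometric input is Proposition~\ref{prop:structure-Fn}: each $Q^n$-connected component $M$ of $F_n$ has diameter $\le Q^n$, hence $S(M)$ has diameter $\le Q^n+1 \le 2^{p_n}$, while distinct components of $F_n$ and the syndrome of $E_{n+1}$ are pairwise separated by more than $Q^{n+1}/3-1 > 2^{p_n}$. Consequently, throughout every pass $\mathrm{EC}(p)$ with $p_{n-1}<p\le p_n$, the $2^p$-clustering never mixes the island $\mathcal{N}(M)$ of one $F_n$-component with another or with the $E_{n+1}$-region. Within each island $\mathcal{N}(M)$, any corrections applied at intermediate scales $p<p_n$ are supported in a slightly enlarged neighborhood of $M$; at $p=p_n$ the residual syndrome in $\mathcal{N}(M)$ is a single $2^{p_n}$-connected cluster of diameter $\le Q^n+1<\ltqo$, automatically neutral (it is the syndrome of the product of $M$ with the local corrections applied so far), and \textbf{TestNeutral} returns an annihilating operator supported in its $1$-neighborhood. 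Since the full product of corrections applied inside $\mathcal{N}(M)$ across these passes has the same syndrome as $M$ and is supported in a region of diameter $< \ltqo$, TQO1 (Definition~\ref{defn:TQO1}) forces it to equal $M$ modulo a stabilizer.

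\textbf{Main obstacle.} The delicate point is the $E_{n+1}$-region: intermediate passes $p<p_n$ may identify neutral sub-clusters of $S(E_{n+1})$ and apply corrections there, so the syndrome at the end of the $n$-th block is not literally $S(E_{n+1})$ on the nose. To handle this I would strengthen the inductive statement to: $C^{(n)}\cdot E$ is congruent modulo $\mathcal{G}$ to a Pauli operator $\widetilde{E}^{(n)}$ whose support lies in the $O(Q^n)$-neighborhood of $E_{n+1}$ and whose chunk decomposition at scales $\ge 2^{p_n+1}$ agrees with that of $E_{n+1}$. The separation of scales provided by Proposition~\ref{prop:structure-Fn}, together with TQO1 applied at every island that arises, shows that local perturbations inside one region cannot propagate into another and cannot promote a stabilizer-equivalence into a logical-operator ambiguity, because every intermediate island has diameter well below $\ltqo$ thanks to the assumption $Q^{m+1}<\ltqo$. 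Closing the induction this way, and invoking TQO1 one final time at level $m$ when the residual region has shrunk to nothing, yields the lemma.
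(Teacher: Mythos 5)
Your plan starts from the right geometric input (Proposition~\ref{prop:structure-Fn}) and correctly identifies the key obstacle, but the fix you propose for that obstacle does not close the gap. You induct on the chunk level $n$ and assert that after $\mathrm{EC}(p_n)$ the accumulated correction reduces the error to $E_{n+1}$ modulo stabilizers; you then acknowledge this is false on the nose because intermediate passes partially clean $E_{n+1}$, and you repair it by strengthening the hypothesis to involve a modified error $\widetilde{E}^{(n)}$ ``whose chunk decomposition at scales $\ge 2^{p_n+1}$ agrees with that of $E_{n+1}$.'' This is where the argument breaks down. The chunk decomposition is a global combinatorial object defined from the set of corrupted sites; it is not a scale-filtered notion, and once you pass to a modified operator $\widetilde{E}^{(n)}$ you no longer have Proposition~\ref{prop:structure-Fn} available for it. To rerun the same separation argument at level $n+1$ you would have to re-establish that the level-$(n+1)$ ``chunks'' of $\widetilde{E}^{(n)}$ satisfy the same diameter and separation bounds, which your strengthened hypothesis does not provide and which is not obvious, since $\widetilde{E}^{(n)}$ has smeared support over an $O(Q^n)$-neighborhood and might fail to decompose the same way.

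The paper avoids ever re-analyzing the chunk structure of a modified error. It fixes, once and for all, the boxes $B_{j,\alpha}$ as the $1$-neighborhoods of the enclosing boxes of the syndromes of the restrictions $P|_{F_{j,\alpha}}$, derives a single pair of diameter/separation bounds on these boxes from Proposition~\ref{prop:structure-Fn}, and then inducts directly on the RG level $p$ with the statements (i) $P_{ec}^{(p)}$ has support in $\bigcup_{j,\alpha} B_{j,\alpha}$, and (ii) $P_{ec}^{(p)}$ agrees with $P$ modulo stabilizers on every $B_{j,\alpha}$ with $2^p\ge Q^j+2$. The separation bound $d(B_{j,\alpha},B_{k,\beta})>\frac13 Q^{1+\min(j,k)}-2$ is enough to guarantee, purely from (i) and (ii), that $2^{p+1}$-connected components of the residual syndrome never cross box boundaries, so (i) and (ii) propagate. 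Neutrality of the level-$(p+1)$ cluster inside each freshly swept box follows because the residual error there is $(P\cdot P_{ec}^{(p)})|_{B_{j,\alpha}}$, supported in a box of diameter $\le Q^j+2<Q^{m+1}<\ltqo$. This is the missing mechanism in your plan: tracking containment in a fixed family of boxes lets you invoke TQO1 and TQO2 without ever having to reconstitute a chunk decomposition for a perturbed error. If you rewrite your inductive step to carry along (i) and (ii) rather than a congruence to $E_{n+1}$, your outline turns into the paper's proof.
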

\begin{proof}
Consider any fixed error $P$ supported on a set of sites $E$.
Let $E=F_0 \cup F_1 \cup \cdots \cup F_m$ be the chunk decomposition of $E$, and
let $F_{j,\alpha}$ be the $Q^j$-connected components of $F_j$.
Also, let $B_{j,\alpha}$ be the $1$-neighborhood of the smallest box enclosing the syndrome
created by the restriction of $P$ onto $F_{j,\alpha}$.
Proposition~\ref{prop:structure-Fn} implies that
\begin{equation}
\label{subchunks2}
d(B_{j,\alpha})\le Q^j+2 \quad \mbox{and} \quad d(B_{j,\alpha}, B_{k,\beta})>\frac13 Q^{1+\min{(j,k)}}-2.
\end{equation}
Let $P_{ec}^{(p)}$ be the accumulated correcting operator returned by the levels $0,\ldots,p$  of the RG decoder.
Let us use induction in $p$ to prove the following statement.
\begin{enumerate}
\item The operator $P_{ec}^{(p)}$ has support on the union of the boxes $B_{j,\alpha}$.
\item  The operators $P_{ec}^{(p)}$ and $P$ have the same restriction on $B_{j,\alpha}$ modulo stabilizers
for any $j$ such that $2^p\ge Q^j+2$.
\end{enumerate}
The base of induction is $p=0$.  Using Eq.~(\ref{subchunks2}) we conclude that any
$1$-connected component of the syndrome $S(P)$ is fully contained
inside some box $B_{j,\alpha}$. It proves that $P_{ec}^{(0)}$ has support on the union of the boxes $B_{j,\alpha}$.
The second statement is trivial for $p=0$.

Suppose we have proved the above statement for some $p$.
Then the operator $P\cdot P_{ec}^{(p)}$ has support only inside
boxes $B_{j,\alpha}$ such that $2^p<Q^j+1$ (modulo stabilizers). It follows that any
$2^{p+1}$-connected component of the syndrome caused by $P\cdot P_{ec}^{(p)}$ is contained in some box
$B_{j,\alpha}$ with $2^p<Q^j+1$. Note that the RG decoder never adds new defects;
we just need to check that $2^{p+1}$-connected components do not cross the boundaries
between the boxes $B_{j,\alpha}$ with $2^p<Q^j+1$. This follows from Eq.~(\ref{subchunks2}).
Hence $P_{ec}^{(p+1)}$
has support in the union of $B_{j,\alpha}$. Furthermore, if
$2^p<Q^j+1\le 2^{p+1}$, the cluster of defects created by $P\cdot P_{ec}^{(p)}$
inside $B_{j,\alpha}$ forms a single  $2^{p+1}$-connected component
of the syndrome examined by  EC$(p+1)$.
This cluster is neutral since we assumed $Q^{m+1}<L_{tqo}$.
Hence $P_{ec}^{(p+1)}$ will annihilate this cluster.
The annihilation operator is equivalent to the restriction of $P\cdot P_{ec}^{(p)}$
onto $B_{j,\alpha}$ modulo stabilizers, since the linear size
of $B_{j,\alpha}$ is smaller than $\ltqo$. It proves the induction hypothesis
for the level $p+1$. 
\end{proof}
\noindent
The preceding lemma says that errors by which the RG decoder could be confused are those from very high level chunks. What is the probability of the occurrence of such a high level chunk if the error is random according to Eq.\eqref{eq:random-error-distribution}? Since our probability distribution of errors depend only on the number of sites in $E$, this question is completely percolation-theoretic.

Let us review some terminology from the percolation theory\cite{Grimmett1999Percolation}. An event is a collection of configurations. In our setting, a configuration is a subset of the lattice. Hence, we have a partial order in the configuration space by the set-theoretic inclusion. An event $\mathcal E$ is said to be \emph{increasing} if $E \in \mathcal{E}, E \subseteq E'$ implies $E' \in \mathcal{E}$. For example, the event defined by the criterion that there exists an error at $(0,0)$, is increasing.
The \emph{disjoint occurrence} $\mathcal A \circ \mathcal B$ of the events $\mathcal A$ and $\mathcal B$ is defined as the collection of configurations $E$ such that $E = E_a \cup E_b$ is a disjoint union of $E_a \in \mathcal A$ and $E_b \in \mathcal B$. To illustrate the distinction between $\mathcal A \circ \mathcal B$ and $\mathcal A \cap \mathcal B$, consider two events defined as $\mathcal A = $ ``there are errors at $(0,0)$ and at $(1,0)$'', and $\mathcal B = $ ``there are errors at $(0,0)$ and at $(0,1)$''. The intersection $\mathcal A \cap \mathcal B$ contains a configuration $\{ (0,0),(1,0),(0,1) \}$, but the disjoint occurrence $\mathcal A \circ \mathcal B$ does not. A useful inequality by van den Berg and Kesten (BK) reads \cite{BergKesten1985, Grimmett1999Percolation}
\begin{equation}
 \mathrm{Pr}[ \mathcal A \circ \mathcal B ] \le \mathrm{Pr}[ \mathcal A ] \cdot \mathrm{Pr}[ \mathcal B]
\label{eq:BK-inequality}
\end{equation}
provided the events $\mathcal A$ and $\mathcal B$ are increasing.

\begin{proof}[Proof of Theorem~\ref{thm:universal-threshold}]
Consider a $D$-dimensional lattice
and a random error $E$ defined by Eq.~(\ref{eq:random-error-distribution}).
Let $B_n$ be a fixed cubic box of linear size $Q^n$ and
$B_n^+$ be the  box of linear
size $3Q^n$ centered at $B_n$.
Define the following probabilities:
\begin{align*}
p_n &= \mathrm{Pr}\left[ \mbox{$B_n$ has a nonzero overlap with a level-$n$ chunk of $E$}\right] \\
\tilde{p}_n &= \mathrm{Pr}\left[\mbox{$B_n^+$ contains a level-$n$ chunk of $E$}\right] \\
q_n &= \mathrm{Pr}\left[\mbox{$B_n^+$ contains $2$ disjoint  level-$(n-1)$ chunks of $E$}\right] \\
r_n &= \mathrm{Pr}\left[\mbox{$B_n^+$  contains a level-$(n-1)$ chunk of $E$}\right]
\end{align*}
Note that all these probabilities do not depend on the choice of the box $B_n$
due to translation invariance. Since a level-$0$ chunk is just a single site
of $E$, we have $p_0=\epsilon$.
We begin by noting that
\[
p_n\le \tilde{p}_n\le q_n.
\]
Here we used  the fact that
any level-$n$ chunk has diameter at most $Q^n/2$ and that any level-$n$ chunk
consists of a disjoint pair of level-$(n-1)$ chunks.
Let us fix the box $B_n^+$ and let ${\cal Q}_n$ be the event that
$B_n^+$ contains a disjoint pair of  level-$(n-1)$ chunks of $E$.
Let ${\cal R}_n$ be the event that $B_n^+$ contains a level-$(n-1)$ chunk of $E$.
Then ${\cal Q}_n={\cal R}_n\circ {\cal R}_n$.
It is clear that  ${\cal Q}_n$ and ${\cal R}_n$  are increasing events.
Applying the van den Berg and Kesten inequality we arrive at
\[
q_n\le r_n^2.
\]
Finally, since $B_n^+$ is a disjoint union of $(3Q)^D$ boxes of linear size $Q^{n-1}$,
the union bound yields
\[
r_n\le (3Q)^D p_{n-1}.
\]
Combining the above inequalities we get
$p_n \le (3Q)^{2D} p_{n-1}^2$, and hence
\[
p_n \le (3Q)^{-2D}((3Q)^{2D} \epsilon)^{2^n}.
\]
The probability $p_n$ is  doubly exponentially small in $n$
whenever $\epsilon < (3Q)^{-2D}$.
 If there exists at least one level-$n$ chunk, there is always a box of linear size $Q^n$ that overlaps with it. Hence, on the finite system of linear size $L$, the probability of the occurrence of a level-$m$ chunk is bounded above by $L^D p_m$.
 Employing
 Lemma~\ref{lem:correctability-criterion}, we conclude that the RG decoder fails with probability at most $p_{fail}=L^D p_m$
  for any $m$ such that $Q^{m+1} < \ltqo$. Since we assumed that $\ltqo\ge L^\delta$,
 one can choose $m\approx \delta \log{L}/\log{Q}$.
 In this case $p_{fail}=\exp{(-\Omega(L^\eta))}$ for $\eta \approx \delta / \log{Q}$.
 We have proved our theorem with $\epsilon_0 = (3Q)^{-2D}$ where $Q = 10$.
\end{proof}

\section{Benchmark of the decoder}
\label{sec:benchmark}

Given a decoder, a family of quantum codes indexed by code length (system size)
is said to have an \emph{error threshold} $p_c$
if the probability for decoder to fail approaches zero in the limit of large code length
provided the random error rate $p$ is less than $p_c$.
We tested our decoder with respect to random uncorrelated bit-flip errors on the well-studied 2D toric code.
The error threshold is measured to be $8.4(1)\%$ using $\ell_1$-metric.
See Fig.~\ref{fig:threshold}.
It is reasonably close to the best known value $10.3\%$
based on the perfect matching algorithm~\cite{DennisKitaevLandahlEtAl2002Topological, Harrington2004thesis},
or $9\%$ based on a renormalization group decoder of similar nature to ours~\cite{Duclos-CianciPoulin2009Fast}.
This is remarkable for our decoder's simplicity and applicability.
The 3D cubic code has threshold $\gtrsim 1.1\%$ under independent bit-flip errors 
using $\ell_\infty$-metric.
Note that in these simulations we do not use TestNeutral$'$ of Remark~\ref{rem:TestNeutralprime}.

\begin{figure}[ht]
\centering
\begin{minipage}{.48\textwidth}
\includegraphics[width=\textwidth]{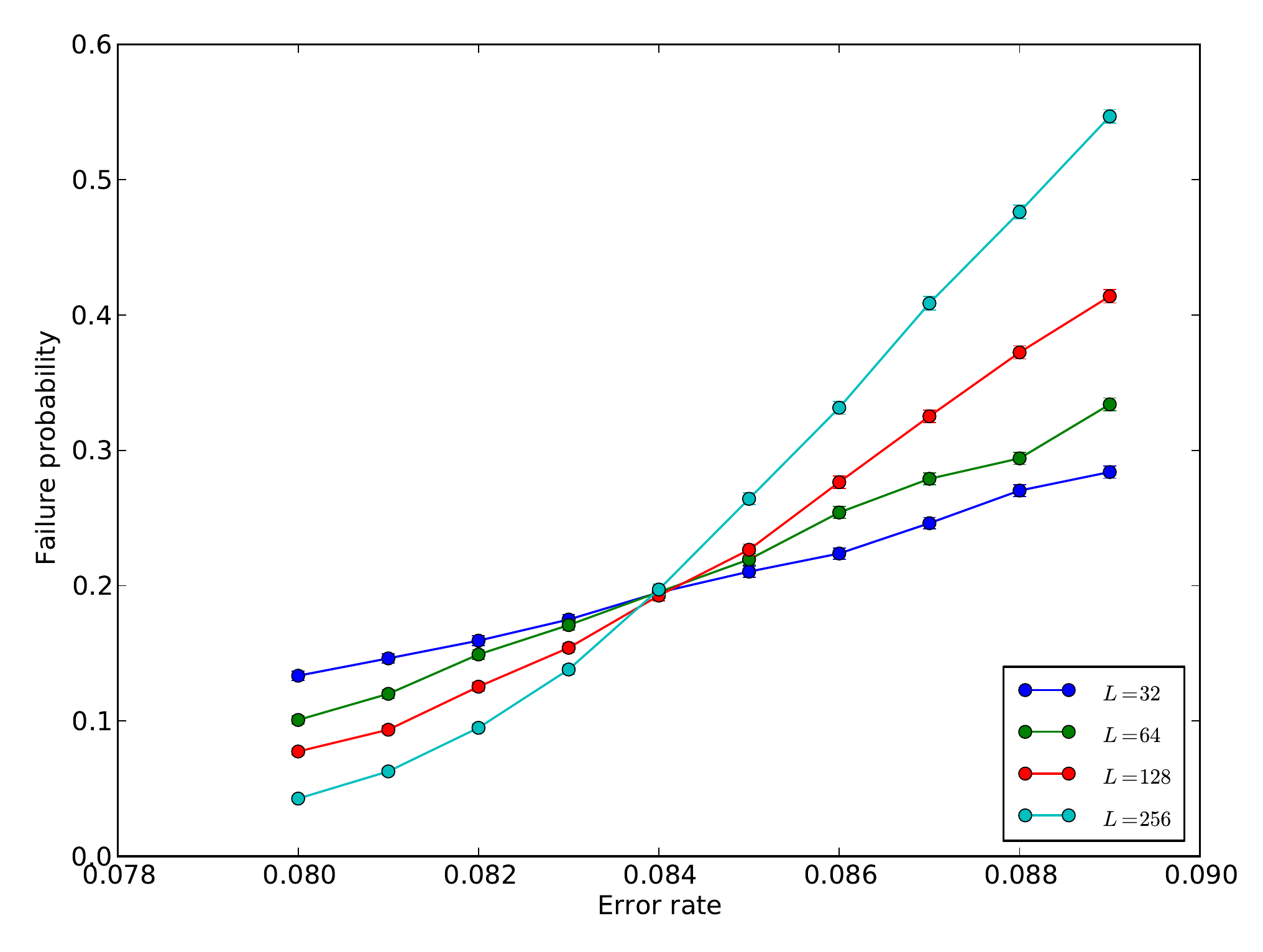}
\end{minipage}
\begin{minipage}{.48\textwidth}
\includegraphics[width=\textwidth]{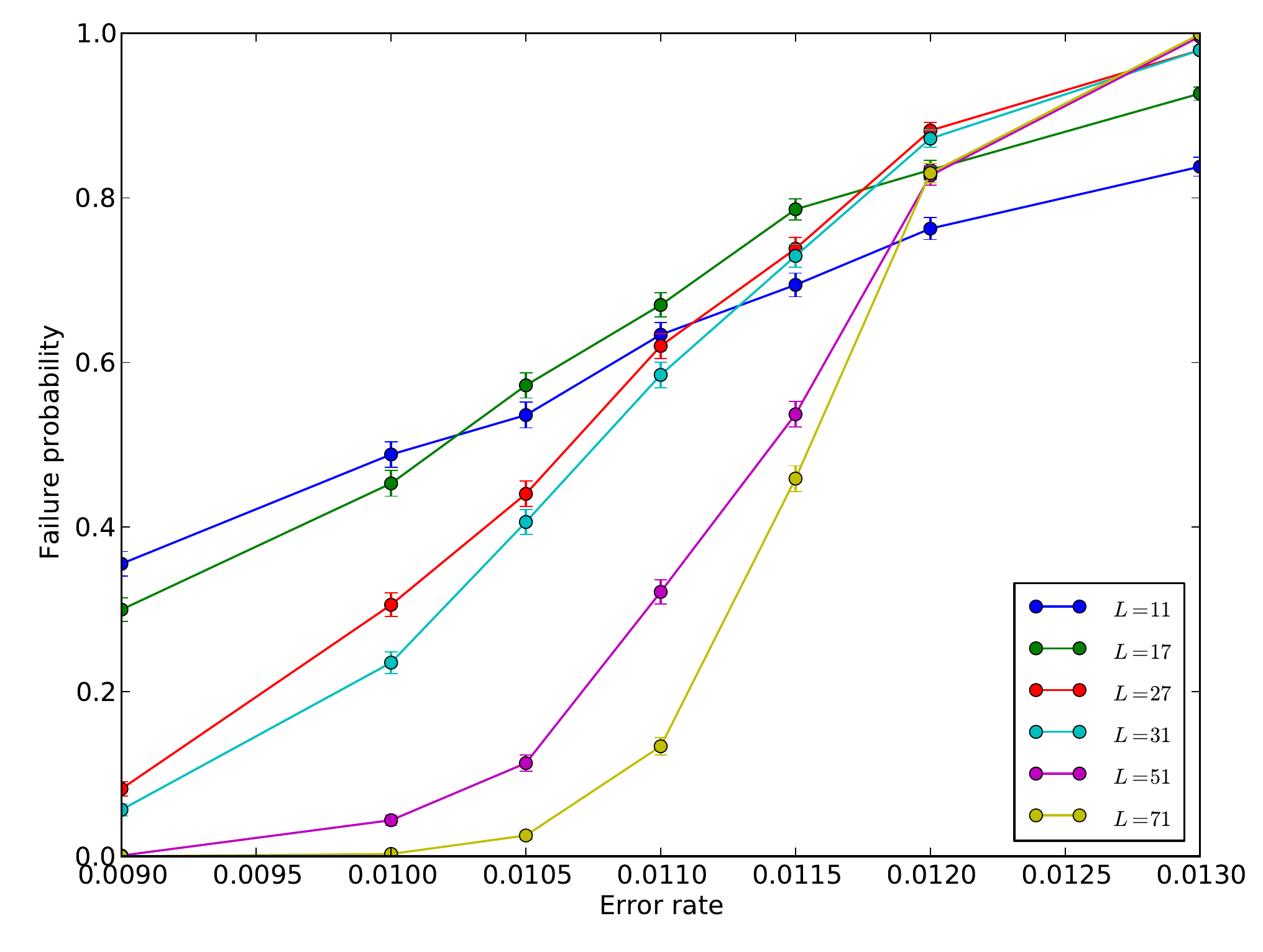}
  \end{minipage}
\caption{
The thresholds of 2D toric code (left) and 3D cubic code (right)
under independent random bit-flip errors using our RG decoder.
The left shows simulation data for 2D toric code under $\ell_1$-metric,
The right shows the data for 3D cubic code under $\ell_\infty$-metric.
The thresholds are measured to be $p_c(\text{2D toric})=8.4(1)\%$
and $p_c(\text{3D cubic}) \gtrsim 1.1 \%$.
}
\label{fig:threshold}
\end{figure}

\chapter{Self-correcting quantum memory}
\label{chap:q-mem}

We now directly assess the cubic code as a self-correcting quantum memory.
The results in this chapter depend on previous chapters.
In Chapter~\ref{chap:consq-no-strings} 
we have found an energy barrier to isolate a nontrivial charge.
It implies that nontrivial charges are localized in their original position where they are created.
It suggests further that geometrically localized defects would form a neutral cluster,
and the annihilation of them would not likely cause any undetected logical error.
The renormalization group decoder and the threshold theorem 
in Chapter~\ref{chap:RG-decoder} have been motivated by this intuition.
In this chapter, we show that 
the performance of the RG decoder against thermal errors
matches our expectation, too.
We prove that if the no-strings rule is satisfied,
then the memory time grows as a power law 
$L^{c\beta}$ where $\beta$ is the inverse temperature of the heat bath.
The bound is valid when the system size is small enough $L \le e^{c' \beta}$.
Note that our analysis does not tell anything conclusive for the system in thermodynamic limit.

A few remarks can be made to the validity regime $L \le e^{c' \beta}$.
It is reasonable to expect that the no-strings rule and a good decoder
would be sufficient to guarantee that the memory time increases with the system size.
However, when the entropy is considered, the situation is more complicated.
From the degeneracy formula of the cubic code in Chapter~\ref{chap:cubic-code},
we know that there are system sizes where the number $k$ of encoded qubits is 2.
Since the cubic code has exactly one $X$-type stabilizer generator $G_X$ in each elementary cube,
$k/2$ is equal to the number of ways that $G_X$'s multiply to the identity.
Therefore, when $k = 2$, any configuration of defects is allowed
as long as they are in an even number,
and the number of excited states at a particular energy 
is given by a combinatorial factor.
In contrast, the two-dimensional Ising model has
only exponentially many configurations at a particular energy
(the number of self-avoiding walks).
The energy barrier of the cubic code is lower than that of the 2D Ising model,
while the entropic contribution is stronger in the cubic code
than in the 2D Ising model.
The inequality $L \le e^{c' \beta}$ can be understood as a requirement 
that entropic contribution should not be too large.

Perhaps, this is already hinted from the smooth thermal partition function of the cubic code
presented in Section~\ref{sec:thermal-partition-function}.
It appears that the strong entropic contribution is unavoidable at the presence of point-like defects.
We have seen from Chapter~\ref{chap:lowD-codes} 
that the point-like defects always exists in three-dimensional translationally invariant topological codes.
Thus, it would be impossible to have a truly self-correcting quantum memory
based on local translationally invariant quantum codes in three dimensions
whose memory time increases unbounded with the system size,
similar to the 4D toric code~\cite{AlickiHorodeckiHorodeckiEtAl2010thermal, ChesiLossBravyiEtAl2010Thermodynamic}.

We model the thermal interaction by taking Davies weak coupling limit~\cite{Davies1974}.
In order to make use of the results from previous chapters,
we continue to assume topological quantum order (TQO) condition 1 and 2 
defined in Chapter~\ref{chap:consq-no-strings},
and the no-strings rule of Chapter~\ref{chap:cubic-code}.
The two TQO conditions demand that ground states must be locally indistinguishable,
and that any locally created cluster of defects must be created from a ground state
by an operator supported on the immediate neighborhood of the cluster.
Note that any translationally invariant exact code Hamiltonian, such as the cubic code,
automatically satisfies both of the TQO conditions.
Of course, the cubic code satisfies the no-strings rule.

One more technical requirement to show the long memory time is that the number $k$ of encoded qubits,
or the ground-state degeneracy must be small.
It is an ironic requirement at least for the cubic code
because the small $k$ implies a large number of excited states and large entropic contribution.
In the three-dimensional case, we know from Corollary~\ref{cor:annT-3d-characteristic-dimension-0}
that the characteristic dimension must be 1 in order for the no-strings rule to be obeyed.
The nonzero characteristic dimension generally implies a growing $k(L)$ as a function of $L$.
It is not so clear whether it is always possible to find a family of lattice sizes $\{ L_i \}$
such that $k(L_i)$ is small.
Although we do not know how to resolve this, the cubic code causes no problem since we know
there is an infinite family $\{L_i\}$ such that $k(L_i) = 2$ 
by Corollary~\ref{cor:cubic-code-degeneracy-formula}.

\section{Previous work}

Alternative routes towards quantum self-correction in topological memories proposed in the literature,
focus on  finding new mechanisms for suppressing diffusion of topological defects
(here and below we only consider zero-dimensional defects).
Arguably, the simplest of such mechanisms would be to have no topological defects in the first place.
Unfortunately, this seems to require four spatial dimensions.
The 4D toric code~\cite{DennisKitaevLandahlEtAl2002Topological} provides the only known example of a truly self-correcting quantum memory.
As was shown by Alicki and Horodecki's~\cite{AlickiHorodeckiHorodeckiEtAl2010thermal},
the memory time of the 4D toric code
grows exponentially with the lattice size for small enough bath temperature.
The first  3D topological memory in which diffusion of defects is constrained by superselection
rules was proposed by Chamon~\cite{Chamon2005Quantum}, see also~\cite{BravyiLeemhuisTerhal2011Topological}.
Topological defects in this model have a limited mobility restricted to certain
subspaces of $\mathbb{R}^3$ or have no mobility at all.
However, the model has no macroscopic energy barrier that could suppress the diffusion.
2D topological memories
in which diffusion of anyons is suppressed by effective long-range interactions
were studied by Chesi et al.~\cite{ChesiRoethlisbergerLoss2010Self-Correcting}
and Hamma et al~\cite{HammaCastelnovoChamon2009ToricBoson}.
A quenched disorder and Anderson localization were proposed as
a means of suppressing propagation of defects at the zero temperature
by Wootton and Pachos~\cite{WoottonPachos2011disorder} and,
independently, by Stark et al.~\cite{StarkImamogluRenner2011Localization},
see also~\cite{BravyiKoenig2011}.
A no-go theorem for quantum self-correction based on 3D stabilizer Hamiltonians
in which ground-state degeneracy does not depend on lattice dimensions
was proved by Yoshida~\cite{Yoshida2011feasibility}.
A different line of research initiated
by Pastawski et al.~\cite{PastawskiClementeCirac2011dissipation}
focuses  on quantum memories in which active error correction
is imitated by engineered dissipation driving the memory system
towards the ground state (as opposed to the Gibbs state).
Finally, let us emphasize that quantum  self-correction is technically different from
the thermal stability of topological phases,
see, for instance,~\cite{CastelnovoChamon2007Entanglement,
IblisdirEtAl2010thermal,
NussinovOrtiz2008Autocorrelations,
Hastings2011warmTQO}.
While the latter  attempts to establish the presence (or absence) of topological order
in the equilibrium thermal state, quantum self-correction is mostly concerned with the relaxation time
towards the equilibrium state.

\section{Storage scheme}

In order to use any memory, either classical or quantum,
a user must be able to write, store, and read information.
In this section we describe these steps
formally for a topological quantum memory based on the 3D cubic code.
The Hamiltonian of the memory is
\begin{equation*}
H=-J \sum_{c} G^X_c + G^Z_c,
\end{equation*}
where the sum runs over all $L^3$ elementary cubes $c$ and the operators $G^X_c$, $G^Z_c$
act on the qubits of $c$ as shown on Fig.~\ref{fig:CubicCode}.
The positive coupling constant $J$ is set to $J=\half$ for simplicity.

Suppose at time $t=0$ the memory system is initialized in some ground state $\rho(0)$
encoding a quantum state to be stored; the ground space is the code space.
We model interaction between the memory system and the thermal bath
using the Davies weak coupling limit~\cite{Davies1974}.
It provides a Markovian master equation of the following form:
\begin{equation}
\dot{\rho}(t)=-i[H,\rho(t)] + \mathcal L(\rho(t)), \quad t\ge 0.
\label{eq:Markovian-master}
\end{equation}
Here $\rho(t)$ is the state of the memory system at time $t$ 
and $\mathcal L$ is the Lindblad generator describing dissipation of energy.
To define $\mathcal L$, let us choose some set of self-adjoint operators 
$\{A_\alpha\}$ through which the memory can couple to the bath.
We assume that each $A_\alpha$ acts nontrivially on a constant number of qubits.
For example, $\{A_\alpha\}$ could be the set of all single-qubit Pauli operators.
Let 
\begin{equation}
A_{\alpha}(t)= e^{iHt} A_\alpha e^{-iHt} = \sum_\omega e^{-i\omega t} A_{\alpha,\omega} .
\label{eq:jump-ops}
\end{equation}
We will see later in Eq.~\eqref{Acom} that 
$A_{\alpha,\omega}$ maps eigenvectors of $H$ with energy $E$ 
to eigenvectors with energy $E-\omega$.
Then
\begin{equation}
\mathcal L( \rho) = \sum_\alpha \sum_\omega h(\alpha,\omega) 
\left(
A_{\alpha,\omega} \rho A_{\alpha,\omega}^\dag - \half \{  \rho,A_{\alpha,\omega}^\dag A_{\alpha,\omega} \} 
\right).
\label{eq:lindbladian}
\end{equation}
The coefficient $h(\alpha,\omega)$ is the rate of quantum jumps caused by $A_\alpha$ 
transferring energy $\omega$ from the memory to the bath.
It must obey the detailed balance condition~\cite{Davies1974}
\begin{equation}
\label{DB}
h(\alpha,-\omega)=e^{-\beta \omega} h(\alpha,\omega),
\end{equation}
where $\beta$ is the inverse bath temperature.
The detailed balance condition Eq.~\eqref{DB} is the only part
of our model that depends on  the bath temperature. 
It guarantees that the Gibbs state $\rho_\beta \sim e^{-\beta H}$ 
is the fixed point of the dynamics as $\mathcal L(\rho_\beta)=0$.
Identities in the proof of Proposition~\ref{prop:L-Liouville-hermitian}
are useful to see $\mathcal L(\rho_\beta) = 0$.
This is a unique fixed point under certain natural ergodicity conditions~\cite{Spohn1977}.
Furthermore, we shall assume that $\|A_\alpha\|\le 1$ and
\begin{equation}
\label{eq:weak}
\max_{\alpha,\omega} h(\alpha,\omega) = O(1).
\end{equation}
Let us remark that the Davies weak coupling limit was adopted
as a  model of the thermal dynamics
in most of the previous works with a rigorous analysis of quantum self-correction;
see, for instance,~\cite{AlickiHorodeckiHorodeckiEtAl2010thermal,
AlickiFannesHorodecki2009thermalization,
ChesiLossBravyiEtAl2010Thermodynamic,
ChesiRoethlisbergerLoss2010Self-Correcting}.

The final state $\rho(t)$ generated by the Davies dynamics 
can be regarded as a corrupted version of the initial encoded state $\rho(0)$.
A decoder retrieves the encoded information from $\rho(t)$
by performing a syndrome measurement and an error correction.
A syndrome measurement involves a non-destructive eigenvalue measurement 
of all stabilizer generators $G^X_c$, $G^Z_c$.
The measured syndrome $S$ can be regarded 
as a classical bit string that assigns an eigenvalue $\pm 1$ to each generator. 
The error correction step is specified by an algorithm 
that takes as input the measured syndrome $S$ 
and returns a correcting Pauli operator $P_{ec}(S)$.
Let $\Pi_S$ be the projector onto the subspace with syndrome $S$.
The net action of the decoder on states can be described
by  a trace preserving completely positive (TPCP) linear map
\begin{equation}
\label{decoder}
\Phi_{ec}(\rho)=\sum_S P_{ec}(S)\Pi_S \, \rho \Pi_S P_{ec}(S)^\dag,
\end{equation}
where the sum runs over all possible syndromes.
We choose the renormalization group decoder
presented in Chapter~\ref{chap:RG-decoder}.

\section{Properties of the Lindbladian}

\begin{prop}[\cite{ChesiLossBravyiEtAl2010Thermodynamic}]
\label{prop:Lindblad}
Each operator $A_{\alpha,\omega}$ acts non-trivially only on $O(1)$ qubits.
Furthermore,
\[
 \| \mathcal L_+ \|_1 \equiv \sup_X \frac{ \| \mathcal L_+(X) \|_1 }{ \| X \|_1 }  = O(N).
\]
\end{prop}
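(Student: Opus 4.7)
The plan is to prove the two assertions in sequence, exploiting heavily the fact that $H$ is a commuting sum of local Pauli terms $-\half G_c$ with $G_c^2=I$. First I would establish locality of the Fourier components $A_{\alpha,\omega}$. Let $R$ be the support of $A_\alpha$ (of size $O(1)$ by hypothesis) and let $\bar R$ be the union of $R$ with the supports of all generators $G_c$ whose support intersects $R$. Since each $G_c$ is supported on a unit cube and only $O(1)$ of them can touch an $O(1)$-sized region, $|\bar R|=O(1)$. Any $G_c$ with $c$ outside this list commutes with every operator supported on $R$, hence commutes with $A_\alpha$, so the corresponding factor of $e^{-it G_c/2}$ passes through and cancels. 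Thus
\[
 A_\alpha(t)=e^{iHt}A_\alpha e^{-iHt}=\Big(\prod_{c:\,\mathrm{supp}(G_c)\cap R\neq\emptyset}e^{-it G_c/2}\Big) A_\alpha \Big(\prod_{c} e^{it G_c/2}\Big),
\]
which is manifestly supported in $\bar R$ for all $t$. Since $A_{\alpha,\omega}$ arises as a Fourier component
\[
 A_{\alpha,\omega}=\lim_{T\to\infty}\frac{1}{2T}\int_{-T}^{T}e^{i\omega t}A_\alpha(t)\,dt,
\]
it inherits the support $\bar R$, proving the first claim.

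Next I would control the set of Bohr frequencies and the norms. Expand $A_\alpha$ in the Pauli basis on $\bar R$. The linear span of $\{G_{c_1}^{\epsilon_1}\cdots G_{c_m}^{\epsilon_m}P:\epsilon_j\in\{0,1\}\}$ attached to each Pauli $P\in A_\alpha$, where $c_1,\ldots,c_m$ enumerate the generators touching $R$, is a finite-dimensional subalgebra invariant under $\mathrm{ad}_H$. The restriction of $\mathrm{ad}_H$ to this subalgebra takes eigenvalues of the form $\sum_{c}\pm 1$ with $c$ drawn from this $O(1)$-sized list, so the distinct Bohr frequencies $\omega$ appearing for a fixed $\alpha$ number at most $O(1)$. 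The components $A_{\alpha,\omega}$ are spectral projections of $A_\alpha$ under $\mathrm{ad}_H$; combined with the hypothesis $\|A_\alpha\|\le 1$ and locality on an $O(1)$-sized region, one obtains $\|A_{\alpha,\omega}\|=O(1)$ uniformly in $\alpha,\omega$.

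For the norm bound on $\mathcal L_+(X)=\sum_{\alpha,\omega}h(\alpha,\omega)\,A_{\alpha,\omega}XA_{\alpha,\omega}^\dagger$, I would apply the triangle inequality together with the submultiplicativity $\|AXB\|_1\le\|A\|\,\|X\|_1\,\|B\|$ to obtain
\[
 \|\mathcal L_+(X)\|_1\le\Bigl(\sum_{\alpha,\omega}h(\alpha,\omega)\,\|A_{\alpha,\omega}\|^2\Bigr)\|X\|_1.
\]
Using $\max h=O(1)$ from Eq.~\eqref{eq:weak}, the bound $\|A_{\alpha,\omega}\|=O(1)$ from the previous paragraph, the finiteness $|\{\omega:A_{\alpha,\omega}\ne 0\}|=O(1)$, and the observation that $\alpha$ indexes $O(N)$ local jump operators, the sum contains $O(N)$ nonzero terms and therefore $\|\mathcal L_+(X)\|_1\le O(N)\|X\|_1$, as claimed. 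The main subtlety to carry through carefully will be the bookkeeping in the Pauli expansion of $A_\alpha(t)$ to extract the $O(1)$ bound on the number of Bohr frequencies; after that, the remaining steps are routine operator-norm estimates that do not require any property of $H$ beyond the commuting-Pauli-locality already used above.
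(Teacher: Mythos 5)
Your proposal is correct and follows essentially the same route as the paper: split off the $O(1)$ generators that do not commute with $A_\alpha$ (or, as you do, the slightly larger set whose support meets that of $A_\alpha$), observe that the conjugation by $e^{iHt}$ reduces to conjugation by the local Hamiltonian $H_\alpha$, hence $A_\alpha(t)$ and its Fourier components remain supported on an $O(1)$-sized region and have only $O(1)$ Bohr frequencies, then finish with the triangle inequality and $\|AXB\|_1\le\|A\|\,\|X\|_1\,\|B\|$. One small slip to flag: you wrote $\mathcal L_+(X)=\sum_{\alpha,\omega}h(\alpha,\omega)\,A_{\alpha,\omega}X A_{\alpha,\omega}^\dagger$, which omits both the anticommutator terms $-\frac12\{X, C_{\alpha,\omega}^\dagger C_{\alpha,\omega}\}$ and the projector in the definition $C_{\alpha,\omega}=\Pi_+ A_{\alpha,\omega}$ from Eq.~\eqref{Lq}. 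This doesn't damage the conclusion, since each of the three Lindblad terms is controlled by the same estimate $\|CXC^\dagger\|_1,\|XC^\dagger C\|_1,\|C^\dagger C X\|_1\le\|C\|^2\|X\|_1$ and $\|C_{\alpha,\omega}\|\le\|A_{\alpha,\omega}\|\le 1$, at the cost of an irrelevant constant factor of $2$; but you should state the Lindbladian in full before bounding it.
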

\begin{proof}
Recall that each operator $A_\alpha$ acts on $O(1)$ qubits.
Since $H$ is a sum of pairwise commuting terms, we can represent
$A_\alpha(t)$ as  $A_\alpha(t)=e^{iH_\alpha t} A_\alpha e^{-iH_\alpha t}$
where $H_\alpha$ is obtained from $H$ by retaining only those stabilizer generators that do not
commute with $A_\alpha$.
(Compare it with Eq.~\eqref{eq:jump-ops}.)
All such generators must share at least one qubit with $A_\alpha$.
Therefore  $A_\alpha(t)$ and $A_{\alpha,\omega}$ act non-trivially only $O(1)$ qubits.
Furthermore, since $H_\alpha$ has $O(1)$ distinct integer eigenvalues,
$A_\alpha(t)$ has only $O(1)$ distinct Bohr frequencies;
the summation over $\omega$ in Eq.~\eqref{eq:jump-ops} is finite.
The bound $\| A_{\alpha,\omega} \|\le 1$ follows trivially from our assumption $\| A_\alpha\|\le 1$.
The norm of $\mathcal{L}_+$ is then
bounded from Eq.~\eqref{eq:lindbladian} using triangle inequality, $\| X Y \|_1 \le \| X \| \cdot \| Y \|_1$, and Eq.~\eqref{eq:weak}.
\end{proof}
\begin{prop}
\begin{equation}
\label{Acom}
H A_{\alpha,\omega} - A_{\alpha,\omega}H  = - \omega A_{\alpha,\omega} \quad \text{and} \quad
H A_{\alpha,\omega}^\dag  - A_{\alpha,\omega}^\dag H =  + \omega A_{\alpha,\omega}^\dag
\end{equation}
\end{prop}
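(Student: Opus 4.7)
The plan is to extract the commutator identities from the Heisenberg-picture Fourier decomposition Eq.~\eqref{eq:jump-ops} by differentiating in $t$ and then matching harmonics. Starting from
\[
e^{iHt} A_\alpha e^{-iHt} = \sum_\omega e^{-i\omega t} A_{\alpha,\omega},
\]
I would differentiate both sides at arbitrary $t$. The left-hand side yields $i\,e^{iHt}[H,A_\alpha]e^{-iHt} = i\sum_\omega e^{-i\omega t}[H,A_{\alpha,\omega}]$, while the right-hand side yields $-i\sum_\omega \omega\, e^{-i\omega t} A_{\alpha,\omega}$. Equating gives the identity $\sum_\omega e^{-i\omega t}\bigl([H,A_{\alpha,\omega}] + \omega A_{\alpha,\omega}\bigr) = 0$ for all $t\in\mathbb{R}$.

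Next I would invoke linear independence of the exponentials $\{e^{-i\omega t}\}$ for distinct real $\omega$; here the sum over $\omega$ is finite (as noted in the proof of Proposition~\ref{prop:Lindblad}, since $H_\alpha$ has only $O(1)$ distinct integer eigenvalues), so one can simply equate coefficients and conclude $[H,A_{\alpha,\omega}] = -\omega A_{\alpha,\omega}$, which is the first equation in Eq.~\eqref{Acom}.

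For the second equation, I would take the Hermitian adjoint of $[H,A_{\alpha,\omega}] = -\omega A_{\alpha,\omega}$. Since $H^\dag = H$ and the Bohr frequencies $\omega$ are real, the adjoint reads $A_{\alpha,\omega}^\dag H - H A_{\alpha,\omega}^\dag = -\omega A_{\alpha,\omega}^\dag$, i.e., $[H,A_{\alpha,\omega}^\dag] = +\omega A_{\alpha,\omega}^\dag$, as claimed.

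There is no real obstacle: the argument is a one-line Fourier uniqueness, and the only point deserving a brief remark is that the sum over $\omega$ is finite so that matching coefficients is unambiguous. If one prefers to avoid any implicit convergence issue, one could alternatively define $A_{\alpha,\omega}$ directly as the spectral projection $A_{\alpha,\omega} = \sum_{E-E'=\omega} \Pi_{E'} A_\alpha \Pi_E$, where $\Pi_E$ are the projectors onto eigenspaces of $H$; then Eq.~\eqref{Acom} follows immediately from $H\Pi_E = E\Pi_E$ on each summand. I would mention this as a parenthetical consistency check.
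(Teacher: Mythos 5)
Your proposal is correct. The paper proves the same statement via a different mechanism: it writes $A_{\alpha,\omega}$ as a Fourier-inversion integral over a period $T$, applies $H A_{\alpha,\omega}$ to an arbitrary eigenstate $\ket{\psi}$ of energy $E$, integrates by parts (the boundary term vanishes by periodicity), and reads off $HA_{\alpha,\omega}\ket{\psi} = (E-\omega)A_{\alpha,\omega}\ket{\psi}$, whence the commutator. You instead differentiate the Fourier-series identity $e^{iHt}A_\alpha e^{-iHt} = \sum_\omega e^{-i\omega t}A_{\alpha,\omega}$ in $t$, move $H$ through $e^{iHt}$ on the left-hand side, and match Fourier coefficients by linear independence of the finitely many exponentials. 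Both are sound; your route avoids introducing the period $T$ and the eigenbasis entirely, while the paper's route avoids invoking Fourier uniqueness by working pointwise on the spectral decomposition. Your derivation of the second equation from the adjoint of the first (using $H^\dag = H$ and $\omega\in\mathbb{R}$) is a clean shortcut; the paper treats it implicitly in the same way. Your parenthetical spectral-projector definition $A_{\alpha,\omega} = \sum_{E-E'=\omega}\Pi_{E'}A_\alpha\Pi_E$ is in fact the most transparent way to see both identities, and is the standard characterization underlying the Davies construction, so it serves well as a consistency check.
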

\begin{proof}
Let us drop the subscript $\alpha$ momentarily for notational convenience.
The spectral component $A_\omega$ is given by a Fourier transformation
\[
 A_\omega = \int \frac{\dd t}{T} \ e^{i \omega t} \left( e^{i H t} A e^{-i H t} \right)
\]
where the integral is over a period $T$ such that $\frac{f T}{2\pi}$ is an integer for every Bohr frequency $f$ of $A(t)$.
Since there are only finitely many Bohr frequencies, such $T$ exists.
Let $\ket \psi$ be any energy eigenstate of energy $E$.
Then,
\begin{align*}
 H A_\omega \ket \psi 
&= \int \frac{\dd t}{T} \ e^{i \omega t} H e^{i H t} A e^{-i H t} \ket \psi \\
&= \int \frac{\dd t}{T} \ e^{i (\omega-E) t} H e^{i H t} A \ket \psi \\
&= \int \frac{\dd t}{T} \ e^{i (\omega-E) t} (-i)\frac{\partial}{\partial t} e^{i H t} A \ket \psi \\
&= \int \frac{\dd t}{T} \ (-i)\frac{\partial}{\partial t} e^{i (\omega-E) t}  e^{i H t} A \ket \psi 
      -(\omega-E) \int \frac{\dd t}{T} \ e^{i (\omega-E) t}  e^{i H t} A \ket \psi  \\
&= \frac{-i}{T} \left[ e^{i \omega t} A(t) \right]_0^T 
   + (E-\omega) \int \frac{\dd t}{T} \ e^{i \omega t} e^{i H t} A e^{-i H t} \ket \psi \\
&= (E-\omega) A_\omega \ket \psi \\
&= -\omega A_\omega \ket \psi + A_\omega H \ket \psi.
\end{align*}
Since $\ket \psi$ was an arbitrary energy eigenstate, we have Eq.~\eqref{Acom}.
\end{proof}
\begin{prop}
\label{prop:com}
Let $\hat{H}$ be the linear map defined by $\hat{H}(X)=HX-XH$.
The map $\hat{H}$ commutes with the Lindblad generator $\mathcal L$.
Therefore, $\hat H (\rho(t)) = 0$, and $\rho(t)=e^{\mathcal L t}(\rho(0))$.
\end{prop}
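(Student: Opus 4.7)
The plan is to establish the commutation $[\hat H,\mathcal L]=0$ by a direct computation on each Lindblad summand, and then combine it with the fact that $\rho(0)$, being a density matrix supported on the (degenerate) ground space of $H$, satisfies $\hat H(\rho(0))=0$. From these two ingredients, the splitting of the generator $-i\hat H+\mathcal L$ into two commuting pieces yields the claim.

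First I will show that for every pair $(\alpha,\omega)$ the superoperator
$\mathcal D_{\alpha,\omega}(X)=A_{\alpha,\omega}XA_{\alpha,\omega}^\dagger-\tfrac12\{X,A_{\alpha,\omega}^\dagger A_{\alpha,\omega}\}$
commutes with $\hat H$. Writing $A=A_{\alpha,\omega}$ and using the Bohr relations \eqref{Acom}, namely $HA=AH-\omega A$ and $A^\dagger H=HA^\dagger-\omega A^\dagger$, a short computation gives $\hat H(AXA^\dagger)=HAXA^\dagger-AXA^\dagger H=A(HX-XH)A^\dagger=A\,\hat H(X)\,A^\dagger$, where the $\omega$ contributions cancel in pairs. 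For the anticommutator piece, note that $[H,A^\dagger A]=[H,A^\dagger]A+A^\dagger[H,A]=\omega A^\dagger A-\omega A^\dagger A=0$, so $A^\dagger A$ is an $H$\nobreakdash-invariant, which immediately gives $\hat H(\{X,A^\dagger A\})=\{\hat H(X),A^\dagger A\}$. Summing over $\alpha,\omega$ with the coefficients $h(\alpha,\omega)$ yields $[\hat H,\mathcal L]=0$.

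Second, since $\hat H$ and $\mathcal L$ commute as operators on the space of matrices, the full Davies generator $\mathcal G=-i\hat H+\mathcal L$ splits as $e^{\mathcal G t}=e^{-i\hat H t}\,e^{\mathcal L t}=e^{\mathcal L t}\,e^{-i\hat H t}$. The initial state $\rho(0)$ is a density matrix on the ground space, hence $H\rho(0)=E_0\rho(0)=\rho(0)H$ and $\hat H(\rho(0))=0$. Because $\hat H$ commutes with $\mathcal L$, the kernel of $\hat H$ is preserved by $e^{\mathcal L t}$, so $\hat H\bigl(e^{\mathcal L t}(\rho(0))\bigr)=e^{\mathcal L t}\bigl(\hat H(\rho(0))\bigr)=0$ for all $t\ge 0$. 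Consequently $e^{-i\hat H t}$ acts trivially on $e^{\mathcal L t}(\rho(0))$, and we conclude $\rho(t)=e^{\mathcal G t}(\rho(0))=e^{\mathcal L t}(\rho(0))$, together with $\hat H(\rho(t))=0$.

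The argument is essentially mechanical once \eqref{Acom} is in hand; the only place where care is needed is the cancellation of the $\omega$\nobreakdash-terms in the jump piece $A X A^\dagger$, which relies crucially on $A$ lowering the energy by $\omega$ while $A^\dagger$ raises it by $\omega$, so that acting $\hat H$ on the outside produces $(-\omega+\omega)AXA^\dagger$ plus an $A\hat H(X)A^\dagger$ term. No obstacle beyond this bookkeeping is anticipated.
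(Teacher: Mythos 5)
Your proof is correct and follows essentially the same route as the paper: use the Bohr relations \eqref{Acom} to show $\hat H$ commutes with both the jump term $A\rho A^\dagger$ and the anticommutator term, then conclude $[\hat H,\mathcal L]=0$ and split the exponential, using $\hat H(\rho(0))=0$. The paper packages the computation slightly more compactly via the Leibniz rule $\hat H(XY)=\hat H(X)Y+X\hat H(Y)$ together with $\hat H(A)=-\omega A$, $\hat H(A^\dagger)=+\omega A^\dagger$, but the underlying cancellation of the $\omega$ contributions is identical.
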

\begin{proof}
It is elementary that $\hat H(XY) = \hat H(X) Y + X \hat H (Y)$.
Eq.~\eqref{Acom} reads $\hat H (A_{\alpha,\omega}) = -\omega A_{\alpha,\omega}$
and $\hat H (A_{\alpha,\omega}^\dagger) = +\omega A_{\alpha,\omega}^\dagger$.
Hence, $\hat H (A_{\alpha,\omega} A_{\alpha,\omega}^\dagger) = 0$.
Clearly,
\begin{align*}
\hat H \mathcal L (\rho) 
&= \hat H h \left( a \rho a^\dagger - \half \rho a^\dagger a - \half a^\dagger a \rho \right) \\
&= -h\omega a \rho a^\dagger + ha \hat H (\rho) a^\dagger + h\omega a \rho a^\dagger 
   - \half h \hat H (\rho) a^\dagger a 
   - \half h a^\dagger a \hat H(\rho) \\
&= \mathcal L \hat H (\rho)
\end{align*}
where $a = A_{\alpha,\omega}$, $h=h(\alpha,\omega)$ with the sum over $\alpha$ and $\omega$ understood,
and $\rho$ is arbitrary.
It follows that
\[
\rho(t)=e^{-i\hat{H} t + \mathcal L t}(\rho(0)) =e^{\mathcal L t} \circ  e^{-i\hat{H} t} (\rho(0))= e^{\mathcal L t}(\rho(0)),
\]
since $\hat H(\rho(0)) = [H,\rho(0)]=0$. It implies that $\hat H(\rho(t)) = 0$.
\end{proof}
\begin{prop}[\cite{AlickiHorodeckiHorodeckiEtAl2010thermal}]
Let $\mathcal L^*$ be the adjoint linear map of $\mathcal L$
(the one describing time evolution in the Heisenberg picture)
with respect to Hilbert-Schmidt inner product.
Define Liouville inner product by $\langle X,Y \rangle_\beta \equiv \trace{\rho_\beta X^\dag Y}$.
Then, $\mathcal L^*$ is self-adjoint with respect to Liouville inner product; 
$\langle X, \mathcal L^*(Y) \rangle_\beta = \langle \mathcal L^*(X), Y \rangle_\beta$.
\label{prop:L-Liouville-hermitian}
\end{prop}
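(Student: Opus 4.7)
The plan is to unpack $\mathcal{L}^*$ into its ``jump'' part $\mathcal{J}^*(Y)=\sum_{\alpha,\omega} h(\alpha,\omega)\, A_{\alpha,\omega}^\dagger Y A_{\alpha,\omega}$ and its ``damping'' part $-\tfrac{1}{2}\{K,\,\cdot\,\}$, where $K=\sum_{\alpha,\omega} h(\alpha,\omega)\, A_{\alpha,\omega}^\dagger A_{\alpha,\omega}$, and verify the Liouville-symmetry of each separately. The three ingredients I expect to use are: (i) the Fourier inversion formula $A_{\alpha,\omega}=\int \frac{dt}{T} e^{i\omega t} A_\alpha(t)$ together with self-adjointness of $A_\alpha$ (so $A_\alpha(t)^\dagger=A_\alpha(t)$), which gives $A_{\alpha,\omega}^\dagger=A_{\alpha,-\omega}$; (ii) the commutation relation Eq.~\eqref{Acom}, which, upon exponentiation, yields $A_{\alpha,\omega}\,\rho_\beta=e^{-\beta\omega}\rho_\beta\, A_{\alpha,\omega}$ (this is the quantum analog of detailed balance on the operator side); and (iii) the bath's detailed balance condition Eq.~\eqref{DB}.

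For the jump part I will write
\[
h(\alpha,\omega)\,\langle X, A_{\alpha,\omega}^\dagger Y A_{\alpha,\omega}\rangle_\beta
=h(\alpha,\omega)\,\trace\!\bigl(\rho_\beta X^\dagger A_{\alpha,\omega}^\dagger Y A_{\alpha,\omega}\bigr),
\]
push $A_{\alpha,\omega}$ around the trace cyclically, convert $A_{\alpha,\omega}\rho_\beta$ to $e^{-\beta\omega}\rho_\beta A_{\alpha,\omega}$ using (ii), absorb the prefactor $e^{-\beta\omega}h(\alpha,\omega)=h(\alpha,-\omega)$ via (iii), and finally recognize $A_{\alpha,\omega} X^\dagger A_{\alpha,\omega}^\dagger=(A_{\alpha,-\omega}^\dagger X A_{\alpha,-\omega})^\dagger$ using (i). This identifies the expression with $h(\alpha,-\omega)\,\langle A_{\alpha,-\omega}^\dagger X A_{\alpha,-\omega},Y\rangle_\beta$. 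Summing over $(\alpha,\omega)$ and relabelling $\omega\mapsto-\omega$ then yields $\langle X,\mathcal{J}^*(Y)\rangle_\beta=\langle \mathcal{J}^*(X),Y\rangle_\beta$.

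For the damping part, I will observe that $K$ is manifestly self-adjoint, and that $[H,A_{\alpha,\omega}^\dagger A_{\alpha,\omega}]=0$ by \eqref{Acom}, so $[K,H]=0$ and therefore $[K,\rho_\beta]=0$. Then $\langle KX,Y\rangle_\beta=\trace(\rho_\beta X^\dagger KY)=\langle X,KY\rangle_\beta$ trivially, and $\langle XK,Y\rangle_\beta=\trace(K\rho_\beta X^\dagger Y)=\trace(\rho_\beta X^\dagger YK)=\langle X,YK\rangle_\beta$ using cyclicity together with $[K,\rho_\beta]=0$. Averaging gives $\langle X,\tfrac12\{K,Y\}\rangle_\beta=\langle \tfrac12\{K,X\},Y\rangle_\beta$. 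Combining with the jump-part identity proves that $\mathcal{L}^*$ is self-adjoint in the Liouville inner product.

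The only delicate step is the jump-part manipulation: one must keep track of which operator moves past $\rho_\beta$ and be careful that the prefactor $e^{-\beta\omega}$ comes out with the correct sign, then invoke detailed balance exactly once and (i) exactly once so as to land on $T_{\alpha,-\omega}(X)$ rather than some mismatched $T_{\alpha,\omega}(X)$. Everything else is bookkeeping with cyclicity of the trace.
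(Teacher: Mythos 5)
Your proof is correct and follows essentially the same route as the paper: the paper also shows the anticommutator terms cancel by noting $A_\omega^\dagger A_\omega$ commutes with $\rho_\beta$, and then matches the two jump terms using the same three ingredients (the operator relation $A_\omega\rho_\beta = e^{-\beta\omega}\rho_\beta A_\omega$, detailed balance, and $A_{\alpha,\omega}^\dagger = A_{\alpha,-\omega}$) with a final relabeling $\omega \mapsto -\omega$. Decomposing explicitly into jump and damping parts, as you do, makes the cancellation a bit more transparent but is mathematically identical.
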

\begin{proof}
We suppress the index $\alpha$ for notational convenience.
Using $e^H X e^{-H} = e^{\hat H} (X)$ and $\hat H (A_\omega) = - \omega A_\omega$, we have 
$e^{\beta H} A_\omega e^{-\beta H} = e^{\beta \hat H}(A_\omega) = e^{-\beta \omega} A_\omega$,
or $A_\omega \rho_\beta = e^{-\beta \omega} \rho_\beta A_\omega$. 
Hence, $\rho_\beta (A_\omega^\dagger A_\omega) = (A_\omega^\dagger A_\omega) \rho_\beta$.
Now,
\begin{align*}
\langle \mathcal L ^* X, Y \rangle - \langle X, \mathcal L^* Y \rangle 
&= \sum_\omega h(\omega) \trace\left( 
A_\omega^\dagger X^\dagger A_\omega Y \rho_\beta 
- \half X^\dagger A_\omega^\dagger A_\omega  Y \rho_\beta 
- \half A_\omega^\dagger A_\omega X^\dagger Y \rho_\beta
\right. \\
& \left. \quad \quad \quad \quad \quad \quad
- X^\dagger A_\omega^\dagger Y A_\omega \rho_\beta 
+ \half X^\dagger Y A_\omega^\dagger A_\omega  \rho_\beta 
+ \half X^\dagger A_\omega^\dagger A_\omega Y \rho_\beta
\right)\\
&= \sum_\omega h(\omega) \trace \left( A_\omega^\dagger X^\dagger A_\omega Y \rho_\beta 
- X^\dagger A_\omega^\dagger Y A_\omega \rho_\beta  \right) \\
&= \sum_\omega h(\omega) \trace \left( A_\omega^\dagger X^\dagger A_\omega Y \rho_\beta 
- X^\dagger A_\omega^\dagger Y  \rho_\beta A_\omega e^{-\beta \omega} \right)
\end{align*}
Applying the detailed balance condition $e^{-\beta \omega} h(\omega) = h(-\omega)$ of Eq.~\eqref{DB},
we see that
\begin{align*}
&=\sum_\omega h(\omega) \trace \left( A_\omega^\dagger X^\dagger A_\omega Y \rho_\beta \right)
- \sum_\omega h(-\omega) \trace \left( X^\dagger A_\omega^\dagger Y  \rho_\beta A_\omega \right) .
\end{align*}
Since $A_\omega$ is a Fourier transform of a Hermitian operator $A(t)$ in Eq.~\eqref{eq:jump-ops},
we apply $A_\omega^\dagger = A_{-\omega}$ to conclude
\begin{align*}
&=\sum_\omega h(\omega) \trace \left( A_\omega^\dagger X^\dagger A_\omega Y \rho_\beta \right)
- \sum_\omega h(-\omega) \trace \left( X^\dagger A_{-\omega} Y  \rho_\beta A_{-\omega}^\dagger \right)\\
& = 0 .
\end{align*}
The last equality is because $\omega$ is a dummy variable.
\end{proof}

\section{Analysis of thermal errors}
\label{sec:thermal}

We now analyze the relation between $\mathcal L$ and $\Phi_{ec}$.
Recall the following terminology and notations.
A ground state of the memory Hamiltonian will be referred to as a {\em vacuum}.
It will be convenient to perform an overall energy shift such that the vacuum has \emph{zero energy}.
A {\em Pauli operator} is an arbitrary tensor product of single-qubit Pauli operators
$X,Y,Z$ and the identity operators $I$. We will say that a Pauli operator creates
$m$ {\em defects} iff $P$ anticommutes with exactly $m$ stabilizer generators $G^X_c, G^Z_c$.
Equivalently, applying $P$ to the vacuum one obtains an eigenvector of $H$ with energy $m$.
For example, using the explicit form of the generators, see Figure~\ref{fig:CubicCode},
one can check that single-qubit $X$ or $Z$ errors create $4$ defects, while $Y$ errors create $8$ defects.
See Figure~\ref{fig:elementary-syndrome}.
We will say that a Pauli error $P$ is corrected by the decoder if $P_{ec}(S(P))=\pm PG$,
where $S(P)$ is the syndrome of $P$ and $G$ is a product of stabilizer generators.
Let $N=2L^3$ be the total number of qubits.
Note that $N$ is also the number of stabilizer generators for the cubic code.

Let $\Gamma=(P_0,P_1,\ldots,P_t)$ be a finite sequence of Pauli operators
such that  the operators $P_i$ and $P_{i+1}$ differ on at most one qubit
for all $0\le i <t$. 
We say that $\Gamma$ is an {\bf error path}
implementing a Pauli operator $P$ when $P_0=I$ and $P_t=P$.
Let $m_i$ be the number of defects created by $P_i$.
The maximum number of defects
\[
m(\Gamma)=\max_{0\le i\le t} \; m_i
\]
will be called an {\bf energy cost} of the error path $\Gamma$.
Given a Pauli operator $P$, we define its {\bf energy barrier} $\Delta(P)$
as the minimum energy cost of all error paths implementing $P$,
\[
\Delta(P)=\min_\Gamma \; m(\Gamma).
\]
Although the set of error paths is infinite,
the minimum always exists because the energy cost is a nonnegative integer.
In fact, it suffices to consider paths in which $P_i$ are all distinct.
The number of such paths is finite
since there are only finitely many Pauli operators for a given system size.

It is worth emphasizing that an operator $P$ may have a very large energy barrier even though
$P$ itself creates only a few defects or no defects at all.
Consider as an example the 2D Ising model, 
$H=- \half \sum_{\langle u v \rangle} Z_u Z_v$, where the sum runs
over all pairs of nearest neighbor sites on the square lattice 
of size $L\times L$ with open boundary conditions.
Then the logical-$X$ operator $P=\bigotimes_u X_u$ has an energy barrier $\Delta(P)=L$ 
since any sequence of bit-flips implementing $P$ must create a domain wall across the lattice.
It is clear that a Pauli operator $P$ acting nontrivially on $n$ qubits has energy barrier at most $O(n)$.

A naive intuition suggests that a stabilizer code Hamiltonian 
is a good candidate for being a self-correcting memory
if there exists an error correction algorithm, or a decoder,
that corrects all errors with a sufficiently small energy barrier.
Errors with a high energy barrier can confuse the decoder
and cause it to make wrong decisions,
but we expect that such errors are unlikely to be created by the thermal noise.
We make this intuition more rigorous.

Let $f$ be the maximum energy barrier of Pauli operators that appear in the expansion of
the quantum jump operators $A_{\alpha,\omega}$ or $A_{\alpha,\omega}^\dag A_{\alpha,\omega}$
of Eq.~\eqref{eq:lindbladian}.
Since $A_{\alpha,\omega}$ act on a constant number of qubits 
by Proposition~\ref{prop:Lindblad}, we have $f=O(1)$.
Below $m$ is an arbitrary energy cutoff.
Let $\mathcal D = \ker \hat H$ be the set of all operators that are commuting with the Hamiltonian $H$.
Every operator in $\mathcal D$ is block-diagonal in the energy eigenstate basis.
Since $\rho(0)$ is supported on the ground subspace of $H$, we have $\rho(0) \in \mathcal D$.
Below we only consider states from $\mathcal D$ and linear maps preserving $\mathcal D$.
Define an orthogonal  identity decomposition
\[
I=\Pi_- + \Pi_+
\]
where $\Pi_-$ projects onto the subspace with energy $<m+f$
and $\Pi_+$ projects onto the subspace with energy $\ge m+f$.
Introduce auxiliary Lindblad generators
\begin{equation}
\mathcal L_-( \rho)=\sum_\alpha \sum_\omega h(\alpha,\omega) \left(
B_{\alpha,\omega} \rho B_{\alpha,\omega}^\dag - \frac12 \{  \rho,B_{\alpha,\omega}^\dag B_{\alpha,\omega} \} \right),
\quad \mbox{where} \quad B_{\alpha,\omega} = \Pi_- A_{\alpha,\omega}
\label{Lp}
\end{equation}
and
\begin{equation}
\mathcal L_+( \rho)=\sum_\alpha \sum_\omega h(\alpha,\omega) \left(
C_{\alpha,\omega} \rho\,  C_{\alpha,\omega}^\dag - \frac12 \{  \rho,C_{\alpha,\omega}^\dag C_{\alpha,\omega} \} \right),
\quad \mbox{where} \quad C_{\alpha,\omega} = \Pi_+ A_{\alpha,\omega}.
\label{Lq}
\end{equation}
Simple algebra shows that $\mathcal L_-$ and $\mathcal L_+$ preserve $\mathcal D$ and
their \emph{restrictions on $\mathcal D$} satisfy
\begin{equation}
\label{LPQ}
\mathcal L=\mathcal L_- + \mathcal L_+.
\end{equation}
It is useful to note that any $X \in \mathcal D$ commutes with $\Pi_\pm$.
By abuse of notations, we shall apply Eq.\eqref{LPQ} as though it holds for all operators.

\begin{lem}
\label{lem:Lp}
Suppose that an error correction algorithm $s \mapsto P_{ec}(s)$ corrects
any Pauli error $P$ with the energy barrier smaller than $m+2f$.
Let $\Phi_{ec}$ be the corresponding decoder defined by Eq.\eqref{decoder}.
For any time $t\ge 0$ and for any state $\rho_0$ supported on the ground subspace of $H$ one has
\begin{equation*}
\Phi_{ec}(e^{\mathcal L_- t}(\rho_0))=\rho_0.
\end{equation*}
\end{lem}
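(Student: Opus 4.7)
The plan is a Dyson/trajectory expansion of $e^{\mathcal{L}_- t}(\rho_0)$ followed by a term-by-term analysis under $\Phi_{ec}$. Split $\mathcal{L}_- = \mathcal{J} + \mathcal{K}$ with $\mathcal{J}(\rho) = \sum h(\alpha,\omega) B_{\alpha,\omega}\rho B_{\alpha,\omega}^\dagger$ and $\mathcal{K}(\rho) = -K\rho - \rho K$ where $K = \tfrac12 \sum h(\alpha,\omega) B_{\alpha,\omega}^\dagger B_{\alpha,\omega}$, and write
\[
e^{\mathcal{L}_- t}(\rho_0) = \sum_{k\ge 0}\int_{0 \le t_1 \le \cdots \le t_k \le t} d^k t \sum_{\vec\alpha,\vec\omega} \Big(\prod_i h(\alpha_i,\omega_i)\Big)\, M\,\rho_0\, M^\dagger,
\]
with $M = N(t-t_k)\, B_{\alpha_k,\omega_k}\, N(t_k-t_{k-1}) \cdots B_{\alpha_1,\omega_1}\, N(t_1)$ and $N(s) = e^{-Ks}$. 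Then expand every jump operator $A_{\alpha,\omega}$ and every $A_{\alpha,\omega}^\dagger A_{\alpha,\omega}$ appearing inside $K$ as a linear combination of Pauli operators; by the definition of $f$, each such Pauli has energy barrier at most $f$.

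The technical core is the claim, established by induction on the number of jumps $k$, that each trajectory acts on $\rho_0$ to produce a linear combination of operators $P\rho_0 P'^\dagger$ where the Paulis $P, P'$ satisfy the twin invariants $|S(P)| < m+f$ and $\Delta(P) < m+2f$ (and likewise for $P'$). For the base case I use TQO1 to conclude that $N(s)$ acts as a scalar on the ground subspace: expanding the local operator $A^\dagger \Pi_- A$ in Paulis, each component which commutes with $\mathcal{G}$ must itself be a stabilizer by TQO1 because its support has constant size (hence fits inside $\ltqo$), while cross-syndrome components cannot contribute because $[K,H]=0$ forces energy preservation. For the inductive step, a jump $B_{\alpha,\omega}P = \Pi_- \sum_Q c_Q (QP)$ retains only Paulis $QP$ with $|S(QP)| < m+f$ (by the projector $\Pi_-$); and the barrier inequality $\Delta(QP) \le \max\!\big(\Delta(P),\, |S(P)| + \Delta(Q)\big)$, applied with $|S(P)| < m+f$, $\Delta(P) < m+2f$, and $\Delta(Q) \le f$, yields $\Delta(QP) < m+2f$. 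Drift factors $N(s)$ between jumps preserve the same invariant by the same TQO1 argument, since $K$ then acts as a scalar on each low-energy syndrome block and merely rescales coefficients without altering the Pauli labels.

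Once the representation is in hand, $\Phi_{ec}$ finishes the argument. The syndrome projectors in $\Phi_{ec}(\sigma)=\sum_S P_{ec}(S)\,\Pi_S\,\sigma\,\Pi_S\, P_{ec}(S)^\dagger$ annihilate off-diagonal terms $P\rho_0 P'^\dagger$ with $S(P)\ne S(P')$. For the diagonal terms $S(P)=S(P')$, the hypothesis that the decoder corrects every Pauli with barrier $<m+2f$ gives $P_{ec}(S(P))P,\; P_{ec}(S(P))P' \in \pm\mathcal{G}$, so $P_{ec}(S(P))\,P\rho_0 P'^\dagger\, P_{ec}(S(P))^\dagger$ is a signed multiple of $\rho_0$ because stabilizers fix $\rho_0$. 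Therefore $\Phi_{ec}(e^{\mathcal{L}_- t}(\rho_0))$ is a scalar multiple of $\rho_0$; trace preservation of both $\Phi_{ec}$ and $e^{\mathcal{L}_- t}$ then forces that scalar to equal $1$, yielding the claim.

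The hard part will be justifying that $K$ acts as a scalar on each low-energy syndrome block, which is what is needed to process the drift factors $N(s)$ without spoiling the Pauli bookkeeping. A priori $K$ is a global operator that can mix different syndromes of equal energy, so the argument requires a careful TQO1-based local analysis of the Pauli content of $A^\dagger \Pi_- A$ together with $[K,H]=0$ to show that cross-syndrome mixing either collapses to a stabilizer acting trivially on the relevant syndrome block or vanishes altogether. Once this obstacle is cleared the energy-barrier bookkeeping in the inductive step and the decoder computation in the final step are essentially mechanical.
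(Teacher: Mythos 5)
Your overall strategy --- expand $e^{\mathcal L_- t}(\rho_0)$ into Pauli-labeled trajectories, use $\Pi_-$ to bound the defect number of partial products, compose optimal sub-paths to bound the energy barrier, and finish by trace preservation --- is exactly the paper's route, and your barrier bookkeeping $\Delta(QP) \le \max\bigl(\Delta(P),\, |S(P)| + \Delta(Q)\bigr)$ together with the $\Pi_-$-controlled syndrome bound is precisely the paper's composition-of-error-paths argument.

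However, there is a genuine gap in your drift step, and it is the one you flagged yourself. The claim that $K$ (hence $N(s)=e^{-Ks}$) acts as a scalar on each low-energy syndrome block is false in general. Since $A_\alpha$ need not be a single Pauli (the hypotheses only assume local self-adjoint $A_\alpha$ with $\|A_\alpha\|\le 1$), the expansion $A_{\alpha,\omega}^\dagger \Pi_- A_{\alpha,\omega}=\sum_{Q,Q'}\bar c_{Q'}c_Q\, Q'\Pi_- Q$ contains terms with $S(Q'Q)\ne 0$, and these give nonzero matrix elements $\Pi_{S'}K\Pi_S$ between distinct syndrome blocks of equal energy; TQO1 only controls local Paulis of \emph{trivial} syndrome and says nothing about these, nor is $K$ itself a local operator to which TQO1 could be applied (the $\Pi_-$ inside is global). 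The resolution is not to diagonalize $K$ but to treat $B^\dagger B$ on the same footing as the jump $B$: Taylor-expand $e^{\mathcal L_- t}$ directly (no jump/drift split), so that each application of $\mathcal L_-$ sandwiches $\rho_0$ between products of $B=\Pi_- A$ and $B^\dagger B=A^\dagger\Pi_- A$; after expanding every $A$ in Paulis, each surviving ket is $\pm E_n\cdots E_1\ket g$ where each $E_i$ is a Pauli appearing in $A_{\alpha,\omega}$ or $A_{\alpha,\omega}^\dagger A_{\alpha,\omega}$ and every $\Pi_-$-controlled partial product has fewer than $m+f$ defects, so composing the optimal sub-path for each $E_i$ gives an error path of cost less than $m+2f$. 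Your label-tracking and trace-preservation steps then close the proof without $K$ ever needing to be block-diagonal or scalar.
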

\begin{proof}
Since all maps are linear, we may assume $\rho_0 =\ket{g}\bra{g}$ is a pure state.
Then, $e^{\mathcal L_- t}(\rho_0)$ is in the span of states of form $\ket{\psi}=\Pi_- E_n \cdots \Pi_- E_2 \Pi_- E_1 \ket{g}$,
where $E_i$ are Pauli operators that appears in the expansion of $A_{\alpha,\omega}$
or $A_{\alpha,\omega}^\dag A_{\alpha,\omega}$.
This follows from the Taylor expansion of $e^{\mathcal{L}_- t}$.
Since Pauli errors map eigenvectors of $H$ to eigenvectors of $H$, we conclude that
either $\ket{\psi}=0$, or $\ket{\psi}=E_n \cdots E_2 E_1 \ket g$.
Furthermore, the latter case is possible only if the Pauli operator $E\equiv E_n \cdots E_2 E_1$
has energy barrier smaller than $m+2f$. Indeed, definition of $\Pi_-$ implies that $E_j E_{j-1} \cdots E_1$
creates at most $m+f-1$ defects for all $j=1,\ldots,n$. By assumption, each operator $E_j$ can be implemented by an error
path with energy cost at most $f$. Taking the composition of all such error paths
one obtains an error path for $E$ with energy cost at most $m+2f-1$
and thus  $\Phi_{ec}$ will correct $E$.
Since $\Phi_{ec} \  \circ \  e^{\mathcal L_- t}$ is a TPCP map, we must have $\Phi_{ec}(e^{\mathcal L_- t}(\rho_0))=\rho_0$.
\end{proof}

For any decomposition $\mathcal L=\mathcal L_- + \mathcal L_+$ one has the following identity:
\begin{equation}
e^{\mathcal L t} = e^{\mathcal L_- t} + \int_0^t ds \, e^{\mathcal L_- (t-s)} \mathcal L_+ \, e^{\mathcal L s},
\label{eq:exp}
\end{equation}
which follows from the identity
\[
\frac{d}{ds} e^{\mathcal L_-(t-s)} e^{\mathcal L s} = e^{\mathcal L_- (t-s)} (-\mathcal L_- + \mathcal L) e^{\mathcal L s} =  e^{\mathcal L_- (t-s)} \mathcal L_+ \, e^{\mathcal L s}.
\]

\begin{lem}
\label{lem:memory-time}
Assume the supposition of Lemma~\ref{lem:Lp}.
Let $Q_m$ be the projector onto the (high energy) subspace with at least $m$ defects. Then
\begin{equation}
\label{eq:infidelity}
\epsilon(t)\equiv \| \Phi_{ec}(\rho(t))-\rho(0) \|_1 \le
O(tN) \mathrm{Tr} \, Q_m e^{-\beta H}
\end{equation}
for any initial state $\rho(0)$ supported on the ground subspace of $H$.
The time evolution of $\rho(t)$ is governed by the Lindblad equation, Eq.~\eqref{eq:Markovian-master},
with the inverse temperature $\beta$ of the bath.
\end{lem}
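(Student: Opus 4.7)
The plan is to reduce the infidelity to a time integral of the ``leakage'' rate $\|\mathcal L_+(\rho(s))\|_1$ via the Duhamel identity~\eqref{eq:exp} and Lemma~\ref{lem:Lp}, then to bound that rate pointwise using two observations: $\mathcal L_+$ annihilates the part of $\rho(s)$ supported on energy $<m$, by locality of the jump operators, and the remaining high-energy part is dominated by the Gibbs weight $e^{-\beta H}$ via positivity of the full Davies semigroup.

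First I would write $\rho(t)=e^{\mathcal L t}(\rho(0))$ via Proposition~\ref{prop:com} and substitute Eq.~\eqref{eq:exp} into $\Phi_{ec}(\rho(t))$; the ``main'' term $\Phi_{ec}(e^{\mathcal L_- t}(\rho(0)))$ collapses to $\rho(0)$ by Lemma~\ref{lem:Lp}, leaving the remainder $\int_0^t ds\,\Phi_{ec}\circ e^{\mathcal L_-(t-s)}\circ\mathcal L_+(\rho(s))$. Both $\Phi_{ec}$ and $e^{\mathcal L_-(t-s)}$ are trace-preserving completely positive maps and therefore $\|\cdot\|_1$-contractions, so the triangle inequality yields $\epsilon(t)\le\int_0^t\|\mathcal L_+(\rho(s))\|_1\,ds$.

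Next I would bound the integrand pointwise. Because $\rho(s)\in\mathcal D$ commutes with $H$, I may block-decompose $\rho(s)=\rho_{<m}(s)+\rho_{\ge m}(s)$ with $\rho_{\ge m}(s)=Q_m\rho(s)Q_m$. By Proposition~\ref{prop:Lindblad} each $A_{\alpha,\omega}$ acts on $O(1)$ qubits; combining this with Eq.~\eqref{Acom} and the definition of $f$, one checks that $|\omega|\le f$ for every Bohr frequency appearing in the Davies jump operators. Consequently, for any eigenstate $|E\rangle$ with $E<m$ the shifted state $A_{\alpha,\omega}|E\rangle$ has energy $E-\omega<m+f$, so $\Pi_+ A_{\alpha,\omega}|E\rangle=0$, and hence $C_{\alpha,\omega}\rho_{<m}(s)=0$ for every Lindblad jump defining $\mathcal L_+$. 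This forces $\mathcal L_+(\rho_{<m}(s))=0$, so $\mathcal L_+(\rho(s))=\mathcal L_+(\rho_{\ge m}(s))$. Combining with the $\|\cdot\|_1$-operator-norm bound $\|\mathcal L_+\|_1=O(N)$ from Proposition~\ref{prop:Lindblad},
\begin{equation*}
\bigl\|\mathcal L_+(\rho(s))\bigr\|_1\;\le\;\|\mathcal L_+\|_1\cdot\|\rho_{\ge m}(s)\|_1\;=\;O(N)\,\mathrm{Tr}\bigl(Q_m\rho(s)\bigr).
\end{equation*}

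The remaining step is to bound $\mathrm{Tr}(Q_m\rho(s))$ uniformly in $s$ by $\mathrm{Tr}(Q_m e^{-\beta H})$, which I would obtain from the operator inequality $\rho(s)\le e^{-\beta H}$ valid for all $s\ge 0$. At $s=0$ this holds because $\rho(0)$ is a density operator on the ground subspace, so $\rho(0)\le\Pi_0\le e^{-\beta H}$. Detailed balance~\eqref{DB} makes $e^{-\beta H}$ a fixed point of the Davies semigroup, $e^{\mathcal L s}(e^{-\beta H})=e^{-\beta H}$, and because $e^{\mathcal L s}$ is completely positive it preserves semidefinite ordering, so $e^{-\beta H}-\rho(s)=e^{\mathcal L s}(e^{-\beta H}-\rho(0))\ge 0$. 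Integrating the pointwise estimate over $[0,t]$ then yields $\epsilon(t)\le O(tN)\,\mathrm{Tr}(Q_m e^{-\beta H})$. The main subtlety I anticipate is the vanishing identity $\mathcal L_+(\rho_{<m})=0$: one must verify that $f$, defined as the maximum energy barrier of Pauli terms appearing in the spectral decomposition of $A_\alpha(t)$, indeed dominates $|\omega|$ for every Bohr frequency, which ultimately follows because each such Pauli has at least $|\omega|$ anticommuting stabilizer generators and so creates at least $|\omega|$ defects when applied to the vacuum.
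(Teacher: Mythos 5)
Your proof is correct and reaches the stated bound, but the final inequality is obtained by a genuinely different route than the paper's. You agree with the paper through the Duhamel identity, Lemma~\ref{lem:Lp}, the identity $\mathcal L_+(X)=\mathcal L_+(Q_m X Q_m)$ for $X\in\mathcal D$, and the operator-norm bound $\|\mathcal L_+\|_1=O(N)$, so both proofs reduce the task to bounding $\trace\bigl(Q_m\,e^{\mathcal L s}(\rho_0)\bigr)$ by $\trace(Q_m e^{-\beta H})$. The paper does this in the Heisenberg picture: it writes $\rho_0=\mathcal Z_\beta\rho_\beta\rho_0$, invokes the self-adjointness of $\mathcal L^*$ with respect to the Liouville inner product (Proposition~\ref{prop:L-Liouville-hermitian}) to shift the semigroup onto $Q_m$, and finishes using the unitality of $e^{\mathcal L^*s}$ together with $\rho_0\le I$. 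You instead stay in the Schr\"odinger picture and prove the operator inequality $\rho(s)\le e^{-\beta H}$ for all $s$: the base case $\rho(0)\le e^{-\beta H}$ holds because $\rho(0)$ is a density operator on the ground subspace, where $e^{-\beta H}$ acts as the identity; $e^{-\beta H}$ is a fixed point of $e^{\mathcal L s}$ by detailed balance; and $e^{\mathcal L s}$, being completely positive, preserves the semidefinite order. Both routes ultimately rest on detailed balance~\eqref{DB}, but yours bypasses the adjoint semigroup and the Liouville inner product entirely, reducing the final step to a one-line monotonicity argument, whereas the paper's version stays closer to the standard Davies formalism and keeps the ground-subspace support and $\rho_0\le I$ as separate ingredients rather than combining them into $\rho(0)\le e^{-\beta H}$. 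Your closing remark about verifying $|\omega|\le f$ correctly identifies a subtlety shared by both proofs, and your justification (each Pauli in $A_\alpha(t)$ anticommutes with at least $|\omega|$ stabilizers, hence has energy barrier at least $|\omega|$) is sound.
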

\begin{proof}
Write $\rho_0 \equiv \rho(0)$.
First, Proposition~\ref{prop:com} says $\rho(t) = e^{\mathcal L t} (\rho_0)$.
Applying Lemma~\ref{lem:Lp} and Eq.~\eqref{eq:exp} one arrives at
\begin{equation}
\epsilon(t)\le \int_0^t ds\, \| e^{\mathcal L_-(t-s)}  \mathcal L_+ \, e^{\mathcal L s} (\rho_0) \|_1
\le t \cdot \max_{0\le s\le t} \; \|  \mathcal L_+ \, e^{\mathcal L s} (\rho_0) \|_1.
\end{equation}
We shall use an identity
\begin{equation}
\mathcal L_+(X)=\mathcal L_+( Q_m X Q_m)
\end{equation}
valid for any $X \in \mathcal D$. Indeed,
any quantum jump operator $A_{\alpha,\omega}$ changes the energy at most by $f$,
so that $\mathcal L_+(Q_m^\perp X)=\mathcal L_+(XQ_m^\perp)=0$  for any $ X \in \mathcal D$ (note that any $ X \in \mathcal D$
commutes with $Q_m$).
We arrive at
\begin{equation}
\epsilon(t)
\le t \cdot \| \mathcal L_+(Q_m e^{\mathcal L s} (\rho_0) Q_m) \|_1
\le t \cdot \| \mathcal L_+ \|_1 \cdot \| Q_m e^{\mathcal L s} (\rho_0)  Q_m \|_1
\le O(tN) \trace{ Q_m  e^{\mathcal L s} (\rho_0)},
\label{epsilon_bound}
\end{equation}
where the maximization over $s$ is implicit.
We used Proposition~\ref{prop:Lindblad} and the positivity of $\mathcal L(\rho_0)$.
in the last inequality.
Since the ground-state energy of $H$ is zero, one has
\begin{equation}
\rho_0 = \mathcal Z_\beta \rho_\beta \rho_0,
\end{equation}
where $\mathcal Z_\beta$ is the partition function. It yields
\begin{equation}
\trace{ Q_m  e^{\mathcal L s} (\rho_0)}
= \trace{  \rho_0 \, e^{\mathcal L^* s}(Q_m)}
= \mathcal Z_\beta \trace{ \rho_\beta \rho_0 e^{\mathcal L^* s} (Q_m)}
= \mathcal Z_\beta \langle \rho_0, e^{\mathcal L^* s} (Q_m) \rangle_\beta,
\end{equation}
Proposition~\ref{prop:L-Liouville-hermitian} implies 
that the map $e^{\mathcal L^* s}$ is also self-adjoint with respect to the Liouville inner product.
Hence, we have
\begin{equation}
\trace{ Q_m  e^{\mathcal L s} (\rho_0)}
=   \mathcal Z_\beta \langle \rho_0, e^{\mathcal L^* s} (Q_m) \rangle_\beta
=   \mathcal Z_\beta \langle e^{\mathcal L^* s} (\rho_0), Q_m \rangle_\beta
=   \trace{e^{\mathcal L^* s}(\rho_0) Q_m e^{-\beta H}}
\le \trace{Q_m e^{-\beta H}},
\label{trace_final}
\end{equation}
where the last inequality is because
$e^{\mathcal L^* s}$ is a unital completely positive map
and $\rho_0 \le I$.
\end{proof}

\section{Correctability of errors with an energy barrier}
\label{sec:Correctability}

Let $P$ be an unknown Pauli error.
Suppose we are promised that $P$ has a sufficiently small
energy barrier, namely, $\Delta(P) \le c \log{L}$, for some constant $c$ that will be chosen later.
In this section we prove that any such error $P$ will be corrected by the RG decoder.

Assume throughout this section that a family of topological stabilizer codes $\{ {\cal C}_L\}_L$
defined in Section~\ref{sec:log-barrier} with $\ltqo \ge L^\gamma$
obey the no-strings rule with some constant $\alpha$ as in Definition~\ref{defn:no-strings}.
The $\ell_\infty$-metric $d$ is used; $d((x,y,z),(x',y',z'))= \max \{ |x-x'|,|y-y'|,|z-z'| \}$.
We use 
\[ 
\xi(p)= (10\alpha)^p 
\]
for notational convenience as in the previous chapter,
as well as the notion of \emph{level-$p$ sparseness} of Definition~\ref{defn:psparse}
and \emph{level-$p$ syndrome history}.
When $P_0,P_1,\ldots,P_t$ form an error path,
the corresponding sequence of syndromes $S(j)$ caused by $P_j$ acting on a vacuum
 is called a level-$0$ syndrome history.
A level-$p$ syndrome history, inductively defined, 
discards all level-$(p-1)$ sparse syndromes from level-$(p-1)$ syndrome history,
but keeps the initial and final syndromes.

Let $\Gamma=(P_0,P_1,\ldots,P_t)$ be an error path implementing $P$
with the energy cost $m(\Gamma)=m$.
Here $P_0=I$, $P_t=P$, while $E_j\equiv P_j P_{j-1}$ are single-qubit Pauli operators
for all $j$.
Lemma~\ref{lemma:counting} implies that
there is a level $p_{max} < m$ such that
in the level-$p_{max}$ syndrome history
only the initial empty syndrome $S(0)=0$ and the final syndrome $S(t)=S$
are possibly non-sparse.

\begin{lem}
\label{lemma:neutral-components}
Let $P$ be any Pauli error,
$S=S(P)$ be its syndrome,
and $m=\Delta(P)$ be its energy barrier.
Suppose
\begin{equation}
\label{msmall}
16m(10\alpha)^m < \ltqo.
\end{equation}
Then, there exists a stabilizer $G \in {\cal G}$ such that $P \cdot G$
has support on the $\xi(m)$-neighborhood of $S$.
Any $R$-connected component of $S$ is neutral for $2\xi(m) < R \le 4 \xi(m)$.
\end{lem}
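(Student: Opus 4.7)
My plan is to collapse the full error path into a single level-$m$ jump of the renormalization hierarchy of Chapter~\ref{chap:consq-no-strings}, invoke Lemma~\ref{lemma:RG2} there to get claim (1), and then localize the resulting representative onto individual $R$-connected components to obtain claim (2).

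For claim (1), I would fix an optimal error path $\Gamma=(P_0,\ldots,P_t)$ implementing $P$ with energy cost $m=\Delta(P)$, so that every intermediate syndrome of $\Gamma$ contains at most $m$ defects. Lemma~\ref{lemma:counting} says any non-empty syndrome that is non-sparse at all levels $0,1,\ldots,m-1$ must contain at least $m+1$ defects, which is impossible here. Hence the level-$m$ syndrome history of $\Gamma$ consists only of the initial empty syndrome and the terminal syndrome $S$, which form a single consecutive pair connected by the level-$m$ error $P$ itself. The hypothesis $16m(10\alpha)^m<\ltqo$ is precisely the condition required by Lemma~\ref{lemma:RG2} at level $p=m$, with $S'=\emptyset$ and $S''=S$, so the lemma delivers a Pauli operator $\tilde E$ supported on $\mathcal B_{\xi(m)}(S)$ with $P\tilde E\in\mathcal G$. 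Taking $G=P\tilde E$ gives $PG=\pm\tilde E$ supported on $\mathcal B_{\xi(m)}(S)$, establishing the first assertion.

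For claim (2), I would fix an $R$-connected component $C$ of $S$ with $2\xi(m)<R\le 4\xi(m)$ and record two geometric facts. First, $|C|\le|S|\le m$ together with $R$-connectedness force $\mathrm{diam}(C)\le(m-1)R\le 4m\xi(m)$, so $\mathcal B_{\xi(m)}(C)$ is contained in a cube of linear size at most $(4m+2)\xi(m)<16m\xi(m)<\ltqo$. Second, because the $\ell_\infty$-distance on $\ZZ^D$ is integer valued, the strict inequalities $d(C,S\setminus C)>R>2\xi(m)$ upgrade to $d(C,S\setminus C)\ge 2\xi(m)+2$, so the enlarged neighborhoods $\mathcal B_{\xi(m)+1}(C)$ and $\mathcal B_{\xi(m)+1}(S\setminus C)$ are disjoint; in particular, every unit-cube stabilizer generator is supported entirely inside one of the two neighborhoods $\mathcal B_{\xi(m)}(C)$ or $\mathcal B_{\xi(m)}(S\setminus C)$.

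This separation lets me split $\tilde E=\tilde E_C\cdot\tilde E_{S\setminus C}$ according to the two disjoint pieces of $\mathcal B_{\xi(m)}(S)$. The syndrome of $\tilde E_C$ lives inside $\mathcal B_{\xi(m)+1}(C)$, which by the separation bound contains no defect of $S\setminus C$; combined with additivity modulo two of syndromes under products of commuting Pauli operators and the fact that $\tilde E_C\tilde E_{S\setminus C}$ has syndrome $S$, this forces the syndrome of $\tilde E_C$ to be exactly $C$. Since $\tilde E_C$ is supported inside a cube of linear size less than $\ltqo$, the cluster $C$ is neutral. The only delicate step is the boundary case $R=2\xi(m)+1$: without exploiting the integer gap between $R$ and $2\xi(m)$, one could not rule out a unit-cube generator whose support straddles both halves of the split and contaminates the syndrome accounting; this is where I expect the main bookkeeping effort to lie.
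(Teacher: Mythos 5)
Your proposal is correct and follows the paper's proof essentially step by step: you invoke Lemma~\ref{lemma:RG2} at level $m$ (the paper uses $p_{\max}\le m$, which gives the same conclusion since $16m\xi(p_{\max})\le 16m\xi(m)<\ltqo$) to localize $PG$ onto $\mathcal{B}_{\xi(m)}(S)$, and then restrict to the piece near each $R$-connected component to produce a small-support operator that creates exactly that component, hence neutrality. One minor bookkeeping slip in part (2): from $d(C,S\setminus C)\ge 2\xi(m)+2$ you correctly get separation $\ge 2$ between the $\xi(m)$-neighborhoods, which is what is actually needed so that no unit-cube generator straddles the split, but the $(\xi(m)+1)$-neighborhoods you name could still share a boundary point when equality holds --- harmless here, but the stated implication is off by one.
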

\begin{proof}
Let us apply Lemma~\ref{lemma:RG2} to the level $p_{max}$,
the smallest integer such that
the level-$p_{max}$ syndrome history has the initial and final syndrome,
and level-$p_{max}$-sparse syndromes.
Since $p_{max} < m$, the condition $16 m \xi(p_{max}) < \ltqo$ 
in Lemma~\ref{lemma:RG2} is satisfied.
We have $S'=0$, $S''=S$, and $E=P$.
Hence, there exists a stabilizer $G \in {\cal G}$ such that
$P\cdot G$ is supported on the $\xi(p_{max})$-neighborhood of $S$.
It proves the first statement of the lemma.

Let $r = \xi(m) = (10\alpha)^m$.
Choose any $R$ such that $2r < R \le 4r$
and let $C_a$ be any $R$-connected component of $S$.
Since $C_a$ contains at most $m$ defects, the diameter of $C_a$ is at most $mR$.
Restricting $P \cdot G$ on the $r$-neighborhood of $C_a$,
we obtain a Pauli operator $P_a$
supported on a cube of linear size at most $mR+2r \le 4rm+2r <\ltqo$ by assumption.
Furthermore, the support of $P_a$ is separated from $(P_a)^{-1} (P \cdot G)$
by distance at least $R - 2r > 0$.
Hence, $P_a$ creates the cluster $C_a$ from the vacuum.
Therefore, $C_a$ is neutral.
\end{proof}

We wish to have a well-separated cluster decomposition.
\begin{lem}
\label{lemma:wellseparated-decomposition}
Let $S$ be any cluster of $m>0$ defects.
For any integer $\mu \ge 1$, there exists a nonnegative integer $ p < m$ 
and a decomposition 
\[
S = C_1 \cup \cdots \cup C_n \ \text{ such that }\ 
d(C_a) \le 4^p \mu \ \text{ and }\  d(C_a,C_b) > \half \cdot 4^{p+1} \mu \ \text{ for } a \neq b.
\quad \quad \quad \quad \mathrm{(*)}
\]
\end{lem}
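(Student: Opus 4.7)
The plan is to mimic the pigeon-hole argument used in Lemma~\ref{lemma:counting} and the renormalization hierarchy of Section~\ref{sec:log-barrier}, but with the geometric scale factor $10\alpha$ replaced by $4$. Concretely, I will construct, level by level, a nested sequence of partitions $\Pi_0,\Pi_1,\Pi_2,\ldots$ of $S$ with the invariant that every cluster in $\Pi_p$ has diameter at most $4^p \mu$. The process starts from the trivial partition $\Pi_0$ into singletons (so that each cluster has diameter $0 \le 4^0\mu$). At stage $p$ one checks whether $\Pi_p$ already witnesses $(*)$; that is, whether every pair of distinct clusters $C_a,C_b\in\Pi_p$ satisfies $d(C_a,C_b)>\tfrac12\cdot 4^{p+1}\mu=2\cdot 4^p\mu$. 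If yes, we stop and output $\Pi_p$ with this value of $p$; if no, we pick a pair $C_a,C_b\in\Pi_p$ with $d(C_a,C_b)\le 2\cdot 4^p\mu$, replace the pair by the union $C_a\cup C_b$, and call the resulting partition $\Pi_{p+1}$.

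The first key step is to verify that this merge preserves the invariant at the next level, which is the only delicate arithmetic in the argument. If $C_a$ and $C_b$ both lie in $\Pi_p$, the inductive hypothesis bounds their diameters by $4^p\mu$ each, and by construction $d(C_a,C_b)\le 2\cdot 4^p\mu$, so
\[
d(C_a\cup C_b)\le d(C_a)+d(C_a,C_b)+d(C_b)\le 4^p\mu+2\cdot 4^p\mu+4^p\mu=4^{p+1}\mu,
\]
exactly what is required of a cluster in $\Pi_{p+1}$. All other clusters are unchanged and still have diameter $\le 4^p\mu\le 4^{p+1}\mu$. (This is the tightness check that dictated choosing the ratio $4$ together with the separation threshold $\tfrac12\cdot 4^{p+1}\mu$; it is the ``hard'' step in the sense that any smaller scale factor would fail the inequality.)

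The second key step is termination. Each transition from $\Pi_p$ to $\Pi_{p+1}$ strictly decreases the number of clusters by one, and $|\Pi_0|=m$. Hence after at most $m-1$ merges we arrive at a partition with a single cluster, for which the separation condition in $(*)$ is vacuous, so $(*)$ holds at that stage. Therefore the process must terminate successfully at some stage $p\le m-1$, giving $p<m$ as claimed. Since the construction only uses the inductive diameter bound and a single merge per failure, no further input (such as geometry of the lattice or topological order of the code) is needed, and the statement holds for an arbitrary finite cluster $S\subseteq\Lambda$ of $m>0$ defects.
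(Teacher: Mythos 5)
Your proof is correct and follows essentially the same merge-and-iterate strategy as the paper's own proof: one pairwise merge per level, diameter growth controlled by $d(C_a\cup C_b)\le d(C_a)+d(C_a,C_b)+d(C_b)\le 4^{p+1}\mu$, and termination by cluster-counting. The only cosmetic difference is that you initialize $\Pi_0$ with $m$ singletons while the paper initializes by grouping defects occupying the same elementary cube (giving $g\le m$ starting clusters); both yield the bound $p<m$.
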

\begin{proof}
The only nontrivial part is that $p$ can be chosen as $p < m$.
Let us say that a partition of $S$ into clusters is \emph{$p$-good} if it satisfies $\mathrm{(*)}$.
By grouping all defects occupying the same elementary cube into a cluster,
one obtains a partition $S=C_1\cup \ldots \cup C_g$. Obviously, $g\le m$, and $d(C_a) \le \mu$.
If this partition is not $0$-good,
then $g\ge 2$ and there is a pair, say, $C_1,C_2$ such that $d(C_1, C_2) \le 2\mu$.
Merging $C_1$ and $C_2$ into a single cluster $C'_2$,
one obtains a partition $S=C'_2 \cup C_3 \cup \ldots \cup C_g$ where $d(C'_2) \le 4\mu$.
If this partition is not $1$-good, then $g\ge 3$ and one can repeat the merging again.
After at most $g-1$ iterations, one arrives at a good partition.
\end{proof}
Note that the minimal enclosing boxes of distinct cluster do not overlap, since
\[
 d(b(C_a),b(C_b)) > 2 \cdot 4^p \mu - 4^p \mu - 4^p \mu = 0 .
\]
The following is the desired property of the RG decoder.
\begin{lem}
\label{lemma:RGdecoder-capability}
Let $P$ be any Pauli error with energy barrier $m=\Delta(P)$. Suppose
\[
 (160\alpha)^m < \ltqo .
\]
Then calling the RG decoder on the syndrome $S(P)$ returns a correcting operator
$P_{ec}$ such that $P P_{ec}$ is a stabilizer.
Thus, the RG decoder corrects $P$ if $\Delta(P) < \frac{\gamma}{\log (160 \alpha)} \log L$.
\end{lem}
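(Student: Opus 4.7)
The plan is to identify one RG level $p^{\star}$ at which the remaining syndrome defects are annihilated correctly and to show that every earlier level only performs localized operations that preserve the underlying neutrality structure. Two degenerate cases can be disposed of up front: if $\Delta(P)=0$ then $P=I$ and the conclusion is trivial, while if $S(P)=\emptyset$ with $m=\Delta(P)\ge 1$ then $P$ is either a stabilizer or a logical operator, and the hypothesis $(160\alpha)^{m}<\ltqo$ together with the elementary inequality $16m\le 16^{m}$ yields $16m\,\xi(m)<\ltqo$, which by the proof of Theorem~\ref{thm:lop-log-energy-barrier} excludes logical operators; thus $P\in\mathcal{G}$ and $P_{ec}=I$ already works. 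Assume henceforth $m\ge 1$ and $S=S(P)\ne\emptyset$.

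Since $16m\,\xi(m)\le(160\alpha)^{m}<\ltqo$, Lemma~\ref{lemma:neutral-components} applies and produces a stabilizer $G$ such that $P\cdot G$ is supported in $\mathcal{B}_{\xi(m)}(S)$. I then invoke Lemma~\ref{lemma:wellseparated-decomposition} with $\mu=\lceil 2\xi(m)\rceil$ to obtain an integer $p<m$ and a partition $S=C_{1}\cup\cdots\cup C_{n}$ satisfying $d(C_{a})\le 2\cdot 4^{p}\xi(m)$ and $d(C_{a},C_{b})>4\cdot 4^{p}\xi(m)$. Choose $p^{\star}$ as the unique integer with $2\cdot 4^{p}\xi(m)<2^{p^{\star}}\le 4\cdot 4^{p}\xi(m)$; then $d(C_{a})<2^{p^{\star}}<d(C_{a},C_{b})$, and $2^{p^{\star}}\le 4^{m}\xi(m)=(40\alpha)^{m}<(160\alpha)^{m}<\ltqo$ places $p^{\star}\le p_{M}$, so the decoder reaches level $p^{\star}$. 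Each $C_{a}$ is itself neutral: it decomposes into disjoint $4\xi(m)$-connected components of $S$, each neutral by Lemma~\ref{lemma:neutral-components}, so the product of the corresponding local annihilators is supported in $\mathcal{B}_{1}(C_{a})$, a region of diameter smaller than $\ltqo$.

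Next I would establish by induction on the level $q=0,1,\ldots,p^{\star}$ that the running syndrome satisfies $S(q)\subseteq S$, every $2^{q+1}$-connected component of $S(q)$ lies inside a single $C_{a}$, and the accumulated correction factors disjointly as $P_{ec}^{(q)}=\prod_{a}P_{ec,a}^{(q)}$ with $P_{ec,a}^{(q)}$ supported in the pairwise disjoint sets $\mathcal{B}_{1}(C_{a})$. The inductive step combines the facts that TestNeutral either acts trivially or outputs a local annihilator of a neutral sub-cluster (so $S(q)$ is monotone non-increasing in $q$) and that $d(C_{a},C_{b})>2^{p^{\star}}\ge 2^{q}$ prevents operations in different $C_{a}$'s from interacting. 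Writing $R_{a}=C_{a}\setminus N_{a}$ for the portion surviving into EC$(p^{\star})$, where $N_{a}\subseteq C_{a}$ is the disjoint union of neutral sub-clusters removed previously, both $C_{a}$ and $N_{a}$ admit local annihilators supported in $\mathcal{B}_{1}(C_{a})$, so their product annihilates $R_{a}$ and $R_{a}$ is neutral. Because $d(R_{a})\le d(C_{a})<2^{p^{\star}}$, the residue $R_{a}$ forms a single $2^{p^{\star}}$-connected component of $S(p^{\star}-1)$, and TQO2 guarantees TestNeutral succeeds at level $p^{\star}$, leaving the syndrome empty.

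To close, partition $P\cdot G=\prod_{a}(P\cdot G)_{a}$ with $(P\cdot G)_{a}$ supported in $\mathcal{B}_{\xi(m)}(C_{a})$ and carrying syndrome $C_{a}$; this is a disjoint product since $d(C_{a},C_{b})>4\cdot 4^{p}\xi(m)>2\xi(m)$. Each $P_{ec,a}\cdot(P\cdot G)_{a}$ has trivial syndrome and is supported in a set of diameter at most $d(C_{a})+2\xi(m)\le(40\alpha)^{m}<\ltqo$, so TQO1 forces it to be a stabilizer; taking the disjoint product over $a$ gives $P\cdot G\cdot P_{ec}\in\mathcal{G}$, whence $P\cdot P_{ec}\in\mathcal{G}$ as required. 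The second assertion of the lemma then follows from $\ltqo\ge L^{\gamma}$, which converts $\Delta(P)<\tfrac{\gamma}{\log(160\alpha)}\log L$ into $(160\alpha)^{\Delta(P)}<\ltqo$. The main obstacle I anticipate is justifying the inductive claim that partial corrections at earlier RG levels never leave an irreducibly charged residue at level $p^{\star}$; this is exactly what the ``neutral minus neutral is neutral'' observation resolves, and it hinges on the monotone behavior of TestNeutral ensuring that $N_{a}$ is built up as a disjoint union of neutral pieces each equipped with an explicit local annihilator.
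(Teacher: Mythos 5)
Your proposal is correct and follows essentially the same route as the paper's proof: decompose $S$ via Lemma~\ref{lemma:wellseparated-decomposition}, use Lemma~\ref{lemma:neutral-components} to show each cluster $C_a$ is neutral, argue that lower RG levels only annihilate neutral sub-pieces so the residue of each $C_a$ stays neutral, and finish by factoring the trivial-syndrome product into spatially separated local pieces each killed by TQO1. One minor technical slip: with $\mu=\lceil 2\xi(m)\rceil$ the lemma only gives $d(C_a)\le 4^{p}\lceil 2\xi(m)\rceil$, which may exceed your stated bound $2\cdot 4^{p}\xi(m)$ when $2\xi(m)$ is not an integer, so the inequality $d(C_a)<2^{p^{\star}}$ is not guaranteed by your choice of $p^{\star}$; the paper sidesteps this by taking $\mu=2^{p}$ a power of two with $2\xi(m)<2^{p}\le 4\xi(m)$, so that $4^{p''}\mu$ is itself a power of two and the cluster diameters and separations line up exactly with the RG levels.
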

\begin{proof}
Let $S=S(P)$ be the syndrome.
Let $p$ be the integer such that $2 \xi(m) < 2^p \le 4 \xi(m)$.
Setting $\mu=2^p$ in Lemma~\ref{lemma:wellseparated-decomposition},
$S$ is decomposed into $S=C_1 \cup \ldots \cup C_n$
such that $d(C_a) \le 2^{p'}$ and $ d(C_a,C_b) > 2^{p'+1}$ for all $a \ne b$,
where $p'$ is an integer such that $p \le p' < 2m+p$.
Since $16m(10\alpha)^m \le (160\alpha)^m$,
Lemma~\ref{lemma:neutral-components} implies that
each $C_a$ is neutral for being a disjoint union of neutral $2^p$-connected components.
The RG subroutines EC($s$) with $s=0,1,\ldots,p-1$,
can only annihilate some neutral $2^s$-connected components of $C_a$,
which does not alter the neutrality of $C_a$.
Therefore, the RG decoder from level-$0$ to $p$
will annihilate each cluster $C_a$, and hence $S$ at last.

We need to show that $P \cdot P_{ec}$ is a stabilizer,
where $P_{ec}$ is the returned correcting operator.
Let $B_a$ be the $(10\alpha)^m$-neighborhood of $b(C_a)$.
Our assumptions imply that $B_a$ has diameter smaller than $\ltqo$
and distinct $B_a$'s do not intersect.
By construction, the operators $P_{ec}$ and $P \cdot G$
have support in the union $B_1 \cup \cdots \cup B_n$.
Therefore, $P \cdot G \cdot P_{ec} = Q_1 \cdots Q_n$,
where $Q_a$ has support on $B_a$ and has trivial syndrome.
Topological order condition implies that $Q_a$ are stabilizers, so is the product.
\end{proof}

The full hierarchy of the RG decoder
is not necessary to correct the error with the low energy barrier.
A single level-$p$ error correction with $p$ proportional to $\log \ltqo$, will be sufficient.
We nevertheless include the hierarchy
since in practice it corrects errors with slightly higher
(although only by a constant factor) energy barrier
at a marginal slowdown of the decoder.
If one wishes to apply the decoder against random errors,
the hierarchy becomes necessary,
as we have discussed in Section~\ref{sec:threshold}.

\begin{rem}
\label{rem:TestNeutralprime}
A closer analysis reveals a simplification of TestNeutral defined in Section~\ref{subs:rgdecoder} for the 3D Cubic Code.
We defined TestNeutral to return the identity operator if a cluster turns out to be charged.
The modified TestNeutral$'$ just applies the broom algorithm and returns recorded operator in any case.
It gives the {\em same} characteristic as stated in the Lemma~\ref{lemma:RGdecoder-capability}.
EC$'(p)$ using TestNeutral$'$ will transform a charged cluster $C_a$ to a {\em different} cluster $C'_a$,
but $C'_a$ is still contained in $b(C_a)$.
Due to Lemma~\ref{lemma:wellseparated-decomposition}, $b(C_a)$ do not overlap
at a high level $p$, and EC$'(p)$ will eliminate neutral clusters at last.
This specialized version of RG decoder is used in our numerical simulation 
in Section~\ref{sec:numerics}.
\end{rem}

\section{A lower bound on memory time}
\label{sec:memory-time}
  
\begin{theorem}
\label{thm:memorytime}
For a code Hamiltonian in $D$ spatial dimensions
that satisfies TQO1, TQO2, and no-strings rule,
there exists a decoder $\Phi_{ec}$ and a constant $c,c'>0$
such that for any inverse temperature $\beta>0$,
any state $\rho(0)$ supported on the ground subspace of $H$,
and any evolution time $t \ge 0$ one has
\begin{equation}
\label{error-bound}
\epsilon(t)\equiv \| \rho(0) - \Phi_{ec}(\rho(t)) \|_1 \le O(t) \cdot 2^{k(L)} \cdot L^{D-c\beta}
\end{equation}
as long as $L\le e^{c' \beta}$.
The error correction algorithm used by the decoder
has running time $poly(L)$.
\end{theorem}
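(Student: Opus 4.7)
The plan is to feed the RG decoder from Chapter~\ref{chap:RG-decoder} into the two-step reduction already provided by Lemmas~\ref{lem:memory-time} and~\ref{lemma:RGdecoder-capability}, and then control the resulting Gibbs tail by a Chernoff-style estimate whose regime of validity is precisely $L\le e^{c'\beta}$.

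First I take $\Phi_{ec}$ to be the RG decoder. Lemma~\ref{lemma:RGdecoder-capability} certifies that $\Phi_{ec}$ corrects every Pauli error $P$ with energy barrier $\Delta(P)< c_1\log L$, where $c_1=\gamma/\log(160\alpha)$. Choose the truncation energy $m=\lfloor c_1\log L\rfloor-2f$, with $f=O(1)$ the maximum energy barrier among the Pauli summands of the jump operators $A_{\alpha,\omega}$ and of the products $A_{\alpha,\omega}^\dagger A_{\alpha,\omega}$; that $f$ is finite and $O(1)$ follows from Proposition~\ref{prop:Lindblad}. This choice makes the hypothesis of Lemma~\ref{lem:memory-time} hold, so
$$\epsilon(t)\le O(tN)\,\mathrm{Tr}\bigl(Q_m e^{-\beta H}\bigr).$$

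Next I estimate the restricted partition function. Let $D_n$ be the number of $H$-eigenstates with exactly $n$ defects. Because each realizable syndrome determines an eigenspace of dimension $2^{k(L)}$ and at most $\binom{N}{n}$ of the $N=O(L^D)$ stabilizer generators can be flipped in a weight-$n$ syndrome, one has $D_n\le 2^{k(L)}\binom{N}{n}$. A Chernoff shift with parameter $\lambda\in(0,\beta)$ then yields
$$\mathrm{Tr}(Q_m e^{-\beta H})=\sum_{n\ge m}D_n e^{-\beta n}\le 2^{k(L)}\,e^{-\lambda m}\,(1+e^{-(\beta-\lambda)})^N.$$
Setting $\lambda=\beta/2$ and using $(1+e^{-\beta/2})^N\le\exp(Ne^{-\beta/2})$, the combinatorial factor stays bounded by an absolute constant as soon as $Ne^{-\beta/2}\le 1$, i.e.\ $L\le e^{\beta/(2D)}$; this fixes $c'=1/(2D)$. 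Combined with $e^{-\beta m/2}=L^{-c\beta}$ (absorbing $\log 2$ into $c$), one obtains $\mathrm{Tr}(Q_m e^{-\beta H})\le O(1)\cdot 2^{k(L)}\cdot L^{-c\beta}$. Plugging this into the Lemma~\ref{lem:memory-time} bound and using $N=O(L^D)$ yields exactly $\epsilon(t)\le O(t)\cdot 2^{k(L)}\cdot L^{D-c\beta}$. The $\mathrm{poly}(L)$ running time of $\Phi_{ec}$ is inherited from the $O(N^2\log N)$ worst-case bound on the RG decoder established in Section~\ref{subs:rgdecoder}.

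The hard step, and the only one that is more than a direct concatenation of earlier results, is the tail estimate: the Chernoff shift $\lambda$ and the temperature-size window $L\le e^{c'\beta}$ must be chosen in tandem so that the naive partition function factor $(1+e^{-(\beta-\lambda)})^N$ does not blow up while the energy-cost weight $e^{-\lambda m}$ still delivers the promised $L^{-c\beta}$ suppression. It is this tension that forces the exponential cutoff on $L$ and that is ultimately responsible for the bound degrading once the lattice becomes too large for a given temperature.
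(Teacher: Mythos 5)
Your proof is correct and follows essentially the same route as the paper: combine Lemma~\ref{lemma:RGdecoder-capability} with Lemma~\ref{lem:memory-time} to reduce to a Gibbs-tail estimate, then bound the degeneracy of the $n$-defect sector by $2^{k(L)}\binom{N}{n}$ and control the resulting sum by pulling out a factor $e^{-\beta m/2}$ and summing the binomial series. The paper carries out the same shift implicitly, writing $e^{-\beta n+\beta m/2}\le e^{-\beta n/2}$ for $n\ge m$ rather than introducing a free Chernoff parameter $\lambda$, but the resulting bound $(1+e^{-\beta/2})^{rL^D}$ and the window $L\le e^{c'\beta}$ are identical.
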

\begin{proof}
We use the renormalization group decoder of Chapter~\ref{chap:RG-decoder}.
Its running time is $poly(L)$.
Lemma~\ref{lemma:RGdecoder-capability} guarantees
that the supposition of Lemma~\ref{lem:memory-time} is satisfied
with $m = \Omega(\log L)$.
It remains to bound $\trace Q_m e^{-\beta H}$ from above, 
where $H$ has ground energy equal to $0$ by convention
and $Q_m$ is the projector onto the space of energy $m$ or higher.
The number of states of exactly $n$ defects,
is $2^{k(L)}$ times the number of configurations of $n$ defects,
the latter of which is upper bounded by the binomial coefficient $\binom{rL^D}{n}$
where $r$ is the maximum number of terms in the Hamiltonian that acts on an elementary cube.
(In case of the cubic code, $r=2$.)
Therefore,
\begin{align}
\epsilon(t)
&\le O(t L^D) 2^k \sum_{n \ge m} \binom{rL^D}{n} e^{-\beta n} \nonumber \\
&= O(t L^D) 2^k e^{-\beta m/2} \sum_{n \ge m} \binom{rL^D}{n} e^{-\beta n + \beta m /2 } \nonumber \\
&\le O(t L^D) 2^k e^{-\beta m/2} \sum_{n \ge m} \binom{rL^D}{n} e^{- \beta n / 2 } \nonumber \\
&\le O(t L^D) 2^k e^{-\beta m/2} \left( 1 + e^{-\beta/2} \right)^{rL^D}
\label{eq:tradeoff}
\end{align}
As long as $rL^D \le e^{\beta/2}$ the entropy contribution is bounded by a constant.
Since $m = \Omega(L)$, the proof is complete.
\end{proof}

Clearly, our bound is most useful when  $k(L)$ is small.
In the following we shall mostly be interested in the
smallest ground-state degeneracy, $k(L)=2$. 
This happens for any odd $3\le L\le 200$ such that $L$ is not a multiple of $15$ or $63$.
In fact, there exists an infinite sequence of lattice sizes
such that $k(L)=2$, for example, $k(2^p+1)=2$ for all $p\ge 1$
by Corollary~\ref{cor:cubic-code-degeneracy-formula}.

On the other hand, this requirement seems unnecessary; we believe that the $2^k$ factor should be removable.
A very rough argument is as follows.
Once we fix a ground state at $t=0$, the interaction with the thermal bath turns the system into an ensemble, 
which is likely to be supported on the states above the fixed ground state.
Eq.~\ref{error-bound} can be viewed as a probability estimation
that the system has visited a high energy state above the logarithmic cutoff,
below which topological sectors are unambiguously told.
Since quantum tunneling to another sector is highly suppressed by the macroscopic code distance,
the system would not be aware of the other topological sectors.

The upper bound on the storage error can be easily translated to a lower bound on the memory time.
Indeed, if one is willing to tolerate a fixed storage error $\epsilon$, say $\epsilon=0.01$,
the memory time $T_{mem}$ can be defined as the smallest $t \ge 0$ such that $\epsilon(t)\ge \epsilon$.
Assuming that the lattice size is chosen such that $k(L)=2$, Theorem~\ref{thm:memorytime} implies that
\begin{equation}
\label{Tmem-bound}
T_{mem} \ge L^{c\beta-3} \quad \text{ if } L \le e^{c'\beta}.
\end{equation}
Here we neglected the overall constant coefficient.
It shows that for low temperatures, $\beta \gg 1$,  and sufficiently small system size,
$L\ll e^{c' \beta}$,  the memory time grows with $L$ according to a power law
whose exponent is proportional to $\beta$.
To the best of our knowledge, this provides the first realistic example 
of a topological memory with a self-correcting behavior. 
Unfortunately, the bound is not conclusive in the thermodynamic limit.
At the optimally chosen lattice size,
the maximum memory time $T_{mem}(\beta)$ achievable at a given temperature $\beta$
is easily found from Eq.~\eqref{Tmem-bound}:
\begin{equation}
\label{Tmem-bound1}
T_{mem}(\beta) \ge e^{cc' \beta^2}
\end{equation}
for $\beta \gg 1$. For comparison, the memory time of the 2D toric code model grows
only exponentially with $\beta$~\cite{AlickiFannesHorodecki2009thermalization, ChesiRoethlisbergerLoss2010Self-Correcting}.
Depending on the value of the constant $cc'$ and the temperatures realizable in experiments,
the scaling Eq.~\eqref{Tmem-bound1} may be favorable enough to achieve macroscopic memory times.

The restriction $L \ll e^{c' \beta}$ implies that the average number of defects (flipped stabilizers)
in the equilibrium Gibbs state $\rho_\beta \sim e^{-\beta H}$
is small, that is, the Gibbs state has most of its weight on the ground subspace of $H$.
This might suggest that the thermal noise is irrelevant in the studied regime.
However, this is not the case.
If the evolution time is large enough, so that $\rho(t) \approx \rho_\beta$, 
the encoded information cannot be retrieved from $\rho(t)$,
since $\rho_\beta$ does not depend on the initial state.
If a time $t\sim e^{\Omega(\beta^2)}$ has elapsed, 
the system would have accommodated approximately $tL^3 e^{-\beta}\sim  e^{\Omega(\beta^2)}$ defects during the evolution.
This implies in particular that the system has endured $e^{\Omega(\beta^2)}$ errors which becomes significant for low temperatures.

\section{Numerical simulation}
\label{sec:numerics}

\begin{figure}[p]
\centering
 \includegraphics[width=.81\textwidth]{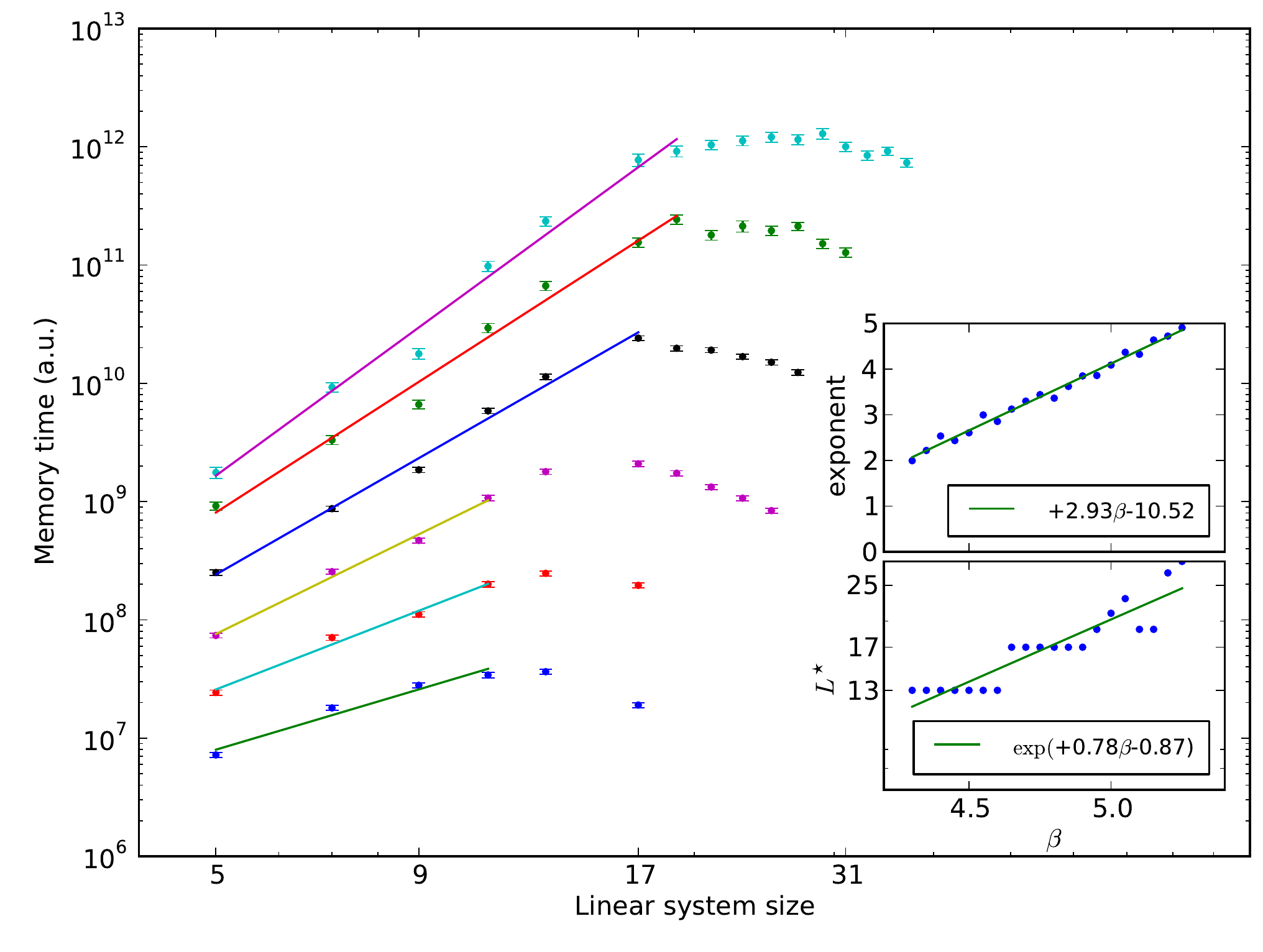}
\caption{The memory time $T_{mem}$ vs. the system size $L$. In the upper inset is shown the exponent of the power law fit of $T_{mem}$ for the first a few system sizes. It is clear that $T_{mem} \propto L^{2.93 \beta -10.5}$ when $L < L^\star$, where $L^\star$ is the optimal system size where $T_{mem}$ reaches maximum. The data for $\beta = 4.3, 4.5, 4.7, 4.9, 5.1, 5.25$ are shown.}
\label{fig:TvsL}
\end{figure}

\begin{figure}[p]
\centering
 \includegraphics[width=.81\textwidth]{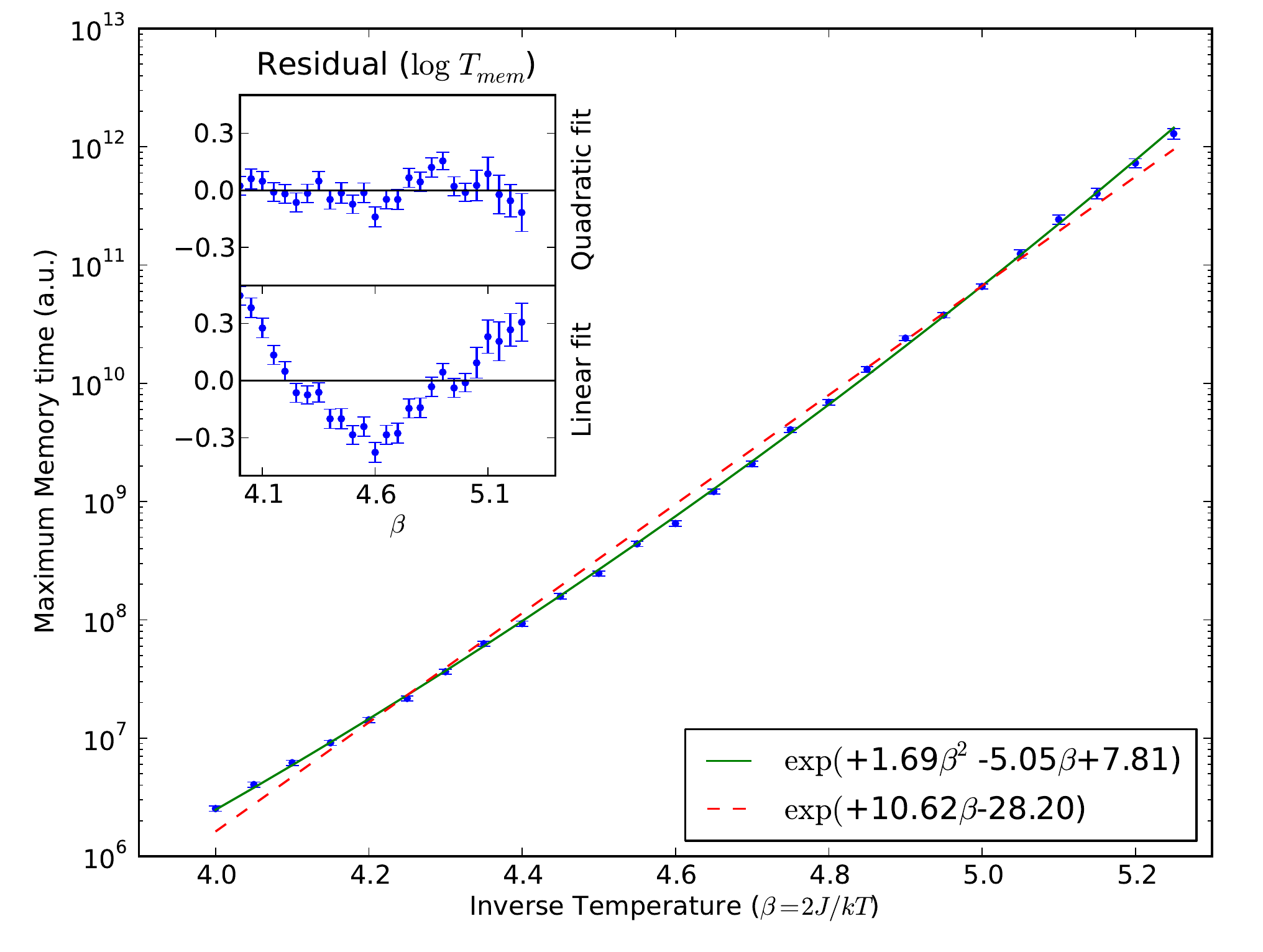}
\caption{The maximum memory time $T_{mem}$ vs. the inverse temperature $\beta$. The memory time is maximized with respect to the system size. The logarithm of $T_{mem}$ clearly follows a quadratic relation with $\beta$ as opposed to a linear one.}
\label{fig:TvsBeta}
\end{figure}

Since Theorem~\ref{thm:memorytime} only provides a lower bound on the memory time, 
a natural question is whether this bound is tight and, 
if so, what is the exact value of the constant coefficient $c$?
To answer this question, the memory time of the 3D cubic code has been computed numerically
for a range of $\beta$'s and $L$'s.
It should be emphasized that both Theorem~\ref{thm:memorytime} and our numerical simulation
use the \emph{same} decoder at the read-out step.
The numerical results strongly suggest that
our analytical bound is tight up to constant coefficients for our renormalization group decoder.
See Figure~\ref{fig:TvsL},\ref{fig:TvsBeta}.
It suggests that $T_{mem}\approx L^{2.93 \beta-10.52}$
as long as $L\le L^*\approx e^{0.78\beta - 0.87}$. 
The number $0.78$ in the estimate of $L^*$ should not be taken too seriously,
as one sees in the inset that the dependence of $L^*$ on $\beta$ is hard to tell quantitatively.
It is clear, however, that $L^*$ is increasing with the inverse temperature.

The interaction of the memory system with a thermal bath is simulated by Metropolis evolution.
As we wish to observe low temperature behavior we adopt continuous time algorithm by Bortz, Kalos, and Lebowitz (BKL)~\cite{BortzKalosLebowitz1975}.
A pseudo-random number generation package \verb|RngStream| by L'Ecuyer~\cite{LEcuyerEtAl2002RngStream} was used.
As before, the coupling constant in the Hamiltonian is set to $J = \half$ so a single defect has energy 1.
Although the cubic code is inherently quantum,
it is relevant to consider only $X$-type errors (bit flip) in the simulation,
thanks to the duality of the $X$- and $Z$-type stabilizer generators of the cubic code.
The simulation thus is purely classical.
The errors are represented by a binary array of length $2L^3$,
and the corresponding syndrome by a binary array of length $L^3$.

The memory time is measured to be the first time when the memory becomes unreliable.
There are two cases the memory is unreliable: 
either the broom algorithm fails to remove all the defects so we have to reinitialize the memory, 
or a nontrivial logical error is occurred. 
It is thus necessary in our simulation to keep track of the error operator during the time evolution.
In fact, most of the time, it was the broom algorithm's failure that made the memory unreliable.
Nontrivial logical errors occurred only for very small system sizes $L=5,7$.

It is too costly to decode the system every time it is updated. 
Alternatively, we have performed a trial decoding  every fixed time interval
\[
 T_{ec} = \frac{e^{4\beta}}{100}
\]
where $\beta$ is the inverse temperature.
Although the time evolution of the BKL algorithm is stochastic, 
a single BKL update typically advances time much smaller than $T_{ec}$. 
So it makes sense to decode the system every $T_{ec}$. 
The exponential factor appears naturally 
because BKL algorithm advances time exponentially faster as $\beta$ increases. 
It is to be emphasized that we do not alter the system by the
trial decodings (a copy of the actual syndrome has been created for each trial decoding). 

The system sizes $L^3$ for the simulation 
are chosen such that the code space dimension is exactly $2$,
for which the complete list of logical operators is known.
If the linear size $L$ is $\le 200$, 
this is the case when $L$ is not a multiple of 2, 15, or 63 by Corollary~\ref{cor:cubic-code-degeneracy-formula}.
For these system sizes, 
to check whether a logical operator is nontrivial 
is to compute the commutation relation with the known nontrivial logical operators.

The measured memory time for a given $L$ and $\beta$ is observed to follow an exponential distribution;
a memory system is corrupted with a certain probability given time interval.
Specifically, the probability that the measured memory time is $t$ is proportional to $e^{-t/\tau}$.
Thus the memory time should be presented as the characteristic time of the exponential distribution.
We choose the estimator for the characteristic time
to be the sample average $\bar T = \frac{1}{n}\sum_i^n T_i$.
The deviation of the estimator will follow a normal distribution for large number $n$ of samples.
We calculated the confidence interval to be the standard deviation of the samples divided by $\sqrt{n}$.
For each $L$, $400$ samples when $\beta \le 5.0$ and $100$ samples when $\beta > 5.0$ were simulated.
The computation was performed on IBM Blue Gene/P using 512 cores
located in IBM T.~J. Watson Research Center, Yorktown Heights, New York.
The result is summarized in Figure~\ref{fig:TvsL},\ref{fig:TvsBeta}.

Figure~\ref{fig:TvsBeta} clearly supports $\log T_{mem} = c \beta^2 + \cdots$.
Figure~\ref{fig:TvsL} demonstrates the power law for small system size:
\[
 T_{mem} \propto L^{2.93 \beta -10.5}
\]
We wish to relate some details of the model with the numerical coefficients.
The rigorous analysis of the previous section, 
gives a relatively small coefficient $c$ of the energy barrier for correctable errors by our RG decoder.
However, we expect that the coefficient of $\beta$ in the exponent is the same as the constant 
$c$ that appear in the energy barrier
\[
 E = c \log_2 R
\]
to create an isolated defect separated from the other by a distance $R$.
This is based on an intuition that 
the output $P'$ of the decoder would have roughly the same support 
as the real error $P$ for the most of the time, 
provided that the error has energy barrier less than $\Delta=c \log_2 \ltqo$.
Thus, an error of energy barrier less than $\Delta$ would be corrected by the decoder.
Our empirical formula supports this intuition.
It suggests that $c = 2.93 \log 2 = 2.03 \sim 2$.

\begin{figure}
\centering
\includegraphics[width=.5\textwidth]{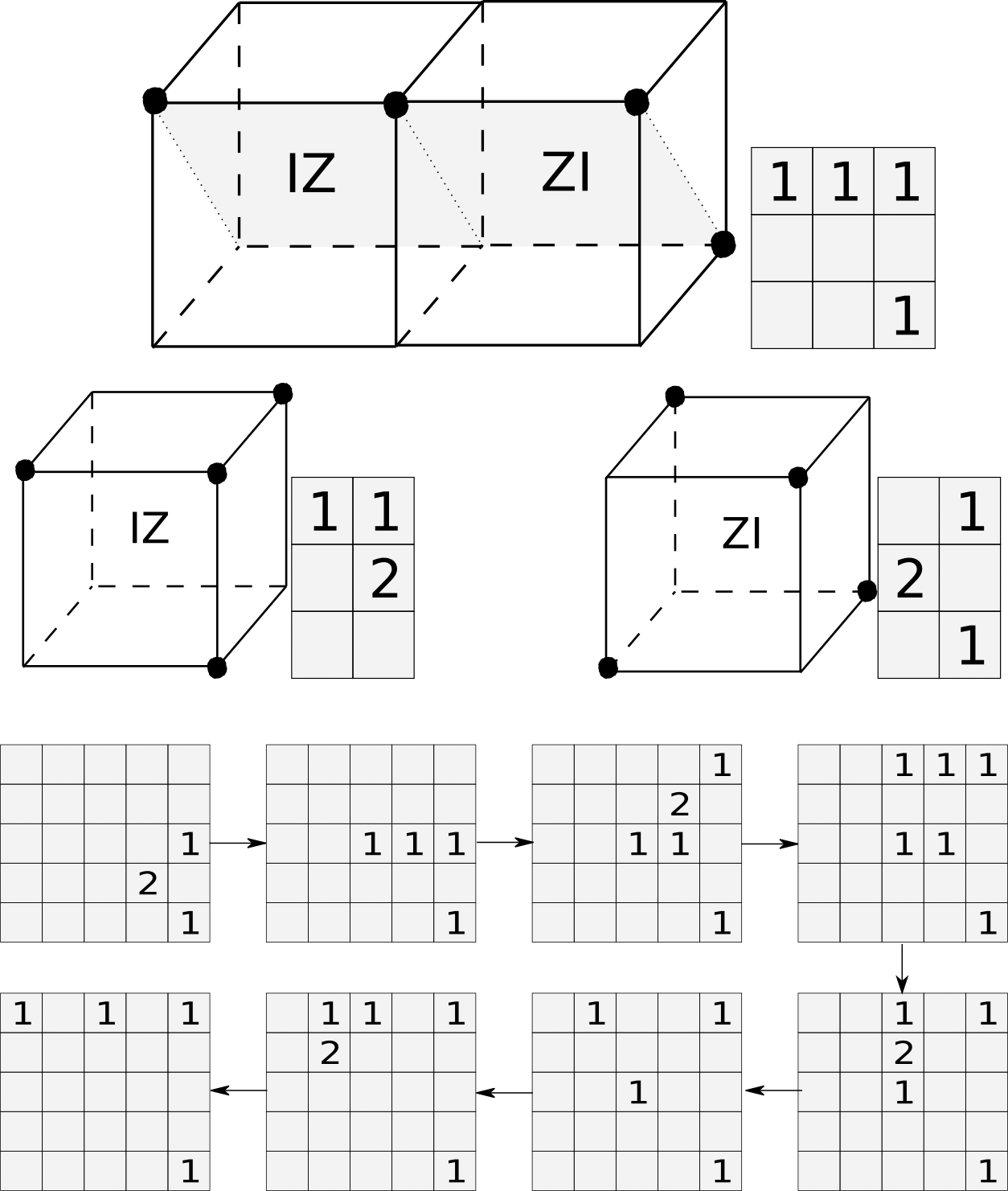}
\caption{Construction of a hook of level 2 from the vacuum.
The grid diagram represents the position and the number of defects in the ($x=z$)-plane.
For each transition, an operator of weight 1 is applied.
The total number of defects never exceeds 6.
From a level-$0$ hook (the second diagram in the sequence),
a level-$1$ hook (the last in the sequence) is constructed using extra 2 defects.}
\label{fig:hook}
\end{figure}

Indeed, we can illustrate explicitly an error path that separates 
a single defect from the rest by distance $2^p$ during which only $2p+4$ defects are needed.
Consider an error of weight 2 that creates 4 defects as shown in the top of Fig.~\ref{fig:hook}.
We call it the {\em level-0 hook}.
The bottom sequence depicts a process to create a configuration shown at the bottom-left,
which we call {\em level-1 hook}. 
One sees that level-$1$ hook is similar with ratio 2 to level-$0$, 
and is obtained from level-0 with extra 2 defects.
One defines level-$p$ hooks hierarchically.
We claim that a level-$p$ hook can be constructed from the vacuum using $2p+4$ defects.
The proof is by induction.
The case $p=1$ is treated in the diagrams.
Suppose we can construct level-$p$ hook using $2p+4$ defects.
Consider the $2^{nd}$, $4^{th}$, $6^{th}$, and $8^{th}$ steps in Fig.~\ref{fig:hook}.
They can be viewed as a minuscule version of level-$p$ steps 
that construct a level-$(p+1)$ hook from the level-$p$ hooks. 
It requires at most $2p+4+2$ defects to perform the level-$p$ step; 
this completes the induction.

It may not be obvious whether a high level hook corresponds to a nontrivial logical operator,
but such a large hook is bad enough to make our decoder to fail.

\appendix

\chapter{Commutative algebra}
\label{app:algebra}

We briefly review algebraic concepts and tools used in this thesis, mainly in Chapter~\ref{chap:alg-theory}~and~\ref{chap:lowD-codes}.
There are many nice textbooks including those by Lang~\cite{Lang}, Atiyah and MacDonald~\cite{AtiyahMacDonald}, and Eisenbud~\cite{Eisenbud}.
The book by Lang is a comprehensive textbook covering a wide range of topics in abstract algebra.
The book by Atiyah and MacDonald explains commutative algebra that may look too concise, but precisely for this reason it is very useful as a reference.
Examples are rare but essential.
The book by Eisenbud is also on commutative algebra and is extensive.
It covers more material than Atiyah-MacDonald.
In particular, our summary of Gr\"obner basis follows Eisenbud.
The chapter on Gr\"obner basis appears in the middle of the book,
but is relatively self-contained and elementary.
Here, we will omit many proofs and not try to be fully rigorous.
We explain theorems to the point where intuition can be developed.
Rigorous proofs can be found in one of the three books.

We start by recalling definitions for abelian groups.
An abelian group $G$ with the identity element denoted by $0$
is a set with an operation $+ : G \times G \to G$ such that
$g + g' = g' + g$ and $0 + g = g$.
It is required for $G$ to have inverses of $g$ denoted by $-g$ such that $g + (-g) = 0$.
$n$-fold sum of $g$ is simply denoted as $ng$, where $n \in \ZZ$.
Given two abelian groups $G$ and $H$, we can form a direct sum $G \oplus H$.
It is the set of all tuples $(g,h)$, where $ g \in G,~ h \in H$,
and the group operation $+$ is defined as $(g,h) + (g',h') = (g+g',h+h')$.
We can form a {\bf direct sum} $\bigoplus_\alpha G_\alpha$ of arbitrary family $\{ G_\alpha \}$ of groups.
It is the set of all indexed collections of group elements $(g_\alpha)$ 
where only {\em finitely many} $g_\alpha$ are nonzero.
The group operation is again defined component-wise.
Thus, any element in the direct sum is a sum of finitely many $g_\alpha \in G_\alpha$.
A sum of two abelian groups can be defined if they are subgroups of a parent group.
If $A,B \le C$ are subgroups, the {\bf sum} $A + B$ is the group of all elements of $C$ of form $a+b$ where $a \in A$ and $b \in B$.
Note that $A+B \cong A \oplus B$ if and only if $A \cap B = 0$.
Given a subgroup $N \le G$, we can form a quotient group $G/N$,
the set of all equivalent classes under the equivalence relation $[g]=[g']$ iff $g-g' \in N$.
In commutative algebra, almost everything is an abelian group.
On top of the abelian (additive) group structure, a new ``multiplication'' is added.

\section{Rings and homomorphisms}

The set of integers $\ldots, -2,-1,0,1,\ldots$ admits two operations, addition and multiplication.
There is 0 that has no effect under addition, and 1 that has no effect under multiplication.
One can always undo the addition because one can subtract a number.
However, the multiplication is not invertible within the set of integers
because fractions are not integers.
One convenient thing is that the multiplication does not care about the order.
A commutative ring is an abstraction of this structure.
It is a set, in which one can add and subtract. A multiplication exists but is not in general invertible.
An additive identity 0 exists, and a multiplicative identity 1 exists.
The distribution law $a(b+c) = ab + ac$ is assumed,
and the multiplication is commutative $ab = ba$.
A ring $R$ can consists of a single element, in which case $R$ is called a zero ring, if and only if $0=1$.
Indeed, if $a \in R$ and $1 = 0 \in R$, then $a = a \cdot 1 = a \cdot 0 = a \cdot ( 0 + 0 ) = a \cdot (1+1) = a + a = 0$.
Examples of rings are abundant: The set of all integers, the set of all complex numbers, the set of all square diagonal matrices of  a fixed size,
the set of polynomials, the set of all differentiable functions on a real line, the set of all continuous real-valued functions on a manifold, etc.
Is the set of all even integers a ring? 
No. Some authors define rings to include this case where the multiplicative identity $1$ is not provided,
but we avoid this case. Any ring is with 1.
Note that $1$ is unique; if $1'$ is also a multiplicative identity, then $1 = 1 \cdot 1' = 1'$.
The same is true for $0$.

A ring is always understood in terms of relations with other rings.
Given two rings $A$ and $B$ we consider a restricted class of maps between them.
That is, we require that the map obeys the ring structure of the rings.
$f : A \to B$ is a {\bf homomorphism} if $f(a+b) = f(a) + f(b)$ and $f(ab) = f(a)f(b)$ for any $a,b \in R$.
In addition, we assume $f(0) = 0$ and $f(1) = 1$.
(``morph'' means ``shape.'') 
The $+$ or the omitted $\cdot$ between $a$ and $b$ in $ab$ on the left-hand side
are the operations defined in $A$, whereas those in the right-hand side are in $B$.
The {\bf image} of a homomorphism $f$ is the subset of $B$ written as $f(A)$ defined by $\{ f(a) ~|~ a \in A \}$.
Is the image of a homomorphism a ring? Yes.

There is no point to speak of a map between two rings $A,B$ that is not a homomorphism.
If we are going to ignore the ring structure, we would rather say the map between the ``sets'' $A,B$.
We will simply say a {\bf map between rings} to mean a homomorphism.
We note more terminologies: An {\bf endomorphism} is a map from a ring into itself.
An {\bf isomorphism} is a map between two rings with a unique inverse.
An {\bf automorphism} is an isomorphism that is an endomorphism.

The ring of integers is so primitive in the following sense.
Let $A$ be an arbitrary ring. Consider a map $f : \ZZ \to A$.
$f(n) = \sum_{i=1}^{n} f(1)$ and $f(-n) = \sum_{i=1}^n (-f(1))$ where $n > 0$.
But, $f(1)=1$ is the unique multiplicative identity.
Therefore, $f$ is completely determined, though we just required $f$ be a homomorphism;
there is a unique nonzero map from $\ZZ$ into any ring.
How many endomorphisms are there for $\ZZ$?

The {\bf kernel} of a map (homomorphism!) $f$ is the subset of $A$ written as $\ker f$
defined by $\{ a \in A ~|~ f(a) = 0 \}$.
It is easy to see that the kernel is closed under the addition and multiplication.
Here, the closeness means that the result of the operation using two elements in a subset lies in the subset.
(It is pointless to speak of the closedness of an operation without reference to a subset.)
There is one more important property as we discuss below.

\section{Ideals and modules}

The kernel $I$ of a map $f$ between rings $R \to S$ has the following property:
\begin{equation}
 \forall r \in R,~ \forall x \in I~:~~ r x \in I
\label{cond-ideal}
\end{equation}
This is easily verified since $f(r x) = f(r)f(x) = f(r) \cdot 0 = 0$.
When a subset $I$ of a ring $R$ is closed under the multiplication and addition, and satisfies the above property,
we call $I$ to be an {\bf ideal} of $R$.
For example, in $\ZZ$, the set of all even numbers is an ideal denoted as $(2)$.
The property \eqref{cond-ideal} is trivial, because it reads a multiple of an even number is even.
This ideal is the kernel of the map $\ZZ \to \ZZ/(2)$,
where the latter is the ring of integers modulo 2.
Remember that there is a unique nonzero map from $\ZZ$ to any ring.
In fact, any ideal arises in this way: An ideal is the kernel of a ring homomorphism.

To understand this, we need to formalize how to construct {\bf quotient rings} or {\bf factor rings}.
Let $R$ be a ring and $I$ be an ideal.
They are both abelian groups; they are closed under the addition with identity $0$,
and contains additive inverses, the minus elements.
The quotient ring $R/I$ as a set is the same as the quotient group as an abelian group.
$R/I$ is the family of equivalence classes under the equivalence relation that
$[a] = [b] \in R/I$ iff $a-b \in I$.
The multiplication in $R/I$ is as expected: $[a] \cdot [b] = [ab]$.
Using the property $(*)$ one can verify that it is well-defined.
Consider a map $R \to R/I$ defined by $a \mapsto [a]$.
It should be straightforward that it is a ring homomorphism.
What is the kernel? Precisely, $I$.
Now it is an almost tautology that an ideal is a kernel of a ring homomorphism.
Often, we just write $a$ in place of the equivalence class notation $[a]$.
In this lazy notation, ``the map $R \to R/I$ is defined by $a \mapsto a$.''
This map is so important that it has its own name: {\bf canonical map} or {\bf quotient map}.

In leisurely words, an ideal extends what we treat as zeros;
any element of $I$ is zero in $R/I$.
Zero plus zero or zero times zero must be zero, so $I$ is closed under the addition and multiplication.
Zero times any element must be zero, so the property \eqref{cond-ideal} should hold.
How do we specify an ideal in a real calculation?
Consider a polynomial ring $\QQ[x]$, the set of all polynomials (of finitely many terms) in $x$ with coefficients in the rational numbers.
Suppose we decided that $x$ must be equal to $\half$.
In other words, we decided that $x - \half$ must be ``zero'' in $\QQ[x]/I$.
Then, as zero times anything must be zero, any element $(x-\half)\cdot f(x)$ must be zero in $\QQ[x]/I$, too;
it must be an element of the ideal we are defining.
We require no more elements are identified as zeros.
That is, we define $I = \{ f(x)(x-\half) ~|~ f(x) \in \QQ[x] \}$ as our ideal,
and carry out any computation in the quotient ring $\QQ[x]/I$.
We write $I$ as $(x-\half)$ by putting the {\bf generator} inside the parenthesis.
Computing in $\QQ[x]/I$ is the same as computing in $\QQ[x]$ and evaluate the polynomial at $x=\half$.

In general, if we write and ideal $I$ of $R$ as $(a,b,c)$,
then we mean
\[
 I = \{ ax + by + cz ~|~ x,y,z \in R \} .
\]
$I$ is said to be {\bf generated by $a,b,c$}.
A quick exercise: What is $\QQ[x]/(x-\half, x-1)$~? It is a zero ring.
Since $x-\half$ and $x-1$ are ``zeros,'' their difference $\half$ is zero.
Zero times 2 is zero, so $1$ is zero. Therefore, everything is zero.
We proved an ideal equality $(x-\half, x-1) = (1)$.
The whole ring $R$ viewed as an ideal is called the {\bf unit ideal} denoted by $(1)$.
In fact, when there is any invertible element in an ideal, it is the unit ideal.
For this reason, an invertible element in a ring is called a {\bf unit}.
Note that computing the minimal set of generators is in general a hard problem.
For instance, I do not know any algorithmic answer to questions like ``can this ideal be generated by three elements?''
However, the Gr\"obner basis gives a \emph{canonical} set of generators for an ideal of a polynomial ring,
and we can algorithmically compare two ideals.

\subsection*{Prime, maximal}

There is a very important class of ideals that generalizes prime numbers in $\ZZ$.
A prime number $p$ has a property that if $p$ divides a product of two integers $ab$
then $p$ divides either $a$ or $b$. A {\bf prime}%
\footnote{Do not use ``primary'' in place of ``prime.'' The adjective ``primary'' has a slightly different technical meaning.}
ideal $\pp$ of $R$ is an ideal not equal to $(1)$ such that
\begin{equation}
ab \in \pp \text{ implies } a \in \pp \text{ or } b \in \pp \text{ for any } a,b \in R.
\label{cond-prime}
\end{equation}
The condition is rephrased as $a \notin \pp$ and $b \notin \pp$ imply $ab \notin \pp$.
Is the ideal $(0)=\{0\}$ prime? It depends. In $\ZZ$ the zero ideal is prime because the product of two nonzero integers is nonzero.
In the polynomial ring $\QQ[x]$, $(0)$ is prime because the degree of nonzero polynomial does not decrease under any nonzero multiplication. 
If any nonzero elements of a nonzero ring $R$ has a multiplicative inverse, in which case $R$ is called a {\bf field},
then $(0)$ is prime because $ab = 0$ means $a = 0$ or $b = 0$.
However, in the ring of diagonal $2 \times 2$ matrices, 
$(0)$ is not prime because nonzero matrices $\begin{pmatrix} 1 & 0 \\ 0 & 0 \end{pmatrix}$
and $\begin{pmatrix} 0 & 0 \\ 0 & 1 \end{pmatrix}$ multiply to zero.
The ring in which $(0)$ is prime has a special name, {\bf (integral) domain}.
It is easy to verify that $R/\pp$ is an integral domain if and only if $\pp$ is prime,
applying the picture that $\pp$ defines zeros in $R/\pp$.

There is one more important property of prime ideals.
Let $f : A \to B$ be a map between two rings.
{\em If $\pp \subseteq B$ is a prime ideal, then $I = f^{-1}(\pp) \subseteq A$ is prime.}
Proof: $aa' \in I \Rightarrow f(a)f(a') \in \pp \Rightarrow f(a) \in \pp \vee f(a') \in \pp \Rightarrow a \in I \vee a' \in I$.

A subclass of prime ideals consists of maximal ideals.
A {\bf maximal ideal $\mm \neq (1)$} is defined by the maximal property:
\begin{equation}
 \mm \subsetneq \mm' \text{ implies } \mm' = (1) \text{ for any ideal } \mm'.
\end{equation}
It is a priori not vivid why maximal ideals are prime.
But it is easy to see. Consider $R/\mm$. If $a \in R/\mm$ is nonzero, that is $a \notin \mm$, then $I = (\mm, a) \supsetneq \mm$ and therefore $I = (1)$, which means there is an element $b \in R$ such that $b a + m = 1$ for some $m \in \mm$. By the canonical map, $b$ maps to a multiplicative inverse of $a$ in $R/\mm$. 
(The converse is also true. {\em $\mm \neq (1)$ is a maximal ideal if and only if $R/\mm$ is a field}.)
In other words, $R/\mm$ is a field, and therefore an integral domain. In particular, $\mm$ is prime.

\subsection*{Modules}

Ideals admit another viewpoint.
Let us forget the multiplication within an ideal $I$,
and treat $I$ as a separate set from the mother ring $R$.
It is still an abelian group, and satisfies \eqref{cond-ideal}.
The condition $(*)$ looks very similar to the scalar multiplication on vector spaces.
Indeed, consider a direct sum $R \oplus R$ of abelian groups,
the set of all tuples $(r,r')$ where $r,r' \in R$ are any elements.
There is an operation on $R \oplus R$ similar to \eqref{cond-ideal};
we can define $r \cdot (s,s')$ to be $(rs, rs')$,
similar to the scalar multiplication for a vector space.
Indeed, a two-dimensional vector space is precisely obtained in this way by setting $R = \mathbb{Q}, \mathbb{R}, \mathbb{C}$, etc.

Let us define an abstract notion.
Let $M$ be an abelian group with a {\em bilinear} operation $\cdot$ such that
\begin{equation}
 \forall r \in R,~ \forall m \in M~:~~ r \cdot m \in M .
\label{cond-module}
\end{equation}
The ``multiplication'' $\cdot$ here is not the same thing as the multiplication within the ring.
We assume expected formulas $(rs) \cdot m = r \cdot (s \cdot m)$, and simply write $(rs) \cdot m = rsm$ where $r,s \in R$ and $m \in M$.
We call $M$ an {\bf $R$-module} or a {\bf module over $R$}.
It is general than the notion of vector spaces.
A vector space is a module over a {\bf field}, a commutative ring in which multiplication by a nonzero element has an inverse.
An ideal is a subset of $R$ such that it is a module over $R$.%
\footnote{
Although an ideal is a valid module, often it is treated differently than a module.
One should sometimes be careful to apply things defined for modules,
especially when one reads the dimension theory of Eisenbud~\cite{Eisenbud}.}
Note that $R$ itself is an $R$-module via the multiplication within $R$.
The above example $R \oplus R$ is an $R$-module.

For an $R$-module $M$ if there exist finitely many elements $m_1,\ldots,m_n \in M$ such that
\begin{equation}
 M = \{ r_1 m_1 + \cdots + r_n m_n ~|~ r_1,\ldots, r_n \in R \},
\end{equation}
then we say $M$ is {\bf finitely generated}.
All of our modules in the thesis are finitely generated.
An ideal is finitely generated if it is finitely generated as a module.

There is a confusingly similar terminology that one must distinguish.
Suppose $A$ is a subring of $B$. That is, $A$ is a ring by itself and contained in a bigger ring $B$.
Or slightly more generally, suppose we are given a ring map $A \to B$.
The subring case is precisely when the map is an inclusion.
We say $B$ is a {\bf finitely generated $A$-algebra}
if there exists finitely many elements $b_1, \ldots, b_n \in B$ such that
any element of $b$ can be written as a polynomial in $b_1,\ldots, b_n$ with coefficients in the image of $A$.
In this case, $B$ is sometimes written as $B = A[b_1,\ldots, b_n]$.
A typical situation is when $A$ is a field such as $\QQ, \mathbb{C}$ and $B$ is a polynomial ring over $A$.
For example, $B = \QQ[x,y]$.
The reason it is a confusing terminology is because $B$ is not necessarily finitely generated $A$-module.
$\QQ[x]$, a finitely generated $\QQ$-algebra, has infinitely many generators $\{ 1, x, x^2, \ldots \}$ as a $\QQ$-module.
Is $\QQ$ a finitely generated $\ZZ$-algebra? 
No, because multiplying by integers cannot produce large denominators.

As the rings are understood in relation to others,
the modules should be understood via maps.
We required the ring homomorphisms to preserve the defining operations of the rings.
The same is true for the module maps. Given two modules $M$ and $N$ over $R$,
we define an {\bf $R$-linear map} or {\bf $R$-module homomorphism} $f : M \to N$ to satisfy
\begin{equation}
f(m+m') = f(m)+f(m')\quad \text{ and } \quad f(r \cdot m) = r \cdot f(m) \text{ for any } r \in R,~ m, m' \in M.
\end{equation}
Note that the $\cdot$ on the left-hand side is the operation \eqref{cond-module} for $M$
whereas that on the right-hand side is the operation for $N$.%
\footnote{
In group representation theory,
the $R$-linear maps are called equivariant maps,
where $R$ is the group algebra which may not be commutative.
The representation space is a module, 
the subrepresentation space is a submodule, and
the irreducible representation space is a simple submodule.}
The notions of kernel, image, endomorphism, automorphism, and isomorphism apply to module maps, too.
A simple exercise: Let $A,B$ be two $R$-algebras; there are ring maps $R \to A$ and $R \to B$.
The two algebras are naturally $R$-modules, when a map $A \to B$ becomes $R$-linear?
Answer: It becomes $R$-linear when the following diagram commutes.
\[
 \xymatrix@!0{
A \ar[rr]  & & B \\
           & R\ar[ul]\ar[ur] &
}
\]

\subsection*{Free}

An $R$-module isomorphic to $\bigoplus_\alpha R$ is called a {\bf free module}.
When there are finitely many summands, it is called {\bf finitely generated free module}.
It is the most convenient type of modules.
In particular, a module map between finitely generated free $R$-modules
is simply given by a matrix with entries in $R$, just as linear map between finite dimensional vector space
can be described by a matrix with entries in a field.
Note that elements of a finitely generated free $R$-module $M = R^{\oplus n}$ can be expressed by column matrices.
Let $e_i$ ($i=1,\ldots,n$) denote the canonical basis column matrices.
(We could say ``column vectors'' instead of column matrices. However, an element of a module is not a vector in general.)
A map $f$ from $M$ to any module is specified if we specify the image $f(e_i)$
because the image of other elements $a_1 e_1 + \cdots a_n e_n$ is must be $a_1 f(e_1) + \cdots + a_n f(e_n)$
by $R$-linearity.
Writing $f(e_i)$ in columns, we have a matrix representation of $f$.
An un-redundant set of generators of a free module is called a {\bf basis}.
The cardinality of a basis is called {\bf rank}.
(One can show that rank is independent of the choice of a basis.)
Only for free modules can we speak of bases.
The crucial difference between general modules and vector spaces is that
a module is in general not free, while a vector space, a module over a field, is always free.
That nonzero elements are invertible makes such a huge difference.

Note that any module can be described by free modules.
Take a {\bf generating} set $\{ m_\alpha \}$ of a module $M$; 
any element of $M$ is a {\em finite} $R$-linear combination $\sum_i r_{\alpha_i} m_{\alpha_i}$.
A trivial and useless choice would be to take whole $M$ as a generating set.
Let $F$ be a free $R$-module whose rank is the same as the cardinality of the generating set,
i.e., there is a surjective module map $F \to M$.
The kernel $N$ is a submodule of $F$, not necessarily free, and $M \cong F / N$.
One can carry the same process for the module $N$.
That is, one finds a free module $F'$ such that $\phi : F' \to N$ is a surjection.
Now $M$ is expressed as $M \cong F / \im \phi$.
This is conceptually important observation, but not too useful
because we do not have any control over the ranks of $F$ and $F'$.
We need some finiteness conditions.
An $R$-module $M$ is said to be {\bf finitely presented} 
if there is a map $\phi : F' \to F$ between finitely generated free modules
such that $M \cong F / \im \phi$. The latter expression $F/\im \phi$ is often abbreviated as $\coker \phi$.
The map $\phi$ is called a {\bf finite presentation} of $M$.
As we have seen above, $\phi$ is a matrix with entries in $R$,	
and $M$ is expressed by a single matrix $\phi$.
In case of a finite dimensional vector space, $\phi$ can always be brought to a diagonal matrix
with entries of $0$ or $1$, after basis change of $F$ and $F'$.
We will discuss more on a finite presentation in Section~\ref{sec:det-ideal}.

\subsection*{Noetherian}

A technically very important and convenient adjective is Noetherian.
It is as important as vector spaces having finite dimensions.
A {\bf module $M$ is Noetherian} if every increasing sequence of submodules is stationary,
i.e., if $M_1 \le M_2 \le \cdots \le M_n \le \cdots$ is a sequence of submodules of $M$,
then for all sufficiently large $n$ one has $M_n = M_{n+1}$.
Often this condition is referred to as the {\bf ascending chain condition} or {\bf a.c.c.}
{\em A module $M$ is Noetherian if and only if any submodule is finitely generated.}
If a submodule cannot be generated by finitely many elements, 
one can construct a strictly increasing infinite sequence of submodules
using an infinite subset of generators.
Conversely, if any submodule of $M$ is finitely generated,
then the union $\cup_{i=1}^\infty M_i$ of any increasing sequence of submodules $M_i$ of $M$
is also finitely generated, say, by $m_1,\ldots,m_r$.
Since $m_j$ is contained in some $M_{j'}$, there must be some $k$ such that $m_1,\ldots,m_r \in M_k$.
It follows that $M_k = M_{k'}$ for all $k' \ge k$.
A Noetherian module over a field is just a finite dimensional vector space.
A {\bf Noetherian ring} $R$ is a ring that is Noetherian as an $R$-module,
i.e., the a.c.c is satisfied with respect to the ideals of $R$.
In a Noetherian ring, any ideal is finitely generated.

Being Noetherian is preserved in many cases.
Loosely speaking, it says that a module carries a finite amount of data.
A module with a finite amount data manipulated finitely many times would still have a finite amount of data.
The following are facts:
\begin{itemize}
\item {\em A homomorphic image of Noetherian ring (module) is a Noetherian ring (module).}
\item {\em A submodule of Noetherian module is Noetherian.}
\item {\em A finitely generated algebra over a Noetherian ring is a Noetherian ring.}
\item In particular, {\em a polynomial ring over a field with finitely many variables is a Noetherian ring.}
\item {\em A finitely generated module over a Noetherian ring is Noetherian.}
\end{itemize}
There are more to mention about being Noetherian using tensor products and localization.
See Section~\ref{sec:localization}.

Remark that {\em a finitely generated module $M$ over a Noetherian ring $R$ always admits a finite presentation $\phi : F' \to F$.}
Since $M$ is finitely generated, the module $F$ is of finite rank.
$F$ is Noetherian, and therefore, $\im \phi$ is finitely generated, which means $F'$ can be taken to be of finite rank.
The ``matrix'' $\phi$ is of finite size.

Is the ring of all differentiable functions $\mathbb{R} \to \mathbb{R}$ Noetherian? No.
Is the ring of all trigonometric functions $\mathbb{R} \to \mathbb{R}$ of period $1/n$, where $n$ are positive integers, Noetherian? Yes.

\section{Gr\"obner basis}

Gr\"obner basis is a special generating set for ideals and modules.
It provides canonical presentations of modules and ideals, from which any concrete computational commutative algebra is built.
In all notions in the previous section,
I cannot imagine any systematic way to compute things concretely without Gr\"obner basis.
It is theoretically important too because it tells what is actually constructible.
A nice application of the Gr\"obner basis is a sharp version of Hilbert syzygy theorem~\cite[Corollary~15.11]{Eisenbud}.
In this section we {\em assume the (base) ring is a polynomial ring over a field $\FF$ with finitely many variables};
$R = \FF[x_1,\ldots,x_n]$.
Since $R$ is Noetherian, for any ideal there is a finite set of generators.

How do ideals look like in $\FF$? There are only $(0)$ and $(1)$ because any nonzero element is a unit.
How about $\FF[x]$? It is only slightly more complicated.
By the Euclidean algorithm, the $\gcd$ of two polynomials $f(x),~g(x)$ can be expressed as
\[
 \gcd( f(x), g(x) ) = a(x) f(x) + b(x) g(x) \in (f(x),g(x))
\]
An ideal generated by $f(x)$ and $g(x)$ is thus the same as an ideal generated by a single element $\gcd(f(x),g(x))$, 
in which case the ideal is called {\bf principal}.
By induction, one can always reduce the number of generators of an ideal in $\FF[x]$ if it is greater than $1$.
Since $\FF[x]$ is Noetherian, this is enough to imply that {\em any ideal of $\FF[x]$ is generated by a single element,}
i.e., {\em any ideal of $\FF[x]$ is principal}.
The principal generator $p(x)$ is important, not only because it is simple, but also because it gives a criterion
whether an arbitrary polynomial $h(x)$ is contained in $(p(x))$:
Divide the $h(x)$ by $p(x)$. The remainder is zero if and only if $h(x) \in (p(x))$.
Put differently, we can find a canonical representative in the quotient ring $\FF[x]/(p(x))$
to be the remainder $r(x)$  where $\deg r < \deg p$.
The $p(x)$ is the Gr\"obner basis for the ideal $(p(x))$ in $\FF[x]$.

In case of two or more variables, it is no more true that any ideal is principal.
Let us examine the {\bf division algorithm}.
Let $f(x)$ be a dividend and $p(x)$ be a divisor.
We first compare degrees of them.
If $\deg f < \deg p$, then the division is completed, and $f$ is the remainder.
Otherwise, we match the leading coefficients and then subtract a multiple $a(x)p(x)$ from $f(x)$.
$a(x)$ is a monomial such as $3x^2$.
The purpose is of course to have $\deg (f-ap) < \deg f$.
And then we iterate.
Essential is a total ordering among terms such that the ordering is preserved under multiplication by a monomial.
Now we consider a problem of computing the remainder of $f$ modulo $G = \{ g_1, \ldots, g_k \}$.
Equipped with the ordering, given a dividend $f$ and a set $G$ of divisors,
(Step-1) one should be able to match the leading term and kill it,
thereby ``reduce'' the leading term of the dividend.
(Step-2) One stops when the reduction becomes impossible.
A question arises.
During Step-1, there would be many possible choices among the divisors from $G$.
How do we guarantee that the remainder is independent of the choices?
This is a sound question, and the answer is no in general.
Even in the one variable case, the answer is no in general.
Consider $f = x^2 -1,~ g = x^3 -1$ in $\QQ[x]$.
The division by the set $\{f,g\}$ of a dividend $x^3$ gives $1$
if we kill the leading term by $x^3-1$,
or $x$ by $x^2-1$.

We formulate the problem as finding a generating set of an ideal $I$ 
that gives rise to a unique remainder under the division.
A solution is that we include more and more elements of $I$ into a generating set $G$
so that any leading term $l$ of a dividend can be removed 
where $l$ appears as a leading term of some elements of $I$.
In the one variable case, it is enough to include the $\gcd$ of all (finitely many) generators of $I$.
We have given enough motivation and technical problems to smooth out.

\subsection*{Definitions}

A monomial is a product of variables.
A {\bf monomial order} on the set of all monomials of $R = \FF[x_1,\ldots,x_n]$
is a total ordering $\succ$, under which any two monomials are comparable, such that
\begin{equation}
 x_i m \succ x_i m' \succ m' \quad \text{whenever} \quad m \succ m'
 \label{cond-TermOrder}
\end{equation}
for any monomial $m,m'$ and any variable $x_i$.
Since a monomial $x_1^{a_1} \cdots x_n^{a_n}$ is uniquely given by a $n$-tuple of nonnegative integers $(a_i) = (a_1,\ldots,a_n)$,
a monomial order is a total order on the hyper-octant of $\ZZ^n$ of nonnegative coordinates.
Note that under any monomial order, $1$ is the least monomial.

Two examples at least are important. The first one is the {\bf lexicographic order},
under which $(a_i) \succ (b_i)$ if and only if $a_i > b_i$ for the least $i$ such that $a_i \neq b_i$.
For example,
\[
 x_1 \succ_\text{lex} x_2^5 \succ_\text{lex} x_2 x_3^{100} .
\]
The second example is the {\bf degree reverse lexicographic order},
under which $(a_i) \succ (b_i)$ if and only if `$\sum a_i > \sum b_i$' 
or `$\sum a_i = \sum b_i$ and $a_i < b_i$ for the largest $i$ such that $a_i \neq b_i$.'
For example,
\begin{align*}
 x_1 \succ_\text{revlex} x_2 \succ_\text{revlex} \cdots \succ_\text{revlex} x_n \\
 x_3^{100} \succ_\text{revlex} x_2^{50} \succ_\text{revlex} x_1^{49} x_3 .
\end{align*}
The degree reverse lexicographic order is the most commonly used in computer software.

{\em Any monomial order is a well-order, i.e., every set of monomials has a minimal element.}
This statement is at the core of any finiteness proof regarding Gr\"obner basis.
The proof is very simple. Let $X$ be any set of monomials of $R$.
The submodule (ideal) generated by $X$ is finitely generated by $G$ because $R$ is Noetherian.
It means $X$ consists of multiples of a finitely many monomials of $G$.
Therefore, the minimal element of $G$, a finite set, is the minimal element of $X$.
The statement can be rephrased as {\em any decreasing sequence of monomials is stationary}
or as {\em any strictly decreasing sequence of monomials is finite.}

With a monomial order, we can define the {\bf leading term} or {\bf initial term} of a polynomial $f$
which is the greatest term of $f$ with respect to the monomial order.
(We are distinguishing ``term'' and ``monomial.'' A monomial is a product of variables only, and a term is $\FF$-multiple of a monomial.)
Let us denote the leading term of $f$ by $\lt(f)$.
Given any set $S$ of $R$, let the ideal generated by all leading terms of element of $S$ be denoted by $\lt(S)$.
Now, a {\bf Gr\"obner basis} $G$ of $I$ is a generating set of $I$ such that $\lt(I) = \lt(G)$.
Since $R$ is Noetherian, $\lt(I)$ is finitely generated.
Therefore, $G$ can be chosen to be a finite set.

\subsection*{Buchberger criterion}

{\em The division by a Gr\"obner basis results in a unique remainder.}
In particular, a polynomial is in $I$ if and only if the division by a Gr\"obner basis yields the zero remainder.
To see this, let $f$ be an arbitrary polynomial.
Let $r$ be a remainder obtained by killing large terms of $f$ by elements of $G$.
Let $r'$ be another remainder. We must show $r = r'$.
It is clear that $r-r'$ belongs to $I$. If nonzero, the leading term of $r-r'$ belongs to $\lt(I) = \lt(G)$.
However, $r$ or $r'$ has by construction no leading term that belongs to $\lt(G)$.
Therefore, $r-r'$ must be zero.

It is not clear yet how $G$ can be computed from given generators of $I$,
though we know a solution in the case where the generators of $I$ are in a single variable.
A strategy is hinted from one-variable case:
\begin{itemize}
\item[(1)] Start with any generating set $S$ of $I$. 
\item[(2)] Try to produce new polynomials whose initial terms are not contained in $\lt(S)$.
\item[(3)] Update $S$ by adjoining the new polynomials.
\item[(4)] Iterate.
\end{itemize}
It will end after finitely many iterations because the new initial terms form a descending sequence of monomials.
It remains to find an effective method to produce new initial terms.
Let us work with an example first.
Consider $I = (x^2,xy+y^2) \subset \QQ[x,y]$.
We will compute Gr\"obner basis with respect to two monomial orders.
The first one is the lexicographic order under which $x \succ y$.
We have $\lt(x^2) = x^2$ and $\lt(xy+y^2) = xy$.
Neither of them is a multiple of the other.
However, $y(x^2) - x(xy+y^2) = -xy^2$ has the leading term divisible by $xy$.
This combination is called ``S-polynomial'' of $x^2$ and $xy+y^2$.
So, $y(x^2) - x(xy+y^2) + y(xy+y^2) = y^3$.
We see that $y^3$ is a new initial term of $I$.
The generating set is expanded as $G = \{ x^2, xy + y^2, y^3 \}$.
We then again form S-polynomials using any two among $G$,
but fail to produce any new initial term. Indeed, $G$ is a Gr\"obner basis of $I$.
The second monomial order is the lexicographic order under which $x \prec y$.
Then, $\lt(x^2) = x^2$ and $\lt(xy+y^2) = y^2$.
We form an S-polynomial, $y^2(x^2)-x^2(xy+y^2) = -x^3 y$.
It is a multiple of $x^2$, and we do not get anything new.
Indeed, $\{x^2, xy+y^2\}$ is a Gr\"obner basis.

Given two polynomials $f,g$ let us fix the leading coefficient of $\gcd$ to be $1$.
The {\bf S-polynomial} of two polynomials $f$ and $g$ is the polynomial
\[
 \sigma(f,g) = \frac{\lt(g)}{\gcd(\lt(f),\lt(g))} f - \frac{\lt(f)}{\gcd(\lt(f),\lt(g))} g .
\]
Dividing an S-polynomial by a set of generators will give a non-unique remainder,
but we can hope that the calculation would find a new initial term.
Buchberger showed that {\em the process is sufficient to find all initial terms of an ideal}.
The {\bf Buchberger algorithm} to find a Gr\"obner basis is given by the above ``algorithm,''
where the step (2) is now well-defined by (2$'$):
\begin{itemize}
 \item [(2$'$)] For each pair $f,g$ of polynomials of $S$, compute $\sigma(f,g)$ and its remainder $r$ after division by $S$.
\[
\frac{\lt(g)}{\gcd(\lt(f),\lt(g))} f - \frac{\lt(f)}{\gcd(\lt(f),\lt(g))} g = \sum_k h_k s_k  + r
\]
where $s_k \in S$, $h_k \in R$, and $\lt(r) \notin \lt(S)$ or $r=0$.
\end{itemize}
In other words, the algorithm terminates with a correct answer, a Gr\"obner basis $G$,
if the remainder of any S-polynomial computed from a pair of polynomials in $G$ is zero.
This is called the {\bf Buchberger criterion}.
It follows that if $G$ is a Gr\"obner basis for the ideal $(G)$,
then any subset $G' \subseteq G$ is a Gr\"obner basis for the ideal $(G')$.

A Gr\"obner basis $G={f_1,\ldots,f_t}$ is {\bf reduced}
if any term of $f_i$ is not divisible by any $\lt(f_j)$ where $i \neq j$,
and the leading coefficients of all $f_i$ are $1$.
The reduced Gr\"obner basis can be obtained from any Gr\"obner basis.
{\em Given a monomial order, a reduced Gr\"obner basis is unique for an ideal.}
Therefore, {\em two ideals are the same if and only if their reduced Gr\"obner bases are the same.}

\subsection*{Applications}

An immediate application is to decide whether 
a set of polynomial equations $\{f_i(x_1,\ldots,x_n) = 0 \}$ admits a common solution.
If one obtains $1 = 0$ while manipulating $f_i$, then certainly there is no solution,
and this is the only case. That is, the ideal $I$ generated by $f_i$ contains $1$,
i.e., $I$ is the unit ideal,
if and only if there is no common solution to $f_i = 0$.
Since the ideal membership can be tested by the division algorithm using a Gr\"obner basis,
we can algorithmically answer this question by computing a Gr\"obner basis
in any monomial order and looking for $1$.

Note that known algorithms for Gr\"obner basis is not efficient in a computation complexity sense.
Any significant improvement seems impossible because if one can compute Gr\"obner basis efficiently,
then one can also solve, for example, Boolean satisfiability problem (SAT) efficiently.
A SAT is a decision problem that asks whether there exists an assignment to Boolean variables $x_1,\ldots,x_n$
such that a given formula using AND, OR, and parenthesis evaluates to $1$.
Any SAT can be formulated as a finding a solution to a system of polynomial equations.
Being Boolean can be expressed as $x_i(x_i-1) = 0$.
The OR of two binary variables $x,y$ can be expressed as $1-(1-x)(1-y)$.
The AND of two binary variables $x,y$ is the product $xy$.
In particular, the 3-SAT, the SAT where all the expressions are given by conjunctive normal form with 3 variables per clause,
is equivalent to finding a solution to systems of polynomial equations, each of which has degree at most 3.
The 3-SAT is known to be NP-complete.
However, finding Gr\"obner basis for a fixed number of variables is efficient,
i.e., {\em the number of calculation steps increases as a polynomial in the degree of the generating polynomials}.

We can also compute the intersection of an ideal with a subring.
The foundational case is when the subring is given by $T = \FF[x_1,\ldots,x_n]$
contained in $R = \FF[x_1,\ldots,x_n,y_1,\ldots,y_m]$.
If an ideal $I$ of $R$ is given, we wish to compute the intersection $T \cap I$,
which is an ideal of $T$.
We should use an {\bf elimination monomial order} under which $\lt(f) \in T$ implies $f \in T$.
The lexicographic order in which $x_a \prec y_b$ is an elimination order.
The following statement is true.
{\em If a Gr\"obner basis $G=\{f_1,\ldots,f_u,g_1,\ldots,g_v\}$ of $I$ under elimination order
is such that $f_i$ do not involve any variable $y_b~(b=1,\ldots,m)$ but $g_j$ do,
then $G' = \{f_1,\ldots,f_u\}$ is a Gr\"obner basis of $I \cap T$.}
The ideal $J$ generated by $G'$ is certainly contained in $I \cap T$.
It is clear that $G'$ is a Gr\"obner basis for $J$ because it satisfies the Buchberger criterion.
If $J \subsetneq I \cap T$, then there would be $f \in (I \cap T) \setminus J$ with the least leading term.
In particular, $\lt(f) \notin \lt(J)$.
Since $G$ is a Gr\"obner basis, $\lt(f) \in \lt(G)$.
Since $\lt(g_j)$ involve $y_b$ due to the elimination order, but $\lt(f) \in T$,
we must have $\lt(f) \in \lt(G') = \lt(J)$, a contradiction.

More generally, suppose we are given with a subring $T \subseteq \FF[y_1,\ldots,y_m]$
generated by some polynomials $h_i (y_1,\ldots,y_m)$ ($i=1,\ldots,n$) over $\FF$.
We wish to find $T \cap I$ for an ideal $I$ of $\FF[y_1,\ldots,y_m]$.
Introduce new variables $x_1,\ldots,x_n$, and consider 
\[
T' = \FF[x_1, \ldots, x_n] \xrightarrow{\phi} \FF[y_1,\ldots,y_m] \xrightarrow{\pi} \FF[y_1,\ldots,y_m] / I,
\]
where $\phi$ is defined by $x_i \mapsto h_i(y_1,\ldots,y_m)$ and $\pi$ is the canonical map.
$\phi$ maps onto $T$.
If we knew $K = \ker \pi \circ \phi$, the image $\phi(K)$ would be precisely $T \cap I$.
Let $R = \FF[x_1,\ldots,x_n,y_1,\ldots,y_m]$.
Note that $R$ contains $T'$ and $\FF[y_1,\ldots,y_m]$ as subrings.
Thus $\phi$ can be extended to $\tilde \phi$ as
\[
 \tilde \phi: R = \FF[x_1,\ldots,x_n,y_1,\ldots,y_m] \to \FF[y_1,\ldots,y_m] \quad \text{ defined by } \quad 
\begin{cases} x_i \mapsto h_i(y), \\ y_j \mapsto y_j . \end{cases}
\]
The kernel of $\pi \circ \tilde \phi$ is the ideal $J = (x_1 - h_1(y), \ldots, x_n - h_n(y) ) + I$.
We have $K = J \cap T'$, which can be computed by the method above.
For example, let $I = (1+x+y+z,~1+xy+yz+zx) \subseteq \FF_2[x,y,z]$. 
If a subring is $T = \FF_2[x^3,y^3,z^3]$, then an auxiliary ring is $T' = \FF_2[x',y',z']$.
The intersection $I \cap T$ can be found by computing
\[
 (1+x+y+z,~1+xy+yz+zx,~x'-x^3,~y'-y^3,~z'-z^3) \cap \FF[x',y',z'].
\]
In the end, one has to replace $x'$ with $x^3$, $y'$ with $y^3$, and $z'$ with $z^3$.

The Gr\"obner basis is also useful to compute the vector space dimension of $R/I$, where $R=\FF[x_1,\ldots,x_n]$.
The remainder $r$ after division by a Gr\"obner basis of $I$ is unique
and has a property that $\lt(r)$ is not divisible by any leading term of the Gr\"obner basis elements.
The remainders are unique representatives for elements of $R/I$.
It follows that there is a one-to-one correspondence as sets between $R/I$ and $R/\lt(I)$.
In particular, they are isomorphic as vector spaces.
The generators of $\lt(I)$ are monomials $m_1,\ldots,m_t$ computed by the Gr\"obner basis.
Recall that monomials are in one-to-one correspondence with the hyper-octant $H$ of nonnegative coordinates of $\ZZ^n$.
Therefore, the vector space basis of $R/\lt(I)$ is labeled by points of $H$
that are not contained in any cone whose vertex, the least element with respect to the monomial order, is $m_i$ ($1\le i \le t$).
For example, let $I = (x^2,xy+y^2) \subset \FF[x,y]=R$.
The reduced Gr\"obner basis under a lexicographic order in which $x \succ y$ is $\{ y^3, xy+y^2,x^2 \}$.
Therefore, $R/\lt(I)$ has a vector space basis $\{1,x,y,y^2\}$.
In a different lexicographic order in which $x \prec y$, the reduced Gr\"obner basis of $I$ is
$\{x^2,y^2+xy\}$. A vector space basis of $R/\lt(I)$ is $\{1, x, y, xy \}$.

\subsection*{Syzygies and generalization to free modules}

So far we have discussed only ideals and generators.
It is straightforward to generalize the notions to submodules 
of free modules over polynomial rings $R = \FF[x_1,\ldots,x_n]$.
We first need to generalize the monomial order.
Let $e_1,\ldots,e_r$ be a basis of free module $F = R^r$.
A {\bf monomial} of $F$ is a product of variables times one of the basis,
e.g., $x_1 x_2^5 e_3$ and $x_1^3 x_3 e_1$ are monomials.
A {\bf monomial order} on $F$ is defined in the same way as \eqref{cond-TermOrder}.
The well-ordering property holds with the same proof.
The leading terms are defined the same way, as well as the ideal generated by leading terms.
The division algorithm does not need to be restated.
The Buchberger criterion and algorithm make sense with a minor change that
{\em if two leading terms involve different basis elements, then the S-polynomial is defined to be zero.}
The elimination order are still applicable, and the calculation techniques are valid with respect to submodules.

Nontrivial relations between polynomials are called syzygies.
(Here, polynomials include basis elements, so they are really elements of a free module.)
More formally, if a submodule is defined by the image of a module map $\phi : F' \to F$
between finitely generated {\em free} modules, the kernel of $\phi$ is called {\bf syzygy}. 
For instance, the syzygies of $x$ and $y$ can be described by a submodule $M$ of $\FF[x,y]^2$ 
generated by $\begin{pmatrix} y \\ -x \end{pmatrix}$,
since the ideal $(x,y)$ with generators $x$ and $y$ is the image of the map
$\phi : R^2 \xrightarrow{\begin{pmatrix} x & y \end{pmatrix}} R$ and $M = \ker \phi$.
The relation between $x$ and $y$ is thus $y(x) -x(y) = 0$.
We have implicitly seen how to obtain nontrivial relations from the Buchberger algorithm.
We formed an S-polynomial of two polynomials and ran the division algorithm with respect to a set $G=\{g_i\}$ of polynomials:
\[
 \sigma(g_i,g_j) = \frac{\lt(g_j)}{\gcd(\lt(g_i),\lt(g_j))} g_i - \frac{\lt(g_i)}{\gcd(\lt(g_i),\lt(g_j))} g_j = \sum_k h^{ij}_k g_k + r ,
\]
where $h^{ij}_k \in R$.
If $G$ were a Gr\"obner basis, then by Buchberger criterion we would have $r=0$, 
and the expression would be a nontrivial relation among $g_i$.
More formally, let the module be generated by a Gr\"obner basis $\{g_i\}$.
That is, module is the image of the map $\phi : F' \to F$
where the basis $\{\epsilon_i\}$ of $F'$ is mapped as $\epsilon_i \mapsto g_i$.
Then,
\[
\tau_{ij} = \frac{\lt(g_j)}{\gcd(\lt(g_i),\lt(g_j))} \epsilon_i - \frac{\lt(g_i)}{\gcd(\lt(g_i),\lt(g_j))} \epsilon_j - \sum_k h^{ij}_k \epsilon_k 
\]
is mapped to zero under $\phi$.
It is a theorem that {\em $\ker \phi$ is generated by $\tau_{ij}$.}
$\{\tau_{ij}\}$ is often a redundant generating set for $\ker \phi$.
Be warned that $\tau_{ij}$ is defined to be identically zero if $\lt(g_i)$ and $\lt(g_j)$ involve different basis elements of $F$.

Given a finite presentation of a module $F_0/\im \phi_1$ where $\phi_1 : F_1 \to F_0$ is a module map 
between finitely generated free modules,
we can algorithmically find $\ker \phi_1$. This kernel is a finitely generated submodule of $F_1$, 
so it can be identified with the image of a map $\phi_2 : F_2 \to F_1$ between finitely generated free modules.
We can continue this as many times as we want, and obtain a sequence of maps
\[
\cdots \to F_n \xrightarrow{\phi_n} F_{n-1} \to \cdots \to F_2 \xrightarrow{\phi_2} F_1 \xrightarrow{\phi_1} F_0 .
\]
This sequence is called a {\bf free resolution} of $F_0/\im \phi_1$.
A particularly interesting case is when the resolution is finite, i.e., $F_n =0$ for some $n$.

\section{Localization}
\label{sec:localization}

\subsection*{Tensor product}
Operations of modules include direct sum and quotient,
both of which are automatic from the fact that modules are abelian groups.
Here we introduce another operation on modules called tensor product.
It is perhaps best defined by a categorical characterization,
which is the most useful to prove things.
Here we define it in a colloquial way.

Let us first review a familiar case of vector spaces over a field $\FF$.
The tensor product $V \otimes W$ of two vector spaces $V,W$ is
a collection of formal finite linear combinations $\sum_{i,j} a_{ij} v_i \otimes w_j$ 
of ``products'' $v_i \otimes w_i$ where $v_i \in V,~ w_j \in W$ and $a_{ij} \in \FF$.
The ``product'' $\otimes$ is multilinear since
\[
 (v_1 + v_2) \otimes w = v_1 \otimes w + v_2 \otimes w \quad \text{and} \quad
 v \otimes (w_1 + w_2) = v \otimes w_1 + v \otimes w_2.
\]
The scalar multiplication floats around $\otimes$ as
\[
 (a v) \otimes w = v \otimes (a w)
\]
for any $a \in \FF$.
In physics literature, it is sometimes said that a tensor is a multi-indexed object
that transforms in a certain way under transformations for ``indices.''
This phrasing focuses on the coefficients $a_{ij}$ and captures the multilinearity, as
\[
 \sum_{ij} a_{ij} v_i \otimes w_j = \sum_{ij} a_{ij} \left( \sum_a A_{ia}v_a' \right) \otimes \left( \sum_b B_{jb} w_b' \right)
 = \sum_{ab} \left(\sum_{ij}a_{ij} A_{ia}B_{jb} \right) v_a' \otimes w_b'.
\]

The {\bf tensor product of two modules} is defined in a similar way.
For an $R$-module $M$ and $N$,
we build a formal abelian group out of all possible expressions $x \otimes_R y$ for $x \in M$ and $y \in N$,
with the following equalities {\em imposed}:
\begin{align}
 (x + x') \otimes_R y &= x \otimes_R y + x' \otimes_R y , \nonumber \\
 x \otimes_R (y + y')  &= x \otimes_R y + x \otimes_R y' ,\nonumber \\
 r x \otimes_R y &= x \otimes_R r y .\label{R-linear-tensor}
\end{align}
The resulting abelian group is denoted by $M \otimes_R N$.
It is an $R$-module by defining an action $r \cdot (x \otimes_R y) = rx \otimes_R y$.
This is the tensor product of $M$ and $N$.
When there is no confusion about the base ring $R$, we write $\otimes$ instead of $\otimes_R$.

Note that any abelian group $M$ is a module over $\ZZ$ by action $n \cdot m = \sum_{i=1}^{|n|} \mathrm{sgn}(n)m$
for any $n \in \ZZ$ and any $m \in M$.
Hence, we can take the tensor product of any abelian groups as $\ZZ$-modules.
Since any $R$-module is an abelian group,
one realizes that there are at least two ways to form tensor products of two $R$-modules $M$ and $N$;
$M \otimes_\ZZ N$ and $M \otimes_R N$.
All but one equations of \eqref{R-linear-tensor} remain unchanged.
The last equation declaring $R$-linearity of $\otimes_R$ depends on $R$.
It makes a huge difference.
For instance, $\mathbb{C} \otimes_\mathbb{C} \mathbb{C} \cong \mathbb{C}$ is one dimensional,
but $\mathbb{C} \otimes_\QQ \mathbb{C}$ is infinite dimensional
since there are infinitely many irrational numbers.
$\mathbb{C} \otimes_\ZZ \mathbb{C}$ is even larger.
When the subscript is omitted, {\em the tensor product of two $R$-modules is taken over $R$ by convention}.

Since the tensor product of two modules is a module,
it makes sense to take the tensor product of three or more modules.
Fortunately, the order of the tensor product does not matter.
\[
 (L \otimes_R M) \otimes_R N \cong L \otimes_R (M \otimes_R N) \quad  \quad \quad M \otimes_R N \cong N \otimes_R M
\]

There are (unwelcome) phenomena for general modules that never occur for vector spaces.
The tensor product of two nonzero modules may be zero.
For example, the tensor product of two $\ZZ$-modules $\ZZ/(2)$ and $\QQ$ is zero
because $[a] \otimes b = [a] \otimes 2\frac{a}{2} = 2[a] \otimes \frac{a}{2} = 0 \otimes \frac{a}{2} = 0$.
An expression $x \otimes y$ may be zero in $M \otimes N$ but may not be zero in $M' \otimes N'$
where $M' \le M$ and $N' \le N$ are submodules.
For example, let $M=\ZZ$ and $N = \ZZ/(2)$ be $\ZZ$-modules. Choose $M' = 2\ZZ$ and $N' = N$.
Now $2 \otimes [1]$ is nonzero in $M' \otimes N'$, 
but in $M \otimes N$, it is equal to $2 \cdot 1 \otimes [1] = 1 \otimes [2] = 0$.

Below we explain a ``safe'' and powerful tensor product.

\subsection*{Ring of fractions}

The ring of rational numbers $\QQ$ is constructed from $\ZZ$ by inverting nonzero elements.
We wish to do a similar thing for general rings.
Let $U$ be a {\bf multiplicatively closed} subset of a ring $R$,
i.e., $1 \in U$ and any product of two elements of $U$ lies in $U$.
$U$ needs not be closed under addition.
For example, the set of all nonzero numbers in $\ZZ$ is a multiplicatively closed set.
More important is the complement of a prime ideal $\pp$.
$1$ is included in $R \setminus \pp$ because $\pp \neq (1)$.
If $a \notin \pp$ and $b \notin \pp$, then $ab \notin \pp$.
This is the defining property of the prime ideal.

We construct ``fractions'' by putting elements of $U \subseteq R$ in denominators
and elements of $R$ in numerators. The set of all fractions becomes a ring with the ``usual'' additions and multiplications
\begin{align}
 \frac{a}{p} + \frac{b}{q}    &= \frac{aq + bp}{pq}, \label{fraction-add} \\
\frac{a}{p} \cdot \frac{b}{q} &= \frac{ab}{pq}. \label{fraction-multiply}
\end{align}
Moreover, the elements of $U$ are invertible!
\[
 \frac{1}{p} \cdot p = \frac{p}{p} = 1, \quad \quad \frac{q a}{qp} = \frac{a}{p}.
\]
The new {\bf ring of fractions}%
\footnote{
Rigorously, the ring of fractions is a collection of equivalence classes of $R \times U$;
$(a,p) = (b,q)$ if and only if there exists $s \in U$ such that $s(aq-bp) = 0$.
The reason we do not define the equivalence using ``$aq=bp$'' is
that there could be zero-divisors in $R$.
Interested readers might want to check that 
$U^{-1}R$ indeed becomes a ring using \eqref{fraction-add} and \eqref{fraction-multiply} as definitions.
}
is denoted by $U^{-1}R$ or $R[U^{-1}]$.
Two special cases are so important that they deserve separate notations.
\begin{itemize}
 \item If $U = R \setminus \pp$, the ring of fractions is called the {\bf localization of $R$ at $\pp$} and denoted by $R_\pp$.
 \item If $U = \{1, f, f^2, f^3, \ldots\}$ for some $f \in R$, the ring of fractions is denoted by $R_f$.
\end{itemize}
The original ring $R$ lives in $U^{-1}R$ via a canonical map $r \mapsto \frac{r}{1}$.
With this canonical map, we omit $1$ in the denominators.
Note that $U^{-1}R$ is an $R$-module.
For example, $\ZZ_{2} = \ZZ[\half]$ is a ring of fractions with denominators are powers of $2$.
(Yes, this is a confusing notation since $\ZZ_n$ is used to mean $\ZZ/(n)$.)
$\ZZ_{(2)}$ is a ring of fractions with odd denominators.
$\QQ[x]_{(x)}$ is a ring of fractions of polynomials where denominators does not vanish at $x=0$.
What is $\QQ[x,y]_{xy}$? It is a ring of fractions of polynomials where denominators are powers of $xy$.
Since $\frac{1}{x} = \frac{y}{xy}$ and $\frac{1}{y} = \frac{x}{xy}$,
we notice that $\QQ[x,y]_{xy}$ is really the ring of Laurent polynomials.

Modules can be fractionalized as well.
Let $M$ be an $R$-module.
Choose a multiplicatively closed set $U$ of the base ring $R$,
and define $U^{-1}M$ as the set of all fractions with elements of $M$ in the numerators and elements of $U$ in the denominators.
The addition within $U^{-1}M$ is defined by \eqref{fraction-add}.
$U^{-1}M$ becomes an $U^{-1}R$-module using \eqref{fraction-multiply}.
In fact, {\em $ U^{-1}M \cong (U^{-1}R) \otimes_R M $ as $U^{-1}R$-modules.} Moreover,
\begin{itemize}
\item {\em $U^{-1}M \otimes_{U^{-1}R} U^{-1}N = U^{-1}( M \otimes_R N)$ for any $R$-modules $M,N$.}
\end{itemize}

The {\bf localization}, the process passing to fractional rings and modules,
is so well behaving under nearly all conceivable operations.
Let $A, B \le M$ be submodules.
\begin{itemize}
\item {\em $U^{-1}A$ is a submodule of $U^{-1}M$}, i.e., $U^{-1}A$ injects into $U^{-1}M$.
\item {\em $U^{-1}A \cap U^{-1}B = U^{-1}(A \cap B)$}.
\item {\em $U^{-1}A + U^{-1}B = U^{-1}(A + B)$}.
\item {\em $U^{-1}(M/A) = (U^{-1}M)/(U^{-1}A)$}.
\end{itemize}
They all originate from the following.
A sequence of maps among modules $A,B,C$
\[
 0 \to A \to B \to C \to 0
\]
is a {\bf short exact sequence} if
\begin{itemize}
 \item $\ker( A \to B ) = 0$,
 \item $\im( A \to B ) = \ker ( B \to C)$,
 \item $\im(B \to C ) = C$.
\end{itemize}
{\em The localization preserves short exact sequence}, i.e., if $0 \to A \to B \to C \to 0$ is a short exact sequence, then
\[
 0 \to U^{-1}A \to U^{-1}B \to U^{-1}C \to 0
\]
is also a short exact sequence. For example, the first and the fourth statement above follows from localizing the short exact sequence
\[
 0 \to A \to M \to M/A \to 0.
\]

The localizations at prime ideals are useful because {\em an $R$-module $M$ is zero if and only if
$M_\pp = 0$ for every prime ideal $\pp$ of $R$.}
In fact, $M = 0$ {\em if and only if $M_\mm = 0$ for every maximal ideal $\mm$ of $R$.}
This can be understood as follows.
The {\bf annihilator} denoted by $\ann_R M$ of an $R$-module $M$ is an ideal of $R$ defined by
\begin{equation}
 \ann_R M = \{ r \in R ~|~ r m = 0 \text{ for any } m \in M \}
\end{equation}
It is an ideal because $(a+b)m = am + bm = 0 + 0 = 0$ and $(ra)m = r(am) = 0$ if $a, b \in \ann_R M$.
Note that {\em any $R$-module $M$ becomes a module over $R/(\ann_R M)$. }
What is to be proved for this statement is the well-definedness of the defining operation \eqref{cond-module},
i.e., one needs to check that $[a]\cdot m = [b]\cdot m$ if $[a]=[b] \in R/(\ann_R M)$ for any $m \in M$.
Note that if $R$ is a polynomial ring over a field, there is an algorithm to compute $\ann_R M$
for a finitely presented module $M$ using pull-backs and Gr\"obner basis techniques.
Now, it is easy to see that 
{\em a module $M_\pp$ (the localization of $M$ at a prime $\pp$)
is zero if and only if the annihilator is not contained in $\pp$},
since a module is zero if and only if $1$ or any unit is an annihilator,
but the localized ring at $\pp$ anything outside $\pp$ is a unit.
If $M$ becomes zero at the localization at any maximal ideal,
it means that the annihilator is not contained in any maximal ideal.
The only elements of $R$ that lie outside of any maximal ideal are units.
Therefore, $M = 0$.

Recall that there exists a canonical map $\phi$ from $R$ to a ring of fractions $U^{-1}R$, sending $r$ to $\frac{r}{1}$.
The image under $\phi$ of any ideal $I$ of $R$ is not in general an ideal.
However, we may consider the ideal of $U^{-1}R$ generated by $\phi(I)$.
This ideal is denoted by $I^e \subseteq U^{-1}R$, called an {\bf extended ideal}.
In fact, {\em any ideal of $U^{-1}R$ is an extended ideal.}
To see this, consider any ideal $J$ of $U^{-1}R$, and its {\bf contraction} $\phi^{-1}(J) \subseteq R$.
An element of $J$ is $\frac{x}{s}$. Since $J$ is an ideal, $\frac{s}{1} \frac{x}{s} = \frac{x}{1} \in J$.
That is, $x \in \phi^{-1}(J)$. We have shown $(\phi^{-1}(J))^e = J$.
Note that {\em every prime ideal of $U^{-1}R$ is an extended ideal of a prime ideal of $R$ 
that do not intersect $U$.}
We have seen that the inverse image of a prime ideal $\pp$ is prime;
the contraction $\pp^c = \phi^{-1}(\pp)$ of a prime ideal of $U^{-1}R$ is prime.
Since prime ideal is not $(1)$, it cannot contain any unit.
Hence $\pp^c$ cannot meet the set of units $U$.
Conversely, for any prime ideal $\mathfrak q$ of $R$ that do not intersect $U$,
$\mathfrak q^e$ is a prime ideal of $U^{-1}S$,
since $\frac{ab}{ss'} \in \mathfrak q^e \Longleftrightarrow \frac{ab}{1} \in \mathfrak q^e$.

It is clear that if $\phi^{-1}(J)$ is generated by $n$ elements of $R$,
then $J$ is generated by $n$ elements of $U^{-1}R$.
More importantly, {\em if $R$ is a Noetherian ring, then any localization is a Noetherian ring.}

A {\bf local ring} is a ring with a unique maximal ideal.
Typically, maximal ideals are not unique.
In the ring of integers, every prime number generates a distinct maximal ideal.
Let $R_\pp$ be a localized ring at a prime ideal $\pp$ of $R$.
By definition of $R_\pp$, the set of all denominators do not meet $\pp$.
Hence, $\pp_\pp$ is a prime ideal of $R_\pp$.
If $\mm$ is any ideal of $R_\pp$, we must have $\mm \subseteq \pp_\pp$.
Otherwise, $\mm$ contains an element outside $\pp_\pp$, which is invertible,
and hence $\mm = (1)$.
The localized ring $R_\pp$ at a prime ideal is a local ring with a unique maximal ideal $\pp_\pp$.
One thing to remember about a local ring is that
{\em any element outside the unique maximal ideal is a unit, i.e., it is invertible.}

\subsection*{Geometry}

Why is it called ``localization?''
Here we briefly introduce the algebro-geometric point of view on rings.

The (only) algebraic structure that we should start with for a geometric object is a function space.
By a function we mean a set-theoretical map from the geometric object to numbers.
We can impose certain conditions such as continuity or differentiability
on the function space to study deeper and interesting aspects of the geometric object.
In classical algebraic geometry, the geometric object, called variety,
is defined by one or more polynomial equations,
and the functions are defined by polynomials.

Roughly speaking, there are two interesting types of functions:
globally defined functions and locally defined functions.
The locally defined ones may not be well-defined for some regions far from the point of interest.
In the algebraic setting, the global functions are given by
polynomial expressions without denominators.
There may be several polynomial expressions for the same function.
This happens when two polynomials differ by the equations of the variety.
For example, the global function space, called {\bf coordinate ring}
of a parabola on $\RR^2$ defined by $y = x^2$ is a quotient ring $R = \RR[x,y]/(y-x^2)$.

Locally defined functions at a point $p$, if they are given as a fraction of two polynomials,
are precisely those whose the denominators are not zero at $p$.
For example, the fraction $\frac{1}{x+1}$ is a locally defined function at a point $(0,0)$ of the parabola,
but $\frac{x-1}{y}$ is not.
Collecting all locally defined functions at $\mm = (0,0)$, we obtain
a {\bf local ring} $R_\mm = (\RR[x,y]/(y-x^2))_{(x,y)}$,
which is consisted of fractions whose denominators are not contained in the maximal ideal $(x,y)$.
The global function ring $R$ is \emph{localized at $\mm$ to be a local ring $R_\mm$}.
Note that we have identified a point on the parabola with a maximal ideal $\mm$.
This makes sense because the set of all solutions 
of the equations $f(x,y)=0$ where $f(x,y) \in \mm$ is exactly $(0,0)$.

One might ask what it means geometrically to localize at a prime ideal $\pp$.
A prime ideal defines a subvariety by equations $f = 0$ where $f \in \pp$.
The localization reveals the set of ``functions'' that are locally defined ``around'' the subvariety.
(The notion of neighborhood can be made rigorous by defining a topology using polynomials.)

\section{Determinantal ideal}
\label{sec:det-ideal}

When is a square matrix with integer entries invertible?
A naive answer is that it is when the determinant is nonzero.
It is true when the matrix is viewed as a matrix over $\QQ$.
If we insist that the inverse matrix must be expressed over $\ZZ$,
then the answer is that it is when the determinant is $\pm 1$.
That is, the determinant must be invertible within the ring.
Put differently, it is when the ideal generated by the determinant is the unit ideal.
The determinant has another use. The rank of a matrix is the largest $k$ such that
there exists a $k \times k$ submatrix whose determinant is a nonzero.
Put differently, it is when the ideal generated by determinants of all $k \times k$ submatrices
is nonzero.

These observations motivate us to define a determinantal ideal over an arbitrary commutative ring $R$.
Consider a rectangular matrix $\mathbf M$ with entries in $R$.
A {\bf $k^\text{th}$ minor} is the determinant of a $k \times k$ submatrix.
(There are quite many $k^\text{th}$ minors if $k$ is about half of the matrix size.)
Note that to define the minor we do not have to assume that the full matrix $\mathbf M$ is square.
When $k$ is larger than the number of rows or columns, $k^\text{th}$ minor is, by definition, zero.
{\bf $k^\text{th}$ determinantal ideal $I_k(\mathbf M)$} of $\mathbf M$ is the ideal of $R$ 
generated by all $k^\text{th}$ minors of $\mathbf M$.
By convention, $0^\text{th}$ determinantal ideal is taken as $I_0(\mathbf M) = (1) = R$.
If $\mathbf M$ is $n \times n$, then $n^\text{th}$ determinantal ideal is generated by a single element,
the determinant of $\mathbf M$.
The first determinantal ideal is generated by all entries of $\mathbf M$.
A $(k+1)^\text{st}$ minor is a linear combination of $k^\text{th}$ minors,
as it has a cofactor expansion.
Therefore, the determinantal ideals form a decreasing chain:
\[
 R = I_0(\mathbf M) \supseteq I_1(\mathbf M) \supseteq I_2(\mathbf M) \supseteq \cdots 
    \supseteq I_{\min( \# row, \# col)}( \mathbf M ) \supseteq (0)
\]
The {\bf rank} of $\mathbf M$ is the largest $k$ such that $I_k(\mathbf M)$ is nonzero.

The determinantal ideals are invariants of $\mathbf M$ 
under invertible matrix multiplications on the right or on the left.
\[
 I_k ( \mathbf {M D} ) = I_k( \mathbf{E M} ) = I_k (\mathbf M) \quad 
\quad \text{if $\mathbf{D,~E}$ are square and invertible.}
\]
To understand this, it is enough to see $I_k( \mathbf{E M} ) \subseteq I_k( \mathbf M )$ for arbitrary matrix $E$.
Then, for invertible $\mathbf E$, we would have $I_k(\mathbf{E^{-1} E M }) \subseteq I_k(\mathbf{E M})$.
We need to see that a $k^\text{th}$ minor of $ \mathbf{E M}$ is a linear combination of $k^\text{th}$ minors of $\mathbf M$.
Fix a $k \times k$ submatrix $\mathbf K$ of $\mathbf{EM}$.
\[
 \mathbf K_{il} = \sum_{j} e _{ij} \mathbf M _{j l} 
\]
where $e_{ij} = \mathbf{E}_{ij}$ is written with small letter to emphasize it is merely an element of $R$.
Now the determinant of $\mathbf K$ is vividly expressed by minors of $\mathbf M$,
since the determinant is multilinear over $R$ in rows $K_i$ of $\mathbf K$.
\begin{align*}
 \det \mathbf K 
&= \det( K_1, \ldots, K_k ) 
= \det \left[ \sum_{j_1} e_{1 j_1} M_{j_1}, \ldots, \sum_{j_k} e_{k j_k} M_{j_k} \right]\\
&= \sum_{j_1,\ldots,j_k} e_{1 j_1} \cdots e_{k j_k} ~\det \left[  M_{j_1},\ldots, M_{j_k} \right]
\end{align*}
where $M_j$ is the row $j$ of the submatrix of $\mathbf M$ that is consisted of entries contributing to $\mathbf K$.

Suppose $R$ be a local ring with a maximal ideal $\mm$,
and consider a matrix $\mathbf M$ over $R$ of rank $k$.
If $I_k(\mathbf M) = (1)$, how would $\mathbf M$ look like after row and column operations?
The operations are invertible, and therefore do not change the determinantal ideals.
Since the determinantal ideals form a descending chain,
it follows that $I_1(\mathbf M) = (1)$.
It means that at least one entry must lie outside $\mm$ and is invertible.
Bringing that entry to $(1,1)$ position by permuting rows and columns,
we can eliminate other entries in the first row and column, by row and column additions.
So
\[
 \mathbf M \cong \begin{pmatrix} 1 & 0 \\ 0 & \mathbf M' \end{pmatrix}
\]
where $\cong$ means the equivalence up to invertible matrix multiplication on the left or right.
Now we use the fact that $I_2(\mathbf M) = (1) = I_1( \mathbf M' )$.
By a similar process we can extract $1$ from $\mathbf M'$.
After $k$ steps, we have
\begin{equation}
 \mathbf M \cong \begin{pmatrix} \id_k & 0 \\ 0 & 0 \end{pmatrix}.
\label{free-module-presentation}
\end{equation}
Here, observe that $R$ is not necessarily a field.
It is just a local ring where some nonzero elements may not be invertible.
The smallest nonzero determinantal ideal being a unit ideal
was strong enough to imply the structure of the matrix.

\subsection*{Smith normal form}

Beyond the ``easy'' local ring case,
there is one more easy case of {\bf principal ideal domains},
in which every ideal is generated by a single element.
It is not necessarily local.
{\em The ring of integers and the ring of polynomials in one variable over a field
are principal ideal domains};
given finitely many generators of an ideal,
one can find the greatest common divisor ($\gcd$) by the Euclidean algorithm.

Let $R$ be a principal ideal domain.
We ask what normal forms of matrices we may have, after
invertible matrix multiplication on the left or right.
In other words, we seek for invariants of the matrices.
Observe that the $\gcd$ of two elements%
\footnote{To speak of $\gcd$, it must be first proved that in $R$ any element has a unique factorization.
It is true that in any principal ideal domain, any element is a product of irreducible factors
that are unique up to units. If the latter is satisfied, the ring is called a {\bf unique factorization domain}.
Note that an arbitrary ring may not be a unique factorization domain.
For example, in $\ZZ[\sqrt{-5}]$, we have $6 = 2 \cdot 3 = (1+\sqrt{-5})(1-\sqrt{-5})$,
which are two different factorizations of $6$ into irreducible factors.}
$f$ and $g$ can be expressed as a linear combination of $f$ and $g$
\[
\gcd(f,g) = a f + b g
\]
for some $a,b \in R$. To see this, consider the ideal $(f,g)$.
It must be generated by a single element, say, $d$, i.e., $(d) = (f,g)$.
Then, $f \in (d) \Longleftrightarrow f = d f'$ and $g \in (d) \Longleftrightarrow g = d g'$.
So $d$ is a common divisor of $f$ and $g$. In addition, $(d) = (f,g) \subseteq (\gcd(f,g))$
implies that $\gcd(f,g)$ divides $d$. Thus, $\gcd(f,g)$ and $d$ are the same up to units,
and $\gcd(f,g) \in (f,g)$. We thus obtain the above expression.
Now, consider the following matrix equation
\[
\begin{pmatrix} \gcd(f,g) \\ 0 \end{pmatrix}
= \begin{pmatrix} a & b \\ -g' & f' \end{pmatrix} \begin{pmatrix} f \\ g \end{pmatrix}
\]
The $2 \times 2$ matrix has determinant $af' + bg' = \gcd(f,g)/d$, a unit.
We have transformed a $2 \times 1$ matrix by an invertible left multiplication
such that there is only a single nonzero entry left. 
A similar thing can be done for an $n \times 1$ matrix.
\[
\begin{pmatrix} \gcd(f_1,\ldots,f_n) & 0 & \cdots & 0 \end{pmatrix}^T
= \mathbf E \begin{pmatrix} f_1 & \cdots & f_n \end{pmatrix}^T
\]
where $\mathbf E$ is invertible.

If we are given with a rectangular matrix $\mathbf M$,
we can perform this transformation by looking at the first column of $\mathbf M$.
The transformed matrix $M'$ will have a nonzero entry on the first row in the first column.
Perform a similar transformation by right multiplication, focusing on the first row of $\mathbf M'$.
The new matrix $\mathbf M''$ has unique nonzero entry at $\mathbf M''_{1,1}$ in the first row,
but the first column may be screwed up.
We can iterate these transformations as many times as we want.
Will this process eventually terminate? Yes.
For example, if we are working in the ring of integers,
the absolute value of the entry $\mathbf M''_{1,1}$ is smaller than $|\mathbf M'_{1,1}|$, which is $\le |\mathbf M_{1,1}|$.
Positive integers cannot decrease forever, and the process must terminate.
More generally, the ascending chain condition makes the proof smooth.
The ideals generated by the $(1,1)$-entries will form an increasing sequence of ideals.
Our ring $R$, being a principal ideal domain, is Noetherian.
The sequence must become stationary after finitely many iterations.
If one is not too familiar with Noetherian rings,
one can rely on the fact that there are finitely many factors of a given element.
If $(d_1) \subseteq (d_2) \subseteq \cdots$ is an increasing sequence of ideals,
we have $d_{i+1} | d_{i}$. Since there are only finitely many factors in $d_1$,
one cannot have an infinite and strictly increasing sequence of ideals.

The matrix $\mathbf M^{(n)}$ obtained from $\mathbf M$ by iterating
the $\gcd$-computations $n$-times, has a unique nonzero entry $d$ at $(1,1)$
in its row and column.
\[
\mathbf M^{(n)} = \begin{pmatrix} d & 0 \\ 0 & M_1 \end{pmatrix}
\]
We can claim a bit more. If there is any entry at $(u,v)$ of $M_1$ that is not divisible by $d$,
we can add the row $u$ to the first row and run the $\gcd$-computation again.
The ideal generated by $(1,1)$-entry will become larger.
Iterating, we eventually find the largest possible ideal generated by $(1,1)$-entry.
Therefore, we can obtain $M_1$ whose entries are all divisible by $d$.

The above algorithm applied to $M_1$, and to its submatrix, and so on,
will produce a diagonal matrix. This diagonal matrix is called the {\bf Smith normal form} of $\mathbf M$.
The nonzero diagonal entries $d_1, \ldots, d_k$ have a property that
\[
d_1 | d_2 | \cdots | d_k .
\]
Here, the number $k$ is precisely the rank of the matrix $\mathbf M$.
Moreover, the determinantal ideals are 
\[
I_t ( \mathbf M ) = (d_1 \cdots d_t).
\]
This immediately proves that $d_i$'s are uniquely determined by $\mathbf M$.
The invariants $d_i$'s are called {\bf elementary divisors} of $\mathbf M$.
Since any matrix, not necessarily square, can be brought to the Smith normal form
by invertible transformations,
it follows that the elementary divisors are {\bf complete invariants},
i.e., the elementary divisors are the same for two matrices if and only if two matrices are related by
invertible matrix multiplication on the left and right.

\subsection*{Finitely generated modules and Fitting ideals}

We briefly noted about finite presentations of modules when we discussed free modules.
We say that an $R$-module $M$ is finitely presented by a matrix $\phi : R^m \to R^n$
if $M$ is isomorphic to $R^n / \im \phi$. This is more commonly denoted as $ M = \coker \phi$.

To get some feeling, let us consider a simple case where $R = \ZZ$.
The matrix $\phi: R^m \to R^n$ is an $n \times m$ matrix with integer entries.
Any basis change in $R^n$ amounts to an invertible left multiplication on $\phi$.
Any basis change in $R^m$ amounts to an invertible right multiplication on $\phi$.
We know a very convenient canonical form of $\phi$ --- the Smith normal form.
Let $\phi$ be in the Smith normal form with the diagonal elements $d_1, \ldots, d_k$.
In the simplest case $k = m = n =1$, the module $M$ is $R / \im \phi = \ZZ / (d_1)$.
If $m = n = 2$ and $d_2 = 0$, the module $M$ is $R^2 / \im \phi = \ZZ/(d_1) \oplus \ZZ$.

In fact, we can prove the structure theorem for finitely generated abelian groups very easily.
Let $M$ be a finitely generated abelian group. It can be viewed as a finitely generated
$\ZZ$-module. Let $n$ be the number of generators. We have a module map $\tilde \phi : \ZZ^n \to M$,
which is surjective. The kernel $K$ is also finitely generated since $\ZZ^n$ is Noetherian.
Hence, we have another map $\phi : \ZZ^m \to K \subseteq \ZZ^n$ where $m$ is the number of
generators of $K$. We have constructed a finite presentation of $M$ by $\phi$, i.e., $M = \ZZ^n / \im \phi$.
Bring $\phi$ to the Smith normal form by basis changes in $\ZZ^n$ and $\ZZ^m$.
If $d_1 | \cdots | d_k$ are elementary divisors of $\phi$,
we define $d_{k+1} = \cdots = d_n = 0$.
Then we have an isomorphism of $\ZZ$-modules $M \cong \ZZ/(d_1) \oplus \cdots \oplus \ZZ/(d_n)$.
Viwed as a group, the notation will be
\[
 M = \ZZ/d_1 \ZZ \times \cdots \times \ZZ/d_k \ZZ \times \ZZ^{n-k} .
\]
This is the most general form of a finitely generated abelian group.
Note that the rank $n-k$ and the elementary divisors 
$d_1 | d_2 | \cdots | d_k \neq 0$ are uniquely determined by the group $M$.
Any finite abelian group is obviously finitely generated.

Let $R$ be an arbitrary commutative ring, and $M$ be an $R$-module 
with a finite presentation given by a matrix $A: R^m \to R^n$, i.e., $M \cong R^n/\im A = \coker A$.
We define {\bf Fitting ideals} $F_i(M)$ of $M$ as
\begin{equation}
 F_i(M) = I_{n-i}(A),
\label{Fitting-ideal}
\end{equation}
so
\[
 F_0(M) \subseteq F_1(M) \subseteq \cdots \subseteq F_n(M) = R .
\]
One should ask: $A$ is one of many possible finite presentations.
How can we be sure that $F_i(M)$ is determined by $M$ and is independent of a particular presentation $A$?
We can be sure.
The presentation really consists of two sets of things:
a set of generators $x_1,\ldots,x_n \in M$ and their relations (syzygies) given by columns of $A$
\[
 \sum_{i} x_i A_{ij} = 0.
\]
Thus, we can think of $A$ as the matrix of relations.
The isomorphism $M \cong \coker A$ ensures that any possible relations among $x_i$ is generated by columns of $A$.
That is, for any coefficients $c_i$ such that $\sum_i x_i c_i = 0$ we can express $c_i$ as $c_i = \sum_j  A_{ij} d_j$.
Therefore, if we write a $n \times \infty$ matrix $\tilde A$ by collecting \emph{all} relations among $x_i$,
then the $k^\text{th}$ determinantal ideal of $\tilde A$ is exactly $I_k(A)$.
In other words, $I_k(A)$ is determined by the chosen generators of $M$.
Our question on the well-definedness of the Fitting ideal concerns, in fact, many possible choices of generators for $M$.
Now, let $y_1, \ldots, y_{n'}$ be elements of $M$.
We have $n+n'$ generators $x_i, y_{i'}$ of $M$,
and $M$ is presented as a quotient module of $R^{n+n'}$. 
The relations among $x_i,y_{i'}$ constitute a matrix
\[
W =
\begin{pmatrix}
 A & A'  & \# \\
 0 & \id & \# \\
\end{pmatrix} 
\sim
W' =
\begin{pmatrix}
 A & 0   & A' \\
 0 & \id & 0 \\
\end{pmatrix} 
\sim
W'' =
\begin{pmatrix}
 A & 0   & 0 \\
 0 & \id & 0 \\
\end{pmatrix} 
\]
where $\sim$ means equality up to row or column operations.
The first matrix $W$ can be written as shown because $y_{i'}$ can be written as some combination of $x_i$.
$W''$ is obtained from $W'$ since we know that $A$ generates all relations among $x_i$.
Since determinantal ideals are invariant under invertible matrix multiplications,
we see that $I_{k+n'}(W) = I_{k+n'}(W') = I_{k+n'}(W'') = I_k(A)$.
Another finite presentation $B : R^{m'} \to R^{n'}$ of $M \cong \coker B$
gives another set of generators.
By the above calculation, $I_{n-k}(A) = I_{n+n'-k}(W) = I_{n'-k}(B)$.
Therefore, {\em the Fitting ideals are well-defined.}
We now see clearly why the numbering of the Fitting ideals are given as \eqref{Fitting-ideal}.
(One might notice that the finite presentation is actually too much than 
what is needed to define Fitting ideals. They can be defined for any finitely generated module.)

The first nonvanishing Fitting ideal is important because it tells 
when a finitely generated module becomes free after localization.
For example, suppose an $R$-module $M$ is finitely presented by $\phi : R^m \to R^3$ and $F_1(M)$ is nonzero.
Let $\pp$ be a prime ideal of $R$ such that $F_1(M) \not\subseteq \pp$.
Localizing at $\pp$, we see that $F_1(M) = I_2(\phi)$ becomes the unit ideal,
since anything outside $\pp$ is a unit in $R_\pp$.
Then, we have seen in \eqref{free-module-presentation} that
the matrix $\phi_\pp$ is equivalent to a diagonal matrix with entries $0$ or $1$.
Therefore, $\coker \phi_\pp$ is obviously isomorphic to $R_\pp^3/R_\pp^2 = R_\pp^1$, a free module.
Conversely, if $\coker \phi_\pp$ is free for some prime ideal $\pp$,
then $I_k(\phi_\pp)$ is either $(1)$ or $(0)$.
Therefore, every Fitting ideal of $\coker \phi_\pp$ is either $(0)$ or $(1)$.
In conclusion, {\em the localizaed module $M_\pp$ of $M$ at a prime ideal $\pp$ is free 
if and only if the first nonvanishing Fitting ideal of $M$ is not contained in $\pp$.}

The initial Fitting ideal $F_0(M)$ is also interesting because it approximates the annihilator of $M$.
Let $M$ be generated by $n$ elements. Then,
\begin{equation}
(\ann M)^n \subseteq F_0(M) \subseteq \ann M .
\label{Fitting-annihilator}
\end{equation}
Here, the {\bf ideal power} $(\ann M)^n$ means the ideal generated by all products of $n$ elements from $\ann M$.
By definition, $F_0(M) = I_n(A)$ where $A$ is a matrix of relations among the $n$ generators $x_i$ of $M$.
If $Z$ is an $n \times n$ submatrix of $A$, we have $\sum_i  x_i Z_{ij} = 0$.
Multiplying the adjugate matrix of $Z$, we see $(\det Z) x_i = 0$ for all $1 \le i \le n$.
Hence, $\det Z \in \ann M$. Since $F_0(M)$ is generated by these $\det Z$, we have the second inclusion
of \eqref{Fitting-annihilator}.
If $a_1,\ldots, a_n \in \ann M$, then the diagonal matrix made of $a_i$ expresses relations among $x_i$.
The determinant $a_1\cdots a_n$ therefore belongs to $F_0(M)$.
This proves the first inclusion of \eqref{Fitting-annihilator}.

\section{Finite fields}

A field is a nonzero commutative ring where every nonzero elements are invertible.
There is only one proper ideal, the zero ideal $(0)$.
A {\bf finite field} is a field with finitely many elements.
If there are $q$ elements in the field, we denote the field by $\FF_q$.
The minimum possible number is of course $q=2$ because we need $0 \neq 1$.
In this case it is called {\bf bianary field} $\FF_2$.
It may seem quite modest to require for a field to have finitely many elements,
but, unlike finite groups, finite fields are particularly simple ---
{\em there is a unique field of $q$ elements up to isomorphisms.}
Let us see why.

Since any ring is an additive group, $\FF_q$ is a finite abelian group.
How does it look like as an additive group?
Applying the ``structure theorem for finitely generated abelian group,''
we know $\FF_q \cong \bigoplus_{i=1}^k \ZZ/(d_i)$.
where $1 < d_1 | d_2 | \cdots | d_k \neq 0$.
It follows that for any element $x \in \FF_q$ we have $d_k x = 0$.
Here, $d_k$ is the smallest number such that $d_k x = 0$ for any $x \in \FF_q$.
If $d_k$ is not a prime number, say, $d_k = pp'$ with $p, p' < d_k$,
then $p = p\cdot 1 \neq 0$ in $\FF_q$ is invertible.
Hence, $p' = p' \cdot 1 = 0$, which is a contradiction to the minimality of $d_k$.
Therefore, $d_k$ is prime, and $d_1 = d_2 = \cdots = d_k = p$.
It follows that $| \FF_q | = q = d_1 d_2 \cdots d_k = p^k$.
This conclusion is often phrased as {\em a finite field has {\bf characteristic} $p$ for some prime number $p$
and the total number of elements is a power of $p$.}
Note that the image of $\ZZ \to \FF_q$, as there is a unique ring homomorphism from $\ZZ$ to any ring,
is $\FF_p$, a subfield of $\FF_q$.
It is of great importance that {\em $\FF_q$ is a $k$-dimensional vector space over $\FF_p$.}

Next, how does the multiplicative group of all nonzero elements of $\FF_q$ look like?
It is again a finitely generated abelian group.
Therefore, its group structure is uniquely determined by elementary divisors $1< e_1 | e_2 | \cdots |e_m \neq 0$.
In a similar logic as above, we have $x^{e_m} = 1$ for any nonzero $x \in \FF_q$.
It is an equation in the field, so it can be rewritten as $x^{e_m} -1 = 0$.
How many solutions can an equation have? At most the degree.
It means that $q-1 = e_1 e_2 \cdots e_m \le e_m$.
The only possibility is that $m = 1$ and $e_1 = q-1$.
That is, the multiplicative group $\FF_q^\times$ consisted of all nonzero elements
is isomorphic to the additive group $\ZZ/(q-1)$, a {\bf cyclic group} generated by a single element.
In other words, there exists an element $x$ in $\FF_q$ such that
$\{ x^n | n \in \ZZ \}$ is equal to the set of all nonzero elements.
Such an element $x$ is called {\bf primitive element} of $\FF_q$.
There could be many primitive elements in $\FF_q$.

Finally, let us mix the two operations, the addition and multiplication, and consider the ring structure of $\FF_q$.
Before we start analyzing $\FF_q$,  remark that for
any field $\FF$, not necessarily finite, and an irreducible polynomial $f(x)$ over $\FF$,
we can construct a larger field $\FF[x]/(f(x))$ that contains $\FF$.
A larger field is called an {\bf extension field} over the smaller field.

We noted above that $\FF_q$ is a $k$-dimensional vector space over $\FF_p$.
Fix a primitive element $\alpha$ of $\FF_q$.
Then, the set of ``vectors''  $\{1,\alpha,\alpha^2,\ldots,\alpha^k\}$ contains $k+1$ vectors,
so it cannot be linearly independent over $\FF_p$.
Consider the set $I_\alpha$ of all polynomials $f(t) \in \FF_p[t] \subseteq \FF_q[t]$ such that $f(\alpha) = 0$.
It is clearly a nonzero ideal, and therefore is generated by a single element $f_\alpha(t)$,
which divides any element in $I_\alpha$.
$f_\alpha(t)$ is unique if we demand the leading coefficient to be $1$.
A polynomial with the leading coefficient $1$ is called {\bf monic}.
The minimality of $f_\alpha(t)$ forces it to be irreducible.
$f_\alpha(t)$ is uniquely determined by $\alpha$, 
called the {\bf minimal polynomial} of $\alpha \in \FF_q$ over $\FF_p$.
We have established a ring homorphism $\FF_p[t] / (f_\alpha(t)) \to \FF_q$ such that $t \mapsto \alpha$.
Since $\alpha$ is a primitive element, this map is surjective.
Moreover, it is injective because of the choice of $f_\alpha(t)$.
It follows that the degree of $f(t)$ is actually equal to $k$,
the dimension of $\FF_q$ as an $\FF_p$-vector space,
and {\em the ring $\FF_q$ is isomorphic to a quotient ring of the polynomial ring $\FF_p[t]$}.

Recall the polynomial $h(x) = x^q -x$ becomes zero at any element of $\FF_q$.
It means that in the ring $\FF_q[x]$,
the polynomial $h(x)$ factorized into linear factors as
\begin{equation}
h(x) = x(x-\alpha)(x-\alpha^2)(x-\alpha^3) \cdots (x- \alpha^{q-1}) .
\label{h-factors}
\end{equation}
Since $x^q-x \in I_\alpha$, it follows that $f_\alpha(x) \in \FF_p[x]$ divides $x^q-x \in \FF_p[x]$.
Therefore, $f_\alpha(x) \in \FF_q[x]$ factorizes into linear factors, too.
That is, any root of $f_\alpha(x)$ can be found in $\FF_q$, not only $\alpha$.
Furthermore, {\em any root of $f_\alpha(x)$ is a primitive element.}
Let $\beta \in \FF_q$ be any root of $f_\alpha(x)$.
The ideal $I_\beta \subseteq \FF_p[x]$ contains $f_\alpha(x)$.
Hence, the minimal polynomial $f_1(x) \in I_\beta$ divides $f_\alpha(x)$.
Since $f_\alpha(x)$ is irreducible in $\FF_p[x]$, we must have $f_\alpha(x) = f_\beta(x)$.
It means that $\FF_q \ni \alpha \mapsto \beta \in \FF_q$ is an isomorphism,
guaranteeing that $\beta$ is a primitive element.

The quotient ring presentation of $\FF_q$ is not unique because it depends on $f_\alpha(t)$ which
is determined by a primitive element. 
It is thus a relevant question whether two finite fields with the same cardinality $\FF_q$ and $\FF_q'$ 
are isomorphic as rings.
Let $\alpha$ be a primitive element of $\FF_q$ with the minimal polynomial $f(x)$ over $\FF_p$,
and $\beta$ be a  primitive element of $\FF_q'$ with the minimal polynomial $g(x)$ over $\FF_p$.
We know that $f(x) | h(x)$ and $g(x) | h(x)$ where $h(x) = x^q -x$.
It follows from \eqref{h-factors} that $f(\beta^n) = 0$ in $\FF_q'$ for some $n$.
Given such $n$, we can define a ring homomorphism $\FF_p[x]/(f(x)) \ni x \mapsto y^n \in \FF_p[y]/(g(y))$.
It amounts to a ring homomorphism $\FF_q \ni \alpha \mapsto \beta^n \in \FF_q'$.
Note that {\em any nonzero ring homomorphism from a field is injective}
simply because the kernel is a proper ideal.
Since both the domain and the target are finite dimensional vector spaces,
our homomorphism is bijective, and we obtain a field-isomorphism.
In conclusion, {\em any finite field is uniquely determined up to isomorphisms by its cardinality.}

Let $f(x) \in \FF_q[x]$ be any irreducible polynomial over $\FF_q$.
We may consider an extension field $\mathbb E = \FF_q[x]/(f(x))$ over $\FF_q$.
It is still a finite field for being a finite dimensional vector space over a finite field.
In $\mathbb E$, $f(x)$ factorizes into linear factors by the same reasoning as above.
More generally, since any polynomial $g(x)$ is a product of irreducible polynomials,
by extending the field, one can factor further some of the irreducible factors.
Since there are finitely many irreducible factors, one eventually reaches an extension field
where $g(x)$ factorizes into linear factors completely, after finitely many extensions.
If we started with a finite field, then the ultimate field will still be finite of cardinality, say, $q'$.
It follows that $g(x)$ is a factor of $(x^{q'}-x)^n$ for some $n$,
where $n$ is to take care of potential multiplicity in $g(x)$.
$n$ can be chosen to be large. If one wishes, it may be of form $p^m$ 
where $p$ is the characteristic of the field.
Then, $(x^q -x)^{p^m} = x^{q' p^m} - x^{p^m}$.
Summarizing, {\em any nonzero polynomial $g(x)$ over a finite field divides $x^{p^{m'}} - x^{p^m}$
for some $m' > m$.}

\end{document}